\definecolor{polytopecolor}{RGB}{46,111,253}
\def\ZZ{\mathbbm{Z}}
\def\RR{\mathbbm{R}}
\def\CC{\mathbbm{C}}
\def\FF{\mathbbm{F}}
\def\NN{\mathbbm{N}}
\def\PP{\mathbbm{P}}
\def\QQ{\mathbbm{Q}}
\def\XX{\mathbbm{X}}
\def\YY{\mathbbm{Y}}
\def\calV{\mathcal{V}}
\def\calW{\mathcal{W}}
\def\calH{\mathcal{H}}
\def\calC{\mathcal{C}}
\def\calO{\mathcal{O}}
\def\calQ{\mathcal{Q}}
\def\calL{\mathcal{L}}
\def\calK{\mathcal{K}}
\def\calP{\mathcal{P}}
\def\calS{\mathcal{S}}
\def\calX{\mathcal{X}}
\def\calY{\mathcal{Y}}
\DeclareMathOperator{\relint}{relint}
\DeclareMathOperator{\tr}{tr}
\DeclareMathOperator{\Hom}{Hom}
\DeclareMathOperator{\Span}{span}
\DeclareMathOperator{\Ad}{Ad}
\DeclareMathOperator{\ad}{ad}
\DeclareMathOperator{\Ha}{H}
\DeclareMathOperator{\Ind}{Ind}
\DeclareMathOperator{\poly}{poly}
\DeclareMathOperator{\spec}{spec}
\DeclareMathOperator{\diag}{diag}
\DeclareMathOperator{\SU}{SU}
\DeclareMathOperator{\U}{U}
\DeclareMathOperator{\GL}{GL}
\DeclareMathOperator{\SL}{SL}
\DeclareMathOperator{\vol}{vol}
\DeclareMathOperator{\Sym}{Sym}
\DeclareMathOperator{\Res}{Res}
\DeclareMathOperator{\conv}{conv}
\DeclareMathOperator{\sign}{sign}
\DeclareMathOperator{\ch}{ch}
\DeclareMathOperator{\aff}{aff}
\DeclareMathOperator{\grad}{grad}
\renewcommand{\matrix}[1]{\left(\begin{smallmatrix}#1\end{smallmatrix}\right)}
\newcommand{\Matrix}[1]{\begin{pmatrix}#1\end{pmatrix}}
\newcommand{\Alt}{\bigwedge}
\newcommand{\su}{\mathfrak{su}}
\newcommand{\gl}{\mathfrak{gl}}
\newcommand{\abs}[1]{\lvert#1\rvert}
\newcommand{\norm}[1]{\lVert#1\rVert}
\newcommand{\normHS}[1]{\norm{#1}_{\operatorname{HS}}}
\newcommand{\ketbra}[2]{\lvert#1\rangle\langle#2\rvert}
\newcommand{\proj}[1]{\lvert#1\rangle\langle#1\rvert}
\def\Id{\mathbbm{1}}
\newcommand{\newjointcountertheorem}[3]{\newaliascnt{#1}{#2}\newtheorem{#1}[#1]{#3}\aliascntresetthe{#1}}
\newtheorem{thm}{Theorem}[chapter]
\theoremstyle{definition}
\theoremstyle{remark}
\newcommand{\QMA}{\mathbf{QMA}}
\newcommand{\yket}[1]{\ket{\text{\scriptsize$\young(#1)$}}} %
\newcommand{\Ytwoone}{\Yboxdim{5pt}\yng(2,1)\Yboxdim{8pt}}
\DeclareMathOperator{\GHZ}{GHZ} %
\DeclareMathOperator{\EPR}{EPR}
\DeclareMathOperator{\SEP}{SEP}
\newcommand{\fourqubitcutsheader}{\begin{tabularx}{0.9\linewidth}{*4{>{\centering\arraybackslash}X}}
  {\footnotesize $\hphantom{x}\lambda_{4,1} = 0.5$} &
  {\footnotesize $\hphantom{xx}\lambda_{4,1} = 2/3$} &
  {\footnotesize $\hphantom{xxx.}\lambda_{4,1} = 5/6$} &
  {\footnotesize $\hphantom{xxxx.}\lambda_{4,1} = 1$} \\
  \end{tabularx}
  \medskip}
\newcommand{\fourqubitcutsheadertwo}{\begin{tabularx}{0.9\linewidth}{*4{>{\centering\arraybackslash}X}}
  {\footnotesize $\lambda_{4,1} = 0.5$} &
  {\footnotesize $\hphantom{x}\lambda_{4,1} = 2/3$} &
  {\footnotesize $\hphantom{xx}\lambda_{4,1} = 5/6$} &
  {\footnotesize $\hphantom{xxx}\lambda_{4,1} = 1$} \\
  \end{tabularx}
  \medskip}
\newcommand{\Pspec}{\mathbf P_{\spec}} %
\newcommand{\Pdiag}{\mathbf P_{\diag}}
\newcommand{\Prob}{\mathbf P}
\newcommand{\Qrob}{\mathbf Q}
\newcommand{\Pclass}{\mathbf P}  %
\newcommand{\NP}{\mathbf{NP}}
\newcommand{\SharpP}{\#\mathbf P}
\newcommand{\GapP}{\mathbf{GapP}}
\newcommand{\VP}{\mathbf{VP}}
\newcommand{\VNP}{\mathbf{VNP}}
\DeclareMathOperator{\sesi}{ss}
\newcommand{\sixj}[6]{\begin{bmatrix}#1 & #2 & #4\\#3 & #6 & #5\end{bmatrix}} %
\newcommand{\smallsixj}[6]{\mbox{\scriptsize$\begin{bmatrix}#1 & #2 & #4\\#3 & #6 & #5\end{bmatrix}$}}
\DeclareMathOperator*{\wlim}{w-lim}
\newcommand*\circled[1]{\tikz[baseline=(char.base)]{\node[shape=circle,draw,inner sep=1pt] (char) {#1};}}
\def\centerarc[#1](#2)(#3:#4:#5){\draw[#1]($(#2)+({#5*cos(#3)},{#5*sin(#3)})$) arc (#3:#4:#5);}
\newcommand{\vecr}{\vec{r{}}}
\newcommand{\vecs}{\vec{s{}}}
\newcommand{\emphindex}[1]{\emph{#1}\index{#1}}
\renewcommand{\nomgroup}[1]{%
\ifthenelse{\equal{#1}{Q}}{\bigskip\item[\textbf{Quantum Mechanics and Information Theory}]}{%
\ifthenelse{\equal{#1}{R}}{\bigskip\item[\textbf{Groups and Representations}]}{%
\ifthenelse{\equal{#1}{T}}{\bigskip\item[\textbf{Symplectic and Algebraic Geometry}]}{}}}} %
\begin{document}

\title{\Huge Multipartite Quantum States \\ \vspace{0.2cm} and their Marginals}

\author{Michael Walter}
\previousdegree{Diplom-Mathematiker \\ Georg-August-Universit\"at G\"ottingen}
\authorinfo{born May 3, 1985 \\ citizen of Germany}
\referees{Prof.~Dr.~Matthias Christandl, examiner\\ Prof.~Dr.~Gian Michele Graf, co-examiner\\ Prof.~Dr.~Aram Harrow, co-examiner}
\degreeyear{2014}
\ethdissnumber{22051}
\maketitle

\begin{abstract}{Abstract}
  Subsystems of composite quantum systems are described by reduced density matrices, or \emph{quantum marginals}.
  Important physical properties %
  often do not depend on the whole wave function but rather only on the marginals.
  Not every collection of reduced density matrices can arise as the marginals of a quantum state.
  Instead, there are profound compatibility conditions -- such as Pauli's exclusion principle or the monogamy of quantum entanglement -- which fundamentally influence the physics of many-body quantum systems and the structure of quantum information.
  The aim of this thesis is %
  a systematic and rigorous study of the general relation between \emph{multipartite quantum states}, i.e., states of quantum systems that are composed of several subsystems, and their marginals.

  In the first part of this thesis (Chapters~\ref{ch:onebody}--\ref{ch:multiplicities}) we focus on the one-body marginals of multipartite quantum states.
  Starting from a novel geometric solution of the compatibility problem,
  we then turn towards the phenomenon of quantum entanglement.
  We find that the one-body marginals through their local eigenvalues can characterize the entanglement of multipartite quantum states, and we propose the notion of an entanglement polytope for its systematic study.
  Next, we consider random quantum states, where we describe a method for computing the joint probability distribution of the marginals.
  As an illustration of its versatility, we show that a discrete variant gives an efficient algorithm for the branching problem of compact connected Lie groups.
  A recurring theme throughout the first part is the reduction of quantum-physical problems to classical problems
  of largely combinatorial nature.

  In the second part of this thesis (Chapters~\ref{ch:entropy}--\ref{ch:strong6j}), we study general quantum marginals from the perspective of the von Neumann entropy.
  We contribute two novel techniques for establishing entropy inequalities.
  The first technique is based on phase-space methods; it allows us to establish an infinite number of entropy inequalities for the class of stabilizer states.
  The second technique is based on a novel characterization of compatible marginals in terms of representation theory.
  We show how entropy inequalities can be understood in terms of symmetries, and illustrate the technique by giving a new, concise proof of the strong subadditivity of the von Neumann entropy.
\end{abstract}

\begin{abstract}{Zusammenfassung}

Teilsysteme zusammengesetzter Quantensysteme werden durch reduzierte Dichtematrizen, auch bekannt als \emph{Quantenmarginale}, beschrieben.
We\-sent\-liche physikalische Eigenschaften h\"angen oft nicht von der gesamten Wellenfunktion ab, sondern nur von den reduzierten Dichtematrizen weniger Teilchen.
Nicht alle S\"atze von Dichtematrizen k\"onnen als Mar\-gi\-nale eines globalen Quantenzustands auftreten. %
Tiefgreifende Kom\-pa\-ti\-bi\-li\-t\"ats\-be\-din\-gung\-en wie das Pauli'sche Ausschlussprinzip oder die Monogamie der Quantenverschr\"ankung beeinflussen auf fundamentale Weise die physikalischen Eigenschaften von Vielteilchensystemen und die Struktur von Quanteninformation. %
Ziel dieser Dissertation ist eine systematische und rigorose Untersuchung der allgemeinen Beziehung zwischen \emph{multipartiten Quantenzust\"anden}, d.h.\ Quantenzust\"anden von Vielteilchensystemen, und ihren Marginalen. %

Im ersten Teil dieser Dissertation (Kapitel~\ref{ch:onebody}--\ref{ch:multiplicities}) betrachten wir Einteilchenmarginale multipartiter Quantenzust\"ande.
Wir beginnen mit einer neuen, geometrischen L\"osung des Kompatibilit\"atsproblems.
Danach wenden wir uns dem Ph\"anomen der Quantenverschr\"ankung zu.
Wir zeigen, dass sich mittels der Eigenwerte der Einteilchendichtematrizen Aussagen \"uber die Verschr\"ankungseigenschaften des Gesamtzustands treffen lassen.
Zur systematischen Untersuchung dieses Ph\"anomens f\"uhren wir das Konzept eines ``Verschr\"ankungspolytops'' ein. %
Des Weiteren betrachten wir zuf\"allige Quantenzust\"ande.
Hierzu entwickeln wir ein Verfahren, mit dessen Hilfe sich die gemeinsame Verteilung der Einteilchenmarginale exakt berechnen l\"asst.
Wir illustrieren die Vielseitigkeit unseres Ansatzes, indem wir aus einer diskreten Variante einen effizienten Algorithmus f\"ur das Verzweigungsproblem kompakter, zusammenh\"angender Lie-Gruppen konstruieren. %
Ein wiederkehrendes Motiv im ersten Teil ist die Reduktion quantenphysikalischer Probleme auf im Wesentlichen kombinatorische, klassische Probleme.

Im zweiten Teil dieser Dissertation (Kapitel~\ref{ch:entropy}--\ref{ch:strong6j}) untersuchen wir allgemeine Marginale aus dem Blickwinkel der von Neumann'schen Entropie.
Hier tragen wir zwei neue Techniken zum Beweis von Entropieungleichungen bei.
Die erste Technik basiert auf Phasenraummethoden und erlaubt es uns, eine unendliche Zahl an Entropieungleichungen f\"ur die Klasse der Stabilisatorzust\"ande zu beweisen.
Die zweite Technik beruht auf einer neuen Charakterisierung kompatibler Marginale mittels Darstellungstheorie. %
Wir zeigen, dass Entropieungleichungen im Sinne von Symmetrien verstanden werden k\"onnen, und geben zur Illustration einen neuen, pr\"agnanten Beweis der starken Subadditivit\"at der von Neumann'schen Entropie.

\end{abstract}

\begin{acknowledgements}
  I am greatly indebted to %
  my advisor Matthias Christandl for his support and enthusiasm in research and otherwise.
  I sincerely thank Gian Michele Graf and Aram Harrow for accepting to be my co-examiners and for their careful reading of this thesis.
  I also thank J\"urg Fr\"ohlich for his support. %

  Much of my research work has been done in collaborations, some of whose results are incorporated in this thesis.
  It is a pleasure to thank my collaborators
  Matthias Christandl,
  Brent Doran,
  David Gross,
  Stavros Kousidis,
  Marek Ku\'{s},
  Joseph M.~Renes,
  Burak \c{S}ahino\u{g}lu,
  Adam Sawicki,
  Sebastian Seehars,
  Roman Schmied,
  Volkher Scholz,
  Mich\`{e}le Vergne, and
  Konstantin Wernli
  for making research such an enjoyable enterprise.
  I would also like to extend my gratitude to
  Philippe Biane,
  Alonso Botero,
  Fernando Brand\~{a}o,
  Peter B\"urgisser,
  Mikhail Gromov,
  Patrick Hayden,
  Christian Ikenmeyer,
  Wojciech Kami\'{n}ski,
  Eckhard Meinrenken,
  Graeme Mitchison,
  Renato Renner,
  Nicolas Ressayre,
  Mary Beth Ruskai,
  Philipp Treutlein,
  Stephanie Wehner,
  Reinhard Werner, and
  Andreas Winter
  for many inspiring discussions, helpful advice and well-appreciated support.

  I am grateful to the Institute for Theoretical Physics at ETH Z\"urich for providing an excellent environment for research.
  I also thank all the members of the Quantum Information Theory group for creating such a pleasant and productive atmosphere.
  In the course of my graduate studies, I have had the good fortune of a number of travel opportunities.
  In particular, I want to thank the Isaac Newton Institute and the Institut des Hautes \'Etudes Scientifiques for their hospitality during my research visits; I am grateful to Richard Josza, Noah Linden, Peter Shor, and Andreas Winter and to Mikhail Gromov for making these visits possible.

  Finally, I want to thank Christian Majenz, Burak \c{S}ahino\u{g}lu, Volkher Scholz, and Jonathan Skowera for their careful reading of parts of a preliminary version of this thesis.
\end{acknowledgements}

\cleardoublepage\pagestyle{empty}\phantomsection\pdfbookmark{\contentsname}{toc} %
\tableofcontents

\cleardoublepage\pagestyle{fancy}\startnumbering

\chapter{Introduction}
\label{ch:intro}

The pure state of a quantum system is described by a vector in a Hilbert space, or, more precisely, by a point in the corresponding projective space. Since the Hilbert space for multiple particles is given by the tensor product of the Hilbert spaces of the individual particles, its dimension grows exponentially with the number of particles.
This exponential behavior is the key obstruction to the classical modeling of quantum systems.
The observation is as old as quantum theory itself, and physicists ever since have tried to find ways around it.
One way to address the aforementioned exponential complexity is to make use of the following simple yet powerful observation:
Important physical properties %
often do not depend on the whole wave function but rather only on a small part, namely the \emph{reduced density matrix}, or \emph{quantum marginal}, of a few particles \cite{Loewdin55}.
For instance, the ground state energy of a spin chain is given by a minimization over nearest-neighbor reduced density matrices (\autoref{fig:intro/spin chain}).
In quantum chemistry, the binding energy of a molecule is similarly given by a minimization over two-electron reduced density matrices arising from many-electron wave functions.
Mathematically, the reduced density matrix is the contraction of (or trace over) the indices of the projection operator onto the wave function over the remaining particles.

Not every collection of reduced density matrices can arise as the marginals of a quantum state---there are profound ``kinematic'' constraints that are purely due to the geometry of the quantum state space.
The fundamental problem of characterizing the compatibility of reduced density matrices is known as the \emphindex{quantum marginal problem} in quantum information theory and as the \emph{$n$-representability problem}\index{n-representability problem@$n$-representability problem} in quantum chemistry (\autoref{fig:intro/qmp}).
It has been long recognized for its importance in many-body quantum physics and quantum chemistry \cite{Coleman63, Ruskai69, ColemanYukalov00, Coleman01}.
Unfortunately, the general problem is $\QMA$-complete and therefore $\NP$-hard, and so believed to be computationally intractable, even on a quantum computer \cite{Liu06, LiuChristandlVerstraete07}.
However, even a partial understanding of the problem has proved to be immensely useful.
Entropy inequalities such as the strong subadditivity of the von Neumann entropy \cite{LiebRuskai73}, which constrain the reduced density matrices of a quantum state, are indispensable tools in quantum statistical physics and quantum information theory \cite{OhyaPetz93}.
In computational quantum physics, the power of variational methods can be explained by their ability to reproduce the marginals of the ground state \cite{VerstraeteCirac06}.
The fundamental \emphindex{Pauli exclusion principle} \cite{Pauli25, Pauli46}, which states that the occupation numbers of a fermionic quantum state cannot exceed one, can be understood as a constraint on the one-body reduced density matrix.

\begin{figure}[t]
  \centering
  \includegraphics[width=0.9\linewidth]{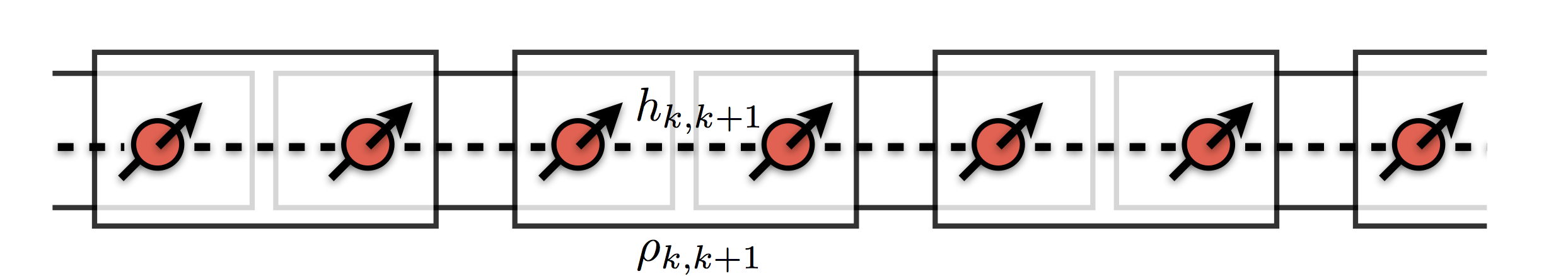}
  \begin{equation*}
    E_0
    = \min_\rho \tr H \rho
    = \min_\rho \sum_k \tr h_{k,k+1} \rho_{k,k+1}
    = \min_{\substack{\{\rho_{k,k+1}\} \\ \text{compatible}}} \sum_k \tr h_{k,k+1} \rho_{k,k+1}
  \end{equation*}
  \caption[The ground state energy of a quantum spin chain]{\emph{The ground state energy of a quantum spin chain} with nearest-neighbor interactions, $H = \sum_k h_{k,k+1}$, only depends on the two-body reduced density matrices $\rho_{k,k+1}$ that are compatible with a global state $\rho$.}
  \label{fig:intro/spin chain}
\end{figure}

\bigskip

The aim of this thesis is a systematic and rigorous study of the relation between multipartite quantum states and their marginals,
which we carry out %
by using a diverse set of mathematical tools.
It is naturally divided into two parts:
Chapters~\ref{ch:onebody}--\ref{ch:multiplicities} are concerned with one-body reduced density matrices and Chapters~\ref{ch:entropy}--\ref{ch:strong6j} with general marginals.
Each part starts with an initial chapter that introduces background material. %
The subsequent chapters then present our research contributions; each begins with a summary of the main results that are obtained in the chapter and concludes with a discussion of the results presented.
We now give a brief overview of the contents of the individual chapters.

In \autoref{ch:onebody} we formally introduce the one-body quantum marginal problem, i.e., the problem characterizing the one-body reduced density matrices that are compatible with a global pure state.
We describe the fundamental connection of this problem to geometric invariant theory, which is an appropriate mathematical framework for its study, and explain the physical consequences of the mathematical theory.
For any given number of particles, local dimensions and statistics, there exists a finite set of linear inequalities that constrain the eigenvalues of compatible one-body reduced density matrices; these inequalities together cut out a convex polytope, known as a moment polytope in mathematics.
The facets of this polytope acquire a physical interpretation through associated ``selection rules''.
We also discuss a dual, representation-theoretic description of the polytope.
Many aspects are clarified greatly by using the appropriate perspective.

\begin{figure}[t]
  \centering
  \includegraphics[width=0.7\linewidth]{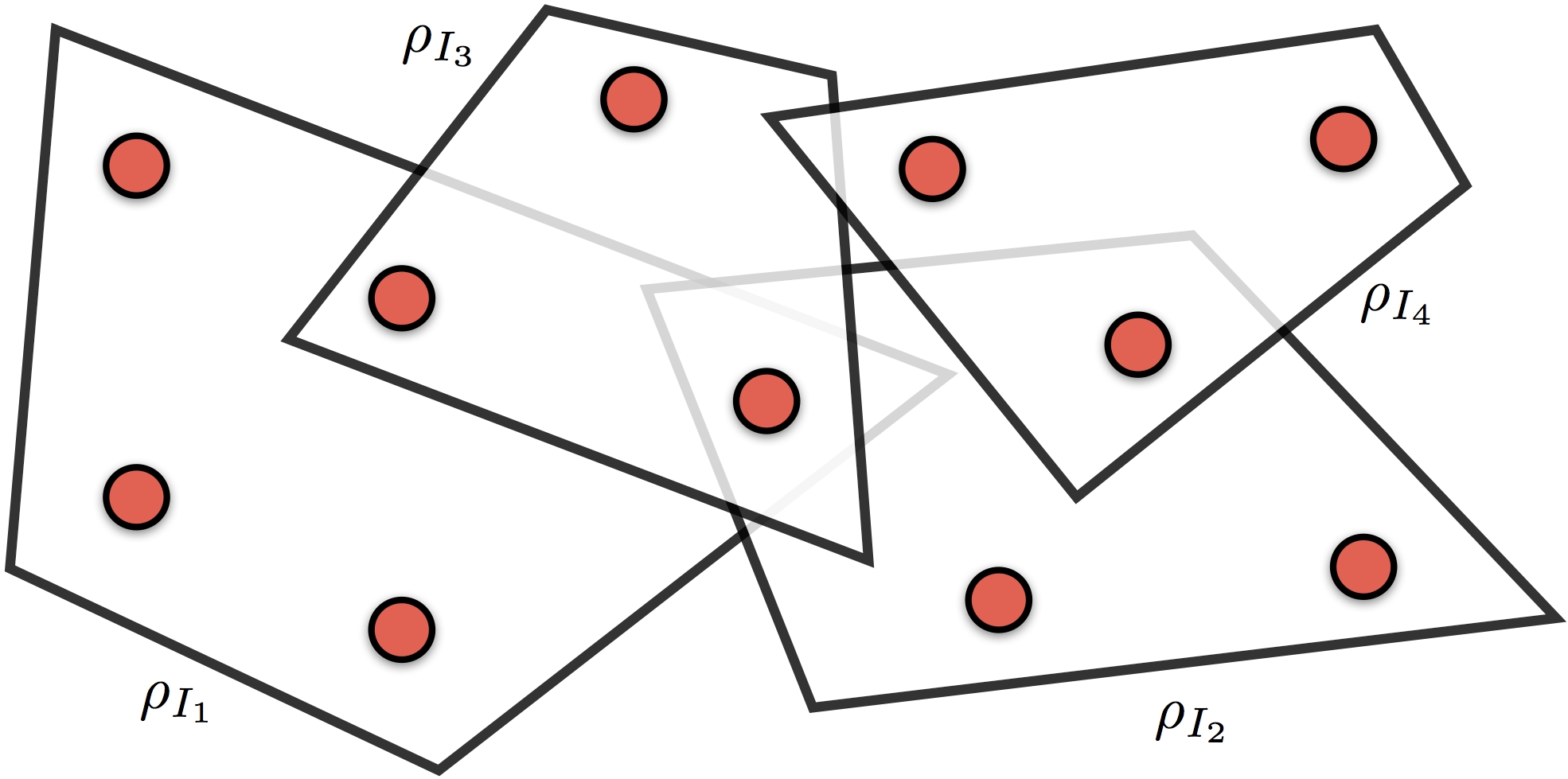}
  \caption[The general quantum marginal problem]{\emph{The general quantum marginal problem.} Given reduced density matrices $\rho_I$ for subsets $I$ of the particles, are they compatible with a global state $\rho$?}
  \label{fig:intro/qmp}
\end{figure}

In \autoref{ch:kirwan} we first review some of the history of the one-body quantum marginal problem, which has seen some significant progress in recent years, culminating in Klyachko's general solution. Along the way we give some concrete examples.
We then present a different, geometric approach to the computation of moment polytopes, which is inspired by recent work of Ressayre.
Significantly, our approach completely avoids many technicalities that have appeared in previous solutions to the problem, and it can be readily implemented algorithmically.

\autoref{ch:slocc} is devoted to the phenomenon of quantum entanglement, which profoundly influences the relation between a quantum system and its parts.
We find that in the case of multipartite pure states, features of the entanglement can already be extracted from the local eigenvalues---the natural generalization of the Schmidt coefficients or entanglement spectrum.
To study this systematically, we associate with any given class of entanglement an entanglement polytope, formed by the eigenvalues of the one-body marginals compatible with the class.
In this way we obtain local witnesses for the multipartite entanglement of a global pure state.
Our construction is applicable to systems of arbitrary size and statistics, and we explain how it can be adapted to states that are affected by low levels of noise.

In \autoref{ch:dhmeasure} we consider the following quantitative version of the one-body quantum marginal problem:
Given a pure state chosen uniformly at random, what is the joint probability distribution of its one-body reduced density matrices?
We obtain the exact probability distribution by reducing to the corresponding distribution of diagonal entries, which corresponds to a quantitative version of a classical marginal problem.
This reduction is an instance of a more general ``derivative principle'' for Duistermaat--Heckman measures in symplectic geometry.

In \autoref{ch:multiplicities} we digress in a brief interlude into a study of multiplicities of irreducible representations of compact, connected Lie groups.
The asymptotic growth of such multiplicities in a ``semiclassical limit'' is directly related to the probability measures considered in the preceding chapter.
We show that the ideas of the preceding chapter can be discretized, or ``quantized'', to give an efficient algorithm for the branching problem, which asks for the multiplicity of an irreducible representation of a subgroup $K \subseteq K'$ in the restriction of an irreducible representation of $K'$.
In particular, we obtain the first polynomial-time algorithm for computing Kronecker coefficients for Young diagrams of bounded height.
There is a surprising connection between our results on entanglement polytopes and multiplicities to recent efforts in the geometric complexity approach to the $\Pclass$ vs.\ $\NP$ problem in computer science.
We sketch this connection and explain some additional observations regarding the relevance of asymptotics.

In \autoref{ch:entropy} we initiate our study of general quantum marginals, motivated by the fundamental role of entropy in physics and information theory.
Like the marginals themselves, these entropies are not independent; instead, they are constrained by linear entropy inequalities -- the ``laws of information theory'' -- such as the strong subadditivity of the von Neumann entropy, which is an indispensable tool in the analysis of quantum systems.
A major open question is to decide if there are any further entropy inequalities satisfied by the von Neumann entropy that are not a consequence of strong subadditivity.
Classically, such entropy inequalities have been found for the Shannon entropy, and the discovery of any further entropy inequality would be considered a major breakthrough.

In \autoref{ch:stabs} we describe a first approach to the study of entropy inequalities.
We consider two classes of quantum states -- stabilizer states and Gaussian states -- which are versatile enough to exhibit intrinsically quantum features, such as multipartite entanglement, but possess enough structure to allow for a concise and computationally efficient description.
Quantum phase-space methods have been built around both classes of states, and we show how they can be used to construct a classical model that can be used to lift entropy inequalities for the Shannon entropy to quantum entropies.
In particular, our technique immediately implies that the von Neumann entropy of stabilizer states satisfies all conjectured entropy inequalities.

In \autoref{ch:strong6j} we introduce a second approach, which is applicable to general quantum states.
To this end, we unveil a novel connection between the existence of multipartite quantum states with given marginal eigenvalues and the representation theory of the symmetric group.
We use this connection to give a new proof of the strong subadditivity and weak monotonicity of the von Neumann entropy, and propose a general approach to finding further entropy inequalities based on studying representation-theoretic symbols and their symmetry properties.

\bigskip

The list of symbols (pp.~\pageref{list of symbols first}--\pageref{list of symbols last}) summarizes the most important notation used throughout this thesis.
This introduction has been adapted from \cite{ChristandlDoranKousidisEtAl12}.
Earlier versions of Figures~\ref{fig:intro/spin chain} and \ref{fig:intro/qmp} have been used in several presentations by Matthias Christandl and the author.
Most of the material in this thesis has been assembled from the works \cite{WalterDoranGrossEtAl13, ChristandlDoranKousidisEtAl12,ChristandlDoranWalter12, GrossWalter13, ChristandlSahinogluWalter12}, and we give the corresponding references at the beginning of each chapter.
\chapter{The One-Body Quantum Marginal Problem}
\label{ch:onebody}

In this chapter we formally introduce the one-body quantum marginal problem and discuss some fundamental properties.
We recall some basic concepts from the theory of Lie groups and their representations that are used throughout this thesis.
Next, we introduce the connection to geometric invariant theory, which is the appropriate mathematical framework for the study of the one-body quantum marginal problem and its variants.
We then explain the physical consequences of the mathematical theory for the quantum marginal problem and conclude by discussing the dual, representation-theoretic description in terms of Kronecker coefficients.
None of the results in this chapter are new; in each section we give pointers to relevant background literature.

\subsection*{Distinguishable Particles}

Composite quantum systems\index{composite quantum systems} are modeled by the tensor product of the Hilbert spaces describing their constituents.
Throughout this thesis, we will assume that all Hilbert spaces are finite-dimensional unless stated otherwise.
It is useful to think of the constituents as individual particles, although they can be of more general nature; for instance, the subsystems can describe different degrees of freedom such as position and spin.
Depending on whether the particles are in principle distinguishable or indistinguishable, we distinguish two basic classes of composite systems, which are of fundamentally different nature.

In the case of \emphindex{distinguishable particles}, the system is described by the tensor-product $\calH = \bigotimes_{k=1}^n \calH_k$%
\nomenclature[QH, K, L]{$\calH, \calK, \calL, \dots$}{Hilbert spaces\nomnorefpage}%
\nomenclature[Qketpsi]{$\ket\psi$, $\bra\psi$}{vector in Hilbert space, and its dual\nomnorefpage}%
\nomenclature[Qketpsibrapsi]{$\proj\psi$}{orthogonal projector onto $\CC\ket\psi$\nomnorefpage}%
\nomenclature[Q<psi,phi>]{$\langle\psi\vert\phi\rangle$}{inner product of two vectors in Hilbert space\nomnorefpage}%
\nomenclature[Q|psi|]{$\norm{\psi}$}{norm of vector in Hilbert space\nomnorefpage}%
\nomenclature[QH_k]{$\calH_k$}{Hilbert space of $k$-th particle}%
\nomenclature[Qspec]{$\spec X$}{spectrum of Hermitian operator $X$}
of the Hilbert spaces describing the individual particles. Given a density matrix $\rho$ on $\calH$, the \emph{one-body reduced density matrices}\index{reduced density matrix!one-body}%
\nomenclature[Qrho]{$\rho$}{density matrix\nomnorefpage}%
\nomenclature[Qrho_k]{$\rho_k$}{one-body reduced density matrix of $k$-th particle}
$\rho_k$ are defined by taking the partial trace of $\rho$ over all subsystems other than $k$.
In other words,
\begin{equation}
\label{eq:onebody/rdm}
  \tr \rho_k O_k = \tr \rho (\Id^{\otimes k-1} \otimes O_k \otimes \Id^{\otimes n-k}) \qquad (\forall O_k = O_k^\dagger).
\end{equation}
In physical terms, \eqref{eq:onebody/rdm} asserts that $\rho_k$ reproduces faithfully the expectation values of all local observables $O_k$.
Hence $\rho_k$ describes the effective state of the $k$-th particle.
The one-body quantum marginal problem then is the following compatibility problem:

\begin{pro}[One-Body Quantum Marginal Problem]\index{quantum marginal problem!one-body}
\label{pro:onebody/qmp}
  Let $\calH = \bigotimes_{k=1}^n \calH_k$.
  Given density matrices $\rho_k$ on $\calH_k$ for all $k=1, \dots, n$,
  does there exist a pure state $\rho = \proj\psi$ on $\calH$ such that $\rho_k$ are its one-body reduced density matrices?
\end{pro}

We will call such density matrices $\rho_1, \dots, \rho_n$ \emph{compatible (with a global pure state)}\index{compatible quantum marginals}.
The term ``quantum marginal problem'' has been coined by Klyachko in analogy to the classical marginal problem in probability theory, which asks for the existence of a joint probability distribution for a given set of marginal distributions \cite{Klyachko04}. Its one-body version was first solved in the paper \cite{Klyachko04}; cf.\ \cite{DaftuarHayden04}.

It is easy to see that the compatibility of a given set of density matrices only depends on their eigenvalues.
Indeed, suppose that $\rho$ is a pure state with one-body reduced density matrices $\rho_k$.
Then, for any collection of unitaries $U_k$ on the Hilbert spaces $\calH_k$, the state $\tilde\rho := (U_1 \otimes \dots \otimes U_k) \rho (U_1^\dagger \otimes \dots \otimes U_n^\dagger)$ is a pure state with $\tilde\rho_k = U_k \rho_k U^\dagger$.
Thus in the formulation of the one-body quantum marginal problem we may equivalently ask which collections of real numbers $\vec\lambda_1, \dots, \vec\lambda_n$ -- which we will always take to be ordered non-increasingly and call \emph{spectra}\index{spectrum}\nomenclature[Qlambda_k]{$\vec\lambda_k$}{vector of eigenvalues $\lambda_{k,1} \geq \lambda_{k,2} \geq \dots$ of $k$-th one-body reduced density matrix} -- can arise as the eigenvalues of the one-body reduced density matrices of a global pure state $\rho$.
We will likewise call such spectra $\vec\lambda_1, \dots, \vec\lambda_n$ \emph{compatible}\index{compatible spectra}.
In many ways, this unitary invariance is at the root of the solvability of \autoref{pro:onebody/qmp}.
No analogous property holds for the general quantum marginal problem.

For two particles, $n=2$, the one-body quantum marginal problem is rather straightforward to solve.
For this, recall that any vector $\ket\psi$ on a tensor product $\calH_1 \otimes \calH_2$ can be expanded in the form
\begin{equation}
\label{eq:onebody/schmidt}
  \ket\psi = \sum_{i=1}^r \sqrt{p_i} \ket i \otimes \ket{\widetilde i}
\end{equation}
for orthonormal sets of vectors $\ket i$ in $\calH_1$ and $\ket{\widetilde i}$ in $\calH_2$ and positive numbers $p_j > 0$.
In quantum information theory this is called the \emphindex{Schmidt decomposition}; it is a simple consequence of the singular value decomposition in linear algebra.
Thus if $\rho = \proj\psi$ is a pure state then it follows that $\rho_1 = \sum_i p_i \proj i$ and $\rho_2 = \sum_i p_i \proj {\widetilde i}$ have the same non-zero eigenvalues, including multiplicities (and indeed the same spectrum if $\calH_1$ and $\calH_2$ are of the same dimension).
Conversely, for any two such density matrices $\rho_1$ and $\rho_2$, we can always use \eqref{eq:onebody/schmidt} with the respective eigenbases to define a corresponding global pure state. We record for future reference:

\begin{lem}
\label{lem:onebody/two}
  Any two density matrices $\rho_1$ and $\rho_2$ are compatible with a global pure state
  if and only if
  $\rho_1$ and $\rho_2$ have the same non-zero eigenvalues (including multiplicities),
  i.e., if and only if
  $\vec\lambda_1 \setminus \{0\} = \vec\lambda_2 \setminus \{0\}$.
\end{lem}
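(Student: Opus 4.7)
The plan is to extract both implications directly from the Schmidt decomposition \eqref{eq:onebody/schmidt}, which has already been stated in the excerpt. The unitary invariance of compatibility (noted right before the lemma) lets us work with spectra rather than with the density matrices themselves, which will streamline the construction in the ``if'' direction.

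For the ``only if'' direction, I would start with a pure state $\ket\psi \in \calH_1 \otimes \calH_2$ compatible with $\rho_1$ and $\rho_2$ and write its Schmidt decomposition $\ket\psi = \sum_{i=1}^r \sqrt{p_i} \ket i \otimes \ket{\widetilde i}$ with $p_i > 0$ and with $\{\ket i\}$, $\{\ket{\widetilde i}\}$ orthonormal. I would then compute the two partial traces explicitly, using \eqref{eq:onebody/rdm} with $O_k$ ranging over $\ketbra{a}{b}$ in an orthonormal basis extending $\{\ket i\}$ (respectively $\{\ket{\widetilde i}\}$). This yields the spectral decompositions $\rho_1 = \sum_i p_i \proj{i}$ and $\rho_2 = \sum_i p_i \proj{\widetilde i}$, from which it is immediate that the non-zero eigenvalues of $\rho_1$ and $\rho_2$ coincide, with multiplicities equal to the multiplicities of the corresponding Schmidt coefficients.

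For the ``if'' direction, given density matrices $\rho_1$ and $\rho_2$ whose non-zero eigenvalues agree (with multiplicity), I would fix spectral decompositions $\rho_1 = \sum_{i=1}^r p_i \proj{i}$ and $\rho_2 = \sum_{i=1}^r p_i \proj{\widetilde i}$ with a common sequence of positive eigenvalues $p_i$, choosing a joint ordering so that the $i$-th eigenvector on each side corresponds to the same $p_i$. I would then define
\begin{equation*}
  \ket\psi := \sum_{i=1}^r \sqrt{p_i}\, \ket i \otimes \ket{\widetilde i} \in \calH_1 \otimes \calH_2
\end{equation*}
and verify, again via \eqref{eq:onebody/rdm}, that $\proj\psi$ has reduced density matrices $\rho_1$ and $\rho_2$. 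Normalization $\braket{\psi|\psi} = \sum_i p_i = 1$ is automatic because $\tr\rho_1 = 1$.

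The proof is essentially routine once the Schmidt decomposition is in hand; there is no real obstacle. The only point that requires care is bookkeeping in the ``if'' direction: the spectral decompositions of $\rho_1$ and $\rho_2$ each extend to an orthonormal basis of their respective Hilbert spaces, but only the eigenvectors for non-zero eigenvalues enter the definition of $\ket\psi$, which is why the criterion is phrased in terms of $\vec\lambda_1 \setminus \{0\} = \vec\lambda_2 \setminus \{0\}$ rather than equality of full spectra (allowing $\dim \calH_1 \neq \dim \calH_2$, or equal dimensions with differing numbers of zero eigenvalues).
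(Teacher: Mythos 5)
Your proof is correct and follows exactly the paper's route: both directions are read off the Schmidt decomposition \eqref{eq:onebody/schmidt}, with the converse obtained by building $\ket\psi = \sum_i \sqrt{p_i}\,\ket i \otimes \ket{\widetilde i}$ from the matched eigenbases of $\rho_1$ and $\rho_2$. No issues.
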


In particular, any density matrix $\rho_1$ can be realized as the reduced density matrix of a pure state.
In quantum information, such a pure state is called a \emphindex{purification} of the density matrix $\rho_1$.

An important consequence of \autoref{lem:onebody/two} is that $n+1$ spectra $\vec\lambda_1$, \dots, $\vec\lambda_n$, $\vec\mu$ are compatible with a pure state if and only if $\vec\lambda_1$, \dots, $\vec\lambda_n$ are compatible with a global state of spectrum $\vec\mu$.
Therefore there is no loss of generality in restricting to pure states in our formulation of \autoref{pro:onebody/qmp}.

Another useful corollary is that the one-body quantum marginal problem for an arbitrary number of particles can always be reduced to the case $n=3$:
A given collection of spectra $\vec\lambda_1, \dots, \vec\lambda_n$ is compatible if and only if there exists a spectrum $\vec\mu$ such that both $\vec\lambda_1, \vec\lambda_2, \vec\mu$ as well as $\vec\mu, \vec\lambda_3, \dots, \vec\lambda_n$ are compatible, and this process can be iterated.
This is immediate from the preceding and \autoref{lem:onebody/two}, which also shows that the rank of $\vec\mu$ can be bounded by the minimum of $\dim \calH_1 \times \dim \calH_2$ and $\prod_{k=3}^n \dim \calH_k$.

\subsection*{Identical Particles}

In the case of $n$ \emphindex{identical particles}, the system is described by the $n$-th symmetric or antisymmetric tensor power of the single-particle Hilbert space, $\calH = \Sym^n \calH_1$\nomenclature[QSym^n H_1]{$\Sym^n \calH_1$}{symmetric subspace of $\calH_1^{\otimes n}$} or $\calH = \Alt^n \calH_1$\nomenclature[QWedge^n H_1]{$\Alt^n \calH_1$}{antisymmetric subspace of $\calH_1^{\otimes n}$} depending on whether the particles are \emphindex{bosons} or \emphindex{fermions}. By considering $\calH$ as a subspace of $\calH_1^{\otimes n}$, we can define the one-body reduced density matrices as in the case of distinguishable particles. Of course, $\rho_1 = \dots = \rho_n$, since for bosons as well as for fermions the global state $\rho$ is permutation-invariant. In summary,
\begin{equation}
\label{eq:onebody/rdm fermions}
  \tr \rho_1 O_1
  = \frac 1 n \tr \rho (\sum_{k=1}^n \Id^{\otimes k-1} \otimes O_1 \otimes \Id^{\otimes n-k})
  = \frac 1 n \tr \rho (\sum_{i,j} \braket{i | O | j} a_i^\dagger a_j),
\end{equation}
where in the last expression $a_i^\dagger$ and $a_j$ denote the creation and annihilation operators with respect to an arbitrary basis $\ket i$ of the single-particle Hilbert space.

For fermions, we thus arrive at the following variant of \autoref{pro:onebody/qmp}:

\begin{pro}[One-Body $N$-Representability Problem]\index{n-representability problem@$n$-representability problem!one-body}
\label{pro:onebody/nrep}
  Given a density matrix $\rho_1$ on $\calH_1$, does there exist a pure state $\rho = \proj\psi$ on $\Alt^n \calH_1$ such that $\rho_1$ is its one-body reduced density matrix?
\end{pro}

We will call such a density matrix $\rho_1$ \emph{$n$-representable}\index{n-representable density matrix@$n$-representable density matrix}.
In the context of second quantization, it is often more convenient to normalize the one-body marginal to trace $n$.
Following quantum chemistry conventions, we correspondingly set $\gamma_1 := n \rho_1$ and call it the \emphindex{first-order density matrix}\nomenclature[Qgamma_1]{$\gamma_1$}{first-order density matrix of fermionic state} \cite{Loewdin55} (but remark that it is not a density matrix in the strict sense).
In quantum chemistry, the diagonal entries of $\gamma_1$ are called \emphindex{occupation numbers}, while its eigenvalues are called the \emph{natural occupation numbers}\index{occupation numbers!natural}.
As in the case of distinguishable particles, \autoref{pro:onebody/nrep} depends only on the eigenvalues of $\rho_1$, or, equivalently, on the natural occupation numbers of $\gamma_1$.
In this language, the \emph{Pauli exclusion principle}\index{Pauli exclusion principle|textbf} asserts that the natural occupation numbers, and hence all occupation numbers, never exceed one \cite{Pauli25}.
This is obvious from second quantization, since $\braket{i | \gamma_1 | i} = \tr \rho \, a_i^\dagger a_i \leq 1$.
Equivalently, the largest eigenvalue of an $n$-representable one-body density matrix is at most $1/n$.
However, there are many more constraints on the natural occupation numbers of a pure state of $n$ fermions \cite{BorlandDennis72,KlyachkoAltunbulak08}.

\bigskip

\autoref{pro:onebody/nrep} can also be formulated for bosons. Here it can be shown that the resulting problem is in fact trivial: Any density matrix $\rho_1 = \sum_i p_i \proj i$ arises as the one-body reduced density matrix of a pure $\rho = \proj\psi$ state on the symmetric subspace, e.g., $\ket\psi = \sum_i \sqrt{p_i} \ket i^{\otimes n}$ \cite{KlyachkoAltunbulak08}.

Further variants of the one-body quantum marginal problem may arise from physical or mathematical considerations\index{quantum marginal problem!variants}.
For instance, the analysis of fermionic systems with several internal degrees of freedom leads to the study of other irreducible representations besides the symmetric or antisymmetric subspace \cite{KlyachkoAltunbulak08}.
We will discuss one such example at the end of \autoref{sec:kirwan/examples}.
We may also combine systems composed of different species of particles, some of them indistinguishable among each other.
On a mathematical level, this situation also arises when the ``purification trick'' that we used to restrict to global pure states in the formulation of \autoref{pro:onebody/qmp} is applied to systems of identical particles.
The mathematical framework that we outline in the subsequent sections subsumes all these variants of the marginal problem.

\section{Lie Groups and their Representations}
\label{sec:onebody/lie}

Before we proceed it will be useful to recall some fundamental notions from the theory of Lie groups and their representations.
We illustrate the general theory in the important case of the unitary groups and their complexification, the general linear groups, and summarize the notation in \autoref{tab:onebody/unitary group}.
We refer to \cite{Knapp86, FultonHarris91, CarterSegalMacDonald95, Knapp02, Procesi07, Brion10} for comprehensive introductions to the subject.

\subsection*{Structure Theory}

Let $G$ be a connected reductive algebraic group $G$ with Lie algebra $\mathfrak g$.%
\index{algebraic group}\index{reductive algebraic group}\index{Lie group}\nomenclature[RG]{$G$, $\mathfrak g$}{connected reductive algebraic group and its Lie algebra}
We denote the Lie bracket by $[-,-]$ and the exponential map by $\exp \colon \mathfrak g \rightarrow G$.%
\nomenclature[{R[-,-]}]{$[-,-]$}{Lie bracket}%
\nomenclature[Rexp]{$\exp$}{exponential map}
Let $K \subseteq G$ be a maximal compact subgroup with Lie algebra $\mathfrak k$.%
\nomenclature[RK]{$K$, $\mathfrak k$}{compact connected Lie group and its Lie algebra}
Then $G$ is the complexification of $K$, and $\mathfrak g = \mathfrak k \oplus i \mathfrak k$.
Let $T \subseteq K$ be a maximal Abelian subgroup, called a \emphindex{maximal torus}, with Lie algebra $\mathfrak t \subseteq \mathfrak k$.%
\nomenclature[RT]{$T$, $\mathfrak t$}{maximal torus of $K$ and its Lie algebra}
Its complexification is a maximal Abelian subgroup $T_\CC \subseteq G$ with Lie algebra $\mathfrak h = \mathfrak t \oplus i \mathfrak t$.%
\nomenclature[RT_C]{$T_\CC$, $\mathfrak h$}{complexification of $T$ and its Lie algebra}
Since $T_\CC$ is Abelian, its irreducible representations are one-dimensional and can be be written in the form
\begin{equation*}
  \Pi \colon T_\CC \rightarrow \CC^*,
  \quad
  \Pi(\exp X) = e^{\omega(X)},
\end{equation*}
where $\CC^* := \GL(1) = \CC \setminus \{0\}$ and where $\omega$ is a complex-linear functional in $\mathfrak h^* = \Hom_\CC(\mathfrak h, \CC)$\nomenclature[Rhstar]{$\mathfrak h^*$}{complex-linear functionals on $\mathfrak h$}, called a \emphindex{weight}.
Note that the weight is simply the Lie algebra representation corresponding to $\Pi$; it encodes the eigenvalues by which the elements of the Lie algebra act on the representation.
Since $\Pi(T) \subseteq \U(1)$, the weights attain imaginary values on $\mathfrak t$.
We define the real subspace%
\nomenclature[Ritstar]{$i \mathfrak t^*$}{functionals $\omega \in \mathfrak h^*$ with $\omega(i\mathfrak t) \subseteq \RR$}%
\nomenclature[Romega]{$\omega$}{element of $i \mathfrak t^*$}
\begin{equation*}
  i \mathfrak t^*
  = \{ \omega \in \mathfrak h^* : \omega(\mathfrak t) \in i \RR \}
  = \{ \omega \in \mathfrak h^* : \omega(i \mathfrak t) \in \RR \}.
\end{equation*}
Then the set of weights forms a lattice $\Lambda^*_G \subseteq i \mathfrak t^*$, called the \emphindex{weight lattice};%
\nomenclature[RLambdaG]{$\Lambda^*_G$}{weight lattice of $G$}
its rank is equal to $r_G = \dim_{\RR} T = \dim_{\CC} T_\CC$, called the \emph{rank}\index{rank of Lie group} of the group $G$.
\nomenclature[RrG]{$r_G$}{rank of $G$}
Now let $\Pi \colon G \rightarrow \GL(V)$ be an arbitrary representation of $G$ on a finite-dimensional complex vector space $V$.
Its differential $\pi \colon \mathfrak g \rightarrow \gl(V)$ is a representation of the Lie algebra.%
\nomenclature[RPi]{$\Pi$, $\pi$}{representation of $G$ and corresponding representation of $\mathfrak g$}
If we restrict to $T_\CC \subseteq G$ then we obtain a decomposition
\begin{equation*}
  V = \bigoplus_{\omega \in \Lambda^*_G} V_\omega,
\end{equation*}
where each
\[V_\omega := \{ \ket\psi \in V : \pi(H) \ket\psi = \omega(H) \ket\psi \quad (\forall H \in \mathfrak h) \}\]
is called a \emphindex{weight space}.%
\nomenclature[RV_omega]{$V_\omega$}{weight spaces of a representation $V$}%
\nomenclature[RV_omega]{$V, W, \dots$}{representations\nomnorefpage}
The elements of $V_\omega$ are called \emph{weight vectors}\index{weight vector}; each weight vector spans an irreducible representation of $T_\CC$ of weight $\omega$.

\bigskip

The Lie group $G$ acts on itself by conjugation, $g . h := g h g^{-1}$.
By taking the derivative, we obtain the \emphindex{adjoint representation} $\Ad \colon G \rightarrow \GL(\mathfrak g)$ of $G$ on $\mathfrak g$.%
\nomenclature[Rad]{$\Ad$, $\ad$}{adjoint representation of $G$ on $\mathfrak g$ and its differential}
Its differential is the representation of the Lie algebra $\mathfrak g$ on itself by the Lie bracket, $\ad(X)Y = [X,Y]$.
By decomposing the adjoint representation into weight spaces $\mathfrak g_\alpha$ and observing that $\mathfrak g_0 = \mathfrak h$, we obtain
\begin{equation*}
  \mathfrak g = \mathfrak h \oplus \bigoplus_{\alpha \in R_G} \mathfrak g_{\alpha},
\end{equation*}
where $R_G := \{ \alpha \neq 0 : \mathfrak g_\alpha \neq 0 \} \subseteq \Lambda^*_G$ is the set of non-trivial weights of the adjoint representation, called the \emphindex{roots}.%
\nomenclature[RR_G]{$R_G$}{set of roots of $G$}
\nomenclature[Ralpha]{$\alpha$}{root}
The corresponding weight spaces
\[\mathfrak g_\alpha = \{ X \in \mathfrak g : [H,X] = \alpha(H) X \quad (\forall H \in \mathfrak h) \}\]
are called \emphindex{root spaces}.%
\nomenclature[Rg_alpha]{$\mathfrak g_\alpha$}{root spaces of $\mathfrak g$}
For an arbitrary representation $V$ we have that $\pi(\mathfrak g_{\alpha}) V_\omega \subseteq V_{\omega+\alpha}$;
in particular, $[\mathfrak g_{\alpha}, \mathfrak g_{\beta}] \subseteq \mathfrak g_{\alpha+\beta}$. %
All root spaces $\mathfrak g_\alpha$ are one-dimensional, and for each root $\alpha$, $-\alpha$ is also a root.
We can find basis vectors $E_\alpha \in \mathfrak g_\alpha$ and elements $Z_\alpha \in i\mathfrak t$, called \emphindex{co-roots},
such that
\begin{equation}
\label{eq:onebody/sl2}
  [E_\alpha, E_{-\alpha}] = Z_\alpha
  \quad\text{and}\quad
  [Z_\alpha, E_{\pm\alpha}] = \pm 2 E_{\pm\alpha}.
\end{equation}
Thus for each pair of roots $\{\pm\alpha\} \subseteq R$, $\mathfrak g_\alpha \oplus \mathfrak g_{-\alpha} \oplus \CC Z_\alpha$ is a complex Lie algebra isomorphic to $\mathfrak{sl}(2)$.
Set $\mathfrak k_\alpha := (\mathfrak g_\alpha \oplus \mathfrak g_{-\alpha}) \cap \mathfrak k$.%
\nomenclature[Rk_alpha]{$\mathfrak k_\alpha$}{root spaces of $\mathfrak k$}
Then $\mathfrak k_\alpha \oplus i\RR Z_\alpha$ is a real Lie algebra isomorphic to $\mathfrak{su}(2)$, and we have a decomposition
\begin{equation}
\label{eq:onebody/compact root space decomposition}
  \mathfrak k = \mathfrak t \oplus \bigoplus_{\{\pm\alpha\} \subseteq R_G} \mathfrak k_\alpha.
\end{equation}
The ``Pauli matrices'' $X_\alpha := E_\alpha + E_{-\alpha}$ and $Y_\alpha := i (E_{-\alpha} - E_\alpha)$ are a basis of $i \mathfrak k_\alpha$; they satisfy the commutation relations%
\nomenclature[RX alpha]{$X_\alpha, Y_\alpha, Z_\alpha$}{Pauli matrices corresponding to root $\alpha$}%
\nomenclature[RX]{$X, Y, Z, \dots$}{elements of $\mathfrak g$\nomnorefpage}%
\nomenclature[RH]{$H$}{element of $i \mathfrak t$\nomnorefpage}
\begin{equation}
\label{eq:onebody/pauli}
  [X_\alpha, Y_\alpha] = 2 i Z_\alpha, \quad
  [Y_\alpha, Z_\alpha] = 2 i X_\alpha \quad\text{and}\quad
  [Z_\alpha, X_\alpha] = 2 i Y_\alpha.
\end{equation}
Now choose a decomposition $R_G = R_{G,+} \sqcup R_{G,-}$ of the set of roots into \emph{positive and negative roots}\index{roots!positive and negative}\nomenclature[RR_Gpm]{$R_{G,\pm}$}{positive and negative roots}.
That is, $R_{G,-} = -R_{G,+}$ and each subset is strictly contained in a half-space of the (real) span of the roots (which is equal to $i\mathfrak t^*$ if $G$ is semisimple).
Then we have a decomposition
\begin{equation*}
  \mathfrak g
  = \mathfrak h \oplus \bigoplus_{\alpha \in R_{G,+}} \mathfrak g_\alpha \oplus \bigoplus_{\alpha \in R_{G,-}} \mathfrak g_\alpha
  =: \mathfrak h \oplus \mathfrak n_+ \oplus \mathfrak n_-,
\end{equation*}
where the $\mathfrak n_\pm$ are nilpotent Lie algebras; the corresponding Lie groups $N_\pm \subseteq G$ are called \emphindex{maximal unipotent subgroups}\nomenclature[RN^pm]{$N_\pm, \mathfrak n_\pm$}{maximal unipotent subgroups of $G$ and its Lie algebras}.

\bigskip

Another consequence of the choice of positive roots is the following.
Consider the dual of the adjoint representation of $G$ on $\mathfrak g$, given by
$(\Ad^*(g) \varphi)(X) = \varphi(\Ad(g^{-1})X)$
for all $g \in G$, $\varphi \in \mathfrak g^*$ and $X$ in $\mathfrak g$.
It is not hard to see that its restriction to $K$ preserves the real subspace%
\nomenclature[Rikstar]{$i \mathfrak k^*$}{functionals $\varphi \in \mathfrak g^*$ with $\varphi(i\mathfrak k) \subseteq \RR$}%
\begin{equation*}
  i \mathfrak k^* = \{ \varphi \in \mathfrak g^* : \varphi(i\mathfrak k) \subseteq \RR \}
\end{equation*}
and we shall call it the \emphindex{coadjoint representation}\nomenclature[Radstar]{$\Ad^*$}{coadjoint representation of $K$ on $i \mathfrak k^*$} of $K$ on $i \mathfrak k^*$ (our choice of factor $i$ is somewhat idiosyncratic but will be rather convenient in the sequel).
We may consider $\mathfrak h^* \subseteq \mathfrak g^*$ and $i \mathfrak t^* \subseteq i \mathfrak k^*$ by extending each functional by zero on the root spaces $\mathfrak g_\alpha$.
Then the \emphindex{positive Weyl chamber}%
\nomenclature[Ritstarplus]{$i \mathfrak t^*_+, i \mathfrak t^*_{>0}$}{positive Weyl chamber and its (relative) interior}%
\nomenclature[Rlambda, mu, nu]{$\lambda, \mu, \alpha, \beta, \gamma, \dots$}{elements of $i \mathfrak t^*_+$}
\begin{equation*}
  i\mathfrak t^*_+ = \{ \lambda \in i \mathfrak t^* : \lambda(Z_\alpha) \geq 0 \quad (\forall \alpha \in R_{G,+}) \}
\end{equation*}
is a cross-section for the coadjoint action of $K$.
In other words, each \emphindex{coadjoint orbit} $\Ad^*(K) \varphi$ intersects the positive Weyl chamber in a single point $\lambda \in i \mathfrak t^*_+$.
We write $\calO_{K,\lambda} := \Ad^*(K) \lambda$ for the coadjoint orbit through $\lambda$.%
\nomenclature[RO_Klambda]{$\calO_{K,\lambda}$}{coadjoint $K$-orbit through $\lambda$}
The positive Weyl chamber is a convex cone (pointed if $G$ is semisimple).
Its \emph{(relative) interior}\index{positive Weyl chamber!interior} is
\[i\mathfrak t^*_{>0} = \{ \lambda \in i \mathfrak t^* : \lambda(Z_\alpha) > 0 \quad (\forall \alpha \in R_{G,+}) \}.\]
For any $\lambda \in i \mathfrak t^*_{>0}$, the $K$-stabilizer is the maximal torus $T$, so that $\calO_{K,\lambda} \cong K/T$.

The last piece of structure is the \emphindex{Weyl group} $W_K = N_K(T) / T$, where $N_K(T) = \{ k \in K : kT = Tk \}$ denotes the normalizer of the maximal torus $T \subseteq K$.%
\nomenclature[RW_K]{$W_K$}{Weyl group of $K$}
It is a finite group that acts on $i \mathfrak t^*$.
For any representation $V$, the action of the Weyl group leaves the set of weights invariant.
In particular, the set of roots is left invariant.
The Weyl group acts simply transitively on the set of Weyl chambers obtained from different choices of positive roots.
In particular, every $W_K$-orbit in $i \mathfrak t^*$ has a unique point of intersection with the positive Weyl chamber $i \mathfrak t^*_+$, and there exists a Weyl group element, known as the \emphindex{longest Weyl group element} $w_0$, that exchanges the positive and negative roots and hence sends the ``negative Weyl chamber'' $- i \mathfrak t^*_+$ to $i \mathfrak t^*_+$.
\nomenclature[Rw_0]{$w_0$}{longest Weyl group element}%
More generally, one can define the \emph{length}\index{length of Weyl group element} $l(w)$ of a Weyl group element as the minimal number of certain standard generators required to write $w$, but we will not need this level of generality.\nomenclature[Rl(w)]{$l(w)$}{length of a Weyl group element}
\subsection*{Representation Theory}

Let $\Pi \colon G \rightarrow \GL(V)$ be a finite-dimensional representation of $G$, with infinitesimal representation $\pi \colon \mathfrak g \rightarrow \mathfrak{gl}(V)$ and weight space decomposition $V = \bigoplus_\omega V_\omega$.
A weight vector $\ket\psi \in V_\omega$ is called a \emphindex{highest weight vector} if
$\Pi(N_+) \ket\psi \equiv \ket\psi$, or, equivalently, if $\pi(\mathfrak n_+) \ket\psi = 0$.
The corresponding \emphindex{highest weight} $\omega$ is necessarily \emph{dominant}\index{weight!dominant}, i.e., an element of $\Lambda^*_{G,+} := \Lambda^*_G \cap i \mathfrak t^*_+$.\nomenclature[RLambdaG+]{$\Lambda^*_{G,+}$}{dominant weights}

The fundamental theorem of the representation theory of compact connected Lie groups $K$ asserts that the \emphindex{irreducible representations} of $G$ can be labeled by their \emphindex{highest weight}:
Any irreducible representation contains a highest weight vector $\ket\lambda$, unique up to multiplication by a scalar.
Conversely, for every $\lambda \in \Lambda^*_{G,+}$ there exists a unique irreducible representation $V_{K,\lambda}$ with $\lambda$ as the highest weight.\nomenclature[RV_Glambda]{$V_{G,\lambda}$}{irreducible $G$-representation with highest weight $\lambda$}
The \emph{dual representation}\index{irreducible representations!dual} $V^*_{K,\lambda}$ of an irreducible representation is again irreducible, and its highest weight is $\lambda^* := -w_0 \lambda$, where $w_0$ is the longest Weyl group element as defined above.%
\nomenclature[Rlambdastar]{$\lambda^*$}{highest weight of the dual representation $V_{G,\lambda}^*$}
Like any finite-dimensional representation of $K$, $V_{K,\lambda}$ extends to a \emph{rational} representation $V_{G,\lambda}$ of the algebraic group $G$, i.e., a representation whose matrix elements are given by rational functions on the algebraic group $G$ (that is, by morphisms of algebraic varieties, which is the appropriate notion in this context).
All irreducible rational representations of $G$ can be obtained in this way.\index{irreducible representations!rational} Therefore, the representation theory of $K$ and of $G$ are essentially equivalent.

An arbitrary $G$-representation $V$ can always be equipped with a $K$-invariant inner product (choose an arbitrary inner product and average). In this case, $\Pi(K) \subseteq \U(V)$, and so $\pi(\mathfrak k)$ consist of anti-Hermitian and $\pi(i \mathfrak k)$ of Hermitian operators.
Moreover, $V$ can always be decomposed into irreducible representations and the irreducible representations that occur in $V$
are in one-to-one correspondence with the highest weight vectors in $V$ (up to rescaling).
In particular, the subspace $V^G = \{ v \in V : \Pi(g) v = v \; (\forall g \in G) \}$ of \emph{invariant vectors}\index{invariant vector}\nomenclature[RV^G]{$V^G$}{subspace of invariant vectors in representation $V$} is the sum of all trivial representations that occur in $V$.

\subsection*{The General Linear and Unitary Groups}

\begin{table}
  \centering
  \begin{align*}
    \toprule
    G &= \GL(d) = \{ g \in \CC^{d \times d} \text{ invertible} \} \\
    \mathfrak g &= \mathfrak{gl}(d) = \CC^{d \times d} \\
    K &= \U(d) = \{ U \in \CC^{d \times d} : U U^\dagger = U^\dagger U = \Id \} \\
    \mathfrak k &= \mathfrak u(d) = \{ X \in \CC^{d \times d} : X^\dagger = -X \} \\
    i \mathfrak k &= \{ X = X^\dagger \in \CC^{d \times d} \} \\
    \Ad(g)X &= gXg^{-1}, \quad \ad(X)Y = [X,Y] \\
    \midrule
    T_\CC &= \{ \matrix{g_{1,1} && \\ &\ddots& \\ &&g_{d,d}} : g_{1,1}, \dots, g_{d,d} \neq 0 \} \cong (\CC^*)^d \\
    \mathfrak h &= \{ \matrix{z_{1,1} && \\ &\ddots& \\ &&z_{d,d}} : z_{1,1}, \dots, z_{d,d} \in \CC \} \cong \CC^d \\
    T &= \{ \matrix{e^{i x_{1,1}} && \\ &\ddots& \\ && e^{i x_{d,d}}} : x_{1,1}, \dots, x_{d,d} \in \RR \} \cong \U(1)^d \\
    \mathfrak t &= \{ \matrix{i x_{1,1} && \\ &\ddots& \\ && i x_{d,d}} : x_{1,1}, \dots, x_{d,d} \in \RR \} \cong i\RR^d \\
    i \mathfrak t &\cong \RR^d \\
    i \mathfrak t^* &= \{ \omega : H \mapsto \sum_{j=1}^d \omega_j H_{j,j} : \omega_1, \dots, \omega_j \in \RR \} \cong \RR^d \\
    \Lambda^*_G &= \{ \omega : H \mapsto \sum_{j=1}^d \omega_j H_{j,j} : \omega_1, \dots, \omega_d \in \ZZ \} \cong \ZZ^d \\
    \midrule
    R_G &= \{ \alpha_{ij} : i \neq j \} \text{ where } \alpha_{ij}(H) = H_{i,i} - H_{j,j} \\
    \mathfrak g_{ij} &= \CC E_{ij} \text{ where } E_{ij} = \ketbra i j \\
    X_{ij} &= \ketbra i j + \ketbra j i \\
    Y_{ij} &= i (\ketbra j i - \ketbra i j) \\
    Z_{ij} &= \proj i - \proj j \\
    \bottomrule
  \end{align*}
  \caption[The general linear and unitary groups]{Summary of the structure theory of the general linear and unitary group, and of some important representations.}
  \label{tab:onebody/unitary group}
\end{table}

\begin{table}
  \captionsetup{list=no}
  \ContinuedFloat
  \begin{align*}
    \toprule
    R_{G,+} &= \{ \alpha_{ij} : i < j \} \\
    R_{G,-} &= \{ \alpha_{ij} : i > j \} \\
    \mathfrak n_+ &= \{ \matrix{0 & * & * \\ & \ddots & * \\ & & 0} \} \\
    N_+ &= \{ \matrix{1 & * & * \\ & \ddots & * \\ & & 1} \} \\
    \mathfrak n_- &= \{ \matrix{0 &  &  \\ * & \ddots & \\ *&* & 0} \} \\
    N_- &= \{ \matrix{1 &  &  \\ * & \ddots & \\ *&* & 1} \} \\
    \midrule
    i \mathfrak k^* \! &= \{ \tr X (-) : X = X^\dagger \} \cong i \mathfrak k \\
    \calO_{K,\lambda} &\cong \{ X = X^\dagger \in \CC^{d \times d} : \spec X = \lambda \} \\
    i\mathfrak t^*_+ &\cong \{ \lambda \in \RR^d : \lambda_1 \geq \dots \geq \lambda_d \} \\
    i\mathfrak t^*_{>0} &\cong \{ \lambda \in \RR^d : \lambda_1 > \dots > \lambda_d \} \\
    W_K &= S_d \text{ (acts by permuting diagonal entries)} \\
    w_0 &= (1 \; d) (2 \; d\!-\!1) \dots = \matrix{1 & \dots & d \\ d & \dots & 1} \\
    (-1)^{l(w)} &= \sign w \\
    \Lambda^*_{G,+} &\cong \{ \lambda \in \ZZ^d : \lambda_1 \geq \dots \geq \lambda_d \} \\
    &\ni {\yng(1), \yng(2), \yng(1,1), \dots} \text{ (Young diagrams)} \\
    \lambda^* &= (-\lambda_d, \dots, -\lambda_1) \\
    \midrule
    V^d_\lambda &\hphantom{=} \text{irreducible representation with highest weight $\lambda$} \\  %
    V^d_{(k,0,\dots,0}) &= V^d_{\Yboxdim{5pt}\yng(1)\dots\yng(1)\Yboxdim{8pt}} = \Sym^k(\CC^d) \text{ symmetric subspace}\\
    V^d_{(\underbrace{\scriptstyle 1,\dots,1}_{k},0,\dots,0}) &= V^d_{\left.\Yboxdim{5pt}\begin{array}{c}\yng(1) \\[-0.5em] \scriptscriptstyle\vdots \\[-0.5em] \vspace{-0.12cm}\yng(1)\end{array}\Yboxdim{8pt}\right\}k} = \Alt^k(\CC^d) \text{ antisymmetric subspace} \\
    \bottomrule
  \end{align*}
  \caption{Summary of the structure theory of the general linear and unitary group, and of some important representations (cont.).}
\end{table}

The \emphindex{general linear group} $G = \GL(d)$ of invertible $d \times d$-matrices is a connected reductive algebraic group, with Lie algebra $\mathfrak g = \mathfrak{gl}(d)$ the space of complex $d \times d$-matrices. The Lie bracket is the usual commutator, $[X,Y] := XY-YX$, and the exponential map is the usual matrix exponential.%
\nomenclature[RGL(d)]{$\GL(d)$, $\mathfrak{gl}(d)$}{general linear group and its Lie algebra}
The \emphindex{unitary group} $K = \U(d)$, whose elements are unitary $d \times d$-matrices, is a maximal compact subgroup, with Lie algebra $\mathfrak k =  \mathfrak u(d)$ the space of \emph{anti-}Hermitian matrices.%
\nomenclature[RU(d)]{$\U(d)$, $\mathfrak{u}(d)$}{unitary group and its Lie algebra}
Thus $i \mathfrak k$ is the set of Hermitian matrices, which we may identify with $i \mathfrak k^*$ by using the Hilbert--Schmidt inner product.

The subgroup of diagonal unitary matrices is a maximal torus $T \subseteq \U(d)$ and can be identified with $\U(1)^d$.
Likewise, its Lie algebra $i \mathfrak t$ can be identified with $i \RR^d$.
Thus its complexification $\mathfrak h \cong \CC$ consists of general diagonal matrices and the corresponding Lie group $T_\CC \cong (\CC^*)^d$ is the subgroup of diagonal invertible matrices in $\GL(d)$.
Its irreducible representations are all of the form
\begin{equation*}
  \Pi \colon T_\CC \cong (\CC^*)^d \rightarrow \CC^*, \quad
\matrix{g_{1,1} && \\ & \ddots & \\ && g_{d,d}} \mapsto g_{1,1}^{\omega_1} \dotsm g_{d,d}^{\omega_d}
\end{equation*}
for integers $\omega_1, \dots, \omega_d \in \ZZ$.
The corresponding weight is $\omega(H) = \sum_{j=1}^d w_j H_{j,j}$.
In this way, the weight lattice $\Lambda^*_G \subseteq i \mathfrak t^*$ can be identified with the lattice $\ZZ^d \subseteq \RR^d$.

The roots of $\GL(d)$ are the functionals $\alpha_{ij}(H) = H_{i,i} - H_{j,j}$\nomenclature[Ralpha_ij]{$\alpha_{i,j}$}{root of $\GL(d)$ and $\SL(d)$}, and the corresponding root spaces $\mathfrak g_{ij}$ are spanned by the elementary matrices $E_{ij} = \ketbra i j$ that have a single non-zero entry in the $i$-th row and $j$-th column. Indeed, we have that $[H, \ketbra i j] = (H_{i,i} - H_{j,j}) \ketbra i j$ for any diagonal matrix $H = \sum_j H_{j,j} \proj j \in \mathfrak h$.
A choice of positive roots is given by those roots $\alpha_{ij}$ with $i < j$.
Thus the nilpotent Lie algebras $\mathfrak n_{\pm}$ consist of the strictly upper and lower triangular matrices, respectively.
The corresponding unipotent subgroups $N_\pm$ are upper and lower triangular with ones on the diagonal.

The adjoint action is by conjugation.
If we identify $i \mathfrak k \cong i \mathfrak k^*$ then the coadjoint orbits of $K = \U(d)$ consist of Hermitian matrices with fixed spectrum.
Then the positive Weyl chamber $i \mathfrak t^*_+$ can be identified with the set of Hermitian diagonal matrices whose entries are weakly decreasing, or with their spectra $\lambda_1 \geq \dots \geq \lambda_d$.
The assertion that $i \mathfrak t^*_+$ is a cross-section for the coadjoint action of $K$ on $i \mathfrak k^*$ corresponds to the plain fact that any Hermitian matrix can be diagonalized by a unitary.
Its interior $i \mathfrak t^*_{>0}$ then corresponds to the set of non-degenerate spectra $\lambda_1 > \dots > \lambda_d$.
The claim that the $K$-stabilizer of any $\lambda \in i \mathfrak t^*_{>0}$ is $T$ amounts to the fact that the only unitaries that commute with a diagonal matrix with non-degenerate spectrum are the diagonal unitary matrices.

Finally, the Weyl group can be identified with the symmetric group $S_d$; it acts on $i \mathfrak t$ by permuting diagonal entries.
The length $l(w)$ of a permutation $w$ is the number of transpositions $(i \; i\!+\!1)$ required to write the permutation $w \in S_d$, and the longest Weyl group element $w_0$ is the  ``order-reversing permutation'' which sends any $-\lambda \in -i \mathfrak t^*_+$ to $\lambda^* := (-\lambda_d, \dots, -\lambda_1) \in i \mathfrak t^*_+$.

\bigskip

We now turn to the representation theory of the general linear and unitary groups.
The dominant weights in $\Lambda^*_{G,+} = \Lambda^*_G \cap i \mathfrak t^*_+$ can be identified with integers vectors in $\lambda \in \ZZ^d$ which have weakly decreasing entries, $\lambda_1 \geq \dots \geq \lambda_d$.
They label the \emph{irreducible representations of $\U(d)$ and of $\GL(d)$}\index{irreducible representations!$\U(d)$ and $\GL(d)$},
which we abbreviate by $V^d_\lambda$.%
\nomenclature[RV^d_lambda]{$V^d_\lambda$}{irreducible representations of $\GL(d)$, $\U(d)$, $\SL(d)$, $\SU(d)$}
Of particular interest for us will be those dominant weights for which all entries are non-negative (i.e., $\lambda_d \geq 0$).
We will think of them as \emphindex{Young diagrams}, i.e., arrangements of $\abs\lambda := \sum_j \lambda_j$ \emph{boxes}\index{Young diagram!boxes} into $\max \{ j : \lambda_j > 0 \}$ \emph{rows}\index{Young diagram!rows}, where we put $\lambda_j$ boxes into the $j$-th row.
\nomenclature[Rlambda, mu, nuY]{$\lambda, \mu, \alpha, \beta, \gamma, \dots$}{Young diagrams}%
\nomenclature[Rlambda abs]{$\abs\lambda$}{number of boxes of a Young diagram}%
For example,
\begin{equation*}
  (3,1,0,0), (3,1,0), (3,1) \equiv \yng(3,1)
  \quad\text{and}\quad
  (1,1,1) \equiv \yng(1,1,1).
\end{equation*}
The first diagram in the example has two rows and four boxes, while the second diagram has three boxes as well as rows.
The number of rows is also called the \emph{height}\index{Young diagram!height} of a Young diagram $\lambda$.
We write $\lambda \vdash_d k$ for a Young diagram with $k$ boxes and at most $d$ rows.%
\nomenclature[Rlambda vdash_d k]{$\lambda \vdash_d k$}{Young diagram with $k$ boxes and at most $d$ rows}

Mathematically, the irreducible representations corresponding to Young diagrams $\lambda \vdash_d k$ are the \emph{polynomial} irreducible representations of $\GL(d)$\index{irreducible representations!polynomial}, i.e., those representations whose matrix elements are given by polynomial functions on $\GL(d)$.
The number of boxes $k$ corresponds to the degree of the polynomials, or, equivalently, to the power by which scalar multiples of the identity matrix in $\GL(d)$ act: $\Pi(\lambda \Id) = \lambda^k \Id$.
For example, the determinant representation $g \mapsto \det(g)$ is a polynomial representation with Young diagram $(1,\dots,1)$ and degree $d$, while its inverse $g \mapsto 1/\det(g)$ is only a rational representation.
For any irreducible representation $V^d_\lambda$ and $k \in \ZZ$, $V^d_\lambda \otimes {\det}^m$ is an irreducible representation with highest weight $\lambda + (m,\dots,m)$.
It follows that any rational representation can be made polynomial be tensoring with a sufficiently high power of the determinant; conversely, any rational representation can be obtained from a polynomial one by tensoring with a sufficiently high inverse power of the determinant.
In \autoref{tab:onebody/unitary group} we summarize the structure theory of the general linear and unitary groups and we also list some important irreducible representations that we will use in the sequel.

\bigskip

We briefly discuss the \emphindex{special linear group} $\SL(d) = \{ g \in \GL(d) : \det g = 1 \}$ and its maximal compact subgroup $\SU(d) = \SL(d) \cap \U(d)$.%
\nomenclature[RSL(d)]{$\SL(d)$, $\mathfrak{sl}(d)$}{special linear group and its Lie algebra}%
\nomenclature[RSU(d)]{$\SU(d)$, $\mathfrak{su}(d)$}{special unitary group and its Lie algebra}
Since $\SL(d)$ and $\GL(d)$ only differ in their center, the basic structure theory is essentially unchanged apart from the fact that the Lie algebras and their duals are obtained by projecting to the traceless part.
In particular, the set of roots is unchanged.
Moreover, any irreducible $\SL(d)$-representation is the restriction of an irreducible $\GL(d)$-representation $V^d_\lambda$; this restriction depends only on the traceless part of the highest weight, i.e., on the differences of rows $\lambda_j - \lambda_{j+1}$, since the determinant representation is trivial for matrices in $\SL(d)$\index{irreducible representations!$\SU(d)$ and $\SL(d)$}.
For example, the \emph{irreducible representations of $\SU(2)$ and of $\SL(2)$}\index{irreducible representations!$\SU(2)$ and $\SL(2)$} can all be obtained by restricting an irreducible $\GL(2)$-representation $V^2_{(k,0)} = \Sym^{k} \CC^2$; the half-integer $j = k/2$ is known as the \emphindex{spin} of the irreducible representation of $\SL(2)$.
The same representation is obtained by restricting $V^2_{(k+1,1)}$, $V^2_{(k+2,2)}$, etc.

\bigskip

We finally consider the general linear group $\GL(\calH)$ and the unitary group $\U(\calH)$ of an arbitrary finite-dimensional Hilbert space $\calH$.
By choosing an orthonormal basis, we may always identify $\calH$ with $\CC^d$, where $d$ is the dimension of $\calH$.
Then $\GL(\calH)$ is identified with $\GL(d)$, $\U(\calH)$ is identified with $\U(d)$, and the above theory is applicable.%
\nomenclature[RGL(H)]{$\GL(\calH)$, $\mathfrak{gl}(\calH)$}{general linear group of Hilbert space $\calH$, and its Lie algebra}%
\nomenclature[RSL(H)]{$\SL(\calH)$, $\mathfrak{sl}(\calH)$}{special linear group of Hilbert space $\calH$, and its Lie algebra}%
\nomenclature[RU(H)]{$\U(\calH)$, $\mathfrak u(\calH)$}{unitary group of Hilbert space $\calH$, and its Lie algebra}%
\nomenclature[RSU(H)]{$\SU(\calH)$, $\mathfrak su(\calH)$}{special unitary group of Hilbert space $\calH$, and its Lie algebra}

\section{Geometric Invariant Theory}
\label{sec:onebody/git}

In this section, we introduce some \emphindex{geometric invariant theory}, which is a powerful mathematical framework for studying the one-body quantum marginal problem and its variants. We refer to \cite{Kirwan84, MumfordFogartyKirwan94, Brion10, Woodward10, VergneBerline11, GeorgoulasRobbinSalamon13} for further material.

We start with a basic observation that motivates the general approach:
Consider the representation $\Pi$ of the group $G = \SL(\calH_1) \times \dots \times \SL(\calH_n)$ on the Hilbert space $\calH = \bigotimes_{k=1}^n \calH_k$ by tensor products.
The Lie algebra $\mathfrak g = \mathfrak{sl}(\calH_1) \oplus \dots \oplus \mathfrak{sl}(\calH_n)$ acts by $\pi(X_1, \dots, X_n) = \sum_{k=1}^n \Id^{\otimes k-1} \otimes X_k \otimes \Id^{\otimes n-k}$.
Suppose that $P$ is a non-constant $G$-invariant homogeneous polynomial on $\calH$ such that $P(\ket\psi) \neq 0$ for some vector $\ket\psi \in \calH$.
Let $\ket{\psi'} \in \overline{\Pi(G) \ket\psi}$ be a vector of minimal length in the closure of the $G$-orbit through $\ket\psi$ (cf.\ \autoref{fig:onebody/kempfness}).
This vector is non-zero, since $P(\ket{\psi'}) = P(\ket\psi) \neq 0$ by $G$-invariance, while $P(0) = 0$ by homogeneity.
Since $\ket{\psi'}$ is a vector of minimal length, its norm squared does not change in first order if we move infinitesimally into an arbitrary tangent direction of its orbit.
The same is of course true for the unit vector $\ket{\psi''} := \ket{\psi'} / \norm{\psi'}$.
But all tangent vectors are generated by the action of the Lie algebra $\mathfrak g$.
That is, for all tuples of traceless Hermitian operators $X = (X_1, \dots, X_n) \in i \mathfrak k \subseteq \mathfrak g$ and using \eqref{eq:onebody/rdm} we find that
\begin{equation}
\label{eq:onebody/hands-on kempf-ness}
  0 = \partial_{t=0} \norm{\exp(\pi(X) t) \ket{\psi''}}^2 = 2 \braket{\psi'' | \pi(X) | \psi''} = \sum_{k=1}^n \tr \rho''_k X_k,
\end{equation}
where $\rho'' = \proj{\psi''}$.
In other words, the traceless part of each one-body reduced density matrix $\rho''_k$ vanishes.
We conclude that each one-body reduced density matrix $\rho''_k$ is proportional to the identity matrix.
In the language of quantum information theory, the quantum state $\rho''$ is \emphindex{locally maximally mixed}.
This way of reasoning establishes a first link between the existence of invariants and of pure states with prescribed marginals.
In the following we will see that the above argument can be generalized to arbitrary one-body marginals and turned into an equivalence that completely characterizes the one-body quantum marginal problem and its variants.
We follow along the lines of the exposition in \cite{VergneBerline11} and take some ideas from \cite{NessMumford84, Brion87}.

\begin{figure}
  \begin{center}
    \includegraphics[height=3cm]{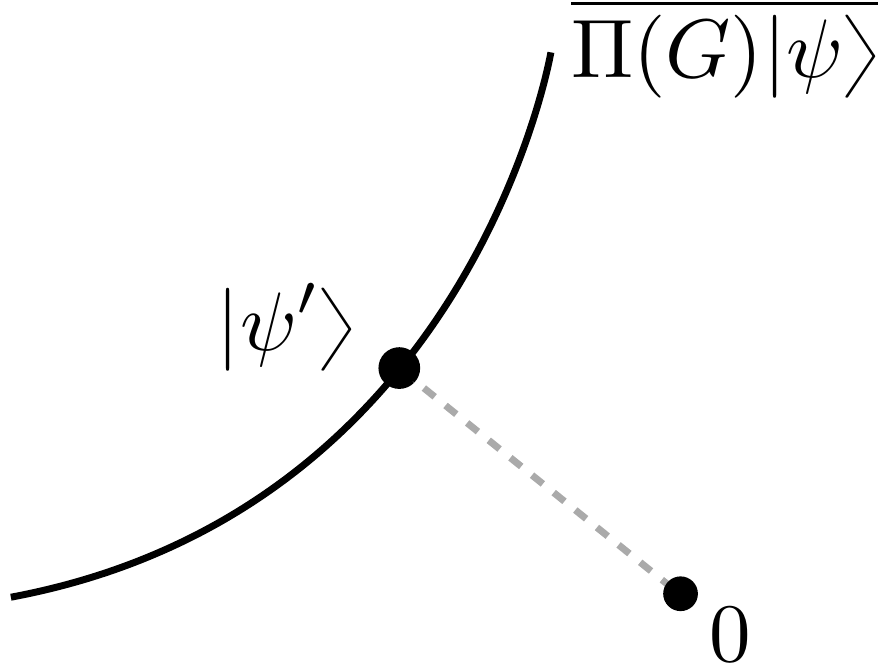}
  \end{center}
  \caption[Illustration of \autoref{lem:onebody/kempfness}]{In the closure of an orbit $\Pi(G) \ket\psi$ for the complexified group $G$, any non-zero vector of minimal norm $\ket{\psi'}$ gives rise to a pure state with maximally mixed marginals. \autoref{lem:onebody/kempfness} shows that a converse is also true.}
  \label{fig:onebody/kempfness}
\end{figure}

\subsection*{Projective Space}

Mathematically, the set of pure states on a Hilbert space $\calH$,
\[\PP(\calH) = \{ \rho = \proj\psi : \braket{\psi | \psi} = 1 \},\]
is known as a \emphindex{complex projective space}.%
\nomenclature[TX]{$\PP(\calH)$}{complex projective space of pure states on $\calH$}
It is a smooth submanifold of the real vector space of Hermitian operators on $\calH$.
The unitary group $\U(\calH)$ acts transitively by conjugation, $U \proj\psi U^\dagger$, so that the tangent space at a point $\rho = \proj\psi$ is spanned by the tangent vectors $X_\rho = [X, \rho]$ for all $X \in \mathfrak u(\calH)$.%
\nomenclature[TX_rho]{$X_\rho$}{tangent vector at $\rho$ generated by infinitesimal action of $X$}
Since the $X$ are anti-Hermitian, it is easy to verify that the tangent space can be equivalently written as%
\nomenclature[TT_rhoP(H)]{$T_\rho \PP(\calH)$}{tangent space of $\PP(\calH)$ at $\rho$}%
\nomenclature[TV]{$V, W, \dots$}{tangent vectors}
\begin{equation}
\label{eq:onebody/tangent space}
  T_\rho \PP(\calH) = \{ V = \ketbra\phi\psi + \ketbra\psi\phi : \ket\phi \in \calH, \braket{\phi | \psi} = 0 \} \cong \ket\psi^{\perp} \subseteq \calH.
\end{equation}
In this way, the tangent space acquires a complex structure, which can be written as\nomenclature[TJ]{$J[V]$}{complex structure of projective space}
\[J[V] = i [V, \rho] = i (\ketbra\phi\psi - \ketbra\psi\phi),\]
as well as a Hermitian inner product.
The real part of the inner product is a Riemannian metric, $g(V, W) = \tr V W$,%
\nomenclature[Tg]{$g(V, W)$}{Riemannian metric of projective space}
and its imaginary part is the \emphindex{Fubini--Study symplectic form}\index{symplectic form!Fubini--Study}%
\nomenclature[Tomega]{$\omega(V, W)$}{Fubini--Study symplectic form of projective space}
\begin{equation}
\label{eq:onebody/fubini-study}
  \omega(V, W) = g(J[V], W) = \tr J[V] W = -i \tr \rho [V, W].
\end{equation}
For tangent vectors generated by elements of the Lie algebra $\mathfrak u(\calH)$, this becomes
\begin{equation}
\label{eq:onebody/symplectic form}
  \omega(X_\rho, Y_\rho) =
  -i \tr \rho [X_\rho, Y_\rho] =
  i \tr \rho [X, Y]
  \quad (\forall X, Y \in \mathfrak u(\calH)).
\end{equation}
The action of $\U(\calH)$ can be extended to its complexification, the general linear group $\GL(\cal H)$ by the formula%
\nomenclature[T\cdot]{$g \cdot \rho$}{action of $\GL(\calH)$ on projective space $\PP(\calH)$}
\begin{equation}
\label{eq:onebody/complexified action}
  g \cdot \proj\psi := g \proj\psi g^\dagger / \braket{\psi | g^\dagger g | \psi}.
\end{equation}
The tangent vector generated by a Hermitian matrix $iX \in i \mathfrak u(\calH)$ is then given by
$(iX)_\rho := \{i X, \rho\} - 2 (\tr iX \rho) \rho$, with $\{A,B\} := AB+BA$ the \emphindex{anti-commutator}.
It is easily verified that $(iX)_\rho = J[X_\rho]$.
Thus the complex structures of projective space and of the Lie group are compatible with each other.

\subsection*{The Moment Map}

Let $K$ be a compact connected Lie group, $G$ its complexification, and $\Pi \colon G \rightarrow \GL(\calH)$ a representation on a Hilbert space $\calH$ with a $K$-invariant inner product; we denote the infinitesimal representation by $\pi \colon \mathfrak g \rightarrow \mathfrak{gl}(\calH)$.
Then $K$ and $G$ also act on the projective space $\PP(\calH)$, and as in \eqref{eq:onebody/complexified action} we will denote this action by $g \cdot \rho$.\nomenclature[T\cdot]{$g \cdot \rho$}{action of $G$ on projective space $\PP(\calH)$ of $G$-representation $\calH$}
The following is our basic object of interest:

\begin{dfn}
  The \emphindex{moment map} for the action of $K$ on $\PP(\calH)$ is defined by%
  \nomenclature[Tmu_K]{$\mu_K$}{moment map for the $K$-action}
  \begin{equation}
    \label{eq:onebody/moment map}
    \mu_K \colon \PP(\calH) \rightarrow i \mathfrak k^*,
    \quad
    (\mu_K(\rho), X) := \tr \rho \, \pi(X)
  \end{equation}
  for all $\rho \in \PP(\calH)$ and $X \in \mathfrak g$.
  Here and in the following, we write $(\varphi, X) = \varphi(X)$ for the pairing between $\mathfrak g^*$ and $\mathfrak g$.%
  \nomenclature[R(varphi,X)]{$(\varphi,X)$}{dual pairing of $\mathfrak g^*$ and $\mathfrak g$}
\end{dfn}

Unfortunately, there are as many conventions for the moment map as there are textbooks on the subject.
For the representation $\calH$ that we considered at the beginning of this section, the moment map maps pure states onto the functionals evaluating (traceless) local observables; cf.\ \eqref{eq:onebody/hands-on kempf-ness}.
Thus the one-body quantum marginal problem is equivalent to characterizing the image of a moment map.
In \autoref{sec:onebody/consequences} we will explain this connection in more detail.

A crucial property of the moment map is the following relation between the differential of its components and the tangent vector $X_\rho = [\pi(X), \rho]$ generated by the infinitesimal action of the Lie algebra of the compact group:%
\nomenclature[Tdf]{$df\big|_x$}{differential of smooth map $f$ at point $x$\nomnorefpage}
\begin{equation}
\label{eq:onebody/moment map property}
    d(\mu_K, iX)\big|_\rho %
  = \omega(X_\rho, -)
  \quad (\forall X \in \mathfrak k)
\end{equation}
This follows readily from \eqref{eq:onebody/symplectic form}.
Since the Fubini--Study form is non-degenerate, an immediate consequence is that the component \eqref{eq:onebody/moment map property} of the differential vanishes if and only if $X_\rho = 0$. Dually, we find that the range of the differential of the moment map at any point $\rho$ is given by the annihilator of $\mathfrak k_\rho := \{ X \in \mathfrak k : X_\rho = [\pi(X), \rho] = 0 \}$%
\nomenclature[TK_rho]{$K_\rho$, $\mathfrak k_\rho$}{$K$-stabilizer of $\rho$, and its Lie algebra}%
, the Lie algebra of the $K$-stabilizer of $\rho$ \cite{GuilleminSternberg82}:
\begin{equation}
\label{eq:onebody/moment map range}
    d\mu_K\big(T_\rho \PP(\calH))
  = \{ \lambda \in i \mathfrak k^* : \lambda\big|_{i \mathfrak k_\rho} \equiv 0 \}
\end{equation}
This holds even if we restrict to the tangent space of the $G$-orbit through $\rho$, which is a complex submanifold and in particular symplectic (i.e., the Fubini--Study form $\omega$ remains non-degenerate).
A second important property is that the moment map is $K$-equivariant: Indeed, for all $\rho \in \PP(\calH)$, $g \in K$ and $X \in i\mathfrak k$ we have that
\begin{align*}
  &(\mu_K(g \cdot \rho), X)
  = \tr \pi(g) \rho \, \pi(g^{-1}) \pi(X)
  = \tr \rho \, \pi(g^{-1}) \pi(X) \pi(g) \\
  = &\tr \rho \, \pi(\Ad(g^{-1})X)
  = (\mu_K(\rho), \Ad(g^{-1})X)
  = (\Ad^*(g)\mu_K(\rho), X).
\end{align*}
Thus its image consists of a union of coadjoint orbits, and so is characterized by its intersection with the positive Weyl chamber $i \mathfrak t^*_+$.
In the context of the quantum marginal problem, this amounts to our previous observation that its solution depends only on the eigenvalues of the one-body reduced density matrices.

The notion of a moment map can be defined more generally in symplectic geometry, and originates in Hamiltonian mechanics. For example, the function sending a point in the classical phase space $\RR^{6n}$ to the total linear and angular moment\emph{um} is a moment map in this more general sense for the canonical action of the Euclidean group (see, e.g., \cite{GuilleminSternberg84, Cannas08}).

\bigskip

Our basic argument above showed that any non-zero vector of minimal length in an orbit closure has expectation value zero with respect to all local traceless observables (i.e., the image under the moment map is zero).
The following result by Kempf and Ness shows a converse \cite{KempfNess79} (cf.\ \cite{Kempf78, NessMumford84} and \autoref{fig:onebody/kempfness} for an illustration).

\begin{lem}[Kempf--Ness]
\label{lem:onebody/kempfness}
  Let $\rho = \proj\psi \in \PP(\calH)$ with $\mu_K(\rho) = 0$.
  Then the $G$-orbit through $\ket\psi$ is closed (so in particular contains a non-zero vector of minimal length).\footnote{In fact, the $K$-orbit through $\ket\psi$ consists of the vectors of minimal length \cite{KempfNess79} (cf.\ the discussion in \autoref{sec:slocc/distillation}).}
\end{lem}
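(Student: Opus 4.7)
The strategy is to show that under the hypothesis $\mu_K(\rho) = 0$, the function $f(g) = \|\pi(g)\ket\psi\|^2$ attains a global minimum at $g = e$ on $G$, and then to leverage this minimum property together with the polar decomposition $G = K \cdot \exp(i\mathfrak k)$ to deduce closedness.

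\textbf{Step 1 (Convexity).} For every $H \in i\mathfrak k$, the operator $\pi(H)$ is Hermitian with respect to the $K$-invariant inner product on $\calH$. Diagonalizing $\pi(H) = \sum_j \lambda_j \proj{v_j}$ and expanding $\ket\psi = \sum_j c_j \ket{v_j}$, one sees
\begin{equation*}
  f_H(t) := \log \norm{\exp(t\pi(H))\ket\psi}^2 = \log \sum_j \abs{c_j}^2 e^{2t\lambda_j},
\end{equation*}
which is convex in $t$, and strictly convex unless the $\lambda_j$ with $c_j \neq 0$ all coincide.

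\textbf{Step 2 (Critical point and minimum norm).} Differentiating at $t=0$ gives $f_H'(0) = 2\braket{\psi | \pi(H) | \psi} = 2(\mu_K(\rho), H) = 0$ by hypothesis. By convexity, $t = 0$ is a global minimum, so $\norm{\exp(\pi(H))\ket\psi} \geq \norm\psi$ for all $H \in i\mathfrak k$. Combining with the polar decomposition $g = k \exp(H)$ and the unitarity of $\pi(k)$ yields $\norm{\pi(g)\ket\psi} \geq \norm\psi$ for every $g \in G$, so $\ket\psi$ is a vector of minimal norm in its $G$-orbit.

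\textbf{Step 3 (Closedness).} Suppose $\pi(g_n)\ket\psi \to \ket\phi$. Write $g_n = k_n \exp(H_n)$ with $k_n \in K$ and $H_n \in i\mathfrak k$. By compactness of $K$, pass to a subsequence so that $k_n \to k$; it then suffices to show that $\exp(\pi(H_n))\ket\psi$ converges to a vector in the $G$-orbit. If $\{H_n\}$ is bounded, extract a convergent subsequence $H_n \to H_\infty$ and conclude $\ket\phi = \pi(k)\exp(\pi(H_\infty))\ket\psi \in G \cdot \ket\psi$. The remaining case $\norm{H_n} \to \infty$ is the main obstacle and must be excluded: after normalizing $\hat H_n = H_n / \norm{H_n} \to \hat H$ with $\norm{\hat H} = 1$, strict convexity of $f_{\hat H}$ combined with the minimum-norm property of Step~2 would force $\norm{\exp(\pi(H_n))\ket\psi} \to \infty$, contradicting its boundedness --- \emph{unless} $f_{\hat H}$ fails to be strictly convex, i.e.\ $\pi(\hat H)\ket\psi = \mu \ket\psi$. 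In that degenerate case the eigenvalue must be $\mu = (\mu_K(\rho), \hat H) = 0$, so $\hat H$ lies in the complexified stabilizer of $\ket\psi$. One then decomposes $H_n = s_n \hat H + H_n'$ with $H_n'$ transverse to the stabilizer direction, and iterates (or uses an induction on $\dim \calH$ or on the codimension of the stabilizer) to reduce to the previous case. This rescaling/induction step is the technically delicate heart of the argument; all other steps are essentially formal consequences of convexity and polar decomposition.
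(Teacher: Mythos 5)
Your Steps 1 and 2 are correct, but they only establish the minimal-norm property of $\ket\psi$, which is the easy direction (and is essentially the footnote's content). The actual assertion of the lemma is closedness of the $G$-orbit, and that is claimed in your Step 3, which contains a genuine gap exactly where you flag it. Two specific problems: (i) to force $\norm{\exp(\pi(H_n))\ket\psi}\to\infty$ you cannot simply invoke strict convexity of $f_{\hat H}$ for the limiting direction $\hat H$ — the relevant quantity is $f_{\hat H_n}(\norm{H_n})$ along the varying directions $\hat H_n$, and convexity plus $f'_{\hat H_n}(0)=2(\mu_K(\rho),\hat H_n)=0$ only gives the lower bound $\norm{\psi}$; to get blow-up you need a growth estimate that is uniform in the direction, and this fails precisely as $\hat H_n$ approaches the degenerate directions with $\pi(\hat H)\ket\psi=0$, which is the case you must then treat. (ii) In that degenerate case your proposed reduction via $H_n = s_n\hat H + H_n'$ does not go through as stated, because $\hat H$ and $H_n'$ need not commute, so $\exp(\pi(H_n))\ket\psi$ does not factor as $\exp(\pi(H_n'))\exp(s_n\pi(\hat H))\ket\psi$, and the promised "iteration/induction" is not a routine formality — it is essentially the properness statement that constitutes the real content of the Kempf--Ness theorem. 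As written, the proof of the lemma's conclusion is a sketch, not a proof.

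For comparison, the paper sidesteps this entirely by quoting the Hilbert--Mumford criterion: if the orbit were not closed, a \emph{single} one-parameter subgroup $\exp(\pi(X)t)$, $X\in i\mathfrak k$, already reaches a boundary point $\ket{\psi'}\notin\Pi(G)\ket\psi$. Decomposing $\ket\psi$ into $\pi(X)$-eigenvectors, existence of the limit kills the positive-eigenvalue components, and $\braket{\psi|\pi(X)|\psi}=(\mu_K(\rho),X)=0$ then kills the negative ones, so $\ket\psi$ is fixed by the subgroup — contradicting that the limit leaves the orbit. In other words, the hard compactness/properness input that your direct polar-decomposition argument would have to reprove is packaged in the Hilbert--Mumford criterion. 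If you want to keep your route, you must either carry out the induction on the stabilizer carefully (as in Kempf--Ness's original argument, using geodesic convexity on $G/K$) or cite such a result; otherwise, cite Hilbert--Mumford and argue as above.
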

\begin{proof}
  Suppose for the sake of finding a contradiction that the $G$-orbit through $\ket\psi$ is not closed.
  Then the \emphindex{Hilbert--Mumford criterion} asserts that we can reach a point in the ``boundary'' $\overline{\Pi(G) \ket\psi} \setminus \Pi(G) \ket\psi$ by using a single one-parameter subgroup (e.g., \cite[p.\ 171]{Kraft85}).
  More formally, it states that there exists $X \in i\mathfrak k$ such that
  \begin{equation}
  \label{eq:onebody/kempf-ness limit}
    \lim_{t \rightarrow \infty} \exp(\pi(X) t) \ket\psi = \ket{\psi'} \not\in \Pi(G) \ket\psi.
  \end{equation}
  Let $\ket\psi = \sum_k \ket{\psi_k}$ be the decomposition of $\ket\psi$ into eigenvectors of the Hermitian operator $\pi(X)$, with $\ket{\psi_k}$ an eigenvector with eigenvalue $k$.
  Clearly, $\ket{\psi_k} = 0$ for all $k > 0$, since otherwise the limit \eqref{eq:onebody/kempf-ness limit} cannot exist.
  But then
  \begin{equation*}
    \sum_{k \leq 0} k \braket{\psi_k | \psi_k} = \braket{\psi | \pi(X) | \psi} = (\mu_K(\rho), X) = 0,
  \end{equation*}
  so that $\ket{\psi_k} = 0$ also for all $k < 0$.
  It follows that $\ket{\psi} = \ket{\psi_0}$, i.e.\ $\pi(X) \ket{\psi} = 0$.
  Thus the one-parameter subgroup in fact leaves the vector invariant,
  \begin{equation*}
    \exp(\pi(X) t) \ket\psi \equiv \ket\psi \quad (\forall t).
  \end{equation*}
  This is the desired contradiction to \eqref{eq:onebody/kempf-ness limit}.
\end{proof}

In the following we need to study the image of the moment map not only for the set of all pure states but also for certain subsets of projective space.
In the context of algebraic geometry, it is natural to consider $G$-invariant projective subvarieties, which we define in the following way (e.g., \cite{Hartshorne77}):

\begin{dfn}
\label{dfn:onebody/projective subvariety}
  An \emphindex{affine cone} $\calC \subseteq \calH$ is the common zero set of a family of homogeneous polynomials.\footnote{The terminology is standard and motivated by $\calC$ being closed under multiplication by $\CC$.}%
\nomenclature[TC]{$\calC$}{affine cone}
  The ring of \emphindex{regular functions} $R(\calC) := \bigoplus_{k=0}^\infty R_k(\calC)$ is defined as the ring of polynomials on $\calH$, graded by the degree $k$, where we identify any two polynomials if their difference vanishes on $\calC$.%
\nomenclature[TRC,RkC]{$R(\calC)$, $R_k(\calC)$}{regular functions (of degree $k$) on affine cone $\calC$}
  An affine cone is called \emph{irreducible}\index{affine cone!irreducible} if the ring of regular functions does not have any zero divisors (i.e.\ if for any two $P, Q \in R(\calC)$, $PQ = 0$ implies that $P = 0$ or $Q = 0$).
  Finally, a \emphindex{projective subvariety} of $\PP(\calH)$ is a set of the form
  \begin{equation*}
    \XX = \PP(\calC) := \{ \proj\psi : \ket\psi \in \calC, \braket{\psi | \psi} = 1 \} \subseteq \PP(\calH),
  \end{equation*}
  where $\calC \subseteq \calH$ is an irreducible affine cone.%
\nomenclature[TX,Y]{$\XX, \YY, \dots$}{projective subvarieties}%
\nomenclature[TX,Y]{$\PP(\calC)$}{projective subvariety corresponding to an affine cone $\calC$}
  In this case, we also write $R(\XX) := R(\calC)$.%
\nomenclature[TRX,RkX]{$R(\XX)$, $R_k(\XX)$}{regular functions (of degree $k$) on affine cone of projective subvariety $\XX$}
  We say that $\XX$ is \emph{$G$-invariant}\index{projective subvariety!$G$-invariant} if $\Pi(G) \calC \subseteq \calC$ or, equivalently, if $G \cdot \XX \subseteq \XX$.
\end{dfn}

By definition, projective subvarieties are always closed in the \emphindex{Zariski topology} of projective space, whose closed sets are common zero sets of families of homogeneous polynomials.
Zariski-closed sets are also closed in the usual topology of projective space as a manifold, but the converse is not necessarily true.
However, orbits $G \cdot \rho = \{ g \cdot \rho : g \in G \}$ of algebraic group actions are constructible and hence the usual closure and the Zariski closure coincide (e.g., \cite[Proposition 8.3]{Humphreys98} and \cite[Lemma 12.5.3]{SommeseWampler05}).
In fact, any \emphindex{orbit closure} $\overline{G \cdot \rho}$ is a $G$-invariant projective subvariety of $\PP(\calH)$ in the sense just defined (since our groups $G$ are connected).%
\nomenclature[TG \cdot rho1]{$G \cdot \rho$}{$G$-orbit through $\rho$}%
\nomenclature[TG \cdot rho2]{$\overline{G \cdot \rho}$}{closure of $G$-orbit through $\rho$}

\bigskip

For any $G$-invariant projective subvariety $\XX$, the ring of regular functions $R(\XX)$ becomes a $G$-representation in a natural way: For each $g \in G$ and $P \in R(\XX)$, we may define $\Pi(g)P$ by $(\Pi(g) P)(\ket\psi) := P(\Pi(g^{-1}) \ket\psi)$.
In the language that we have just introduced, the basic argument given at the beginning of the section can be succinctly summarized as follows (together with its converse):

\begin{lem}
\label{lem:onebody/mumford origin}
  For any $G$-invariant projective subvariety $\XX \subset \PP(\calH)$, we have
  \begin{equation*}
    R(\XX)^G \neq \CC \quad\Leftrightarrow\quad 0 \in \mu_K(\XX),
  \end{equation*}
  where $\CC$ stands for the constant functions.
\end{lem}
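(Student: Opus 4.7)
The proof naturally splits into two directions, and both halves reprise ideas that have already appeared in the excerpt.

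For the direction $0 \in \mu_K(\XX) \Rightarrow R(\XX)^G \neq \CC$, I would pick $\rho = \proj\psi \in \XX$ with $\mu_K(\rho) = 0$ and apply the Kempf--Ness lemma (\autoref{lem:onebody/kempfness}) to conclude that the orbit $\Pi(G)\ket\psi$ is closed in $\calH$. Since $\ket\psi \neq 0$, the two $G$-invariant closed subsets $\{0\}$ and $\Pi(G)\ket\psi$ of $\calH$ are disjoint. Because $G$ is reductive, any two disjoint closed $G$-invariant sets can be separated by a $G$-invariant polynomial; this yields a $G$-invariant $P$ on $\calH$ with $P(0) = 0$ and $P(\ket\psi) \neq 0$. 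Restricting to $\calC$ gives a non-constant element of $R(\XX)^G$. (One can also pass to a homogeneous component of positive degree, since $G$ preserves the grading by degree.)

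For the converse $R(\XX)^G \neq \CC \Rightarrow 0 \in \mu_K(\XX)$, I would essentially formalize the argument at the opening of \autoref{sec:onebody/git}. Choose a non-constant invariant $P \in R(\XX)^G$; by decomposing into homogeneous components, we may assume $P$ is homogeneous of some degree $k \geq 1$. Pick $\ket\psi \in \calC$ with $P(\ket\psi) \neq 0$, and consider the orbit closure $\overline{\Pi(G)\ket\psi}$, which lies in the closed $G$-invariant cone $\calC$. Let $\ket{\psi'}$ be a vector of minimal length in this closure. Then $\ket{\psi'} \neq 0$: by continuity and $G$-invariance, $P(\ket{\psi'}) = P(\ket\psi) \neq 0$, whereas $P(0) = 0$ by homogeneity. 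Setting $\ket{\psi''} := \ket{\psi'}/\norm{\psi'}$ and $\rho'' := \proj{\psi''}$, minimality of the norm on the $G$-orbit gives, for every $X \in i\mathfrak k$,
\begin{equation*}
  0 = \partial_{t=0} \norm{\exp(\pi(X)t)\ket{\psi'}}^2 = 2\braket{\psi'\vert\pi(X)\vert\psi'},
\end{equation*}
exactly as in \eqref{eq:onebody/hands-on kempf-ness}, so $(\mu_K(\rho''), X) = \tr \rho'' \pi(X) = 0$ for all $X \in i\mathfrak k$ and hence $\mu_K(\rho'') = 0$. Since $\calC$ is a cone, $\ket{\psi''} \in \calC$, so $\rho'' \in \XX$ and $0 \in \mu_K(\XX)$.

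The only nontrivial input is the separation of closed disjoint $G$-invariant sets by invariants in the first direction; this is a standard consequence of the reductivity of $G$ (Mumford/Nagata, together with the Nullstellensatz) and is essentially the same fact that underwrites the Hilbert--Mumford criterion used in the proof of \autoref{lem:onebody/kempfness}. Everything else is a direct minimal-norm/tangent-space argument of the type already illustrated in the excerpt.
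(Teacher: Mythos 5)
Your proof is correct and follows essentially the same route as the paper: the forward implication via a vector of minimal length in the orbit closure (kept away from $0$ by the non-vanishing homogeneous invariant) whose norm-square is critical along all $i\mathfrak k$-directions, and the reverse implication via the Kempf--Ness lemma plus separation of the closed orbit from $\{0\}$ by an invariant. The only cosmetic difference is that you cite the reductive-group separation-of-invariants theorem as a black box, whereas the paper builds the separating invariant by hand (Nullstellensatz, homogenization, averaging over $K$, after noting that the Euclidean and Zariski closures of the orbit coincide).
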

\begin{proof}
  $(\Rightarrow)$ is proved just as at the beginning of this section:
  Let $\XX = \PP(\calC)$ and $P \in R_k(\XX)^G$ a non-constant $G$-invariant homogeneous polynomial with $P(\ket\psi) \neq 0$ for some $\ket\psi \in \calC$.
  Since $P(0) = 0$, $\overline{\Pi(G) \ket\psi} \not\ni 0$, hence we can find a non-zero vector $\ket{\psi'} \in \overline{\Pi(G) \ket\psi} \subseteq \calC$ of minimal length.
  In particular, $\ket{\psi'}$ is of minimal length in its $G$-orbit, so that for all $X \in i \mathfrak k$ and using that $\pi(X) = \pi(X)^\dagger$ we find that
  \begin{equation*}
    0 = \partial_{t=0} \norm{\exp(\pi(X) t) \psi'}^2 = 2 \braket{\psi' | \pi(X) | \psi'}.
  \end{equation*}
  Thus $\rho := \proj{\psi'} / \braket{\psi' | \psi'} \in \XX$ is a pure state with $\mu_K(\rho) = 0$.

  $(\Leftarrow)$
  For any $\rho = \proj\psi$ with $\mu_K(\rho) = 0$, \autoref{lem:onebody/kempfness} asserts that the $G$-orbit through $\ket\psi$ is closed, so in particular it is disjoint from $\{0\}$.
  Since orbits are constructible in the sense of algebraic topology, their closure in the Hilbert space topology coincides with their closure in the Zariski topology, which is the topology whose closed sets are common zero sets of families of polynomials.
  But any two disjoint Zariski-closed sets can be separated by a polynomial:
  Hilbert's Nullstellensatz implies that we may find a polynomial $P$ that separates $\Pi(G) \ket\psi$ and $\{0\}$ such that $P(\Pi(G) \ket\psi) \equiv 1$ while $P(0) = 0$ \cite{Hartshorne77}.
  Without loss of generality, $P$ is homogeneous, and we may also average over $K$ to make it $G$-invariant.
  Then $P$ is a non-constant regular function in $R(\XX)^G$.
\end{proof}

To generalize \autoref{lem:onebody/mumford origin} to arbitrary points in the image of the moment map, we need as the last ingredient the \emphindex{Borel--Weil theorem} (see, e.g., \cite[Lemma 94]{VergneBerline11}).

\begin{lem}[Borel--Weil]
  \label{lem:onebody/borel weil}
  Let $V_{G,\lambda}$ be the irreducible $G$-representation with highest weight $\lambda \in \Lambda^*_{G,+}$ and highest weight vector $\ket{\lambda}$.
  Let $\XX_{G,\lambda} := K \cdot \proj\lambda$.\nomenclature[TX_G,lambda]{$\XX_{G,\lambda}$}{$G$-orbit through projector onto highest weight vector (projective subvariety isomorphic to the coadjoint orbit $\calO_{K,\lambda}$)}
  Then $\XX_{G,\lambda} = G \cdot \proj\lambda = \overline{G \cdot \proj\lambda}$ is a $G$-invariant projective subvariety of $\PP(V_{G,\lambda})$ for which
  \begin{equation}
  \label{eq:onebody/borel weil}
    R_k(\XX_{G,\lambda}) \cong V_{G,k\lambda}^*
    \quad\text{and}\quad
    \mu_{K,\lambda}(\proj\lambda) = \lambda,
  \end{equation}
  where $\mu_{K,\lambda}$ denotes the moment map for $K$-action on $\PP(V_{G,\lambda})$.
  In fact, the moment map is a bijection between the projective subvariety $\XX_{G,\lambda}$ and its image, which is the coadjoint orbit $\calO_{K,\lambda}$.
\end{lem}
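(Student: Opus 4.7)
The plan is to establish the four assertions in turn: the orbit description of $\XX_{G,\lambda}$ and its closedness, the computation of $\mu_{K,\lambda}(\proj\lambda)$, the identification of the rings of regular functions, and the bijection with the coadjoint orbit.

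For the geometric description of $\XX_{G,\lambda}$, I would use the Iwasawa-type decomposition $G = KB$ with $B = T_\CC N_+$. By the highest-weight conditions $\pi(\mathfrak n_+) \ket\lambda = 0$ and $\pi(H) \ket\lambda = \lambda(H) \ket\lambda$, the Borel subgroup $B$ preserves the line $\CC \ket\lambda$, so $\proj\lambda$ is $B$-fixed in projective space. Hence $G \cdot \proj\lambda = KB \cdot \proj\lambda = K \cdot \proj\lambda$, and since $K$ is compact this orbit is automatically closed in the metric topology. To see it is actually Zariski-closed, hence a projective subvariety in the sense of \autoref{dfn:onebody/projective subvariety}, one identifies $\XX_{G,\lambda}$ with the image of the projective flag variety $G/P_\lambda$ under the $G$-equivariant morphism $gP_\lambda \mapsto g \cdot \proj\lambda$, where $P_\lambda \supseteq B$ is the parabolic stabilizer of the line $[\ket\lambda]$. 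The value $\mu_{K,\lambda}(\proj\lambda) = \lambda$ is then a one-line computation from \eqref{eq:onebody/moment map}: on $H \in \mathfrak h$ one has $\braket{\lambda|\pi(H)|\lambda} = \lambda(H)$, while on a root vector $E_\alpha$ the vector $\pi(E_\alpha)\ket\lambda$ lies in the weight space $V_{\lambda+\alpha}$ orthogonal to $\ket\lambda$, so $\braket{\lambda|\pi(E_\alpha)|\lambda} = 0$, matching our convention that $\lambda$ is extended by zero on root spaces.

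For the identification $R_k(\XX_{G,\lambda}) \cong V_{G,k\lambda}^*$, I would invoke the classical Borel--Weil theorem on $\XX_{G,\lambda} \cong G/P_\lambda$. Regular functions of degree $k$ on the affine cone over $\XX_{G,\lambda}$ correspond naturally to sections of the $k$-th power of the tautological line bundle on this projective variety, which in turn is the $G$-equivariant line bundle associated to the character by which $P_\lambda$ acts on $\ket\lambda$; Borel--Weil then identifies its space of global sections with $V_{G,k\lambda}^*$ as a $G$-representation. A more hands-on version of the same argument is to exhibit a highest-weight vector directly: the linear functional $\bra{w_0 \lambda} \in V_{G,\lambda}^*$ is annihilated by $N_+$ and transforms under $T_\CC$ by the weight $-w_0 \lambda = \lambda^*$, so its $k$-th power yields a highest-weight vector of weight $k\lambda^*$ in $R_k(\XX_{G,\lambda})$. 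The main technical step is then to show that the natural surjection $\Sym^k V_{G,\lambda}^* \twoheadrightarrow R_k(\XX_{G,\lambda})$ has image isomorphic to the irreducible summand $V_{G,k\lambda}^*$, i.e., that the affine cone is projectively normal; this is where the Borel--Weil input is genuinely needed and constitutes the hardest part of the proof.

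Finally, for the bijection with the coadjoint orbit, $K$-equivariance of $\mu_{K,\lambda}$ together with $\mu_{K,\lambda}(\proj\lambda) = \lambda$ immediately gives $\mu_{K,\lambda}(\XX_{G,\lambda}) = \Ad^*(K)\lambda = \calO_{K,\lambda}$, reducing the remaining claim to injectivity, which in turn reduces to matching the $K$-stabilizers of $\proj\lambda$ and $\lambda$. If $k \in K$ satisfies $\Ad^*(k)\lambda = \lambda$, then $k\ket\lambda$ is again a weight vector for $\mathfrak h$ of weight $\lambda$ that is annihilated by $\pi(\mathfrak n_+)$, and uniqueness of the highest-weight line in the irreducible representation $V_{G,\lambda}$ forces $k\ket\lambda \in \CC\ket\lambda$, so $k \cdot \proj\lambda = \proj\lambda$, as required.
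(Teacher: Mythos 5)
The paper itself contains no proof of this lemma---it is quoted as a classical result with a pointer to the literature (Vergne--Berline)---so there is no in-paper argument to compare yours against. Taken on its own terms, most of your outline is the standard one and is sound: the decomposition $G = K\,T_\CC N_+$ together with the fact that $T_\CC N_+$ fixes the highest-weight line gives $G\cdot\proj\lambda = K\cdot\proj\lambda$ and hence closedness; the computation $\mu_{K,\lambda}(\proj\lambda)=\lambda$ is exactly as you say; and for $R_k(\XX_{G,\lambda})\cong V_{G,k\lambda}^*$ your route through sections of the line bundle works, though you can bypass the projective-normality input you single out as the hardest step: a degree-$k$ polynomial restricted to the affine cone only sees the tensor powers $\bigl(c\,\Pi(g)\ket\lambda\bigr)^{\otimes k}$, which span the Cartan component $V_{G,k\lambda}\subseteq\Sym^k(V_{G,\lambda})$ generated by $\ket\lambda^{\otimes k}$; hence the kernel of $\Sym^k(V_{G,\lambda}^*)\to R_k(\XX_{G,\lambda})$ is exactly the annihilator of that component and the isomorphism follows directly.

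The one genuine gap is in your injectivity argument. You assert that $\Ad^*(k)\lambda=\lambda$ implies $\Pi(k)\ket\lambda$ is a weight vector of weight $\lambda$ annihilated by $\pi(\mathfrak n_+)$, but that is precisely what has to be proved, and it does not follow from anything you have written: an element of the coadjoint stabilizer $K_\lambda$ need not normalize $T$ or $N_+$, so there is no a priori reason that $\Pi(k)\ket\lambda$ respects the weight decomposition. For $\lambda\in i\mathfrak t^*_{>0}$ the stabilizer is just $T$ and your reasoning goes through, but the lemma is needed for arbitrary dominant $\lambda$ (it is invoked in \autoref{prp:onebody/mumford} for arbitrary $\lambda^*$), where $K_\lambda$ is a strictly larger centralizer and the claim is exactly the nontrivial point. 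Two standard repairs: (i) argue infinitesimally and use connectedness: $\mathfrak k_\lambda=\mathfrak t\oplus\bigoplus_{\alpha:\,\lambda(Z_\alpha)=0}\mathfrak k_\alpha$, and $\lambda(Z_\alpha)=0$ forces $\pi(E_{-\alpha})\ket\lambda=0$ by $\mathfrak{sl}(2)$-theory, so $\mathfrak k_\lambda$ preserves $\CC\ket\lambda$; since $K_\lambda$ is connected (a centralizer of a torus in the connected compact group $K$), this yields $K_\lambda\subseteq K_{\proj\lambda}$. (ii) Alternatively, use the moment map itself: writing $\Pi(k)\ket\lambda=\sum_\omega\psi_\omega$ in weight components, equivariance gives $\sum_\omega\norm{\psi_\omega}^2\,\omega=\lambda$ on $i\mathfrak t$; since every weight of $V_{G,\lambda}$ other than $\lambda$ equals $\lambda$ minus a nonzero nonnegative combination of positive roots, pairing with any $\xi\in i\mathfrak t$ on which all $\alpha\in R_{G,+}$ are strictly positive forces all the mass onto the one-dimensional $\lambda$-weight space, i.e.\ $\Pi(k)\ket\lambda\in\CC\ket\lambda$. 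Either patch closes the argument; as written, the step is unsupported precisely in the wall case where it is not automatic.
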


\subsection*{The Moment Polytope}

The following proposition formalizes the fundamental link between the image of the moment map and the decomposition of the ring of regular functions into irreducible representations \cite{GuilleminSternberg82a, NessMumford84, Brion87}.

\begin{prp}
\label{prp:onebody/mumford}
  Let $\XX$ be a $G$-invariant projective subvariety of $\PP(\calH)$. Then:
  \begin{equation*}
    \mu_K(\XX) \cap \QQ \Lambda^*_{G,+} = \left\{ \frac \lambda k : V^*_{G,\lambda} \subseteq R_k(\XX) \right\}
  \end{equation*}
\end{prp}
\begin{proof}
  Fix $\lambda \in \Lambda^*_{G,+}$ and $k > 0$.
  Let $\widetilde\calH := \Sym^k(\calH) \otimes V_{G,\lambda^*}$. Then
  \begin{equation*}
    \widetilde\XX := \{ \proj\psi^{\otimes k} \otimes \proj\phi : \proj\psi \in \XX, \proj\phi \in \XX_{G,\lambda^*} \}
  \end{equation*}
  is a projective subvariety of $\PP(\widetilde\calH)$ -- known as the image of the product of the $k$-th \emphindex{Veronese embedding} of $\XX$ and $\XX_{G,\lambda^*}$ under the \emphindex{Segr\'{e} embedding} -- and it is not hard to see that its ring of regular functions is given by
  \begin{equation*}
    R_l(\widetilde\XX)
    \cong R_{kl}(\XX) \otimes R_l(\XX_{G,\lambda^*})
    \cong R_{kl}(\XX) \otimes V_{G,\lambda l},
  \end{equation*}
  where we have used the first assertion in \eqref{eq:onebody/borel weil}.
  Thus, $R_l(\widetilde\XX)^G \neq 0$ if and only if $V_{G,\lambda l}^* \subseteq R_{kl}(\XX)$.

  On the other hand, the infinitesimal action of $\mathfrak g$ on $\widetilde\calH$ is $\widetilde\pi(X) = k \, \pi(X) \otimes \Id_{V_{G,\lambda}} + \Id_{\calH} \otimes \pi_{\lambda^*}(X)$, with $\pi_{\lambda^*}$ the irreducible representation of $\mathfrak g$ on $V_{G,\lambda}$.
  Hence the moment map on $\widetilde\XX$ is given by the formula
  \begin{equation*}
    \widetilde\mu_K(\proj\psi^{\otimes k} \otimes \proj\phi) = k \, \mu_K(\proj\psi) + \mu_{K,\lambda^*}(\proj\phi).
  \end{equation*}
  It follows by using the second assertion in \eqref{eq:onebody/borel weil} and $\lambda^* = -w_0 \lambda$ that
  \begin{equation}
  \label{eq:onebody/shifted moment map images}
    \widetilde\mu_K(\widetilde\XX)
    = k \, \mu_K(\XX) + \calO_{K,\lambda^*}
    = k \, \mu_K(\XX) - \calO_{K,\lambda},
  \end{equation}
  hence $0 \in \widetilde\mu_K(\widetilde\XX)$ if and only if $\lambda/k \in \mu_K(\XX))$.
  The theorem follows from these two observations and \autoref{lem:onebody/mumford origin}.
\end{proof}

It is instructive to apply \autoref{prp:onebody/mumford} to the situation of \autoref{lem:onebody/borel weil}.

\begin{prp}
\label{prp:onebody/mumford convex polytope}
  Let $\XX$ be a $G$-invariant projective subvariety of $\PP(\calH)$.
  Then $\{ (\lambda,k) : V^*_{G,\lambda} \subseteq R_k(\XX) \}$ is a finitely generated semigroup.
  It follows that $\{ \lambda / k : V^*_{G,\lambda} \subseteq R_k(\XX) \}$ is a convex polytope over $\QQ$.
\end{prp}
\begin{proof}
  Set $\calS := \{ (\lambda,k) : V^*_{G,\lambda} \subseteq R_k(\XX) \}$.
  We first show that $\calS$ is a semigroup, i.e., closed under addition.
  Recall from \autoref{sec:onebody/lie} that the highest weight vectors are in one-to-one correspondence with the irreducible representations that occur in a given representation.
  Thus let $P \in R_k(\XX)$ and $Q \in R_l(\XX)$ be highest weight vectors of weight $\lambda$ and $\mu \in \Lambda^*_{G,+}$ corresponding to two points $(\lambda, k)$ and $(\mu, l) \in \calS$.
  We claim that their product $PQ$ is a highest weight vector of weight $\lambda + \mu$ in $R_{k+l}(\XX)$.
  Indeed, $PQ \neq 0$ since $\XX$ is irreducible by assumption; it has weight $\lambda + \mu$ since
  \begin{equation*}
    \pi(H) (PQ) = (\pi(H)P) Q + P (\pi(H)Q) = (\lambda(H)+\mu(H))PQ \quad (\forall H \in \mathfrak h),
  \end{equation*}
  and an analogous calculation shows that it is annihilated by $\mathfrak n_+$ and hence a highest weight vector.
  Thus we obtain the point $(\lambda+\mu,k+l) \in \calS$.
  To see that $\calS$ is finitely generated, we use the (highly non-trivial) fact that the algebra of $N_+$-invariants $R(\XX)^{N_+}$, whose elements are linear combinations of highest weight vectors, is finitely generated \cite{Grosshans73}.
  Let $P^{(1)}, \dots, P^{(m)}$ be a finite set of generators; we may assume that each generator is homogeneous, say $P^{(j)} \in R_{k^{(j)}}(\XX)^{N_+}$, and a weight vector, say of weight $\lambda^{(j)}$.
  Then $(\lambda^{(j)}, k^{(j)}) \in \calS$ for all $j=1, \dots, m$.
  On the other hand, we find just as above that any monomial $P^{(j_1)} \dotsm P^{(j_p)}$ has degree $\sum_i k^{(j_i)}$ and weight $\sum_i \lambda^{(j_i)}$, and therefore determines the point $\sum_i (\lambda^{(j_i)},k^{(j_i)}) \in \calS$.
  Since any highest weight vector can then be written as a sum of monomials in the $P^{(j)}$, we conclude that $\calS$ is a semigroup that is generated by the finitely many points $(\lambda^{(j)},k^{(j)})$ for $j=1,\dots,m$.

  It follows as an immediate consequence that
  \[\calP := \{ \lambda / k : V^*_{G,\lambda} \subseteq R_k(\XX) \} = \{ \lambda / k : (\lambda, k) \in \calS \}\]
  is a convex polytope over $\QQ$.
  Indeed, let $(\lambda,k)$ and $(\mu,l) \in \calS$ corresponding to two points $\lambda/k$ and $\mu/l \in \calP$.
  Then $pl (\lambda,k) + qk (\mu,l) \in \calS$ by the semigroup property, so that we obtain the point
  \begin{equation*}
      \frac {pl \lambda + qk \mu} {plk + qkl}
    = \frac p {p+q} \frac \lambda k + \frac q {p+q} \frac \mu l \in \calP.
  \end{equation*}
  Thus $\calP$ is closed under convex combinations with rational coefficients.
  Likewise, the fact that $\calS$ is generated by the finitely many generators $(\lambda^{(j)}, k^{(j)})$ implies that $\calP$ is the convex hull over $\QQ$ of the finitely many points $\lambda^{(j)} / k^{(j)}$ ($j=1,\dots,m$).
\end{proof}

For any $(\lambda,k)$ in the semigroup, its integral multiples $\ZZ_{>0} (\lambda,k)$ are also contained in the semigroup, and they determine the same point $\lambda/k$ in the polytope.
Therefore, the polytope only depends on the representation theory of $R_k(\XX)$ in the ``semiclassical limit''\index{semiclassical limit} of large $k$:
\begin{equation}
\label{eq:onebody/semiclassical limit}
  \left\{ \frac \lambda k : V^*_{G,\lambda} \subseteq R_k(\XX) \right\}
  =
  \left\{ \frac \lambda k : V^*_{G,\lambda} \subseteq R_k(\XX), k \geq k_0 \right\}
  \quad
  (\forall k_0 > 0)
\end{equation}

It is well-known that $\mu_K(\XX) \cap \QQ\Lambda^*_{G,+}$ is a dense subset of $\mu_K(\XX) \cap i\mathfrak t^*_+$ (see, e.g., \cite{GuilleminSternberg82, NessMumford84}; this also follows from the local model for symplectic group actions, cf.\ the discussion below \autoref{lem:onebody/curve lemma}). %
By combining \autoref{prp:onebody/mumford} and \autoref{prp:onebody/mumford convex polytope}, we thus arrive at the following fundamental theorem:

\begin{thm}[Mumford]
\label{thm:onebody/kirwan}
  Let $\XX$ be a $G$-invariant projective subvariety of $\PP(\calH)$.
  Then
  \begin{equation*}
    \Delta_K(\XX) = \mu_K(\XX) \cap i\mathfrak t^*_+ = \{ \lambda \in i \mathfrak t^*_+ : \calO_{K,\lambda} \subseteq \mu_K(\XX) \}
  \end{equation*}
  is a convex polytope with rational vertices, whose rational points are given by
  \begin{equation*}
    \mu_K(\XX) \cap \QQ \Lambda^*_{G,+}
    = \left\{ \frac \lambda k : V^*_{G,\lambda} \subseteq R_k(\XX) \right\}.
  \end{equation*}
  It is called the \emphindex{moment polytope} for the $K$-action on $\XX$.%
  \nomenclature[TDelta_K(X)]{$\Delta_K(\XX)$}{moment polytope for the $K$-action on $\XX$}
\end{thm}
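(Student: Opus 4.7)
The plan is to assemble the theorem from the two propositions already established together with the density statement that the author cites. Nothing genuinely new needs to be proved; rather, the three pieces fit together in a straightforward way, and the only real work is to explain why the closure of a set of rational points matches $\mu_K(\XX) \cap i\mathfrak t^*_+$ exactly.

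First, I would observe that the second displayed equality of the theorem is verbatim the content of \autoref{prp:onebody/mumford}, so there is nothing to do there. Let me write
\begin{equation*}
  \calP := \left\{ \frac \lambda k : V^*_{G,\lambda} \subseteq R_k(\XX) \right\}.
\end{equation*}
By \autoref{prp:onebody/mumford convex polytope}, $\calP$ is a convex polytope over $\QQ$, generated by finitely many rational points $\lambda^{(j)}/k^{(j)}$; equivalently, $\calP = \calQ \cap \QQ \Lambda^*_{G,+}$, where $\calQ$ denotes the convex hull over $\RR$ of those finitely many rational points. Since each $\lambda^{(j)}/k^{(j)}$ lies in $\mu_K(\XX) \cap i\mathfrak t^*_+$, which is convex if nonempty only on the relevant face structure, I note that $\calQ \subseteq \mu_K(\XX) \cap i\mathfrak t^*_+$ is not automatic; instead the inclusion will follow a posteriori once one takes closures, since $\mu_K(\XX) \cap i\mathfrak t^*_+$ is closed (image of the compact $\XX$ under a continuous map, intersected with the closed Weyl chamber).

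Next I would invoke the density assertion $\mu_K(\XX) \cap \QQ \Lambda^*_{G,+}$ is dense in $\mu_K(\XX) \cap i \mathfrak t^*_+$, which the author attributes to standard results (and which can, in principle, be deduced either from algebraic arguments via \autoref{prp:onebody/mumford} applied to shifted representations, or from the local normal form for symplectic $K$-actions near the moment map image). Combining this with \autoref{prp:onebody/mumford} gives
\begin{equation*}
  \mu_K(\XX) \cap i \mathfrak t^*_+ = \overline{\mu_K(\XX) \cap \QQ \Lambda^*_{G,+}} = \overline{\calP} = \calQ,
\end{equation*}
so $\Delta_K(\XX) = \mu_K(\XX) \cap i\mathfrak t^*_+$ is the convex hull of the rational points $\lambda^{(j)}/k^{(j)}$, hence a convex polytope with rational vertices.

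Finally, the first equality $\mu_K(\XX) \cap i \mathfrak t^*_+ = \{ \lambda \in i \mathfrak t^*_+ : \calO_{K,\lambda} \subseteq \mu_K(\XX) \}$ is a consequence of $K$-equivariance of $\mu_K$ and the fact that $i \mathfrak t^*_+$ is a cross-section for the coadjoint action: any $\lambda \in \mu_K(\XX) \cap i \mathfrak t^*_+$ satisfies $\calO_{K,\lambda} = \Ad^*(K) \lambda \subseteq \mu_K(\XX)$ by equivariance and $K$-invariance of $\XX$, and the converse is trivial. The main obstacle is the density statement; absent a direct citation I would back it up with an approximation argument using \autoref{prp:onebody/mumford} applied to $\widetilde \XX$ as in the proof of that proposition, choosing lattice points $\lambda_n/k_n$ that approach any prescribed real point of $\mu_K(\XX) \cap i \mathfrak t^*_+$ and invoking compactness of $\XX$ to pass to a limit.
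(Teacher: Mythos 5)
Your proposal is correct and follows essentially the same route as the paper, which obtains the theorem by combining the two preceding propositions (the identification of the rational points of the moment image with normalized highest weights, and the finitely generated semigroup/rational-polytope statement) with the cited density of $\mu_K(\XX) \cap \QQ \Lambda^*_{G,+}$ in $\mu_K(\XX) \cap i\mathfrak t^*_+$; your additional bookkeeping (closedness of $\mu_K(\XX) \cap i\mathfrak t^*_+$ by compactness, the closure argument identifying it with the real hull of the rational generators, and the equivariance argument for the coadjoint-orbit description) just makes explicit what the paper leaves implicit.
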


In the proof of \autoref{prp:onebody/mumford convex polytope} we had identified the irreducible representations that occur in $R(\XX)$ with their highest weight vectors, which are precisely the weight vectors in $R(\XX)^{N_+}$.
From a conceptual point of view, the passage from highest weight vectors in $R(\XX)$ to weight vectors in $R(\XX)^{N_+}$ is a first instance of the reduction of a non-Abelian problem to an Abelian problem as alluded to in the abstract of this thesis.
It also leads to a basic way of computing the moment polytope $\Delta_K(\XX)$ (cf.\ the discussion at the end of \autoref{sec:slocc/entanglement polytopes}).
In the next chapters we will meet more refined variants of this reduction.

\medskip

The study of moment maps and their convexity properties has a long history in mathematics.
Among the well-known special cases are:
The Schur--Horn theorem concerning the diagonal entries of Hermitian matrices with fixed spectrum \cite{Schur23, Horn54}; Kostant's convexity theorem, which is the generalization to general coadjoint orbits \cite{Kostant73}; the Atiyah--Guillemin--Sternberg convexity theorem for torus actions \cite{Atiyah82, GuilleminSternberg82}; Heckman's convexity theorem, which considers projections of coadjoint orbits \cite{Heckman82}; and Kirwan's convexity theorem, which is the symplectic analogue of \autoref{thm:onebody/kirwan} \cite{Kirwan84a} (cf.\ \cite{Sjamaar98, Brion99} and the recent monograph \cite{GuilleminSjamaar05}).

\section{Consequences for the Quantum Marginal Problem}
\label{sec:onebody/consequences}

We now describe the precise connection between the geometry of the moment map and the one-body quantum marginal problem and draw some general consequences.

For distinguishable particles, consider the Hilbert space $\calH = \bigotimes_{k=1}^n \calH_k$ equipped with the representation of $G = \GL(\calH_1) \times \dots \times \GL(\calH_n)$ by tensor products, $\Pi(g_1, \dots, g_n) = g_1 \otimes \dots \otimes g_n$.
Its Lie algebra $\mathfrak g = \mathfrak{gl}(\calH_1) \oplus \dots \oplus \mathfrak{gl}(\calH_n)$ acts by $\pi(X_1, \dots, X_n) = \sum_{k=1}^n \Id^{\otimes k-1} \otimes X_k \otimes \Id^{\otimes n-k}$.
The group $K = \U(\calH_1) \times \dots \times \U(\calH_n)$ is a maximal compact subgroup of $G$ and it acts unitarily on $\calH$.
For all tuples of Hermitian matrices $X = (X_1, \dots, X_n) \in i \mathfrak k$, the moment map \eqref{eq:onebody/moment map} is given by
\begin{equation*}
  (\mu_K(\rho), X) %
  = \sum_{k=1}^n \tr \rho (\Id^{\otimes k-1} \otimes X_k \otimes \Id^{\otimes n-k})
  = \sum_{k=1}^n \tr \rho_k X_k,
\end{equation*}
where we have used the definition \eqref{eq:onebody/rdm} of the one-body reduced density matrices $\rho_k$.
Thus if we identify each $\rho_k$ with its dual, $\tr \rho_k (-)$, then the one-body quantum marginal problem, \autoref{pro:onebody/qmp}, is precisely equivalent to characterizing the image of the moment map.
Furthermore, the moment polytope $\Delta_K(\PP(\calH)) = \mu_K(\PP(\calH)) \cap i \mathfrak t^*_+$ can be identified with the collection of eigenvalues $(\vec\lambda_1, \dots, \vec\lambda_n)$ of the one-body reduced density matrices $\rho_1, \dots, \rho_n$ that are compatible with a global pure state $\rho \in \PP(\calH)$:
\begin{equation*}
  \Delta_K(\PP(\calH)) = \left\{ (\vec\lambda_1, \dots, \vec\lambda_n) : \vec\lambda_k = \spec \rho_k, \rho \in \PP(\calH) \right\},
\end{equation*}
where we identify diagonal matrices with non-increasing entries with their spectrum (as in \autoref{sec:onebody/lie} and \autoref{tab:onebody/unitary group}).

In practice, the above modeling of the quantum marginal problem has the disadvantage that the moment polytope is always of positive codimension: since $\tr \rho_k \equiv \tr \rho = 1$, $\Delta_K$ is contained in the affine subspace $\sum_j \lambda_{k,j} = 1$ for all $k=1,\dots,n$.
It will usually be more convenient to instead use the special linear and unitary groups, $G = \SL(\calH_1) \times \dots \times \SL(\calH_n)$ and $K = \SU(\calH_1) \times \dots \times \SU(\calH_n)$, so that $i \mathfrak k$ consists of tuples of \emph{traceless} Hermitian matrices.
Then the moment map preserves only the traceless part of the one-body reduced density matrices, which avoids the above degeneracy.

For fermions, we similarly choose $\calH = \Alt^n \calH_1$ and $G = \SL(\calH_1)$ acting by $\Pi(g_1) = g_1^{\otimes n}$.
With $K = \SU(\calH_1)$ and using \eqref{eq:onebody/rdm fermions} we obtain that
\begin{equation*}
  (\mu_K(\rho), X_1) = \sum_{k=1}^n \tr \rho (\Id^{\otimes k-1} \otimes X_k \otimes \Id^{\otimes n-k}) = n \tr \rho_1 X_1 = \tr \gamma_1 X_1,
\end{equation*}
where $\gamma_1 := n \rho_1$ is the first-order density matrix from quantum chemistry with trace $n$ that we had defined below \eqref{eq:onebody/rdm fermions}.
Again we find that the one-body $n$-representability problem, \autoref{pro:onebody/nrep}, is precisely equivalent to determining the moment polytope.

We can similarly model the other variants of the one-body quantum marginal problem alluded to at the end of the introduction of this chapter by considering different representations of unitary groups or their composition.
For example, if $\calH_0$ is a $K_0$-representation describing a pure-state problem then the corresponding mixed-state problem can be studied by taking $\calH = \calH_0 \otimes \calH_0$ and $K = K_0 \times \SU(\calH_0)$; e.g., the mixed-state problem for fermions amounts to the moment polytope for the $\SU(\calH_1) \otimes \SU(\Alt^n \calH_1)$-representation $\Alt^n \calH_1 \otimes \Alt^n \calH_1$.
In \autoref{sec:kirwan/examples} we discuss another example that involves the marginal problem for the spin and orbital degrees of freedom of a fermionic system.
In \autoref{tab:onebody/summary} we summarize the mathematical modeling of the scenarios of main physical interest.

\begin{table}
  \begin{center}
  \begin{tabular}{lcc}
    \toprule
    Setting  & Group $K$ & Representation $\calH$ \\
    \midrule
    Distinguishable particles & $\bigtimes_{k=1}^n \SU(\calH_k)$ & $\bigotimes_{k=1}^n \calH_k$ \\
    Fermions & $\SU(\calH_1)$ & $\Alt^n \calH_1$\\
    Bosons & $\SU(\calH_1)$ & $\Sym^n \calH_1$ \\
    \midrule
    Mixed-state version of $(\calH_0,K_0)$ & $K_0 \times \SU(\calH_0)$ & $\calH_0 \otimes \calH_0$ \\
    \bottomrule
  \end{tabular}
  \end{center}
  \caption[Mathematical modeling of the one-body quantum marginal problem]{The different variants of the one-body quantum marginal problem can all be modeled by studying the moment polytope $\Delta_K(\PP(\calH))$ for the action of a compact Lie group $K$ on the complex projective space of a $K$-representation $\calH$. The table lists some important scenarios.}
  \label{tab:onebody/summary}
\end{table}

\subsection*{Convexity}

For all these variants of the one-body quantum marginal problem, \autoref{thm:onebody/kirwan} immediately implies that the solution is given by a convex polytope, i.e., by linear inequalities on the eigenvalues of the one-body reduced density matrices.
For example, Pauli's original exclusion principle is one such inequality---but in general there are many further constraints.
As we will see in several concrete examples in \autoref{ch:kirwan}, there is a rich variety of subtle kinematic constraints on the one-body marginals of a multipartite quantum state.

To compute the actual linear inequalities for a given number of particles, statistics and local dimensions is in general a difficult problem that we will study in the next chapter. All known general solutions rely in one way or the other on the invariant-theoretic description of the moment polytope given by \autoref{thm:onebody/kirwan}, including the original solution by Klyachko \cite{Klyachko04} and the solution that we present in \autoref{ch:kirwan}.
In the remainder of this section we discuss the physical significance of the facets of the moment polytope, and we then describe more explicitly the representation-theoretic content of \autoref{thm:onebody/kirwan}.

\subsection*{Pinning}

An important consequence of the general theory is that the facets of the polytope have a rather particular structure. Before we show this, we record the following useful lemma for future reference.

\begin{lem}
\label{lem:onebody/curve lemma}
  Let $\rho \in \PP(\calH)$ such that $\mu_K(\rho) \in i \mathfrak t^*_{>0}$ and $V \in d\mu_K^{-1}\big|_\rho(i \mathfrak t^*)$. %
  Then there exists a smooth curve $\rho_t \in \PP(\calH)$ for $t \in (-\varepsilon,\varepsilon)$ such that
  \begin{equation*}
    \rho_0 = \rho,
    \quad
    \dot\rho_0 = V,
    \quad\text{and}\quad
    \mu_K(\rho_t) \in i\mathfrak t^*_{>0} \; (\forall t).
  \end{equation*}
\end{lem}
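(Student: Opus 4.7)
The plan is to combine the $K$-equivariance of the moment map with the implicit function theorem, using crucially that $\lambda := \mu_K(\rho)$ is a \emph{regular} coadjoint element (being in the open Weyl chamber $i\mathfrak t^*_{>0}$, its $K$-stabilizer is the maximal torus $T$). The overall strategy is to first pick any smooth curve $\rho_t^{(0)}$ in $\PP(\calH)$ with $\rho_0^{(0)} = \rho$ and $\dot\rho_0^{(0)} = V$, then to correct by a small $t$-dependent $K$-element so that the moment map image lands back in the slice $i\mathfrak t^*$, and finally to verify that the correction has vanishing initial velocity so that $\dot\rho_0 = V$ is preserved.

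The technical engine is the infinitesimal decomposition
\[
  i\mathfrak k^* \;=\; T_\lambda \calO_{K,\lambda} \,\oplus\, i\mathfrak t^*.
\]
This holds because $T_\lambda \calO_{K,\lambda} = \ad^*(\mathfrak k)\lambda$ has dimension $\dim K - \dim T$ (the stabilizer of $\lambda$ being $\mathfrak t$), while its intersection with $i\mathfrak t^*$ is zero: using the root-space decomposition \eqref{eq:onebody/compact root space decomposition}, $\ad^*(X)\lambda$ for $X \in \mathfrak k_\alpha$ lands in the non-torus summand of $i\mathfrak k^*$ dual to $\mathfrak k_\alpha$. Picking any linear complement $\mathfrak p \subseteq \mathfrak k$ to $\mathfrak t$ (for instance $\mathfrak p := \bigoplus_{\{\pm\alpha\} \subseteq R_G} \mathfrak k_\alpha$), the induced map $\mathfrak p \to i\mathfrak k^*/i\mathfrak t^*$, $Y \mapsto [\ad^*(Y)\lambda]$, is therefore a linear isomorphism.

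With this in hand, I would apply the implicit function theorem to
\[
  F \colon \mathfrak p \times (-\delta, \delta) \to i\mathfrak k^*/i\mathfrak t^*,
  \qquad
  F(X, t) \;:=\; \bigl[\Ad^*(\exp X)\, \mu_K(\rho_t^{(0)})\bigr].
\]
Since $F(0, 0) = [\lambda] = 0$ and $\partial_X F|_{(0,0)}$ is precisely the isomorphism from the previous paragraph, there exists a unique smooth curve $X(t)$ with $X(0) = 0$ and $F(X(t), t) \equiv 0$. Setting $\rho_t := \exp(X(t)) \cdot \rho_t^{(0)}$, the $K$-equivariance of the moment map yields $\mu_K(\rho_t) = \Ad^*(\exp X(t))\, \mu_K(\rho_t^{(0)}) \in i\mathfrak t^*$ for all sufficiently small $t$. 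Since $\mu_K(\rho_0) = \lambda$ lies in the open subset $i\mathfrak t^*_{>0}$ of $i\mathfrak t^*$, shrinking $\varepsilon$ keeps $\mu_K(\rho_t)$ in the open Weyl chamber by continuity.

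The hypothesis on $V$ enters only at the very end, when verifying $\dot\rho_0 = V$. Differentiating $F(X(t), t) \equiv 0$ at $t = 0$ gives
\[
  \partial_X F\big|_{(0,0)}(\dot X(0)) \;=\; -\bigl[d\mu_K\big|_\rho(V)\bigr] \;=\; 0,
\]
using exactly the assumption $d\mu_K|_\rho(V) \in i\mathfrak t^*$. Since $\partial_X F|_{(0,0)}$ is injective, $\dot X(0) = 0$, and consequently $\dot\rho_0 = (\dot X(0))_\rho + V = V$. The main subtlety really lies in the transverse decomposition at a regular $\lambda$; once that is in place the argument runs on autopilot via the implicit function theorem and equivariance, without needing any further input from the geometry of $\PP(\calH)$ beyond \eqref{eq:onebody/moment map property}.
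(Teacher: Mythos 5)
Your proof is correct, but it takes a different route from the paper. The paper disposes of the lemma in two lines by invoking the symplectic cross-section theorem (cited as \cite[Theorem 26.7]{GuilleminSternberg84}): by $K$-equivariance the moment map meets $i\mathfrak t^*_{>0}$ transversally, so $Y := \mu_K^{-1}(i\mathfrak t^*_{>0})$ is a smooth submanifold with $T_\rho Y = d\mu_K^{-1}\big|_\rho(i\mathfrak t^*)$, and one simply picks any curve in $Y$ with the prescribed initial data. You instead reconstruct the relevant local statement by hand: starting from an arbitrary curve with velocity $V$, you correct it by a $t$-dependent element $\exp(X(t)) \in K$ found via the implicit function theorem, using the regular-element decomposition $i\mathfrak k^* = T_\lambda\calO_{K,\lambda} \oplus i\mathfrak t^*$ (which, as you note, rests on the stabilizer of $\lambda \in i\mathfrak t^*_{>0}$ being exactly $T$ — the same injectivity fact the paper records separately in \autoref{lem:kirwan/torus complement injective}), and equivariance then pushes the moment image back into the slice. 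This is in essence the standard proof of the cross-section theorem, so your argument unpacks the citation rather than bypassing it; what it buys is a self-contained proof that makes explicit exactly where the hypothesis $d\mu_K\big|_\rho(V) \in i\mathfrak t^*$ enters (namely, in forcing $\dot X(0)=0$ so the correction does not disturb the initial velocity), at the cost of losing the stronger structural information the cited theorem provides (e.g.\ that $Y$ is a symplectic submanifold, which is not needed here). All the individual steps — the direct-sum decomposition, the isomorphism $\mathfrak p \to i\mathfrak k^*/i\mathfrak t^*$, the IFT application, the continuity argument keeping $\mu_K(\rho_t)$ in the open chamber, and the chain-rule computation $\dot\rho_0 = (\dot X(0))_\rho + V = V$ — check out.
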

\begin{proof}
  Consider the \emph{symplectic cross section}\index{symplectic cross section|textbf} $Y := \mu_K^{-1}(i \mathfrak t^*_{>0})$.
  By $K$-equivariance of the moment map, $\mu_K$ meets $i \mathfrak t^*_{>0}$ transversally, so that $Y$ is a smooth manifold with tangent space $T_\rho Y = d\mu_K^{-1}\big|_\rho(i \mathfrak t^*)$ \cite[Theorem 26.7]{GuilleminSternberg84}.
  Thus we may choose any curve in $Y$ that starts with $\rho_0 = 0$ and $\dot\rho_0 = V$.
\end{proof}

In the context of the marginal problem, \autoref{lem:onebody/curve lemma} is in essence a reformulation of first-order perturbation theory for the one-body reduced density matrices.
We remark that its conclusions can be strengthened; it is in fact possible to walk ``finitesimally'' into any direction in $d\mu_K(T_\rho \PP(\calH)) \cap i \mathfrak t^*$, as can be seen by using the local model for symplectic group actions \cite{GuilleminSternberg82, GuilleminSternberg84a, Marle85}.

\begin{lem}[Selection Rule]
\label{lem:onebody/selection rule}
  Let $\rho = \proj\psi$ be a pure state such that $\mu_K(\rho) \in i \mathfrak t^*_{>0}$ is a point on a facet of the moment polytope corresponding to the inequality $(-,H) \geq c$.
  Then
  \begin{equation*}
    d(\mu_K, H)\big|_\rho = 0
  \end{equation*}
   and
  \begin{equation}
  \label{eq:onebody/selection rule}
    \pi(H) \ket\psi = c \ket\psi.
  \end{equation}
\end{lem}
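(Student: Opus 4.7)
The plan is to exploit the assumption $\mu_K(\rho) \in i\mathfrak{t}^*_{>0}$: because the open Weyl chamber is an \emph{open} subset of $i\mathfrak{t}^*$, this makes $\rho$ into a genuine local minimum of the scalar function $f := (\mu_K, H) \colon \PP(\calH) \to \RR$. From there, both assertions will drop out of the moment map property \eqref{eq:onebody/moment map property} together with the non-degeneracy of the Fubini--Study form.

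First I would show that $\rho$ is a local minimum of $f$. Since $\mu_K$ is continuous and $i\mathfrak{t}^*_{>0}$ is open in $i\mathfrak{t}^*$, there exists an open neighborhood $U$ of $\rho$ in $\PP(\calH)$ with $\mu_K(U) \subseteq i\mathfrak{t}^*_{>0} \subseteq i\mathfrak{t}^*_+$. For every $\rho' \in U$, the image $\mu_K(\rho')$ therefore lies in the moment polytope $\Delta_K(\PP(\calH))$, and the defining inequality of the facet yields $f(\rho') = (\mu_K(\rho'), H) \geq c$, with equality at $\rho' = \rho$. Differentiating at this interior minimum of $f$ gives the first assertion, $d(\mu_K, H)|_\rho = 0$.

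Next I would deduce the eigenvalue equation. Writing $H = iY$ with $Y \in \mathfrak{t}$, the moment map property \eqref{eq:onebody/moment map property} reads $d(\mu_K, iY)|_\rho(V) = \omega(Y_\rho, V)$, so the vanishing established above forces $\omega(Y_\rho, -) \equiv 0$ on $T_\rho \PP(\calH)$. By non-degeneracy of $\omega$, this means $Y_\rho = [\pi(Y), \rho] = 0$, equivalently $[\pi(H), \rho] = 0$. Since $\rho = \proj\psi$ is a rank-one projector and $\pi(H)$ is Hermitian, this commutation relation forces $\ket\psi$ to be an eigenvector of $\pi(H)$, say with eigenvalue $c'$. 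The value of $c'$ is pinned down by
\begin{equation*}
  c' = \braket{\psi | \pi(H) | \psi} = \tr \rho \, \pi(H) = (\mu_K(\rho), H) = c,
\end{equation*}
giving the second assertion.

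Rather than a serious obstacle, the main subtlety is appreciating the role of the hypothesis $\mu_K(\rho) \in i\mathfrak{t}^*_{>0}$: it is exactly this openness that lets one work on a neighborhood on which the facet inequality constrains $f$ pointwise, so that $\rho$ is a bona fide local minimum. Were $\mu_K(\rho)$ allowed to sit on a wall of the Weyl chamber, images of nearby points could leave $i\mathfrak{t}^*_+$, the polytope inequality would only apply after a Weyl group translation, and a more delicate argument via the symplectic cross-section or the curve lemma \ref{lem:onebody/curve lemma} would be needed to reach the same conclusion.
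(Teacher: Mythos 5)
Your reduction from $d(\mu_K,H)\big|_\rho=0$ to the eigenvalue equation is fine and matches the paper, but the first step — the claim that $\rho$ is a local minimum of $f=(\mu_K,H)$ on $\PP(\calH)$ — contains a genuine error. The moment map takes values in $i\mathfrak k^*$, not in $i\mathfrak t^*$; the set $i\mathfrak t^*_{>0}$ is open only relative to $i\mathfrak t^*$, which has positive codimension (equal to the number of roots) inside $i\mathfrak k^*$. So continuity does \emph{not} give a neighborhood $U$ of $\rho$ with $\mu_K(U)\subseteq i\mathfrak t^*_{>0}$: nearby states generically map off the Cartan dual entirely, and for such $\rho'$ the pairing $(\mu_K(\rho'),H)$ is not controlled by the facet inequality, which only constrains the intersection of the coadjoint orbit of $\mu_K(\rho')$ with the positive Weyl chamber. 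Worse, the intermediate conclusion is false in general: since $\ket\psi$ lies in the eigenspace $\calH(H=c)$ while $c$ is typically \emph{not} the smallest eigenvalue of $\pi(H)$ (e.g.\ in the $n$-qubit polygonal facet one has $\calH(H<c)\neq 0$), mixing a small component of a lower eigenvector into $\ket\psi$ strictly decreases $f$ at second order. So $\rho$ is only a critical point — in fact a saddle, which is exactly what the Hessian index computation in \autoref{ch:kirwan} quantifies — and a local-minimum argument cannot establish even the first assertion as stated.

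The paper's proof works around precisely this issue: it only considers tangent directions $V$ with $d\mu_K(V)\in i\mathfrak t^*$ and uses \autoref{lem:onebody/curve lemma} (the symplectic cross-section) to produce a curve $\rho_t$ whose moment image stays in $i\mathfrak t^*_{>0}$, hence in $\Delta_K$, so that the facet inequality applies along the curve and forces $(\omega,H)=0$ for every $\omega\in d\mu_K(T_\rho\PP(\calH))\cap i\mathfrak t^*$. By \eqref{eq:onebody/moment map range} that set is the annihilator in $i\mathfrak t^*$ of $i\mathfrak t_\rho$, so a double-annihilator argument yields $H\in i\mathfrak t_\rho$, i.e.\ $[\pi(H),\rho]=0$, from which both $d(\mu_K,H)\big|_\rho=0$ and $\pi(H)\ket\psi=c\ket\psi$ follow as in your second step. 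Note also that your closing remark inverts the logic: the curve lemma is not a fallback for the wall case — it is needed exactly because $\mu_K(\rho)\in i\mathfrak t^*_{>0}$, and its proof uses that hypothesis.
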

\begin{proof}
  For any $\omega \in d\mu_K(T_\rho\PP(\calH)) \cap i \mathfrak t^*$, there exists a curve $\rho_t$ through $\mu_K(\rho)$ with $d\mu_K(\dot\rho_0) = \omega$ and $\mu_K(\rho_t) \in \Delta_K$ for all $t$ (\autoref{lem:onebody/curve lemma}).
  Therefore,
  \begin{equation*}
    (\mu_K(\rho_t), H) = c + t (\omega, H) + O(t^2).
  \end{equation*}
  Since $(\mu_K(\rho_t), H) \geq c$ is an inequality for the moment polytope, it follows that $(\omega, H) = 0$---for otherwise we could walk through the facet!
  On the other hand, \eqref{eq:onebody/moment map range} shows that
  \begin{equation*}
    d\mu_K\big(T_\rho \PP(\calH)) \cap i \mathfrak t^*
    = \{ \omega \in i \mathfrak t^* : \omega \big|_{i \mathfrak k_\rho} = 0 \}
    = \{ \omega \in i \mathfrak t^* : \omega \big|_{i \mathfrak t_\rho} = 0 \}.
  \end{equation*}
  Therefore, $H \in i \mathfrak t$ is necessarily an element of the Lie algebra of the $T$-stabilizer of $\rho$, i.e., $H_\rho = [\pi(H), \rho] = 0$.
  This implies that $d(\mu_K,H)\big|_\rho = 0$  by \eqref{eq:onebody/moment map property}, but also that $\ket\psi$ is an eigenvector of $\pi(H)$, with corresponding eigenvalue $\braket{\psi | \pi(H) | \psi} = (\mu_K(\rho), H) = c$.
\end{proof}

In the language of Klyachko, the eigenvalue equation \eqref{eq:onebody/selection rule} is called the \emphindex{selection rule} which is satisfied by a quantum state that is \emph{pinned}\index{pinning} to a facet of the moment polytope \cite{Klyachko09}.
Thus pinned states live on a potentially much lower-dimensional subspace of the Hilbert space, with potential implications on the physics.

It is an interesting question if and under which circumstances states in concrete systems are pinned.
For example, it is an empirical fact that many molecules are well-explained by assuming that the natural occupation numbers are close to 0 and 1 (pinning), so that the global state can be well-approximated by a Slater determinant (the corresponding selection rule), which is a first step to Hartree--Fock theory and the Aufbau principle. %
Thus it is not be unreasonable to wonder if approximate pinning might hold for some of the other defining inequalities of the moment polytope.
See \cite{Klyachko09, Klyachko13} for preliminary investigations in the context of small molecules and magnetism and \cite{SchillingGrossChristandl13} for a study of pinning in a model with small harmonic interactions.

Crucially, the selection rule is stable at least in an elementary sense.
We phrase the following result in terms of the \emphindex{trace norm}\nomenclature[Q<X_1]{$\norm{X}_1$}{trace norm, or Schatten-1 norm of operator $X$} $\norm{X}_1 := \tr \abs X$, which has a useful operational meaning (but this choice is completely arbitrary since the proof is based on a purely topological argument):

\begin{lem}[Stability]
  Let $\norm{-}$ denote an arbitrary norm on $i \mathfrak t^*$.
  Let $F = \{ \lambda \in \Delta_K(\PP(\calH)) : (\lambda, H) = c \}$ be a facet of the moment polytope and
  $\calH(H=c) = \{ \ket\psi : \pi(H) \ket\psi = \ket\psi c \}$ the corresponding subspace of states that satisfy the selection rule.
  For any closed subset $F_0 \subseteq F \cap i \mathfrak t^*_{>0}$ there exists a function $\delta(\varepsilon)$ such that
  \begin{equation*}
    \min_{\lambda \in F_0} \norm{\mu_K(\rho) - \lambda} \leq \varepsilon
    \Rightarrow
    \min_{\rho' \in \PP(\calH(H=c))} \norm{\rho - \rho'}_1 \leq \delta(\varepsilon)
    \qquad
    (\forall \rho)
  \end{equation*}
  and $\delta(\varepsilon) \searrow 0$ as $\varepsilon \searrow 0$.
\end{lem}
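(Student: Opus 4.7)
My plan is to argue by contradiction using compactness of $\PP(\calH)$, with the Selection Rule (\autoref{lem:onebody/selection rule}) as the pointwise input at a limiting state.

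First, extend the given norm $\norm{-}$ from $i\mathfrak t^*$ to all of $i\mathfrak k^*$ arbitrarily; the specific choice is immaterial since all norms on a finite-dimensional space are equivalent. Introduce the two continuous functions
\begin{equation*}
f(\rho) := \min_{\rho' \in \PP(\calH(H=c))} \norm{\rho - \rho'}_1, \qquad g(\rho) := \min_{\lambda \in F_0} \norm{\mu_K(\rho) - \lambda},
\end{equation*}
with continuity following from $\PP(\calH(H=c))$ and $F_0$ being compact (closed inside compact ambient sets) together with smoothness of $\mu_K$. Then set
\begin{equation*}
\delta(\varepsilon) := \sup\{ f(\rho) : \rho \in \PP(\calH),\; g(\rho) \leq \varepsilon \},
\end{equation*}
which is non-decreasing in $\varepsilon$ and by construction implements the implication claimed in the lemma.

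The content of the statement is then that $\delta(\varepsilon) \searrow 0$ as $\varepsilon \searrow 0$. Suppose not: then there exist $\eta > 0$ and sequences $\varepsilon_n \searrow 0$, $\rho_n \in \PP(\calH)$ with $g(\rho_n) \leq \varepsilon_n$ but $f(\rho_n) \geq \eta$. By compactness of $\PP(\calH)$, pass to a convergent subsequence $\rho_n \to \rho_\infty$, which is again a pure state, say $\rho_\infty = \proj{\psi_\infty}$. Continuity of $g$ gives $g(\rho_\infty) = 0$, and closedness of $F_0$ then forces $\mu_K(\rho_\infty) \in F_0 \subseteq F \cap i\mathfrak t^*_{>0}$. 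So the Selection Rule applies at $\rho_\infty$ and yields $\pi(H)\ket{\psi_\infty} = c\ket{\psi_\infty}$, i.e.\ $\rho_\infty \in \PP(\calH(H=c))$; hence $f(\rho_\infty) = 0$. This contradicts $f(\rho_n) \geq \eta$ via continuity of $f$.

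The one delicate point -- and what I expect to be the main obstacle -- is verifying that the hypotheses of the Selection Rule actually hold at the limit $\rho_\infty$. This is precisely the reason the statement restricts $F_0$ to lie in the relatively open set $F \cap i\mathfrak t^*_{>0}$: were $F_0$ allowed to touch the walls of the Weyl chamber, the limit $\mu_K(\rho_\infty)$ could escape $i\mathfrak t^*_{>0}$, where the $T$-stabilizer argument underlying \autoref{lem:onebody/selection rule} may genuinely break down (the stabilizer can jump on the walls). Taking $F_0$ closed inside the open region $F \cap i\mathfrak t^*_{>0}$ is thus the natural compactness condition that keeps the limit $\mu_K(\rho_\infty)$ inside the regime where the Selection Rule is available, and this is what makes the otherwise completely topological argument go through.
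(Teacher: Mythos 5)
Your proof is correct and follows essentially the same route as the paper: define $\delta(\varepsilon)$ as the extremal value of the trace-norm distance to $\PP(\calH(H=c))$ over the set of states whose moment-map image is $\varepsilon$-close to $F_0$, then derive $\delta(\varepsilon)\searrow 0$ by compactness of $\PP(\calH)$, continuity, closedness of $F_0\subseteq F\cap i\mathfrak t^*_{>0}$, and an application of the Selection Rule at the limit state. The only cosmetic differences are your explicit extension of the norm to $i\mathfrak k^*$ and using a supremum where the paper uses a maximum over the compact preimage $\mu_K^{-1}(F_\varepsilon)$.
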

\begin{proof}
  Set $F_\varepsilon := \{ \mu \in i \mathfrak t^*_+ : \min_{\lambda \in F_0} \norm{\mu - \lambda} \leq \varepsilon \}$.
  Each $F_\varepsilon$ is a closed set, hence $\mu_K^{-1}(F_\varepsilon)$ is a compact subset of $\PP(\calH)$.
  Since moreover $d(\rho) := \min_{\rho' \in \PP(\calH(H=c))} \norm{\rho - \rho'}_1$ is continuous, it follows that the function
  \[\delta(\varepsilon) := \max d(\mu_K^{-1}(F_\varepsilon)) = \max \{ \min_{\rho' \in \PP(\calH(H=c))} \norm{\rho - \rho'}_1 : \rho \in \PP(\calH), \mu_K(\rho) \in F_\varepsilon \}\]
  is well-defined. Clearly, $\delta(\varepsilon)$ is monotonic in $\varepsilon$, and $\delta(0) = 0$ by \autoref{lem:onebody/selection rule}.

  It remains to show that $\delta(\varepsilon) \rightarrow 0$ as $\varepsilon \rightarrow 0$.
  For sake of contradiction, suppose that this is not the case.
  Then there exists a sequence $(\rho_k) \subseteq \PP(\calH)$ and $C > 0$ such that $\min_{\lambda \in F_0} \norm{\mu_K(\rho_k) - \lambda} \rightarrow 0$ while $\delta(\rho_k) \geq C$ for all $k$.
  By compactness, we may pass to a convergent subsequence; let us denote its limit by $\rho_\infty$.
  Then $\mu_K(\rho_\infty) \in F_0$, while $\delta(\rho_\infty) \geq C$.
  This is a contradiction to \autoref{lem:onebody/selection rule}.
\end{proof}

For concrete applications, it might be interesting to obtain explicit bounds of the form $\delta(\varepsilon) \leq L \varepsilon$.
So far this has only been achieved in rather special situations \cite{SchillingGrossChristandl13, Benavides-RiverosGracia-BondiaSpringborg13}.
It might be possible to obtain a general solution by carefully analyzing the local model for symplectic group actions \cite{GuilleminSternberg82, GuilleminSternberg84a, Marle85}. %

\section{Kronecker coefficients, Schur--Weyl duality, and Plethysms}
\label{sec:onebody/kronecker}

In this section we describe more explicitly the representation-theoretic content of \autoref{thm:onebody/kirwan} for Problems~\ref{pro:onebody/qmp} and \ref{pro:onebody/nrep}.

We first consider the case of three distinguishable particles and choose coordinates, so that $\calH = \CC^a \otimes \CC^b \otimes \CC^c$ and $G = \SL(a) \times \SL(b) \times \SL(c)$.
For $\XX = \PP(\calH)$, the ring of regular functions is equal to the ring of all polynomials on $\calH$, so that $R_k(\PP(\calH)) = \Sym^k(\calH)^*$.
Thus we obtain from \autoref{thm:onebody/kirwan} the following description:
\begin{align*}
  \Delta_K(\PP(\calH)) \cap \QQ\Lambda^*_{G,+}
  &= \{ (\alpha, \beta, \gamma) / k : V^a_\alpha \otimes V^b_\beta \otimes V^c_\gamma \subseteq \Sym^k(\CC^{abc}) \} \\
  &= \{ (\alpha, \beta, \gamma) / k : g_{\alpha,\beta,\gamma} > 0 \}.
\end{align*}
where we denote by $g_{\alpha,\beta,\gamma}$ the multiplicity of $V^a_\alpha \otimes V^b_\beta \otimes V^c_\gamma$ in $\Sym^k(\CC^{abc})$.
These multiplicities are known as the \emph{Kronecker coefficients}\index{Kronecker coefficients|textbf}.%
\nomenclature[Rg_alpha,beta,gamma]{$g_{\alpha,\beta,\gamma}$}{Kronecker coefficients}

There is another way of defining the Kronecker coefficients in terms of the symmetric groups $S_k$ that will be useful later.%
\nomenclature[RS_k]{$S_k$}{symmetric group}
For this, we consider the space $(\CC^d)^{\otimes k}$.
The general linear group $\GL(d)$ acts diagonally by tensor powers and the symmetric group $S_k$ acts by permuting the tensor factors; both actions commute.
Therefore, if we decompose $V$ into irreducible representations of $\GL(d)$ then the multiplicity spaces -- which we shall denote by $[\lambda]$ -- are representations of $S_k$.
It can be shown that the irreducible $\GL(d)$-representations that appear are precisely those whose Young diagram $\lambda$ has $k$ boxes and at most $d$ rows.
What is more, the corresponding representations $[\lambda]$ of the symmetric group are in fact irreducible.
For each Young diagram, one obtains a different irreducible representation (which does not depend on the concrete value chosen for $d$), and all the \emph{irreducible representations of the symmetric group}\index{irreducible representations!$S_k$} can be obtained in this way (if $d$ is large enough).%
\nomenclature[Rlambda]{$[\lambda], [\alpha], [\beta], \dots$}{irreducible representations of the symmetric group}
This result is known as \emph{Schur--Weyl duality}\index{Schur--Weyl duality|textbf} (e.g., \cite{CarterSegalMacDonald95}) and it can be compactly stated in the form:
\begin{equation}
\label{eq:onebody/schur-weyl}
  \left( \CC^d \right)^{\otimes k} \cong
  \; \smashoperator{\bigoplus_{\lambda \vdash_d k}} \; V^d_\lambda \otimes [\lambda]
\end{equation}
Now consider the tripartite case, where $d = abc$. Then the symmetric subspace $\Sym^k(\CC^{abc}) \subseteq (\CC^{abc})^{\otimes k}$ corresponds to the trivial representation of $S_k$.
On the other hand we may apply Schur--Weyl duality to each of the subsystems' tensor powers,
\begin{equation*}
  (\CC^{abc})^{\otimes k}
  = (\CC^a)^{\otimes k} \otimes (\CC^b)^{\otimes k} \otimes (\CC^c)^{\otimes k}
  = \; \smashoperator{\bigoplus_{\alpha \vdash_a k, \, \beta \vdash_b k, \, \gamma \vdash_c k}} \;
    V^a_\alpha \otimes V^b_\beta \otimes V^c_\gamma \otimes [\alpha] \otimes [\beta] \otimes [\gamma]
\end{equation*}
It follows that
\begin{equation}
\label{eq:onebody/schur-weyl tripartite}
  \Sym^k(\CC^{abc})
  = \; \smashoperator{\bigoplus_{\alpha \vdash_a k, \, \beta \vdash_b k, \, \gamma \vdash_c k}} \;
    V^a_\alpha \otimes V^b_\beta \otimes V^c_\gamma \otimes \left( [\alpha] \otimes [\beta] \otimes [\gamma] \right)^{S_k}
\end{equation}
Thus the Kronecker coefficients can be equivalently defined as the dimension of the invariant subspace in a triple tensor product of irreducible representations of the symmetric group:
\begin{equation*}
  g_{\alpha,\beta,\gamma} = \dim \, \left( [\alpha] \otimes [\beta] \otimes [\gamma] \right)^{S_k}
\end{equation*}
In particular, we find that each Kronecker coefficient only depends on the triple of Young diagrams rather than the concrete values chosen for $a$, $b$ and $c$ (but of course $a$, $b$ and $c$ have to be chosen at least as large as the number of rows of the Young diagrams).
The role of the Kronecker coefficients for the one-body quantum marginal problem has first been observed in \cite{ChristandlMitchison06} by using the spectrum estimation theorem (cf.\ \cite{Klyachko04, ChristandlHarrowMitchison07} and the proof of \autoref{thm:strong6j/main theorem}).
They also play a fundamental role in representation theory \cite{Fulton97} and in Mulmuley and Sohoni's geometric complexity theory approach to the $\Pclass$ vs.\ $\NP$ problem in computer science \cite{MulmuleySohoni01, MulmuleySohoni08, Mulmuley07, BuergisserLandsbergManivelEtAl11} (see \autoref{sec:mul/summary}), and they occur in the ``quantum method of types'' \cite{Harrow05}.
In \autoref{ch:multiplicities} we will give an efficient algorithm for their computation.

\bigskip

There is a different, asymmetric way of defining the Kronecker coefficients that is also quite useful.
For this, we recall that the irreducible representations of the symmetric group are self-dual, i.e., $[\lambda] \cong [\lambda]^*$ \cite[\S{}2.1]{JamesKerber81}. Therefore,
\begin{equation*}
    ([\alpha] \otimes [\beta] \otimes [\gamma])^{S_k}
  = ([\alpha]^* \otimes [\beta] \otimes [\gamma])^{S_k}
  = \Hom_{S_k}([\alpha], [\beta] \otimes [\gamma]),
\end{equation*}
where we have used the general notation
\begin{equation}
\label{eq:onebody/linear maps}
  \Hom_G(V, W) := \{ \phi \colon V \rightarrow W \text{ linear} : \phi(g v) = g \phi(v) \quad (\forall v \in V, g \in G) \}
\end{equation}
for the space of $G$-equivariant linear maps, or \emph{$G$-linear maps}\index{$G$-linear map} between two representations $V$ and $W$.%
\nomenclature[RHom_G(V, W)]{$\Hom_G(V, W)$}{space of $G$-linear maps between two representations $V$ and $W$}
For irreducible $V$ and $W$, \emphindex{Schur's lemma} asserts that
\begin{equation}
\label{eq:onebody/schurs lemma}
  \dim \Hom_{S_k}(V, W) = \begin{cases}
    1 & \text{if } V \cong W \\
    0 & \text{otherwise}.
  \end{cases}
\end{equation}
It follows that $g_{\alpha,\beta,\gamma}$ can also be defined as the multiplicity of $[\alpha]$ in the tensor product $[\beta] \otimes [\gamma]$ of two irreducible representations of the symmetric group:
\begin{equation}
\label{eq:onebody/kronecker asymmetric}
  [\beta] \otimes [\gamma] = \bigoplus_{\alpha \vdash k} g_{\alpha,\beta,\gamma} [\alpha]
\end{equation}
In particular, for $[\gamma] = \mathbf 1$ the trivial representation of $S_k$ we obtain that $g_{\alpha,\beta,\mathbf 1} = \delta_{\alpha,\beta}$.
In view of Schur--Weyl duality, this implies that
\begin{equation}
\label{eq:onebody/schur-weyl bipartite}
  \Sym^k(\CC^a \otimes \CC^b)
  = \; \smashoperator{\bigoplus_{\alpha \vdash_a k, \, \beta \vdash_b k}} \; \delta_{\alpha,\beta} \, V^a_\alpha \otimes V^b_\beta
  = \; \smashoperator{\bigoplus_{\alpha \vdash_{\min \{a,b\}} k}} \; V^a_\alpha \otimes V^b_\alpha.
\end{equation}
This equation corresponds to the one-body quantum marginal problem for $n=2$ particles and is therefore the bipartite counterpart of \eqref{eq:onebody/schur-weyl tripartite}.
The fact that the irreducible representations of the two factors are perfectly paired is the representation-theoretic version of the fact that the marginals of a bipartite pure state are isospectral (\autoref{lem:onebody/two})---indeed, the latter is a direct consequence of \eqref{eq:onebody/schur-weyl bipartite} and \autoref{thm:onebody/kirwan}.

\bigskip

In the case of fermions, we are similarly led to study the decomposition of $R_k^*(\calH) = \Sym^k \left( \Alt^n \CC^d \right)$ into irreducible representations.
This is an instance of a \emph{plethysm}, which is\ more generally defined as the composition of Schur functors $\mathcal H \mapsto V_{\GL(\mathcal H),\lambda}$ (see, e.g., \cite{MacDonald95}).

\subsection*{Sums of Matrices}
\index{sums of matrices}

We conclude this section by mentioning an interesting related problem that is amenable to similar methods.
In \cite{Weyl12}\index{Weyl's problem}, Weyl considered the relation between the eigenvalues of Hermitian matrices $A$ and $B$ and their sum $A+B$ and derived first non-trivial constraints.
The general solution was famously conjectured by \cite{Horn62} in terms of certain linear inequalities.
Weyl's question can be phrased in the general framework of geometric invariant theory:
We want to find triples of coadjoint orbits for $K = \U(d)$ such that $\calO_{K,\alpha} + \calO_{K,\beta} \ni \calO_{K,\lambda}$.
For fixed integral $\alpha$ and $\beta$, the solution can be obtained as the moment polytope for the diagonal $K$-action on $\calO_{K,\alpha} \times \calO_{K,\beta}$, which can be considered as a projective subvariety of $\PP(V^d_\alpha \otimes V^d_\beta)$ (\autoref{lem:onebody/borel weil}) \cite{Klyachko98, Knutson00, Klyachko04}.
From the perspective of representation theory, this is related to the decomposition of tensor products $V^d_{k\alpha} \otimes V^d_{k\beta}$ of irreducible $\GL(d)$-representations.
The corresponding multiplicities are known as the \emphindex{Littlewood--Richardson coefficients} for $\SU(d)$ \cite{Lidskii82}.
For $d = 2$, the decomposition is multiplicity-free; in physics it is known as the \emphindex{Clebsch--Gordan series}.
A necessary and sufficient set of inequalities was first obtained in \cite{Klyachko98} by the same algebraic-geometric methods that can be used to solve the one-body quantum marginal problem.
Shortly after, \emphindex{Horn's conjecture} was established in \cite{KnutsonTao99} (cf.\ \cite{Fulton00, KnutsonTao01, KnutsonTaoWoodward03}).
Strikingly, the one-body quantum marginal problem subsumes the problem of characterizing the eigenvalues of sums of Hermitian matrices in a precise technical sense \cite{Klyachko04}, and an analogue statement holds on the level of representation theory \cite{Littlewood58, Murnaghan55}.
We will later generalize this result and in particular show that the problem of determining the relation between the eigenvalues of three matrices $A$, $B$, $C$ and their partial sums can similarly be seen as a special case of a more general quantum marginal problem with overlapping marginals (\autoref{sec:strong6j/sums of matrices}).

\subsection*{Geometric Quantization}
\index{geometric quantization|textbf}\index{quantization}

Before we proceed, we offer a word of caution for people acquainted with the theory of geometric quantization \cite{GuilleminSternberg77, GuilleminSternberg84, Woodhouse92}.
Although we formally use a similar mathematical framework as in geometric quantization, the physical interpretation is markedly different.
Unlike in geometric quantization, our quantum states do \emph{not} arise via some quantization procedure from a classical symplectic phase space.
On the contrary, in the mathematical modeling of the quantum marginal problem the projective space of pure states corresponds to the classical phase space, while its description in terms of the representations that occur in the ring of regular functions can be seen as its ``quantization''.
The ``semiclassical limit''\index{semiclassical limit} $k \rightarrow \infty$ in which we recover the description of the moment polytope plays a purely purely mathematical role (cf.\ \autoref{sec:mul/asymptotics}).

\chapter{Solving The One-Body Quantum Marginal Problem}
\label{ch:kirwan}

In this chapter we review some of the history of the one-body quantum marginal problem that culminated in Klyachko's general solution and give some concrete examples.
We then present a different approach to the problem of computing moment polytopes for projective space, which we have seen subsumes the one-body quantum marginal problem and its variants.
Significantly, our geometric approach completely avoids many technicalities that have appeared in previous solutions to the problem, such as Schubert calculus, and it can be readily implemented algorithmically.
We illustrate our method with a number of illustrative examples.

The results in this chapter are based on unpublished joint work in progress with Mich\`{e}le Vergne.

\subsection*{Prior Work and Examples}

The history of the quantum marginal problem goes back at least to the late 1950s, where it had been observed that the ground state energy of a two-body Hamiltonian is a function of the two-body reduced density matrices only \cite{Loewdin55, Mayer55}.
The main focus was therefore on the \emph{two-body $n$-representability problem}\index{n-representability problem@$n$-representability problem!two-body}---given a two-body density matrix, is it compatible with a state of $n$ fermions \cite{Coleman63, Ruskai69, ColemanYukalov00, Coleman01}?
Some results had also been obtained for the one-body marginals.
For instance, Coleman proved that a first-order density matrix $\gamma_1$ is compatible with a (not necessarily pure) state of $n$ fermions if and only if the natural occupation numbers do not exceed $1$---that is, if and only if the Pauli principle is satisfied \cite[Theorem 9.3]{Coleman63}.

In the 1970s, it was shown by Borland and Dennis that for three fermions with six-dimensional single-particle Hilbert space, $\calH = \Alt^3 \CC^6$, the following conditions on the eigenvalues $\lambda_1 \geq \dots \geq \lambda_6$ of the first-order density matrix $\gamma_1 = 3 \rho_1$ are both necessary and sufficient for the existence of a global pure state \cite{BorlandDennis72},
\begin{equation}
\label{eq:kirwan/borland-dennis}
  \begin{aligned}
    \lambda_1 + \lambda_6 &= \lambda_2 + \lambda_5 = \lambda_3 + \lambda_4 = 1 \\
    \lambda_5 + \lambda_6 &\geq \lambda_4
  \end{aligned}
\end{equation}
(see \autoref{fig:kirwan/borland-dennis}).
This was perhaps the first non-trivial solution of \autoref{pro:onebody/nrep}.
Remarkably, the resulting polytope is only three-dimensional; this coincides with the fact that any pure state can be written as a linear combination of only 8 Slater determinants as was proved by Ruskai and Kingsley (while $\dim i \mathfrak t^* = 5$ and $\dim \calH = 20$).
It is interesting to observe that the equality $\lambda_1 + \lambda_6 = 1$ strengthens the Pauli principle $\lambda_1 \leq 1$.

In the context of quantum information theory, Higuchi, Sudbery and Szulc have first considered the one-body quantum marginal problem, \autoref{pro:onebody/qmp}, for $n$ qubits, $\calH = (\CC^2)^{\otimes n}$ \cite{HiguchiSudberySzulc03}. They showed that the following \emphindex{polygonal inequalities} on the maximal eigenvalues $\lambda_{k,1}$ are both necessary and sufficient for the compatibility with a global pure state,
\begin{equation}
\label{eq:kirwan/polygonal}
  \sum_{k \neq l} \lambda_{k,1} \leq (n-2) + \lambda_{l,1}
  \quad (\forall l=1,\dots,n)
\end{equation}
(see \autoref{fig:kirwan/three-qubits}).
In fact, these inequalities hold for any multipartite quantum state, but they are in general not sufficient for compatibility (see \autoref{prp:strong6j/embedding converse} for an elementary proof based on the variational principle).
In the meanwhile, the three-qutrit polytope had already been computed by Franz \cite{Franz02}, as was only later recognized.

Subsequently, Bravyi solved the case of mixed states of two qubits by a remarkable explicit argument \cite{Bravyi04}. Here, the necessary and sufficient conditions are given by
\begin{equation}
\label{eq:kirwan/bravyi}
  \begin{aligned}
    \max \{ \lambda_{A,1}, \lambda_{B,1} \} &\leq \lambda_{AB,1} + \lambda_{AB,2} \\
    \lambda_{A,1} + \lambda_{B,1} &\leq 1 + \lambda_{AB,1} - \lambda_{AB,4}  \\
    \abs{\lambda_{A,1} - \lambda_{B,1}} &\leq \min \{ \lambda_{AB,1} - \lambda_{AB,3}, \lambda_{AB,2} - \lambda_{AB,4} \}
  \end{aligned}
\end{equation}
(see \autoref{fig:kirwan/bravyi}). We remark that this scenario is equivalent to the pure-state problem for $\CC^2 \otimes \CC^2 \otimes \CC^4$ as was explained in \autoref{ch:onebody} (cf.\ \autoref{tab:onebody/summary}).

\begin{figure}[p]
  \centering
  \includegraphics[height=4cm]{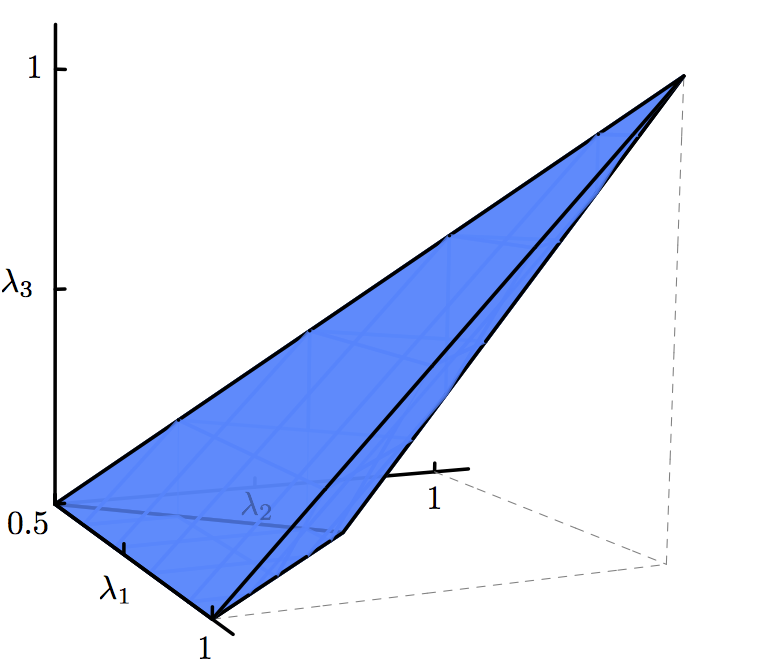}
  \caption[Borland--Dennis polytope]{\emph{Borland--Dennis polytope.}
    The solution of the one-body $n$-representability problem for three fermions with local dimension six, as given by the Borland--Dennis inequalities \eqref{eq:kirwan/borland-dennis}. The vertex $(1,1,1)$ corresponds to a single Slater determinant.}
  \label{fig:kirwan/borland-dennis}
  \includegraphics[height=4cm]{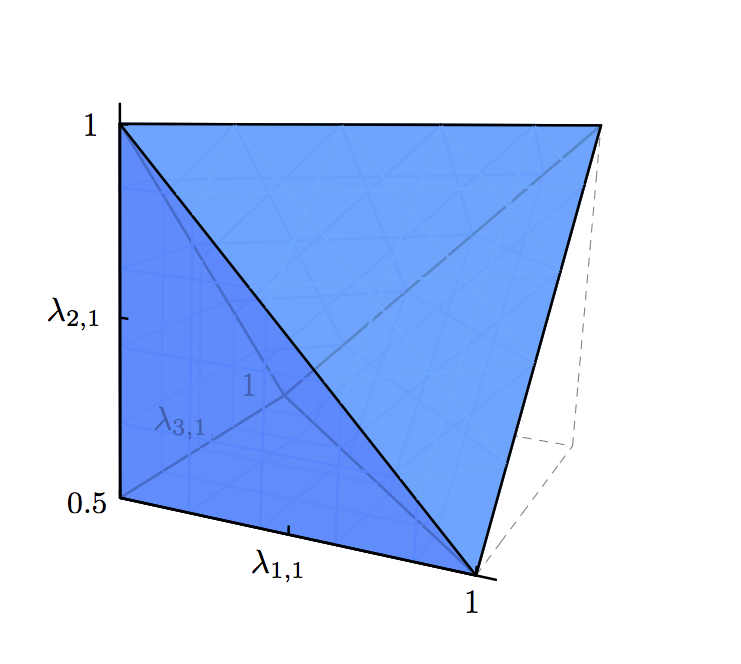}
  \caption[Three-qubit polytope]{\emph{Three-qubit polytope.}
    The solution of the one-body quantum marginal problem for pure states of three qubits, as given by the polygonal inequalities \eqref{eq:kirwan/polygonal} for $n=3$.}
  \label{fig:kirwan/three-qubits}
  \includegraphics[height=4cm]{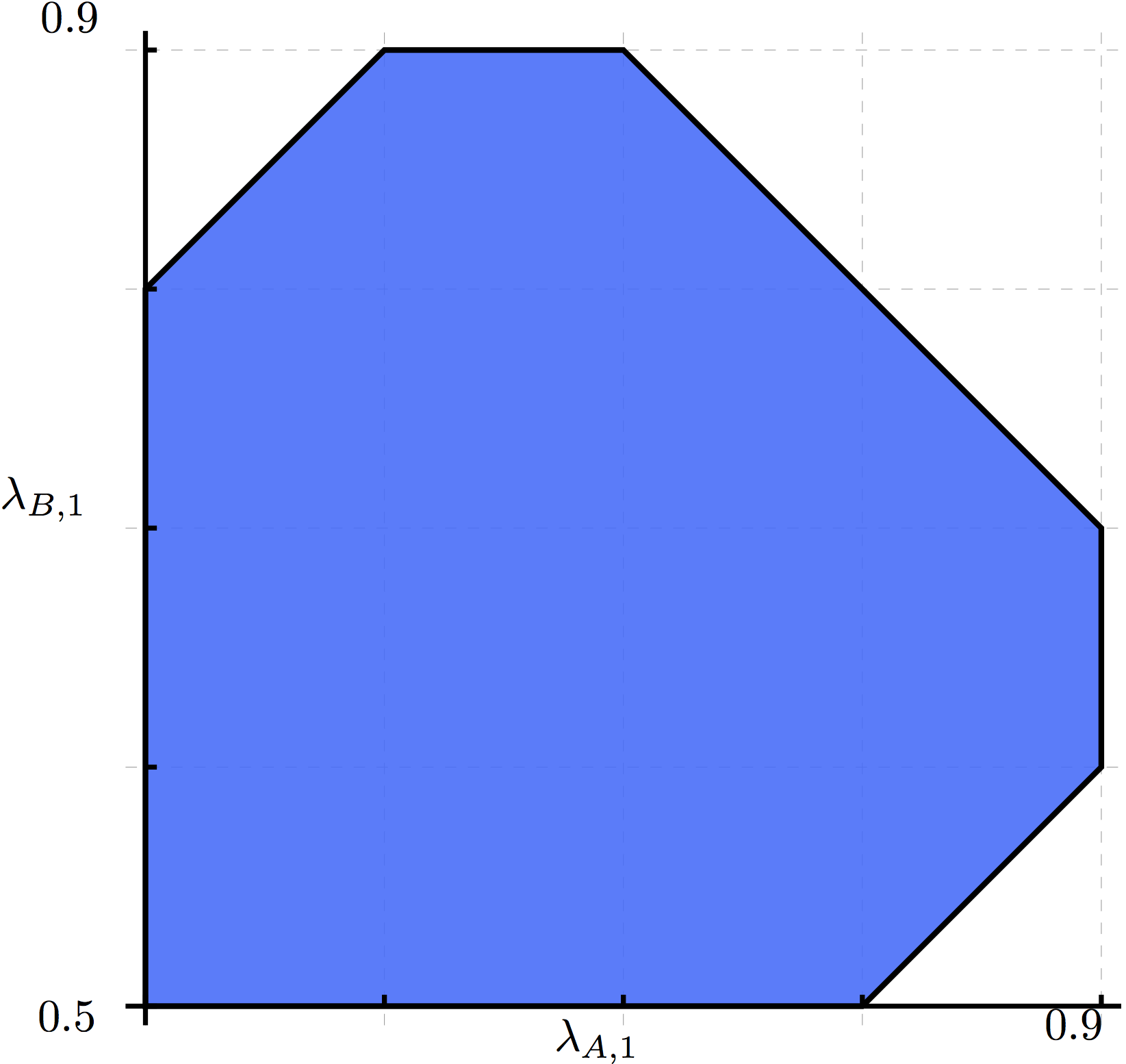}
  \caption[Bravyi polytope]{\emph{Bravyi's polytope},
    corresponding to his solution \eqref{eq:kirwan/bravyi} of the one-body quantum marginal problem for two qubits and global spectrum $\lambda_{AB} = (0.6, 0.3, 0.1, 0)$.}
  \label{fig:kirwan/bravyi}
\end{figure}

The connection of the one-body quantum marginal problem to representation theory was first observed in \cite{ChristandlMitchison06} by using quantum information methods rather than the theory of \autoref{sec:onebody/git} (cf.\ \cite{ChristandlHarrowMitchison07}).
Shortly after, a completely general solution was given by Klyachko both for distinguishable particles \cite{Klyachko04} and for fermions \cite{KlyachkoAltunbulak08, Altunbulak08}.
Almost simultaneously, Daftuar and Hayden had published a solution to the ``one-sided'' problem that concerns the constraints between the eigenvalues of $\rho_{AB}$ and $\rho_A$ \cite{DaftuarHayden04}.
Both results build on previous work by Berenstein and Sjamaar \cite{BerensteinSjamaar00}, who used geometric invariant theory to study the moment polytope for projections of coadjoint orbits; this latter work in turn generalizes techniques from Klyachko's seminal paper on Weyl's problem \cite{Klyachko98} (see the discussion at the end of the preceding chapter).
We refer to \cite{Klyachko04, Knutson09} for eloquent expositions of the method.
More recently, Ressayre has refined the result of Berenstein and Sjamaar to give an irredundant set of necessary and sufficient inequalities in a very general mathematical setup \cite{Ressayre10, Ressayre10a}.
We remark that a variant of the quantum marginal problem for Gaussian states has been considered in \cite{EisertGross08} (mathematically, this scenario is covered by a more general convexity theorem for non-compact manifolds with proper moment maps).

\section{Summary of Results}

Throughout this chapter we will assume for simplicity that the moment polytope $\Delta_K := \Delta_K(\PP(\calH))$ is of maximal dimension.%
\nomenclature[TDelta_K]{$\Delta_K$}{moment polytope for the $K$-action on projective space}
For the quantum marginal problem, this assumption is usually satisfied except for two distinguishable particles (which we have already solved in \autoref{ch:onebody}) and in the fermionic Borland--Dennis scenario $\Alt^3 \CC^6$ (see \eqref{eq:kirwan/borland-dennis} above), and it is easily checked in practice. %
Since the moment polytope is a convex polytope, it can be defined by a finite list of linear inequalities
\begin{equation*}
  \Delta_K = \{ \lambda \in i\mathfrak t^* : (\lambda, H_1) \geq c_1, \dots, (\lambda, H_f) \geq c_f \},
\end{equation*}
where $H_1, \dots, H_f$ are the \emph{normal vectors}\index{moment polytope!normal vector} of the finitely many \emph{facets}\index{moment polytope!facet} of $\Delta_K$.
All our normal vectors will always be pointing inwards.
Since the moment polytope is obtained by intersecting $\mu_K(\PP(\calH))$ with the positive Weyl chamber $i \mathfrak t^*_+$, which is a maximal-dimensional polyhedral cone, some of the facets of $\Delta_K$ might be subsets of facets of $i \mathfrak t^*_+$, and we call those the \emph{trivial facets}\index{moment polytope!trivial facet} of the moment polytope.

Our approach in the following is based on analyzing the non-trivial facets in terms of the local differential geometry of their preimages up to second order. Conceptually, such an analysis should be sufficient since the moment map is locally quadratic.

In \autoref{sec:kirwan/torus} we start by studying the moment map to first order.
It is well-known that interior points of non-trivial facets are critical values for $(\mu_K, H)$, where $H$ is the normal vector of the facet. Indeed, this is equivalent to the selection rule from \autoref{sec:onebody/consequences}.
We show that as a consequence any non-trivial facet of $\Delta_K$ is necessarily contained in a hyperplane spanned by weights of the representation $\calH$. %
This already reduces the problem to a finite set of candidates.

In \autoref{sec:kirwan/facets} we then consider the Hessian of the moment map.
If a state is mapped into the interior of a non-trivial facet of the polytope then this implies a positive semidefiniteness of the corresponding Hessian in certain tangent directions.
We exploit this fact to obtain another necessary condition that is satisfied by non-trivial facets $(-, H) \geq c$ of the moment polytope.
To state it, let $\mathfrak n_-(H < 0)$ denote the direct sum of negative root spaces $\mathfrak g_\alpha$ with $(\alpha, H) < 0$ and $\calH(H < c)$ the sum of eigenspaces of $\pi(H)$ with eigenvalue smaller than $c$.
Then we show that there necessarily exists an eigenvector $\ket\psi \in \calH$ of $\pi(H)$ with eigenvalue $c$ such that the map
\begin{equation*}
  \mathfrak n_-(H < 0) \rightarrow \calH(H < c), X \mapsto \pi(X) \ket\psi
\end{equation*}
is an isomorphism. %
Conversely, we prove that any inequality $(-,H) \geq c$ that satisfies the above two necessary conditions is a valid inequality of the moment polytope. %
We thus obtain a complete description of the moment polytope $\Delta_K$ in terms of what we call \emph{inequalities of Ressayre type} (\autoref{dfn:kirwan/ressayre} and \autoref{thm:kirwan/main}).

Our notion of a Ressayre-type inequality is closely related to Ressayre's notion of a \emph{dominant pair}, and our approach is inspired by his ideas \cite{Ressayre10, Ressayre10a}.
Our description of the moment polytope is also related to a result by Brion \cite{Brion99}, as we explain further below.
However, while the more refined results of \cite{Brion99, Ressayre10, Ressayre10a} are established using high-powered algebraic geometry, we proceed in essence by a straightforward differential-geometric analysis, combined with \autoref{thm:onebody/kirwan}.

In \autoref{sec:kirwan/examples}, we illustrate the method with some examples.
We remark that, crucially, our description of the moment polytope obtained in \autoref{thm:kirwan/main} can be completely automatized:
It is straightforward to determine all inequalities of Ressayre type in a mechanical fashion, and hence also on a computer.

\section{The Torus Action}
\label{sec:kirwan/torus}

We use the same notation and conventions as in \autoref{ch:onebody}.
Thus let $G$ be a connected reductive algebraic group, $K \subseteq G$ a maximal compact subgroup, and $\Pi \colon G \rightarrow \GL(\calH)$ a representation on a Hilbert space $\calH$ with a $K$-invariant inner product; we denote the infinitesimal representation by $\pi \colon \mathfrak g \rightarrow \mathfrak{gl}(\calH)$.
Our object of interest is the moment polytope $\Delta_K := \Delta_K(\PP(\calH))$ of the projective space associated with the representation $\calH$.
As stated above, we assume throughout this chapter that $\Delta_K$ is of maximal dimension, i.e., $\dim \Delta_K = \dim i \mathfrak t^*$.
This is the case whenever there exists a pure state with finite stabilizer, as follows from \eqref{eq:onebody/moment map range} and the local submersion theorem.
The facets of $\Delta_K$ are of codimension one in $i \mathfrak t^*$ and may be identified with the defining linear inequalities $(-,H) \geq c$ of the moment polytope.

\begin{dfn}
  A facet of the moment polytope is \emph{trivial}\index{moment polytope!trivial facet|textbf} if it is of the form $(-, Z_\alpha) \geq 0$ for some positive root $\alpha \in R_{G,+}$.
  Otherwise, the facet is called \emph{non-trivial}\index{moment polytope!non-trivial facet}.
\end{dfn}

Non-trivial facets have also been called ``general'' in the literature \cite{Brion99}.
We record the following straightforward observation:

\begin{lem}
\label{lem:kirwan/non-trivial facets}
  Any non-trivial facet of $\Delta_K$ meets the relative interior $i \mathfrak t^*_{>0}$ of the positive Weyl chamber.
\end{lem}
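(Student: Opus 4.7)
The plan is to argue by contradiction: assume $F$ is a non-trivial facet of $\Delta_K$ that misses $i\mathfrak t^*_{>0}$, and derive that $F$ must in fact coincide with a trivial facet.

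Concretely, suppose $F$ does not meet the relative interior $i\mathfrak t^*_{>0}$. Since $\Delta_K \subseteq i\mathfrak t^*_+$, the facet $F$ is then entirely contained in the boundary of the positive Weyl chamber, which is the finite union of the walls $W_\alpha := \{\lambda \in i\mathfrak t^* : (\lambda, Z_\alpha) = 0\}$ as $\alpha$ ranges over $R_{G,+}$. The first step is to promote this covering into containment in a single wall: because $\Delta_K$ is assumed of maximal dimension, $F$ is of dimension $\dim i\mathfrak t^* - 1$ and hence has non-empty relative interior in its affine hull. A convex set with non-empty relative interior in an affine hyperplane cannot be covered by finitely many proper closed subsets of itself of the form $F \cap W_\alpha$ (each of which has empty relative interior in $F$ unless $\aff F = W_\alpha$); so there must exist a positive root $\alpha_0$ with $F \subseteq W_{\alpha_0}$.

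The second step identifies this with a trivial facet. Since $(-,Z_{\alpha_0}) \geq 0$ is a valid inequality on all of $\Delta_K$, the intersection $\Delta_K \cap W_{\alpha_0}$ is a face of the polytope containing $F$. Because $F$ already has codimension one in $i\mathfrak t^*$, this face must itself be of codimension one and therefore coincides with $F$. But that makes $F$ exactly the face cut out by the inequality $(-, Z_{\alpha_0}) \geq 0$, i.e.\ a trivial facet, contradicting our assumption.

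The argument is essentially a convex-geometry observation about the moment polytope sitting inside the Weyl chamber, and it requires no input from symplectic or representation-theoretic machinery beyond the fact that $\Delta_K \subseteq i\mathfrak t^*_+$ and that $\dim \Delta_K = \dim i\mathfrak t^*$. The only slightly non-trivial point is the Baire-type step promoting a covering by finitely many hyperplanes to containment in one; this is what I would state most carefully, while the rest is immediate.
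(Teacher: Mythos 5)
Your proof is correct and follows essentially the same route as the paper: containment of the facet in the finite union of walls $\{(\cdot,Z_\alpha)=0\}$, promotion to a single wall via the codimension-one (maximal-dimension) assumption, and then identification with the trivial facet cut out by $(\cdot,Z_{\alpha_0})\geq 0$. Your last step phrases the conclusion via the face $\Delta_K\cap W_{\alpha_0}$ rather than ruling out the inward normal $-Z_{\alpha_0}$ as the paper does, but this is only a cosmetic difference and both hinge on $\dim\Delta_K=\dim i\mathfrak t^*$.
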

\begin{proof}
  Any facet of $\Delta_K$ that does not intersect the relative interior of the positive Weyl chamber is fully contained in
  \begin{equation*}
    i \mathfrak t^*_+ \setminus i \mathfrak t^*_{>0} = \bigcup_{\alpha \in R_{G,+}} \{ \xi \in i\mathfrak t^* : (\xi, Z_\alpha) = 0 \},
  \end{equation*}
  which is a finite arrangement of hyperplanes.
  Since by assumption the facet is of codimension one, it has to be contained in a single one of these hyperplanes.
  Thus its normal vector is either $\pm Z_\alpha$ for some positive root $\alpha$.
  If it was $-Z_\alpha$ then the moment polytope would be strictly contained in the hyperplane $(-, Z_\alpha) = 0$, and therefore not of maximal dimension, in contradiction with our assumption.
  We conclude that the facet is of the form $(-, Z_\alpha) \geq 0$, and therefore trivial.
\end{proof}

We now consider the moment map for the action of the maximal torus $T \subseteq K$\index{moment map!maximal torus},%
\nomenclature[Tmu_K]{$\mu_T$}{moment map for the $T$-action}
\begin{equation}
\label{eq:kirwan/abelian moment map}
  \mu_T \colon \PP(\calH) \rightarrow i \mathfrak t^*,
  \quad
  (\mu_T(\rho), X) := \tr \rho \, \pi(X).
\end{equation}
For the quantum marginal problem, this amounts to considering diagonal entries rather than eigenvalues.
Let $\calH = \bigoplus_{\omega \in \Omega} \calH_\omega$ be the decomposition of $\calH$ into weight spaces, and $\rho = \proj\psi$ a pure state with $\ket\psi = \sum_\omega \psi_\omega \ket\omega$ decomposed accordingly.
Then $\mu_T$ has the following concrete description:
\begin{equation}
  \label{eq:kirwan/abelian convex combination}
  \mu_T(\rho) = \sum_\omega \abs{\psi_\omega}^2 \omega
\end{equation}
(cf.\ the proof of \autoref{thm:onebody/kirwan}).
Observe that $\mu_T(\rho)$ is a convex combination of weights.
It follows that the ``Abelian'' moment polytope $\Delta_T := \Delta_T(\PP(\calH))$ of $\PP(\calH)$ is precisely equal to the convex hull of the set of weights; it is maximal-dimensional since it contains $\Delta_K$.%
\nomenclature[TDelta_T]{$\Delta_T$}{moment polytope for the $T$-action on projective space}
More generally, if $\Omega' \subseteq \Omega$ is a subset of weights and $\calH_{\Omega'} := \bigoplus_{\omega \in \Omega'} \calH_\omega$ then $\Delta_T(\PP(\calH_{\Omega'})) = \conv \Omega'$.
For the next lemma recall that a \emphindex{critical point} of a smooth map $f \colon M \rightarrow M'$ is a point $m \in M$ where the differential $df\big|_m$ is not surjective; a \emphindex{critical value} is the image $f(m)$ of a critical point \cite{Lee13}.

\begin{lem}
  \label{lem:kirwan/abelian critical}
  The set of critical values of $\mu_T$ is equal to the union of the codimension-one convex hulls of subsets of weights.
\end{lem}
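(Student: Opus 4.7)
The plan is to compute the differential of $\mu_T$ at an arbitrary point $\rho = \proj\psi \in \PP(\calH)$ and characterize exactly when it fails to be surjective. Let $\ket\psi = \sum_{\omega \in \Omega} \ket{\psi_\omega}$ be the decomposition with $\ket{\psi_\omega} \in \calH_\omega$, so that \eqref{eq:kirwan/abelian convex combination} reads $\mu_T(\rho) = \sum_\omega \|\psi_\omega\|^2 \omega$, and write $S := \supp \psi = \{\omega \in \Omega : \ket{\psi_\omega} \neq 0\}$. Using \eqref{eq:onebody/tangent space}, an arbitrary tangent vector at $\rho$ is of the form $V = \ketbra\phi\psi + \ketbra\psi\phi$ with $\ket\phi = \sum_\omega \ket{\phi_\omega} \in \ket\psi^\perp$, and differentiating \eqref{eq:kirwan/abelian moment map} yields
\begin{equation*}
  d\mu_T|_\rho(V) = \sum_{\omega} 2\Re \braket{\psi_\omega | \phi_\omega} \, \omega.
\end{equation*}

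The key observation I would then make is that, as $\ket\phi$ varies over $\ket\psi^\perp$, the real coefficients $c_\omega := 2\Re\braket{\psi_\omega|\phi_\omega}$ are unrestricted for $\omega \in S$ (one may take $\ket{\phi_\omega} = (c_\omega / 2\|\psi_\omega\|^2) \ket{\psi_\omega}$), vanish automatically for $\omega \notin S$, and satisfy the single linear constraint $\sum_\omega c_\omega = 0$ imposed by orthogonality. Hence the image of $d\mu_T|_\rho$ coincides with the linear span of $\{\omega - \omega' : \omega, \omega' \in S\}$, which is the translate of $\aff(S)$ to the origin. Consequently, $\rho$ is a critical point of $\mu_T$ if and only if the affine hull of $S$ has positive codimension in $i\mathfrak t^*$, in which case the critical value $\mu_T(\rho) = \sum_{\omega \in S} \|\psi_\omega\|^2 \omega$ lies in $\conv(S)$.

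For the converse I would show that every $p \in \conv(S)$ with $\aff(S) \subsetneq i\mathfrak t^*$ is in fact realized as a critical value: by Carath\'eodory's theorem one may write $p = \sum_{\omega \in S'} p_\omega \omega$ with $S' \subseteq S$ and all $p_\omega > 0$, and then $\ket\psi := \sum_{\omega \in S'} \sqrt{p_\omega} \ket{v_\omega}$, for arbitrary unit vectors $\ket{v_\omega} \in \calH_\omega$, defines a pure state with $\supp\psi = S'$, which is again affinely deficient, so that $\rho = \proj\psi$ is a critical point mapping to $p$. To match the wording of the lemma, I would finally observe that because $\Delta_T = \conv(\Omega)$ is full-dimensional (a consequence of the standing assumption on $\Delta_K$), any affinely deficient subset $S \subseteq \Omega$ can be enlarged by successively adjoining weights from $\Omega$ that lie outside the current affine hull, eventually yielding a superset whose convex hull has codimension exactly one. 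Hence the union of $\conv(S)$ over all affinely deficient $S \subseteq \Omega$ coincides with the union of the codimension-one convex hulls of subsets of weights.

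No deep obstacle is anticipated here; the only point requiring care is the identification of the image of $d\mu_T|_\rho$ with the translate of $\aff(S)$, which hinges on the fact that the orthogonality condition $\braket{\psi|\phi} = 0$ imposes only the single constraint $\sum_\omega c_\omega = 0$ and no further restriction on the individual $c_\omega$, thanks to the freedom to adjust the $\ket{\phi_\omega}$ independently within each weight space.
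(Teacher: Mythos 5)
Your proof is correct, and its overall skeleton matches the paper's: characterize the critical points of $\mu_T$ through the support $S$ of the weight decomposition of $\ket\psi$, conclude that a critical value lies in $\conv(S)$ with $S$ affinely deficient, enlarge $S$ to a subset whose hull has codimension exactly one (using that $\Delta_T$ is full-dimensional), and conversely realize every point of a codimension-one hull by a state supported on those weights. The one place where you genuinely diverge is the criticality criterion itself: the paper obtains it dually, invoking the identity $d(\mu_K,iX)\big|_\rho = \omega(X_\rho,-)$ from \eqref{eq:onebody/moment map property} and non-degeneracy of the Fubini--Study form to conclude that $\rho$ is critical iff some $0 \neq X \in \mathfrak t$ satisfies $(\omega-\omega')(X)=0$ for all $\omega,\omega'$ in the support, whereas you compute the image of $d\mu_T\big|_\rho$ directly in the tangent-space model \eqref{eq:onebody/tangent space} and identify it with the direction space of $\aff(S)$. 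The two routes give the same criterion; yours is more elementary and self-contained (and you correctly note that the complex orthogonality $\braket{\phi|\psi}=0$ only constrains the real parts via $\sum_\omega c_\omega = 0$, verified by your explicit choice of $\ket{\phi_\omega}$ proportional to $\ket{\psi_\omega}$), while the paper's is shorter because it recycles the general moment-map identity already established in the background chapter. Likewise, your Carath\'eodory construction for the converse is just a concrete instance of the paper's remark that $\Delta_T(\PP(\calH_{\Omega'})) = \conv \Omega'$ consists of critical values; both are fine.
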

\begin{proof}
  Let $\rho = \proj\psi \in \PP(\calH)$ with weight decomposition $\ket\psi = \sum_\omega \psi_\omega \ket\omega$.
  By \eqref{eq:kirwan/abelian convex combination}, $\mu_T(\rho)$ is a convex combination of weights in $\Omega_\rho := \{ \omega : \psi_\omega \neq 0 \}$.
  By \eqref{eq:onebody/moment map property} and non-degeneracy of the symplectic form, $\rho$ is a critical point if and only if there exists $0 \neq X \in \mathfrak t$ such that $X_\rho = [\pi(X), \rho] = 0$, i.e., if and only if $(\omega - \omega')(X) = 0$ for any two $\omega \neq \omega' \in \Omega_\rho$.
  It follows that $\rho$ is a critical point if and only if the convex hull of $\Omega_\rho$ is of positive codimension.

  In particular, any critical value is contained in a convex hull of weights of codimension one, since we may always add additional weights.
  Conversely, if $\Omega' \subseteq \Omega$ is a subset of weights that spans a convex hull of codimension one then $\Delta_T(\PP(\calH_{\Omega'})) = \conv \Omega'$ consists of critical values.
\end{proof}

We now derive a basic necessary condition that cuts down the defining inequalities of the moment polytope to a finite set of candidates. %

\begin{lem}
\label{lem:kirwan/admissible}
  Any non-trivial facet of $\Delta_K$ is contained in an affine hyperplane spanned by a subset of weights.
\end{lem}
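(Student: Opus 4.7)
The plan is to combine the selection rule (Lemma \ref{lem:onebody/selection rule}) with the explicit description \eqref{eq:kirwan/abelian convex combination} of the torus moment map as a convex combination of weights. Let $F$ be a non-trivial facet of $\Delta_K$, cut out by an inequality $(-,H)\geq c$. First, I would invoke Lemma \ref{lem:kirwan/non-trivial facets} to find an interior point $\lambda$ of $F$ lying in $i\mathfrak t^*_{>0}$; more is true, since the relative interior of $F$ is open in the affine hyperplane $\{(-,H)=c\}$ and $i\mathfrak t^*_{>0}$ is open in $i\mathfrak t^*$, the intersection $\relint(F)\cap i\mathfrak t^*_{>0}$ is a relatively open (and by the preceding lemma non-empty) subset of that hyperplane. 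Since $\lambda\in\Delta_K=\mu_K(\PP(\calH))\cap i\mathfrak t^*_+$, I pick a pure state $\rho=\proj\psi$ with $\mu_K(\rho)=\lambda$.

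The second step is to apply the selection rule, which yields $\pi(H)\ket\psi = c\ket\psi$. Decomposing $\ket\psi=\sum_{\omega\in\Omega}\psi_\omega\ket\omega$ along the weight spaces $\calH_\omega$, this means $\psi_\omega=0$ unless $\omega\in\Omega_c:=\{\omega\in\Omega:\omega(H)=c\}$. Because $\mu_K(\rho)\in i\mathfrak t^*$, the torus moment map $\mu_T$ agrees with $\mu_K$ at $\rho$, so formula \eqref{eq:kirwan/abelian convex combination} gives
\begin{equation*}
  \lambda \;=\; \mu_T(\rho) \;=\; \sum_{\omega\in\Omega_c}\abs{\psi_\omega}^2\,\omega \;\in\;\conv(\Omega_c).
\end{equation*}

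The final step is a dimension count. The argument above shows $\relint(F)\cap i\mathfrak t^*_{>0}\subseteq\conv(\Omega_c)$, so $\conv(\Omega_c)$ contains a non-empty relatively open subset of the affine hyperplane $\{(-,H)=c\}$. This forces the affine span of $\Omega_c$ to be all of $\{(-,H)=c\}$, which is precisely the affine hyperplane containing the facet $F$. Hence $F$ lies in an affine hyperplane spanned by a subset of weights. The only step that requires any care is the opening observation that $\relint(F)$ genuinely meets $i\mathfrak t^*_{>0}$; this is not automatic from Lemma \ref{lem:kirwan/non-trivial facets} alone but follows because the complement $i\mathfrak t^*_+\setminus i\mathfrak t^*_{>0}$ is a finite union of hyperplanes $(-,Z_\alpha)=0$, none of which can contain the relative interior of $F$ without either making $F$ trivial (if its normal were a co-root $Z_\alpha$) or contradicting the maximal-dimensionality assumption on $\Delta_K$ (if the normal were $-Z_\alpha$).
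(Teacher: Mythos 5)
Your proof is correct, and it uses the same three ingredients as the paper's own argument: \autoref{lem:kirwan/non-trivial facets}, the selection rule (\autoref{lem:onebody/selection rule}), and the description \eqref{eq:kirwan/abelian convex combination} of $\mu_T$ as a convex combination of weights. The difference is in how the last step is organized. The paper only uses the weaker consequence $d(\mu_T,H)\big|_\rho=0$ of the selection rule: it concludes that every point of the facet lying in $i\mathfrak t^*_{>0}$ is a critical value of $\mu_T$, invokes \autoref{lem:kirwan/abelian critical} to place each such point on \emph{some} hyperplane spanned by weights, and then argues that the facet, being covered by finitely many such hyperplanes, must lie in a single one. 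You instead exploit the full eigenvalue equation $\pi(H)\ket\psi=c\ket\psi$, which forces the state to be supported on the weights $\Omega_c=\{\omega:(\omega,H)=c\}$, so that $\mu_T(\rho)\in\conv(\Omega_c)$; combined with the observation that $\relint(F)\cap i\mathfrak t^*_{>0}$ is a non-empty relatively open subset of $\{(-,H)=c\}$, this identifies the hyperplane concretely as $\aff(\Omega_c)=\{(-,H)=c\}$ and avoids the finite-cover pigeonhole entirely. Your explicit justification that the \emph{relative interior} of the facet meets $i\mathfrak t^*_{>0}$ (either by rerunning the hyperplane-arrangement argument for $\relint(F)$, as you do, or by pushing a point of $F\cap i\mathfrak t^*_{>0}$ slightly toward a relative-interior point) also cleanly fills in what the paper's terse closing sentence glosses over, at the small cost of using the stronger conclusion of the selection rule rather than the more robust critical-value formulation.
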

\begin{proof}
  By \autoref{lem:kirwan/non-trivial facets}, the intersection of any non-trivial facet with the interior of the positive Weyl chamber $i \mathfrak t^*_{>0}$ is non-empty.
  Each point in this intersection is a critical value for $(\mu_K,H) = (\mu_T,H)$ by selection rule (\autoref{lem:onebody/selection rule}), hence of $\mu_T$, and therefore contained in an affine hyperplane spanned by a subset of weights (\autoref{lem:kirwan/abelian critical}).
  Since this is true for all points in the intersection, which contains the relative interior of the facet, it follows that the facet is in fact contained in a single such hyperplane.
\end{proof}

We remark that if $(\vec a,\vec b) \in \RR^{a+b}$ is an \emphindex{extremal edge} in the sense of \cite{Klyachko04} and $\vec c \in \RR^{ab}$ any vector with components $a_i + b_j$ then $H := (\vec a, \vec b,-\vec c)$ is the normal vector of a linear hyperplane spanned by weights of $\CC^a \otimes \CC^b \otimes \CC^{ab}$.
Thus the inequalities produced in \cite{Klyachko04} for the mixed-state problem satisfy the conclusion of the lemma.

\section{Facets of the Moment Polytope}
\label{sec:kirwan/facets}

Throughout this section, let $\rho$ be the preimage of a point $\mu_K(\rho) \in i \mathfrak t^*_{>0}$ on the facet $(-,H) \geq c$ of the moment polytope.
We have seen that the selection rule \autoref{lem:onebody/selection rule} shows that $\rho$ is a critical point of the component $(\mu_K, H)$, and we have used this in \autoref{lem:kirwan/admissible} to gain information on the set of possible facets.

\subsection*{The Hessian}

It is thus natural to continue by studying the \emph{Hessian}\index{moment map!Hessian} of $(\mu_K, H)$, which is a quadratic form $Q(-,-)$ on the tangent space at $\rho$.%
\nomenclature[TQ]{$Q(-,-)$}{Hessian of moment map component at critical point}
Since $\rho$ is a critical point, we can compute it by
\begin{align*}
  Q(X_\rho, X_\rho)
  &= \partial_t^2 \, (\mu_K(e^{Xt} \rho e^{-Xt}), H) \big|_{t=0} \\
  &= \partial_t^2 \, \tr \rho \, e^{-Xt} \pi(H) e^{Xt} \big|_{t=0} \\
  &= \tr \rho \, \left( X^2 \pi(H) - 2 X \pi(H) X + \pi(H) X^2 \right) \\
  &= \tr \rho \, [[\pi(H), X], X]
\end{align*}
for all anti-Hermitian operators $X \in \mathfrak u(\calH)$.
It follows that \cite[(32.8)]{GuilleminSternberg84}
\begin{equation}
\label{eq:kirwan/hessian}
  Q(X_\rho, Y_\rho) = \tr \rho \, [[\pi(H), X], Y] = \omega([\pi(-iH), X]_\rho, Y_\rho).
\end{equation}
for all $X, Y \in \mathfrak u(\calH)$ (which is indeed symmetric in $X$ and $Y$).
We now decompose
\begin{equation}
\label{eq:kirwan/weight space decomposition}
  \calH = \calH(H < c) \oplus \calH(H = c) \oplus \calH(H > c),
\end{equation}
where $\calH(H < c) = \bigoplus_{\omega : (\omega,H) < c} \calH_\omega$ is the sum of the eigenspaces of the Hermitian operator $\pi(H)$ with eigenvalue less than $c$, etc.%
\nomenclature[RH(H<0)]{$\calH(H < c)$, \dots}{sum of $\pi(H)$-eigenspaces with eigenvalue less than $c$, \dots}
Then we have the following interpretation of the \emph{index}\index{index of quadratic form} of the Hessian (the dimension of a maximal subspace on which the quadratic form $Q$ is negative definite):

\begin{lem}
  \label{lem:kirwan/index hessian}
  The index of the Hessian at $\rho$ is equal to the real dimension of $\calH(H<c)$.
\end{lem}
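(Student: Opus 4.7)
The plan is to evaluate the Hessian directly on the tangent-space model \eqref{eq:onebody/tangent space}. Every tangent vector at $\rho=\proj\psi$ has the form $V=\ketbra\phi\psi+\ketbra\psi\phi$ for a unique $\ket\phi\in\ket\psi^\perp$, and I would compute $Q(V,V)=\partial_t^2\,(\mu_K(\rho_t),H)\big|_{t=0}$ along the spherical geodesic
\begin{equation*}
  \ket{\psi_t} = \cos(t\norm\phi)\ket\psi + \tfrac{\sin(t\norm\phi)}{\norm\phi}\ket\phi,
  \qquad \rho_t=\proj{\psi_t},
\end{equation*}
which satisfies $\dot{\ket{\psi_0}}=\ket\phi$ and hence $\dot\rho_0=V$. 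Since $\rho$ lies on the facet $(-,H)\geq c$ and $\mu_K(\rho)\in i\mathfrak t^*_{>0}$, the selection rule (\autoref{lem:onebody/selection rule}) gives $\pi(H)\ket\psi=c\ket\psi$; using this together with $\braket{\phi|\psi}=0$ the cross terms $\braket{\psi|\pi(H)|\phi}$ and $\braket{\phi|\pi(H)|\psi}$ both vanish.

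With these cross terms gone, $\braket{\psi_t|\pi(H)|\psi_t}=\cos^2(t\norm\phi)\,c+\tfrac{\sin^2(t\norm\phi)}{\norm\phi^2}\braket{\phi|\pi(H)|\phi}$, and expanding to second order in $t$ yields the clean formula
\begin{equation*}
  Q(V,V) = 2\braket{\phi\,|\,\pi(H)-c\Id\,|\,\phi}.
\end{equation*}
(As a sanity check, this agrees with \eqref{eq:kirwan/hessian}, since the anti-Hermitian operator $X=\ketbra\phi\psi-\ketbra\psi\phi$ satisfies $X_\rho=V$, and a direct computation of $\tr\rho\,[[\pi(H),X],X]$ using $\pi(H)\ket\psi=c\ket\psi$ gives the same answer.)

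To finish, decompose $\ket\phi$ according to the eigenspace decomposition \eqref{eq:kirwan/weight space decomposition} of the Hermitian operator $\pi(H)$: writing $\ket\phi=\sum_{c'}\ket{\phi_{c'}}$ with $\pi(H)\ket{\phi_{c'}}=c'\ket{\phi_{c'}}$, one obtains $Q(V,V)=2\sum_{c'}(c'-c)\norm{\phi_{c'}}^2$. Since $\ket\psi\in\calH(H=c)$, the complex subspace $\calH(H<c)$ lies entirely in $\ket\psi^\perp$, and the underlying real subspace of $T_\rho\PP(\calH)$, of real dimension $\dim_\RR\calH(H<c)$, is a real-negative-definite subspace for $Q$. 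On the complementary real subspace corresponding to $\calH(H\geq c)\cap\ket\psi^\perp$, the form $Q$ is non-negative by the same formula. Therefore the index of $Q$ equals $\dim_\RR\calH(H<c)$, as claimed.

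The computation is essentially routine; the only potential pitfall is bookkeeping of real versus complex dimensions at the very end, which is handled by noting that a Hermitian form of complex signature $(p,q,r)$ on a complex vector space of complex dimension $p+q+r$ has real index $2q$ when viewed as a real quadratic form on the underlying real vector space of dimension $2(p+q+r)$.
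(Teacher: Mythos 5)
Your proof is correct and follows essentially the same route as the paper: both identify $T_\rho\PP(\calH)$ with $\ket\psi^\perp$, reduce the Hessian to $Q(V,V)=2\braket{\phi\,|\,\pi(H)-c\,|\,\phi}$ using the selection rule, and read off the index from the eigenspace decomposition \eqref{eq:kirwan/weight space decomposition}. The only difference is cosmetic---you recompute $Q$ along a spherical geodesic instead of substituting $X=\ketbra\phi\psi-\ketbra\psi\phi$ into \eqref{eq:kirwan/hessian}, as you yourself note in your sanity check---while your explicit real-versus-complex dimension bookkeeping at the end is a fine (slightly more detailed) rendering of the paper's closing remark.
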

\begin{proof}
  Recall from \eqref{eq:onebody/tangent space} that the tangent space at $\rho=\proj\psi$ can be identified with the orthogonal complement of $\ket\psi$ in $\calH$.
  For any $\ket\phi \in \ket\psi^\perp$, the corresponding tangent vector is $V = \ketbra\phi\psi + \ketbra\psi\phi \in T_\rho \PP(\calH)$, which we can write as $X_\rho = [X, \rho]$ for $X = \ketbra\phi\psi - \ketbra\psi\phi \in \mathfrak u(\calH)$.
  By \eqref{eq:kirwan/hessian},
  \begin{align*}
    Q(V, V) &= Q(X_\rho, X_\rho) = \tr \rho \, [[\pi(H), X], X] = \tr X_\rho \, [\pi(H), X]\\
    &= 2 \left( \braket{\phi | \pi(H) | \phi} - \braket{\phi | \phi} \braket{\psi | \pi(H) | \psi} \right) \\
    &= 2 \left( \braket{\phi | \pi(H) | \phi} - \braket{\phi | \phi} (\mu_K(\rho), H) \right) \\
    &= 2 \braket{\phi | \pi(H) - c | \phi}.
  \end{align*}
  Since $\psi$ itself is in $\calH(H=c)$ by the selection rule, the claim follows.
\end{proof}

Since $(\mu_K(\rho), H) = c$ and $d(\mu_K,H)\big|_\rho = 0$, the Hessian is necessarily positive semidefinite on the subspace of those tangent vectors that get mapped to $i \mathfrak t^*$.
Indeed, let $V \in d\mu_K^{-1}\big|_\rho(i \mathfrak t^*)$ and consider the curve $\rho_t$ from \autoref{lem:onebody/curve lemma}. Then,
\begin{equation}
  (\mu_K(\rho_t), H) = c + \frac {t^2} 2 Q(V, V) + O(t^3),
\end{equation}
which shows that, indeed, $Q(V, V) \geq 0$.
The subspace of all such $V$ can be computed in a different way.
For this, let $\mathfrak r = \bigoplus_{\alpha \in R_{G,+}} \mathfrak k_\alpha$ denote the sum of the root spaces of the compact Lie algebra as in \eqref{eq:onebody/compact root space decomposition} and
\[M := \{ R_\rho = [\pi(R), \rho] : R \in \mathfrak r \} \subseteq T_\rho \PP(\calH)\]
the corresponding subspace of tangent vectors.
Then,
\begin{equation*}
\begin{aligned}
    &\{ V \in T_\rho \PP(\calH) : d\mu_K(V) \in i \mathfrak t^* \} \\
  =\,&\{ V \in T_\rho \PP(\calH) : d(\mu_K, iR)(V) = 0 \quad (\forall R \in \mathfrak r) \} \\
  =\, &\{ V \in T_\rho \PP(\calH) : \omega(V, R_\rho) = 0 \quad (\forall R \in \mathfrak r) \}
\end{aligned}
\end{equation*}
which is by definition equal to the \emphindex{symplectic complement}\nomenclature[TMomega]{$M^\omega$}{symplectic complement of subspace $M \subseteq T_\rho \PP(\calH)$} $M^\omega$ of $M$ in $T_\rho \PP(\calH)$.
We summarize our discussion in the following lemma.

\begin{lem}
\label{lem:kirwan/cross section hessian positive semidefinite}
  The Hessian at $\rho$ is positive semidefinite on the subspace
  \begin{equation}
  \label{eq:kirwan/cross section tangent space}
  \begin{aligned}
    M^\omega
    :=\;&\{ V \in T_\rho \PP(\calH) : \omega(V, W) = 0 \quad (\forall W \in M) \} \\
    =\;&\{ V \in T_\rho \PP(\calH) : d\mu_K(V) \in i \mathfrak t^* \}
  \end{aligned}
  \end{equation}
\end{lem}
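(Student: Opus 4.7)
The plan is to package the differential-geometric analysis already sketched above the statement into two clean steps: first identify $M^\omega$ with the preimage of $i\mathfrak{t}^*$ under the differential of the moment map, then deduce positive semidefiniteness from a second-order Taylor expansion inside the facet.

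For the identification, I would start from the compact root space decomposition $\mathfrak k = \mathfrak t \oplus \mathfrak r$ recorded in \eqref{eq:onebody/compact root space decomposition}. Under the inclusion $i\mathfrak t^* \hookrightarrow i\mathfrak k^*$ (extension by zero on the root spaces), a functional $\xi \in i\mathfrak k^*$ lies in $i\mathfrak t^*$ exactly when $(\xi, iR) = 0$ for every $R \in \mathfrak r$. Applying this criterion to $\xi = d\mu_K|_\rho(V)$ and invoking the fundamental moment-map identity \eqref{eq:onebody/moment map property}, which reads $(d\mu_K|_\rho(V), iR) = \omega(R_\rho, V)$, the condition $d\mu_K(V) \in i\mathfrak t^*$ translates into $\omega(R_\rho, V) = 0$ for all $R \in \mathfrak r$. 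By the very definition of $M = \{R_\rho : R \in \mathfrak r\}$, this is precisely $V \in M^\omega$.

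For the positive semidefiniteness, take any $V \in M^\omega$. The first step gives $d\mu_K|_\rho(V) \in i\mathfrak t^*$, and by hypothesis $\mu_K(\rho) \in i\mathfrak t^*_{>0}$. Lemma \ref{lem:onebody/curve lemma} then furnishes a smooth curve $\rho_t \in \PP(\calH)$ with $\rho_0 = \rho$, $\dot\rho_0 = V$, and $\mu_K(\rho_t) \in i\mathfrak t^*_{>0} \subseteq \Delta_K$. In particular the facet inequality forces $(\mu_K(\rho_t), H) \geq c$ for all $t$ in the domain of the curve, while the selection rule (Lemma \ref{lem:onebody/selection rule}) supplies $(\mu_K(\rho_0), H) = c$ and $d(\mu_K, H)|_\rho = 0$. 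Taylor expanding at $t = 0$ then yields
\begin{equation*}
  (\mu_K(\rho_t), H) = c + \tfrac{t^2}{2} Q(V, V) + O(t^3),
\end{equation*}
and comparing with the inequality forces $Q(V, V) \geq 0$, as claimed.

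I do not anticipate any serious obstacle: the argument is essentially a formalization of the paragraphs preceding the statement. The only mildly delicate point is keeping the identifications straight when characterizing $i\mathfrak t^* \subseteq i\mathfrak k^*$ as the annihilator of $i\mathfrak r$ under the natural pairing, which however follows immediately from the convention, recorded earlier in the chapter, of extending functionals by zero on the root spaces. Everything else is a direct application of \eqref{eq:onebody/moment map property}, Lemma \ref{lem:onebody/curve lemma}, and the selection rule.
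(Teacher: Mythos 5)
Your proposal is correct and takes essentially the same route as the paper, which proves the lemma in the discussion immediately preceding its statement: the identification $M^\omega = d\mu_K^{-1}\big|_\rho(i\mathfrak t^*)$ via \eqref{eq:onebody/moment map property} together with the extension-by-zero convention for $i\mathfrak t^* \subseteq i\mathfrak k^*$, and positive semidefiniteness of $Q$ via \autoref{lem:onebody/curve lemma}, the selection rule, and the second-order Taylor expansion against the facet inequality. The only cosmetic slip is the containment ``$i\mathfrak t^*_{>0} \subseteq \Delta_K$''; what is meant (and what the curve lemma actually yields) is $\mu_K(\rho_t) \in \mu_K(\PP(\calH)) \cap i\mathfrak t^*_{>0} \subseteq \Delta_K$, so the facet inequality does apply along the curve.
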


\begin{lem}
\label{lem:kirwan/torus complement injective}
  For any $\lambda \in i \mathfrak t^*_{>0}$,  %
  the coadjoint action
  $\mathfrak r \rightarrow i\mathfrak k^*$, $R \mapsto \ad^*(R)\lambda$
  is injective.
  Therefore, the tangent map
  $\mathfrak r \rightarrow T_\rho \PP(\calH)$, $R \mapsto R_\rho$
  is injective.
\end{lem}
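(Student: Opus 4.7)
The plan is to reduce both assertions to the identification of the $K$-stabilizer of an interior point $\lambda \in i\mathfrak{t}^*_{>0}$ with the maximal torus $T$, which was recorded in the structure-theoretic review of Section~\ref{sec:onebody/lie}.

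First I would prove injectivity of the infinitesimal coadjoint map $R \mapsto \ad^*(R)\lambda$. This map is the differential at the identity of the orbit map $k \mapsto \Ad^*(k)\lambda$, so its kernel (as a map out of all of $\mathfrak{k}$) is the Lie algebra of the $K$-stabilizer $K_\lambda$ of $\lambda$. Since $\lambda \in i\mathfrak{t}^*_{>0}$, one has $K_\lambda = T$, and hence this Lie algebra is $\mathfrak{t}$. The compact root-space decomposition~\eqref{eq:onebody/compact root space decomposition} gives $\mathfrak{k} = \mathfrak{t} \oplus \mathfrak{r}$, and in particular $\mathfrak{r} \cap \mathfrak{t} = 0$; restricted to $\mathfrak{r}$ the map is therefore injective.

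For the second assertion I would exploit $K$-equivariance of $\mu_K$. Differentiating the identity $\mu_K(\exp(tR)\cdot \rho) = \Ad^*(\exp(tR))\,\mu_K(\rho)$ at $t=0$ yields
\begin{equation*}
  d\mu_K\big|_\rho(R_\rho) \;=\; \ad^*(R)\,\mu_K(\rho) \;=\; \ad^*(R)\,\lambda,
\end{equation*}
where the last equality uses the standing hypothesis $\mu_K(\rho) = \lambda$. Consequently, $R_\rho = 0$ forces $\ad^*(R)\lambda = 0$, and the injectivity already established then gives $R=0$.

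I do not expect any serious obstacle: the argument is a formal consequence of the structure theory and of the equivariance of the moment map. The only step deserving care is the identification of the kernel of $R \mapsto \ad^*(R)\lambda$ with $\Lie(K_\lambda)$, which is standard but should be explicitly attributed to the preliminaries so that the two-line reduction via~\eqref{eq:onebody/compact root space decomposition} is unambiguous.
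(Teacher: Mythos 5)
Your proposal is correct and follows essentially the same route as the paper: the first claim is reduced, via $\mathfrak k = \mathfrak t \oplus \mathfrak r$, to the fact that the $K$-stabilizer of an interior point of the positive Weyl chamber is $T$, and the second claim follows from the first by equivariance of the moment map, $d\mu_K(R_\rho) = \ad^*(R)\mu_K(\rho)$ with $\mu_K(\rho) \in i\mathfrak t^*_{>0}$. Your only addition is to spell out the standard identification of the kernel of $R \mapsto \ad^*(R)\lambda$ with $\Lie(K_\lambda)$, which the paper leaves implicit.
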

\begin{proof}
  The first claim is a reformulation of the fact that the $K$-stabilizer of any $\lambda \in i\mathfrak t^*_{>0}$ is $T$, while $\mathfrak k = \mathfrak t \oplus \mathfrak r$.
  The second claim follows from the first, since $\mu_K(\rho) \in i \mathfrak t^*_{>0}$ and
  $d\mu_K(R_\rho) = \ad^*(R) \mu_K(\rho)$
  by equivariance of the moment map.
\end{proof}

\begin{lem}
\label{lem:kirwan/tangent space decomposition}
  The tangent space at $\rho$ decomposes as a direct sum
  \begin{equation*}
    T_\rho \PP(\calH) = M \oplus M^\omega
  \end{equation*}
  and the Hessian $Q$ is block-diagonal with respect to this decomposition.
\end{lem}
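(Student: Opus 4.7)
The plan splits into two parts: first the direct-sum decomposition, then the block-diagonality of the Hessian. Both parts proceed by direct computation from the formulas recorded in \eqref{eq:onebody/symplectic form} and \eqref{eq:kirwan/hessian} combined with the root-space structure of $\mathfrak k$.

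For the direct sum, I will invoke the standard linear-algebra identity $\dim M+\dim M^\omega=\dim T_\rho\PP(\calH)$ that holds for any subspace of a symplectic vector space; this reduces the claim to the non-degeneracy of $\omega$ restricted to $M$, i.e., to $M\cap M^\omega=0$. Supposing $R\in\mathfrak r$ satisfies $\omega(R_\rho,S_\rho)=0$ for every $S\in\mathfrak r$, and setting $\lambda:=\mu_K(\rho)\in i\mathfrak t^*_{>0}$, the formula \eqref{eq:onebody/symplectic form} together with the fact that $\pi$ is a Lie algebra homomorphism and the definition of the moment map gives
\[\omega(R_\rho,S_\rho)=i\,\tr\rho\,\pi([R,S])=i\,(\lambda,[R,S])=-i\,(\ad^*(R)\lambda,S),\]
so $\ad^*(R)\lambda$ annihilates $\mathfrak r$. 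To conclude $\ad^*(R)\lambda=0$ as an element of $i\mathfrak k^*$, which then forces $R=0$ by \autoref{lem:kirwan/torus complement injective}, I still have to check that $\ad^*(R)\lambda$ annihilates $\mathfrak t$; this will use the $\ad(\mathfrak t)$-invariance of each compact root space $\mathfrak k_\alpha$ (so that $[R,\mathfrak t]\subseteq\mathfrak r$) combined with the convention that $\lambda\in i\mathfrak t^*$ has been extended by zero on all root spaces, which makes $(\lambda,[R,H])$ vanish for $H\in\mathfrak t$.

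For the block-diagonality, symmetry of the Hessian reduces everything to showing $Q(V,W)=0$ when $V=\pi(R)_\rho\in M$ with $R\in\mathfrak r$ and $W\in M^\omega$. Formula \eqref{eq:kirwan/hessian} yields
\[Q(V,W)=\omega([\pi(-iH),\pi(R)]_\rho,W)=\omega(\pi([-iH,R])_\rho,W).\]
Since $H\in i\mathfrak t$, the element $-iH$ lies in $\mathfrak t$, so by the same $\ad(\mathfrak t)$-invariance invoked above one has $[-iH,R]\in\mathfrak r$; hence $\pi([-iH,R])_\rho\in M$ and the pairing with $W\in M^\omega$ vanishes by definition of the symplectic complement.

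The hard part will be the verification that $\ad^*(R)\lambda$ annihilates $\mathfrak t$ (and not only $\mathfrak r$), since this is the single place where the argument interacts with the somewhat idiosyncratic convention of identifying $i\mathfrak t^*$ with a subspace of $i\mathfrak k^*$ by extension by zero on root spaces; everything else is essentially bookkeeping with the commutation relations already assembled in \autoref{sec:onebody/lie}.
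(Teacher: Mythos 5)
Your proposal is correct and follows essentially the same route as the paper: it reduces the splitting to $M\cap M^\omega=0$ plus dimension counting, kills $\ad^*(R)\lambda$ on $\mathfrak r$ and on $\mathfrak t$ (the latter via $[\mathfrak t,\mathfrak r]\subseteq\mathfrak r$ and the extension-by-zero convention) and then invokes \autoref{lem:kirwan/torus complement injective}, and proves block-diagonality from \eqref{eq:kirwan/hessian} together with $[-iH,R]\in\mathfrak r$. The only cosmetic difference is that you obtain the vanishing of $\ad^*(R)\lambda$ on $\mathfrak r$ by computing $\omega(R_\rho,S_\rho)$ directly from \eqref{eq:onebody/symplectic form}, whereas the paper reads it off from the characterization \eqref{eq:kirwan/cross section tangent space} of $M^\omega$ together with equivariance of the moment map.
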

\begin{proof}
  For the first claim we only need to show that $M \cap M^\omega = 0$, which is standard (see, e.g., \cite[Lemma~6.7]{GuilleminSternberg82}):
  Let $R \in \mathfrak r$.
  Suppose that $R_\rho \in M^\omega$. That is,
  \begin{equation*}
    \ad^*(R) \mu_K(\rho) = d\mu_K(R_\rho) \in i \mathfrak t^*
  \end{equation*}
  according to \eqref{eq:kirwan/cross section tangent space} and equivariance of the moment map.
  But $[\mathfrak r, i\mathfrak t] \subseteq i\mathfrak r$, so it follows that $\ad^*(R) \mu_K(\rho) = 0$ and so $R = 0$ and $R_\rho = 0$, since we have just seen that the coadjoint action of $\mathfrak r$ is injective at $\lambda = \mu_K(\rho) \in i \mathfrak t^*_{>0}$ (\autoref{lem:kirwan/torus complement injective}).
  Thus $M \cap M^\omega = 0$ and $M \oplus M^\omega = T_\rho \PP(\calH)$ by dimension counting.

  For the second claim, observe that \eqref{eq:kirwan/hessian} immediately implies that
  \begin{equation*}
    Q(R_\rho, V) = \omega([\pi(-iH), \pi(R)]_\rho, V) = \omega([-iH, R]_\rho, V) = 0
  \end{equation*}
  for all $R_\rho \in M$ and $V \in M^\omega$, since $[-iH, R] \in \mathfrak r$ and hence $[-iH, R]_\rho \in M$.
\end{proof}

\begin{lem}
  \label{lem:kirwan/index hessian roots}
  The index of the Hessian at $\rho$ is equal to twice the number of positive roots $\alpha \in R_{G,+}$ such that $(\alpha, H) > 0$.
\end{lem}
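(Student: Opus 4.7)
The plan is to compute the index of $Q$ by combining the two decompositions available at $\rho$. By \autoref{lem:kirwan/tangent space decomposition} the tangent space splits as $T_\rho\PP(\calH) = M \oplus M^\omega$ with $Q$ block-diagonal, and \autoref{lem:kirwan/cross section hessian positive semidefinite} already tells us that $Q\big|_{M^\omega}$ is positive semidefinite. Hence the entire index is carried by $Q\big|_M$, and the problem reduces to analyzing the Hessian on the $2\abs{R_{G,+}}$-dimensional subspace $M$. Since $R \mapsto R_\rho$ is injective on $\mathfrak r$ by \autoref{lem:kirwan/torus complement injective}, the further splitting $\mathfrak r = \bigoplus_{\alpha \in R_{G,+}} \mathfrak k_\alpha$ yields an honest direct sum $M = \bigoplus_{\alpha \in R_{G,+}} M_\alpha$ with $M_\alpha := \{R_\rho : R \in \mathfrak k_\alpha\}$.

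Next I would show that $Q$ is block-diagonal with respect to this root-space splitting. For $R \in \mathfrak k_\alpha$ and $S \in \mathfrak k_\beta$ with positive roots $\alpha \neq \beta$, formula \eqref{eq:kirwan/hessian} combined with \eqref{eq:onebody/symplectic form} gives $Q(R_\rho, S_\rho) = i\tr \rho\, \pi([[-iH, R], S])$. Because $\ad(-iH)$ preserves $\mathfrak k_\alpha$, the element $[-iH, R]$ still lies in $\mathfrak k_\alpha$, and consequently $[[-iH, R], S]$ is a sum of vectors in the root spaces $\mathfrak g_{\pm\alpha \pm\beta}$. None of these weights are zero because $\alpha \neq \pm\beta$, and by construction the functional $\lambda := \mu_K(\rho) \in i\mathfrak t^*_{>0}$ vanishes on every nonzero root space. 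Hence the cross terms $Q(R_\rho, S_\rho)$ are all zero.

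On each two-dimensional block $M_\alpha$, parametrize $R = a(iX_\alpha) + b(iY_\alpha) \in \mathfrak k_\alpha$. Using $[H, E_{\pm\alpha}] = \pm(\alpha,H) E_{\pm\alpha}$ one obtains $[-iH, iX_\alpha] = (\alpha,H)\, iY_\alpha$ and $[-iH, iY_\alpha] = -(\alpha,H)\, iX_\alpha$, while the Pauli relation $[X_\alpha, Y_\alpha] = 2i Z_\alpha$ from \eqref{eq:onebody/pauli} together with \eqref{eq:onebody/symplectic form} gives $\omega((iX_\alpha)_\rho, (iY_\alpha)_\rho) = 2\lambda(Z_\alpha)$. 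Assembling these identities yields
\[
  Q(R_\rho, R_\rho) = -2\,(\alpha, H)\,\lambda(Z_\alpha)\,(a^2 + b^2).
\]
Since $\lambda \in i\mathfrak t^*_{>0}$ forces $\lambda(Z_\alpha) > 0$ for every positive root, $Q\big|_{M_\alpha}$ is negative definite when $(\alpha, H) > 0$, identically zero when $(\alpha, H) = 0$, and positive definite when $(\alpha, H) < 0$. Each positive root with $(\alpha,H)>0$ therefore contributes exactly $2$ to the index, and no other block contributes anything, which gives the stated formula. The one delicate point that deserves care is the sign tracking in the Pauli-matrix computation: keeping the $i$'s straight when passing between $\mathfrak k_\alpha$ and $i\mathfrak k_\alpha$ (and between $\mathfrak t$ and $i\mathfrak t$) is what makes the final sign of $Q\big|_{M_\alpha}$ come out consistently with the convention $\lambda(Z_\alpha)>0$; everything else is bookkeeping once the reduction to $M$ and the orthogonality of the $M_\alpha$ are in place.
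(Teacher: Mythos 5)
Your proof is correct and takes essentially the same route as the paper's: reduce the index to $Q\big|_M$ via \autoref{lem:kirwan/cross section hessian positive semidefinite} and \autoref{lem:kirwan/tangent space decomposition}, use the injectivity from \autoref{lem:kirwan/torus complement injective} to split $M$ over the root spaces $\mathfrak k_\alpha$, kill the cross terms because $\mu_K(\rho) \in i\mathfrak t^*$ pairs to zero with nonzero root spaces, and evaluate each $2\times 2$ block to $-2(\alpha,H)(\mu_K(\rho),Z_\alpha)$ times the identity with $(\mu_K(\rho),Z_\alpha)>0$. The only cosmetic difference is that the paper first transfers $Q$ to the form $\widetilde Q(R,S)=(\mu_K(\rho),[[H,R],S])$ on $\mathfrak r$ before computing, whereas you work directly on $M$ through the symplectic-form expression of the Hessian; the computation and conclusion are identical.
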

\begin{proof}
  Since $Q$ is positive semidefinite on $M^\omega$ (\autoref{lem:kirwan/cross section hessian positive semidefinite}) and block-diagonal with respect to the decomposition $T_\rho \PP(\calH) = M \oplus M^\omega$ (\autoref{lem:kirwan/tangent space decomposition}), it suffices to compute the index of $Q$ on $M = \{ R_\rho : R \in \mathfrak r \}$.
  By \eqref{eq:kirwan/hessian},
  \begin{equation*}
    Q(R_\rho, S_\rho)
    = \tr \rho [[\pi(H), \pi(R)], \pi(S)]
    = (\mu_K(\rho), [[H, R], S]).
  \end{equation*}
  Since the tangent map $R \mapsto R_\rho$ is injective (\autoref{lem:kirwan/torus complement injective}), we may instead consider the form
  \begin{equation*}
    \widetilde Q(R, S) := (\mu_K(\rho), [[H, R], S])
  \end{equation*}
  on $\mathfrak r$.
  For this, observe that $\widetilde Q$ is block-diagonal with respect to $\mathfrak r = \bigoplus_{\alpha \in R_{G,+}} \mathfrak k_\alpha$, since for all $R \in \mathfrak k_\alpha$ and $S \in \mathfrak k_\beta$, $[[H, R], S] \in i\mathfrak k_{\alpha\pm\beta}$, while $\mu_K(\rho) \in i \mathfrak t^*$.
  And for each root space $\mathfrak k_\alpha$, we have that
  \begin{align*}
    &\widetilde Q(iX_\alpha, iX_\alpha)
    = -(\mu_K(\rho), [[H, X_\alpha], X_\alpha]) \\
    = &-i (\alpha, H) (\mu_K(\rho), [Y_\alpha, X_\alpha])
    = -2 (\alpha, H) (\mu_K(\rho), Z_\alpha)
  \end{align*}
  where we have used the ``Pauli matrices'' $X_\alpha, Y_\alpha, Z_\alpha$ and their commutation relations \eqref{eq:onebody/pauli}.
  Likewise, $\widetilde Q(iY_\alpha, iY_\alpha) = -2 (\alpha, H) (\mu_K(\rho), Z_\alpha)$, while $\widetilde Q(iX_\alpha, iY_\alpha) = 0$.
  Since $(\mu_K(\rho), Z_\alpha) > 0$, we conclude that the index of $Q$ is equal to twice the number of positive roots $\alpha$ with $(\alpha, H) > 0$.
\end{proof}

\subsection*{Ressayre Elements}

So far we have used the Lie algebra $\mathfrak k$ of the compact Lie group in our analysis.
We will now translate the preceding to the complexified setting.
To this end, we consider the Lie algebra $\mathfrak n_- = \bigoplus_{\alpha \in R_{G,-}} \mathfrak g_\alpha$ of the negative unipotent subgroup, which plays a role analogous to $\mathfrak r$ for states $\rho$ that are mapped into the positive Weyl chamber (compare the following with \autoref{lem:kirwan/torus complement injective}).

\begin{lem}
  \label{lem:kirwan/negative unipotent injective}
  The tangent map $\mathfrak n_- \rightarrow \calH, X \to \pi(X) \ket\psi$ is injective.
\end{lem}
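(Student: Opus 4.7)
The plan is to establish injectivity via an $\mathfrak{sl}(2)$-style commutator identity, converting the hypothesis $\mu_K(\rho) \in i\mathfrak t^*_{>0}$ into the required non-vanishing in analogy with Lemma \ref{lem:kirwan/torus complement injective}. Suppose $Y \in \mathfrak n_-$ satisfies $\pi(Y)\ket\psi = 0$ and expand $Y = \sum_{\beta \in R_{G,+}} c_\beta E_{-\beta}$. The Hermiticity of $X_\beta = E_\beta + E_{-\beta} \in i\mathfrak k_\beta$ together with that of $Y_\beta$ yields $E_{-\beta}^\dagger = E_\beta$, so $Y^\dagger = \sum_\beta \overline{c_\beta} E_\beta \in \mathfrak n_+$.

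The heart of the argument is the identity
\begin{equation*}
  \norm{\pi(Y^\dagger)\ket\psi}^2 = \braket{\psi | \pi(Y) \pi(Y^\dagger) | \psi} = \braket{\psi | \pi([Y, Y^\dagger]) | \psi},
\end{equation*}
where the second equality follows from $\pi(Y)\pi(Y^\dagger) = \pi(Y^\dagger)\pi(Y) + \pi([Y, Y^\dagger])$ and $\pi(Y)\ket\psi = 0$. Expanding the commutator using $[E_\beta, E_{-\beta}] = Z_\beta$ from \eqref{eq:onebody/sl2} gives
\begin{equation*}
  [Y, Y^\dagger] = -\sum_\beta \abs{c_\beta}^2 Z_\beta + \sum_{\beta \neq \beta'} c_\beta \overline{c_{\beta'}} [E_{-\beta}, E_{\beta'}],
\end{equation*}
with the off-diagonal brackets $[E_{-\beta}, E_{\beta'}]$ lying in root spaces $\mathfrak g_{\beta'-\beta}$ (or vanishing if $\beta'-\beta$ is not a root). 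The decisive observation is that these off-diagonal contributions vanish under $\braket{\psi|\pi(\cdot)|\psi}$: by the extension-by-zero convention, the embedding $i\mathfrak t^* \hookrightarrow \mathfrak g^*$ forces $(\mu_K(\rho), X) = 0$ for every $X$ in a root space, and hence $\braket{\psi|\pi(E_\gamma)|\psi} = (\mu_K(\rho), E_\gamma) = 0$ for every nonzero root $\gamma$.

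Assembling these steps gives
\begin{equation*}
  0 \le \norm{\pi(Y^\dagger)\ket\psi}^2 = -\sum_\beta \abs{c_\beta}^2 (\mu_K(\rho), Z_\beta),
\end{equation*}
while $\mu_K(\rho) \in i\mathfrak t^*_{>0}$ forces $(\mu_K(\rho), Z_\beta) > 0$ for every positive root $\beta$. Hence any nonzero $c_\beta$ would make the right-hand side strictly negative, a contradiction, so $Y = 0$. I expect the only subtle point to be the clean recognition that $\mu_K(\rho) \in i\mathfrak t^*$ automatically annihilates every root vector $E_\gamma$ even when $\ket\psi$ is not itself a $T$-weight vector; once this is noted, the proof reduces to a short computation that directly mirrors the $K$-equivariance argument in Lemma \ref{lem:kirwan/torus complement injective}.
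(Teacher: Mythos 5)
Your proof is correct and is essentially the paper's argument: both rest on $\pi(E_{-\beta})^\dagger=\pi(E_\beta)$, reducing a norm-square to the expectation of the commutator, using \eqref{eq:onebody/sl2} for the diagonal part $\sum_\beta\abs{c_\beta}^2 Z_\beta$, killing the off-diagonal root-space terms because $\mu_K(\rho)\in i\mathfrak t^*$ is extended by zero on root spaces, and invoking $(\mu_K(\rho),Z_\beta)>0$ from $\mu_K(\rho)\in i\mathfrak t^*_{>0}$. The only cosmetic difference is that you argue by contradiction from $\pi(Y)\ket\psi=0$, whereas the paper directly bounds $\norm{\pi(E_-)\ket\psi}^2\geq\tr\rho\,\pi([E_+,E_-])>0$; these are trivial rearrangements of the same identity.
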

\begin{proof}
  Let $E_- := \sum_{\alpha \in R_{G,+}} z_\alpha E_{-\alpha}$ be an arbitrary element in $\mathfrak n_-$.
  Since $\pi(E_{\pm\alpha})^\dagger = \pi(E_{\mp\alpha})$, we find that $\pi(E_-)^\dagger = \pi(E_+)$ with
  $E_+ := \sum_{\alpha \in R_{G,+}} \bar{z}_\alpha E_\alpha$ (the Cartan involution of $E_-$).
  Therefore,
  \begin{align*}
    \norm{\pi(E_-) \ket\psi}^2
    = &\tr \rho \, \pi(E_+) \pi(E_-)
    = \tr \rho \,\pi([E_+, E_-])
    + \norm{\pi(E_+) \ket\psi}^2 \\
    \geq &\tr \rho \, \pi([E_+, E_-]).
  \end{align*}
  But \eqref{eq:onebody/sl2} and \eqref{eq:onebody/pauli} imply that
  \begin{equation*}
    [E_+, E_-] - \sum_{\alpha \in R_{G,+}} \abs{z_\alpha}^2 Z_\alpha \in i \mathfrak r,
  \end{equation*}
  so that by using $\mu_K(\rho) \in i \mathfrak t^*_{>0}$ we find that
  \begin{equation*}
    \tr \rho \, \pi([E_+, E_-]) = \sum_{\alpha \in R_{G,+}} \abs{z_\alpha}^2 (\mu_K(\rho), Z_\alpha) > 0.
    \qedhere
  \end{equation*}
\end{proof}

In contrast to \autoref{lem:kirwan/torus complement injective}, which continues to hold true if $\mu_K(\rho)$ is mapped to the relative interior of a different Weyl chamber (since $K$ is invariant under the Weyl group), it is important in \autoref{lem:kirwan/negative unipotent injective} to choose the negative unipotent subgroup (relative to the choice of positive Weyl chamber).
For example, consider an irreducible $G$-representation $\calH = V_{G,\lambda}$ with highest weight $\lambda \in i\mathfrak t^*_{>0}$ and highest weight vector $\ket\lambda$.
Then $\mu_K(\proj\lambda) = \lambda \in i \mathfrak t^*_{>0}$ by \autoref{lem:onebody/borel weil}, and the ``lowering operators'' in $\mathfrak n_-$ act indeed injectively
(all $\pi(E_{-\alpha}) \ket\lambda$ live in different weight spaces and are non-zero, since $\pi(E_\alpha) \pi(E_{-\alpha}) \ket\lambda = \pi(Z_\alpha) \ket\lambda = (\lambda, Z_\alpha) \ket\lambda \neq 0$).
On the other hand, the ``raising operators'' in the positive nilpotent Lie algebra $\mathfrak n_+$ annihilate the highest weight vector (by definition).

\bigskip

We now decompose the Lie algebra $\mathfrak n_-$ similarly to \eqref{eq:kirwan/weight space decomposition},
\begin{equation*}
  \mathfrak n_- = \mathfrak n_-(H < 0) \oplus \mathfrak n_-(H = 0) \oplus \mathfrak n_-(H > 0),
\end{equation*}
where $\mathfrak n_-(H < 0) = \bigoplus_{\alpha \in R_{G,-} : (\alpha,H) < 0} \mathfrak g_\alpha$ is the sum of the complex root spaces with negative $H$-weight $(\alpha, H) < 0$, etc.%
\nomenclature[RN_-(H<0)]{$\mathfrak n_-(H < 0)$, \dots}{sum of negative root spaces with $(\alpha,H) < 0$, \dots}
By combining Lemmas~\ref{lem:kirwan/index hessian} and \ref{lem:kirwan/index hessian roots} and using $R_{G,-} = -R_{G,+}$, we observe that
\begin{equation}
\label{eq:kirwan/dimensions match}
  \dim_\CC \calH(H < c) = \dim_\CC \mathfrak n_-(H < 0).
\end{equation}
Note that $\pi(\mathfrak n_-(H < 0)) \calH(H = c) \subseteq \calH(H < c)$, since for any $\ket\psi \in \calH(H=c)$ and $X \in \mathfrak g_\alpha$ we have that
\begin{equation*}
  \pi(H) \pi(X) \ket\psi
  = \pi([H,X]) \ket\psi + \pi(X) \pi(H) \ket\psi
  = ((\alpha, H) + c) \pi(X) \ket\psi.
\end{equation*}
Thus we obtain the following important result:

\begin{prp}
  \label{prp:kirwan/non-trivial walls give isomorphism}
  Let $\rho = \proj\psi \in \PP(\calH)$ such that $\mu_K(\rho) \in i \mathfrak t^*_{>0}$ is a point on the facet of the moment polytope corresponding to the inequality $(-,H) \geq c$.
  Then $\ket\psi \in \calH(H = c)$ and the tangent map restricts to an isomorphism
  \begin{equation*}
    \mathfrak n_-(H < 0) \rightarrow \calH(H < c), \quad
    X \mapsto \pi(X) \ket\psi.
  \end{equation*}
\end{prp}
\begin{proof}
  The fact that $\ket\psi \in \calH(H = c)$ is just a reformulation of the selection rule.
  By the preceding discussion, the tangent map is well-defined as a map from $\mathfrak n_-(H < 0)$ to $\calH(H < c)$;
  it is injective by \autoref{lem:kirwan/negative unipotent injective} and surjective since the dimensions agree according to \eqref{eq:kirwan/dimensions match}.
\end{proof}

We now prove a partial converse to \autoref{prp:kirwan/non-trivial walls give isomorphism}.
Our proof is inspired by the argument of Ressayre \cite{Ressayre10}.

\begin{prp}
  \label{prp:kirwan/partial converse}
  Suppose there exists $\ket\psi \in \calH(H=c)$ such that the tangent map
  \begin{equation*}
    \mathfrak n_-(H < 0) \rightarrow \calH(H < c), \quad
    X \mapsto \pi(X) \ket\psi.
  \end{equation*}
  is surjective. Then $(-, H) \geq c$ is a valid inequality for the moment polytope.
\end{prp}
\begin{proof}
  Consider the smooth map
  \[N_- \times \calH(H \geq c) \rightarrow \calH, \quad (g, \ket\phi) \mapsto \Pi(g)\ket\phi.\]
  Its differential at $(1,\ket\psi)$ is the linear map
  \[\mathfrak n_- \oplus \calH(H \geq c) \rightarrow \calH, \quad (X,\ket\phi) \mapsto \pi(X) \ket\psi + \ket\phi.\]
  The assumption implies that this map is surjective.
  It follows that $\Pi(N_-) \calH(H \geq c) \subseteq \calH$ contains a small Euclidean ball around $\ket\psi$.
  In particular, any polynomial that is zero on $\Pi(N_-) \calH(H \geq c)$ is zero everywhere on $\calH$.

  We now prove the inequality.
  By the description of the moment polytope of \autoref{thm:onebody/kirwan}, it suffices to show that
  $(\lambda/k, H) \geq c$ for all highest weights $\lambda$ such that $V^*_{G,\lambda} \subseteq R_k(\calH)$.
  Recall that the highest weight of $V^*_{G,\lambda}$ is $\lambda^* = -w_0 \lambda$, where $w_0$ is the longest Weyl group element
  that flips the positive and negative roots.
  Consider a \emphindex{lowest weight vector}, i.e., a homogeneous polynomial $P \in R_k(\calH)$ that is a weight vector of weight $-\lambda$ and stabilized by $N_-$.
  Then $\pi(H) P = -(\lambda, H) P$ and the restriction of $P$ to $\calH(H \geq c)$ is non-zero by our discussion above.
  But this restriction is an element of $R_k(\calH(H \geq c)) = \Sym^k(\calH(H \geq c))^*$, the space of homogeneous polynomials of degree $k$ on $\calH(H \geq c)$. Thus all $H$-weights in $R_k(\calH(H \geq c))$ are less or equal to $-kc$.
  It follows that $-(\lambda, H) \leq -kc$, hence $(\lambda/k, H) \geq c$, as we set out to prove.
\end{proof}

We remark that \autoref{prp:kirwan/partial converse} holds unconditionally without any assumption on the dimension of the moment polytope $\Delta_K$.
We summarize our findings in the following definition and theorem:

\begin{dfn}
\label{dfn:kirwan/ressayre}
  An inequality $(-, H) \geq c$ is said to be of \emph{Ressayre type}\index{Ressayre-type inequality} if
  \begin{enumerate}
  \item $(-, H) = c$ is an affine hyperplane spanned by a subset of weights.
  \item There exists $\ket\psi \in \calH(H = c)$ such that the map
    \[\mathfrak n_-(H < 0) \rightarrow \calH(H < c), \quad  X \mapsto \pi(X) \ket\psi\]
    is an isomorphism.
\end{enumerate}
We note that the first condition is invariant under the action of the Weyl group.
\end{dfn}

\begin{thm}
\label{thm:kirwan/main}
  The moment polytope is given by
  \[\Delta_K = \{ \lambda \in \mathfrak t^*_+ : (\lambda, H) \geq c \text{ for all Ressayre-type inequalities} \}.\]
\end{thm}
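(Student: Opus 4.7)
The plan is to prove the two inclusions separately, with the nontrivial direction reducing to an analysis of the facets of $\Delta_K$.

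First, the inclusion $\Delta_K \subseteq \{\lambda \in i\mathfrak t^*_+ : (\lambda,H) \geq c \text{ for all Ressayre-type inequalities}\}$ is immediate: any Ressayre-type inequality is a valid inequality on the moment polytope by \autoref{prp:kirwan/partial converse}, and any $\lambda \in \Delta_K$ lies in the positive Weyl chamber by definition. Note that this direction does not even need the maximal-dimensionality assumption.

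For the converse inclusion, I would argue that every facet of $\Delta_K$ is either trivial (in which case the corresponding inequality is automatically satisfied by any $\lambda \in i\mathfrak t^*_+$) or is cut out by a Ressayre-type inequality. Since $\Delta_K$ is maximal-dimensional, it equals the intersection of the half-spaces defined by its facets, so this will finish the proof. Let $(-,H) \geq c$ be a non-trivial facet. By \autoref{lem:kirwan/non-trivial facets}, the relative interior of this facet meets $i\mathfrak t^*_{>0}$, so we can pick $\lambda \in i\mathfrak t^*_{>0}$ in the relative interior of the facet; condition (i) of \autoref{dfn:kirwan/ressayre} then holds by \autoref{lem:kirwan/admissible}. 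To verify condition (ii), I would pick $\rho = \proj\psi \in \PP(\calH)$ with $\mu_K(\rho) = \lambda$, which exists since $\lambda \in \Delta_K \subseteq \mu_K(\PP(\calH))$, and then invoke \autoref{prp:kirwan/non-trivial walls give isomorphism} to obtain precisely the required isomorphism $\mathfrak n_-(H<0) \to \calH(H<c)$, $X \mapsto \pi(X)\ket\psi$.

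One subtlety to address is that condition (ii) of \autoref{dfn:kirwan/ressayre} depends on the specific pair $(H,c)$ rather than just on the half-space it defines, so I would briefly note that the condition is invariant under positive rescaling $(H,c) \mapsto (\alpha H, \alpha c)$ for $\alpha > 0$ (since both the weight space $\calH(H=c)$ and the root-space decomposition $\mathfrak n_-(H<0)$ are unchanged), which ensures that whether a facet-defining half-space arises from a Ressayre-type inequality is an intrinsic property of the facet.

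I do not expect any serious obstacle: essentially all the work has been done in \autoref{prp:kirwan/non-trivial walls give isomorphism} and \autoref{prp:kirwan/partial converse}, together with the first-order Lemmas \ref{lem:kirwan/non-trivial facets} and \ref{lem:kirwan/admissible}. The theorem is the natural assembly of these ingredients, and the only bookkeeping is to distinguish the trivial facets (absorbed into the Weyl chamber condition $\lambda \in i\mathfrak t^*_+$) from the non-trivial ones (handled by the Ressayre conditions).
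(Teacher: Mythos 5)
Your proof is correct and follows exactly the paper's route: the paper's proof of this theorem is a one-line assembly of Lemma~\ref{lem:kirwan/admissible}, Proposition~\ref{prp:kirwan/non-trivial walls give isomorphism} and Proposition~\ref{prp:kirwan/partial converse}, and your write-up simply spells out that assembly (trivial facets absorbed into $i\mathfrak t^*_+$, non-trivial facets shown to be of Ressayre type via Lemmas~\ref{lem:kirwan/non-trivial facets} and \ref{lem:kirwan/admissible} and Proposition~\ref{prp:kirwan/non-trivial walls give isomorphism}, validity of all Ressayre-type inequalities via Proposition~\ref{prp:kirwan/partial converse}), with the scaling-invariance remark being a harmless extra.
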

\begin{proof}
  This follows directly from \autoref{lem:kirwan/admissible}, \autoref{prp:kirwan/non-trivial walls give isomorphism} and \autoref{prp:kirwan/partial converse}.
\end{proof}

\autoref{thm:kirwan/main} gives a complete description of the moment polytope of the projective space of an arbitrary $G$-representation $\calH$ (under the assumption that $\Delta_K$ is of maximal dimension).
The set of inequalities thus obtained may still be redundant (i.e., not all inequalities necessarily correspond to facets of the moment polytope).
In contrast, Ressayre's \emph{well-covering pairs} \cite{Ressayre10, Ressayre10a} characterize the facets of the moment polytope precisely.
Our characterization is also related to \cite[Theorem 2]{Brion99}, which uses algebraic geometry to characterize non-trivial faces of arbitrary codimension. Unlike \autoref{prp:kirwan/partial converse}, it relies on an assumption about lower-dimensional moment polytopes, which can in principle be obtained recursively. %

It is straightforward to enumerate all inequalities of Ressayre type.
Since there are only finitely many weights, the first condition in \autoref{dfn:kirwan/ressayre} cuts down the number of possible inequalities down to a finite list of candidates, and for each such candidate $(-,H) \geq c$, the second condition can be easily checked:
Indeed, we only need to verify that $\dim \mathfrak n_-(H < 0) = \dim \calH(H < c)$ and that the \emphindex{determinant polynomial}
\begin{equation}
\label{eq:kirwan/determinant}
  \calH(H = c) \rightarrow \CC,
  \ket\psi \mapsto \det \Big( \mathfrak n_-(H < 0) \ni X \mapsto \pi(X) \ket\psi \in \calH(H < c) \Big)
\end{equation}
is non-zero (take the determinant with respect to any fixed pair of bases).
Both steps can easily be implemented in a short computer program.

We remark that in view of \autoref{thm:onebody/kirwan} the non-vanishing of the determinant \eqref{eq:kirwan/determinant} can be understood as a statement about a moment polytope of $\PP(\calH(H=c))$. In this way, further necessary conditions can be extracted that cut down the list of candidate inequalities. We will comment on this aspect and related results in a forthcoming paper.

\section{Examples}
\label{sec:kirwan/examples}

We now illustrate the method by considering some of the examples discussed at the beginning of the chapter.

\subsection*{Qubits}

We will start by showing that the solution of the one-body quantum marginal problem for $n$ qubits is indeed given by the polygonal inequalities \eqref{eq:kirwan/polygonal}, which we recall are given by
\begin{equation}
\label{eq:kirwan/polygonal first}
  \lambda_{1,1} + \dots + \lambda_{n-1,1} \leq (n-2) + \lambda_{n,1},
\end{equation}
and its permutations.

Let $K = \SU(2)^{\times n}$, $G = \SL(2)^{\times n}$ its complexification, and $\pi \colon \mathfrak g \rightarrow \mathfrak{gl}(\calH)$ the infinitesimal representation of the Lie algebra on $\calH = (\CC^2)^{\otimes n}$, the Hilbert space of $n$ qubits.
The Lie algebra $i \mathfrak t$ is spanned by the Pauli $\sigma_z$-matrices $Z_1, \dots, Z_n$ of the individual $\mathfrak{sl}(2)$ summands, and each $Z_k$ acts on the respective factor of the Hilbert space by the Pauli $\sigma_z$-matrix $\matrix{1 & \\ & -1}$.
Likewise, there are $n$ positive roots $\alpha_1, \dots, \alpha_n$ and they satisfy $(\alpha_k, Z_l) = 2 \delta_{kl}$.

By using that $\lambda_{k,1} + \lambda_{k,2} = 1$ for each qubit, we find that \eqref{eq:kirwan/polygonal first} is equivalent to the following inequality for the moment polytope $\Delta_K$,
\begin{equation*}
  (-, \underbrace{- Z_1 - \dots - Z_{n-1} + Z_n}_{=: H}) \geq 2-n,
\end{equation*}
which we will now prove by using the criterion of \autoref{prp:kirwan/partial converse}.

For this, let $\ket{0}$ and $\ket{1}$ denote the standard basis vectors of $\CC^2$ and $\ket{\vec x}$ the corresponding product basis vectors of $\calH$, where $\vec x \in \{0, 1\}^{\times n}$.
Then $\pi(Z_k) \ket{\vec x} = (1-2 x_k) \ket{\vec x}$, so that each basis vector $\ket{\vec x}$ is a weight vector whose weight is given by $(\omega, Z_k) = (1 - 2 x_k)$ for all $k=1,\dots,n$.
In particular, the eigenvalue of $\pi(H)$ is $(\omega, H) = 2-n + 2(x_1 + \dots + x_{n-1} - x_n)$, so that
\begin{align}
  \calH(H = 2-n)
  &= \Span_\CC \{ \ket{\vec x} : x_1 + \dots + x_{n-1} - x_n = 0 \} \nonumber \\
  &= \Span_\CC \{ \ket{10\dots{}01}, \dots, \ket{0\dots011}, \ket{0\dots{}0} \} \label{eq:kirwan/qubits Hnull} \\
  \calH(H < 2-n) &= \CC \ket{0\dots{}01}. \nonumber
\end{align}
On the other hand, there is precisely one negative root with $(-,H) < 0$, namely $-\alpha_n$. Therefore,
\begin{equation*}
  \mathfrak n_-(H < 0) = \mathfrak g_{-\alpha_n} = \CC E_{-n},
\end{equation*}
where the generator $E_{-n}$ acts as the ``lowering operator'' $\ketbra 1 0$ on the $n$-th tensor factor of $\calH$.
But then $\pi(E_{-n}) \ket{0\dots{}0} = \ket{0\dots{}01}$, so that the map
\[\mathfrak n_-(H < 0) \rightarrow \calH(H < 2-n), \quad X \mapsto \pi(X) \ket\psi\]
is indeed an isomorphism for $\ket\psi = \ket{0\dots{}0} \in \calH(H = 2-n)$.
Thus \autoref{prp:kirwan/partial converse} shows that the polygonal inequality \eqref{eq:kirwan/polygonal first} is indeed valid.
We remark that the $n$ weights corresponding to \eqref{eq:kirwan/qubits Hnull} span the hyperplane $(-, H) = 2-n$ and hence the inequality is of Ressayre type---but we did not need this to apply the proposition.
The other polygonal inequalities follow from the above since the solution to the one-body quantum marginal problem is symmetric under permutation of the qubits.

To see that there are no further constraints, we could now verify that the polygonal inequalities imply all other Ressayre-type inequalities (which, according to \autoref{thm:kirwan/main}, characterize the moment polytope completely).
In the case at hand, we observe instead
that for each $k \leq n-2$ as well as for $k=n$ the quantum states
\[\ket 0^{\otimes k} \otimes \left( \ket 0^{\otimes n-k} + \ket 1^{\otimes n-k} \right) / \sqrt 2\]
have local eigenvalues
\[\lambda_{1,1} = \dots = \lambda_{k,1} = 1, \quad \lambda_{k+1,1} = \dots = \lambda_{n,1} = 0.5,\]
and likewise for their permutations.
These points are just the vertices of the convex polytope cut out by the polygonal inequalities \eqref{eq:kirwan/polygonal}, which is therefore equal to the moment polytope.

This idea of producing an \emph{inner approximation}\index{moment polytope!inner approximation} from the local eigenvalues of suitably constructed quantum states -- or from the normalized highest weights $\lambda/k$ of concretely exhibited representations $V_{G,\lambda}^* \subseteq R_k(\XX)$ \cite{KlyachkoAltunbulak08, Altunbulak08} -- and then comparing with an \emph{outer approximation}\index{moment polytope!outer approximation} defined by a subset of valid inequalities (obtained, e.g., from \autoref{prp:kirwan/partial converse}) is rather useful to compute moment polytopes in practice.

\subsection*{Mixed States of Two Qubits}

We now consider the mixed-state one-body quantum marginal problem for two qubits.
We will focus on the last constraint in \eqref{eq:kirwan/bravyi},
\begin{equation*}
  \abs{\lambda_{A,1} - \lambda_{B,1}} \leq \min \{ \lambda_{AB,1} - \lambda_{AB,3}, \lambda_{AB,2} - \lambda_{AB,4} \},
\end{equation*}
which by symmetry can be reduced to the following two linear inequalities
\begin{equation}
\label{eq:kirwan/bravyi difficult up to symmetry}
\begin{aligned}
  \lambda_{A,1} - \lambda_{B,1} &\leq \lambda_{AB,1} - \lambda_{AB,3}, \\
  \lambda_{A,1} - \lambda_{B,1} &\leq \lambda_{AB,2} - \lambda_{AB,4}.
\end{aligned}
\end{equation}
In \cite{Bravyi04}, the inequalities are proved by a ``formidable'' two-page calculation that is tailored towards the two-qubit scenario.
In contrast we will obtain \eqref{eq:kirwan/bravyi difficult up to symmetry} completely mechanically by using the general machinery developed in \autoref{sec:kirwan/facets}.

For this, we consider its equivalent formulation in terms of pure states on $\calH = \CC^2 \otimes \CC^2 \otimes \CC^4$.
Let $K = \SU(2) \times \SU(2) \times \SU(4)$ and $G = \SL(2) \times \SL(2) \times \SL(4)$.
The roots are the union of the roots of the individual factors, and we will denote them by $\alpha_{A,ij}$, $\alpha_{B,ij}$ and $\alpha_{C,ij}$ according to the notation and conventions in \autoref{tab:onebody/unitary group}.
Thus a root is positive if $i < j$ and negative if $i > j$.
We write $E_{A,ij}$, etc., for the generators of the root spaces $\mathfrak g_{\alpha_{A,ij}}$; they act ``raising and lowering operators'' $\ketbra i j$ on the respective tensor factor.
Similarly, the corresponding generators of $i \mathfrak t$ are denoted by $Z_{A,ij}$, etc.; they act as the diagonal matrices $\proj i - \proj j$ on the respective tensor factor of the Hilbert space.
Using these definitions, the first inequality in \eqref{eq:kirwan/bravyi difficult up to symmetry} can be written as the following linear inequality for the moment polytope $\Delta_K$:
\begin{equation*}
  (\lambda, \underbrace{-Z_{A,12} + Z_{B,12} + 2 Z_{C,13}}_{=: H}) \geq 0
\end{equation*}
Finally, let $\ket{ijk}$ denote the standard product basis vectors of $\calH = \CC^2 \otimes \CC^2 \otimes \CC^4$ ($i,j=1,2$, $k=1,\dots,4$).
Then,
\begin{align}
  \calH(H=0) &= \Span_\CC \{ \ket{112}, \ket{114}, \ket{121}, \ket{213}, \ket{222}, \ket{224} \}, \nonumber \\
  \calH(H<0) &= \Span_\CC \{ \ket{113}, \ket{122}, \ket{124}, \ket{223}, \ket{123} \}, \label{eq:kirwan/bravyi Hneg}
\end{align}
while
\begin{equation}
  \mathfrak n_-(H<0) = \Span_\CC \{ E_{B,21}, E_{C,21}, E_{C,41}, E_{C,32}, E_{C,31} \}. \label{eq:kirwan/bravyi Nneg}
\end{equation}
It remains to verify that the map
\begin{equation}
  \label{eq:kirwan/bravyi isomorphism}
  \mathfrak n_-(H<0) \rightarrow \mathfrak \calH(H<0), X \mapsto \pi(X) \ket\psi
\end{equation}
is an isomorphism for some $\ket\psi \in \calH(H=0)$.
We can do so completely mechanically by computing the determinant polynomial \eqref{eq:kirwan/determinant} with respect to the bases in \eqref{eq:kirwan/bravyi Hneg} and \eqref{eq:kirwan/bravyi Nneg}.
Using that each $E_{X,ij}$ acts by $\ketbra i j$ on the corresponding tensor factor, we readily find that it is given by
\begin{equation*}
  \det \Matrix{
    0 & 0 & 0 & \psi_{112} & 0 \\
    \psi_{112} & \psi_{121} & 0 & 0 & 0 \\
    \psi_{114} & 0 & \psi_{121} & 0 & 0 \\
    \psi_{213} & 0 & 0 & \psi_{222} & 0 \\
    0 & 0 & 0 & 0 & \psi_{121}
  }
  = - \psi_{121}^3 \psi_{112} \psi_{213} \neq 0
\end{equation*}
and thus is indeed non-zero.
For example, $\ket\psi = \ket{121} + \ket{112} + \ket{213} \in \calH(H=0)$ is a choice for which \eqref{eq:kirwan/bravyi isomorphism} is an isomorphism.
Thus \autoref{prp:kirwan/partial converse} shows that the first inequality in \eqref{eq:kirwan/bravyi difficult up to symmetry} is indeed a valid inequality for the one-body marginal problem for mixed states of two qubits.
The second inequality can be established in a completely identical fashion.

\subsection*{Three Qutrits}

For our next example, we consider the one-body quantum marginal problem for pure states of three qubits, $\calH = \CC^3 \otimes \CC^3 \otimes \CC^3$.
We will focus on the inequality
\begin{equation}
\label{eq:kirwan/franz h7}
  (\lambda_{A,1} + 2 \lambda_{A,2}) +
  (2 \lambda_{B,1} + \lambda_{B,2})
  - (4 \lambda_{C,1} + 3 \lambda_{C,2} + 2 \lambda_{C,3}) \leq 0
\end{equation}
corresponding to the facet $h_7$ in \cite[Proposition~5.1]{Franz02}.

The significance of this facet is that it is neither trivial nor does it contain the vertex $\lambda_{A,1} = \lambda_{B,1} = \lambda_{C,1} = 1$ of the moment polytope that corresponds to product states $\ket{000}$ -- unlike in the case of $n$ qubits, where the polygonal inequalities \eqref{eq:kirwan/polygonal} are saturated for product states (cf.\ \autoref{fig:kirwan/three-qubits}).
As was pointed out in \cite[Remark~5.3]{Franz02}, this shows that in general the moment polytope is \emph{not} determined by the trivial facets together with the local cone at the point corresponding to the product state (more generally, the local cone at the highest weight of an irreducible representation).

Let $K = \SU(3) \times \SU(3) \times \SU(3)$ and $G = \SL(3) \times \SL(3) \times \SL(3)$ with their tensor product action on the Hilbert space $\calH$. Using the same notation as in the preceding example, we find that \eqref{eq:kirwan/franz h7} can be written in the form
\begin{equation*}
  (\lambda, \underbrace{-Z_{A,23} - Z_{B,13} + Z_{C,13}}_{=: H}) \geq -1.
\end{equation*}
It can be readily computed that
\begin{align*}
  \calH(H=-1) &= \Span_\CC \{ \ket{112}, \ket{123}, \ket{211}, \ket{222}, \ket{233}, \ket{313} \}, \\
  \calH(H<-1) &= \Span_\CC \{ \ket{113}, \ket{212}, \ket{213}, \ket{223} \}, \\
  \mathfrak n_-(H<0) &= \Span_\CC \{ E_{A,21}, E_{C,21}, E_{C,31}, E_{C,32} \}.
\end{align*}
The determinant polynomial is
\begin{equation*}
  \det \Matrix{
    0 & 0 & 0 & \psi_{112} \\
    \psi_{112} & \psi_{211} & 0 & 0 \\
    0 & 0 & \psi_{211} & 0 \\
    \psi_{123} & 0 & 0 & \psi_{222}
  }
  = -\psi_{112} \psi_{123} \psi_{211}^2 \neq 0
\end{equation*}
and thus \eqref{eq:kirwan/franz h7} is indeed a valid inequality.

\subsection*{Three Fermions with Total Spin $1/2$}
\index{fermions}

Our last example is motivated by quantum chemistry.
We consider three fermions with $d$-dimensional orbital degree of freedom and spin $1/2$, as described by the Hilbert space $\Alt^3(\CC^d \otimes \CC^2)$ \cite[\S{}3.1]{KlyachkoAltunbulak08}.
Let $G = \SL(d) \times \SL(2)$ and $K = \SU(d) \times \SU(2)$.
Then the moment map sends a pure state $\rho$ of three fermions to the pair $(\gamma_O, \gamma_S)$, where $\gamma_O$ is the first-order density matrix of the orbital degrees of freedom and $\gamma_S$ likewise for the spin. Thus the moment polytope encodes the constraints between the \emph{orbital occupation numbers}\index{occupation numbers!orbital} and \emph{spin occupation numbers}\index{occupation numbers!spin}.
The anti-symmetric subspace is not an irreducible $G$-representation but rather decomposes as follows:
\begin{align*}
  \Alt^3(\CC^d \otimes \CC^2)
  = &V^d_{\yng(2,1)} \otimes V^2_{\yng(2,1)} \;\oplus\; V^d_{\yng(1,1,1)} \otimes V^2_{\yng(3)}
\end{align*}
(In analogy to \eqref{eq:onebody/schur-weyl bipartite}, each Young diagram is paired with its transpose.)
The different summands in the decomposition correspond to different sectors of total spin $J=1/2, 3/2$.
Now suppose that we know in addition that the global state is a pure state in the sector for total spin $J=1/2$:
\begin{equation*}
   \calH = V^d_{\yng(2,1)} \otimes V^2_{\yng(2,1)} \cong V^d_{\yng(2,1)} \otimes \CC^2.
\end{equation*}
Then the problem of determining the relation between the orbital and spin occupation numbers amounts to computing the moment polytope for the $G$-representation $\calH$.
This illustrates how representations other than those in \autoref{tab:onebody/summary} may enter the picture due to physical considerations.
In \cite[\S{}6.1]{KlyachkoAltunbulak08} it was observed that there are five inequalities which are ``apparently independent'' of the orbital dimension $d$, as they verified for $d=4,5$. Their first equation is
\begin{equation}
\label{eq:kirwan/klyachko altunbulak first}
  \lambda_1 - \lambda_2 \leq 1 + \mu_2
\end{equation}
where $\lambda_1 \geq \dots \geq \lambda_d$ denote the natural occupation numbers of the first-order orbital density matrix $\gamma_O$, normalized to trace $3$, and $\mu_1 \geq \mu_2$ the eigenvalues of the spin density matrix $\rho_S = \gamma_S/3$, normalized to trace $1$.

We will now prove \eqref{eq:kirwan/klyachko altunbulak first} for arbitrary $d$.
For this, we will use the following description of the irreducible $\GL(d)$-representation $V^d_{\Ytwoone}$ (see, e.g., \cite{Fulton97}):
Let $V$ denote the vector space with one basis vector $\yket{ab,c}$ for each \emph{filling}\index{Young diagram!filling} of the Young diagram $\Ytwoone$ with numbers $a,b,c \in \{1,\dots,d\}$.
Then $V^d_{\Ytwoone}$ can be identified with the quotient of $V$ by the relations
\begin{equation}
\label{eq:kirwan/21 relations}
  \yket{ab,a} = 0,
  \quad
  \yket{ab,c} = -\yket{cb,a}
  \quad\text{and}\quad
  \yket{ab,c} = \yket{ba,c} + \yket{ac,b}.
\end{equation}
It is not hard to see that a basis of $V^d_{\Ytwoone}$ is given by those $\yket{ab,c}$ with $a \leq b$, $a < c$ (in the literature, these fillings are known as the \emph{semistandard Young tableaux}\index{Young tableau!semistandard} of shape $\Ytwoone$).
Each vector $\yket{ab,c}$ is a weight vector of weight $\omega(H) = H_{a,a} + H_{b,b} + H_{c,c}$, and the generators $E_{ij}$ of $\mathfrak g_{ij}$ send $\yket{ab,c}$ to the sum of all vectors that arise by replacing a single $j$ by $i$. For example,
\begin{align}
  E_{i2} \yket{11,2} &= \yket{11,i}, \label{eq:kirwan/21 root action} \\
  E_{21} \yket{11,2} &= \yket{21,2} + \yket{12,2} = \yket{12,2}, \nonumber
\end{align}
where we have used the first relation in \eqref{eq:kirwan/21 relations}.

Now consider the group $G = \SL(d) \times \SL(2)$ and its representation on the Hilbert space $\calH = V^d_{\Ytwoone} \otimes \CC^2$.
Using the same notation as before, \eqref{eq:kirwan/klyachko altunbulak first} amounts to the following inequality for the moment polytope:
\begin{equation*}
  (-, \underbrace{2 Z_{A,21} + Z_{B,21}}_{=: H}) \geq -3,
\end{equation*}
where we have used that $\mu_1 - \mu_2 = 1 - 2 \mu_2$. Furthermore,
\begin{align*}
  \calH(H = -3) &\ni \yket{11,2} \otimes \ket 1 \\
  \calH(H < -3) &= \Span_\CC \{ \yket{11,3} \otimes \ket 1, \dots, \yket{11,d} \otimes \ket 1 \} \\
  \mathfrak n_-(H < 0) &= \Span_\CC \{ E_{A,32}, \dots, E_{A,d2} \}
\end{align*}
But then \eqref{eq:kirwan/21 root action} shows that the tangent map $\mathfrak n_-(H < 0) \rightarrow \calH(H < -3)$  is an isomorphism for $\ket\psi = \yket{11,2} \otimes \ket 1$.
By \autoref{prp:kirwan/partial converse}, it follows that the inequality \eqref{eq:kirwan/klyachko altunbulak first} is true for all $d$.
In the same way the other four inequalities asserted in \cite[\S{}6.1]{KlyachkoAltunbulak08} can be proved for arbitrary $d$; but this will be presented elsewhere.

\bigskip

We have developed a small computer program using \textsc{Sage} \cite{Steinothers13} that uses the methods of this chapter to automatically enumerate and verify inequalities for moment polytopes of projective spaces by using the method developed in this chapter.

\section{Discussion}
\label{sec:kirwan/discussion}

In this chapter we have described a solution of the one-body quantum marginal problem that is based on computing the moment polytope for the projective space of global pure states, to which we can always reduce by purification (\autoref{ch:onebody}).
Our geometric approach avoids various technicalities and instead reduces the computation of the moment polytopes to a largely combinatorial question about roots and weights.
This is contrast to the method of Klyachko and Berenstein--Sjamaar \cite{Klyachko04, BerensteinSjamaar00}, which relies on working with complete flag varieties (corresponding to global states with non-degenerate spectrum), where intersections can be detected cohomologically via Schubert calculus.
For example, in the latter approach the pure-state solution for three qutrits, $\calH = \CC^3 \otimes \CC^3 \otimes \CC^3$, that we have discussed in the examples had to be obtained from the solution of the mixed-state problem for $\CC^3 \otimes \CC^3$, which in turn is equivalent to the pure-state problem for $\CC^3 \otimes \CC^3 \otimes \CC^9$!
Even though we directly work with the lower-dimensional models, it is still a computational challenge to compute the marginal inequalities in some scenarios of physical interest---in particular, in fermionic scenarios.
It is part of ongoing work in progress to obtain additional simplifications that should make the solution of the marginal problem for quantum systems of moderate dimensions computationally feasible.
Such efforts might also shine further light on the character of the marginal constraints in the limit of large dimensions, where surprisingly little is known.

\bigskip

In \autoref{sec:slocc/distillation} we describe a rather different approach to the computation of moment polytopes which is based on Kirwan's gradient flow \cite{Kirwan84}.
The resulting numerical algorithm is probabilistic in nature but has the advantage of being applicable to more general projective subvarieties than projective space, such as the SLOCC entanglement classes introduced in \autoref{ch:slocc}.
\chapter{Multipartite Entanglement}
\label{ch:slocc}

In this chapter we consider entanglement in multipartite quantum systems. %
We show that in the case of pure states, features of the global entanglement can already be extracted from local information alone.
This is achieved by associating with any given class of entanglement an entanglement polytope---a geometric object which characterizes the
one-body marginals compatible with that class.
In this way we obtain local witnesses for the multipartite entanglement of a global pure state.
Our approach is applicable to systems of arbitrary size and statistics, and it can be generalized to states affected by low levels of noise.
We also describe a gradient flow technique that can be used for entanglement distillation and the computation of moment polytopes.

The results in this chapter have been obtained in collaboration with Matthias Christandl, Brent Doran, David Gross and Konstantin Wernli, and they have appeared in \cite{WalterDoranGrossEtAl13}.

\section{Summary of Results}

Entanglement is a uniquely quantum mechanical feature. It is responsible for
fundamentally new effects -- such as quantum non-locality -- and
constitutes the basic resource for concrete tasks such as quantum
computing \cite{Vidal03} and interferometry beyond the standard limit
\cite{LeibfriedBarrettSchaetzEtAl04, GiovannettiLloydMaccone04}.
Considerable efforts have been directed at obtaining a systematic
characterization of multi-particle entanglement; however, our understanding remains limited as the complexity of entanglement scales exponentially with the number of particles \cite{DuerVidalCirac00}.

In this chapter, we show that, for pure quantum states, single-particle information alone can serve as a powerful witness to multipartite entanglement.
In fact, we find that a finite list of linear inequalities characterizes the eigenvalues of the one-body reduced density matrices in any given class of entanglement.
Their violation provides a criterion for witnessing multipartite entanglement that (i) only requires access to a linear number of degrees of freedom, (ii) applies universally to quantum systems of arbitrary size and statistics, and (iii) distinguishes among many
important classes of entanglement, including genuine multipartite entanglement.
Geometrically, these inequalities cut out a hierarchy of polytopes, which captures all information about the global pure-state entanglement deducible from local information alone.
Our methods are sufficiently robust to be applicable to situations where the state is affected by low levels of noise.

\bigskip

Formally, a pure state $\rho = \proj\psi$ is said to be \emph{entangled}\index{entanglement!pure-state} if it cannot be written as a tensor product $\ket\psi \neq \ket{\psi_1} \otimes \dots \otimes \ket{\psi_n}$ \cite{NielsenChuang04}.
Two states can be considered to belong to the same \emph{entanglement class}\index{entanglement class!SLOCC} if they can be converted into each other with finite probability of success using local operations and classical communication (stochastic LOCC, or SLOCC) \cite{BennettPopescuRohrlichEtAl00, DuerVidalCirac00}.
In physical terms, this corresponds to performing arbitrary quantum operations on each of the individual particles, where each operation may depend on outcomes of previous measurements on different particles and where we may post-select on measurement outcomes.
We give a precise definition in \autoref{sec:slocc/classification}.
For small systems, these entanglement classes are well-understood.
In the simplest scenario of three qubits, there exist two classes of genuinely entangled states of strikingly different nature:
the first contains the famous \emph{Greenberger--Horne--Zeilinger (GHZ) state}\index{GHZ state}
$(\ket{000} + \ket{111}) / \sqrt 2$,
which exhibits a particularly strong form of quantum correlations \cite{GreenbergerHorneZeilinger89};
the second contains the \emphindex{W state}
$(\ket{100} + \ket{010} + \ket{001}) / \sqrt 3$
\cite{DuerVidalCirac00}.
Whereas states in the W class can be approximated to arbitrary precision by states from the GHZ class, the converse is not
true---implying stronger entanglement of the GHZ class \cite{DuerVidalCirac00}.
Already for four particles there exist infinitely many entanglement classes \cite{VerstraeteDehaeneDeEtAl02}, and the number of parameters required to determine the class grows exponentially with the particle number \cite{DuerVidalCirac00}.
As a result, only sporadic results have been obtained for larger systems, despite the enormous amount of literature dedicated to the problem.
Mathematically, the characterization of SLOCC entanglement classes can be formulated in terms of invariant theory, studied since the 19th century---Cayley's hyperdeterminant, e.g., appears as the 3-tangle \cite{CoffmanKunduWooters00}.
Similar techniques underpin modern developments, such as the geometric complexity theory approach to the $\Pclass$ vs.~$\NP$ problem \cite{BuergisserLandsbergManivelEtAl11} (see \autoref{sec:mul/summary}).

\bigskip

\begin{figure}
  \centering
  \includegraphics[width=0.9\linewidth]{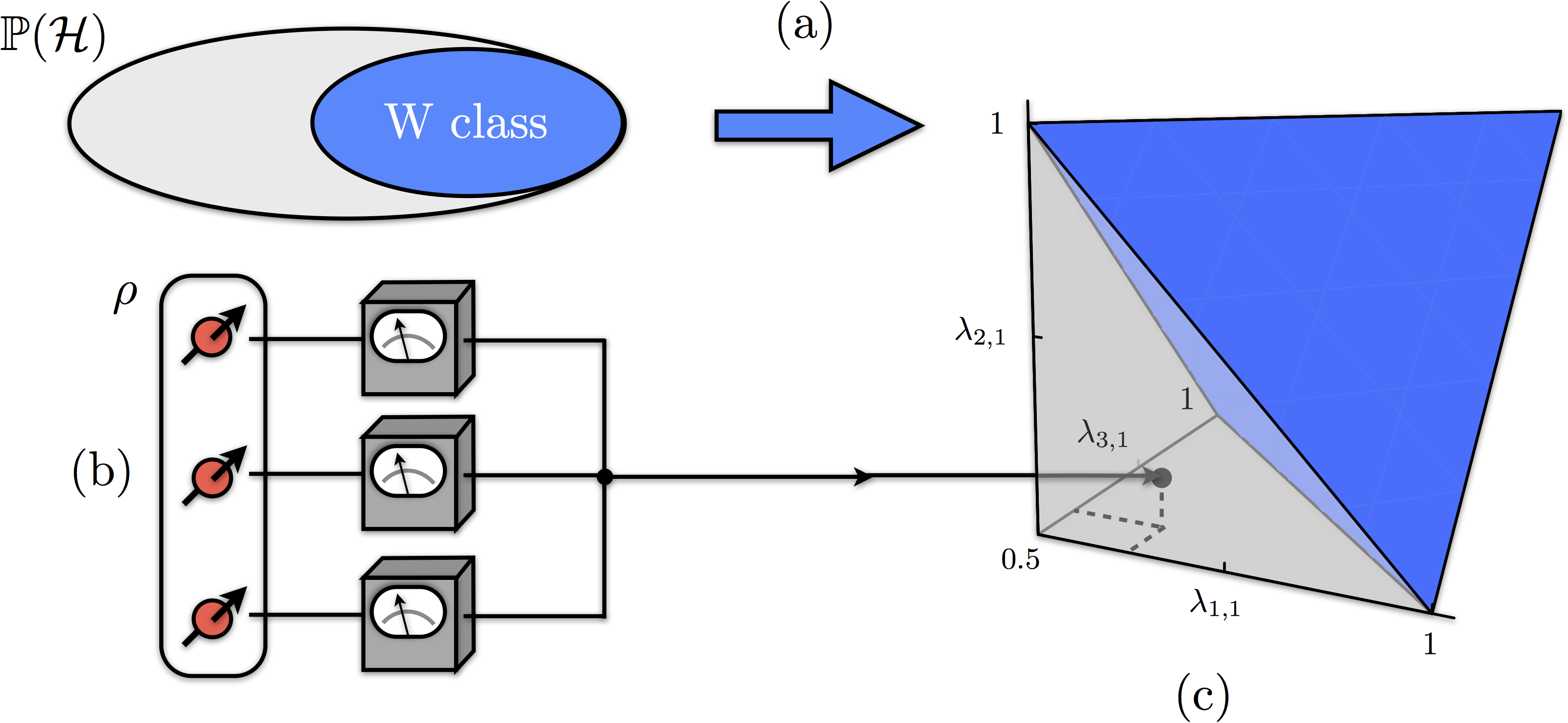}
  \caption[Entanglement polytopes as witnesses]{%
    \emph{Entanglement polytopes as witnesses (illustrated for three qubits).}
    (a) An entanglement polytope contains all possible local eigenvalues of states in the entanglement class and its closure (the W class and its polytope are shown in blue).
    (b) For a sufficiently pure quantum state $\rho$, local tomography is performed to determine its local eigenvalues.
    (c) The indicated eigenvalues are not compatible with the W polytope, hence $\rho$ must have GHZ-type entanglement.}
  \label{fig:slocc/witness}
\end{figure}

Our approach to multipartite entanglement is based on establishing a connection to the one-body quantum marginal problem %
(\autoref{sec:slocc/entanglement polytopes}).
The crucial observation is that the one-body reduced density matrices $\rho_1, \dots, \rho_n$ or, equivalently, their eigenvalues $\vec\lambda_1, \dots, \vec\lambda_n$ alone can already give considerable information about the entanglement of the global state, provided that it is pure.
To make this precise, we consider the set of local eigenvalues $\vec\lambda = (\vec\lambda_1, \dots, \vec\lambda_n)$ of the states in the closure $\overline{\calX}$ of a given entanglement class $\calX$.
Surprisingly, this set also forms a convex polytope (i.e., it is the convex hull of finitely many such vectors), and we call it the \emphindex{entanglement polytope} $\Delta_{\calX}$ of the class.
Entanglement polytopes immediately lead to a local criterion for witnessing multipartite entanglement:
If the collection of eigenvalues $\vec\lambda$ of the one-body reduced density matrices of a pure quantum state $\rho = \proj\psi$ does not lie in an entanglement polytope, then the given state cannot belong to the closure of the corresponding entanglement class $\calX$ (\autoref{fig:slocc/witness}):
\begin{equation}
  \label{eq:slocc/criterion}
  \vec\lambda = (\vec\lambda_1, \dots, \vec\lambda_n) \notin \Delta_\calX
  \,\Longrightarrow\,
  \rho \notin \overline{\calX}.
\end{equation}
In other words, the criterion allows us to witness the presence of a highly entangled state by showing that its local eigenvalues are incompatible with all less-entangled classes.
This way of reasoning is similar to the exclusion of a local hidden variable model by witnessing the violation of a Bell inequality.
Strikingly, there are always only finitely many entanglement polytopes, and they naturally form a hierarchy:
if a state in some class $\calX$ can be approximated arbitrarily well by states from $\calY$ then $\Delta_\calX \subseteq \Delta_\calY$.
This reflects geometrically the fact that states in the second class are more powerful for quantum information processing.

To describe $\Delta_\calX$ mathematically and establish these claims, we work in the framework of \autoref{ch:onebody}.
Using the characterization of SLOCC operations as invertible local operators $g_1 \otimes \dots \otimes g_n$ \cite{DuerVidalCirac00}, we find that the closure of an entanglement class is a projective subvariety of the space of pure states.
Thus $\Delta_\calX$ can be identified with its moment polytope, and the above-mentioned properties follow from the general theory.
We explain how $\Delta_\calX$ can in principle be computed using computational invariant theory \cite{DerksenKemper02}.

\bigskip

We now illustrate the method with some examples taken from \autoref{sec:slocc/examples}.
For qubit systems, each one-body reduced density matrix $\rho_k$  has two eigenvalues, which are non-negative and sum to one; hence its spectrum is completely characterized by the maximal eigenvalue $\lambda_{k,1}$, which can take values in the interval $[0.5, 1]$.
In the case of three qubits, we may therefore regard the entanglement polytopes as subsets of three-dimensional space.
There are two full-dimensional polytopes, as has already been observed in \cite{HanZhangGuo04}: one for the W class (the upper pyramid in \autoref{fig:slocc/witness}) and the other for the GHZ class (the entire polytope, i.e., the union of both pyramids).
The tip of the upper pyramid constitutes a polytope by itself, indicating a product state.
Three further one-dimensional polytopes are given by the edges emanating from this vertex.
They correspond to the three possibilities of embedding an EPR pair $(\ket{00} + \ket{11})/\sqrt{2}$ into three qubits.
Thus, eigenvalues in the interior of the polytope are compatible only with the W and GHZ classes, i.e., genuine three-partite entanglement.
If the eigenvalues lie in the lower pyramid,
$\lambda_{1,1} + \lambda_{2,1} + \lambda_{3,1} < 2$,
then by \eqref{eq:slocc/criterion} the state cannot be contained in the closure of the $W$ class---we have witnessed GHZ-type entanglement.

In systems of 4 qubits, there exist 9 infinite families of entanglement classes, each described by up to %
three complex parameters \cite{VerstraeteDehaeneDeEtAl02} that are not directly accessible; arguably, the complete classification is too detailed to be practical.
In contrast, entanglement polytopes strike an attractive balance between coarse-graining and preserving structure (\autoref{fig:slocc/four qubits extract}):
Up to permutations, there are 12 %
entanglement polytopes, 7 of which are full-dimensional and correspond to distinct types of genuine four-partite entanglement. One example is the 4-qubit W class: in complete analogy to the previous case, its polytope is an ``upper pyramid'' of eigenvalues that fulfill $\lambda_{1,1} + \lambda_{2,1} + \lambda_{3,1} + \lambda_{4,1} \geq 3$.

\begin{figure}
  \centering
  \fourqubitcutsheader

  \includegraphics[width=0.9\linewidth]{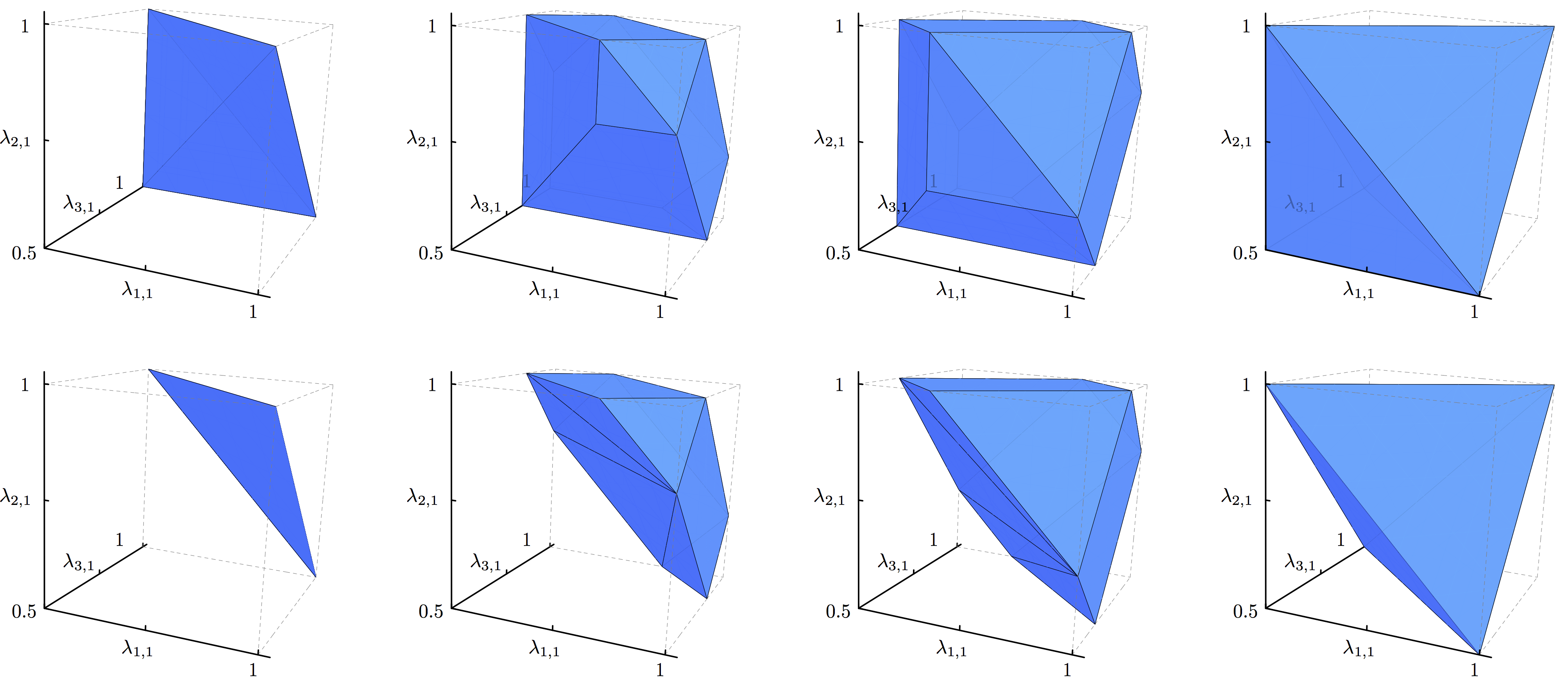}
  \caption[Cross-sections of two entanglement polytopes for four qubits]{%
    \emph{Cross-sections of two entanglement polytopes for four qubits.}
    Each row shows cross-sections of a four-dim\-en\-sio\-nal entanglement polytope for four fixed values of $\lambda_{4,1}$.
    The first row corresponds to the entanglement class $L_{a_4}$ for $a=0$ in the characterization of \cite{VerstraeteDehaeneDeEtAl02}, represented by the state $(i \ket{0001} + \ket{0110} - i \ket{1011})/\sqrt 3$.
    The second row corresponds to the four-qubit W class, $(\ket{1000} + \dots + \ket{0001})/2$.
    One can clearly identify the ``upper-pyramid form'' explained in the text.
    Properties that had previously been computed algebraically %
    can be read off directly:
    E.g., the final column corresponds to the situation when the fourth qubit has been projected onto a pure state;
    as apparent from the polytopes, the state of the remaining three sites is generically of GHZ type in the first row, and of W type in the second row.
    See \cite{Walter12} for an interactive visualization of all four-qubit entanglement polytopes.}
  \label{fig:slocc/four qubits extract}
\end{figure}

\bigskip

In \autoref{sec:slocc/examples} we give further details on these computations.
We also discuss the notion of \emphindex{genuinely multipartite entangled} states, which are of particular interest \cite{GuehneTothBriegel05}.
These are pure states which do not factorize with respect to any partition of the system into two sets of subsystems.
We show for arbitrary systems that the entanglement polytopes of the \emphindex{biseparable} states (i.e., the states that do factorize) do not account for all possible eigenvalues.
Therefore, the presence of genuine multipartite entanglement in a pure quantum state can be witnessed by checking that the local eigenvalues do not lie in any biseparable polytope.
We can also obtain more quantitative information about the multipartite entanglement of a quantum state, e.g., by witnessing genuine $k$-partite entanglement \cite{GuehneTothBriegel05} by using a generalization of the method sketched above.
Entanglement polytopes can also be constructed for quantum systems composed of bosons or fermions \cite{WalterDoranGrossEtAl13}.

\bigskip

In \autoref{sec:slocc/distillation} we then consider the \emphindex{linear entropy of entanglement}
$E(\rho) = 1 - \frac{1}{n}\sum_{j=1}^n \tr \rho_j^2$
\cite{ZurekHabibPaz93, BarnumKnillOrtizEtAl04, BoixoMonras08}, used, e.g., in metrology \cite{FuruyaNemesPellegrino98}.
Entanglement polytopes allow us to bound the maximal linear entropy of entanglement distillable by SLOCC operations:
Since $1-E(\rho)$ corresponds to the Euclidean length of the vector $\vec\lambda$ of local eigenvalues, shorter vectors imply more entanglement.
In particular, quantum states of maximal entropy of entanglement in a class $\calX$ map to the point of minimal distance to the origin in the entanglement polytope $\Delta_{\calX}$.
Therefore, if the local eigenvalues of a given state lie only in polytopes with small distance to the origin, a high amount of entanglement can be distilled.
We explain how to turn this observation into a quantitative statement and describe a distillation procedure based on Kirwan and Ness' gradient flow \cite{Kirwan84, NessMumford84}.
Intriguingly, we find that a generalization of this technique gives rise to a probabilistic \emph{classical} algorithm for the computation of entanglement polytopes and, in fact, general moment polytopes of orbit closures (cf.\ \cite{Wernli13}).

\bigskip

Quantum states prepared in the laboratory are always subject to noise and therefore never perfectly pure.
In \autoref{sec:slocc/noise} we show how our method for witnessing entanglement using \eqref{eq:slocc/criterion} can be adapted to this situation as long as the noise is not too large.
To make this statement precise, we assume that a lower bound $1-\varepsilon$ on the purity $\mathrm{tr}\,\rho^2$ of a quantum state $\rho$ is available. This implies that $\rho$ has fidelity $\langle \psi | \rho | \psi \rangle \geq 1-\varepsilon$ with a pure state $\lvert\psi\rangle$ whose local eigenvalues deviate from the measured ones by no more than a small amount $\delta(\varepsilon)$.
In the case of $n$ qubits one has that $\delta(\varepsilon) \approx n \varepsilon / 2$ for small impurities.
Therefore, as long as the distance of the measured eigenvalues $\vec\lambda$ to an entanglement polytope $\Delta_\calX$ is at least $\delta(\varepsilon)$, the experimentally prepared state $\rho$ has high fidelity with a pure state that is more entangled than the class $\calX$.
These ideas can be further extended to show that $\rho$ itself cannot be written as a convex combination of quantum states in a given entanglement class.
Unlike the local eigenvalues, the purity $\mathrm{tr}\,\rho^2$ cannot be estimated by single-particle tomography alone.
However, it is in general not necessary to perform full tomography of the global state in order to determine the purity \cite{BuhrmanCleveWatrousEtAl01}.
Whereas it is an experimental challenge to achieve the levels of purity necessary for the application of our method, we believe that they are in the reach of current technology \cite{RauschenbeutelNoguesOsnaghiEtAl00, PanDaniellGasparoniEtAl01, MonzSchindlerBarreiroEtAl11}.

\bigskip

After completion of the work described in this chapter, we have learned about independent related work by Sawicki, Oszmaniec and Ku\'{s} \cite{SawickiOszmaniecKus12, SawickiOszmaniecKus12a}.

\section{Classification of Entanglement}
\label{sec:slocc/classification}

A pure quantum state $\rho = \ketbra \psi \psi$ of a multi-particle system with Hilbert space $\calH = \calH_1 \otimes \dots \otimes \calH_n$ is called \emph{entangled} if it cannot be written as a tensor product \cite{NielsenChuang04}
\begin{equation*}
  \ket\psi \neq \ket{\psi_1} \otimes \dots \otimes \ket{\psi_n}.
\end{equation*}
It is easy to see that $\rho$ is unentangled, or \emph{separable}, if and only if each its one-body reduced density matrices $\rho_k$ is itself a pure state.
More generally, mixed states are called separable if they are convex combinations of product states, but in this chapter we are primarily concerned with the entanglement of pure states.

In order to classify the entanglement present in a given multipartite quantum state $\rho$, one fruitful approach has been to compare the capability of $\rho$ for quantum information processing tasks with that of other quantum states \cite{BennettPopescuRohrlichEtAl00}.
Specifically, suppose that $\rho'$ is a state that can be obtained from $\rho$ by some suitable class of operations.
Then $\rho$ can be used as a replacement for $\rho'$ in any quantum information processing scenario where these operations are considered to be ``free''.
If the operations themselves cannot create entanglement then we may think of $\rho$ to be at least as entangled as $\rho'$.
If, conversely, $\rho$ can also be obtained from $\rho'$ then we may regard the two states to possess the same kind of multipartite entanglement.
In this way the set of quantum states is partitioned into equivalence classes.

For example, any \emph{local operation} that only acts on one of the $n$ subsystems can never create entanglement.
Local operations, including measurements, can be described by trace-preserving, completely positive maps that act on the space of density operators on $\calH_k$. Conversely, Stinespring's dilation theorem shows that such maps can always be implemented by locally adding an ancilla system, performing a unitary operation, and tracing out.
It is also useful to allow for \emph{classical communication} between the subsystems, which can create classical correlations but no entanglement.
This allows to condition each subsequent operation on the outcomes of previous measurements, even if those were performed at different subsystems.
To model this formally, one needs an additional, classical register that stores the measurement outcome (as opposed to the post-measurement state); this leads to the definition of a quantum instrument (see, e.g., the exposition in \cite{ChitambarLeungMancinskaEtAl14}).
We thus obtain a class of operations known as \emph{local operations and classical communication}, or \emphindex{LOCC}.
For mixed states, the resulting theory is rather rich and has been extensively studied in the literature (also asymptotically in the limit of many copies of a given state).
For pure states, however, the resulting notion of LOCC equivalence is quite restrictive: Two pure states can be interconverted by LOCC if and only if they are related by local unitaries \cite{Nielsen99, BennettPopescuRohrlichEtAl00}.

\bigskip

In the context of this work it will thus be convenient to consider a stochastic version of LOCC, known as \emphindex{SLOCC} \cite{BennettPopescuRohrlichEtAl00, DuerVidalCirac00}.
Here, the conversion of one state into another is only required to succeed with non-zero probability.
Operationally, this amounts to allowing \emphindex{post-selection} on measurement outcomes that occur with non-zero probability.

In \cite{DuerVidalCirac00}, it has been shown that two pure states $\rho = \proj\psi$ and $\rho' = \proj{\psi'}$ are interconvertible using SLOCC if and only if there exist invertible operators $g_k$ acting on $\calH_k$ such that $\ket{\psi'} = (g_1 \otimes \dots \otimes g_n) \ket\psi$.
Here, the non-trivial part is to show that invertible operators $g_k$ can be constructed by following a successful branch of an SLOCC conversion protocol.
For the converse, we may simply successively perform local POVM measurements
with Kraus operators
\begin{equation}
  \label{eq:slocc/povm}
  S_k = \sqrt{p_k} g_k, ~ F_k = \sqrt{\Id - p_k (g_k)^\dagger g_k},
\end{equation}
where $p_k > 0$ is a suitable normalization constant such that $p_k g_k^\dagger g_k \leq \Id$.
If all measurements succeed (corresponding to outcomes $S_1, \dots, S_n$) then $\rho'$ has been successfully distilled from $\rho$ \cite{DuerVidalCirac00}.

Mathematically, the result of \cite{DuerVidalCirac00} can be phrased in terms of the action \eqref{eq:onebody/complexified action} of the Lie group $G = \SL(\calH_1) \times \dots \times \SL(\calH_n)$ on projective space $\PP(\calH)$:
Two pure states $\rho$ and $\rho'$ are interconvertible by SLOCC if and only if one state is contained in the $G$-orbit of the other, or, equivalently, if and only if the two orbits are the same: $G \cdot \rho = G \cdot \rho'$.
We take this as the definition of an entanglement class:

\begin{dfn}
  Let $\calH = \calH_1 \otimes \dots \otimes \calH_n$ be a tensor-product Hilbert space.
  The \emph{(SLOCC) entanglement class}\index{entanglement class!SLOCC|textbf} of a pure state $\rho = \proj\psi$ is defined as%
  \nomenclature[QX, Y]{$\calX, \calY, \dots$}{entanglement classes}%
  \nomenclature[QX_rho1]{$\calX_\rho$}{entanglement class of a pure state $\rho$}
  \begin{equation*}
    \calX_\rho := G \cdot \rho =
    \left\{ \frac {\Pi(g) \proj\psi \Pi(g)^\dagger} {\braket{\psi | \Pi(g)^\dagger \Pi(g) | \psi}} : g \in G \right\}
    \subseteq
    \PP(\calH).
  \end{equation*}
\end{dfn}

In this way, the group $G$ that had already appeared in the preceding chapters acquires a physical interpretation in terms of SLOCC operations.
It is clear that the product states form a single entanglement class.
The closure of an entanglement class contains in addition to the class itself also those quantum states which can be arbitrarily well approximated by states in the class.
It can thus be given a similar operational interpretation as the class itself.
While the entanglement classes partition the set of multipartite quantum states, their closures naturally form a hierarchy.
Indeed, it is immediate that $\rho' \in \overline{\calX_\rho}$ implies $\overline{\calX_{\rho'}} \subseteq \overline{\calX_\rho}$.
Every entanglement class contains in its closure the class of unentangled states.

\bigskip

Stochastic local operations and classical communication provide a systematic framework for studying multi-particle entanglement.
However, it is immediate from the fact that the dimension of $G$ grows only linearly with the particle number $n$ that there is generically an infinite number of distinct SLOCC entanglement classes, labeled by an exponential number of continuous parameters (cf.~\autoref{sec:slocc/examples}).
It is therefore necessary to coarsen the classification in a systematic way in order to arrive at a tractable way of witnessing multi-particle entanglement.
This is one motivation for the notion of entanglement polytopes that we will define in the next section.

We conclude this section by noting that an extraordinary amount of research has been devoted to the classification of entanglement and its experimental identification.
The field is far too large to allow for an exhaustive bibliography; we refer to \cite{HorodeckiHorodeckiHorodeckiEtAl09, EisertGross08} for reviews of the general theory and to \cite{GuehneToth09} for a review focusing on detection.
Methods from algebraic geometry and classical invariant theory have long been used to analyze entanglement classes, see, e.g., \cite{
VerstraeteDehaeneDeEtAl02,
Klyachko02,
BriandLuqueThibon03,
VerstraeteDehaeneDe03,
Miyake03,
LeiferLindenWinter04,
HanZhangGuo04,
OsterlohSiewert05,
OsterlohSiewert06,
Klyachko07,
JokovicOsterloh09,
BastinKrinsMathonetEtAl09,
Osterloh10,
GourWallach11,
ViehmannEltschkaSiewert11,
EltschkaBastinOsterlohEtAl12}
and references therein.

\section{Entanglement Polytopes}
\label{sec:slocc/entanglement polytopes}

Let $\calH = \calH_1 \otimes \dots \otimes \calH_n$ and $G = \SL(\calH_1) \times \dots \times \SL(\calH_n)$.
We are interested in studying the one-body reduced density matrices $\rho_1, \dots, \rho_n$ for all pure states $\rho$ in a given entanglement class $\calX \subseteq \PP(\calH)$ and its closure $\overline\calX$.
This is a refinement of the one-body quantum marginal problem, \autoref{pro:onebody/qmp}, where we had instead considered all pure states on $\calH$.
Since $G$ contains the maximal compact subgroup $K = \SU(\calH_1) \times \dots \times \SU(\calH_n)$ of local unitaries, we may -- just as in the case of \autoref{pro:onebody/qmp} -- equivalently study the local eigenvalues $\vec\lambda_k$, as captured by the following definition:

\begin{dfn}
  The \emph{entanglement polytope}\index{entanglement polytope|textbf} of an entanglement class $\calX$ is%
  \nomenclature[QDelta_X]{$\Delta_\calX$}{entanglement polytope of a class $\calX$}
  \begin{equation*}
    \Delta_\calX =
    \left\{
      (\vec\lambda_1, \dots, \vec\lambda_n)
      :
      \vec\lambda_k = \spec \rho_k,
      \rho \in \overline\calX
    \right\},
  \end{equation*}
  the set of local eigenvalues of the one-body reduced density matrices of all quantum states in the closure of the entanglement class.
\end{dfn}

As $\overline\calX$ is the closure of a $G$-orbit, it is a $G$-invariant projective subvariety of $\PP(\calH)$.
Its moment polytope $\Delta_K(\overline\calX) = \mu_X(\overline\calX) \cap i \mathfrak t^*_+$ can be identified with $\Delta_\calX$ in the same way as we did in \autoref{sec:onebody/consequences} for the one-body quantum marginal problem. We summarize:

\begin{lem}
\label{lem:slocc/convexity}
  The set $\Delta_\calX$ is a convex polytope that can be identified with the moment polytope $\Delta_K(\overline\calX)$ of the projective subvariety $\overline\calX \subseteq \PP(\calH)$.
\end{lem}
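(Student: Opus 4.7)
The plan is to reduce the statement to an application of Mumford's theorem (\autoref{thm:onebody/kirwan}) for the moment map on the projective space $\PP(\calH)$ with respect to the $K$-action, using the framework set up in \autoref{sec:onebody/git} and the translation to the quantum marginal problem from \autoref{sec:onebody/consequences}. To do so, I need to verify two things: first, that $\overline\calX$ is a $G$-invariant projective subvariety in the sense of \autoref{dfn:onebody/projective subvariety}; and second, that the moment polytope $\Delta_K(\overline\calX)$ of this subvariety is, under the natural identifications, precisely the set $\Delta_\calX$ of eigenvalue tuples.

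First I would verify that $\overline\calX$ is a $G$-invariant projective subvariety. The entanglement class $\calX = G \cdot \rho$ is the image of $G$ under the morphism $g \mapsto g \cdot \rho$, hence a constructible subset of $\PP(\calH)$, so that its closure in the Hilbert-space topology coincides with its Zariski closure (as recalled below \autoref{dfn:onebody/projective subvariety}). The Zariski closure of the image of an irreducible variety under a morphism is irreducible, so $\overline\calX$ is the zero set of a homogeneous ideal whose associated affine cone $\calC \subseteq \calH$ has an integral coordinate ring, i.e., $\overline\calX$ is an irreducible projective subvariety. $G$-invariance of $\overline\calX$ follows from the $G$-invariance of $\calX$ together with the continuity of the $G$-action on $\PP(\calH)$. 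Hence \autoref{thm:onebody/kirwan} applies, and $\Delta_K(\overline\calX) = \mu_K(\overline\calX) \cap i\mathfrak t^*_+$ is a convex rational polytope.

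Next I would match $\Delta_K(\overline\calX)$ with $\Delta_\calX$. With $K = \SU(\calH_1) \times \dots \times \SU(\calH_n)$ and the tensor-product representation, the computation already carried out in \autoref{sec:onebody/consequences} gives
\begin{equation*}
  (\mu_K(\rho), (X_1, \dots, X_n)) = \sum_{k=1}^n \tr \rho_k X_k
\end{equation*}
for tuples of traceless Hermitian matrices $X_k$. Identifying $i\mathfrak k^*$ with $i\mathfrak k$ by the Hilbert--Schmidt inner product, $\mu_K(\rho)$ corresponds to the tuple of traceless parts of the one-body reduced density matrices $(\rho_1, \dots, \rho_n)$. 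Since each $\rho_k$ has unit trace, its traceless part is determined by its spectrum, and the $K$-equivariance of the moment map lets us identify $i\mathfrak t^*_+$ (for the standard choice of positive Weyl chamber in each $\SU(\calH_k)$-factor) with tuples of non-increasingly ordered spectra $(\vec\lambda_1, \dots, \vec\lambda_n)$ (cf.\ \autoref{tab:onebody/unitary group}). Under these identifications, $\mu_K(\overline\calX) \cap i\mathfrak t^*_+$ becomes precisely the set of spectra tuples attained by one-body reduced density matrices of states in $\overline\calX$, which is $\Delta_\calX$ by definition.

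Combining the two steps, $\Delta_\calX = \Delta_K(\overline\calX)$ as subsets of $i\mathfrak t^*_+$, and \autoref{thm:onebody/kirwan} gives convexity and the polytope structure. The only step that needs slight care is the first one: while the $G$-invariance and closure manipulations are routine, invoking the equivalence of Hilbert-space and Zariski closures for the orbit, together with irreducibility, is the essential geometric input that makes the theorem of Mumford applicable. Once this is granted, the rest is just bookkeeping between the abstract moment-map image and the concrete parametrization by local eigenvalues.
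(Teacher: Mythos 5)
Your proposal is correct and follows essentially the same route as the paper: the paper's own argument is exactly that $\overline\calX$, being an orbit closure of the connected group $G$, is a $G$-invariant projective subvariety (using constructibility and the coincidence of Zariski and Hilbert-space closures noted below \autoref{dfn:onebody/projective subvariety}), so that \autoref{thm:onebody/kirwan} applies, and the identification of $\Delta_K(\overline\calX)$ with the eigenvalue set $\Delta_\calX$ is carried out exactly as in \autoref{sec:onebody/consequences}. Your write-up just spells out these two steps in slightly more detail than the paper does.
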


If the vector $\vec\lambda = (\vec\lambda_1, \dots, \vec\lambda_n)$ of local eigenvalues of a state $\rho$ is not contained in a given entanglement polytope $\Delta_\calX$ then by definition $\rho$ cannot be contained in the closure of the class $\calX$.
This establishes \eqref{eq:slocc/criterion}, our criterion for witnessing multi-particle entanglement.
The entanglement polytopes form a hierarchy which coarsens the hierarchy of the closures of entanglement classes:
\begin{equation*}
  \overline\calX \subseteq \overline\calY
  \,\Longrightarrow\,
  \Delta_\calX \subseteq \Delta_\calY
\end{equation*}
We remark that $\vec\lambda$ is a natural generalization of the Schmidt coefficients or entanglement spectrum for bipartite states.

\bigskip

As we saw in \autoref{ch:onebody}, the convexity of moment polytopes can be established for arbitrary $G$-invariant projective subvarieties of $\PP(\calH)$.
For example, we may also consider the set $\{ \rho = \rho_I \otimes \rho_{I^c} \}$ of pure states that are \emphindex{biseparable} with respect to a fixed bipartition $I:I^c$ of the $n$ subsystems, known as a \emphindex{Segr\'{e} variety} in algebraic geometry.
We will return to this construction in our discussion of genuine multipartite entanglement in \autoref{sec:dhmeasure/examples}.

We remark that subsequent works have proposed a similar analysis of entanglement based on entropies rather than eigenvalues, which leads to a coarser notion than our entanglement polytopes \cite{HuberVicente13, HuberPerarnau-LlobetVicente13} (cf.\ \autoref{ch:entropy}).

\subsection*{Invariant-Theoretic Description}

The convexity of moment polytopes and their description relies on the decomposition of the ring of regular functions into irreducible representations (\autoref{thm:onebody/kirwan}).
In the case of the closure of an entanglement class $\calX = G \cdot \rho$, this description can be slightly simplified.
For this, we consider the following definition from classical invariant theory \cite{KraftProcesi96}.

\begin{dfn}
  A \emphindex{covariant}\nomenclature[RPhi]{$\Phi$}{covariant} of degree $k$ and weight $\lambda = (\lambda_1, \dots, \lambda_n)$ is a $G$-equivariant map
  \begin{equation*}
    \Phi \colon \calH = \calH_1 \otimes \dots \otimes \calH_n \longrightarrow V_{G,\lambda} = V^{d_1}_{\lambda_1} \otimes \dots \otimes V^{d_n}_{\lambda_n}
  \end{equation*}
  whose components are homogeneous polynomials of degree $k$ and where $d_i = \dim \calH_i$.
\end{dfn}

Since $\Phi$ is homogeneous of degree $k$, we may think of each $\lambda_i$ as a Young diagram with $k$ boxes and at most $d_i$ rows, i.e., $\lambda_i \in \ZZ_{\geq 0}^{d_i}$ and $\sum_j \lambda_{i,j} = k$ for all $i=1,\dots,n$.
Indeed, this would be the only choice for the number of boxes if we were to work with $\GL(d_1) \times \dots \times \GL(d_n)$ rather than $G = \SL(d_1) \times \dots \times \SL(d_n)$.
In the present context it is slightly more convenient to stay with the usual conventions and work with the latter group.
A covariant for which $V_{G,\lambda}$ is the trivial representation is nothing but an \emphindex{invariant polynomial}.
This is the case if and only if each representation is a power of the determinant representation, i.e., if $\lambda_i = (k/d_i,\dots,k/d_i)$ for all $i=1,\dots,n$ (cf.\ \autoref{sec:onebody/lie}).
While covariants are functions defined on the Hilbert space $\calH$, they are homogeneous and so their non-vanishing can be well-defined on points of projective space by taking any representative vector; we will denote this by $\Phi(\rho) \neq 0$.

The covariants of degree $k$ are in bijection with the highest weight vectors in $R_k(\calH)$, the space of homogeneous polynomials of degree $k$ on $\calH$.
Indeed, if $\Phi$ is a covariant of degree $k$ and weight $\lambda$ then its component
\begin{equation}
\label{eq:slocc/highest weight vector}
  P \colon \calH \rightarrow \CC, \ket\psi \mapsto \bra{w_0 \lambda} \Phi(\ket\psi)
\end{equation}
is a highest weight vector in $R_k(\calH)$ of highest weight $\lambda^*$, where $\ket{w_0 \lambda} \in V_{G,\lambda}$ is a fixed ``lowest weight vector'' of weight $w_0 \lambda = -\lambda^*$.
What is more,
\begin{equation}
\label{eq:slocc/vanishing equivalence}
  \Phi(\rho) \neq 0
  \;\Leftrightarrow\;
  P\big|_{\overline{G \cdot \rho}} \neq 0,
\end{equation}
since the $G$-orbit through $\ket{w_0\lambda}$ spans the entire irreducible representation $V_{G,\lambda}$,
while any polynomial that vanishes on $G \cdot \rho$ must also vanish on the closure.
The following lemma is already implicit in \cite{Brion87}:

\begin{lem}
  \label{lem:slocc/covariants vs irreps}
  For an entanglement class $\calX = G \cdot \rho$, the following are equivalent:
  \begin{enumerate}
    \item[(1)] There exists an irreducible representation $V_{G,\lambda^*} \subseteq R_k(\overline{\calX})$.
    \item[(2)] There exists a covariant $\Phi$ of degree $k$ and weight $\lambda$ such that $\Phi(\rho) \neq 0$.
  \end{enumerate}
\end{lem}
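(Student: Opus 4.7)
The plan is to assemble the lemma directly from two ingredients the excerpt has already set up. The first is the explicit bijection in equation (slocc/highest weight vector) between covariants $\Phi$ of degree $k$ and weight $\lambda$ and highest weight vectors $P$ of weight $\lambda^*$ in $R_k(\calH)$, together with the equivalence (slocc/vanishing equivalence) that translates $\Phi(\rho) \neq 0$ into the statement that $P$ does not vanish on the orbit closure $\overline{G \cdot \rho} = \overline{\calX}$. The second is the basic representation-theoretic fact recalled at the end of Section onebody/lie: for a finite-dimensional $G$-representation, the irreducible subrepresentations are in one-to-one correspondence (up to rescaling) with the highest weight vectors.

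First I would make explicit the comparison between $R_k(\calH)$ and $R_k(\overline\calX)$. Since $\overline\calX$ is a $G$-invariant projective subvariety of $\PP(\calH)$ (orbit closures of algebraic group actions are algebraic), its vanishing ideal is $G$-invariant, and hence the surjective restriction homomorphism $R_k(\calH) \twoheadrightarrow R_k(\overline\calX)$ is $G$-equivariant. By complete reducibility of rational $G$-representations, this map admits a $G$-equivariant section; in particular, a highest weight vector in $R_k(\calH)$ either vanishes identically on $\overline\calX$ or restricts to a nonzero highest weight vector in $R_k(\overline\calX)$ of the same weight, and conversely every highest weight vector in $R_k(\overline\calX)$ is the image of such a lift.

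The two implications then follow immediately. For $(2)\Rightarrow(1)$, given a covariant $\Phi$ of degree $k$ and weight $\lambda$ with $\Phi(\rho)\neq 0$, the associated $P$ from (slocc/highest weight vector) is a highest weight vector of weight $\lambda^*$ in $R_k(\calH)$, and (slocc/vanishing equivalence) asserts that $P$ does not vanish on $\overline\calX$. Its restriction is therefore a nonzero highest weight vector in $R_k(\overline\calX)$, which generates a subrepresentation isomorphic to $V_{G,\lambda^*}$. For $(1)\Rightarrow(2)$, any subrepresentation $V_{G,\lambda^*}\subseteq R_k(\overline\calX)$ contains a nonzero highest weight vector; lift it to a highest weight vector $P\in R_k(\calH)$ of weight $\lambda^*$ that does not vanish on $\overline\calX$, and construct the corresponding covariant $\Phi$ via the bijection from (slocc/highest weight vector). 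Then (slocc/vanishing equivalence) yields $\Phi(\rho)\neq 0$.

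There is no substantive obstacle here, only a bookkeeping one: one must check that the map $\Phi\mapsto P$ of (slocc/highest weight vector) really is a bijection. This is clear in one direction ($G$-equivariance of $\Phi$ combined with the fact that $\ket{w_0\lambda}$ is a lowest weight vector of $V_{G,\lambda}$ forces $P$ to be a highest weight vector of weight $\lambda^*$); the inverse uses that $V_{G,\lambda}$ is cyclically generated by $\ket{w_0\lambda}$ under the action of $G$, equivalently that the space of covariants identifies with $\Hom_G(V_{G,\lambda}^*,R_k(\calH))$ whose dimension equals, by Schur's lemma, the multiplicity of $V_{G,\lambda}^*$ in $R_k(\calH)$.
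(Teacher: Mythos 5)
Your proposal is correct and follows essentially the same route as the paper: both directions rest on the bijection \eqref{eq:slocc/highest weight vector} between covariants and highest weight vectors of weight $\lambda^*$ in $R_k(\calH)$, the vanishing equivalence \eqref{eq:slocc/vanishing equivalence}, and the identification of irreducible subrepresentations with highest weight vectors. Your extra care in lifting a highest weight vector from $R_k(\overline\calX)$ to $R_k(\calH)$ via a $G$-equivariant section is only a slightly more explicit version of the step the paper treats implicitly.
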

\begin{proof}
  $(1) \Rightarrow (2)$:
  Let $P$ denote a highest weight vector of the irreducible representation $V_{G,\lambda^*} \subseteq R_k(\overline\calX)$.
  Then we may consider $P$ as a polynomial in $R_k(\calH)$ that does not vanish on $\overline\calX$, and \eqref{eq:slocc/vanishing equivalence} implies that the corresponding covariant $\Phi$ is non-zero at $\rho$.

  $(2) \Rightarrow (1)$:
  Conversely, any covariant $\Phi$ with $\Phi(\rho) \neq 0$ corresponds to a highest weight vector $P$ of highest weight $\lambda^*$ in $R_k(\calH)$.
  By virtue of \eqref{eq:slocc/vanishing equivalence}, $P$ does not fully vanish on $\overline\calX$.
  Therefore, $P$ determines a highest weight vector in $R_k(\overline\calX)$ and there exists a corresponding irreducible representation $V_{G,\lambda^*} \subseteq R_k(\overline\calX)$.
\end{proof}

To show that moment polytopes are not only convex but indeed polytopes, we had used the crucial fact that the algebra of $N_+$-invariant polynomials $R(\calH)^{N_+}$ -- whose elements are linear combinations of highest weight vectors -- is finitely generated \cite{Grosshans73} (see proof of \autoref{prp:onebody/mumford convex polytope}).
We saw that there exist finitely many highest weight vectors $P^{(1)}, \dots, P^{(m)}$ such that all other highest weight vectors can be obtained as linear combinations of monomials in the $P^{(j)}$.
Let us call the corresponding covariants $\Phi^{(1)}, \dots, \Phi^{(m)}$ a \emph{generating set of covariants}\index{covariants!generating set}.

\begin{prp}
  \label{prp:slocc/description}
  The entanglement polytope of an entanglement class $\calX = G \cdot \rho$ is given by
  \begin{equation*}
    \Delta_\calX = \conv \left\{
    \frac {\lambda^{(j)}} {k^{(j)}} = \frac {(\lambda^{(j)}_1, \dots, \lambda^{(j)}_n)} {k^{(j)}} : \Phi^{(j)}(\rho) \neq 0
    \right\}.
  \end{equation*}
  where $\Phi^{(1)}, \dots, \Phi^{(m)}$ denotes a generating set of covariants with degrees $k^{(j)}$ and weights $\lambda^{(j)}$.
\end{prp}
\begin{proof}
  Recall from \autoref{lem:slocc/convexity} that $\Delta_\calX$ can be identified with the moment polytope of the $K$-action on $\overline\calX$. By \autoref{thm:onebody/kirwan} and \autoref{lem:slocc/covariants vs irreps}, the rational points of the entanglement polytope are thus given by the normalized weights $\lambda/k$ corresponding to covariants that do not vanish at $\rho$.
  In particular, the inclusion $(\supseteq)$ is immediate.

  For $(\subseteq)$, we closely follow the proof of \autoref{prp:onebody/mumford convex polytope}.
  Let $\lambda/k \in \Delta_\calX$ with corresponding covariant $\Phi$ of degree $k$ and weight $\lambda$.
  We denote by $P \in R_k(\calH)$ the highest weight vector \eqref{eq:slocc/highest weight vector} corresponding to the covariant $\Phi$.
  Since the $P^{(j)}$ are generators of the algebra of highest weight vectors, we may write $P$ as a linear combination of monomials in the $P^{(j)}$.
  If the linear combination is chosen minimally then the degrees $k^{(j)}$ and weights $\lambda^{(j)}$ of each monomial $P^{(j_1)} \dotsm P^{(j_p)}$ add up to the degree $k$ and weight $\lambda$ of $P$. Thus,
  \begin{equation}
  \label{eq:slocc/convex combination}
    \frac \lambda k
    = \sum_i \frac {k^{(j_i)}} k \frac {\lambda^{(j_i)}} {k^{(j_i)}}
    \in \conv \left\{ \frac {\lambda^{(j_1)}} {k^{(j_1)}}, \dots, \frac {\lambda^{(j_p)}} {k^{(j_p)}} \right\}
  \end{equation}
  We know from \eqref{eq:slocc/vanishing equivalence} that $P$ does not completely vanish on $\calX$.
  Thus same must be true for all factors in at least one of the monomials in the linear combination.
  Again using \eqref{eq:slocc/vanishing equivalence}, the corresponding covariants $\Phi^{(j_1)}, \dots, \Phi^{(j_p)}$ are all non-zero at $\rho$.
  But then \eqref{eq:slocc/convex combination} shows that $\lambda/k$ is indeed a convex combination of normalized weights $\lambda^{(j)}/k^{(j)}$ of non-vanishing covariants.
\end{proof}

Finite generation also implies other desirable properties, which hold more generally for moment polytopes of projective subvarieties \cite{Brion87}.
For example, there are only \emph{finitely many} entanglement polytopes, since by \autoref{prp:slocc/description} any entanglement polytope is the convex hull of some subset of the finite list of normalized weights $\lambda^{(j)} / k^{(j)}$.
What is more, the set of quantum states for which all generators are non-zero,
\begin{equation*}
  \bigcap_{j=1}^m \{ \rho \in \PP(\calH) : \Phi^{(j)}(\rho) \neq 0 \},
\end{equation*}
is a finite intersection of Zariski-open sets, hence itself Zariski-open.
In particular, its complement has positive codimension.
It follows that the entanglement polytope of a generic quantum state is maximal and equal to
\begin{align*}
    &\conv \{ \lambda^{(j)} / k^{(j)} : j=1,\dots, m \} \\
  = &\overline{ \{ \lambda / k : \text{ $\exists$ covariant of degree $k$ and weight $\lambda$} \} } \\
  = &\overline{ \{ \lambda / k : \text{ $\exists$ }V_{G,\lambda^*} \subseteq R_k(\calH) \} } \\
  = &\Delta_K(\PP(\calH)),
\end{align*}
which we recognize as the solution of the one-body quantum marginal problem for pure-states on $\calH$.

We stress that this observation does not imply that the method is trivial.
Even if mathematically a given entanglement class has measure zero in projective space, it can still be an important task to show that a quantum state prepared in the laboratory is not contained in the class---this is perhaps most obvious if the class is the set of separable pure states!
In our approach, this can be done using criterion \eqref{eq:slocc/criterion} by showing that the local eigenvalues of the state are sufficiently far away from the entanglement polytope of the class.
As we explain in \autoref{sec:slocc/noise}, in the presence of small noise our method can be adapted to exclude convex combinations of classes (which always have positive measure).
We remark that in classical statistics, testing for non-independence of random variables is similarly concerned with rejecting a measure-zero property (with respect to the natural measure on the probability simplex of joint distributions).

\subsection*{Computation}

By virtue of \autoref{prp:slocc/description}, the \emph{computation} of entanglement polytopes is a finite problem that can in principle be completely algorithmized.
By using the relation
\begin{equation*}
  R(\calH)^{N_+} \cong
  \left( R(\calH) \otimes R(G)^{N_+} \right)^G \cong
  R(\calH \times G/N_+)^G,
\end{equation*}
which is in fact at the heart of the proof of finite generation \cite{Grosshans73}, the problem of computing a set of generating covariants is transformed into a problem of computing invariants for a complex reductive group
(see \cite[\S{}4.2]{Dolgachev03} for details; cf.~\cite{DoranKirwan07}).
For the latter problem there exists a Gr\"obner-basis algorithm in computational invariant theory \cite{DerksenKemper02} that has been implemented, e.g., in the \textsc{Magma} computer algebra system \cite{BosmaCannonPlayoust97} (but see \autoref{sec:slocc/discussion}).
Once a set of generating covariants has been found, \autoref{prp:slocc/description} can be used to compute the entanglement polytope both for specific states as well as for families of states.
We demonstrate this method when computing examples in the next section (relying on generating sets of covariants that had been previously computed).
In \autoref{sec:slocc/distillation} we describe an alternative approach that does not rely on computational invariant theory.

\section{Examples}
\label{sec:slocc/examples}

In the preceding section we have seen that by associating to every entanglement class its entanglement polytopes we obtain a finite yet systematic classification of multipartite entanglement.
We will now discuss some illustrative examples for qubit systems, where the Hilbert space is of the form $\calH = \CC^{\otimes 2} \otimes \dots \otimes \CC^{\otimes 2}$.
Mathematically, the covariants of a multi-qubit system are in one-to-one correspondence with the covariants of binary multilinear forms, which have been intensely studied in classical invariant theory \cite{Olver99, Luque07}.

Before we proceed, we introduce some notational simplifications.
It will be convenient to represent the local eigenvalues of a pure state of $n$ qubits by the tuple $(\lambda_{1,1}, \dots, \lambda_{n,1}) \in [0.5,1]^n$ of \emph{maximal} local eigenvalues; this is without loss of information since the eigenvalues of the one-body reduced density matrix of each qubit sum to one.
Thus a covariant of degree $k$ and weight $(\lambda_1, \dots, \lambda_n)$ determines the point $(\lambda_{1,1}, \dots, \lambda_{1,n}) / k$ in the polytope of maximal local eigenvalues.
\subsection*{Two Qubits}

We briefly discuss the rather trivial case of two qubits, where $\calH = \CC^2 \otimes \CC^2$.
Here there are two entanglement classes:
One class is represented by the maximally entangled \emph{Einstein--Podolsky--Rosen (EPR) pair}\index{EPR pair}\nomenclature[QEPR]{$\ket\EPR$}{EPR state}
\[\ket\EPR = (\ket{00} + \ket{11})/\sqrt 2,\]
the other is the class of product states, represented by $\ket{00}$.
Using \autoref{lem:onebody/two}, it is not hard to see that the entanglement polytope of the former class is given the convex hull of $(0.5,0.5)$ and $(1,1)$, while the entanglement polytope of the product states is a single point $(1,1)$.
We can also see this formally by using the method of covariants.
There are two generating covariants:
One is the identity map
\[\CC^2 \otimes \CC^2 \rightarrow V^2_{\Yboxdim{5pt}\yng(1)\Yboxdim{8pt}} \otimes V^2_{\Yboxdim{5pt}\yng(1)\Yboxdim{8pt}},\]
which never vanishes and therefore shows that the point $(1,1)$ is contained in both entanglement polytopes.
The other is the map
\begin{equation*}
  \CC^2 \otimes \CC^2 \rightarrow V^2_{\Yboxdim{5pt}\yng(1,1)\Yboxdim{8pt}} \otimes V^2_{\Yboxdim{5pt}\yng(1,1)\Yboxdim{8pt}} \cong \CC,
  \quad
  \ket\psi = \sum_{i,j} \psi_{ij} \ket{ij} \mapsto \psi_{00} \psi_{11} - \psi_{10} \psi_{01}
\end{equation*}
that sends a quantum state to the determinant of its coefficients.
It is of degree 2 and therefore corresponds to the point $(0.5,0.5)$ in an entanglement polytope.
Since the determinant is zero on the class of product states but not on the EPR pair, we obtain the two entanglement polytopes from above by using \autoref{prp:slocc/description}.

\subsection*{Three Qubits}

\begin{table}
  \centering
\Yboxdim{5pt}
  \begin{tabular}{ccccccccc}
    \toprule
    Covariant & Degree & Weight & $\ket\GHZ$ & $\ket W$ & $\ket{B1}$ & $\ket{B2}$ & $\ket{B3}$ & $\ket\SEP$ \\
    \midrule
    $f$ & $1$ & $(\yng(1),\yng(1),\yng(1))$ & $\times$ & $\times$ & $\times$ & $\times$ & $\times$ & $\times$ \\
    $H_x$ & $2$ & $(\yng(2),\yng(1,1),\yng(1,1))$ & $\times$ & $\times$ & $\times$ &  &  &  \\
    $H_y$ & $2$ & $(\yng(1,1),\yng(2),\yng(1,1))$ & $\times$ & $\times$ &  & $\times$ &  &  \\
    $H_z$ & $2$ & $(\yng(1,1),\yng(1,1),\yng(2))$ & $\times$ & $\times$ &  &  & $\times$ &  \\
    $T$ & $3$ & $(\yng(2,1),\yng(2,1),\yng(2,1))$ & $\times$ & $\times$ &  &  &  &  \\
    $\Delta$ & $4$ & $(\yng(2,2),\yng(2,2),\yng(2,2))$ & $\times$ &  &  &  &  &  \\
    \bottomrule
  \end{tabular}
\Yboxdim{8pt}
  \caption[Three-qubit covariants]{\emph{Three-qubit covariants \cite{Le81, Luque07}.} Degrees and weights of a generating set of three-qubit covariants labeled as in \cite[p.~31]{Luque07}, as well as their vanishing behavior on the quantum states representing the six entanglement classes described in the text ($\times$ denotes non-vanishing).
  The invariant $\Delta$ is Cayley's hyperdeterminant, which is closely related to the 3-tangle \cite{CoffmanKunduWooters00, Miyake03}.}
  \label{tab:slocc/three qubit covariants}
\end{table}

In the case of three qubits, where $\calH = \CC^2 \otimes \CC^2 \otimes \CC^2$, there are six distinct entanglement classes \cite{DuerVidalCirac00}:
The classes of the \emph{Greenberger--Horne--Zeilinger (GHZ) state}\index{GHZ state|textbf}\nomenclature[QGHZ]{$\ket\GHZ$}{GHZ state} \cite{GreenbergerHorneZeilinger89} and of the \emph{W state}\index{W state|textbf}\nomenclature[QW]{$\ket W$}{W state} \cite{DuerVidalCirac00}, respectively,
\begin{align*}
  \ket\GHZ &= (\ket{000} + \ket{111}) / \sqrt 2, \\
  \ket W &= (\ket{100} +\ket{010} + \ket{001}) / \sqrt 3,
\end{align*}
three classes that correspond to EPR pairs shared between any two of the three subsystems,
\begin{align*}
  \ket{B1} &= (\ket{010} + \ket{001}) / \sqrt 2, \\
  \ket{B2} &= (\ket{100} + \ket{001}) / \sqrt 2, \\
  \ket{B3} &= (\ket{100} + \ket{010}) / \sqrt 2,
\end{align*}
and the class of product states, represented by
\begin{equation*}
  \ket\SEP = \ket{000}.
\end{equation*}
We shall now compute the corresponding entanglement polytopes by following the general method of covariants described in \autoref{sec:slocc/entanglement polytopes}.
A minimal generating set of six covariants has been determined in late 19th century invariant theory \cite{Le81} in the context of the classification of binary three-linear forms, as explained in \cite{Luque07}.
Recall that each irreducible representation $V^2_\mu$ of $\SU(2)$ can be realized as the space of homogeneous polynomials of degree $\mu_1 - \mu_2$ in two formal variables.
Therefore, any covariant for three qubits can be written as a homogeneous polynomial in formal variables $x_i$, $y_j$ and $z_k$, whose coefficients are themselves homogeneous polynomials in the components $\psi_{ijk}$ of the quantum state (the degree in the $\psi_{ijk}$ is equal to the degree of the covariant, whereas the degrees in the formal variables determine its weight).
For example, the identity map can be written as the multilinear polynomial
$f = \sum_{i,j,k} \psi_{ijk} x_i y_j z_k$;
it is a generating covariant of degree $1$ and weight $({\Yboxdim{5pt}\yng(1)\Yboxdim{8pt}},{\Yboxdim{5pt}\yng(1)\Yboxdim{8pt}},{\Yboxdim{5pt}\yng(1)\Yboxdim{8pt}})$.
A more interesting generator is the covariant
\begin{align*}
  H_x &= x_0^2 \left(\psi_{000} \psi_{011}-\psi_{001} \psi_{010}\right) \\
      &+ x_1 x_0 \left(\psi_{011} \psi_{100}-\psi_{010} \psi_{101}-\psi_{001} \psi_{110}+\psi_{000} \psi_{111}\right) \\
      &+ x_1^2 \left(\psi_{100} \psi_{111}-\psi_{101} \psi_{110}\right)
\end{align*}
which is of degree $2$ and weight $({\Yboxdim{5pt}\yng(2)\Yboxdim{8pt}},{\Yboxdim{5pt}\yng(1,1)\Yboxdim{8pt}},{\Yboxdim{5pt}\yng(1,1)\Yboxdim{8pt}})$.
We refer to \cite[p.\ 31]{Luque07} for the complete list.
By \autoref{prp:slocc/description}, computing the entanglement polytopes is now a mechanical task:
for any quantum state representing the entanglement class, we merely need to collect those covariants which do not vanish on the state, and take the convex hull of their normalized weights.
In \autoref{tab:slocc/three qubit covariants} we have summarized the covariants' properties and their vanishing behavior on the six entanglement classes.
The resulting entanglement polytopes are illustrated in \autoref{fig:slocc/three qubits}.
They are in one-to-one correspondence with the six entanglement classes described above.
Since only the GHZ-class polytope is maximal, it follows that all states can be approximated arbitrarily well by states of GHZ type.
In mathematical terms, the GHZ class is dense; this is of course well-known \cite{DuerVidalCirac00}.
Similarly, the polytope of the W class (upper pyramid) contains all entanglement polytopes except for the GHZ one;
and it is easy to see that by states in the W class one can approximate all states except those of GHZ class.
Thus in this low-dimensional scenario the hierarchy of closures of entanglement classes is faithfully mapped onto the hierarchy of entanglement polytopes---there is no coarse-graining.

\begin{figure}
  \centering
  \includegraphics[width=0.9\linewidth]{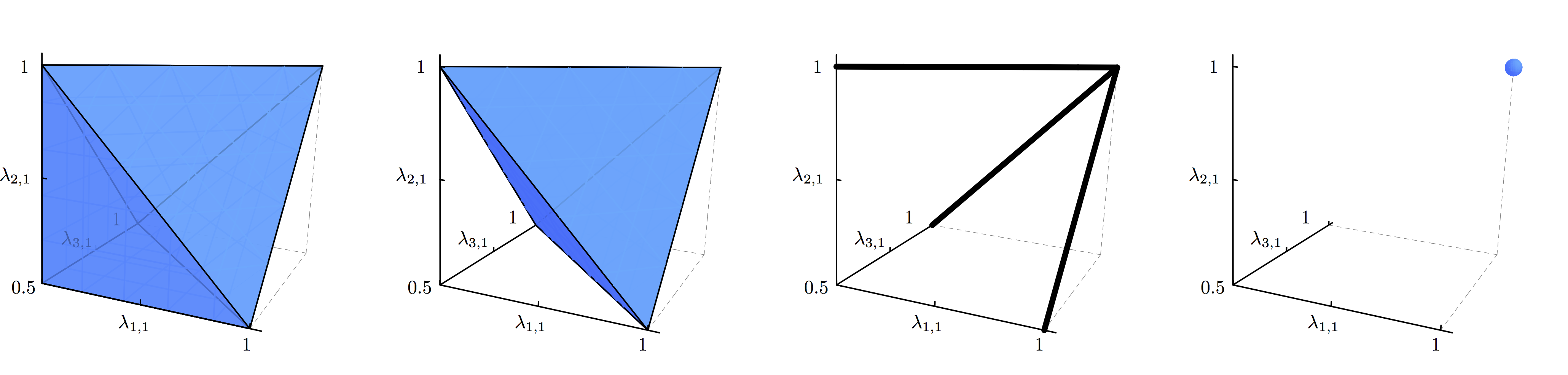}
  \begin{tabularx}{0.9\linewidth}{*4{>{\centering\arraybackslash}X}}
  {\footnotesize (a)\hphantom{xxxxx}} &
  {\footnotesize (b)\hphantom{xxxx}} &
  {\footnotesize (c)\hphantom{xxx}} &
  {\footnotesize (d)\hphantom{xx}}\\
  \end{tabularx}
  \caption[Entanglement polytopes for three qubits]{\emph{Entanglement polytopes for three qubits.}
  (a) GHZ polytope (maximal polytope, i.e., upper and lower pyramid),
  (b) W polytope (upper pyramid),
  (c) three polytopes corresponding to EPR pairs shared between any two of the three parties (three solid edges in the interior),
  (d) polytope of the unentangled states (interior vertex).}
  \label{fig:slocc/three qubits}
\end{figure}

We remark that using techniques crafted towards the special situation of three qubits, these same polytopes had already been previously computed in \cite{HanZhangGuo04, SawickiWalterKus13}.
The one-body quantum marginal problem, which as we have explained amounts to computing the maximal entanglement polytope, has been solved in \cite{HiguchiSudberySzulc03}, and we saw a different derivation in \autoref{ch:kirwan}.

We now illustrate the method of entanglement witnessing via \eqref{eq:slocc/criterion}.
Let $\rho$ be a quantum state with maximal local eigenvalues $\vec\lambda = (\lambda_{1,1}, \lambda_{2,1}, \lambda_{3,1})$.
\begin{itemize}
\item If the point $\vec\lambda$ is contained in the lower part of the entanglement polytope of the GHZ class (lower pyramid in \autoref{fig:slocc/three qubits}, (a)),
  \begin{equation*}
    \lambda_{1,1} + \lambda_{2,1} + \lambda_{3,1} < 2,
  \end{equation*}
  then it is not contained in any other entanglement polytope.
  Therefore the quantum state $\rho$ must be entangled of GHZ type.

\item More generally, if $\vec\lambda$ is \emph{not} contained in any of the lower-dimensional polytopes corresponding to EPR pairs (\autoref{fig:slocc/three qubits}, (c), which includes (d)), i.e., if
  \begin{equation*}
    \lambda_{k,1} < 1
    \quad
    (\forall k=1,2,3),
  \end{equation*}
  then the quantum state $\rho$ must be entangled of either GHZ or W class.
  These classes of states are the ones that possess genuine three-qubit entanglement (see discussion below).
\end{itemize}
As a final example, we consider the quantum state $\rho = \proj\psi$, where
\begin{equation}
\label{eq:slocc/distilled state}
  \ket\psi = (\ket{000} + \ket{001} + \ket{100} + 2 \ket{110}) / \sqrt 7.
\end{equation}
It is easy to verify by the method of covariants that the entanglement polytope of $\rho$ is full-dimensional.
Thus it follows from the above classification that $\rho$ is of GHZ type.
However, its collection of local eigenvalues $(\lambda_{1,1},\lambda_{2,1},\lambda_{3,1}) \approx (0.76, 0.79, 0.88)$
is contained in the interior of the upper pyramid.
As we just discussed, the entanglement criterion in this case only allows us to conclude that $\rho$ is either of GHZ or of W type.
In \autoref{sec:slocc/distillation} we will describe a distillation procedure that allows us to transform $\rho$ by SLOCC operations into another state whose local eigenvalues are arbitrarily close to the ``origin'' $(0.5,0.5,0.5)$ (cf.\ \autoref{fig:slocc/three qubit distillation}).
In this way, we may arrive at a quantum state which is both more entangled and for which the entanglement criterion is maximally informative.

\subsection*{Four Qubits}

In contrast to case of three qubits, where there are finitely many entanglement classes represented faithfully by the hierarchy of entanglement polytopes, the situation for four qubits is the generic one:
There are infinitely many entanglement classes.
According to the classification of \cite{VerstraeteDehaeneDeEtAl02}, up to permutation of the qubits they can be partitioned into nine families with up to three complex continuous parameters each (cf.\ \cite{ChterentalDjokovic07}).
Neither the family itself nor the complex parameters within a family are directly experimentally accessible.

We have determined all entanglement polytopes of four qubits using the general method of \autoref{sec:slocc/entanglement polytopes} applied to a minimal generating set of 170 covariants found in \cite{BriandLuqueThibon03}.
More precisely, for every family in \cite{VerstraeteDehaeneDeEtAl02}, we consider the covariants as a function of the parameters $a$, $b$, etc.\ of the family.
Deciding whether a normalized weight $\lambda^{(j)} / k^{(j)}$ is included in the entanglement polytope $\Delta_\calX$ of a state in the family then amounts to solving the explicit polynomial equation $P_j \neq 0$ in the parameters $a$, $b$, etc.; this can be automatized by using a computer algebra system.

The maximal entanglement polytope is equal to the solution of the one-body quantum marginal problem for four qubits as given by the polygonal inequalities \eqref{eq:kirwan/polygonal} for $n=4$.
It is a convex hull of 12 vertices, which can be easily described as follows:
One vertex, $(1, 1, 1, 1)$ corresponds to the class of product states;
$(0.5, 0.5, 1, 1)$ and its permutations correspond to the six possibilities of embedding an EPR pair into four qubits;
$(0.5, 0.5, 0.5, 1)$ and its permutations correspond to the four possibilities of embedding a GHZ state of three qubits;
and the vertex $(0.5, 0.5, 0.5, 0.5)$ is the image of, e.g., a four-partite GHZ state.

As in the case of three qubits, there are several lower-dimensional entanglement polytopes, corresponding to the different ways of embedding entanglement classes of systems of fewer qubits into four qubits.
These are precisely the biseparable entanglement classes, i.e., the classes whose elements are tensor products $\rho = \rho_I \otimes \rho_{I^c}$ with respect to some proper bipartition $I:I^c$ of the four qubits.
For example, the state $\ket\GHZ \otimes \ket 0$ generates an entanglement class whose elements are tensor products of the three-qubit GHZ class and a one-qubit pure state, and its entanglement polytope is the Cartesian product of the three-qubit GHZ polytope
with the point $\{1\}$.
All possibilities are listed in \autoref{tab:slocc/four qubit lower dimensional}.

\begin{table}
  \centering
  \begin{tabular}{lcc}
    \toprule
    Representative State & Number of Embeddings & Dimension \\
    \midrule
    $\ket\GHZ \otimes \ket0$ & 4 & 3 \\
    $\ket W \otimes \ket0$ & 4 & 3\\
    $\ket\EPR \otimes \ket\EPR$ & 3 & 2 \\
    $\ket\EPR \otimes \ket0 \otimes \ket0$ & 6 & 1 \\
    $\ket0 \otimes \ket0 \otimes \ket0 \otimes \ket0$ \qquad & 1 & 0 \\
    \bottomrule
  \end{tabular}
  \caption[Biseparable entanglement classes of four-qubit states]{\emph{Biseparable entanglement classes of four-qubit states.}
    Each biseparable entanglement class is obtained by embedding entanglement classes of fewer qubits.
    In the left column, we give a representative state for each class;
    the middle column lists the number of permutations of qubits that lead to distinct embeddings, and hence to distinct polytopes,
    and the right column lists the dimension of the polytope.
  }
  \label{tab:slocc/four qubit lower dimensional}
\end{table}

We now turn to the full-dimensional polytopes.
There are seven such polytopes, listed together with some of their properties in \autoref{tab:slocc/four qubits} and \autoref{fig:slocc/four qubits}.
For example, the entanglement classes of the four-qubit GHZ state and of the cluster states \cite{BriegelRaussendorf01} are both associated with the maximal polytope (last row in \autoref{tab:slocc/four qubits} and \autoref{fig:slocc/four qubits}).
The four-qubit $W$-state, $\ket{W_4} = ( \ket{0001} + \ket{0010} + \ket{0100} + \ket{1000}) / 2$, corresponds to polytope no.~5 in \autoref{tab:slocc/four qubits} and \autoref{fig:slocc/four qubits}.
In analogy with the $W$-state for three qubits, its polytope is an ``upper pyramid'' given by the intersection of the maximal polytope with the half-space
\begin{equation*}
  \lambda_{1,1} + \lambda_{2,1} + \lambda_{3,1} + \lambda_{4,1} \geq 3.
\end{equation*}
Any violation of this inequality may be taken as an indication of ``high entanglement''.
One way to make this precise is to read off \autoref{tab:slocc/four qubits} that violations imply that the state $\rho$ can be converted into one whose linear entropy of entanglement is at least $0.45$, which might be much higher than $E(\rho)$ itself.
We will explain this more carefully in \autoref{sec:slocc/distillation} below.

\begin{table}[ht]
  \centering
  \begin{tabular}{cclccccl}
    \toprule
    No & Family & Parameters & Vertices & Facets & Perms & $E(\Delta_{\calX})$ \\
    \midrule
    1 & $L_{0_{7\oplus \bar{1}}}$ & & 12 & 13 & 4 & 0.482143 \\
    2 & $L_{0_{5\oplus 3}}$ & & 10 & 13 & 4 & 0.458333 \\
    3 & $L_{a_4}$ & $a=0$ & 9 & 14 & 6 & 0.45 \\
    4 & $L_{abc_2}$ & $a=c=0, b \neq 0$ & 8 & 16 & 1 & 0.5 \\
    5 & $L_{ab_3}$ & $b=a=0$ & 7 & 9 & 1 & 0.375 \\
    6 & $L_{a_2b_2}$ &  $b=-a\neq0$ & 10 &14 & 6 & 0.5 \\
    7 & $G_{abcd}$ & generic & 12 & 12 & 1 & 0.5 \\
    \bottomrule
  \end{tabular}

  \bigskip

  \begin{tabular}{cclccccl}
    \toprule
    No & Vertices compared to maximal polytope \\
    \midrule
    1
    & $(0.5, 0.5, 0.5, 0.5)$ replaced by $(0.75, 0.5, 0.5, 0.5)$\\
    2
    & $(0.5, 0.5, 0.5, 0.5), (1, 0.5, 0.5, 0.5)$ missing \\
    3
    & $(0.5, 0.5, 0.5, 0.5), (0.5, 1, 0.5, 0.5), (0.5, 0.5, 0.5, 1)$ missing  \\
    4
    & $(0.5,0.5,0.5,1)$ and permutations missing \\
    5
    & $(0.5,0.5,0.5,0.5)$, $(0.5,0.5,0.5,1)$ and permutations missing \\
    6
    & $(0.5,1,0.5,0.5), (0.5,0.5,0.5,1)$ missing \\
    7
    & (this is the maximal polytope) \\
    \bottomrule
  \end{tabular}
  \caption[Entanglement polytopes for genuinely four-partite entangled states]{%
    \emph{Entanglement polytopes for genuinely four-partite entangled states.}
    For each polytope, the second and third column specifies one choice of family and parameters according to the classification of \cite{VerstraeteDehaeneDeEtAl02} (there might be further choices, in part because their parametrization is not always unique).
    ``Perms'' is the number of distinct polytopes one obtains when permuting the qubits.
    $E(\Delta_{\calX})$ is the maximal linear entropy of entanglement of any state in the class.
    The bottom table describes how to obtain each polytope from the maximal entanglement polytope (the solution of the one-body quantum marginal problem).}
  \label{tab:slocc/four qubits}
\end{table}

We now comment on properties that can be read off graphically from \autoref{fig:slocc/four qubits}.
From the first column, one can see that only three of the entanglement polytopes (no.~4, 6 and 7) include the ``origin'' $(0.5, 0.5, 0.5, 0.5)$.
These reach the maximal value for the linear entropy of entanglement $E(\rho) = 1 - \frac{1}{n}\sum_{j=1}^n \tr \rho_j^2$.
The last column exhibits the behavior of the class when the fourth particle is projected onto a generic pure state.
We then obtain a pure state of three qubits on the remaining subsystems.
Polytopes no.~1, 3, 6, and 7 give the full three-qubit polytope, implying that in general a GHZ-type state is generated, while polytopes no.~2, 4, and 5 collapse to the upper pyramid of \autoref{fig:slocc/witness}.
Hence states in the latter classes can never product a state of GHZ-type when the fourth qubit is projected onto a pure state.
It follows that the mixed 3-tangle (and any other convex-roof extension of a polynomial monotone) vanishes on the mixed state generated by tracing out the last particle of any state in these classes.
This observation allows us to graphically recover some properties calculated algebraically in \cite{VerstraeteDehaeneDeEtAl02}, such as the vanishing 3-tangle for the class $L_{{ab}_3}, a=b=0$ (corresponding to polytope no.~5).

\begin{figure}
  \centering
  \fourqubitcutsheadertwo

  \includegraphics[width=0.9\textwidth]{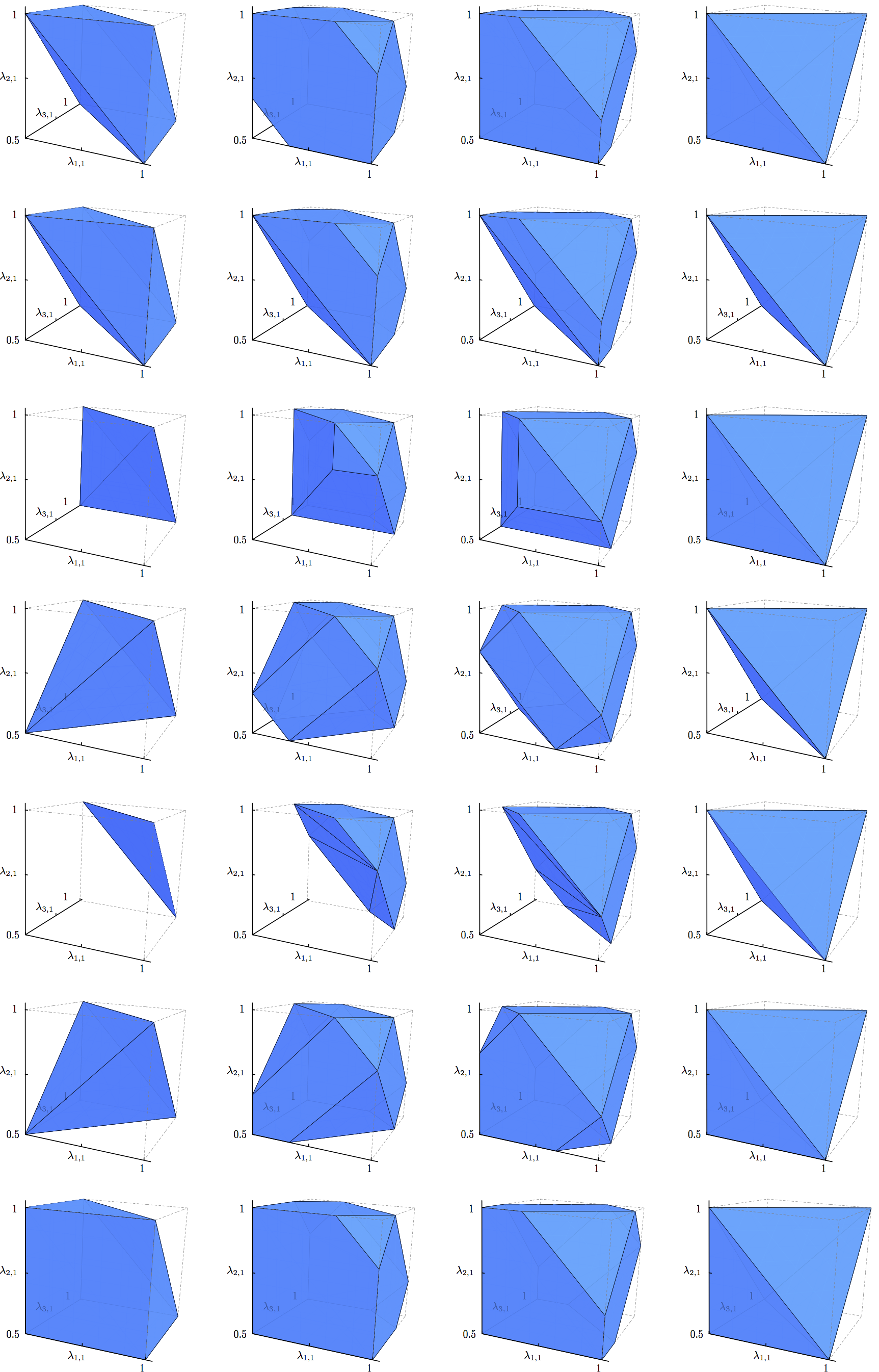}
  \caption[Illustration of the full-dimensional four-qubit entanglement polytopes]{
  \emph{Illustration of the full-dimensional entanglement polytopes for four qubits.}
    The seven rows correspond to the four-dimensional entanglement polytopes in \autoref{tab:slocc/four qubits}.
    The four columns show cross-sections for four fixed values of $\lambda_{4,1}$.
  }
  \label{fig:slocc/four qubits}
\end{figure}

In summary, up to permutations, there are 12 %
entanglement polytopes for four qubits, 7 of which are full-dimensional and belong to genuinely four-partite entangled states.
These numbers increase to 41 and 22, respectively, if distinct permutations are counted separately.
Slightly abusing the tradition of \cite{DuerVidalCirac00, VerstraeteDehaeneDeEtAl02}, we might say that from the perspective of entanglement polytopes, four qubits can be entangled in seven different ways.

\bigskip

We remark that there is a numerical coincidence between our findings and the ones in \cite{JokovicOsterloh09}, where also seven non-biseparable entanglement classes have been identified on four qubits. The two classifications are, however, not identical.
Indeed, \cite{JokovicOsterloh09} is based purely on \emph{invariants}, as opposed to the more general \emph{covariant}-theoretic description of our polytopes.
It follows from \autoref{prp:slocc/description} -- as we discuss in \autoref{sec:slocc/distillation} below -- that all polynomial invariants vanish identically on any entanglement class whose polytope does not contain the ``origin'' $(0.5, 0.5, 0.5, 0.5)$.
Since this is the case for our polytopes no.~1, 2, 3 and 5, the corresponding classes cannot be distinguished from each other by polynomial invariants (in fact, not even from the class of product states!)
Thus our classification differs from the one in \cite{JokovicOsterloh09}.
From a mathematical perspective, our methods are complementary (entanglement polytopes can also distinguish among unstable vectors, whereas \cite{JokovicOsterloh09} provides a better resolution in the semistable case).

\subsection*{Genuine Multipartite Entanglement}

In the preceding examples, entanglement classes corresponding to \emph{biseparable}\index{biseparable|textbf} states, i.e., pure states that can be factorized in the form $\rho = \rho_I \otimes \rho_{I^c}$, had polytopes of lower dimension.
This is no longer true for $n \geq 6$ particles.
For example, the entanglement polytope $\Delta_{\GHZ} \times \Delta_{\GHZ} \subseteq [0.5,1]^6$ associated with the biseparable state $\ket\GHZ \otimes \ket\GHZ \in \CC^{\otimes 6}$ of six qubits is certainly six-dimensional.

Perhaps surprisingly, it still remains true that spectral information alone can be used to show that a state is not biseparable.
To make this precise, we consider the following definition from \cite{GuehneTothBriegel05}; see also \cite{HorodeckiHorodeckiHorodecki01, HorodeckiHorodeckiHorodeckiEtAl09, GuehneToth09, LeviMintert13, SeevinckUffink01, HuberMintertGabrielEtAl10} and references therein.

\begin{dfn}
  Let $\calH = \calH_1 \otimes \dots \otimes \calH_n$.
  A pure state $\rho = \proj\psi$ on $\calH$ is called \emphindex{producible using $k$-partite entanglement} if it is of the form
  \begin{equation*}
    \rho = \rho_{I_1} \otimes \dots \otimes \rho_{I_m},
  \end{equation*}
  where $I_1 \cup \dots \cup I_m = \{1, \dots, n\}$ and each subset $I_j$ is of size at most $k$.

  Otherwise, $\rho$ is called \emphindex{genuinely $(k+1)$-partite entangled}\index{genuinely multipartite entangled|textbf}.
  In particular, the \emph{genuinely $n$-partite entangled} states are precisely those which are not biseparable.
\end{dfn}

We remark that some authors use the term ``genuine $n$-partite entanglement'' in a different sense (e.g., \cite{OsterlohSiewert05, OsterlohSiewert06} where it is also required that there exists a non-vanishing invariant polynomial).

\bigskip

To state our result, recall that the maximal entanglement polytope for $n$ qubits is given by the polygonal inequalities \eqref{eq:kirwan/polygonal}.
Therefore, the constraints on the local eigenvalues of states that factorize with respect to a fixed partition $I_1 \cup \dots \cup I_m = \{1, \dots, n\}$ are given by
\begin{equation}
\label{eq:slocc/polygonal for partition}
  \sum_{l \neq k \in I_j} \lambda_{k,1} \leq \left(\abs{I_j} - 2 \right) + \lambda_{l,1}
  \quad (\forall l \in I_j, j=1, \dots, m)
\end{equation}
Mathematically, while this set of states does not form a single entanglement class, it is still a $G$-invariant projective subvariety -- known as a Segr\'{e} variety in algebraic geometry --, and so has a corresponding moment polytope, namely the one cut out by the inequalities \eqref{eq:slocc/polygonal for partition}.
The inequalities \eqref{eq:slocc/polygonal for partition} hold for quantum states of arbitrary local dimension, since the same is true for the polygonal inequalities, but in general there are additional constraints.

\begin{prp}
\label{prp:slocc/genuine}
  Let $n \geq 3$ and $k \in \{ \lceil n/2 \rceil, \dots, n \}$.
  For any $\varepsilon \in (0,1/(2k-2)]$, the local eigenvalues
  \begin{equation*}
    \lambda_{1,1} = 1 - (k-1)\varepsilon, \quad
    \lambda_{2,1} = \dots = \lambda_{n,1} = 1 - \varepsilon
  \end{equation*}
  can only originate from a genuinely $k$-partite entangled state.

  Conversely, if $k \neq n-1$ then there exists a corresponding pure state of $n$ qubits that is producible using $k$-partite entanglement (i.e., that is not genuinely $k+1$-partite entangled).
\end{prp}
\begin{proof}
  For the first claim, suppose that $\rho = \rho_{I_1} \otimes \dots \otimes \rho_{I_m}$ is a state with the displayed local eigenvalues.
  Without loss of generality, suppose that $1 \in I_1$.
  Then \eqref{eq:slocc/polygonal for partition} for $l=1$ reads
  \begin{equation}
  \label{eq:slocc/genuinely interesting inequality}
    \big( \abs{I_1} - 1 \big) \big( 1 - \varepsilon \big)
    \leq \big( \abs{I_1} - 2 \big) + \big( 1 - (k-1) \varepsilon \big)
    \quad\Leftrightarrow\quad
    \abs{I_1} \geq k.
  \end{equation}
  Since this holds for all partitions, we conclude that $\rho$ is genuinely $k$-partite entangled.

  For the second claim, consider the bipartition $I_1 = \{1, \dots, k\}$, $I_2 = \{k+1, \dots, n\}$.
  Then \eqref{eq:slocc/genuinely interesting inequality} is satisfied and all other inequalities in \eqref{eq:slocc/polygonal for partition} are satisfied for $I_1$ since $k \geq \lceil n/2 \rceil \geq 2$.
  For $I_2$, which we need to consider only if $k < n$, \eqref{eq:slocc/polygonal for partition} is equivalent to $k \leq n-2$.
\end{proof}

\begin{figure}
  \centering
  \includegraphics[width=0.4\linewidth]{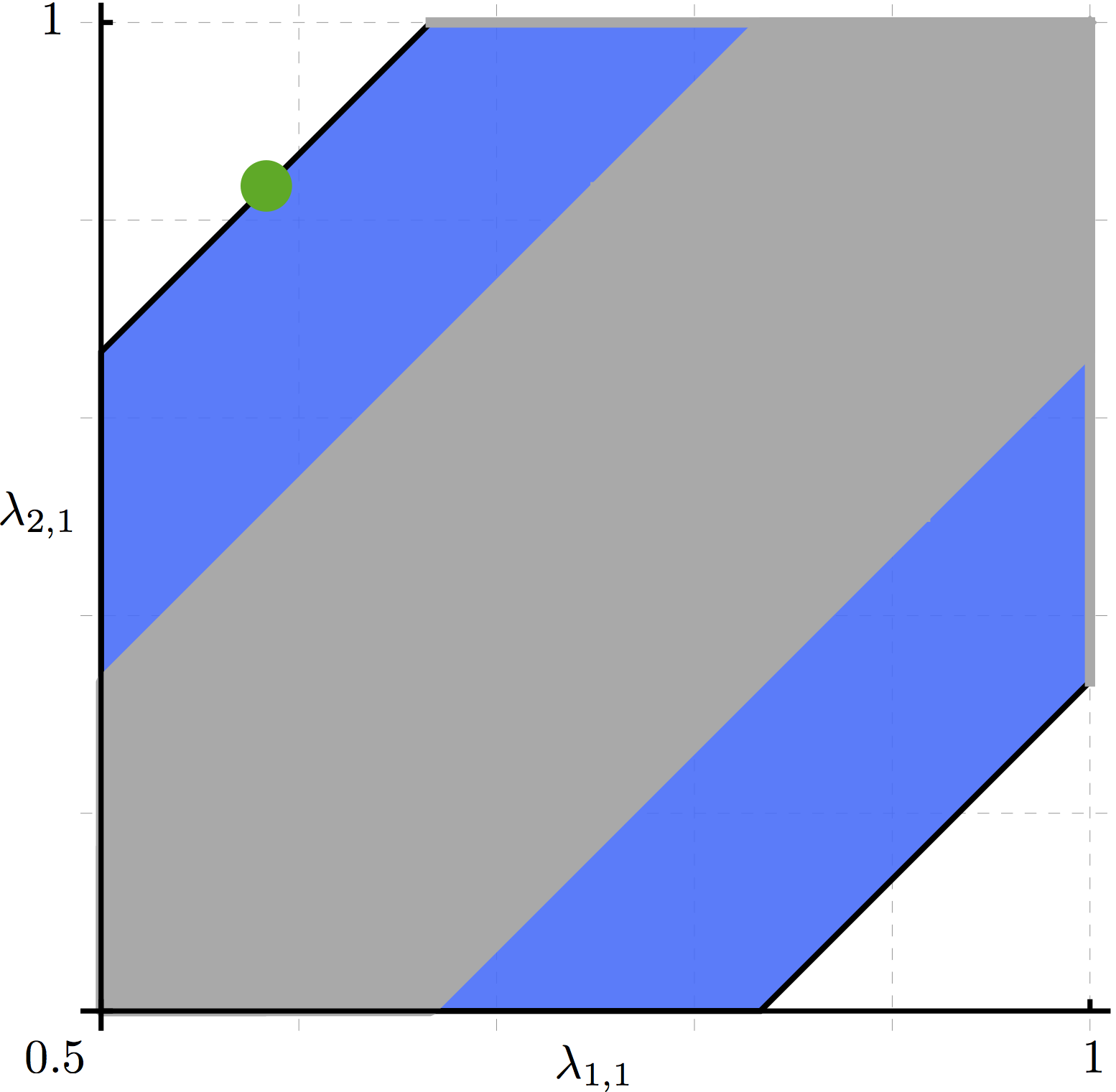}
  \caption[Witnessing genuine multipartite entanglement with entanglement polytopes]{%
    \emph{Witnessing genuine multipartite entanglement.}
    The figure displays the two-dimensional cross-section through the six-qubit eigenvalue polytope (blue) and the biseparable subpolytopes (gray), where we fix $\lambda_{3,1} = \dots \lambda_{6,1} = 11/12$.
    Any global pure state for which the local eigenvalues do not belong to the latter is necessarily genuinely six-qubit entangled.
    The spectrum produced by \autoref{prp:slocc/genuine} for $k=n=6, \varepsilon=1/12$ is one such example (green point).}
  \label{fig:slocc/genuine}
\end{figure}

\autoref{prp:slocc/genuine} shows that some correlations between the one-body reduced density matrices of a global pure state can only be explained by the presence of genuine $n$-partite entanglement.
Since the complement of the union of biseparable entanglement polytopes is open, the first claim in the proposition is in fact true for a small ball around the eigenvalues (intersected with the overall polytope).
See \autoref{fig:slocc/genuine} for an illustration.

\section{Gradient Flow}
\label{sec:slocc/distillation}
\subsection*{The Role of the Origin}

In \autoref{sec:onebody/git} we had seen that the pure states $\rho$ whose one-body reduced density matrices are maximally mixed, $\rho_1, \dots, \rho_n \propto \Id$, play a special mathematical role in geometric invariant theory.
Some of their properties can be reinterpreted from the perspective of entanglement theory.
For example, let $P$ be a $G$-invariant homogeneous polynomial and
\begin{equation}
\label{eq:slocc/monotone}
  M \colon \PP(\calH) \rightarrow \RR_{\geq 0},
  \rho = \proj \psi \mapsto \abs{P(\ket\psi)}^2
\end{equation}
the corresponding entanglement monotone \cite{VerstraeteDehaeneDe03}.
Then $P$ attains at $\rho$ its maximal value over all states in the entanglement class $\calX_\rho = G \cdot \rho$ \cite{Klyachko07}.
To see this, recall from \eqref{eq:onebody/hands-on kempf-ness} that for a locally maximally mixed state $\rho = \proj\psi$ the norm square of the corresponding vector $\ket\psi$ does not change to first order as we move along the $G$-orbit in Hilbert space; that is $\ket\psi$ is a critical point of $\norm{-}^2$ on its $G$-orbit.
Kempf and Ness have shown that in fact any such vector has \emph{minimal} length in its $G$-orbit \cite{KempfNess79}, so that $\norm{\Pi(g) \ket\psi} \geq \norm{\ket\psi} = 1$ for all $g \in G$. Thus,
\begin{equation*}
  M(\rho)
  = \norm{\Pi(g)\ket\psi}^k \abs{P\left(\frac {\Pi(g)\ket\psi} {\norm{\Pi(g)\ket\psi}}\right)}
  \geq \abs{P\left(\frac {\Pi(g)\ket\psi} {\norm{\Pi(g)\ket\psi}}\right)}
  = M(g \cdot \rho),
\end{equation*}
where $k$ is the degree of $P$.

From the perspective of entanglement polytopes, locally maximally mixed quantum states correspond to the point
\begin{equation*}
  \vec O = (\underbrace{\frac 1 {d_1}, \dots, \frac 1 {d_1}}_{d_1}, \dots, \underbrace{\frac 1 {d_n}, \dots \frac 1 {d_n}}_{d_n}),
\end{equation*}
which we will call the \emphindex{origin}.
We have used $\vec O$ as the origin of the coordinate systems in most of the figures in this chapter.
Indeed, if we identify an entanglement polytope $\Delta_\calX$ with the corresponding moment polytope $\Delta_K(\overline\calX)$ then the origin $\vec O$ corresponds to the point $0 \in i \mathfrak k^*$, which justifies the terminology (recall that we work with $G = \SL(\calH_1) \times \dots \times \SL(\calH_n)$).
It follows from the basic \autoref{lem:onebody/mumford origin} that
\begin{equation}
\label{eq:slocc/origin vs invariants}
  \vec O \in \Delta_\calX
  \Leftrightarrow
  \text{$\exists$ $G$-invariant homogeneous polynomial $P$ with $P(\rho) \neq 0$}
\end{equation}
In particular, any entanglement monotone of the form \eqref{eq:slocc/monotone} vanishes on quantum states whose entanglement polytope does not include the origin; such states are also called \emph{unstable}\index{unstable state} in geometric invariant theory \cite{MumfordFogartyKirwan94}.
This observation has lead to the suggestion that unstable states should be considered ``unentangled'' \cite{Klyachko07} or ``not genuinely multipartite entangled'' \cite{OsterlohSiewert05, OsterlohSiewert06}.
However they are certainly considered entangled according to the standard definition that we have adopted in this work (\autoref{sec:slocc/classification}).
There is also an interesting connection between the theory of entanglement polytopes and the selection rule of \autoref{eq:onebody/selection rule} that should be well-known mathematically:

\begin{lem}
\label{lem:slocc/pinning implies instability}
  Let $\calX = G \cdot \rho$ be the entanglement class of a quantum state $\rho$ with non-degenerate local eigenvalues that is pinned to a facet (of the maximal entanglement polytope) that does not contain the origin.
  Then $\Delta_\calX$ does not contain the origin.
\end{lem}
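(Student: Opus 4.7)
The plan is to argue by contradiction using the dictionary between the presence of the origin in a moment polytope and the existence of non-vanishing $G$-invariants. Specifically, by \autoref{lem:onebody/mumford origin} (equivalently, by the $(\Leftrightarrow)$ in \eqref{eq:slocc/origin vs invariants} applied to $\XX = \overline{\calX}$), the statement $0 \notin \Delta_\calX$ is equivalent to the assertion that every non-constant $G$-invariant homogeneous polynomial vanishes on $\rho = \proj\psi$. So the goal reduces to ruling out any such polynomial.

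First I would unpack the pinning hypothesis using the selection rule (\autoref{lem:onebody/selection rule}). Since $\rho$ has non-degenerate local eigenvalues, its image $\mu_K(\rho)$ lies in the interior $i\mathfrak t^*_{>0}$ of the positive Weyl chamber, so the selection rule applies directly: writing the pinning facet as $(-,H) \geq c$ with inward normal $H \in i\mathfrak t$, we obtain the eigenvalue equation $\pi(H)\ket\psi = c\ket\psi$. Under the inward-pointing normal convention, the condition that the facet $\{(-,H) = c\}$ does not pass through the origin is exactly the statement $c \neq 0$.

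Next I would exploit the complex one-parameter subgroup $\exp(tH) \in T_\CC \subseteq G$; the crucial point is that $H \in i\mathfrak t$, so $\exp(tH)$ lies in the complexified group and hence $G$-invariance---rather than just $K$-invariance---is available. Suppose for contradiction that a non-constant $G$-invariant homogeneous polynomial $P$ of degree $k>0$ satisfies $P(\ket\psi)\neq 0$. On the one hand, $G$-invariance gives $P(\Pi(\exp(tH))\ket\psi) = P(\ket\psi)$ for all $t \in \RR$. On the other hand, the eigenvalue equation yields $\Pi(\exp(tH))\ket\psi = e^{ct}\ket\psi$, so by $k$-homogeneity of $P$ we also have $P(\Pi(\exp(tH))\ket\psi) = e^{kct}P(\ket\psi)$. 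Equating these forces $e^{kct}=1$ for all $t$, hence $kc=0$ and thus $c=0$, which contradicts the pinning hypothesis.

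The only thing to take care of is the sign/normalization bookkeeping---in particular, that the inward-normal convention makes ``$c\neq 0$'' the correct algebraic translation of ``the facet misses the origin.'' There is no deeper obstacle; the proof simply assembles the selection rule together with the Kempf--Ness correspondence between vanishing of invariants and location of the origin relative to the moment polytope.
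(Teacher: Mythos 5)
Your proof is correct and follows essentially the same route as the paper: both invoke the selection rule (valid since non-degenerate local eigenvalues place $\mu_K(\rho)$ in $i\mathfrak t^*_{>0}$) to get $\pi(H)\ket\psi = c\ket\psi$ with $c\neq 0$, act by the one-parameter subgroup $\exp(tH)\subseteq G$, and conclude via the equivalence between $\vec O\in\Delta_\calX$ and the existence of a non-vanishing $G$-invariant. The only cosmetic difference is that the paper lets $t\to\pm\infty$ to show $0\in\overline{\Pi(G)\ket\psi}$ and then kills the invariants, whereas you derive the contradiction directly from $e^{kct}P(\ket\psi)=P(\ket\psi)$; the substance is identical.
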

\begin{proof}
  Without loss of generality, we may assume that $\rho = \proj\psi$ is locally diagonal with diagonal entries ordered non-increasingly (since local unitaries change neither the local eigenvalues nor the entanglement class).
  Thus $\mu_K(\rho) \in i \mathfrak t^*_{>0}$ in the language of the moment map.
  Let $(-,H) \geq c$ denote a facet of $\Delta_K(\PP(\calH))$ that does not contain the origin, i.e., $c \neq 0$, and assume that $\rho$ is pinned to the facet, i.e., $(\mu_K(\rho),H) = c$.
  Then \autoref{lem:onebody/selection rule} asserts that $\pi(H) \ket\psi = c \ket\psi$.
  It follows that $(\omega,H) = c$ for all weights $\omega$ that appear in the decomposition $\ket\psi = \sum_\omega \ket\omega$ of $\ket\psi$ into weight vectors.
  But then
  \begin{equation*}
      \pi(\exp(tH)) \ket\psi
    = \sum_\omega e^{(\omega,H)t} \ket\omega
    = \sum_\omega e^{ct} \ket\omega
    \rightarrow 0
  \end{equation*}
  for $t \rightarrow \pm\infty$ depending on the sign of $c \neq 0$.
  Therefore, $0 \in \overline{\Pi(G) \ket\psi}$.
  But then any $G$-invariant homogeneous polynomial ought to vanish at $\rho$.
  We conclude from \eqref{eq:slocc/origin vs invariants} that $\Delta_\calX$ does not contain the origin.
\end{proof}

\autoref{lem:slocc/pinning implies instability} can be strengthened by considering facets of the entanglement polytope $\Delta_\calX$ rather than of the maximal entanglement polytope.
\subsection*{The Linear Entropy of Entanglement}

The geometric picture provided by entanglement polytopes suggests another way of quantifying entanglement.
For this, we consider the multipartite version of the \emph{linear entropy of entanglement}\index{linear entropy of entanglement|textbf}\nomenclature[QE(rho)]{$E(\rho)$}{linear entropy of entanglement} \cite{ZurekHabibPaz93, FuruyaNemesPellegrino98, BarnumKnillOrtizEtAl04},
\begin{equation}
\label{eq:slocc/linear entropy of entanglement}
  E(\rho)
  = 1 - \frac 1 n \sum_{k=1}^n \normHS{\rho_k}^2
  = 1 - \frac 1 n \sum_{k=1}^n \norm{\vec\lambda_k}_2^2,
\end{equation}
where $\normHS{X}^2 := \tr X^\dagger X$ defines the \emphindex{Hilbert--Schmidt norm} and $\norm{x}^2_2 := \sum_j \abs{x_j^2}$ the \emph{$\ell_2$-norm}\index{l_2-norm@$\ell_2$-norm}\nomenclature[Q<x _2]{$\norm{x}_2$}{$\ell_2$-norm of vector $x$}.
For qubit systems, $E(\rho)$ reduces to the Meyer--Wallach measure of entanglement \cite{MeyerWallach02} (cf.\ \cite{Brennen03, BoixoMonras08}).
We remark that $E(\rho)$ is non-zero for all entangled pure states.
By convexity, any entanglement polytope $\Delta_\calX$ contains a unique point of minimal Euclidean norm. %
The corresponding quantum states $\rho$ are those states that maximize $E(\rho)$ over all states in the closure $\overline\calX$ of the class.
Thus the maximal value of the linear entropy of entanglement can be readily computed from the entanglement polytope.

\subsection*{Gradient Flow and Distillation}

Given the linear entropy of entanglement as a means of quantifying multipartite entanglement, it is natural to ask for a corresponding distillation procedure, i.e., for a protocol that transforms a given quantum state by SLOCC operations to a state with maximal linear entropy of entanglement.
This transformation might only be possible asymptotically, as the maximum might only be attained by points in the closure proper.
Our approach will be based on maximizing $E(\rho)$ by following its gradient flow.

The gradient flow for $E(\rho)$ is closely related to the gradient flow for the norm-square of the moment map as studied by Kirwan and Ness \cite{Kirwan84, NessMumford84}.
To see this, recall that $i \mathfrak k$ consists of tuples of traceless Hermitian matrices.
We may equip $i \mathfrak k$ with the $K$-invariant inner product corresponding to the norm $\norm{(X_1,\dots,X_n)}^2 = \sum_k \normHS{X_k}^2$ and use this to identify $i \mathfrak k \cong i \mathfrak k^*$.
This amounts to identifying $\mu_K(\rho) \in i \mathfrak k^*$ with the traceless part of its one-body reduced density matrices,
\begin{equation}
\label{eq:slocc/identification traceless}
  \mu_K(\rho) \in i \mathfrak k^*
  \;\longleftrightarrow\;
   (\rho_1 - \frac {\Id} {d_1}, \dots, \rho_n - \frac {\Id} {d_n}) \in i \mathfrak k,
\end{equation}
so that
\begin{equation*}
  \norm{\mu_K(\rho)}^2
  = \sum_k \normHS{\rho_k - \frac \Id {d_k}}^2
  = \norm{(\vec\lambda_1, \dots, \vec\lambda_n) - \vec O}^2_2
  = \left(n - \sum_{k=1}^n \frac 1 {d_k} \right) - n E(\rho).
\end{equation*}
Thus the norm-square of the moment map and the linear entropy of entanglement are directly related by an affine transformation---maximizing the linear entropy of entanglement is equivalent to minimizing the norm-square of the moment map, and the gradients are proportional.

We will now review some known results on the norm-square of the moment map and its gradient flow.
All these results hold for general moment maps on projective space and we will thus use the general language of \autoref{sec:onebody/git}; see, e.g., \cite{GeorgoulasRobbinSalamon13} for a comprehensive recent exposition from the differential-geometric point of view.
The first observation is that the gradient of $\norm{\mu_K}^2$ is given by \cite{Kirwan84, NessMumford84}
\begin{equation}
\label{eq:slocc/gradient}
  \grad \norm{\mu_K(\rho)}^2 = 2 \mu_K(\rho)_\rho,
\end{equation}
where $\mu_K(\rho)_\rho$ denotes as in \autoref{sec:onebody/git} the tangent vector at $\rho$ generated by the infinitesimal action of $\mu_K(\rho) \in i \mathfrak k^*$, considered as an element of $i \mathfrak k$ by using the $K$-invariant inner product.
This follows from the calculation
\begin{equation*}
  g(\grad \norm{\mu_K}^2, -) =
  d \norm{\mu_K}^2 =
  2 \omega(J[\mu_K(\rho)_\rho], -) =
  2 g(\mu_K(\rho)_\rho, -),
\end{equation*}
where we have used \eqref{eq:onebody/moment map property} and the relation \eqref{eq:onebody/fubini-study} between the Riemannian metric $g$ and the Fubini--Study form $\omega$.
An important consequence of \eqref{eq:slocc/gradient} is that the gradient flow
\begin{equation}
\label{eq:slocc/gradient flow}
  \begin{cases}
    \dot\rho_t &= - \grad \norm{\mu_K(\rho)}^2 = - 2 \mu_K(\rho)_\rho \\
    \rho_0 &= \rho
  \end{cases}
\end{equation}
automatically stays in the $G$-orbit of $\rho$, i.e., in its entanglement class $\calX = G \cdot \rho$ at all times $t \geq 0$.
What is more, the gradient flow converges to a unique limit point $\rho_\infty = \lim_{t \rightarrow \infty} \rho_t \in \overline{\calX}$ \cite{Lerman05, GeorgoulasRobbinSalamon13}.
Since any critical point is a minimum \cite{Kirwan84, NessMumford84}, this limit point is a state that minimizes the norm-square of the moment map over all states in the orbit closure $\overline{\calX} = \overline{G \cdot \rho}$ \cite{GeorgoulasRobbinSalamon13}.
We remark that such states are unique up to the $K$-action \cite{NessMumford84}.

\bigskip

We now specialize these results to the scenario of entanglement polytopes.
By the above discussion, the gradient flow \eqref{eq:slocc/gradient flow} converges to the global maximum of the linear entropy of entanglement of all states in $\overline\calX = \overline{G \cdot \rho}$.
Remarkably, at each point this flow is given by the infinitesimal action of (the traceless part of) the one-body reduced density matrices \eqref{eq:slocc/identification traceless}.
In practice, the gradient flow needs to be implemented with finite time steps $\Delta t$, which amounts to the SLOCC transformation
\begin{equation}
\label{eq:slocc/gradient flow time step}
  \rho_t
  \mapsto \left( e^{-\Delta t \rho_{t,1}} \otimes \dots \otimes e^{-\Delta t \rho_{t,n}} \right) \cdot \rho_t
  = e^{-\Delta t \, \pi(\mu_K(\rho_t))} \cdot \rho_t
  \approx \rho_{t + \Delta t},
\end{equation}
where $\pi \colon \mathfrak g \rightarrow \mathfrak{gl}(\calH)$ denotes the infinitesimal action of the Lie algebra and where we have used that scalar multiples of the identity act trivially on projective space according to \eqref{eq:onebody/complexified action}.
To realize this scheme in the laboratory, one would start by preparing the quantum state $\rho$, measuring its one-body reduced density matrices, re-preparing, and implementing the SLOCC transformation \eqref{eq:slocc/gradient flow time step} by using local POVM measurements with Kraus operators as in \eqref{eq:slocc/povm}.
If the transformation succeeded then entanglement has been distilled.
By successively repeating this procedure with the concatenated SLOCC operations, one asymptotically arrives at a quantum state with maximal linear entropy of entanglement.
Notably, this method of entanglement distillation only requires local tomography and works on a single copy of the state at a time.
See \autoref{fig:slocc/three qubit distillation} for a numerical simulation.

\begin{figure}
  \centering
  \includegraphics[width=0.9\linewidth]{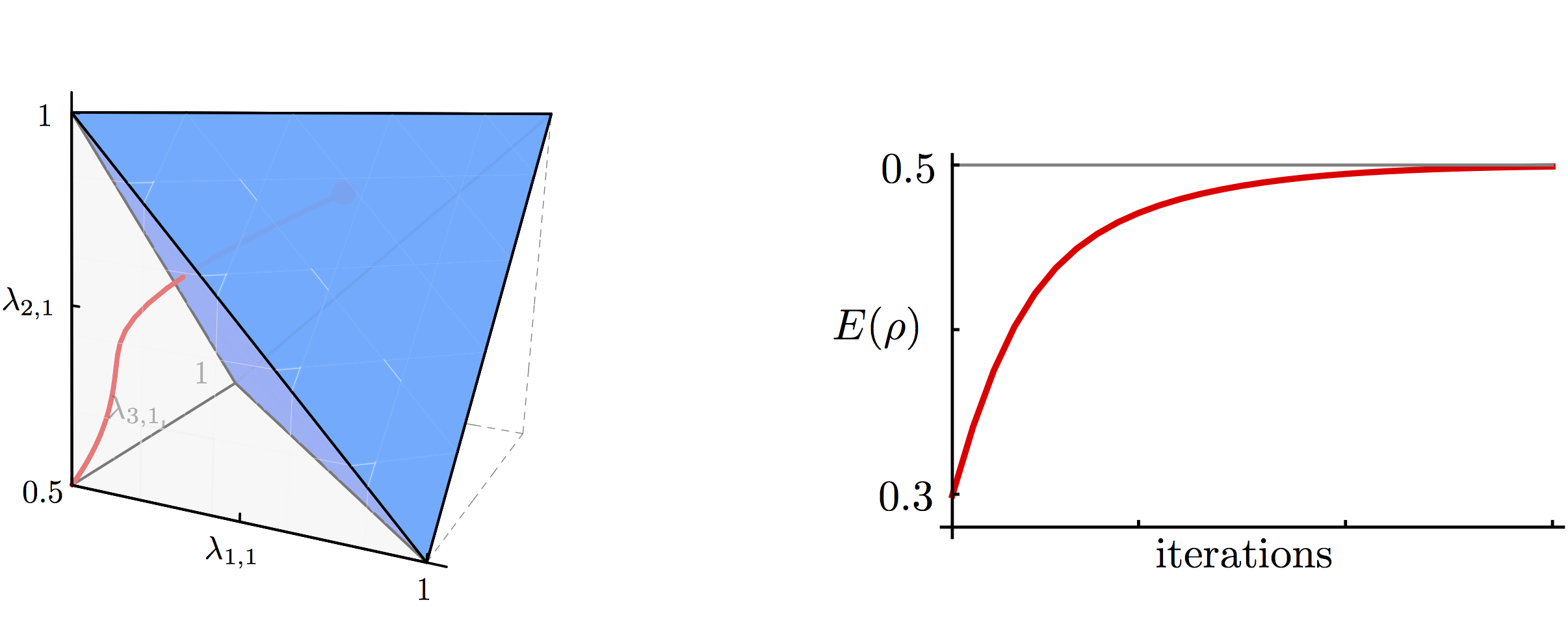}
  \begin{tabularx}{0.9\linewidth}{*2{>{\centering\arraybackslash}X}}
  {\footnotesize (a)} &
  {\footnotesize (b)}
  \end{tabularx}
  \caption[Illustration of entanglement distillation]{\emph{Illustration of entanglement distillation.}
   (a) For the three-qubit quantum state $\rho = \ketbra \psi \psi$ from \eqref{eq:slocc/distilled state}, the gradient flow (red trajectory) reaches the origin $O=(0.5,0.5,0.5)$ asymptotically.
   (b) The linear entropy of entanglement $E(\rho)$ increases monotonically until it reaches its maximal value.}
  \label{fig:slocc/three qubit distillation}
\end{figure}

\bigskip

\subsection*{Towards a Probabilistic Algorithm for Computing Moment Polytopes of Orbit Closures}

From a theoretical perspective, the limit point $\rho_\infty$ of the gradient flow can also be seen as a normal form of the state $\rho$ in its entanglement class, as it is unique up to local unitaries. This is the point of view taken in \cite{VerstraeteDehaeneDe03}, where a similar algorithm has been proposed for the case when the entanglement polytope contains the origin.
In contrast, the gradient flow works in the general case and flows towards the point in the moment polytope of minimal Euclidean norm.
We will now sketch how this idea leads towards a probabilistic algorithm for computing moment polytopes of arbitrary orbit closures.
As above, let $\norm{-}$ denote the norm corresponding to a $K$-invariant inner product $(-,-)$ on $i \mathfrak k^*$.
We start with a simple observation:

\begin{lem}
\label{lem:slocc/kostant}
  Let $\lambda, \mu \in i \mathfrak t^*_+$. Then:
  \begin{equation*}
    \min_{\varphi \in \calO_{K,\mu}} \norm{\varphi - \lambda}^2 = \norm{\mu - \lambda}^2
  \end{equation*}
\end{lem}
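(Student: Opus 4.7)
I would proceed by reducing the minimization to a statement about inner products on the Weyl-group orbit, exploiting Kostant's convexity theorem, which the paper has already invoked at the end of \autoref{sec:onebody/git}.

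First, pick a $K$-invariant inner product on $i\mathfrak{k}^*$ compatible with the norm (e.g., the one dual to the inner product used to define $\norm{-}$) and expand
\begin{equation*}
  \norm{\varphi - \lambda}^2 = \norm{\varphi}^2 + \norm{\lambda}^2 - 2(\varphi, \lambda).
\end{equation*}
Since the coadjoint $K$-action preserves the inner product, $\norm{\varphi}^2$ is constant on $\calO_{K,\mu}$ and equal to $\norm{\mu}^2$. The second term is constant, so minimizing $\norm{\varphi - \lambda}^2$ over the orbit is equivalent to maximizing $(\varphi, \lambda)$, and the claim reduces to
\begin{equation*}
  \max_{\varphi \in \calO_{K,\mu}} (\varphi, \lambda) = (\mu, \lambda).
\end{equation*}

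Next, since $\lambda \in i\mathfrak{t}^*$ and since the decomposition $i\mathfrak{k}^* = i\mathfrak{t}^* \oplus (i\mathfrak{t}^*)^\perp$ is orthogonal, the pairing $(\varphi, \lambda)$ depends only on the orthogonal projection $\pi_T(\varphi) \in i\mathfrak{t}^*$. Kostant's convexity theorem (cited in the paper below \autoref{thm:onebody/kirwan}) asserts that
\begin{equation*}
  \pi_T(\calO_{K,\mu}) = \conv(W_K \cdot \mu),
\end{equation*}
the convex hull of the Weyl-group orbit of $\mu$. Since $\varphi \mapsto (\varphi, \lambda)$ is linear, its maximum over this convex hull is attained at a vertex, so the problem reduces further to showing $(w\mu, \lambda) \leq (\mu, \lambda)$ for every $w \in W_K$.

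Finally, this last inequality is the standard fact that the inner product of two dominant elements is maximized among their Weyl-translates when both lie in the positive Weyl chamber. I would prove it by induction on the length $l(w)$: for $l(w)=0$ there is nothing to show, and for the inductive step, write $w = s_\alpha w'$ with $\alpha$ a simple root chosen so that $l(w') < l(w)$ and $w'\mu$ still satisfies $(w'\mu, Z_\alpha) \geq 0$ (this is the usual reduction for elements in the orbit of a dominant weight). Then
\begin{equation*}
  (w'\mu - s_\alpha w'\mu, \lambda) = (w'\mu, Z_\alpha)\,(\alpha, \lambda) \geq 0,
\end{equation*}
since both factors are nonnegative: the first by the choice of $\alpha$, the second because $\lambda \in i\mathfrak{t}^*_+$. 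Combined with the inductive hypothesis $(w'\mu, \lambda) \leq (\mu, \lambda)$, this yields $(w\mu, \lambda) \leq (\mu, \lambda)$, completing the proof.

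The argument is essentially routine; the only potentially subtle point is making sure the identification of $i\mathfrak{k}^*$ with $i\mathfrak{k}$ (and the compatible projection $\pi_T$) is invoked consistently, so that the Kostant theorem can be applied to reduce the problem from the coadjoint orbit to its Weyl-orbit vertices.
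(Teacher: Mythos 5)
Your proof is correct and takes essentially the same route as the paper: expand the squared distance using $K$-invariance of the inner product, apply Kostant's convexity theorem to reduce the maximization of $(\cdot,\lambda)$ to the Weyl orbit of $\mu$, and conclude from dominance. The only (minor) difference is in the last step: the paper deduces $(w\mu,\lambda)\leq(\mu,\lambda)$ from the containment $\conv(W_K\,\mu)\subseteq\mu+\cone(R_{G,-})$, while you prove the same standard inequality directly by induction on the length $l(w)$ — both justifications are fine.
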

\begin{proof}
  By $K$-invariance of the inner product,
  \begin{equation*}
      \min_{\varphi \in \calO_{K,\mu}} \norm{\varphi - \lambda}^2
    = \norm{\mu}^2 + \norm{\lambda}^2 - 2 \max_{\varphi \in \calO_{K,\mu}} (\varphi,\lambda)
  \end{equation*}
  The right-hand side maximization can be understood as an optimization of the linear functional $(-,\lambda)$ over the Abelian moment polytope of the coadjoint orbit $\calO_{K,\mu}$.
  By Kostant's convexity theorem \cite{Kostant73}, the latter is equal to the convex hull of the orbit of $\mu$ under the Weyl group, which in turn is a subset of $\mu$ plus the cone spanned by the negative roots.
  We conclude that
  \begin{equation*}
      \max_{\varphi \in \calO_{K,\mu}} (\varphi,\lambda)
    = \max_{w \in W_K} (w \mu, \lambda)
    = (\mu, \lambda). \qedhere
  \end{equation*}
\end{proof}

Now let $\XX = \overline{G \cdot \rho} \subseteq \PP(\calH)$ be an orbit closure.
As in the proof of \autoref{prp:onebody/mumford}, fix $\lambda \in \Lambda^*_+$ and $k > 0$.
Let $\widetilde\calH := \Sym^k(\calH) \otimes V_{G,\lambda^*}$ and consider the projective subvariety
$\widetilde\XX := \{ \proj\psi^{\otimes k} \otimes \proj\phi : \proj\psi \in \XX, \proj\phi \in \XX_{G,\lambda^*} \}$ of $\PP(\widetilde\calH)$.
As a set, $\widetilde\XX$ can be identified with the Cartesian product $\XX \times \calO_{K,\lambda^*}$ equipped with the diagonal $K$-action \autoref{lem:onebody/borel weil}), and the moment map $\widetilde\mu_K$ takes the form $\widetilde\mu_K(\rho, \varphi) = k \, \mu_K(\rho) + \varphi$.
The following lemma is another instance of the ``shifting trick'' we used in the proof of \autoref{prp:onebody/mumford}.

\begin{lem}
\label{lem:slocc/shifting trick}
  Let $(\sigma,\varphi) \in \XX \times \calO_{K,\lambda^*} \cong \widetilde\XX$ be such that $\widetilde\mu_K(\sigma,\varphi)$ is a point of minimal norm in $\widetilde\mu(\widetilde\XX)$.
  If $\mu_K(\sigma) \in i \mathfrak t^*_+$ then it is the point of minimal distance to $\lambda/k$ in $\Delta_K(\XX)$.
\end{lem}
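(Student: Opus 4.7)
The plan is to reduce the minimization of $\norm{\widetilde\mu_K}^2$ over $\widetilde\XX$ to a one-sided minimization of the distance from $\lambda/k$ to $\Delta_K(\XX)$, by performing the $\varphi$-minimization pointwise using \autoref{lem:slocc/kostant}.

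First, recall that under the identification $\widetilde\XX \cong \XX \times \calO_{K,\lambda^*}$ the moment map reads $\widetilde\mu_K(\rho,\varphi') = k\,\mu_K(\rho) + \varphi'$ (cf.\ the proof of \autoref{prp:onebody/mumford}). I would first fix $\rho \in \XX$ and minimize over $\varphi' \in \calO_{K,\lambda^*}$. By $K$-equivariance of $\mu_K$ and $K$-invariance of both the norm and the coadjoint orbit, this minimum depends only on the Weyl-chamber representative $\nu \in i\mathfrak t^*_+$ of $\mu_K(\rho)$: choose $g\in K$ with $g\cdot\mu_K(\rho) = \nu$, then
\begin{equation*}
  \min_{\varphi' \in \calO_{K,\lambda^*}}\,\norm{k\mu_K(\rho) + \varphi'}^2
  = \min_{\psi \in \calO_{K,\lambda^*}}\,\norm{k\nu + \psi}^2.
\end{equation*}

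Second, I would rewrite this as a distance computation to which \autoref{lem:slocc/kostant} applies. The substitution $\psi \mapsto -\psi$ turns the right-hand side into $\min_{\psi \in -\calO_{K,\lambda^*}} \norm{k\nu - \psi}^2$, and since $\lambda^* = -w_0\lambda$ with $w_0 \in W_K$ acting by isometries of $i\mathfrak k^*$, we have $-\calO_{K,\lambda^*} = \calO_{K,\lambda}$. Because $k\nu$ and $\lambda$ both lie in $i\mathfrak t^*_+$, \autoref{lem:slocc/kostant} yields
\begin{equation*}
  \min_{\varphi' \in \calO_{K,\lambda^*}}\,\norm{k\mu_K(\rho) + \varphi'}^2
  = \norm{k\nu - \lambda}^2
  = k^2\,\norm{\nu - \lambda/k}^2.
\end{equation*}

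Finally, I would minimize over $\rho \in \XX$. Since $\mu_K(\XX) = \bigcup_{\nu \in \Delta_K(\XX)}\calO_{K,\nu}$, the previous step gives
\begin{equation*}
  \min_{(\rho,\varphi') \in \widetilde\XX}\,\norm{\widetilde\mu_K(\rho,\varphi')}^2
  = k^2\,\min_{\nu \in \Delta_K(\XX)}\,\norm{\nu - \lambda/k}^2.
\end{equation*}
Now if $(\sigma,\varphi)$ attains the left-hand side and $\mu_K(\sigma) \in i\mathfrak t^*_+$, then $\mu_K(\sigma) \in \Delta_K(\XX)$ is precisely a $\nu$ attaining the right-hand side minimum, which is the claim.

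I do not expect a genuine obstacle. The only subtlety to watch for is the identification $-\calO_{K,\lambda^*} = \calO_{K,\lambda}$ (which crucially uses that $w_0 \in W_K$ and that the inner product on $i\mathfrak k^*$ is Weyl-invariant), and correctly tracking the hypothesis $\mu_K(\sigma) \in i\mathfrak t^*_+$ so that \autoref{lem:slocc/kostant} is applied to two dominant points.
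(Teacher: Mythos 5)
Your proposal is correct and follows essentially the same route as the paper: a pointwise minimization over the coadjoint orbit using the identity $-\calO_{K,\lambda^*} = \calO_{K,\lambda}$ together with \autoref{lem:slocc/kostant}, followed by the outer minimization over $\XX$ using $K$-invariance/equivariance, and finally specializing to the global minimizer under the dominance hypothesis. The only cosmetic difference is organizational (the paper writes two chains of equalities at $\sigma$ and concludes via uniqueness of the nearest point in the convex polytope, which your argument reaches implicitly), so there is nothing to fix.
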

\begin{proof}
  By assumption and using that $\calO_{K,\lambda^*} = -\calO_{K,\lambda}$,
  \begin{equation*}
      \norm{\mu_K(\sigma) + \varphi/k}
    = \min_{\varphi' \in \calO_{K,\lambda^*}} \norm{\mu_K(\sigma) + \varphi'/k}
    = \min_{\varphi' \in \calO_{K,\lambda}} \norm{\mu_K(\sigma) - \varphi'/k}
    =  \norm{\mu_K(\sigma) - \lambda/k},
  \end{equation*}
  where the last equality is due to \autoref{lem:slocc/kostant}.
  On the other hand,
  \begin{align*}
      \norm{\mu_K(\sigma) + \varphi/k}
    = &\min_{\sigma' \in \XX, \varphi' \in \calO_{K,\lambda^*}} \norm{\mu_K(\sigma') + \varphi'/k}
    = \min_{\sigma' \in \XX, \varphi' \in \calO_{K,\lambda}} \norm{\mu_K(\sigma') - \varphi'/k} \\
    = &\min_{\sigma' \in \XX} \norm{\mu_K(\sigma') - \lambda/k}
    = \min_{\mu \in \Delta_K(\XX)} \norm{\mu - \lambda/k}.
  \end{align*}
  The third equality is due to the $K$-invariance of the norm; for the last equality we have used \autoref{lem:slocc/kostant} once more, but this time applied to the coadjoint orbit through the optimal $\mu_K(\sigma')$.
  Since the convex polytope $\Delta_K(\XX)$ contains a unique point of minimal Euclidean distance to $\lambda/k$, this point ought to be $\mu_K(\sigma) \in \Delta_K(\XX)$.
\end{proof}

Note that $\widetilde\XX$ is \emph{not} an orbit closure.
However, just as we found in \autoref{sec:slocc/entanglement polytopes} for projective space, the moment polytope of the orbit closure of a generic point in $\widetilde\XX$ is always equal to the moment polytope $\Delta_K(\widetilde\XX)$ \cite{Brion87}.
Since the ``boundary'' $(\XX \setminus G \cdot \rho) \times \calO_{K,\lambda^*}$ is of positive codimension in $\widetilde\XX$ and by $G$-equivariance, we may in fact start with a point of the form $(\rho, \Ad^*(g)\lambda^*)$.
If we choose $g \in K$ according to, e.g., the Haar measure on $K$ then with probability one the gradient flow will asymptotically reach the point of minimal Euclidean norm in the moment polytope of $\widetilde\XX$.
In view of \autoref{lem:slocc/shifting trick} we thus obtain a ``probabilistic oracle'' for obtaining from any $\nu = \lambda/k \in \QQ \Lambda^*_+$ an approximation to the point $\mu \in \Delta_K(\XX)$ that is closest to $\nu$.
Note that $\mu$ is always rational if the inner product takes rational values on the integral lattice \cite{Kirwan84}.
Let us assume for simplicity that the oracle does in fact never fail and always returns the exact result.
Then we obtain the following algorithm:

\begin{alg}
  \label{alg:slocc/gradient flow}
  Let $\calH$ be a complex $G$-representation and $\XX = \overline{G \cdot \rho} \subseteq \PP(\calH)$ an orbit closure.
  Let $\calV, \calW \subseteq \QQ \Lambda^*_+$ be the vertices of an inner and an outer approximation of the moment polytope, i.e.,
  \begin{equation}
  \label{eq:slocc/alg invariant}
    \conv \calV \subseteq \Delta_K(\XX) \subseteq \conv \calW.
  \end{equation}
  The following algorithm returns the moment polytope of the orbit closure (if it terminates): %
  \begin{algorithmic}
    \While{$\calV \subsetneq \calW$}
      \State $\nu \gets \calW \setminus \calV$.
      \State $\mu \gets $ the closest point in $\Delta_K(\XX)$ to $\nu$ \Comment{Oracle}
      \If{$\mu = \nu$}
      \State $\calV \gets $ vertices of $\conv \left( \calV \cup \{\nu\} \right)$
      \Else
      \State $\calW \gets $ vertices of $\conv(\calW) \cap \{ \lambda \in i \mathfrak t^* : (\lambda - \mu, \nu - \mu) \leq 0 \}$
      \EndIf
    \EndWhile
    \State \textbf{return} $\conv \calV = \conv \calW$
  \end{algorithmic}
\end{alg}
\begin{proof}
  To show that the algorithm is correct it suffices to observe that \eqref{eq:slocc/alg invariant} is a loop invariant.
  In the case $\mu = \nu$ this is immediate.
  If $\mu \neq \nu$ then the convexity of the moment polytope implies that it is contained in the half-space $\{ \lambda \in i \mathfrak t^* : (\lambda - \mu, \nu - \mu) \leq 0 \}$.
\end{proof}

\begin{figure}
  \centering
  \includegraphics[width=0.4\linewidth]{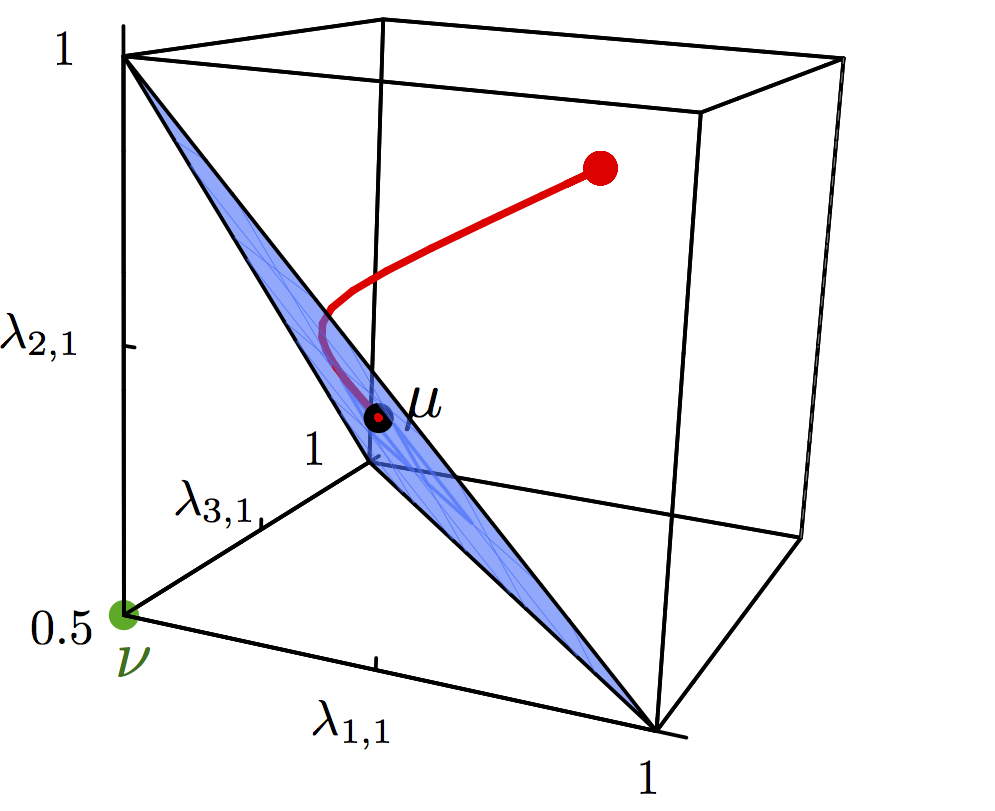}
  \caption[Illustration of Algorithm~\ref*{alg:slocc/gradient flow}]{\emph{Illustration of \autoref{alg:slocc/gradient flow}} for a state $\rho$ of W class.
    In the first step, the gradient flow tries to flow from the spectrum of $\rho$ (red point) to the origin $\nu=(0.5,0.5,0.5)$ (green point), which is a vertex of the initial outer approximation (cube).
    The gradient flow stops at $\mu = (2/3,2/3,2/3)$, which is the closest point to $\nu$ in the entanglement polytope of $\rho$.
    We thus obtain a bounding hyperplane (blue) which gives rise to a smaller outer approximation.
    By iterating this procedure we will eventually recover the polytope of the W class (\autoref{fig:slocc/three qubits}, (b)).
  }
  \label{fig:slocc/algo}
\end{figure}

One choice of outer approximation is given by the moment polytope for the action of the maximal torus $T \subseteq K$, which is equal to the convex hull of the weights of the representation $\calH$ (\autoref{sec:kirwan/torus}). See \autoref{fig:slocc/algo} for an illustration of the algorithm.
\autoref{alg:slocc/gradient flow}, its theoretical properties and implications still need be investigated more carefully; we refer to \cite{Wernli13} for an initial study and first applications of the algorithm to the computation of entanglement polytopes.
We remark that it can also be used to obtain a solution of the one-body quantum marginal problem when applied to a generic state $\rho$ (chosen, e.g., according to the unitarily invariant measure on projective space).
We remark that the gradient flow gives us in essence a \emphindex{separation oracle} \cite{GroetschelLovaszSchrijver93}.
There are geometric algorithms that work with a separation oracle, e.g., based on the ellipsoid method, which are well-studied in the optimization community (as was kindly pointed out to us by Peter B\"urgisser).
A combination of these techniques might lead to further progress towards solving the membership problem for moment polytopes.

\section{Experimental Noise}
\label{sec:slocc/noise}

A quantum state prepared in the laboratory will always be a mixed state $\rho$ and it is a priori unclear what statements can be inferred about its entanglement from its local eigenvalues. Here, we give two slightly different ways for applying the preceding results to the more realistic scenario of small noise.

For both approaches, we will assume that a lower bound $p > 1/2$ on the \emphindex{purity} $\tr\rho^2$ is available.
One natural way of obtaining such an estimate is the well-known \emphindex{swap test} \cite{BuhrmanCleveWatrousEtAl01}, which directly estimates $\tr\rho^2$ using a series of two-body measurements on two copies of $\rho$.
We sketch an alternative procedure which may be simpler to implement for some experimental platforms.
Suppose that $\rho$ has been prepared by acting on an initial product state with a quantum operation $\Lambda$ that approximates an entangling unitary gate $U$ (e.g., a spin squeezing operation).
Act on $\rho$ by a quantum operation $\Lambda'$ that approximates the corresponding ``disentangling unitary'' $U^{-1}$ and denote by $\rho' = \Lambda'(\rho)$ the state thus obtained.
Now assume that the noise mechanism never increases the purity---this holds, e.g., for dephasing and depolarizing noise, which are two noise models applicable to the majority of experiments.
Then we can use the following lower bound \cite{Audenaert07}
\begin{equation*}
  \tr \rho^2 \geq \tr (\rho')^2 \geq \sum_{k=1}^n \norm{\vec{\lambda'_k}}^2_2 - (n-1)
\end{equation*}
for the purity in terms of the local spectra $\vec{\lambda'_1}, \dots, \vec{\lambda'_n}$, which can be obtained from tomography of the single-particle reduced density matrices $\rho'_1, \dots, \rho'_n$.
For small noise, $\rho'$ is still approximately product and so this bound likely not too loose (as it is tight for product states).

\subsection*{Fidelity}

The first way of dealing with noise is to realize that in the vicinity of any mixed state $\rho$ there is a pure state whose local eigenvalues do not differ too much from those of the mixed state, provided that the purity of the mixed state is sufficiently high.
To make this precise, it is convenient to consider the \emphindex{fidelity}\nomenclature[QF]{$F(\rho, \sigma)$}{fidelity between $\rho$ and $\sigma$} between $\rho$ and an arbitrary pure state $\sigma = \proj{\psi}$, which is defined by $F(\rho, \sigma) := \braket{\psi | \rho | \psi}$.
Note that $F(\rho, \sigma) \leq 1$, with equality if and only if $\rho = \sigma$.

\begin{prp}
\label{prp:slocc/fidelity}
  Let $\rho$ be a mixed state with purity $\tr \rho^2 \geq p > 1/2$.
  Then there exists a pure state $\sigma = \proj{\psi}$ with fidelity $F(\rho, \sigma) \geq p$
  such that
  \begin{equation}
  \label{eq:slocc/fidelity bound}
    \sum_{k=1}^n \norm{\vec\lambda_k(\rho) - \vec\lambda_k(\sigma)}_1 \leq n (1 - \sqrt{2p -1}),
  \end{equation}
  where $\vec\lambda_k(\rho) := \spec \rho_k$ and where $\norm{x}_1 := \sum_j \abs{x_j}$ denotes the \emph{$\ell_1$-norm}\index{l_1-norm@$\ell_1$-norm}\nomenclature[Q<x _1]{$\norm{x}_1$}{$\ell_1$-norm of vector $x$}.
\end{prp}
\begin{proof}
  Consider the spectral decomposition $\rho = \sum_i r_i \proj {\psi_i}$ with eigenvalues ordered non-increasingly, $r_i \geq r_{i+1}$.
  Our assumption on the purity implies immediately that the maximal eigenvalue can also be lower-bounded by $1/2$:
  \begin{equation}
  \label{eq:slocc/fidelity in proof}
    r_1 = r_1 \sum_i r_i \geq \sum_i r_i^2 = \tr \rho^2 \geq p > \frac 1 2
  \end{equation}
  Thus if we set  $\sigma = \proj{\psi_1}$ then $F(\rho, \sigma) = \braket{\psi_1 | \rho | \psi_1} = r_1 \geq p$.

  On the other hand, using that $r_j \leq \sum_{i>1} r_i = 1 - r_1$ for all $j > 1$, we find that
  \begin{equation*}
    p
    \leq \tr \rho^2
    = r_1^2 + \sum_{j>1} r_j^2
    \leq r_1^2 + (1 - r_1)^2
    = 1 - 2 r_1 (1 - r_1).
  \end{equation*}
  We solve this quadratic relation and obtain two possible solutions,
  \begin{equation*}
    r_1 \leq \frac 1 2 \left( 1 - \sqrt{2p - 1} \right)
    \quad\text{and}\quad
    r_1 \geq \frac 1 2 \left( 1 + \sqrt{2p - 1} \right).
  \end{equation*}
  Only the right-hand side solution is compatible with $r_1 > 1/2$.
  By using Weyl's perturbation theorem relating the $\ell^1$-norm distance of eigenvalues with the trace-norm distance of the corresponding operators \cite[(11.46)]{NielsenChuang04} we thus obtain that
  \begin{align}
  \label{eq:slocc/weyl bound}
    &\sum_{k=1}^n \norm{\vec\lambda_k(\rho) - \vec\lambda_k(\sigma)}_1
    \leq \sum_{k=1}^n \norm{\rho_k - \sigma_k}_1 \\
  \label{eq:slocc/trace norm bound}
    \leq\;&n \norm{\rho - \proj{\psi_1}}_1
    = 2 n (1 - r_1)
    \leq n \left( 1 - \sqrt{2p - 1} \right).
  \end{align}
\end{proof}

In the case of qubits, the bound \eqref{eq:slocc/fidelity bound} can be equivalently written in terms of the maximal local eigenvalues,
\begin{equation}
\label{eq:slocc/fidelity bound qubits}
  \sum_{k=1}^n \abs{\lambda_{k,1}(\rho) - \lambda_{k,1}(\sigma)} \leq \frac n 2 (1 - \sqrt{2p - 1}).
\end{equation}
For small noise, the right-hand side is equal to $n \varepsilon / 2$ in first order in $\varepsilon = 1 - p \approx 0$.

\bigskip

We now illustrate the approach with a numerical example.
Suppose that $\rho$ is an experimentally prepared quantum state of four qubits with purity no less than $p = 0.9$.
Then by \autoref{prp:slocc/fidelity} above there exists a pure state $\sigma = \proj\psi$ with fidelity $F(\rho, \sigma) \geq 0.9$ for which \eqref{eq:slocc/fidelity bound qubits} reads
\begin{equation}
  \label{eq:slocc/fidelity bound example}
  \sum_{k=1}^4 \abs{\lambda_{k,1}(\rho) - \lambda_{k,1}(\sigma)} \leq \frac 4 2 \left( 1 - \sqrt{2 p - 1} \right) \approx 0.21.
\end{equation}
At this resolution, the differences between the various four-qubit entanglement polytopes are already well visible.
For example, suppose that we would like to use the inequality
\begin{equation*}
  \lambda_{1,1}(\sigma) + \lambda_{2,1}(\sigma) + \lambda_{3,1}(\sigma) + \lambda_{4,1}(\sigma) < 3
\end{equation*}
to deduce that $\sigma$ is not entangled of W-type (cf.\ the summary of results in this chapter).
For this, it suffices by \eqref{eq:slocc/fidelity bound example} to verify that the local eigenvalues of the experimentally realized state $\rho$ satisfy the relation
\begin{equation*}
  \lambda_{1,1}(\rho) + \lambda_{2,1}(\rho) + \lambda_{3,1}(\rho) + \lambda_{4,1}(\rho) < 3 - 0.21 = 2.79.
\end{equation*}
For comparison, the left-hand side of this inequality is equal to $2$ for a symmetric Dicke state
$\left( \ket{0011} + \ket{0101} + \ket{0110} + \ket{1001} + \ket{1010} + \ket{1100} \right) / \sqrt 6$.

\subsection*{Convex Extension}

A second, alternative approach for treating noise aims to show that the experimentally prepared mixed state $\rho$ cannot be written as a convex combination of pure states in the closure of an entanglement class $\calX$.
For this, we consider the distance between a spectrum $\vec\lambda$ and an entanglement polytope $\Delta_\calX$ defined as
\begin{equation*}
  d(\vec\lambda, \Delta_{\calX}) :=
  \min_{\vec\mu \in \Delta_{\calX}}
  \sum_{k=1}^n \norm{\vec\lambda - \vec\mu}_1.
\end{equation*}

\begin{prp}
  There exists a continuous function $\delta(p) \geq 0$ with $\delta(1) = 0$ such that
  \begin{equation*}
    d(\vec\lambda(\rho), \Delta_\calX) > \delta(p)
    \,\Longrightarrow\,
    \rho \not\in \conv \overline{\calX}
  \end{equation*}
  for all mixed states $\rho$ with purity $\tr \rho^2 \geq p > 1/2$.
\end{prp}
\begin{proof}
  Let $\delta(p) := n (1 + 2 \sqrt{1-p} - \sqrt{2 p - 1})$.
  Then $\delta$ is continuous and non-negative on $[1/2,1]$, and $\delta(1) = 0$.
  Now assume that $\rho$ is a mixed state with purity $\tr \rho^2 \geq p > 1/2$ and $d(\vec\lambda(\rho), \Delta_\calX) > \delta(p)$.
  As in the proof of \autoref{prp:slocc/fidelity}, let $\sigma = \proj{\psi_1}$ where $\ket{\psi_1}$ is an eigenvector corresponding to the maximal eigenvalue of $\rho$.
  For any $\chi = \proj\phi \in \overline\calX$, the triangle inequality gives
  \begin{equation*}
    \norm{\chi - \sigma}_1 \geq
    \norm{\chi - \rho}_1 - \norm{\rho - \sigma}_1.
  \end{equation*}
  We may lower-bound the first summand by reversing the argument of \eqref{eq:slocc/weyl bound} and using the assumption on $\rho$,
  \begin{equation*}
    \norm{\chi - \rho}_1
    \geq \frac 1 n \sum_{k=1}^n \norm{\chi_k - \rho_k}_1
    \geq \frac 1 n \sum_{k=1}^n \norm{\vec\lambda_k(\chi) - \vec\lambda_k(\rho)}_1
    \geq \frac {d(\vec\lambda(\rho), \Delta_\calX)} n
    > \frac {\delta(p)} n,
  \end{equation*}
  while the second summand can be upper-bounded as in \eqref{eq:slocc/trace norm bound},
  \begin{equation*}
    \norm{\rho - \sigma}_1 = \norm{\rho - \proj{\psi_1}}_1 \leq \left( 1 - \sqrt{2p - 1} \right).
  \end{equation*}
  Together this implies
  \begin{equation*}
    \norm{\chi - \sigma}_1 > \frac {\delta(p)} n - \left( 1 - \sqrt{2p - 1} \right) = 2 \sqrt{1-p},
  \end{equation*}
  so that by using a standard upper bound for the fidelity in terms of the trace norm \cite{FuchsGraaf99} we obtain the following estimate which holds for all $\chi = \proj\phi \in \overline\calX$:
  \begin{equation*}
    \abs{\braket{\psi_1 | \phi}}^2 \leq 1 - \frac {\norm{\chi - \sigma}_1^2} 4 < p.
  \end{equation*}
  Now assume for the sake of reaching a contradiction that $\rho$ can in fact be written as a convex combination $\rho = \sum_j p_j \proj{\phi_j}$ of pure states from $\overline\calX$. Then,
  \begin{equation*}
    \braket{\psi_1 | \rho | \psi_1} = \sum_j p_j \abs{\braket{\psi_1 | \phi_j}}^2 < p.
  \end{equation*}
  But on the other hand, $\braket{\psi_1 | \rho | \psi_1} \geq p$ by \eqref{eq:slocc/fidelity in proof}, which is the desired contradiction.
\end{proof}

\section{Discussion}
\label{sec:slocc/discussion}

\autoref{prp:slocc/description} provides a complete description of the entanglement polytopes based on a generating set of covariants.
While the latter can in principle be found using computational invariant theory, current algorithms based on Gr\"obner bases work best for low-dimensional scenarios.
It would therefore be highly desirable to find an alternative characterization that does not rely on an explicit knowledge of the covariants---for example, based on the differential-geometric approach of \autoref{ch:kirwan} or by using semistability computations in geometric invariant theory.
The fact that only the vanishing behavior of the covariants enters the description in \autoref{prp:slocc/description} indicates that such a characterization could indeed be achievable.

While the distillation method in \autoref{sec:slocc/distillation} is conceptually pleasing, it is not clear when its realization will become experimentally feasible.
In contrast, \autoref{alg:slocc/gradient flow} has already been successfully implemented and might provide a way of circumventing the combinatorial challenges faced by exact methods in higher dimensions.
Apart from its immediate applications to the marginal problem and entanglement witnessing, the computation of moment polytopes is also relevant in other disciplines such as in mathematics and in geometric complexity theory \cite{BuergisserLandsbergManivelEtAl11} (cf.\ \autoref{ch:multiplicities}), and it might be worthwhile to study our algorithm in this context.
\chapter{Random Marginals} %
\label{ch:dhmeasure}

In this chapter we consider a quantitative version of the one-body quantum marginal problem.
For a random pure state drawn from the unitarily invariant measure, we give an algorithm to compute the joint probability distribution of the eigenvalues of its one-body reduced density matrices.
We obtain the exact probability distribution by reducing to the corresponding distribution of diagonal entries, which corresponds to a quantitative version of a classical marginal problem.
This reduction is an instance of a more general principle that can be used to compute Duistermaat--Heckman measures in symplectic geometry.

The results in this chapter have been obtained in collaboration with Matthias Christandl, Brent Doran and Stavros Kousidis, and they have appeared in \cite{ChristandlDoranKousidisEtAl12}.

\section{Summary of Results}
\label{sec:dhmeasure/summary}

Let $\rho$ be a pure quantum state of $n$ particles, drawn at random according to the unitarily invariant probability measure on projective space.
We consider the problem of determining the joint distribution of its one-body reduced density matrices $\rho_1, \dots, \rho_n$, which are again random variables. This is a \emph{quantitative version} of the one-body quantum marginal problem, \autoref{pro:onebody/qmp}, and strictly generalizes the latter -- for a given collection of density matrices, we ask how likely it is to obtain them as the one-body marginals of a pure state rather than whether this is possible at all.

The starting point for our work is the observation that the joint distribution of the one-body reduced density matrices is invariant under the action of the local unitary group.
In particular, this implies that we may equivalently consider the joint distribution of their eigenvalues, $\Pspec$, which is a probability measure on the positive Weyl chamber $i \mathfrak t^*_{\geq 0}$ with support equal to the moment polytope (i.e., the solution to the one-body quantum marginal problem).%
\nomenclature[QPspec]{$\Pspec$}{joint distribution of local eigenvalues of random pure state}
The crucial fact is that in general $\Pspec$ can be recovered from the corresponding distribution of local diagonal entries $\Pdiag$ by taking a number of partial derivatives:%
\nomenclature[QPdiag]{$\Pdiag$}{joint distribution of local diagonal entries of random pure state}
\begin{equation}
  \label{eq:dhmeasure/derivative principle qmp}
  \Pspec = p(\vec\lambda_1, \dots, \vec\lambda_n) \prod_{k=1}^n \prod_{i<j}^{d_k} \left( \partial_{\lambda_{k,j}} - \partial_{\lambda_{k,i}} \right) \Pdiag \big|_{i \mathfrak t^*_{>0}},
\end{equation}
where $d_1, \dots, d_k$ are the local dimensions and where $p$ is an explicitly given polynomial (namely, a product of Vandermonde determinants).
This is an instance of a more general \emph{derivative principle} that relates invariant measures for the coadjoint action of a compact Lie group to their projections onto a Cartan subalgebra, and follows from a result by Harish-Chandra \cite{Harish-Chandr57} (\autoref{sec:dhmeasure/derivative principle}).
A similar reduction is not possible on the level of the moment polytopes. %
To compute the distribution of diagonal entries, we show in \autoref{sec:dhmeasure/abelian} that its density $f_{\diag}$ can be written as the push-forward of Lebesgue measure on a simplex along a linear map. Concretely:
\begin{equation}
\label{eq:dhmeasure/qmp cmp}
  \begin{aligned}
    f_{\diag}(\vec\lambda_1, \dots, \vec\lambda_n) = \vol \{
      (p_{i_1,\dots,i_n}) \in \RR^{d_1 \dotsm d_n}_{\geq 0} :
      \sum_{\mathclap{i_1,\dots,i_n}} p_{i_1,\dots,i_n} = 1, \\
      \sum_{\mathclap{i_2, \dots, i_n}} p_{i_1,\dots,i_n} = \vec\lambda_{1,i_1} \; (\forall i_1),
      \dots,
      \sum_{\mathclap{i_1, \dots, i_{n-1}}} p_{i_1,\dots,i_n} = \vec\lambda_{n,i_n} \; (\forall i_n)
    \}
  \end{aligned}
\end{equation}
It is amusing to note that this corresponds precisely to a quantitative marginal problem for ordinary random variables.
Any measure of the form \eqref{eq:dhmeasure/qmp cmp} is given by piecewise homogeneous polynomials on convex chambers that can be explicitly computed.
To do so algorithmically, we adapt a result of Boysal and Vergne that can be used to recursively compute closely related measures by evaluating residues \cite{BoysalVergne09}.
By putting together all ingredients, we obtain an effective algorithm for computing $\Pspec$ for an arbitrary number of particles and statistics (\autoref{sec:dhmeasure/derivative principle}).
See \autoref{fig:dhmeasure/reduction} for a summary of the method.
This generalizes previous results in the literature significantly, where exact results were only obtained for two distinguishable particles \cite{LloydPagels88, ZyczkowskiSommers01}.
In \cite{ChristandlDoranKousidisEtAl12} we also give a variant of the algorithm that can be directly applied to non-pure global spectrum (while the general case can always be reduced to the case of random pure states, such an algorithm can be useful for the manual computation of concrete examples).

From a mathematical perspective, the distributions that we compute are \emph{Duister\-maat--Heck\-man measures}, which are defined more generally using the push-forward of the Liouville measure on a symplectic manifold along the moment map \cite{Heckman82, GuilleminSternberg82a, GuilleminSternberg84, GuilleminLermanSternberg88, GuilleminLermanSternberg96, GuilleminPrato90} (\autoref{sec:dhmeasure/dhmeasure}).
For the purposes of this thesis, it will be convenient to restrict our attention to projective space; we refer to \cite{ChristandlDoranKousidisEtAl12} for an exposition from the symplectic point of view.

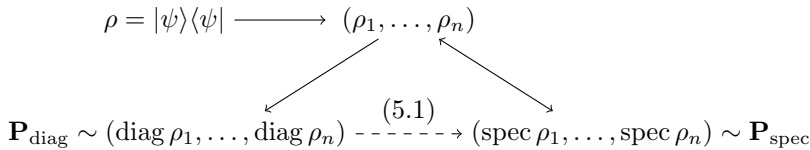
\begin{figure}
  \begin{center}
    \begin{tikzpicture}
    \draw [->] (-0.8,0) -- (0.4,0);
    \node [left] at (-0.8,0) {$\rho = \proj\psi$};
    \node at (1.5,0) {$(\rho_1, \dots, \rho_n)$};
    \draw [->] (1.1,-0.25) -- (-0.4,-1.2);
    \draw [<->] (1.9,-0.25) -- (3.4,-1.2);
    \node [left] at (0.8,-1.5) {$\Pdiag \sim (\diag \rho_1, \dots, \diag \rho_n)$};
    \node [above] at (1.5,-1.5) {\eqref{eq:dhmeasure/derivative principle qmp}};
    \draw [->, dashed] (0.8,-1.5) -- (2.2,-1.5);
    \node [right] at (2.2,-1.5) {$(\spec \rho_1, \dots, \spec \rho_n) \sim \Pspec$};
    \end{tikzpicture}
  \end{center}
  \caption[Illustration of the method for computing the distribution of random quantum marginals]{\emph{Illustration of the method.} A random pure state $\rho$ gives rise to random one-body reduced density matrices $\rho_1, \dots, \rho_n$.
  The derivative principle \eqref{eq:dhmeasure/derivative principle qmp} allows their distribution of eigenvalues $\Pspec$ to be recovered from the distribution of diagonal entries $\Pdiag$, which we compute algorithmically.}
  \label{fig:dhmeasure/reduction}
\end{figure}

\section{Duistermaat--Heckman Measures}
\label{sec:dhmeasure/dhmeasure}

We use the same notation and conventions as in \autoref{ch:onebody}.
Thus let $G$ be a connected reductive algebraic group, $K \subseteq G$ a maximal compact subgroup, and $\Pi \colon G \rightarrow \GL(\calH)$ be a representation on a Hilbert space $\calH$ with $K$-invariant inner product.
Throughout this chapter we will assume that the moment polytope $\Delta_K := \Delta_K(\PP(\calH))$ has non-empty intersection with the interior of the positive Weyl chamber, $\Delta_K \cap i \mathfrak t^*_{>0} \neq \emptyset$.
We explain below that this is without loss of generality for our application to eigenvalue distributions.

Let $\rho$ be a random pure state in $\PP(\calH)$, chosen according to the $\U(\calH)$-invariant probability measure on projective space.
Then $\mu_K(\rho)$ is contained in a unique coadjoint orbit $\calO_{K,\lambda}$, where $\lambda$ is a now random variable which takes values in the moment polytope $\Delta_K$.
We call the distribution of $\lambda$ the \emph{non-Abelian Duistermaat--Heckman probability measure}\index{Duistermaat--Heckman measure!non-Abelian}\nomenclature[TProb_K]{$\Prob_K$}{non-Abelian Duistermaat--Heckman measure} and denote it by $\Prob_K = \Prob_{K,\PP(\calH)}$.
In other words, $\Prob_K$ is defined as the push-forward of the invariant measure on projective space first along the moment map and then along the continuous map that sends $\calO_{K,\lambda} \mapsto \lambda$.
Likewise, we may also consider the image of $\rho$ under the moment map \eqref{eq:kirwan/abelian moment map} for the maximal torus $T \subseteq K$.
Then $\mu_T(\rho)$ is a random variable that takes values in the ``Abelian'' moment polytope $\Delta_T$, which we have seen is just the convex hull of weights.
We call the distribution of $\mu_T(\rho)$ the \emph{Abelian Duistermaat--Heckman probability measure}\index{Duistermaat--Heckman measure!Abelian}\nomenclature[TProb_T]{$\Prob_T$}{Abelian Duistermaat--Heckman measure} and denote it by $\Prob_T = \Prob_{T,\PP(\calH)}$.

More generally, we may consider a random quantum state $\rho$ of fixed spectrum $\mu$ and consider the corresponding Duistermaat--Heckman measures $\Prob_{K,\mu}$ and $\Prob_{T,\mu}$ constructed in the same way as before.
We will see in \eqref{eq:dhmeasure/purification} below that the computation of these measures can always be reduced to the case of random pure states.

We remark that there is a general definition of a Duistermaat--Heckman measure in symplectic geometry that goes as follows:
Let $(M, \omega)$ be a Hamiltonian $K$-manifold of dimension $2n$ and $\mu_K \colon M \rightarrow i \mathfrak k^*$ its moment map.
Then $\omega^n / (2\pi)^n n!$ is a volume form on $M$ that determines the Liouville measure of $M$.
By pushing forwarding along the moment map $\mu_K$ and further along the map $\calO_{K,\lambda} \mapsto \lambda$ we obtain the Duistermaat--Heckman measure for the $K$-action on $M$.
If $M$ is compact then we can renormalize to obtain a probability measure $\Prob_{K,M}$ on the moment polytope.
We remark that in our definition of the non-Abelian Duistermaat--Heckman measure in \cite{ChristandlDoranKousidisEtAl12} we had furthermore divided the push-forward measure at each point by the Liouville volume of the respective coadjoint orbit, given by%
\nomenclature[Tp_K(lambda)]{$p_K(\lambda)$}{Liouville volume of coadjoint orbit}
\begin{equation}
\label{eq:dhmeasure/kks volume}
  p_K(\lambda) = \prod_{\alpha \in R_{K,+}} \frac {(\lambda, H_\alpha)} {(\rho_K, H_\alpha)}
\end{equation}
where $\rho_K = \frac 1 2 \sum_{\alpha \in R_{K,+}} \alpha$ is the \emphindex{Weyl vector}\nomenclature[Rrho_K]{$\rho_K$}{Weyl vector} \cite[Proposition 7.26]{BerlineGetzlerVergne03}.
This is conceptually more appealing since it corresponds to ``intersecting'' with the positive Weyl chamber -- just as in the definition of the moment polytope! -- and it makes the formulas slightly cleaner.
However, it comes at the expense of working with measures that are not probability measures, and we have chosen not to adopt this convention herein.

\subsection*{Eigenvalues and Diagonal Entries}

Let us now consider the groups and representations that correspond to the one-body quantum marginal problem and its variants (cf.\ \autoref{tab:onebody/summary}).
In the case of distinguishable particles, $G = \SL(\calH_1) \times \dots \times \SL(\calH_n)$, $K = \SU(\calH_1) \times \dots \times \SU(\calH_n)$, and $\calH = \calH_1 \otimes \dots \otimes \calH_n$.
As in the preceding chapters, we may identify the collection of local eigenvalues with points in the positive Weyl chamber $i \mathfrak t^*_+$; likewise, the collection of local diagonal entries can be identified with points in $i \mathfrak t^*$ (\autoref{sec:onebody/consequences}).
In this way, the eigenvalue distribution $\Pspec$ and the distribution of local diagonal entries $\Pdiag$ introduced in \autoref{sec:dhmeasure/summary} are identified with the Duistermaat--Heckman probability measures $\Prob_K$ and $\Prob_T$, respectively.

\bigskip

For two distinguishable particles, $\calH = \CC^d \otimes \CC^d$, the solution is well-known \cite{LloydPagels88, ZyczkowskiSommers01}:
\begin{equation}
\label{eq:dhmeasure/two}
  \int d\Pspec \, f
  = \frac 1 Z \int_{\Delta_{d,+}} d\vec\lambda_A \; p^2_d(\vec\lambda_A) \, f(\vec\lambda_A, \vec\lambda_A)
\end{equation}
for any test function $f = f(\vec\lambda_A, \vec\lambda_B)$.
Here,
\begin{equation*}
  \Delta_{d,+} = \{ \vec\lambda \in \RR^d_{\geq 0} : \sum_j \lambda_j = 1, \, \lambda_1 \geq \dots \geq \lambda_d \}
\end{equation*}
denotes the non-increasingly ordered chamber in the $(d-1)$-dimensional standard simplex\nomenclature[TDelta_d,+]{$\Delta_{d,+}$}{non-increasingly ordered chamber in $\Delta_d$}, which can be identified with the set of possible spectra of a density operators on $\CC^d$;
\begin{equation}
  \label{eq:dhmeasure/kks volume unitary}
  p_d(\vec\lambda) = \prod_{i < j} \frac {\lambda_i - \lambda_j} {j - i}
\end{equation}
is the volume polynomial \eqref{eq:dhmeasure/kks volume} for $K = \SU(d)$,
and $Z > 0$ is a suitable normalization constant.
Note that the measure $\Pspec$ is supported on the ``diagonal'' $\vec\lambda_A = \vec\lambda_B$, in agreement with \autoref{lem:onebody/two}.
We will later give a simple proof of \eqref{eq:dhmeasure/two} using the methods of this chapter (see \eqref{eq:dhmeasure/lloyd-pagels both marginals}).
The corresponding distribution of the one-body reduced density matrix $\rho_A$ is known as the \emph{Hilbert--Schmidt probability measure}\index{Hilbert--Schmidt probability measure|textbf}.\nomenclature[Qdrho_A]{$d\rho_A$}{Hilbert--Schmidt probability measure}

\bigskip

Just as \autoref{lem:onebody/two} did for the one-body quantum marginal problem, its quantitative version \eqref{eq:dhmeasure/two} can be used to reduce the seemingly more general problem of computing the local eigenvalue distribution for random states with fixed global spectrum to the case of global pure states.
More generally, let $\calH_0$ be a $K_0$-representation (e.g., one from \autoref{tab:onebody/summary}) and $\calH = \calH_0 \otimes \calH_0$ the corresponding representation of $K = K_0 \times \SU(\calH_0)$ (its ``purification'').
Then \eqref{eq:dhmeasure/two} implies that the probability measure $P_K$ for pure states can be obtained as a direct integral of the measures $P_{K_0,\mu}$ for fixed global spectrum $\mu$,
\begin{equation}
\label{eq:dhmeasure/purification}
  \int d\Prob_K \, f
  = \frac 1 Z \int_{\Delta_{d,+}} d\vec\mu \; p^2_d(\vec\mu) \int d\Prob_{K_0,\mu}(\vec\lambda) \; f(\vec\lambda, \vec\mu),
\end{equation}
where $d = \dim \calH_0$.
Conversely, since the probability distributions $\Prob_{K_0,\mu}$ vary continuously with the global spectrum $\mu$, they can be reconstructed from $\Prob_K$ by taking limits.

\bigskip

Our assumption that the moment polytope $\Delta_K$ has non-empty intersection with the interior of the positive Weyl chamber amounts to showing that there exists a global pure state $\rho$ whose reduced density matrices all have non-degenerate eigenvalue spectrum.
We first give a criterion in the case of distinguishable particles:

\begin{lem}
\label{lem:dhmeasure/criterion distinguishable}
  Let $n \geq 1$ and $d_1 \leq \dots \leq d_n \leq d_{n+1}$.
  Then there exists a global pure state $\rho$ on $\calH = \CC^{d_1} \otimes \dots \otimes \CC^{d_{n+1}}$ such that $\rho_1, \dots, \rho_{n+1}$ all have non-degenerate spectrum if and only if $d_{n+1} \leq \left( \prod_{k=1}^n d_k \right) + 1.$
\end{lem}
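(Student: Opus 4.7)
The plan is to prove the two directions separately: the ``only if'' direction follows from a single Schmidt decomposition, while the ``if'' direction calls for a genericity argument in projective space rather than an explicit construction.

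For the ``only if'' direction, I would fix any pure state $\rho = \proj\psi$ with all marginals non-degenerate and apply the Schmidt decomposition \eqref{eq:onebody/schmidt} across the bipartition $\{1,\dots,n\} : \{n+1\}$. This immediately bounds the rank of $\rho_{n+1}$ by $\prod_{k=1}^n d_k$. Since $\rho_{n+1}$ is a density matrix on $\CC^{d_{n+1}}$ with all $d_{n+1}$ eigenvalues distinct, at most one of them can vanish, forcing $d_{n+1} \le \prod_{k=1}^n d_k + 1$.

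For the converse I plan to argue by genericity. The subset
\[
  V_k := \{\rho \in \PP(\calH) : \rho_k \text{ has a repeated eigenvalue}\}
\]
is cut out by the vanishing of the discriminant of the characteristic polynomial of $\rho_k$; since that discriminant is a polynomial in the matrix entries of $\rho$, each $V_k$ is Zariski-closed. Because $\PP(\calH)$ is irreducible, the Zariski-open complement $\PP(\calH) \setminus \bigcup_{k=1}^{n+1} V_k$ is non-empty---and thus yields the desired pure state---as soon as each $V_k$ is a \emph{proper} subvariety, i.e., as soon as, for each individual $k$, some pure state with $\rho_k$ non-degenerate exists.

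The remaining task is to verify these individual existence statements, and here I would distinguish two cases. For $k \le n$, the ordering assumption gives $\prod_{j \neq k} d_j \ge d_{n+1} \ge d_k$, so by \autoref{lem:onebody/two} applied to the bipartition separating the $k$-th factor from the rest, every density matrix on $\CC^{d_k}$---in particular any full-rank one with $d_k$ distinct eigenvalues---arises as a marginal of some pure state on $\calH$. For $k = n+1$, the hypothesis $d_{n+1} \le \prod_{j=1}^n d_j + 1$ permits a density matrix on $\CC^{d_{n+1}}$ with $d_{n+1}$ distinct eigenvalues of which at most one is zero; such a matrix is purifiable by \autoref{lem:onebody/two} across $\{1,\dots,n\} : \{n+1\}$. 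The main conceptual step---and the one I would take the most care to present cleanly---is the shift from attempting a single simultaneous construction to the algebraic-geometric observation that open conditions on an irreducible variety intersect nontrivially; once that point is made, all the individual existence statements reduce to straightforward consequences of the Schmidt decomposition.
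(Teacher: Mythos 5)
Your proof is correct, but your sufficiency argument takes a genuinely different route from the paper's. The necessity direction coincides with the paper's (the Schmidt decomposition across $\{1,\dots,n\}:\{n+1\}$ bounds the rank of $\rho_{n+1}$ by $\prod_{k=1}^n d_k$, and a non-degenerate spectrum tolerates at most one zero eigenvalue). For sufficiency, the paper is constructive: it takes the dyadic mixed state $\rho_{1,\dots,n} \propto \sum_{j=1}^{d_n} 2^{-j} \proj{j}_1 \otimes \dots \otimes \proj{j}_n$, whose one-body marginals are non-degenerate, perturbs it slightly so that its global spectrum is non-degenerate, and purifies into the $(n+1)$-st factor, using \autoref{lem:onebody/two} to control $\spec \rho_{n+1}$. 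You instead argue by genericity: each degeneracy locus $V_k$ is cut out by the discriminant of the characteristic polynomial of $\rho_k$, so it is a proper closed algebraic subset as soon as one state with $\rho_k$ non-degenerate exists, and finitely many proper closed subsets cannot cover the irreducible space $\PP(\calH)$; the per-$k$ existence statements then do reduce to \autoref{lem:onebody/two} exactly as you indicate (for $k\le n$ via $\prod_{j\neq k} d_j \geq d_{n+1} \geq d_k$, for $k=n+1$ via a rank-$\min\{d_{n+1},\prod_k d_k\}$ state with distinct nonzero eigenvalues). One small technical caveat: the entries of $\rho_k$ are sesquilinear in $\psi$, so the discriminant is not a holomorphic function on $\PP(\calH)$, and ``Zariski-closed'' must be read in the real-algebraic sense (e.g.\ viewing $\PP(\calH)$ inside the real vector space of Hermitian operators, where the partial trace is linear), or one simply notes that the zero set of a not-identically-vanishing real polynomial on the connected manifold $\PP(\calH)$ has measure zero; either reading makes your covering argument go through verbatim. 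As for what each approach buys: your soft argument shows the stronger statement that a \emph{generic} pure state has all marginals non-degenerate, and it sidesteps the rank bookkeeping in the purification step (as written, the paper's perturbation to a full-rank state on the first $n$ factors admits a purification on $\CC^{d_{n+1}}$ only when $d_{n+1} \geq \prod_{k=1}^n d_k$; in the regime $d_{n+1} < \prod_{k=1}^n d_k$ one should perturb only up to rank $d_{n+1}$), whereas the paper's route produces an explicit state rather than a bare existence claim.
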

\begin{proof}
  The condition is clearly necessary, since it follows from \autoref{lem:onebody/two} that at most $\prod_{k=1}^n d_k$ eigenvalues of $\rho_{n+1}$ are non-zero.
  To show that it is sufficient, consider the following mixed state on $\CC^{d_1} \otimes \dots \otimes \CC^{d_n}$,
  \begin{equation*}
    \rho_{1,\dots,n} \propto \sum_{j=1}^{d_n} 2^{-j} \proj{j}_1 \otimes \dots \otimes \proj{j}_n,
  \end{equation*}
  where we set $\proj j_k := \proj{d_k}$ if $j > d_k$.
  It is not hard to see that each $\rho_k$ has non-degenerate spectrum.
  The same remains true if we perturb $\rho_{1,\dots,n}$ slightly such that it has non-degenerate global spectrum with all eigenvalues positive.
  Let $\rho$ denote a purification of $\rho_{1,\dots,n}$ on $\calH$.
  Then \autoref{lem:onebody/two} shows $\rho_{n+1}$ and therefore all one-body reduced density matrices of $\rho$ have non-degenerate spectrum.
\end{proof}

Note that the conditions of \autoref{lem:dhmeasure/criterion distinguishable} are always satisfied for the purification, where $d_{n+1} = d_1 \dotsm d_n$.
The following lemma shows that the same is true for the one-body $n$-representability problem:

\begin{lem}
\label{lem:dhmeasure/criterion fermions}
  Let $d > n \geq 1$.
  Then there exists a mixed state $\rho$ on $\Alt^n \CC^d$ such that both $\rho$ and $\rho_1$ have non-degenerate spectrum.
\end{lem}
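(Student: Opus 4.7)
The plan is to construct $\rho$ explicitly as a convex combination of Slater determinants with carefully chosen weights, and then reduce the non-degeneracy of both spectra to a generic condition on the coefficients. For each $n$-subset $S \subseteq \{1,\dots,d\}$ let $\ket S \in \Alt^n \CC^d$ denote the corresponding Slater determinant, so that $\{\ket S\}$ is an orthonormal basis of $\Alt^n \CC^d$. I will consider diagonal mixed states of the form
\begin{equation*}
  \rho = \sum_{S} p_S \proj S, \qquad p_S > 0, \quad \sum_S p_S = 1.
\end{equation*}
The eigenvalues of $\rho$ are precisely the weights $p_S$, so the first non-degeneracy condition becomes $p_S \neq p_{S'}$ for all $S \neq S'$. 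A direct computation of the one-body marginal (e.g.\ using the expression \eqref{eq:onebody/rdm fermions} in terms of creation/annihilation operators) yields
\begin{equation*}
  \rho_1 = \frac 1 n \sum_{i=1}^d q_i \proj i, \qquad q_i := \sum_{S \ni i} p_S,
\end{equation*}
so the second non-degeneracy condition becomes $q_i \neq q_j$ for all $i \neq j$.

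The parameter space $\Delta := \{p \in \RR^{\binom{d}{n}} : p_S > 0, \sum_S p_S = 1\}$ is an open simplex of positive dimension (since $d > n \geq 1$ gives $\binom{d}{n} \geq 2$). Each of the finitely many conditions $p_S = p_{S'}$ cuts out a proper affine hyperplane in $\Delta$, so it suffices to show the same for the conditions $q_i = q_j$; then the desired $p$ lies in the non-empty open complement of a finite union of nowhere-dense sets.

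The crux is therefore the following observation, which is where the hypothesis $d > n$ enters: for every $i \neq j$, the linear functional
\begin{equation*}
  p \mapsto q_i(p) - q_j(p) = \sum_{S \ni i,\, S \not\ni j} p_S \;-\; \sum_{S \ni j,\, S \not\ni i} p_S
\end{equation*}
is not identically zero on $\Delta$. To see this it is enough to exhibit a single $n$-subset $S$ containing $i$ and not $j$, for then putting all weight on that $S$ yields $q_i - q_j = 1 \neq 0$. Such an $S$ exists precisely when one can choose $n-1$ further elements from $\{1,\dots,d\} \setminus \{i,j\}$, i.e.\ when $d - 2 \geq n - 1$, which is the hypothesis $d > n$. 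This step is the only place where any real content enters; everything else is bookkeeping, so I do not anticipate substantial obstacles beyond verifying this elementary count and the formula for $\rho_1$.
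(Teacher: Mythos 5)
Your proof is correct and takes essentially the same approach as the paper: both construct $\rho$ as a mixture of occupation-number basis states (Slater determinants), observe that $\rho_1$ is then diagonal with entries $\tfrac1n\sum_{S\ni i}p_S$, and choose the weights generically using $d>n$ to separate these entries. Your hyperplane-avoidance bookkeeping over the full simplex merely replaces the paper's choice of $d$ linearly independent weights followed by a small perturbation, and in fact handles the non-degeneracy of the global spectrum (including full rank) a bit more explicitly than the paper's one-line perturbation remark.
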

\begin{proof}
  The weights of $\Alt^n \CC^d$ can be identified with fermionic occupation numbers, i.e., binary strings $\vec\omega \in \{0,1\}^d$ with precisely $n$ ones.
  As long as $1 \leq n < d$, we can find $d$ linearly independent such weights.
  Thus their convex hull is maximal-dimensional and we can find a convex combination $\sum_j p_j \, \vec\omega_j \in i \mathfrak t^*_{>0}$.
  Then the one-body reduced density matrix $\rho_1$ of $\rho = \sum_j p_j \proj{\omega_j}$ has non-degenerate spectrum, and by slightly perturbing $\rho$ we can arrange that the same is true for the global state.
\end{proof}

For bosons, we have already seen that any one-body reduced density matrix is consistent with a global pure state.
\subsection*{The Post-Selection Bound}

We briefly digress to prove a useful property of the Hilbert--Schmidt probability measure that will be needed in \autoref{ch:strong6j}.
Let $P_{k,d}$ denote the projector onto the symmetric subspace $\Sym^k(\CC^d) \subseteq (\CC^d)^{\otimes k}$ and $d_{k,d}$ its dimension.
Then we have that
\begin{equation}
\label{eq:dhmeasure/integral projector}
  \frac {P_{k,d}} {d_{k,d}} = \int_{\PP(\CC^d)} d\rho \, \rho^{\otimes k},
\end{equation}
where $d\rho$ the probability distribution of a random pure state on $\CC^d$.
To see this, note that the right-hand side is an operator on the irreducible $\U(d)$-representation $\Sym^k(\CC^d)$ that commutes with the group action, and therefore proportional to $P_{k,d}$ by Schur's lemma \eqref{eq:onebody/schurs lemma}. Thus
\eqref{eq:dhmeasure/integral projector} follows from the observation that both the left and the right-hand side have trace one.

Now let $\sigma_{A^k}$ be a permutation-invariant density operator on $(\CC^a)^{\otimes k}$.
Then there exists a purification $\sigma_{(AA')^k}$ of $\sigma_{A^k}$ that is supported on the symmetric subspace $\Sym^k(\CC^a \otimes \CC^a) \subseteq (\CC^a \otimes \CC^a)^{\otimes k} = (\CC^a)^{\otimes k} \otimes (\CC^a)^{\otimes k}$ \cite{ChristandlKoenigMitchisonEtAl07, Renner05}.
Indeed, by Schur's lemma, $\sigma_{A^k}$ is of the form $\bigoplus_{\alpha \vdash_a k} \sigma_\alpha \otimes \Id_{[\alpha]}/\!\dim [\alpha]$ with respect to the Schur--Weyl decomposition $\bigoplus_{\alpha \vdash_a k} V^a_\alpha \otimes [\alpha]$ \eqref{eq:onebody/schur-weyl}. In view of \eqref{eq:onebody/schur-weyl bipartite}, the desired purification can thus be constructed by purifying each part $\sigma_\alpha$.
Together with \eqref{eq:dhmeasure/integral projector} for $d=a^2$, we find that
\begin{equation*}
  \sigma_{(AA')^k} \leq P_{k,a^2} = d_{k,a^2} \int_{\PP(\CC^a \otimes \CC^a)} d\rho_{AA'} \, \rho_{AA'}^{\otimes k}.
\end{equation*}
where ``$\leq$'' denotes the positive semidefinite order of Hermitian matrices. Since the latter is preserved by the partial trace, we obtain the following bound, which holds for any permutation-invariant density operator $\sigma_{A^k}$:
\begin{equation}
\label{eq:dhmeasure/postselection}
  \sigma_{A^k} \leq d_{k,a^2} \int d\rho_A \, \rho_A^{\otimes k},
\end{equation}
where $d\rho_A$ is the Hilbert--Schmidt probability measure\index{Hilbert--Schmidt probability measure} on the set of density matrices on $\CC^a$.
It is a remarkably useful tool to lift permutation-invariance to tensor-product form at the expense of a factor $d_{k,a^2} = \binom{k+a^2-1}{k}$ which grows only polynomially with $k$ (for fixed local dimension $a$).
A more refined version of this observation is known as the \emphindex{post-selection technique} \cite{ChristandlKoenigRenner09, Renner10} (cf.\ \cite{Hayashi10}).

\section{The Abelian Measure}
\label{sec:dhmeasure/abelian}

We now compute the Abelian Duistermaat--Heckman probability measure $\Prob_T$ for the projective space $\PP(\calH)$.
Let $\Omega$ denote the set of weights of the representation $\calH$.
Choose an orthonormal basis $\ket k$ of weight vectors with corresponding weights $\omega_k \in \Omega$.
According to \eqref{eq:kirwan/abelian convex combination}, the image of a pure state $\rho = \proj\psi$ with $\ket\psi = \sum_k \psi_k \ket k$ is given by the convex combination
\begin{equation}
\label{eq:dhmeasure/abelian convex combination}
  \mu_T(\rho) = \sum_k \abs{\psi_k}^2 \omega_k.
\end{equation}
Let \[\Delta_D := \{ (p_k) \in \RR^D_{\geq 0} : \sum_k p_k = 1 \}\] denote the $(D-1)$-dimensional \emphindex{standard simplex}\nomenclature[TDelta_d]{$\Delta_d$}{$(d-1)$-dimensional standard simplex in $\RR^d$}, where $D = \dim \calH$.
In the following, a \emphindex{Lebesgue measure} on a closed convex body is the restriction of a translation-invariant measure on its affine hull (such a measure is unique up to normalization).

\begin{lem}
\label{lem:dhmeasure/abelian}
  The Abelian Duistermaat--Heckman probability measure $\Prob_T$ is equal to the push-forward of Lebesgue measure $dp$ on the standard simplex $\Delta_D$, normalized to probability one, along the linear map $\Phi \colon (p_k) \mapsto \sum_k p_k \omega_k$.
\end{lem}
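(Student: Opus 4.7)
The plan is to factor the Abelian moment map through the standard simplex. By \eqref{eq:dhmeasure/abelian convex combination} we have that $\mu_T = \Phi \circ \Psi$, where
\[
  \Psi \colon \PP(\calH) \to \Delta_D,
  \quad
  \proj\psi \mapsto (\abs{\psi_k}^2)_{k=1,\dots,D}
\]
with $\ket\psi = \sum_k \psi_k \ket k$, and where $\Phi$ is the linear map from the statement. Since push-forward of measures is functorial, the lemma reduces to showing that the push-forward of the unitarily invariant probability measure on $\PP(\calH)$ under $\Psi$ coincides with the normalized Lebesgue measure on $\Delta_D$.

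To establish this, I would realize a random pure state as $\ket\psi = \ket z/\norm z$, where $\ket z = \sum_k z_k \ket k$ and the components $z_k$ are independent standard complex Gaussians. The $\U(\calH)$-invariance of the Gaussian measure on $\calH$ together with the uniqueness of the $\U(\calH)$-invariant probability measure on $\PP(\calH)$ shows that this construction indeed yields a uniformly distributed pure state. Moreover, the factor $\norm z$ cancels in the normalization, so that $\Psi(\proj\psi) = (r_k / \sum_j r_j)_k$ with $r_k := \abs{z_k}^2$.

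It then remains to identify the law of this random vector. The $r_k$ are i.i.d.\ standard exponential random variables on $\RR_{\geq 0}$, and a classical change-of-variables computation -- passing from $(r_1, \dots, r_D)$ to the new coordinates $s := \sum_k r_k$ and $p_k := r_k/s$ with Jacobian $s^{D-1}$ -- transforms the product measure $\prod_k e^{-r_k}\, dr_k$ into $(s^{D-1} e^{-s}\, ds) \otimes ((D-1)!\, dp)$ on $\RR_{\geq 0} \times \Delta_D$. In particular, $s$ and the vector $(p_k)$ are independent, and the marginal law of the latter is the Dirichlet$(1,\dots,1)$ distribution, which is exactly the normalized Lebesgue measure on $\Delta_D$.

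The whole argument is essentially bookkeeping: the only ingredient that is not completely trivial is the transformation to Dirichlet coordinates, and this is a standard calculation rather than a genuine obstacle. What makes the lemma work is the crucial observation that $\mu_T$ is, up to composition with the linear map $\Phi$, just the diagonal-of-a-pure-state map, which allows us to bypass symplectic-geometric machinery entirely and reduce to a classical fact about uniform distributions on complex spheres.
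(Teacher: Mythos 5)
Your proof is correct and follows essentially the same route as the paper's: both factor the Abelian moment map as $\mu_T = \Phi \circ \Psi$ through the squared-amplitude map and reduce the lemma to the classical fact that the unitarily invariant measure on $\PP(\calH)$ pushes forward under $\proj\psi \mapsto (\abs{\psi_k}^2)$ to normalized Lebesgue measure on $\Delta_D$. The only difference is that the paper disposes of this last fact with a parenthetical appeal to polar coordinates on the unit sphere, whereas you verify it in detail via the Gaussian realization of the uniform measure and the exponential-to-Dirichlet change of variables; both verifications are standard and equivalent.
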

\begin{proof}
  The invariant probability measure on $\PP(\calH)$ is the push-forward of the usual round measure on the unit sphere $S^{2D-1} \cong \{ \ket\psi : \braket{\psi | \psi} = 1 \}$, normalized to total measure one, along the quotient map $\ket\psi \mapsto \proj\psi$.
  On the other hand, the round measure on the unit sphere also induces Lebesgue measure on the standard simplex $\Delta_D$ by pushing forward along the map $\ket\psi \mapsto (\abs{\psi_k}^2)$ (polar coordinates!).
  Thus the claim follows from \eqref{eq:dhmeasure/abelian convex combination}.
\end{proof}

Now fix a Lebesgue measure $d\lambda$\nomenclature[Tdlambda]{$d\lambda$}{Lebesgue measure on $\Delta_T$} on the affine hull of Abelian moment polytope $\Delta_T = \conv \Omega$.
Away from the boundary of the standard simplex, the map $(p_k) \mapsto \sum_k p_k \omega_k$ is a submersion onto this affine space, and therefore the measure $\Prob_T$ is absolutely continuous with respect to $d\lambda$.
In fact, its density function $f_T(\lambda)$ is given by
\begin{equation}
\label{eq:dhmeasure/abelian as volume}
\begin{aligned}
  f_T(\lambda)
  = &\vol~(\Phi^{-1}(\lambda) \cap \Delta_D) \\
  = &\vol~\{ (p_k) \in \RR^D_{\geq 0} : \sum_k p_k = 1, \sum_k p_k \omega_k = \lambda \},
\end{aligned}
\end{equation}
i.e., as the volume\nomenclature[Tvol]{$\vol$}{volume of parametrized polytope} of a \emphindex{parametrized polytope}, measured with respect to a Lebesgue measure $dp/d\lambda$ on the fiber $\Phi^{-1}(\lambda) \cap \Delta_D$ that is normalized such that $dp = dp/d\lambda \wedge d\lambda$ (we freely identify differential pseudo-forms and the measures induced by them).
For the one-body quantum marginal problem, \eqref{eq:dhmeasure/abelian as volume} reduces to \eqref{eq:dhmeasure/qmp cmp}, the quantitative version of a classical marginal problem that we discussed in the introduction.

\bigskip

In special case that $D - 1 = \dim \Delta_T$, the map $\Phi$ maps the standard simplex bijectively onto the Abelian moment polytope $\Delta_T$ (which in this case is itself a simplex). It follows that
\begin{equation}
\label{eq:dhmeasure/abelian simplex}
  \Prob_T = \frac {(D-1)!} {d\lambda(\omega_2 - \omega_1, \dots, \omega_D - \omega_1)} d\lambda,
\end{equation}
where the denominator denotes the volume of the parallelotope spanned by the $(\omega_j - \omega_1)$ as measured by $d\lambda$.
In particular, the Abelian Duistermaat--Heckman measure for the maximal torus of $\U(\calH)$ is simply Lebesgue probability measure on the standard simplex, which is in agreement with \autoref{lem:dhmeasure/abelian}.

In the general case, $D - 1 > \dim \Delta_T$.
Consider subsets of weights whose convex hull have codimension one in $\Delta_T$.
These convex hulls are precisely the critical points of the Abelian moment map, viewed as a map onto the affine hull of $\Delta_T$ (the proof of \autoref{lem:kirwan/abelian critical} works just as well if $\Delta_T$ is of positive codimension).
They cut $\Delta_T$ into convex polytopes that we will call the \emph{regular chambers}\index{regular chamber}.
We will also considered the complement of $\Delta_T$ as a chamber, called the \emph{unbounded chamber}\index{regular chamber!unbounded}, although it is not convex.
If the closures of two chambers have a common boundary of maximal dimension (i.e., of codimension one) then we shall say that the two chambers are \emph{adjacent}\index{regular chamber!adjacent}.
The affine hyperplane spanned by their common boundary will be called a \emphindex{critical wall}.
The significance of these notions is the following:
Inside each regular chamber, the vertices of the convex polytopes $\Phi^{-1}(\lambda) \cap \Delta_D$ can be parametrized as affine functions of $\lambda$ (see, e.g., \cite{ClaussLoechner98, VerdoolaegeSeghirBeylsEtAl07} for details).
It follows that the density function $f_T(\lambda)$ is given by a polynomial of degree at most $(D - 1) - r_T$ on (the interior of) each regular chamber, where we set $r_T = \dim \Delta_T$.
Thus we say that $f_T(\lambda)$ is a \emph{piecewise polynomial function}\index{piecewise polynomial}.
As we cross a critical wall, this polynomial will in general change, and we will now describe a way to compute these ``jumps''.

\subsection*{The Boysal--Vergne--Paradan Jump Formula}

In \cite{BoysalVergne09}, Boysal and Vergne have analyzed the push-forward of Lebesgue measure on the cone $\RR^D_{\geq 0}$ along a linear map. In view of \autoref{lem:dhmeasure/abelian}, such measures are closely related to the Abelian Duistermaat--Heckman measure $\Prob_T$, and it is straightforward to translate their \cite[Theorem 4.3]{BoysalVergne09} into the projective scenario.
We refer to \cite{ChristandlDoranKousidisEtAl12} for a detailed derivation and only present the result:

Consider two adjacent chambers separated by a critical wall $(-,H) = c$.
Let $\Omega_0 := \{ \omega \in \Omega : (\omega, H) = c \}$ denote the set of weights on the wall,
$\calH_0 = \bigoplus_{\omega \in \Omega_0} \calH_\omega$ the corresponding sum of weight spaces,
and $D_0 := \dim \calH_0$ its dimension.
Let $d\lambda_0$ be Lebesgue measure on the affine hyperplane $(-,H)=c$, normalized such that
\begin{equation}
\label{eq:dhmeasure/wall lebesgue}
  d\lambda = d\lambda_0 \wedge d(-,H).
\end{equation}
The common boundary between the two chambers is contained in a single regular chamber for $\PP(\calH_0)$.
Thus the Abelian Duistermaat--Heckman measure $\Prob_{T,\PP(\calH_0)}$ is there given by a polynomial density function with respect to $d\lambda_0$.
Let $f_{T,0}(\lambda)$ denote any polynomial that extends this density to all of $i \mathfrak t^*$ and consider its ``homogeneous extension'' $F_{T,0}(\lambda, t) := t^{D_0 - r_T} f_{T,0}(\lambda/t)$.
Finally, let $f_{T,\pm}$ denote the polynomial density function of $\Prob_{T,\PP(\calH)}$ on the chamber on the positive and negative side of the critical wall, respectively.
Then the jump over the wall is given by \cite[Theorem 4.3]{BoysalVergne09}
\begin{equation}
\label{eq:dhmeasure/abelian jump formula}
\begin{aligned}
  &f_{T,+}(\lambda) - f_{T,-}(\lambda) = \frac {(D-1)!} {(D_0-1)!} \\
  \times \Res_{z=0} &\left(
  F_{T,0}(\partial_X, \partial_t)
  \frac
    {e^{z ((\lambda, H) - c) + (\lambda, X) + t}}
    {\prod_{k : \omega_k \not\in \Omega_0} z ((\omega_k, H) - c) + (\lambda, X) + t)}
  \bigg|_{X=0, t=0}\right),
\end{aligned}
\end{equation}
where $\Res_{z=0} g = a_{-1}$ denotes the \emphindex{residue}\nomenclature[TRes_z]{$\Res_{z=0}$}{residue of formal Laurent series} of a formal Laurent series $g = \sum_k a_k z^k$ (it appears as part of an inversion formula for the Laplace transform).
In \autoref{sec:dhmeasure/examples} we will give many illustrations of how to use \eqref{eq:dhmeasure/abelian jump formula}.

\bigskip

The jump formula \eqref{eq:dhmeasure/abelian jump formula} can be simplified in case only a minimal number of weights lie on the wall (that is, if $D_0 = r_T$).
In this case, it follows from \eqref{eq:dhmeasure/abelian simplex} and \eqref{eq:dhmeasure/wall lebesgue} that the density on the wall is constant,
\begin{equation*}
  F_{T,0} = f_{T,0}
  = \frac {(D_0-1)!} {d\lambda(\tilde\omega_2-\tilde\omega_1, \dots, \tilde\omega_{D_0}-\tilde\omega_1, \xi)},
\end{equation*}
where $\Omega_0 = \{\tilde\omega_1, \dots, \tilde\omega_{D_0}\}$ and where $\xi \in i \mathfrak t^*$ is chosen such that $(\xi, H) = 1$.
Thus the jump \eqref{eq:dhmeasure/abelian jump formula} across the critical wall simplifies to (cf.\ \cite{GuilleminLermanSternberg88}):
\begin{equation}
\label{eq:dhmeasure/abelian jump formula minimal}
  \begin{aligned}
  &f_{T,+}(\lambda) - f_{T,-}(\lambda)
  = f_{T,0} \frac {(D-1)!} {(D_0-1)!}
  \Res_{z=0} \left(
  \frac
    {e^{z ((\lambda,H) - c)}}
    {\prod_{k : \omega_k \not\in \Omega_0} z ((\omega_k, H) - c)}
  \right) \\
  = &\frac 1 {d\lambda(\tilde\omega_2-\tilde\omega_1, \dots, \tilde\omega_{D_0}-\tilde\omega_1, \xi)} \frac {(D-1)!} {(D-D_0-1)!}
  \frac
    { ((\lambda,H) - c)^{D-D_0-1}}
    {\prod_{k : \omega_k \not\in \Omega_0} ((\omega_k, H) - c)}
\end{aligned}
\end{equation}

Similarly, we may also consider the case where the critical wall is zero-dimensional (that is, if $r_T - 1 = 0$).
In this case, $\Omega_0$ consists of a single point and the Abelian Duistermaat--Heckman measure for $\Prob_{T,\PP(\calH_0)}$ is the Dirac measure at this point. Thus our normalization \eqref{eq:dhmeasure/wall lebesgue} implies that
\begin{equation*}
  f_{T,0} \equiv \frac 1 {d\lambda(\xi)}
  \text{ and }
  F_{T,0}(t) = \frac {t^{D_0-1}} {d\lambda(\xi)},
\end{equation*}
where $\xi \in i \mathfrak t^*$ is again chosen such that $(\xi,H)=1$, and \eqref{eq:dhmeasure/abelian jump formula} simplifies to
\begin{equation}
\label{eq:dhmeasure/abelian jump formula zero-dimensional}
\begin{aligned}
  &f_{T,+}(\lambda) - f_{T,-}(\lambda) \\
   = &\frac 1 {d\lambda(\xi)} \frac {(D-1)!} {(D_0-1)!} \Res_{z=0} \left(
  \partial_t^{D_0-1}
  \frac
    {e^{z ((\lambda, H) - c) + t}}
    {\prod_{k : \omega_k \not\in \Omega_0} z ((\omega_k, H) - c) + t)}
  \bigg|_{t=0}\right).
\end{aligned}
\end{equation}

\subsection*{Algorithm}

Equations~\eqref{eq:dhmeasure/abelian jump formula}, \eqref{eq:dhmeasure/abelian jump formula minimal} and \eqref{eq:dhmeasure/abelian jump formula zero-dimensional} together can be used to recursively compute the Abelian Duistermaat--Heckman measure $\Prob_T$ for arbitrary projective spaces. We sketch the procedure in the following algorithm:

\begin{alg}
\label{alg:dhmeasure/abelian}
  The following recursive algorithm computes the piecewise polynomial density function $f_T$ of the Abelian Duistermaat--Heckman measure $\Prob_T = \Prob_{T,\PP(\calH)}$, where $\Omega$ denotes the set of weights of $\calH$:
  \begin{algorithmic}
    \Function{abelian}{$\calH$, $\Omega$}
    \State{compute the decomposition of $\Delta_T = \conv \Omega$ into regular chambers}
    \State{$r_T \gets \dim \Delta_T$}
    \State{$f_{T} \gets 0$ on the unbounded chamber}
    \While{$\exists$ regular chamber where the density is not known}
      \State{find adjacent chambers with known polynomial density $f_{T,-}$ on one chamber and unknown polynomial density $f_{T,+}$ on the other}
      \State{$\Omega_0 \gets $ weights on the critical wall separating the chambers}
      \State{$\calH_0 \gets \bigoplus_{\omega \in \Omega_0} \calH_\omega$, $D_0 \gets \dim \calH_0$}
      \If{$D_0 = r_T$} \Comment{Minimal wall}
        \State{$f_{T,+} \gets f_{T,-} + \eqref{eq:dhmeasure/abelian jump formula minimal}$}
      \ElsIf{$r_T = 1$} \Comment{Zero-dimensional wall}
        \State{$f_{T,+} \gets f_{T,-} + \eqref{eq:dhmeasure/abelian jump formula zero-dimensional}$}
      \Else
        \State{$f_{T,0} \gets \textsc{abelian}(\calH_0, \Omega_0)$} \Comment{Recursion}
        \State{$F_{T,0} \gets t^{D_0-r_T} f_{T,0}(\lambda/t)$}
        \State{$f_{T,+} \gets f_{T,-} + \eqref{eq:dhmeasure/abelian jump formula}$} \Comment{General wall}
      \EndIf
      \State \textbf{return} $f_T$
    \EndWhile
    \EndFunction
  \end{algorithmic}
\end{alg}

\bigskip

We remark that exist other algorithms that can be used to compute the volume of parametrized polytopes (see, e.g., \cite{Verdoolaege14, ClaussLoechner98, VerdoolaegeSeghirBeylsEtAl07} and references therein).
In this chapter we will not pursue this route any further.
However, in \autoref{ch:multiplicities} we will use Barvinok's algorithm to solve the corresponding discrete problem of counting the number of integral points to give an efficient algorithm for computing multiplicities in Lie group representations.

\section{The Derivative Principle}
\label{sec:dhmeasure/derivative principle}

In this section we describe a way to obtain the non-Abelian Duistermaat--Heckman measure $\Prob_K$ from the Abelian measure $\Prob_T$ that we have studied in the preceding section.

We start by considering the following model problem:
Let $\varphi$ be a random point in $\calO_{K,\lambda}$, chosen according to the unique $K$-invariant probability measure on the coadjoint orbit.
Then its restriction $\varphi|_{i \mathfrak t}$ is a random variable that takes values in $i \mathfrak t^*$, and we denote its distribution by $\Prob_{T,\calO_{K,\lambda}}$.%
\nomenclature[TProb_T,O_K,lambda]{$\Prob_{T,\calO_{K,\lambda}}$}{Abelian Duistermaat--Heckman measure for coadjoint orbit}
We remark that $\Prob_{T,\calO_{K,\lambda}}$ is a Duistermaat--Heckman measure in the more general sense sketched in \autoref{sec:dhmeasure/dhmeasure}.
For $K = \U(d)$, it can be identified with the distribution of diagonal entries of a random Hermitian matrix with spectrum $\lambda$.

In the case where $\lambda \in i \mathfrak t^*_{>0}$, Harish-Chandra has proved the following fundamental formula for the Fourier transform \cite[Theorem 2]{Harish-Chandr57}:
For all $X \in i \mathfrak t$ not orthogonal to a root,
\begin{equation}
\label{eq:dhmeasure/harish-chandra}
  \int d\Prob_{T,\calO_{K,\lambda}}(\mu) e^{i(\mu,X)}
  = \frac 1 {p_K(\lambda)} \sum_{w \in W_K} (-1)^{l(w)} \frac {e^{i(w \lambda, X)}} {\prod_{\alpha \in R_{G,+}} i(\alpha, X)}
\end{equation}
where the factor $p_K(\lambda)$ is defined in \eqref{eq:dhmeasure/kks volume}.
Since partial derivatives in real space correspond to multiplication in Fourier space, \eqref{eq:dhmeasure/harish-chandra} implies that \cite{Heckman82}
\begin{equation}
\label{eq:dhmeasure/heckman}
  p_K(\lambda) \left( \prod_{\alpha \in R_{G,+}} \partial_{-\alpha} \right) \Prob_{T,\calO_{K,\lambda}}
  =
  \sum_{w \in W_G} (-1)^{l(w)} \, \delta_{w \lambda}
\end{equation}
where $\delta_{w\lambda}$ is the \emphindex{Dirac measure} at a point $w\lambda$ and where $\partial_{-\alpha}$ denotes the \emph{partial derivative}\nomenclature[Rdelta_alpha]{$\partial_{-\alpha}$}{partial derivative in direction $-\alpha$} in direction $-\alpha$ in the sense of distributions.
In other words, for any smooth and compactly supported test function $g$ on $i \mathfrak t^*$ we have that
\begin{equation}
\label{eq:dhmeasure/heckman for test functions}
  p_K(\lambda) \int d\Prob_{T,\calO_{K,\lambda}} \left( \prod_{\alpha \in R_{G,+}} \partial_\alpha \right) g
  =
  \sum_{w \in W_G} (-1)^{l(w)} \, g(w \lambda).
\end{equation}
In fact, $\Prob_{T,\calO_{K,\lambda}}$ is the unique compactly supported solution to the differential equation \eqref{eq:dhmeasure/heckman} \cite{Heckman82}, and it can be explicitly written as an alternating sum of convolutions of Heaviside distributions in the directions of the positive roots \cite{GuilleminLermanSternberg96}.
To lift Harish-Chandra's result to general Duistermaat--Heckman measures, we need the following observation:

\begin{lem}
  \label{lem:dhmeasure/abelian via coadjoint}
  For all smooth, compactly supported test functions $f$ on $i \mathfrak t^*$,
  \begin{equation*}
    \int d\Prob_T \, f = \int d\Prob_K(\lambda) \int d\Prob_{T,\calO_{K,\lambda}} \, f.
  \end{equation*}
\end{lem}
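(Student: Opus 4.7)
The plan is to identify both sides of the identity as different ways of computing the push-forward of the $\U(\calH)$-invariant probability measure on $\PP(\calH)$ along the restriction-of-functionals map $r\colon i\mathfrak k^* \to i\mathfrak t^*,\ \varphi \mapsto \varphi|_{i\mathfrak t}$. The essential point is that the Abelian moment map factors through the non-Abelian one via this restriction: a direct comparison of the defining formulas \eqref{eq:onebody/moment map} and \eqref{eq:kirwan/abelian moment map} shows $\mu_T(\rho) = r(\mu_K(\rho))$, so $\Prob_T = r_*(\mu_K)_* d\rho$, where $d\rho$ denotes the invariant probability measure on $\PP(\calH)$.

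The key step is to disintegrate the measure $(\mu_K)_* d\rho$ on $i\mathfrak k^*$ into $K$-invariant measures on coadjoint orbits. Since $d\rho$ is $\U(\calH)$-invariant, it is in particular $K$-invariant, and consequently $(\mu_K)_* d\rho$ is invariant under the coadjoint action of $K$ on $i\mathfrak k^*$. Parametrizing coadjoint orbits by the positive Weyl chamber via $\lambda \mapsto \calO_{K,\lambda}$ and applying the disintegration theorem, I obtain
\[
(\mu_K)_* d\rho = \int d\Prob_K(\lambda)\ \nu_\lambda,
\]
where each $\nu_\lambda$ is a probability measure on $\calO_{K,\lambda}$. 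By $K$-invariance of the left-hand side and the uniqueness of the disintegration (up to a $\Prob_K$-null set of $\lambda$'s), each $\nu_\lambda$ is necessarily the unique $K$-invariant probability measure on $\calO_{K,\lambda}$. This is precisely the measure used to define $\Prob_{T,\calO_{K,\lambda}} = r_* \nu_\lambda$ at the beginning of \autoref{sec:dhmeasure/derivative principle}.

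Putting the two observations together, for any smooth, compactly supported test function $f$ on $i\mathfrak t^*$ I compute
\[
\int d\Prob_T\, f = \int d((\mu_K)_* d\rho)(\varphi)\, f(r(\varphi)) = \int d\Prob_K(\lambda) \int d\nu_\lambda(\varphi)\, f(r(\varphi)) = \int d\Prob_K(\lambda) \int d\Prob_{T,\calO_{K,\lambda}}\, f,
\]
which is the claim.

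The only delicate point is the rigorous invocation of the disintegration theorem together with the identification of the fiber measures as the $K$-invariant orbit measures. This is standard once one notes that the map $\varphi \mapsto \lambda(\varphi)$ sending $\varphi \in i\mathfrak k^*$ to the unique Weyl chamber representative of its coadjoint orbit is continuous (in fact, the composition of a continuous projection to $i\mathfrak k^*/K$ with a homeomorphism onto $i\mathfrak t^*_+$), so the relevant measurability hypotheses hold automatically; $K$-invariance of the fiber measures then fixes them uniquely. No other step requires serious analysis.
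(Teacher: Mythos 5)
Your proof is correct and takes essentially the same route as the paper: factor $\mu_T$ through $\mu_K$ via the restriction $i\mathfrak k^* \to i\mathfrak t^*$, decompose the pushed-forward invariant measure $\Qrob = (\mu_K)_* d\rho$ over coadjoint orbits carrying their unique $K$-invariant probability measures, and then push forward. The only difference is in how that orbit decomposition is justified: the paper gets it in one line by averaging over the Haar measure on $K$, writing $\int d\Qrob\, h = \int d\Prob_K(\lambda)\int_K dg\, h(\Ad^*(g)\lambda)$ directly from the $K$-invariance of $\Qrob$, whereas you invoke the abstract disintegration theorem together with a uniqueness argument to identify the fiber measures, which is heavier machinery and requires the measure-theoretic care (null sets in $\lambda$ and $g$) that you rightly flag but the paper's averaging argument avoids altogether.
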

\begin{proof}
  Let $\rho$ be a random pure state in $\PP(\calH)$ and denote by $\Qrob$ the probability distribution of its image under the non-Abelian moment map $\mu_K$.
  Since $\Qrob$ is a $K$-invariant measure, we have that
  \begin{equation*}
    \int d\Qrob \, h = \int d\Prob_K(\lambda) \int_K dg \, h(\Ad^*(g) \lambda)
  \end{equation*}
  for all test functions $h$ on $i \mathfrak k^*$, where $dg$ denotes the Haar probability measure on the compact group $K$.
  Now recall that $\Prob_T$ is the push-forward of $\Qrob$ along the restriction $i \mathfrak k^* \rightarrow i \mathfrak t^*$.
  It follows that %
  \begin{align*}
      \int d\Prob_T \, f
    = &\int d\Qrob \, f(-\big|_{i \mathfrak t})
    = \int d\Prob_K(\lambda) \int_K dg \, f(\Ad^*(g) \lambda \big|_{i \mathfrak t}) \\
    = &\int d\Prob_K(\lambda) \int d\Prob_{T,\calO_{K,\lambda}} \, f.
    \qedhere
  \end{align*}
\end{proof}

As similarly observed in \cite{Heckman82, GuilleminPrato90}, Harish-Chandra's formula implies the following fundamental \emphindex{derivative principle}:

\begin{lem}
\label{lem:dhmeasure/derivative principle}
  \begin{equation}
  \label{eq:dhmeasure/derivative principle}
    \Prob_K \big|_{i \mathfrak t^*_{>0}} = p_K \left( \prod_{\alpha \in R_{G,+}} \partial_{-\alpha} \right) \Prob_T \big|_{i \mathfrak t^*_{>0}},
  \end{equation}
  where the partial derivatives $\partial_{-\alpha}$, the multiplication by $p_K$ and the restrictions to $i \mathfrak t^*_{>0}$ are all in the sense of distributions.
  In other words, we have that
  \begin{equation}
  \label{eq:dhmeasure/derivative principle expanded}
    \int d\Prob_K \, f = \int d\Prob_T \left( \prod_{\alpha \in R_{G,+}} \partial_\alpha \right) \left( p_K f \right)
  \end{equation}
  for any smooth test function $f$ on $i \mathfrak t^*$ that is compactly supported in the interior of the positive Weyl chamber.
\end{lem}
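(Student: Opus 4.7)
The plan is to combine \autoref{lem:dhmeasure/abelian via coadjoint} with Harish-Chandra's formula \eqref{eq:dhmeasure/heckman for test functions}, which we recognize as precisely the statement of the derivative principle at the level of a single coadjoint orbit. Throughout, let $h := p_K f$, which is smooth and compactly supported in $i\mathfrak t^*_{>0}$ (since $p_K$ is polynomial and $f$ has compact support inside the open Weyl chamber).

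First, I would apply \autoref{lem:dhmeasure/abelian via coadjoint} with $g := \bigl(\prod_{\alpha \in R_{G,+}} \partial_\alpha\bigr) h$, which is again smooth and compactly supported in $i\mathfrak t^*_{>0}$, to obtain
\[
  \int d\Prob_T \left( \prod_{\alpha \in R_{G,+}} \partial_\alpha \right)(p_K f)
  = \int d\Prob_K(\lambda) \int d\Prob_{T,\calO_{K,\lambda}} \left( \prod_{\alpha \in R_{G,+}} \partial_\alpha \right) h.
\]
For the inner integral, restrict attention to $\lambda \in i \mathfrak t^*_{>0}$ (I come back to the boundary below) and apply Harish-Chandra's formula \eqref{eq:dhmeasure/heckman for test functions} to the test function $h$:
\[
  p_K(\lambda) \int d\Prob_{T,\calO_{K,\lambda}} \left( \prod_{\alpha \in R_{G,+}} \partial_\alpha \right) h
  = \sum_{w \in W_K} (-1)^{l(w)} h(w\lambda).
\]
Since the Weyl group acts simply transitively on the set of Weyl chambers and $\lambda$ lies in the open chamber $i\mathfrak t^*_{>0}$, the points $w\lambda$ for $w \neq e$ lie outside $i\mathfrak t^*_{>0}$ and hence outside the support of $f$; consequently $h(w\lambda) = p_K(w\lambda) f(w\lambda) = 0$ for $w \neq e$. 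The sum therefore collapses to $h(\lambda) = p_K(\lambda) f(\lambda)$, and dividing by $p_K(\lambda) > 0$ (since $\lambda \in i\mathfrak t^*_{>0}$) yields
\[
  \int d\Prob_{T,\calO_{K,\lambda}} \left( \prod_{\alpha \in R_{G,+}} \partial_\alpha \right) h = f(\lambda) \qquad (\forall \lambda \in i\mathfrak t^*_{>0}).
\]

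The main technical obstacle is the boundary $\partial i\mathfrak t^*_+ = i\mathfrak t^*_+ \setminus i\mathfrak t^*_{>0}$: there $p_K$ vanishes, $f$ vanishes, and Harish-Chandra's formula is not available in the form used above. To close the argument I would invoke the fact that under our standing assumption $\Delta_K \cap i \mathfrak t^*_{>0} \neq \emptyset$, the non-Abelian measure $\Prob_K$ is absolutely continuous with respect to Lebesgue measure on the affine hull of the moment polytope (this is the standard regularity statement for Duistermaat--Heckman measures on maximal-dimensional moment polytopes, and in the concrete set-up it can also be read off from \autoref{lem:dhmeasure/abelian} together with the derivative principle applied on each generic chamber), so that the Weyl walls have $\Prob_K$-measure zero. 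Integrating the pointwise identity against $d\Prob_K(\lambda)$ over $i\mathfrak t^*_{>0}$ then gives
\[
  \int d\Prob_T \left( \prod_{\alpha \in R_{G,+}} \partial_\alpha \right)(p_K f)
  = \int_{i\mathfrak t^*_{>0}} f(\lambda)\, d\Prob_K(\lambda)
  = \int f\, d\Prob_K,
\]
which is \eqref{eq:dhmeasure/derivative principle expanded}. The distributional form \eqref{eq:dhmeasure/derivative principle} follows immediately by integration by parts, accounting for the even power $(-1)^{2|R_{G,+}|} = 1$ that arises from converting the $|R_{G,+}|$ derivatives from one side to the other.
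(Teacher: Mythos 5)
Your proof is correct and is essentially the paper's own argument: apply Harish-Chandra's formula \eqref{eq:dhmeasure/heckman for test functions} to the test function $p_K f$, note that the Weyl-reflected terms vanish because $f$ is supported in the open chamber, divide by $p_K(\lambda)>0$, and combine with \autoref{lem:dhmeasure/abelian via coadjoint}. The only divergence is the boundary treatment: the paper simply extends the pointwise identity $\int d\Prob_{T,\calO_{K,\lambda}} \bigl(\prod_{\alpha}\partial_\alpha\bigr)(p_K f)=f(\lambda)$ from $i\mathfrak t^*_{>0}$ to all of $i\mathfrak t^*_+$ by continuity in $\lambda$ (so no regularity of $\Prob_K$ is needed), whereas you invoke that the Weyl walls are $\Prob_K$-null — a true fact under the standing assumption (cited from Guillemin--Sternberg right after the lemma), but your parenthetical justification via ``the derivative principle applied on each generic chamber'' would be circular, so the continuity route (or the folklore regularity statement alone) is the cleaner way to close that step.
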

\begin{proof}
  Let $f$ be a smooth test function that is compactly supported in the interior of the positive Weyl chamber.
  Then the same is true for $p_K f$, and we find that
  \begin{equation}
  \label{eq:dhmeasure/heckman for test function times kk volume}
    \int d\Prob_{T,\calO_{K,\lambda}} \left( \prod_{\alpha \in R_{G,+}} \partial_\alpha \right) (p_K f) = f(\lambda).
  \end{equation}
  for all $\lambda \in i \mathfrak t^*_{>0}$ by applying \eqref{eq:dhmeasure/heckman for test functions} to $p_K f$ and dividing by $p_K(\lambda) \neq 0$.
  In fact, \eqref{eq:dhmeasure/heckman for test function times kk volume} can be extended to all $\lambda \in i \mathfrak t_+$, since both the left-hand side and the right-hand side are continuous in $\lambda$.
  Thus,
  \begin{align*}
    &\int d\Prob_K(\lambda) f(\lambda)
    = \int d\Prob_K(\lambda) \int d\Prob_{T,\calO_{K,\lambda}} \left( \prod_{\alpha \in R_{G,+}} \partial_\alpha \right) (p_K f) \\
    =\;&\int d\Prob_T \left( \prod_{\alpha \in R_{G,+}} \partial_\alpha \right) (p_K f),
  \end{align*}
  where the second equality is \autoref{lem:dhmeasure/abelian via coadjoint}.
\end{proof}

In mathematics,
Heckman has used a variant of \autoref{lem:dhmeasure/derivative principle} together with the Harish-Chandra formula to study the asymptotics of multiplicities in the subgroup restriction problem \cite{Heckman82} (cf.\ \autoref{sec:mul/asymptotics} in the next chapter).
Guillemin and Prato have used the same idea to derive an alternating-sum formula for non-Abelian Duistermaat--Heckman measures in symplectic geometry, which however is not directly applicable to the pure-state problem \cite{GuilleminPrato90} (cf.\ the discussion in \cite{ChristandlDoranKousidisEtAl12}).
Woodward also mentions Paradan as a source \cite{Woodward05}.
There is also a version of Harish-Chandra's formula \eqref{eq:dhmeasure/harish-chandra} for lower-dimensional coadjoint orbits \cite[Theorem 7.24]{BerlineGetzlerVergne03}.

\subsection*{Consequences}

Our basic assumption that the moment polytope $\Delta_K$ intersects the interior of the positive Weyl chamber implies that a random pure state is mapped into the interior of the positive Weyl chamber with probability one \cite[p.~504]{GuilleminSternberg82}.
\emph{It follows that the non-Abelian Duistermaat--Heckman measure $\Prob_K$ can be fully recovered from the Abelian measure $\Prob_T$ by taking partial derivatives in the directions of the negative roots, multiplying by the polynomial $p_K(\lambda)$, and restricting to the positive Weyl chamber.}
In the case of the one-body quantum marginal problem, this is just \eqref{eq:dhmeasure/derivative principle qmp} in the introduction to this chapter, with $p(\vec\lambda_1, \dots, \vec\lambda_n) = p_{d_1}(\vec\lambda_1) \dotsm p_{d_n}(\vec\lambda_n)$ the product of the Vandermonde determinants \eqref{eq:dhmeasure/kks volume unitary}.
We note that for the computation of averages the non-Abelian measure does not necessarily have to be computed explicitly.
Instead, we may use \eqref{eq:dhmeasure/derivative principle expanded} to reduce to the calculation of an expectation value for the Abelian measure (see \autoref{lem:dhmeasure/average linear entropy of entanglement} for an example).

The derivative principle allows us to lift structural properties to the non-Abelian measure.
For example, recall from \autoref{sec:dhmeasure/abelian} that the density of the Abelian measure is on each regular chamber given by a polynomial. \autoref{lem:dhmeasure/derivative principle} implies that, on the interior of each chamber, the same is true for the non-Abelian measure---indeed, the polynomials for $\Prob_K$ can be obtained from the ones of $\Prob_T$ by taking partial derivatives and multiplying with $p_K(\lambda)$ according to \eqref{eq:dhmeasure/derivative principle}.
If the Abelian measure is $\#R_{G,+}$ times continuously differentiable then there cannot be any measure on the walls, so that $\Prob_K$ is absolutely continuous with piecewise polynomial density.
In this case it also follows that the support of the measure is equal to the union of those regular chambers where the non-Abelian polynomial is non-zero, i.e., that it is a finite union of convex polytopes.
It is instructive to compare this observation with the main result of \cite{GuilleminSternberg82} and with \autoref{lem:kirwan/admissible}, where we had already seen that the facets of $\Delta_K$ are always contained in critical walls.
In fact, the support of $\Prob_K$ is always equal to the non-Abelian moment polytope $\Delta_K$, and hence a \emph{single} convex polytope. Moreover, $\Prob_K$ is always absolutely continuous with respect to Lebesgue measure on $\Delta_K$.
This is folklore and can be established, e.g., by using the symplectic cross section\index{symplectic cross section} and the local submersion theorem.
It also follows from \cite{Okounkov96}, where Okounkov shows how to construct an abstract convex body from which the non-Abelian measure can be obtained by pushing forward in a similar way as in \autoref{lem:dhmeasure/abelian}.

\subsection*{Computation}

By combining \autoref{alg:dhmeasure/abelian} with the derivative principle, we obtain a general algorithm for computing the non-Abelian Duistermaat--Heckman measure $\Prob_K$ under our basic assumption that $\Delta_K$ intersects the interior of the positive Weyl chamber.
We note that this solves the problem of exactly computing the local eigenvalue distribution of random quantum states in complete generality, since we have shown in \autoref{sec:dhmeasure/dhmeasure} that the assumption can always be satisfied by considering the purified scenario.

We conclude this section by explicitly stating the non-Abelian jump formula for critical walls that contain a minimal number of weights ($D_0 = r_T$), which will be useful for the computation of examples:
\begin{align}
  &f_{K,+}(\lambda) - f_{K,-}(\lambda)
  = \frac {p_K(\lambda)} {d\lambda(\tilde\omega_2-\tilde\omega_1, \dots, \tilde\omega_{D_0}-\tilde\omega_1, \xi)} \frac {(D-1)!} {(D-D_0-\#R_{G,+}-1)!} \nonumber \\
  &\times \left( \prod_{\alpha \in R_{G,+}} - (\alpha, H) \right)
  \frac
    { ((\lambda,H) - c)^{D-D_0-\#R_{G,+}-1}}
    {\prod_{k : \omega_k \not\in \Omega_0} ((\omega_k, H) - c)}
  \label{eq:dhmeasure/non-abelian jump formula minimal}
\end{align}
Equation~\eqref{eq:dhmeasure/non-abelian jump formula minimal} can be immediately obtained from its Abelian counterpart \eqref{eq:dhmeasure/abelian jump formula minimal} and the derivative principle \eqref{eq:dhmeasure/derivative principle}.
It is applicable as long as $D - D_0 - 1 \geq \#R_{G,+}$, so that the non-Abelian Duistermaat--Heckman measure is absolutely continuous across the wall.

\section{Examples}
\label{sec:dhmeasure/examples}

We will now illustrate the general method in some low-dimensional examples, where the polytopes and measures can be easily visualized.

\subsection*{Qubits}

Let $K = \SU(2)^n$, $G = \SL(2)^n$ its complexification, and consider the representation of $G$ on $\calH = (\CC^2)^{\otimes n}$, the Hilbert space of $n$ qubits.
The Lie algebra $i \mathfrak t$ is spanned by the generators $Z_1, \dots, Z_n$, where $Z_k$ acts as the Pauli matrix $\matrix{1 & \\ & -1}$ on the $k$-th factor of the Hilbert space.
There are $n$ positive roots $\alpha_1, \dots, \alpha_n$ and they satisfy $(\alpha_k, Z_l) = 2 \delta_{kl}$ (cf.\ \autoref{sec:kirwan/examples}).
In the following, it will be useful to identify $i \mathfrak t^* \cong \RR^n$ by assigning to each $\lambda \in i \mathfrak t^*$ the vector $\vec\lambda = (\lambda_1, \dots, \lambda_n)$ with components $\lambda_k = (\lambda, Z_k)$.
Thus the positive roots correspond to the vectors $\vec\alpha_1 = (2,0,\dots,0)$, \dots, $\vec\alpha_n = (0,\dots,0,2)$ and the product basis vectors $\ket{\vec x} \in \calH$ for $\vec x \in \{0,1\}^{\times n}$ are weight vectors with weight $\vec\omega = (1-2 x_1, \dots, 1-2 x_n)$.
The Abelian Duistermaat--Heckman measure $\Prob_T$ is the joint distribution of the $\sigma_z$-expectation values of the one-body reduced density matrices a random pure state of $n$ qubits, while the non-Abelian measure $\Prob_K$ is the joint distribution of the differences of local eigenvalues $\lambda_k = \lambda_{k,1} - \lambda_{k,2}$.
We choose $d\vec\lambda$ to be the usual Lebesgue measure on $\RR^n$.
Finally, we identify $i \mathfrak t \cong \RR^n$ by using the generators $Z_k$. Then $\RR^n_{\geq 0}$ is the positive Weyl chamber and the pairing between $i \mathfrak t$ and $i \mathfrak t^*$ amounts to the usual inner product.

\bigskip

We start with the trivial example of two qubits ($n=2$).
It will be illustrative to see that the derivative principle still works even if the non-Abelian moment polytope is of lower-dimension than the Abelian one.
Using the conventions fixed at the beginning of this section, the four weights of $\CC^2 \otimes \CC^2$ are $(\pm 1, \pm 1)$, the vertices of a square.
In \autoref{fig:dhmeasure/two qubits},~(a) we show the moment polytope and its decomposition into regular chambers as cut out by the critical walls (which are spanned by any pair of weights).
We now compute the Abelian Duistermaat--Heckman measure $\Prob_T$ by following \autoref{alg:dhmeasure/abelian}.
We start in the unbounded chamber (\circled{0} in the figure), where the density is equal to zero, and cross the horizontal critical wall at the top, which is given by the equation $\vec\lambda \cdot (0, -1) = -1$.
Since only a minimal number of weights lie on this wall ($D_0 = r_T = 2$), we may use the jump formula \eqref{eq:dhmeasure/abelian jump formula minimal} with $\xi = (0,-1)$ to see that the density on the upper regular chamber (\circled{1} in the figure) is given by the polynomial
\begin{equation*}
    \frac 1 2 \frac {(4-1)!} {(4-2-1)!} \frac {(-\lambda_2 + 1)^{4-2-1}} {(1+1) (1+1)}
  = \frac 3 4 (1 - \lambda_2),
\end{equation*}
where we have ordered the terms on the left-hand side in the same way as in the jump formula.
Next, we cross the wall with equation $\vec\lambda \cdot (1, -1) = 0$ that separates the upper and the right-hand side regular chamber (\circled{1} and \circled{2} in the figure).
Using \eqref{eq:dhmeasure/abelian jump formula minimal} with $\xi = (1, -1)/2$ we find that the density changes by
\begin{equation*}
    \frac 1 2 \frac {(4-1)!} {(4-2-1)!} \frac {(\lambda_1 - \lambda_2)^{4-2-1}} {(-2) 2}
  = - \frac 3 4 (\lambda_1 - \lambda_2)
\end{equation*}
Therefore, $\Prob_T$ has the following piecewise polynomial density function on the positive Weyl chamber:
\begin{equation*}
  f_T(\lambda_1, \lambda_2) = \frac 3 4 \max \{ 1 - \max \{ \lambda_1, \lambda_2 \}, 0 \},
\end{equation*}
which can be extended to all of $i \mathfrak t^* \cong \RR^2$ by symmetry (see \autoref{fig:dhmeasure/two qubits},~(b)).

\begin{figure}
  \centering
  \includegraphics[width=0.9\linewidth]{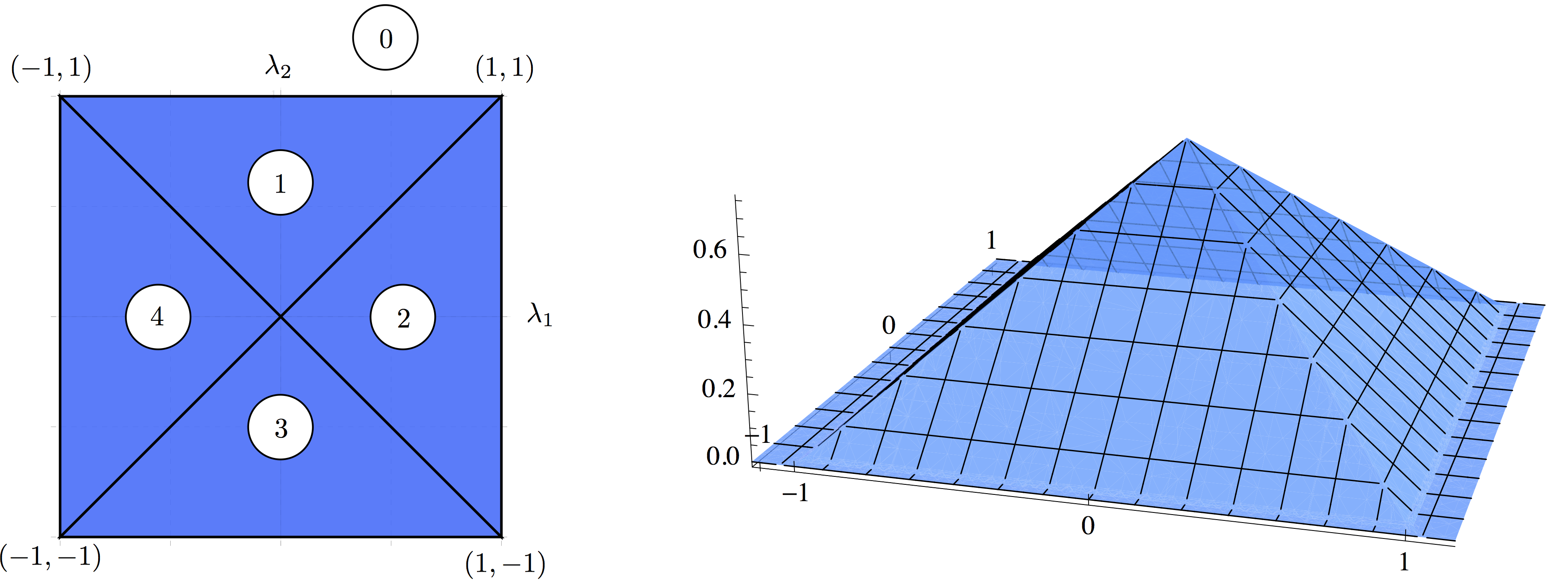}
  \caption[Abelian Duistermaat--Heckman measure for two qubits]{\emph{Two Qubits.}
    (a) Abelian moment polytope for two qubits and its decomposition into four bounded regular chambers separated by the critical walls (thick lines).
    (b) Density function of the Abelian Duistermaat--Heckman measure.}
  \label{fig:dhmeasure/two qubits}
\end{figure}

We now compute the non-Abelian Duistermaat--Heckman measure by using the derivative principle.
Observe that $\partial_{\alpha_1} \partial_{\alpha_2} f_T \equiv 0$ away from the critical walls.
On the walls we have to be more careful, since the density function is not twice differentiable there.
The horizontal and vertical walls do not carry any measure, since they are constant in one of the two coordinate directions.
However, this is not so on the diagonal; and we readily find by integrating against a test function that $\partial_{\alpha_1} \partial_{\alpha_2} \Prob_T |_{\RR^2_{>0}}$ is equal to Lebesgue measure on the non-Abelian moment polytope $\Delta_K = \{ \vec\lambda \in [0,1]^2 : \lambda_1 = \lambda_2 \}$.
According to the derivative principle \eqref{eq:dhmeasure/derivative principle} we obtain the non-Abelian Duistermaat--Heckman measure $\Prob_K$ by multiplying this result with $p_K(\vec\lambda) = \lambda_1 \lambda_2$. Thus,
\begin{equation*}
    \int d\Prob_K \, f
  = 3 \int_0^1 d\lambda_1 \, \lambda_1^2 \, f(\lambda_1, \lambda_1)
\end{equation*}
for all test functions $f$ on $\RR^2_{\geq 0}$.
By a simple transformation of variables, we conclude that the joint distribution of the maximal local eigenvalues of a random pure state of two qubits is given by
\begin{equation*}
  \int d\Pspec \, g
  = 24 \int_{0.5}^1 d\lambda_{1,1} \, (\lambda_{1,1} - 0.5)^2 \, g(\lambda_{1,1}, \lambda_{1,1})
\end{equation*}
for all test functions $g = g(\lambda_{1,1}, \lambda_{2,1})$.
This eigenvalue distribution has been computed more generally for bipartite pure states chosen at random from $\CC^a \otimes \CC^b$ \cite{LloydPagels88, ZyczkowskiSommers01}.
We will further below show how it can be obtained as a direct consequence of the derivative principle.

\bigskip

We now consider the case of three qubits ($n=3$).
The eight weights of $\CC^2 \otimes \CC^2 \otimes \CC^2$ are $(\pm 1, \pm 1, \pm 1)$, the vertices of a cube.
We will use some shortcuts to reduce the amount of computation required.
In \autoref{fig:dhmeasure/three qubits} we have visualized the \emph{non-Abelian} moment polytope $\Delta_K$, which itself is a union of regular chambers.
It is of maximal dimension and therefore the non-Abelian Duistermaat--Heckman measure $\Prob_K$ will be absolutely continuous with a piecewise polynomial density function.
Since this density necessarily vanishes outside of $\Delta_K$, we may start our computation right away by considering the jump over a facet of the non-Abelian moment polytope.
Let us thus consider the critical wall spanned by the weights $(1,1,1)$, $(-1,-1,1)$ and $(1,-1,-1)$ (the boundary of the blue chambers in \autoref{fig:dhmeasure/three qubits}).
It given by the equation $\vec\lambda \cdot (-1,1,-1) = -1$ and contains only a minimal number of weights.
Thus we may directly apply the non-Abelian jump formula \eqref{eq:dhmeasure/non-abelian jump formula minimal} with $\xi = (-1,1,-1)/3$.
We obtain that the polynomial density function of $\Prob_K$ on the blue chambers in \autoref{fig:dhmeasure/three qubits} is given by
\begin{align*}
    &\frac {\lambda_1 \lambda_2 \lambda_3} 4
    \frac {(8-1)!} {(8-3-3-1)!}
    \left( 2^2 (-2) \right)
    \frac {(-\lambda_1+\lambda_2-\lambda_3 + 1)^{8-3-3-1}} {(-2) 2^3 4} \\
  = &\frac {7!} {32} \lambda_1 \lambda_2 \lambda_3 \left(1 - \lambda_1 - \lambda_2 - \lambda_3 + 2 \lambda_2 \right)
\end{align*}
We now cross the critical wall $\vec\lambda \cdot (-1,-1,-1) = -1$ that separates the blue and the green chambers in \autoref{fig:dhmeasure/three qubits}.
It is spanned by the weights $(1,1,-1)$, $(1,-1,1)$ and $(-1,1,1)$ and is therefore again minimal.
By \eqref{eq:dhmeasure/non-abelian jump formula minimal} with $\xi = (-1,-1,-1)/3$, the density function of $\Prob_K$ changes by the polynomial
\begin{align*}
    &\frac {\lambda_1 \lambda_2 \lambda_3} 4
    \frac {(8-1)!} {(8-3-3-1)!}
    2^3
    \frac {(-\lambda_1-\lambda_2-\lambda_3+1)^{8-3-3-1}} {(-2) 2^3 4} \\
  = & -\frac{7!}{32} \lambda_1 \lambda_2 \lambda_3 (1-\lambda_1-\lambda_2-\lambda_3)
\end{align*}
as we cross the wall. It follows that the density function of $\Prob_K$ is on the two green chambers that face the viewer in \autoref{fig:dhmeasure/three qubits} equal to $\frac {7!} {16} \lambda_1 \lambda_2 \lambda_3 \lambda_2$.
Note that the blue and green chambers together form the part of the three-qubit polytope where $\lambda_2 = \min_k \lambda_k$.
By using permutation-symmetry to extend the formulas to all of $\Delta_K$, we conclude that the non-Abelian Duistermaat--Heckman measure is given by the following piecewise polynomial density function:
\begin{equation*}
  f_K(\lambda_1, \lambda_2, \lambda_3) = \frac {7!} {32} \lambda_1 \lambda_2 \lambda_3 \begin{cases}
    2 \min_k \lambda_k & \text{in the lower pyramid} \\
    1 - \sum_{k=1}^3 \lambda_k + 2 \min_k \lambda_k & \text{in the upper pyramid} \\
    0 & \text{otherwise}
  \end{cases}
\end{equation*}
As in the case of two qubits, it is straightforward to deduce from this the marginal eigenvalue distribution of a random pure state of three qubits.

\begin{figure}
  \centering
  \includegraphics[width=0.9\linewidth]{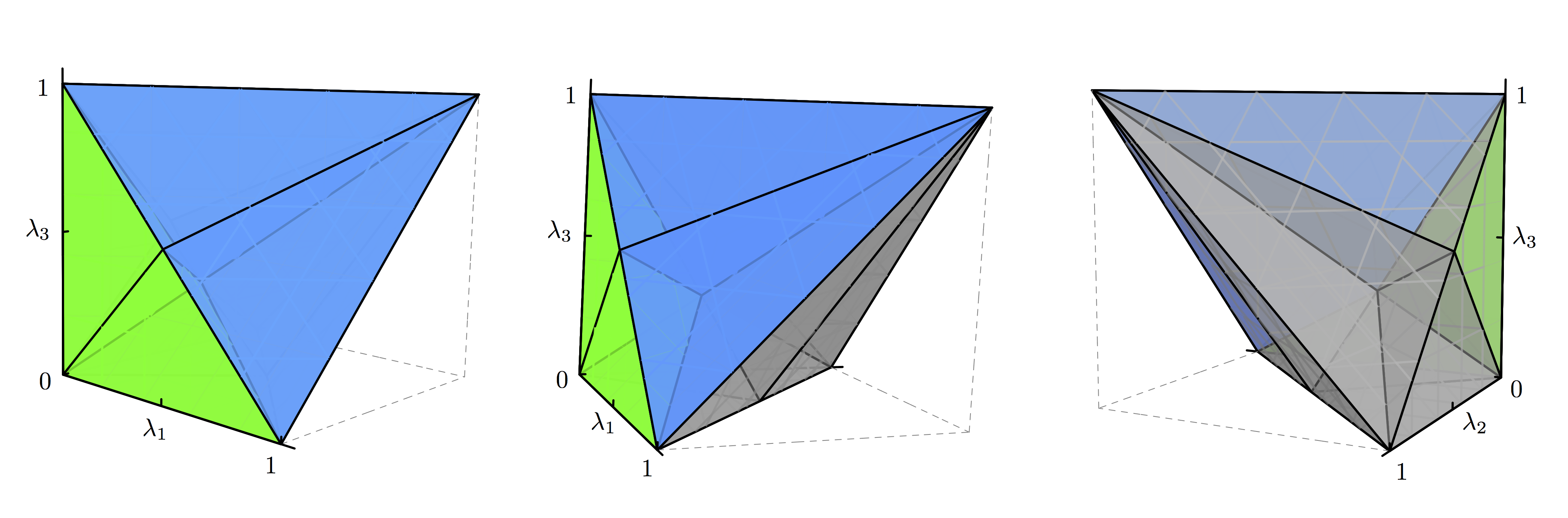}
  \caption[Decomposition of three-qubit moment polytope into regular chambers]{\emph{Three Qubits.} Non-Abelian moment polytope and its decomposition into twelve regular chambers (from three different perspectives).}
  \label{fig:dhmeasure/three qubits}
\end{figure}

\subsection*{Bosons}

We now turn to random pure states of $n$ bosonic qubits, with Hilbert space $\calH = \Sym^n(\CC^2)$.
Let $K = \SU(2)$. We identify $i \mathfrak t^* \cong \RR$ in the same way as above.
Thus the positive root is $2$, the occupation number basis vectors in $\Sym^n(\CC^2)$ are weight vectors of weight $-n, -n+2, \dots, n\in \ZZ$, and the Abelian moment polytope is $\Delta_T = [-n, n]$.

We apply \autoref{alg:dhmeasure/abelian} starting from the left-hand side unbounded chamber $\{ \lambda_1 \leq -n \}$ and successively cross the critical walls, which are points and correspond to a single weight each.
At each point $m = -n, -n+2, \dots, n$, the jump formula \eqref{eq:dhmeasure/abelian jump formula minimal} shows that the Abelian measure changes by
\begin{align*}
    &\frac 1 1 \frac {((n+1)-1)!} {((n+1)-1-1)!} \frac {(\lambda_1 - m)^{(n+1)-1-1}} {\prod_{m \neq m'=-n,-n+2,\dots,n} (m'-m)} \\
  =\; &n \frac {(\lambda_1 - m)^{n-1}} {\prod_{m \neq m'=-n,-n+2,\dots,n} (m'-m)} \\
  =\; &\frac 1 {2^n (n-1)!} (-1)^{(n+m)/2} \binom{n}{\frac {n+m} 2} (\lambda_1 - m)^{n-1}.
\end{align*}
It follows that the Abelian Duistermaat--Heckman density has Lebesgue density
\begin{equation}
\label{eq:dhmeasure/bosons abelian}
  f_T(\lambda_1) = \frac 1 {2^n (n-1)!} \sum_{m=-n,-n+2,\dots,n} (-1)^{(n+m)/2} \binom{n}{\frac {n+m} 2} (\lambda_1 - m)^{n-1}_+,
\end{equation}
where we set $(\lambda_1 - m)^{n-1}_+ := (\lambda_1 - m)^{n-1}$ for $\lambda_1 \geq m$ and $0$ otherwise.
Amusingly, \eqref{eq:dhmeasure/bosons abelian} is precisely the probability density of a sum of $n$ independent random variables that are each distributed uniformly in the interval $[-1,1]$ \cite[\S{}I.9, Theorem 1a]{Feller71}.
It follows from the derivative principle that the non-Abelian measure has Lebesgue density
\begin{equation*}
  f_K(\lambda_1) = \frac {-\lambda_1} {2^{n-1} (n-2)!} \sum_{m=-n,-n+2,\dots,n} (-1)^{(n+m)/2} \binom{n}{\frac {n+m} 2} (\lambda_1 - m)^{n-2}_+,
\end{equation*}
on $[0,\infty)$, which can be readily translated into, e.g., the distribution of the maximal local eigenvalue.
See \autoref{fig:dhmeasure/bosons} for an illustration.

\begin{figure}
\centering
\includegraphics[width=0.9\linewidth]{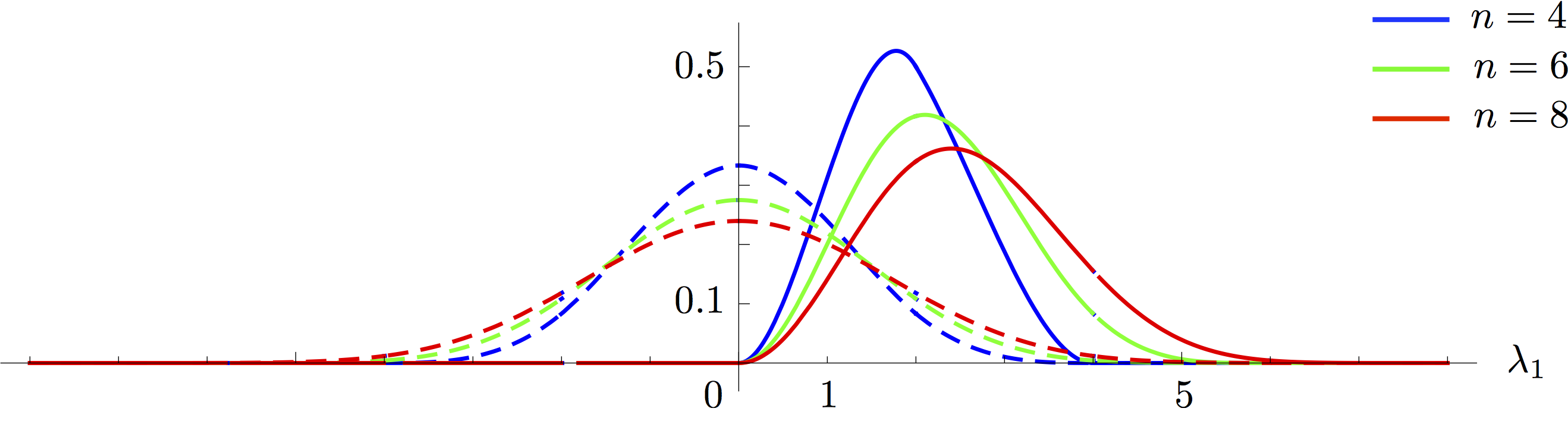}
\caption[Duistermaat--Heckman measures for bosonic qubits]{\emph{Bosonic qubits.} Densities of the Abelian (dashed) and of the non-Abelian (solid) Duistermaat--Heckman measure of $\Sym^n(\CC^2)$ for various particle numbers $n$.}
\label{fig:dhmeasure/bosons}
\end{figure}

As an application, we compute the average value of the linear entropy of entanglement \eqref{eq:slocc/linear entropy of entanglement} for a random pure state of $n$ bosonic qubits. For a bosonic state $\rho$ on $\calH$, it is given by
\begin{equation*}
  E(\rho) = 1 - \tr \rho_1^2,
\end{equation*}
where $\rho_1$ denotes the one-body reduced density matrix.

\begin{lem}
\label{lem:dhmeasure/average linear entropy of entanglement}
  The average linear entropy of entanglement of a random pure state of $n$ bosonic qubits is equal to $1/2 - 1/(2n)$.
\end{lem}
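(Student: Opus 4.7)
The plan is to first express $E(\rho)$ as a simple function of the non-Abelian moment map coordinate, then reduce the computation of its expectation under $\Prob_K$ to a moment computation under the Abelian measure $\Prob_T$, which can be done without explicit integration of the piecewise polynomial density $f_K$.

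First I would write $E(\rho)$ in terms of the non-Abelian coordinate $\lambda_1 \in [0, n]$. Writing the eigenvalues of $\rho_1$ as $\lambda_{1,1} = (n+\lambda_1)/(2n)$ and $\lambda_{1,2} = (n-\lambda_1)/(2n)$ (so that their sum is $1$ and $\lambda_1 = n(\lambda_{1,1} - \lambda_{1,2})$ is the non-Abelian image of $\rho$ under the convention fixed at the start of the bosonic example), a one-line polarization gives
\[
\tr \rho_1^2 = \lambda_{1,1}^2 + \lambda_{1,2}^2 = \tfrac{1}{2}\bigl(1 + \lambda_1^2/n^2\bigr),
\qquad\text{so}\qquad
E(\rho) = \tfrac{1}{2} - \tfrac{\lambda_1^2}{2n^2}.
\]
Hence the claim reduces to showing $\EE_K[\lambda_1^2] = n$, where the expectation is over $\Prob_K$.

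Second, I would relate this non-Abelian moment to the Abelian moment by invoking \autoref{lem:dhmeasure/abelian via coadjoint}, which gives $\int d\Prob_T\, f = \int d\Prob_K(\lambda_1)\int d\Prob_{T,\calO_{K,\lambda_1}}\, f$ for any test function $f$. For $K = \SU(2)$ the coadjoint orbit $\calO_{K,\lambda_1}$ is a $2$-sphere, and Archimedes' hat-box theorem says that the $K$-invariant probability measure on this sphere projects to the uniform measure on $[-\lambda_1, \lambda_1]$ under restriction to $i\mathfrak{t}^* \cong \RR$. Applied to $f(t) = t^2$ this yields $\int d\Prob_{T,\calO_{K,\lambda_1}} t^2 = \lambda_1^2/3$, so
\[
\EE_T[\lambda_1^2] = \tfrac{1}{3}\EE_K[\lambda_1^2].
\]
This is the one step requiring some care: one has to verify the normalizations so that the sphere radius in our chosen coordinates is precisely $\lambda_1$, but this is routine given the conventions fixed earlier in the chapter.

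Third, I would compute $\EE_T[\lambda_1^2]$ using the remark immediately following \eqref{eq:dhmeasure/bosons abelian}, which identifies $f_T$ as the probability density of $X_1 + \cdots + X_n$ for i.i.d.\ variables $X_i$ each uniform on $[-1,1]$. Each $X_i$ has mean zero and variance $1/3$, so by independence $\EE_T[\lambda_1^2] = n/3$. Combining with the previous step, $\EE_K[\lambda_1^2] = n$, and therefore
\[
\EE[E(\rho)] = \tfrac{1}{2} - \tfrac{1}{2n^2}\cdot n = \tfrac{1}{2} - \tfrac{1}{2n}.
\]
The main conceptual point is that although the non-Abelian density $f_K$ was obtained as an alternating sum of truncated powers (which would be painful to integrate directly), the quadratic moment is fixed entirely by the $\SU(2)$ hat-box theorem together with the elementary variance of a sum of i.i.d.\ uniforms.
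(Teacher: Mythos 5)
Your proof is correct and reaches the paper's answer, but the middle step follows a somewhat different route than the paper's. After the common reduction $E(\rho) = \tfrac12 - \lambda_1^2/(2n^2)$, the paper invokes the derivative principle in the form \eqref{eq:dhmeasure/derivative principle expanded}: since $\lambda_1^2$ vanishes at the Weyl-chamber boundary $\lambda_1=0$, one may write $\int d\Prob_K(\lambda_1)\,\lambda_1^2 = \int_0^\infty d\Prob_T(\lambda_1)\,\partial_{2\lambda_1}(\lambda_1\cdot\lambda_1^2) = 3\int_{-\infty}^{\infty} d\Prob_T(\lambda_1)\,\lambda_1^2$, using the symmetry of $\Prob_T$, and then finishes exactly as you do with the variance of a sum of $n$ i.i.d.\ uniforms on $[-1,1]$. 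You instead bypass the derivative principle and use the disintegration of \autoref{lem:dhmeasure/abelian via coadjoint} together with the explicit $\SU(2)$ orbit measure: Archimedes' hat-box theorem (equivalently, the $\SU(2)$ case of Harish-Chandra's formula, whose Fourier transform $\sin(\lambda_1 x)/(\lambda_1 x)$ confirms that $\Prob_{T,\calO_{K,\lambda_1}}$ is uniform on $[-\lambda_1,\lambda_1]$ in the chapter's coordinates) gives $\EE_T[\lambda_1^2] = \tfrac13\EE_K[\lambda_1^2]$. In effect you have inlined, in the rank-one case, the very ingredients from which the derivative principle is derived, so the factor $3$ appears as the second moment of a uniform interval rather than from differentiating $\lambda_1^3$; your route requires the normalization check on the orbit radius (which you flag and which is correct), while the paper's requires the boundary-vanishing observation and the symmetry of $\Prob_T$. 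Both are equally rigorous given the chapter's lemmas; the paper's version generalizes more directly to higher-rank groups where the orbit projections are no longer uniform, whereas yours is arguably more geometric and self-contained for $\SU(2)$.
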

\begin{proof}
  The average linear entropy of entanglement is given by
  \begin{equation*}
      \int d\rho \, E(\rho)
    = \int d\Pspec(\vec\lambda_1) \left( 1 - (\lambda_{1,1}^2 + \lambda_{1,2}^2 ) \right)
    = \frac 1 2 - \frac 1 {2n^2} \int d\Prob_K(\lambda_1) \lambda_1^2.
  \end{equation*}
  To evaluate the right-hand side, we observe that $\lambda_1^2$ vanishes at $\lambda_1=0$, the boundary of the Weyl chamber;
  thus we may use \eqref{eq:dhmeasure/derivative principle expanded} and take limits to reduce to an integral over the Abelian measure. We obtain
  \begin{equation}
  \label{eq:dhmeasure/average lee abelian}
    \int d\Prob_K(\lambda_1) \lambda_1^2
    = \int_0^\infty d\Prob_T(\lambda_1) \partial_{2\lambda_1} (\lambda_1 \lambda_1^2)
    = 3 \int_{-\infty}^\infty d\Prob_T(\lambda_1) \lambda_1^2,
  \end{equation}
  where we have used that $\Prob_T$ is symmetric about the origin.
  The right-hand side integral is the variance of $\Prob_T$.
  But recall that $\Prob_T$ is the distribution of a sum of independent random variables that are uniformly distributed on $[-1,1]$.
  Since the variance of each summand is $1/3$ and it is additive for independent sums, we find that \eqref{eq:dhmeasure/average lee abelian} is precisely equal to $n$. By plugging this into the first formula, we conclude that, indeed,
  $\int d\rho \, E(\rho) = 1/2 - 1/(2n)$.
\end{proof}

In accordance with the concentration of measure phenomenon, $\rho_1 \rightarrow \Id/2$ in distribution as $n \rightarrow \infty$.
We remark that our result agrees with \cite[Theorem 34]{MuellerDahlstenVedral12} if one works out the quantities left uncalculated therein.
Note that the proof illustrates the power of the derivative principle:
Instead of explicitly computing the average over the eigenvalue distribution, we reduced to an average over the Abelian measure, whose structure we could then exploit.

\subsection*{Bipartite Systems}

We conclude this series of examples by giving a concise derivation of the marginal eigenvalue distribution of a random bipartite pure state $\rho$ on $\calH = \CC^a \otimes \CC^b$.
Without loss of generality, assume that $a \leq b$.
By \autoref{lem:onebody/two}, it suffices to derive the distribution of eigenvalues of $\rho_A$.
It will be convenient to work with $K = \U(a)$ rather than the special unitary group, so that we may identify $i \mathfrak t^* \cong \RR^a$.

We first compute the Abelian Duistermaat--Heckman measure $\Prob_T$.
For this, observe that the product basis vectors $\ket{ij} \in \CC^a \otimes \CC^b$ are weight vectors with weight the $i$-th standard basis vector of $\RR^a$.
Thus $\Delta_T$ is equal to the convex hull of these unit vectors, i.e., to the standard simplex $\Delta_a$.
By \eqref{eq:dhmeasure/abelian as volume}, the density function of $\Prob_T$ with respect to Lebesgue measure on $\Delta_T$ can thus be computed in the following way:
\begin{align*}
  f_T(\vec\lambda_A)
  =\;&\vol~\{ (p_{i,j}) \in \RR^{ab}_{\geq 0} : \sum_{j=1}^b p_{i,j} = \lambda_{A,i} \, (\forall i=1,\dots,a) \} \\
  =\;&\prod_{i=1}^a \vol~\{ (p_j) \in \RR^b_{\geq 0} : \sum_{j=1}^b p_j = \lambda_{A,i} \} \\
  \propto\;&\prod_{i=1}^a \lambda_{A,i}^{b-1},
\end{align*}
since each factor is the volume of a $(b-1)$-dimensional simplex.
We now consider the negative partial derivative in the direction in the positive roots: Since there are $\binom a 2$ positive roots,
\begin{equation*}
  \left( \prod_{i < j} \partial_{\lambda_{A,j} - \lambda_{A,i}} \right) \left( \prod_{i=1}^a \lambda_{A,i}^{b-1} \right)
\end{equation*}
is a polynomial of degree no more than $a(b-1) - \binom a 2$.
Since we differentiate each variable at most $a-1$ times, the result will be a multiple of the symmetric polynomial $\prod_{i=1}^a \lambda_{A,i}^{b-a}$.
On the other hand, the result is evidently antisymmetric in the variables and so will be a multiple of the Vandermonde determinant $p_A(\vec\lambda_A)$.
Since the degrees add up, $a(b-a) + \binom a 2 = a(b-1) - \binom a 2$ it follows at once that
\begin{equation*}
  \left( \prod_{i < j} \partial_{\lambda_{A,j} - \lambda_{A,i}} \right) \left( \prod_{i=1}^a \lambda_{A,i}^{b-1} \right)
  \propto
  \left( \prod_{i=1}^a \lambda_{A,i}^{b-a} \right) p_a(\vec\lambda_A)
\end{equation*}
Thus we conclude from the derivative principle \eqref{eq:dhmeasure/derivative principle} that the eigenvalue distribution of $\rho_A$ is proportional to
\begin{equation}
\label{eq:dhmeasure/lloyd-pagels single marginal}
  f_K(\vec\lambda_A) \propto \left( \prod_{i=1}^a \lambda_{A,i}^{b-a} \right) p^2_a(\vec\lambda_A)
\end{equation}
on $\Delta_{a,+} = \{ \vec\lambda_A \in \Delta_a : \lambda_{A,1} \geq \dots \geq \lambda_{A,a} \}$, the intersection of $\Delta_a$ with the positive Weyl chamber, which is the non-Abelian moment polytope for the action of $K = \U(a)$.
This is the well-known formula from \cite{LloydPagels88, ZyczkowskiSommers01}.
In mathematics, the distribution of $\rho_A$ is also known as a \emph{normalized Wishart ensemble}\index{Wishart ensemble!normalized} (see, e.g., \cite{AubrunSzarekYe14}).
Since $\rho_A$ and $\rho_B$ have the same non-zero eigenvalues (\autoref{lem:onebody/two}), it follows immediately from \eqref{eq:dhmeasure/lloyd-pagels single marginal} that the joint distribution of the marginal eigenvalues of both $\rho_A$ and $\rho_B$ is given by
\begin{equation}
\label{eq:dhmeasure/lloyd-pagels both marginals}
  \int d\Pspec \, f \propto \int_{\Delta_{a,+}} d\vec\lambda_A \left( \prod_{i=1}^a \lambda_{A,i}^{b-a} \right) p_a^2(\vec\lambda_A) \, f(\vec\lambda_A, \vec\lambda_A)
\end{equation}
for all test functions $f = f(\vec\lambda_A, \vec\lambda_B)$.
For $a = b = d$, this is precisely \eqref{eq:dhmeasure/two} in \autoref{sec:dhmeasure/dhmeasure}.

\section{Discussion}
\label{sec:dhmeasure/discussion}

Random ensembles of states have long been studied in quantum statistical physics.
In fact, Lloyd and Pagels had derived \eqref{eq:dhmeasure/lloyd-pagels single marginal} out of thermodynamic considerations \cite{LloydPagels88}.
More recently, the typical behavior of canonical states, i.e., states that are obtained by computing the reduced density matrix of the uniform state on a subspace (encoding the energy constraint) of the system--bath Hilbert space, have been considered \cite{PopescuShortWinter06, Lloyd06, GoldsteinLebowitzTumulkaEtAl06}, and the average von Neumann entropy of a subsystem \cite{Lubkin78, Page93} has featured in the analysis of the black hole entropy paradox \cite{HaydenPreskill07}.
Apart from their import from the perspective of the quantum marginal problem, random states of multipartite systems are also relevant in quantum information theory.
For example, conditional entropies and mutual informations are central quantities in entanglement theory.
Since they are functions of the eigenvalues only they can be studied in the framework of this chapter (e.g., in the case of tripartite pure states).
Remarkable recent progress has been made by analyzing the entanglement properties of the two-body reduced density matrix $\rho_{AB}$ of a randomly-chosen tripartite pure state \cite{HaydenLeungShorEtAl04, HaydenLeungWinter06, AubrunSzarekYe14, AubrunSzarekYe12, CollinsNechitaYe12}.
In all these applications, most known results are for high-dimensional Hilbert spaces, where the powerful concentration of measure phenomenon occurs.
In contrast, our exact algorithms require no such assumption and are instead well-suited for low-dimensional systems, which previously remained inaccessible.
It would be highly desirable to find a common meeting ground for both techniques that would allow for an interpolation between the combinatorial and the analytical regime (cf.\ \cite{BorodinGorin12, BorodinPetrov14}).
\chapter[Interlude: Multiplicities of Representations]{Interlude: Computing Multiplicities of Lie Group Representations}
\label{ch:multiplicities}

In this chapter we consider the classical branching or subgroup restriction problem in representation theory, which asks for the multiplicity of an irreducible representation of a subgroup $K \subseteq K'$ in the restriction of an irreducible representation of $K'$.
In the case of compact, connected Lie groups, we provide a polynomial-time algorithm for this problem---based on a finite-difference formula for the multiplicities and Barvinok's algorithm for counting integral points in polytopes.
Our algorithm is also applicable to the Kronecker coefficients of the symmetric group, which play an important role in the geometric complexity theory approach to the $\Pclass$ vs.\ $\NP$ problem. Whereas the computation of Kronecker coefficients is known to be $\SharpP$-hard for Young diagrams with an arbitrary number of rows, our algorithm computes them in polynomial time if the number of rows is bounded.
The finite-difference formula that we use is a discrete analogue of the derivative principle that was used in the preceding chapter.
This connection can be made precise, as the asymptotic growth rates of multiplicities are directly related to the measures that we had computed in the preceding chapter.
We complement our results by showing that in geometric complexity theory, such asymptotic information might not directly lead to complexity-theoretic obstructions beyond what can be obtained from moment polytopes.
Non-asymptotic information on the multiplicities, such as provided by our algorithm, may therefore be essential in order to find new obstructions in geometric complexity theory.

The results in this chapter have been obtained in collaborations with Matthias Christandl and Brent Doran, and they have appeared in \cite{ChristandlDoranWalter12}; the first part of \autoref{sec:mul/asymptotics} is adapted from the collaboration \cite{ChristandlDoranKousidisEtAl12}.

\section{Summary of Results}
\label{sec:mul/summary}

The decomposition of Lie group representations into irreducible sub-representations is a fundamental problem in mathematics with a variety of applications to the sciences.
In atomic and molecular physics as well as in high-energy physics, this problem has been studied extensively in the context of symmetries \cite{Weyl50, WignerGriffin59, Wigner73}, perhaps most famously in Ne'eman and Gell-Mann's ``eight-fold way'' of elementary particles \cite{Neeman61, Gell-Mann61, Gell-Mann62}.
In pure mathematics, the combinatorial resolution by Knutson and Tao of the problem of decomposing tensor products of irreducible representations of the unitary group has been a recent highlight with a long history of research \cite{Fulton00, KnutsonTao99}.
More recently, the representation theory of Lie groups has found novel applications in quantum information \cite{KeylWerner01, ChristandlMitchison06, Klyachko06, ChristandlSchuchWinter10} and quantum computation \cite{BaconChuangHarrow07, BaconChuangHarrow06, Jordan08}, as well as in the geometric complexity theory approach to the $\Pclass$ vs.\ $\NP$ problem in computer science \cite{MulmuleySohoni01, MulmuleySohoni08, BuergisserLandsbergManivelEtAl11}.
In this chapter, we study the problem of computing multiplicities of Lie group representations:

\begin{pro}[Subgroup Restriction Problem, Branching Problem]\index{subgroup restriction problem}\index{branching problem}
  \label{pro:mul/subgroup restriction}
  Let $\Phi \colon K \rightarrow K'$ be a fixed homomorphism of compact, connected Lie groups.
  What is the multiplicity $m^\lambda_\mu$ of an irreducible $K$-representation $V_{K,\mu}$ in an irreducible $K$-representation $V_{K',\lambda}$, when given as input the highest weights $\mu$ and $\lambda$?%
  \nomenclature[Rm^lambda_mu]{$m^\lambda_\mu$}{multiplicity of $V_{K,\mu}$ in $V_{K,\lambda}$}
\end{pro}

The term \emph{subgroup restriction problem} comes from the archetypical case where the map $\Phi$ is the inclusion of a subgroup $K \subseteq K'$.

The main result of this chapter is a polynomial-time algorithm for the subgroup restriction problem (\autoref{alg:mul/main}).
It takes as input bitstrings containing the coordinates of the highest weights with respect to fixed bases of fundamental weights.
As a direct consequence, for any fixed $\lambda$ and $\mu$ the stretching function $k \mapsto m^{k \lambda}_{k \mu}$ can be evaluated in polynomial time.
Another immediate corollary is that the positivity of the coefficients $m^\lambda_\mu$ can be decided in polynomial time for any fixed homomorphism $\Phi \colon K \rightarrow K'$.
Mulmuley has conjectured that deciding positivity of the multiplicities $m^\lambda_\mu$ should be possible in polynomial time even if the group homomorphism $\Phi$ is also part of the input \cite{Mulmuley07}.
This is known only for specific families of homomorphisms, such as those corresponding to the Littlewood--Richardson coefficients \cite{KnutsonTao99, MulmuleyNarayananSohoni12}, and our result can be regarded as supporting evidence for the conjecture.
However, any approach to deciding positivity that proceeds by computing the actual multiplicities is of course expected to fail, since the latter problem is well-known to be $\SharpP$-hard \cite{Narayanan06, BuergisserIkenmeyer08}.
Since representations of compact, connected Lie groups are in one-to-one correspondence with the rational representations of complex, reductive, connected algebraic groups (\autoref{sec:onebody/lie}), our results can also be interpreted in the latter context.

\bigskip

Our polynomial-time algorithm is based on a formula for the multiplicities $m^\lambda_\mu$ (\autoref{prp:mul/main}), which is obtained in three steps:
First, we restrict from the group $K'$ to its maximal torus $T'$.
The corresponding weight multiplicities can be computed efficiently by using the classical multiplicity formula of Kostant \cite{Kostant59, Cochet05} or, in fact, by evaluating a single vector partition function \cite{BilleyGuilleminRassart04, Bliem08, Bliem10} (\autoref{sec:mul/weights}).
Second, we restrict all weights to a maximal torus $T$ of $K$.
Third, we recover the multiplicity of an irreducible $K$-representation by using a finite-difference formula based on Weyl's character formula:
For any $K$-representation $V$, the highest weight multiplicity function $m_{K,V}$ of irreducible $K$-representations and the weight multiplicity function $m_{T,V}$ are related by
\begin{equation}
\label{eq:mul/finite difference formula intro}
  m_{K,V} = \left. \left[ \left(\prod_{\alpha \in R_{K,+}} - D_\alpha \right) m_{T,V} \right] \right|_{\Lambda^*_{K,+}},
\end{equation}
where $(D_\alpha m)(\lambda) = m(\lambda + \alpha) - m(\lambda)$ denotes the finite-difference operator in direction $\alpha$
(\autoref{sec:mul/finite difference formula}).
By carefully combining the first two steps, we obtain a formula for the multiplicities that reduces \autoref{pro:mul/subgroup restriction} to the problem of counting integral points in rational convex polytopes of bounded dimension (\autoref{sec:mul/subgroup restriction problem}).
The latter can be done efficiently by using Barvinok's algorithm \cite{Barvinok94} (see also \cite{Dyer91, CookHartmannKannanEtAl92, DyerKannan97, BarvinokPommersheim99, BaldoniBeckCochetEtAl06}) and we thus obtain \autoref{alg:mul/main}.
The multiplicity formula itself has intrinsic interest beyond its application to algorithmics.
One immediate insight is a piecewise quasi-polynomiality of the multiplicities and of the stretching function \cite{Mulmuley07}.

\bigskip

In \autoref{sec:mul/kronecker}, we turn to the computation of the \emph{Kronecker coefficients} $g_{\alpha,\beta,\gamma}$, which are commonly defined as the multiplicities in the decomposition of tensor products of irreducible representations of the symmetric group $S_k$.
Kronecker coefficients are notoriously difficult to study, and finding an appropriate combinatorial interpretation is one of the outstanding problems of classical representation theory.
Apart from the role for the quantum marginal problem, they appear naturally in geometric complexity theory, where their computation serves as a model problem which has been subject to a number of conjectures \cite{Mulmuley07}. %
In \autoref{sec:onebody/kronecker}, we have seen that the Kronecker coefficients can be equivalently defined in terms of the special linear or unitary groups.
For Young diagrams of bounded height, this amounts to an instance of \autoref{pro:mul/subgroup restriction} for a fixed group homomorphism $\Phi$.
Therefore the Kronecker coefficients can in this case be computed in polynomial time by using \autoref{alg:mul/main}.
We also get a clean closed-form expression for the Kronecker coefficients (\autoref{pro:mul/optimized kronecker}), which not only nicely illustrates the algorithm's effectiveness, but also implies directly that the problem of computing Kronecker coefficients with unbounded height is in $\GapP$, as first proved in \cite{BuergisserIkenmeyer08}.

In practice, our algorithm appears to work rather well as long as the rank of the Lie group $G$ is not too large.
In the case of Kronecker coefficients for Young diagrams with two rows, we can easily go up to $k=10^8$ boxes using commodity hardware.
In contrast, all other software packages known to the authors cannot go beyond only a moderate number of boxes ($k=10^2$ on the same hardware as used above).
Moreover, by distributing the computation of weight multiplicities onto several processors, we have been able to compute Kronecker coefficients for Young diagrams with three rows and $k=10^5$ boxes in a couple of minutes.
A preliminary implementation of the algorithm is available at \cite{Walter12a}.
We hope that our algorithm will provide a useful tool in experimental mathematics, theoretical physics, and geometric complexity theory.

\bigskip

We now turn towards studying the asymptotics of multiplicities.
Recall from the fundamental \autoref{thm:onebody/kirwan} that the moment polytope of a projective subvariety $\XX \subseteq \PP(\calH)$ can be described as
\begin{equation}
\label{eq:mul/moment polytope}
  \Delta_K(\XX) = \overline{ \{ \lambda/k : m_{K,R^*_k(\XX)}(\lambda) > 0 \} },
\end{equation}
where $m_{K,R^*_k(\XX)}(\lambda)$ denotes the multiplicity of the irreducible $K$-representation $V_{G,\lambda}$ in $R^*_k(\XX)$, the dual of the space of regular functions of degree $k$.
This suggests to consider the following measure defined as the weak limit of the atomic measures that encode the multiplicities in a fixed degree $k$,%
\nomenclature[Tnu_K,X]{$\nu_{K,\XX}$}{multiplicity measure for $K$-action on projective subvariety $\XX$}
\begin{equation}
\label{eq:mul/multiplicity measure intro}
  \nu_{K,\XX} := \wlim_{k \rightarrow \infty}  \frac 1 {k^{d_{K,\XX}}}  \sum_{\lambda \in \Lambda^*_{K,+}} m_{K,R^*_k(\XX)}(\lambda) \, \delta_{\lambda/k},
\end{equation}
where $d_{K,\XX} \in \ZZ_{\geq 0}$ is the appropriate exponent such that $\nu_{K,\XX}$ has finite, non-zero measure \cite{Okounkov96}.
It has a piecewise-polynomial density function $f_{K,\XX}$ with respect to Lebesgue measure on the moment polytope and its support is the entire moment polytope (both assertions follow from the main result of \cite{Okounkov96}, but probably go back to earlier work).
By definition, the measure $\nu_{K,\XX}$ encodes the \emph{asymptotic growth rate} of multiplicities in the ring of regular functions of the projective variety $\XX$.
If $\XX$ is smooth then $\nu_{K,\XX}$ can be given a remarkable geometric interpretation:
It is directly related to the Duistermaat--Heckman measure $\Prob_{K,\XX}$ that we had considered in \autoref{ch:dhmeasure}.
This fact holds more generally for smooth projective subvarieties and other symplectic manifolds that can be ``quantized'' in a certain technical sense \cite{Heckman82, GuilleminSternberg82a, Sjamaar95, Meinrenken96, Vergne98, MeinrenkenSjamaar99}.

In \autoref{sec:mul/asymptotics}, we first give a succinct proof of the relation between $\nu_{K,\XX}$ and $\Prob_{K,\XX}$ in the case where $\XX$ is the full projective space $\PP(\calH)$.
Our argument is based on the observation that the finite difference formula \eqref{eq:mul/finite difference formula intro} ``converges'' to the derivative principle \eqref{eq:dhmeasure/derivative principle} in the limit as $k \rightarrow \infty$, as was also pointed out to us by Allen Knutson.
Applied to the quantum marginal problem, we find that the distribution of the one-body reduced density matrices of a random quantum state, as computed in \autoref{ch:dhmeasure}, is directly related to the asymptotic growth of the Kronecker coefficients.

We then consider the applicability of the measures $\nu_{K,\XX}$ to geometric complexity theory.
In a nutshell, the main challenge in geometric complexity theory is to show that for certain pairs of projective subvarieties $\XX$ and $\YY$ one is not contained in the other; this would then imply complexity-theoretic lower bounds.
Both the permanent vs.\ determinant problem, which is equivalent to the $\VP$ vs.\ $\VNP$ problem \cite{Valiant79} (an algebraic version of the $\Pclass \neq \NP$ conjecture), as well as the complexity of matrix multiplication \cite{Strassen69} can be formulated in this framework \cite{MulmuleySohoni01, MulmuleySohoni08, BuergisserIkenmeyer11, BuergisserLandsbergManivelEtAl11}.
The starting point is the basic observation that
\begin{equation}
  \label{eq:mul/multiplicity criterion}
  \XX \subseteq \YY
  \;\Longrightarrow\;
  m_{K,R^*_k(\XX)}(\lambda) \leq m_{K,R^*_k(\YY)}(\lambda)
\end{equation}
for all $\lambda$ and $k \geq 0$.
Therefore, the existence of $\lambda$ and $k$ such that $m_{K,R^*_k(\XX)}(\lambda) > m_{K,R^*_k(\YY)}(\lambda)$ proves that $\XX \not\subseteq \YY$; such a pair $(\lambda,k)$ is called an \emphindex{obstruction} \cite{MulmuleySohoni08}.
Since computing multiplicities in general coordinate rings is a difficult problem, it is natural to instead study their asymptotic behavior.
Following an idea of Strassen \cite{Strassen06}, it has been proposed in \cite{BuergisserIkenmeyer11} to consider instead the moment polytopes \eqref{eq:mul/moment polytope}, whose geometric interpretation facilitates their computation, as we have seen in \autoref{ch:kirwan}. Clearly,
\begin{equation}
  \label{eq:mul/moment polytope criterion}
  \XX \subseteq \YY
  \;\Longrightarrow\;
  \Delta_K(\XX) \subseteq \Delta_K(\YY).
\end{equation}
However, preliminary results suggest that the right-hand side moment polytope $\Delta_K(\YY)$ might be trivially large for the varieties of interest in geometric complexity theory \cite{BuergisserChristandlIkenmeyer11a, BuergisserIkenmeyer11, Kumar11, BuergisserLandsbergManivelEtAl11}, which would imply that \eqref{eq:mul/moment polytope criterion} is insufficient for finding complexity-theoretic obstructions.
It has therefore been suggested recently to study the asymptotic growth of multiplicities as captured by the measures $\nu_{K,\XX}$ (e.g., \cite[\S 2.2]{GrochowRusek12}). In this context, our contribution is a ``no go'' result: We show that, in the situations of interest in geometric complexity theory, $\XX \subseteq \YY$ implies that $d_{K,\XX} < d_{K,\YY}$ for the exponents in the definition \eqref{eq:mul/multiplicity measure intro} of the measures (\autoref{cor:mul/dimension lemma}).
Therefore, we \emph{cannot} deduce from \eqref{eq:mul/multiplicity measure intro} and \eqref{eq:mul/multiplicity criterion} a criterion of the form
\begin{equation*}
  \text{``$\XX \subseteq \YY
  \;\Longrightarrow\;
  f_{K,\XX}(\lambda) \leq f_{K,\YY}(\lambda)$'',}
\end{equation*}
since we need to divide by different powers of $k$ in the definition of the weak limit.
In this sense, the measures $\nu_{K,\XX}$ do not directly give rise to new obstructions, indicating that a more refined understanding of the behavior of multiplicities in coordinate rings might be required.

\section{Weight Multiplicities}
\label{sec:mul/weights}

Throughout this chapter, we will use the notations and conventions of \autoref{sec:onebody/lie}; however, we will label all objects such as weight lattices, irreducible representations, etc.\ by the compact connected Lie group $K$, since we will mostly not need its complexification $G$ explicitly.
Recall that the irreducible representations of $K$ are labeled by their highest weight in $\Lambda^*_{K,+}$.
An arbitrary finite-dimensional representation $V$ can always be decomposed into irreducible sub-representations,
\begin{equation*}
  V \cong \bigoplus_{\lambda \in \Lambda^*_{K,+}} m_{K,V}(\lambda) \, V_{K,\lambda}.
\end{equation*}
We will call the function $m_{K,V}$ thus defined the \emph{highest weight multiplicity function}\index{multiplicity function!highest weight}\nomenclature[Rm_K,V]{$m_{K,V}$}{highest weight multiplicity function}.

We may similar decompose $V$ into irreducible representations of the maximal torus $T \subseteq K$, which we recall are one-dimensional and labeled by elements of the weight lattice $\Lambda^*_K$. We thus obtain the decomposition into weight spaces
\begin{equation*}
  V = \bigoplus_{\omega \in \Lambda^*_K} V_\omega.
\end{equation*}
We will call $m_{T,V}(\omega) = \dim V_\omega$ the \emph{weight multiplicity function}\index{multiplicity function!weight}\nomenclature[Rm_T,V]{$m_{T,V}$}{weight multiplicity function}.
An equivalent way of encoding the weight multiplicities is in terms of the \emph{(formal) character}\index{character!formal}\nomenclature[RchV]{$\ch V$}{formal character of representation $V$},
\begin{equation*}
  \ch V = \sum_\omega m_{T,V}(\omega) \, e^\omega,
\end{equation*}
Formally, $\ch V$ is an element of the group ring $\ZZ[\Lambda^*_K]$, which consists of (finite) linear combinations of basis elements $e^\omega$ for each weight $\omega \in \Lambda^*_K$; the multiplication is defined by $e^\omega e^{\omega'} = e^{\omega + \omega'}$ (see, e.g., \cite{Knapp02}).
We may identify elements of the group ring with integer-valued functions on the weight lattice $\Lambda^*_K$ that have finite support.
In this way, the character $\ch V$ and the weight multiplicity function $m_{T,V}$ become one and the same object.
We remark that $\ch V$ is directly related to the usual definition of a character as the trace of the representation:
we have that $\tr \Pi(\exp(X)) = \sum_\omega m_{T,V}(\omega) e^{(\omega,X)}$ for all $X \in \mathfrak t$.

\subsection*{Weyl and Kostant Formulas}

The character of an irreducible representation $V_{K,\lambda}$ is given by the famous \emphindex{Weyl character formula}
\begin{equation}
  \label{eq:mul/weyl character formula}
  \ch V_{K,\lambda} =
  \frac {\sum_{w \in W_K} (-1)^{l(w)} \, e^{w(\lambda + \rho_K) - \rho_K}} {\prod_{\alpha \in R_{K,+}} \left( 1 - e^{-\alpha} \right)}.
\end{equation}
where $\rho_K = \sum_{\alpha \in R_{K,+}} \alpha/2 \in \Lambda^*_K$ is known as the \emphindex{Weyl vector}.
To make sense of the right-hand side fraction, we need to work in the larger ring of functions that are supported in a finite number of ``cones'' of the form \[\{ \omega - \sum_\alpha x_\alpha \alpha : (x_\alpha) \in \ZZ^{R_{K,+}}_{\geq 0} \}.\]
This restriction ensures that multiplication is still well-defined \cite{Knapp02} (unlike for general functions!).
We find that the denominator in \eqref{eq:mul/weyl character formula} is indeed invertible, with inverse
\begin{equation*}
  \frac 1 {\prod_{\alpha \in R_{K,+}} \left( 1 - e^{-\alpha} \right)}
  = \prod_{\mathclap{\alpha \in R_{K,+}}} \left( 1 + e^{-\alpha} + e^{-2\alpha} + \dots \right)
  = \sum_{\mathclap{\omega \in \Lambda^*_K}} \phi_{R_{K,+}}(\omega) e^{-\omega}.
\end{equation*}
In the last equation we have introduced the \emph{Kostant partition function}
\begin{equation*}
  \phi_{R_{K,+}}(\omega) = \# \{ (x_\alpha) \in \ZZ^{R_{K,+}}_{\geq 0} : \sum_\alpha x_\alpha \alpha = \omega \},
\end{equation*}
which counts the number of ways that a weight can be written as a sum of positive roots (this number is always finite since the positive roots span a proper cone).
It follows directly from the above that
\begin{align*}
  \ch V_{K,\lambda}
  = &\sum_{w \in W_K} (-1)^{l(w)} \sum_{\omega \in \Lambda^*_K} \phi_{R_{K,+}}(\omega) e^{w(\lambda + \rho) - \rho - \omega} \\
  = &\sum_{\omega \in \Lambda^*_K} \sum_{w \in W_K} (-1)^{l(w)} \, \phi_{R_{K,+}}(w(\lambda + \rho) - \rho - \omega) e^\omega.
\end{align*}
Thus the multiplicity of a weight $\omega$ in an irreducible representation $V_{K,\lambda}$ is given by the well-known \emph{Kostant multiplicity formula} \cite{Kostant59},
\begin{equation}
  \label{eq:mul/kostant}
  m_{T,V_{K,\lambda}}(\omega) = \sum_{w \in W_K} (-1)^{l(w)} \, \phi_{R_{K,+}}(w(\lambda + \rho) - \rho - \omega).
\end{equation}

For any fixed group $K$, the Kostant partition function can be evaluated in polynomial time by using Barvinok's algorithm \cite{Barvinok94}, since it amounts to counting integral points in a convex polytope in an ambient space of fixed dimension.
Therefore, weight multiplicities for fixed groups $K$ can be computed efficiently.
This idea has been implemented by Cochet \cite{Cochet05} to compute weight multiplicities for the classical Lie algebras (but using the method of \cite{BaldoniBeckCochetEtAl06} instead of Barvinok's algorithm).
We note that the problem of computing weight multiplicities is of course a special case of \autoref{pro:mul/subgroup restriction} where $K$ is the maximal torus of $K'$.

\subsection*{Weight Multiplicities as a Single Partition Function}

If the group $K$ is semisimple, then it is known that we can find $s, t \in \ZZ_{\geq 0}$ and $\ZZ$-linear maps $A \colon \ZZ^s \rightarrow \ZZ^t$, $B \colon \Lambda^*_K \oplus \Lambda^*_K \rightarrow \ZZ^t$ such that
\begin{equation}
  \label{eq:mul/bliem}
  m_{T,V_{K,\lambda}}(\omega) = \phi_A \left( B \matrix{\lambda \\ \omega} \right)
\end{equation}
for all $\lambda \in \Lambda^*_{K,+}, \omega \in \Lambda^*_K$,
where $\phi_A$ is a \emphindex{vector partition function} defined by
\begin{equation}
  \phi_A(y) = \# \{ x \in \ZZ^s_{\geq 0} : A x = y \}.
\end{equation}
Note that this improves over the Kostant multiplicity formula \eqref{eq:mul/kostant}, where weight multiplicities are expressed as an alternating sum over several invocations of a vector partition function.
In particular, \eqref{eq:mul/bliem} is an evidently positive formula.
Billey, Guillemin, and Rassart have constructed \eqref{eq:mul/bliem} for the Lie group $K = \SU(d)$ \cite{BilleyGuilleminRassart04} by using Gelfand--Tsetlin patterns \cite{GelfandTsetlin88}; the general construction is due to Bliem \cite{Bliem08} based on Littelmann patterns \cite{Littelmann98}.

We note that the assumption of semisimplicity for \eqref{eq:mul/bliem} is not a restriction.
If $K$ is a general compact connected Lie group then its Lie algebra can always decomposed as
\begin{equation*}
  \mathfrak k = [\mathfrak k,\mathfrak k] \oplus \mathfrak z,
\end{equation*}
where $[\mathfrak k, \mathfrak k]$ is the Lie algebra of a compact connected semisimple Lie group $K_{\sesi}$ and $\mathfrak z$ the Lie algebra of the center $Z(K)$ of $K$.
By Schur's lemma \eqref{eq:onebody/schurs lemma}, each element of the center acts by a scalar on an irreducible representation $V_{K,\lambda}$.
Therefore, all weights that appear in $V_{K,\lambda}$ have the same restriction to $\mathfrak z$.
It follows that
\begin{equation}
  \label{eq:mul/semisimple}
    m_{T,V_{K,\lambda}}(\omega) =
    \begin{cases}
      m_{T_{\sesi},V_{K_{\sesi},\lambda_{\sesi}}}(\omega_{\sesi}) & \text{if $\omega_z = \lambda_z$},\\
      0 & \text{otherwise},
  \end{cases}
\end{equation}
where we write $\omega = \omega_{\sesi} \oplus \omega_{z}$ according to the corresponding decomposition of weight lattices $\Lambda^*_K = \Lambda^*_{K_{\sesi}} \oplus \Lambda^*_{Z(K)}$, and likewise for $\lambda$.
These multiplicities can thus be evaluated by using \eqref{eq:mul/bliem}.

\section{The Finite Difference Formula}
\label{sec:mul/finite difference formula}

Let $V$ be a finite-dimensional representation of the compact connected Lie group $K$.
In the preceding section we have seen that we can compute the weight multiplicity function $m_{T,V}$ from the highest weight multiplicity function $m_{K,V}$ by using one of the classical formulas \eqref{eq:mul/weyl character formula} and \eqref{eq:mul/kostant}, or by evaluating the vector partition function \eqref{eq:mul/bliem} of Bliem.
By ``inverting'' the Weyl character formula, the converse can also be achieved:

\begin{lem}
  \label{lem:mul/steinberg}
  \index{finite-difference formula}
  For any finite-dimensional $K$-representation $V$, we have that
  \begin{equation}
  \label{eq:mul/finite difference formula}
    m_{K,V} = \left. \left[ \left(\prod_{\alpha \in R_{K,+}} - D_\alpha \right) m_{T,V} \right] \right|_{\Lambda^*_{K,+}},
  \end{equation}
  where $(D_\alpha m)(\lambda) = m(\lambda + \alpha) - m(\lambda)$ is the \emphindex{finite-difference operator}\nomenclature[RD_alpha]{$D_\alpha$}{finite-difference operator in direction $\alpha$} in direction $\alpha$.
\end{lem}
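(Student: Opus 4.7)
The plan is to derive \eqref{eq:mul/finite difference formula} directly from Weyl's character formula by translating the finite-difference operators into the language of the group ring $\ZZ[\Lambda^*_K]$.

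First I would identify functions $m \colon \Lambda^*_K \to \ZZ$ of finite support with elements $\hat m = \sum_\omega m(\omega) e^\omega$ of $\ZZ[\Lambda^*_K]$, exactly as in the identification of $\ch V$ with $m_{T,V}$ above. The key computation is that for any such $m$,
\begin{equation*}
  \widehat{D_\alpha m} = \sum_\omega \bigl(m(\omega+\alpha) - m(\omega)\bigr) e^\omega = (e^{-\alpha} - 1)\,\hat m,
\end{equation*}
so applying the operator $-D_\alpha$ to $m$ corresponds to multiplying $\hat m$ by $(1 - e^{-\alpha})$. Applying this for each positive root, the operator $\prod_{\alpha \in R_{K,+}}(-D_\alpha)$ corresponds to multiplication by $\prod_{\alpha \in R_{K,+}}(1 - e^{-\alpha})$.

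Next I would handle the case of an irreducible representation $V = V_{K,\lambda}$. Multiplying Weyl's character formula \eqref{eq:mul/weyl character formula} by the Weyl denominator gives
\begin{equation*}
  \ch V_{K,\lambda} \cdot \prod_{\alpha \in R_{K,+}}(1 - e^{-\alpha}) = \sum_{w \in W_K} (-1)^{l(w)}\, e^{w(\lambda + \rho_K) - \rho_K}.
\end{equation*}
In view of the previous paragraph, the coefficient of $e^\mu$ on the left-hand side equals $\bigl(\prod_\alpha(-D_\alpha) m_{T,V_{K,\lambda}}\bigr)(\mu)$. I would then restrict to dominant $\mu \in \Lambda^*_{K,+}$: since $\rho_K$ is strictly dominant, both $\lambda + \rho_K$ and $\mu + \rho_K$ lie in the interior $i\mathfrak t^*_{>0}$, and the Weyl group acts simply transitively on the Weyl chambers, so the equation $w(\lambda + \rho_K) = \mu + \rho_K$ forces $w = 1$ and $\mu = \lambda$. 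Hence for $\mu \in \Lambda^*_{K,+}$ the right-hand sum collapses to $\delta_{\mu,\lambda}$, which agrees with $m_{K,V_{K,\lambda}}(\mu)$.

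Finally I would extend to arbitrary finite-dimensional $V$ by linearity: decomposing $V \cong \bigoplus_\lambda m_{K,V}(\lambda)\, V_{K,\lambda}$ gives $m_{T,V} = \sum_\lambda m_{K,V}(\lambda)\, m_{T,V_{K,\lambda}}$, and applying $\prod_\alpha(-D_\alpha)$ followed by restriction to $\Lambda^*_{K,+}$ produces $\sum_\lambda m_{K,V}(\lambda)\, \delta_{\mu,\lambda} = m_{K,V}(\mu)$, which is exactly \eqref{eq:mul/finite difference formula}. There is no real obstacle here: the only subtle point is verifying that strictly dominant weights are singletons in their Weyl orbits, which is standard and ensures the collapse of the alternating sum on $\Lambda^*_{K,+}$.
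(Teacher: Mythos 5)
Your proposal is correct and follows essentially the same route as the paper's own proof: identify multiplication by $(1-e^{-\alpha})$ in $\ZZ[\Lambda^*_K]$ with the operator $-D_\alpha$, multiply Weyl's character formula by the Weyl denominator, and observe that on dominant weights the alternating sum collapses to $\delta_\lambda$ because $\lambda+\rho_K$ is strictly dominant, then conclude for general $V$ by linearity. The only cosmetic difference is that the paper invokes linearity at the start rather than at the end, and phrases the collapse via the fact that $w(\lambda+\rho_K)-\rho_K$ fails to be dominant for $w\neq 1$, which is equivalent to your Weyl-chamber argument.
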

\begin{proof}
  By linearity, it suffices to establish the lemma for an irreducible representation $V = V_{K,\lambda}$.
  The Weyl character formula \eqref{eq:mul/weyl character formula} can be rewritten in the form
  \begin{equation}
    \label{eq:mul/weyl for steinberg}
    \prod_{\alpha > 0} \left( 1 - e^{-\alpha} \right) \ch V_{K,\lambda} =
    \sum_{w \in W_K} \det(w) \, e^{w(\lambda + \rho_K) - \rho_K}.
  \end{equation}
  Under our identification of elements in $\ZZ[\Lambda^*_K]$ with functions on the weight lattice, multiplication with $\left(e^{-\alpha} - 1 \right)$ amounts to applying the finite-difference operator $D_\alpha$.
  Thus the left-hand side of \eqref{eq:mul/weyl for steinberg} can be identified with $\left( \prod_{\alpha \in R_{K,+}} \!- D_\alpha \right) m_{T,V_{K,\lambda}}$.

  Now consider the right-hand side of \eqref{eq:mul/weyl for steinberg}.
  Since $\lambda + \rho_K$ is a strictly dominant weight in $\Lambda^*_{K,>0} = \Lambda^*_K \cap i \mathfrak t^*_{>0}$, it is sent by any Weyl group element $w \neq 1$ to the interior of a different Weyl chamber.
  That is, for any $w \neq 1$ there exists a positive root $\alpha \in R_{K,+}$ such that $(w(\lambda + \rho_K), H_\alpha) < 0$.
  Therefore $w(\lambda + \rho_K) - \rho_K$ is a dominant weight if only if $w = 1$, in which case it is equal to $\lambda$.
  It follows that the right-hand side of \eqref{eq:mul/weyl for steinberg} identifies with a function on the weight lattice whose restriction to $\Lambda^*_{K,+}$ is equal to the indicator function of $\{\lambda\}$, i.e., to the highest weight multiplicity function $m_{K,V_{K,\lambda}} = \delta_\lambda$ of $V_{K,\lambda}$.
\end{proof}

The idea of using \eqref{eq:mul/weyl character formula} for determining multiplicities of irreducible representations goes back at least to Steinberg \cite{Steinberg61}, who proved a formula for the multiplicity $c^{\alpha,\beta}_\lambda$ of an irreducible representation $V_{K,\lambda}$ in the tensor product $V_{K,\alpha} \otimes V_{K,\beta}$.
These multiplicities $c^{\alpha,\beta}_\lambda$ are called the \emphindex{Littlewood--Richardson coefficients} for $K$ (cf.\ \autoref{sec:onebody/kronecker}).
Steinberg's formula involves an alternating sum over the Kostant partition function \eqref{eq:mul/kostant} and can therefore be evaluated efficiently as described by Cochet \cite{Cochet05}.
De Loera and McAllister give another method for computing Littlewood--Richardson coefficients \cite{DeMcAllister06}, which applies Barvinok's algorithm to a result by Berenstein and Zelevinsky \cite{BerensteinZelevinsky01}.
Since the tensor products of irreducible $K$-representations are just the irreducible representations of $K \times K$, the problem of computing Littlewood--Richardson coefficients is again a special case of \autoref{pro:mul/subgroup restriction}.
The Clebsch--Gordan series in quantum mechanics is routinely derived in a similar fashion (it corresponds to the case $K = \SU(2)$).

An immediate consequence of \autoref{lem:mul/steinberg} is that there exists a finite set of weights $\Gamma_K \subseteq \Lambda^*_K$ and coefficients $c_\gamma \in \ZZ$ such that
\begin{equation}
\label{eq:mul/steinberg concrete}
    m_{K,V}(\lambda) = \sum_{\gamma \in \Gamma_K} c_\gamma \, m_{T,V}(\lambda + \gamma).
\end{equation}
In fact, it is not hard to see that the coefficients can be defined by the expansion $\prod_{\alpha \in R_{K,+}} \!\left( 1 - e^{-\alpha} \right) = \sum_{\gamma \in \Gamma_K} c_\gamma e^{-\gamma}$.

\section{Multiplicities for the Subgroup Restriction Problem}
\label{sec:mul/subgroup restriction problem}

Now let $\Phi \colon K \rightarrow K'$ be a homomorphism of compact, connected Lie groups and $\phi \colon \mathfrak k \rightarrow \mathfrak{k'}$ the induced homomorphism of Lie algebras.%
\nomenclature[RPhi]{$\Phi$, $\phi$, $\phi^*$}{homomorphism of Lie groups, corresponding homomorphism of Lie algebras, dual map between weight lattices}
We choose maximal tori $T \subseteq K$ and $T' \subseteq K'$ such that $\Phi(T) \subseteq T'$.
Then $\phi(\mathfrak t) \subseteq \mathfrak{t'}$,
and the (complexified) dual map restricts to a $\ZZ$-linear map between the weight lattices, which we shall denote by
$\phi^* \colon \Lambda^*_{K'} \rightarrow \Lambda^*_K$.
Indeed, the following lemma follows readily from the definitions:

\begin{lem}
  \label{lem:mul/weight restriction}
  Let $V$ be a representation of $K'$ and $\ket\psi \in V$ a weight vector of weight $\omega \in \Lambda^*_{K'}$.
  If we restrict the action to $K$ via $\Phi \colon K \rightarrow K'$ then $\ket\psi$ is a weight vector of weight $\phi^*\omega \in \Lambda^*_K$.
\end{lem}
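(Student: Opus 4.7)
The plan is to unravel the definitions and check that the weight condition transports straightforwardly along $\Phi$.

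First, I would recall that the dual map $\phi^* \colon \mathfrak{h}'^* \to \mathfrak{h}^*$ sends a linear functional $\omega$ to $\omega \circ \phi\big|_{\mathfrak h}$; what the lemma calls $\phi^*$ is the restriction of this map to the weight lattices, which is well-defined because weights are precisely those functionals that integrate to characters of the torus, and the homomorphism $\Phi$ sends the character of $T'$ defined by $\omega$ to the character $\omega \circ \phi\big|_{\mathfrak t}$ of $T$ (which is thus integral). In particular, for any $H \in \mathfrak h$ we have the tautological identity
\begin{equation*}
  (\phi^* \omega)(H) = \omega(\phi(H)).
\end{equation*}

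Second, I would denote the original representation by $\Pi' \colon K' \to \GL(V)$ with differential $\pi' \colon \mathfrak k'_\CC \to \mathfrak{gl}(V)$, so that the restricted representation of $K$ is $\Pi' \circ \Phi$ with differential $\pi' \circ \phi$. By hypothesis, $\ket\psi \in V_\omega$ satisfies
\begin{equation*}
  \pi'(H') \ket\psi = \omega(H') \ket\psi \qquad (\forall H' \in \mathfrak h').
\end{equation*}
Specializing to $H' = \phi(H)$ for $H \in \mathfrak h \subseteq \mathfrak k_\CC$ (which lies in $\mathfrak h'$ because $\Phi(T) \subseteq T'$ implies $\phi(\mathfrak t) \subseteq \mathfrak t'$ and hence $\phi(\mathfrak h) \subseteq \mathfrak h'$) yields
\begin{equation*}
  (\pi' \circ \phi)(H) \ket\psi = \omega(\phi(H)) \ket\psi = (\phi^* \omega)(H) \ket\psi,
\end{equation*}
which is exactly the statement that $\ket\psi$ is a weight vector of weight $\phi^* \omega$ for the restricted $K$-action.

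There is no real obstacle here: the lemma is essentially a bookkeeping statement about how highest-torus data transports under a compatible pair $(\Phi, \phi)$. The only point warranting care is verifying that $\phi^*$ indeed maps $\Lambda^*_{K'}$ into $\Lambda^*_K$ rather than into a larger lattice of $\mathfrak h$-functionals, which is settled by the condition $\Phi(T) \subseteq T'$ and the fact that weights are exactly the differentials of algebraic characters of the respective tori.
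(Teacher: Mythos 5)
Your proof is correct and matches the paper's (implicit) argument: the paper gives no separate proof, stating only that the lemma "follows readily from the definitions," which is exactly the unravelling you carry out — the differential of the restricted representation is $\pi' \circ \phi$, and evaluating the weight condition at $\phi(H)$ gives the eigenvalue $(\phi^*\omega)(H)$. Your extra remark on why $\phi^*$ lands in $\Lambda^*_K$ corresponds to the paper's preceding observation that the dual map restricts to a $\ZZ$-linear map between the weight lattices.
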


As alluded to in the introduction, our strategy for solving the subgroup restriction problem then is the following:
Given an irreducible representation $V_{K',\lambda}$ of $K'$, we can determine its weight multiplicities with respect to the maximal torus $T_{K'}$ by using any of the formulas from \autoref{sec:mul/weights}.
We then obtain the weight multiplicities for $T$ by restricting as in \autoref{lem:mul/weight restriction}.
Finally, we reconstruct the multiplicity of an irreducible representation $V_{K,\mu}$ by using the finite-difference formula from \autoref{sec:mul/finite difference formula}.
If this procedure were to be translated directly into an algorithm, the runtime would be polynomial in the coefficients of $\lambda$, i.e., exponential in their bitlength (cf.\ the branching formula by Straumann \cite{Straumann65}).
Indeed, the number of weights generically grows polynomially and can even be of the order of the dimension of the irreducible representation $V_{K',\lambda}$, which according to the \emphindex{Weyl dimension formula} is given by
\begin{equation}
\label{eq:mul/weyl dimension formula}
  \dim V_{K',\lambda} = \prod_{\alpha \in R_{K',+}} \frac {(\lambda + \rho_{K'}, H_\alpha)} {(\rho_{K'}, H_\alpha)}.
\end{equation}
It is however possible to combine \eqref{eq:mul/bliem} with the restriction map $\phi^*$ in a way that will subsequently give rise to an algorithm that runs in polynomial time in the bitlength of the input:

\begin{prp}
  \label{prp:mul/main}
  Let $\Phi \colon K \rightarrow K'$ be a homomorphism of compact connected Lie groups.
  Then we can construct $\ZZ$-linear maps $A' \colon \ZZ^{s + s'} \rightarrow \ZZ^u$ and $B' \colon \Lambda^*_{K'} \oplus \Lambda^*_K \rightarrow \ZZ^u$ with $s = O(r_{K'}^2)$, $s' = r_{K'_{\sesi}} \leq r_{K'}$ and $u = O(r_{K'}^2) + r_K$ that satisfy the following property:
  For any two highest weights $\lambda \in \Lambda^*_{K',+}$ and $\mu \in \Lambda^*_{K,+}$, the multiplicity $m^\lambda_\mu$ is given by
  \begin{equation}
  \label{eq:mul/main}
    m^\lambda_\mu =
    \sum_{\gamma \in \Gamma_K} c_\gamma \, \# \{ x' \in \ZZ^s_{\geq 0} \oplus \ZZ^{s'} : A' x' = B' \matrix{\lambda \\ \mu + \gamma} \},
  \end{equation}
  where the set $\Gamma_K$ and the coefficients $c_\gamma \in \ZZ$ are defined as in \eqref{eq:mul/steinberg concrete}.
\end{prp}
\begin{proof}
  We start with the finite-difference formula in the form \eqref{eq:mul/steinberg concrete},
  \begin{equation*}
    m^\lambda_\mu
    = m_{K,V_{K',\lambda}}(\mu)
    = \sum_{\gamma \in \Gamma_K} c_\gamma \, m_{T,V_{K',\lambda}}(\mu + \gamma).
  \end{equation*}
  According to \autoref{lem:mul/weight restriction}, weight multiplicities for $K$ can be expressed in terms of weight multiplicities for $K'$:
  \begin{equation*}
    m_{T,V_{K',\lambda}}(\omega)
    = \sum_{\mathclap{\substack{\omega' \in \Lambda^*_{K'} \\ \phi^*\omega' = \omega}}}
      \; m_{T',V_{K',\lambda}}(\omega')
    = \sum_{\mathclap{\substack{\omega'_{\sesi} \in \Lambda^*_{K'_{\sesi}} \\ \phi^*(\omega'_{\sesi} \oplus \lambda_z) = \omega}}}
      \; m_{T'_{\sesi},V_{K'_{\sesi},\lambda_{\sesi}}}(\omega'_{\sesi}),
  \end{equation*}
  where we have used \eqref{eq:mul/semisimple} to reduce to the semisimple part $K'_{\sesi} \subseteq K'$.
  According to \cite{Bliem08}, we can find $\ZZ$-linear maps $A \colon \ZZ^s \rightarrow \ZZ^t$, $B \colon \Lambda^*_{K'_{\sesi}} \oplus \Lambda^*_{K'_{\sesi}} \rightarrow \ZZ^t$ with $s, t = O(r_{K'}^2)$ such that weight multiplicities for $K'_{\sesi}$ are given as in \eqref{eq:mul/bliem}.
  Thus,
  \begin{equation*}
      \sum_{\mathclap{\substack{\omega'_{\sesi} \in \Lambda^*_{K'_{\sesi}} \\ \phi^*(\omega'_{\sesi} \oplus \lambda_z) = \omega}}}
      \; m_{T'_{\sesi},V_{K'_{\sesi},\lambda_{\sesi}}}(\omega'_{\sesi})
    = \sum_{\mathclap{\substack{\omega'_{\sesi} \in \Lambda^*_{K'_{\sesi}} \\ \phi^*(\omega'_{\sesi} \oplus \lambda_z) = \omega}}}
      \; \#\{ x \in \ZZ^s_{\geq 0} : Ax = B \matrix{\lambda_{\sesi} \\ \omega'_{\sesi}} \}.
  \end{equation*}
  Finally, decompose $B = B_1 \oplus B_2$ according to $\Lambda^*_{K'_{\sesi}} \oplus \Lambda^*_{K'_{\sesi}}$ and $\phi^* = \phi^*_{\sesi} \oplus \phi^*_z$ according to $\Lambda^*_K = \Lambda^*_{K_{\sesi}} \oplus \Lambda^*_{Z(K)}$. Then we can rewrite the above in the following way:
  \begin{equation}
  \label{eq:mul/constructive}
  \begin{aligned}
    &\#\{ (x, \omega'_{\sesi}) \in \ZZ^s_{\geq 0} \oplus \Lambda^*_{K'_{\sesi}} :
      Ax = B \matrix{\lambda_{\sesi} \\ \omega'_{\sesi}}, \phi^*(\omega'_{\sesi} \oplus \lambda_z) = \omega \} \\
  =\;&\#\{ (x, \omega'_{\sesi}) \in \ZZ^s_{\geq 0} \oplus \Lambda^*_{K'_{\sesi}} :
      Ax = B_1 \lambda_{\sesi} + B_2 \omega'_{\sesi}, \;
      \phi^*_{\sesi} \omega'_{\sesi} + \phi^*_z \lambda_z = \omega \} \\
  =\;&\#\{ (x, \omega'_{\sesi}) \in \ZZ^s_{\geq 0} \oplus \Lambda^*_{K'_{\sesi}} :
        \underbrace{\matrix{A & -B_2 \\ 0 & \phi^*_{\sesi}}}_{=: A'} \matrix{x \\ \omega'_{\sesi}}
      = \underbrace{\matrix{B_1 & 0 & 0 \\ 0 & -\phi^*_z & \Id}}_{=: B'} \matrix{\lambda_{\sesi} \\ \lambda_z \\ \omega} \}
  \end{aligned}
  \end{equation}
  After choosing a basis of the lattice $\Lambda^*_{K'_{\sesi}} \cong \ZZ^{r_{K'_{\sesi}}}$
  we arrive at the asserted formula (with $u = t + r_K$).
\end{proof}

We note that the proof of \autoref{prp:mul/main} is constructive:
The maps $A'$ and $B'$, whose existence is asserted by the proposition, are defined in \eqref{eq:mul/constructive} in terms of the maps $A$ and $B$ constructed explicitly in \cite[\S 4]{Bliem08} (or \cite[Proof of Theorem 2.1]{BilleyGuilleminRassart04} for $\mathfrak k = \su(d)$).
In \autoref{sec:mul/kronecker} we give an illustration in the context of the Kronecker coefficients.
If one uses the Kostant multiplicity formula \eqref{eq:mul/kostant} rather than Bliem's formula \eqref{eq:mul/bliem} in the proof of \autoref{prp:mul/main} then one obtains at a similar formula for the multiplicities $m^\lambda_\mu$.
After completion of this work, we have learned of \cite[(3.5)]{Heckman82}, which is derived precisely in this spirit (and attributed to Kostant).
\subsection*{Consequences}

Let us identify
\begin{equation}
\label{eq:mul/weight lattice identification}
  \Lambda^*_K \cong \ZZ^{r_K}
  \text{ and }
  \Lambda^*_{K'} \cong \ZZ^{r_{K'}}
\end{equation}
according to the bases of fundamental weights.
Then the $\ZZ$-linear maps $A'$ and $B'$ in \autoref{prp:mul/main} correspond to matrices with integer entries, which we shall denote by the same symbol.
In this way, we find that \eqref{eq:mul/main} in essence reduces the computation of the multiplicities $m^\lambda_\mu$ to counting the number of integral points in rational convex polytopes of the form
\begin{equation}
  \label{eq:mul/weight restriction polytope}
  \Delta_{A',B'}(y) = \{ x' \in \RR^s_{\geq 0} \oplus \RR^{s'} : A'x' = B'y \}.
\end{equation}
parametrized by $y \in \ZZ^{r_{K'} + r_K}$.
It is well-known that this number
\[n(y) = \#(\Delta_{A',B'}(y) \cap \ZZ^{s+s'})\]
is a \emphindex{piecewise quasi-polynomial} function of $y$ \cite{ClaussLoechner98}.
That is, there exists a decomposition of $\ZZ^{r_{K'}+r_K}$ into polyhedral chambers such that on each chamber $C$ the function $n(y)$ is given by a single quasi-polynomial, i.e., there exists a sublattice $L \subseteq \ZZ^{r_{K'}+r_K}$ of finite index and polynomials $p_z$ with rational coefficients, labeled by the finitely many points $z \in \ZZ^{r_{K'}+r_K} / L$, such that $n(y) = p_{[y]}(y)$ for all $y \in \ZZ^{r_{K'}+r_K}$ (cf.\ \cite[\S 2.2]{VerdoolaegeSeghirBeylsEtAl07}).
It follows that the multiplicities $m^\lambda_\mu$, too, are piecewise quasi-polynomial functions.
While the chambers for $n(y)$ are rational polyhedral cones, it does not follow that the same is true for $m^\lambda_\mu$ (due to the shifts in \eqref{eq:mul/main}).
However, what remains true is that the \emph{stretching function} $k \mapsto m^{k\lambda}_{k\mu}$ is a quasi-polynomial function for large $k$.
It is more involved to show that this is in fact true for all $k$, as has been observed in \cite{Mulmuley07}; we refer to \cite{MeinrenkenSjamaar99} for more general quasi-polynomiality results on polyhedral cones (cf.\ the discussion in \cite{BaldoniVergne10}).

\subsection*{Polynomial-Time Algorithm for the Subgroup Restriction Problem}

We will now formulate our polynomial-time algorithm for the subgroup restriction problem.
As we have just explained, \eqref{eq:mul/main} reduces the computation of the multiplicities $m^\lambda_\mu$ to counting the number of integral points in rational convex polytopes of the form \eqref{eq:mul/weight restriction polytope}.
We shall suppose that the highest weights $\lambda$ and $\mu$, which are the \emph{input} to our algorithm, are given in terms of bitstrings containing coordinates with respect to the identification \eqref{eq:mul/weight lattice identification}.
Clearly, for each of the finitely many $\gamma \in \Gamma_K$, the polytope $\Delta_{A',B'}(\lambda, \mu+\gamma)$ defined in \eqref{eq:mul/weight restriction polytope} can be described in polynomial size in the bitlength of the input (e.g., in terms of linear equalities and inequalities), and it can be produced from the input in polynomial time.
Therefore we may use \emphindex{Barvinok's algorithm} to compute the number of integral points in each of these polytopes in polynomial time \cite{Barvinok94}.
We thus obtain the following algorithm:

\begin{alg}
  \label{alg:mul/main}
  Let $\Phi \colon K \rightarrow K'$ be a homomorphism of compact connected Lie groups.
  Given as input two highest weights $\lambda \in \Lambda^*_{K'} \cong \ZZ^{r_{K'}}$ and $\mu \in \Lambda^*_K \cong \ZZ^{r_K}$, encoded as bitstrings containing their coordinates with respect to \eqref{eq:mul/weight lattice identification}, the following algorithm computes the multiplicity $m^\lambda_\mu$ in polynomial time in the bitlength of the input:
  \begin{algorithmic}
    \State $m \gets 0$
    \ForAll{$\gamma \in \Gamma_K$}
      \State $n \gets \# \left( \Delta_{A',B'}(\lambda, \mu + \gamma) \cap \ZZ^{s+s'} \right)$ \Comment{Barvinok's algorithm}
      \State $m \gets m + c_\gamma n$
    \EndFor
    \State \textbf{return} $m$
  \end{algorithmic}
  Here, $\Delta_{A',B'}(y)$ denotes the rational convex polytope \eqref{eq:mul/weight restriction polytope}, and the set $\Gamma_K \subseteq \Lambda^*_K$ as well as the coefficients $c_\gamma \in \ZZ$ are defined as in \eqref{eq:mul/steinberg concrete}.
\end{alg}

There are at least two software packages which have implemented Barvinok's algorithm, namely \textsc{LattE} \cite{DeDutraKoeppeEtAl11} and \textsc{barvinok} \cite{Verdoolaege14,VerdoolaegeSeghirBeylsEtAl07}.
In \autoref{sec:mul/summary} we have reported on the performance of our preliminary implementation \cite{Walter12a} of \autoref{alg:mul/main} for computing Kronecker coefficients using the latter package.

\section{Kronecker Coefficients}
\label{sec:mul/kronecker}

In this section we will describe precisely how the Kronecker coefficients can be computed using the general method.

Let $K = \U(a) \times \U(b) \times \U(c)$.
Recall from \autoref{sec:onebody/kronecker} that the \emphindex{Kronecker coefficients} $g_{\alpha,\beta,\gamma}$ can be defined as the multiplicity of an irreducible $K$-representation $V^a_\alpha \otimes V^b_\beta \otimes V^c_\gamma$ in the symmetric subspace $\Sym^k(\CC^{abc})$:
\begin{equation}
  g_{\alpha,\beta,\gamma} = m_{\alpha,\beta,\gamma}^{(k)},
\end{equation}
where we recall that $(k) = (k,0,\dots,0)$ is the highest weight of $\Sym^k(\CC^{abc})$.
We may think of $\alpha$, $\beta$ and $\gamma$ as Young diagrams with $k$ boxes each and no more than $a$, $b$, and $c$ rows, respectively; the Kronecker coefficients do not depend on the concrete choice of $a$, $b$ and $c$ (as long as they are chosen at least as large as the number of rows of the respective Young diagrams).
It follows that \autoref{alg:mul/main} can be used to compute Kronecker coefficients of Young diagrams with bounded height in polynomial time in the input size---that is, in time $O(\poly(\log k))$, where $k$ is the number of boxes of the Young diagrams.
We note again that the problem of computing Kronecker coefficients is known to be $\SharpP$-hard in general \cite{BuergisserIkenmeyer08}; hence we do not expect that there exists a polynomial-time algorithm without any assumption on the height of the Young diagrams.

When computing Kronecker coefficients using our method, we are only interested in the subgroup restriction problem for the symmetric subspaces rather than for arbitrary irreducible representations $V^{abc}_\lambda$.
By specializing the construction of \autoref{prp:mul/main} to this one-parameter family of representations, we obtain the following formulas:

\begin{prp}
  \label{pro:mul/optimized kronecker}
  The multiplicity of a weight $\omega = (\omega_A,\omega_B,\omega_C) \in \ZZ^a \oplus \ZZ^b \oplus \ZZ^c \cong \Lambda^*_K$ in $\Sym^k(\CC^{abc})$ is equal to
  \begin{equation}
  \label{eq:mul/contingency}
  \begin{aligned}
    &n(k, \omega) = \# \{
      (x_{i,j,k}) \in \ZZ^{abc}_{\geq 0} \; : \;
      \sum_{\mathclap{i,j,k}} x_{i,j,k} = k, \\
      &\qquad \sum_{\mathclap{j,k}} x_{i,j,k} = \omega_{A,i} \; (\forall i),
      \sum_{\mathclap{i,k}} x_{i,j,k} = \omega_{B,j} \; (\forall j),
      \sum_{\mathclap{i,j}} x_{i,j,k} = \omega_{C,k} \; (\forall k)
    \}.
  \end{aligned}
  \end{equation}
  Therefore, the Kronecker coefficient for Young diagrams $\lambda, \mu, \nu$ with $k$ boxes and no more than $a$, $b$ and $c$ rows, respectively, is given by the formula
  \begin{equation*}
    g_{\lambda,\mu,\nu} = \sum_{\gamma \in \Gamma_K} c_\gamma \, n(k, (\lambda, \mu, \nu) + \gamma),
  \end{equation*}
  where $\Gamma_K$ and $c_\gamma \in \ZZ$ are defined as in \eqref{eq:mul/steinberg concrete}.
\end{prp}
\begin{proof}
  The weights of the symmetric subspace $\Sym^k(\CC^{abc})$ considered as a representation of $\U(abc)$ are the ``bosonic occupation numbers''
  \begin{equation*}
    \{ (x_{i,j,k}) \in \ZZ^{abc}_{\geq 0} : \sum_{i,j,k} x_{i,j,k} = k \},
  \end{equation*}
  and the corresponding weight spaces are all one-dimensional.
  On the other hand, the dual map between the weight lattices induced by the tensor product embedding
  \begin{equation*}
    \Phi \colon \U(a) \times \U(b) \times \U(c) \rightarrow \U(abc), \quad (U,V,W) \mapsto U \otimes V \otimes W
  \end{equation*}
  is given by
  \begin{equation*}
    \phi^* \colon \ZZ^{abc} \rightarrow \ZZ^a \oplus \ZZ^b \oplus \ZZ^c \cong \Lambda^*_K, \quad
    (x_{i,j,k}) \mapsto (\sum_{j,k} x_{i,j,k}, \sum_{i,k} x_{i,j,k}, \sum_{j,k} x_{i,j}).
  \end{equation*}
  The formulas asserted in the proposition follow at once.
\end{proof}

\autoref{pro:mul/optimized kronecker} gives rise to a polynomial-time algorithm for computing Kronecker coefficients with a bounded number of rows which in practice is faster than \autoref{alg:mul/main} (as the ambient space $\RR^{abc}$ has a smaller dimension than what we would get from the general construction described in the proof of \autoref{prp:mul/main}).
The time complexity of the resulting algorithm can be extracted from the scaling of Barvinok's algorithm in the dimension of the ambient space \cite{BarvinokPommersheim99, PakPanova14}.
We remark that \eqref{eq:mul/contingency} counts the number of \emphindex{three-way contingency tables} with fixed marginal, which appear in statistics. It is known that counting two-way contingency tables is already $\SharpP$-complete \cite{DyerKannanMount97}.

\section{Asymptotics}
\label{sec:mul/asymptotics}
\index{semiclassical limit}

Let $G$ be a connected reductive algebraic group, $K \subseteq G$ a maximal compact subgroup, and $\Pi \colon G \rightarrow \GL(\calH)$ be a representation on a Hilbert space $\calH$ with $K$-invariant inner product.
Let $\XX \subseteq \PP(\calH)$ be a $G$-invariant projective subvariety.
As motivated in \autoref{sec:mul/summary}, we consider the following measure
\begin{equation}
\label{eq:mul/multiplicity measure}
  \nu_{K,\XX} := \wlim_{k \rightarrow \infty}  \frac 1 {k^{d_{K,\XX}}}  \sum_{\lambda \in \Lambda^*_{K,+}} m_{K,R^*_k(\XX)}(\lambda) \, \delta_{\lambda/k},
\end{equation}
which encodes the asymptotic growth of multiplicities in the ring of regular functions.
The exponent $d_{K,\XX}$ is chosen appropriately such that $\nu_{K,\XX}$ has finite, non-zero measure;
in this case $\nu_{K,\XX}$ is absolutely continuous with respect to Lebesgue measure on the moment polytope of $\XX$ \cite{Okounkov96}.
We remark that the subgroup restriction problem fits naturally into this more general setup.
Indeed, for $\XX = \calO_{K',\lambda}$ the coadjoint orbit through a dominant weight $\lambda \in \Lambda^*_{K'}$, the Borel--Weil theorem (\autoref{lem:onebody/borel weil}) implies that $m^{k\lambda}_{\mu}$ is equal to $m_{K,R^*_k(\calO_{K',\lambda})}(\mu)$.

\subsection*{Projective Space}

In \autoref{sec:mul/summary}, we had mentioned that the measures $\nu_{K,\XX}$ are closely related to the Duistermaat--Heckman measures as defined in \autoref{sec:dhmeasure/dhmeasure}.
We will now explain how this can be seen rather directly in the case of projective space, $\XX = \PP(\calH)$.
The starting point is the observation, already used in the proof of \autoref{pro:mul/optimized kronecker}, that the weights of $R_k^*(\calH) = \Sym^k(\calH)$ considered as a representation of $K' = \U(\calH)$ are given by ``bosonic occupation numbers'' $\vec x \in \ZZ^D_{\geq 0}$ with $\sum_{i=1}^D x_i = k$, where $D = \dim \calH$.
In other words, the weights correspond to the integral points of $k \Delta_D$, the $k$-times dilated standard simplex, and each weight occurs with multiplicity one.
Thus it is immediate that
\begin{equation*}
  \nu_{T',\PP(\calH)}
  = \wlim_{k \rightarrow \infty} \frac 1 {k^{D-1}} \; \sum_{\mathclap{\vec x \in k \Delta_D \cap \ZZ^D}} \; \delta_{\vec x/k}
  = \wlim_{k \rightarrow \infty} \frac 1 {k^{D-1}} \; \sum_{\mathclap{\vec p \in \Delta_D \cap \ZZ^D/k}} \; \delta_{\vec p}
\end{equation*}
is equal to Lebesgue measure on the standard simplex, normalized to total volume
\begin{equation*}
  \frac {\#(k \Delta_D \cap \ZZ^D)} {k^{D-1}}
  = \frac {\binom {k+D-1} k} {k^{D-1}} \rightarrow \frac 1 {(D-1)!}.
\end{equation*}
On the other hand, $\Prob_{T'}$ is equal to Lebesgue measure on the standard simplex normalized to probability one, so that:
\begin{equation}
\label{eq:mul/semiclassical simplex}
  \Prob_{T',\PP(\calH)} = (D-1)! \, \nu_{T',\PP(\calH)}.
\end{equation}
(Indeed, the measures $\nu_{K,\XX}$ are always normalized to the symplectic volume of $\XX$.)

Now consider $\calH$ as a representation of $T \subseteq T'$, the maximal torus of $K$.
By pushing forward both the left and the right-hand side of \eqref{eq:mul/semiclassical simplex} along the map $(x_i) \mapsto \sum_{k=1}^D x_i \omega_i$, we obtain that
\begin{equation}
\label{eq:mul/semiclassical abelian}
  \Prob_{T,\PP(\calH)} = (D-1)! \, \nu_{T,\PP(\calH)}.
\end{equation}
(\autoref{lem:dhmeasure/abelian} and \autoref{lem:mul/weight restriction}).
Thus the Abelian Duistermaat--Heckman measure encodes the asymptotics of the weight multiplicities.
In the case of the quantum marginal problem, this is plainly visible by comparing \eqref{eq:dhmeasure/qmp cmp} and \eqref{eq:mul/contingency}.

We now lift \eqref{eq:mul/semiclassical abelian} to the non-Abelian group $K$.
As in \autoref{sec:dhmeasure/dhmeasure} we assume that $\Delta_K \cap i \mathfrak t^*_{>0} \neq \emptyset$, so that the exponent in \eqref{eq:mul/multiplicity measure} is given by $d_{K,\PP(\calH)} = D-1 - \#R_{K,+}$ (see \autoref{lem:mul/exponent} below).
Let $g$ be a smooth test function on $i \mathfrak t^*$ that is compactly supported in $i \mathfrak t^*_{>0}$.
Using the finite-difference formula \eqref{eq:mul/finite difference formula},
\begin{align*}
    &\sum_{\lambda \in \Lambda^*_{K,+}} m_{K,\PP(\calH)}(\lambda) \, g(\lambda/k) \\
  = &\sum_{\lambda \in \Lambda^*_{K,+}} \left[ \left( \;\;\; \prod_{\mathclap{\alpha \in R_{K,+}}} - D_\alpha \right) m_{T,\PP(\calH)} \right](\lambda) \, g(\lambda/k) \\
  = &\sum_{\lambda \in \Lambda^*_{K,+}} m_{T,\PP(\calH)}(\lambda)
  \left[ \left( \;\;\; \prod_{\mathclap{\alpha \in R_{K,+}}} - D_{-\alpha/k} \right) g \right] (\lambda/k)
\end{align*}
In the limit as $k \rightarrow \infty$,
\begin{equation*}
  \frac 1 {k^{\#R_{K,+}}} \left( \;\;\; \prod_{\mathclap{\alpha \in R_{K,+}}} - D_{-\alpha/k} \right) g
  \rightarrow
  \left( \prod_{\alpha \in R_{K,+}} \partial_\alpha \right) g.
\end{equation*}
since the (suitably rescaled) finite differences converge to partial derivatives.
It follows by using the definition of the measures \eqref{eq:mul/multiplicity measure} that
\begin{equation*}
    \int d\nu_{K,\PP(\calH)} \, g
  = \int d\nu_{T,\PP(\calH)} \, \left( \prod_{\alpha \in R_{K,+}} \partial_\alpha \right) g
\end{equation*}
for all test functions $g$ that are compactly supported in $i \mathfrak t^*_{>0}$.
In other words, up to a trivial factor $p_K$, the measures $\nu_{K,\PP(\calH)}$ satisfy the same derivative principle as the Duistermaat--Heckman measures (\autoref{lem:dhmeasure/derivative principle}).
Thus we obtain at once from \eqref{eq:mul/semiclassical abelian} that the non-Abelian Duistermaat--Heckman measure encodes the asymptotics of the highest weight multiplicities:
\begin{equation}
\label{eq:mul/semiclassical non-abelian}
  \Prob_{K,\PP(\calH)} = (D-1)! \, p_K \, \nu_{K,\PP(\calH)}.
\end{equation}
This strengthens the characterization \eqref{eq:mul/moment polytope} of the moment polytope in terms of the non-vanishing of highest weight multiplicities (cf.\ \eqref{eq:onebody/semiclassical limit}).
In fact, Heckman studied the measures $\Prob_{K,\PP(\calH)}$ precisely to understand the asymptotic behavior of the multiplicities \cite{Heckman82}.
In the same spirit, Harish-Chandra's formula \eqref{eq:dhmeasure/harish-chandra} can be seen as the ``semiclassical limit'' of the Weyl character formula \eqref{eq:mul/weyl character formula}.

\bigskip

In the case of the one-body quantum marginal problem for three subsystems, the highest-weight multiplicities are given by the Kronecker coefficients $g_{\alpha,\beta,\gamma}$ (\autoref{sec:onebody/kronecker}).
Thus \eqref{eq:mul/semiclassical non-abelian} asserts that their asymptotics determines that the local eigenvalue distribution of a random pure states on $\calH = \CC^a \otimes \CC^b \otimes \CC^c$:
\begin{equation*}
  \Pspec =
  (abc-1)! \,
  p_a p_b p_c
  \wlim_{k \rightarrow \infty}
  \frac 1 {k^d}
  \sum_{\alpha, \beta, \gamma} g_{\alpha,\beta,\gamma} \, \delta_{(\alpha,\beta,\gamma)/k}
\end{equation*}
where $p_a,p_b,p_c$ are the Vandermonde determinants \eqref{eq:dhmeasure/kks volume unitary},
$d = abc-1-\binom a 2-\binom b 2-\binom c 3$,
and where the sum runs over all Young diagrams $\alpha, \beta, \gamma$ with $k$ boxes each and no more than $a$, $b$ and $c$ rows, respectively (if $\Delta_K \cap i \mathfrak t^*_{>0} \neq \emptyset$).

The link between representation theory and geometry is quite remarkable.
Not only can one read off the existence of a pure state with given local eigenvalues from the non-vanishing of the corresponding Kronecker coefficients $g_{k\lambda,k\mu,k\nu}$, but their asymptotic growth also encodes the probability of finding these eigenvalue spectra when the global state is chosen according to the invariant probability measure.
See \autoref{fig:dhmeasure/hedgehog} for an illustration of the convergence of the corresponding multiplicity measure.

\begin{figure}
  \centering
  \includegraphics[width=\linewidth]{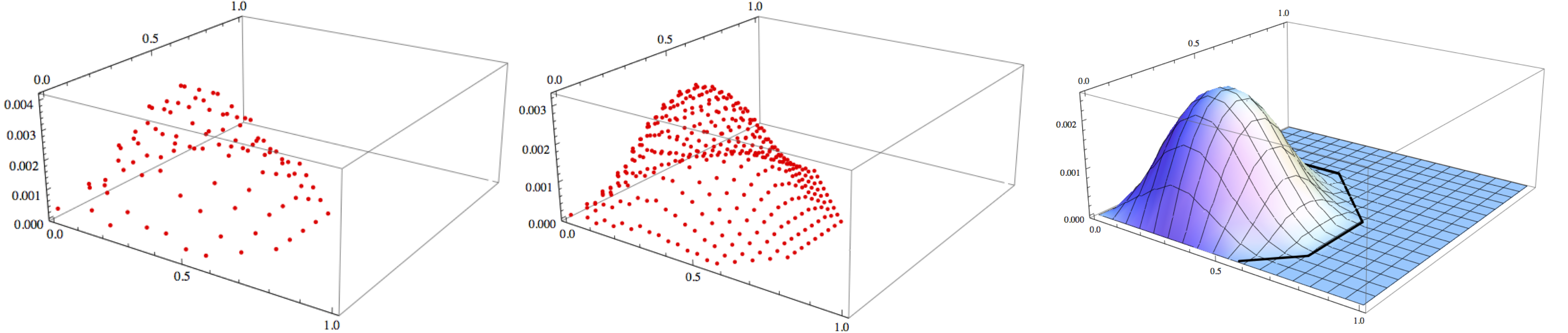}
  \begin{tabularx}{\linewidth}{*3{>{\centering\arraybackslash}X}}
  {\footnotesize (a)} &
  {\footnotesize (b)} &
  {\footnotesize (c)} \\
  \end{tabularx}
  \caption[Illustration of the semiclassical limit of the Kronecker coefficients]{
  \emph{Illustration of the semiclassical limit of the Kronecker coefficients.}
  (a) and (b) Suitably rescaled Kronecker coefficients for the mixed-state quantum marginal problem of two qubits with global spectrum $\vec\lambda_{AB} = (4/7, 2/7, 1/7, 0)$ and $k = 28$ and $56$ boxes (computed with the algorithm described in \autoref{sec:mul/kronecker}).
  (c) The limiting distribution \eqref{eq:mul/multiplicity measure}, whose support is the moment polytope cut out by Bravyi's inequalities \eqref{eq:kirwan/bravyi}.}
  \label{fig:dhmeasure/hedgehog}
\end{figure}

\bigskip

We conclude with an amusing application of \eqref{eq:mul/semiclassical non-abelian}.
Consider $\calH = \CC^a \otimes \CC^b$ with $a \leq b$.
Recall that Schur--Weyl duality implies that we have the decomposition \eqref{eq:onebody/schur-weyl bipartite},
\begin{equation*}
  \Sym^k(\CC^a \otimes \CC^b) = \bigoplus_{\alpha \vdash_a k} V^a_\alpha \otimes V^b_\alpha.
\end{equation*}
Restricting to the action of $K = \SU(a)$ on the first tensor factor, we find that
\begin{equation*}
  \mu_{K,\PP(\calH)} = \wlim_{k \rightarrow \infty} \frac 1 {k^{ab-1 - \binom a 2}} \sum_{\alpha \vdash_a k} \dim V^b_\alpha \, \delta_{\alpha/k}.
\end{equation*}
For large $k$, it follows readily from Weyl's dimension formula \eqref{eq:mul/weyl dimension formula} that
\begin{equation*}
  \frac {\dim V^b_\alpha} {k^{a(b-a) + \binom a 2}} \sim \frac 1 Z \left( \prod_{i=1}^a \left( \alpha_i / k \right)^{b-a} \right) p_a\left(\alpha / k \right)
\end{equation*}
for some constant $Z > 0$.
Together with \eqref{eq:mul/semiclassical non-abelian}, we recover \eqref{eq:dhmeasure/lloyd-pagels single marginal}, the formula for the density of the eigenvalue distribution of $\rho_A$ for a random bipartite pure state.

\subsection*{Order of Growth and Geometric Complexity Theory}

We now consider general projective subvarieties $\XX \subseteq \PP(\calH)$.
In the following, we study the exponent $d_{K,\XX}$ in the definition of the measure \eqref{eq:mul/multiplicity measure} more carefully and illustrate its implications in the context of geometric complexity theory.
We will write $f \sim g$ for the asymptotic equivalence $\lim_{k \rightarrow \infty} f(k)/g(k) = 1$.

\begin{lem}
  \label{lem:mul/exponent}
  The exponent $d_{K,\XX}$ is equal to $\dim \XX - R_\XX$, where
  \[R_\XX := \#\{ \alpha \in R_{K,+} : (\lambda, H_\alpha) > 0 \text{ for some } \lambda \in \Delta_K(\XX) \}\]
  is the number of positive roots that are not orthogonal to all points on the moment polytope (we write $H_\alpha \not\perp \Delta_K(\XX)$).
\end{lem}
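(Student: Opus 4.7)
The plan is to identify $d_{K,\XX}$ by comparing two asymptotic expressions for the Hilbert function $\dim R_k(\XX)$. Recall that
\begin{equation*}
  \dim R_k(\XX) = \sum_{\lambda \in \Lambda^*_{K,+}} m_{K,R^*_k(\XX)}(\lambda) \cdot \dim V_{K,\lambda},
\end{equation*}
since decomposing $R_k^*(\XX)$ into irreducibles and taking duals preserves multiplicities. On the one hand, by the standard theory of Hilbert polynomials of projective varieties, $\dim R_k(\XX) \sim c_\XX \, k^{\dim_\CC \XX}$ for some $c_\XX > 0$, where $\dim_\CC \XX$ is the complex dimension of $\XX$ as a projective variety; we take $\dim \XX$ to mean the real (symplectic) dimension, which is $2\dim_\CC \XX$ or equivalently we work throughout with the correct geometric convention matching \eqref{eq:mul/multiplicity measure}. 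On the other hand, Weyl's dimension formula \eqref{eq:mul/weyl dimension formula} yields
\begin{equation*}
  \dim V_{K,\lambda}
  = \prod_{\alpha \in R_{K,+}} \frac{(\lambda + \rho_K, H_\alpha)}{(\rho_K, H_\alpha)},
\end{equation*}
so for $\lambda = k\mu$ with $\mu \in \Delta_K(\XX)$, each factor with $H_\alpha \perp \Delta_K(\XX)$ remains bounded (and asymptotically equal to $1$), while each factor with $H_\alpha \not\perp \Delta_K(\XX)$ grows like $k(\mu, H_\alpha)/(\rho_K, H_\alpha)$ at generic $\mu$. Writing
\begin{equation*}
  \widetilde p_\XX(\mu) := \prod_{\substack{\alpha \in R_{K,+} \\ H_\alpha \not\perp \Delta_K(\XX)}} \frac{(\mu, H_\alpha)}{(\rho_K, H_\alpha)},
\end{equation*}
we thus have $\dim V_{K, k\mu} \sim k^{R_\XX} \widetilde p_\XX(\mu)$, with the approximation uniform on compact subsets of the relative interior of $\Delta_K(\XX)$.

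Combining both observations, the idea is to test the (putative) weak limit $\nu_{K,\XX}$ against the bounded, non-negative function $\widetilde p_\XX$ restricted to a neighborhood of the (compact) moment polytope. Using the homogeneity $\widetilde p_\XX(\lambda/k) = k^{-R_\XX} \widetilde p_\XX(\lambda)$ together with the Weyl dimension asymptotic, one obtains
\begin{equation*}
  \frac{1}{k^{d_{K,\XX} + R_\XX}} \, \dim R_k(\XX)
  \;=\; \frac{1}{k^{d_{K,\XX}}} \sum_{\lambda} m_{K,R^*_k(\XX)}(\lambda) \, \widetilde p_\XX(\lambda/k) \, \bigl(1 + o(1)\bigr)
  \;\longrightarrow\; \int \widetilde p_\XX \, d\nu_{K,\XX}.
\end{equation*}
Since $\widetilde p_\XX$ is strictly positive on the relative interior of $\Delta_K(\XX)$ and the support of $\nu_{K,\XX}$ is all of $\Delta_K(\XX)$ (as cited from \cite{Okounkov96}), the right-hand side is finite and strictly positive. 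Comparing with $\dim R_k(\XX) \sim c_\XX k^{\dim \XX}$ then forces $d_{K,\XX} + R_\XX = \dim \XX$, which is the claim.

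The main obstacle is controlling the Weyl dimension asymptotic uniformly up to the boundary of $\Delta_K(\XX)$, where some factors $(\mu, H_\alpha)$ corresponding to roots in $R_\XX$ may vanish, causing $\dim V_{K,k\mu}$ to grow strictly slower than $k^{R_\XX}$. The point is that these boundary contributions affect only a lower-order part of the sum: the atomic measures converge weakly to an absolutely continuous measure with respect to Lebesgue measure on $\Delta_K(\XX)$, so the mass concentrated on any neighborhood of the relative boundary tends to zero with the neighborhood. A standard truncation argument -- restricting the sum to $\lambda/k$ in a compact subset of the relative interior, bounding the remainder via the Abelian case (where analogous asymptotics on weight multiplicities are classical) -- makes the heuristic above rigorous. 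The remaining ingredients (existence of the weak limit and absolute continuity of $\nu_{K,\XX}$ on $\Delta_K(\XX)$) are cited from \cite{Okounkov96}, and the positivity of $\int \widetilde p_\XX \, d\nu_{K,\XX}$ then follows since $\widetilde p_\XX$ is a non-zero polynomial on the affine hull of $\Delta_K(\XX)$.
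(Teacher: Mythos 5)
Your argument is essentially the paper's proof: both compare the Hilbert--Serre growth $\dim R_k(\XX)\sim c\,k^{\dim\XX}$ with the product of the multiplicity growth $k^{d_{K,\XX}}$ and the Weyl-dimension growth $k^{R_\XX}$, using positivity of the Weyl-dimension polynomial on the relative interior of $\Delta_K(\XX)$ together with the absolute continuity of $\nu_{K,\XX}$ to control the behaviour near the boundary (the paper phrases this as two one-sided bounds -- a uniform upper bound $\dim V_{K,k\lambda}\leq C k^{R_\XX}$ on the whole polytope and a lower bound on a compact subset of the relative interior -- rather than your single limit-plus-truncation, but the ingredients are identical). One correction: $\dim\XX$ in the lemma is the complex (projective) dimension, i.e.\ the degree of the Hilbert polynomial, not the real symplectic dimension $2\dim_{\CC}\XX$ as your parenthetical aside suggests; with that convention fixed, your comparison $d_{K,\XX}+R_\XX=\dim\XX$ is exactly the paper's conclusion.
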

\begin{proof}
  By the Hilbert--Serre theorem, the function $k \mapsto \dim R_k(\XX)$ is a polynomial of degree $\dim \XX$ for large $k$ \cite[Theorem I.7.5]{Hartshorne77}. Hence there exists a constant $A > 0$ such that
  \begin{align*}
    &A \, k^{\dim \XX}
    \sim \dim R_k(\XX) \\
    =~ &\sum_{\mathclap{\lambda \in \Lambda^*_{K,+}}} m_{K,R_k(\XX)}(\lambda) \, \dim V_{K,\lambda} \\
    =~ &\sum_{\mathclap{\lambda \in \Delta_K(\XX) \cap \frac 1 k \Lambda^*_{K,+}}} m_{K,R_k(\XX)}(k \lambda) \, \dim V_{K,k \lambda},
  \end{align*}
  where for the last equality we have used \eqref{eq:mul/moment polytope}.
  By the Weyl dimension formula \eqref{eq:mul/weyl dimension formula},
  \begin{equation*}
    \dim V_{K,k\lambda} =
    \prod_{\mathclap{\alpha \in R_{K,+}}} \frac {(k \lambda + \rho_K, H_\alpha)} {(\rho_K, H_\alpha)}
    = \underbrace{\left( ~~~~~~
        \prod_{\mathclap{\substack{\alpha \in R_{K,+} \\ H_\alpha \not\perp \Delta_K(\XX)}}} ~ \frac {(\lambda, H_\alpha)} {(\rho_K, H_\alpha)}
    \right)}_{=: p(\lambda)} k^{R_\XX} +
    O(k^{R_\XX-1})
  \end{equation*}
  for the representations that occur in $R(\XX)$.
  The coefficient $p(\lambda)$ of $k^{R_\XX}$ is a polynomial function in $\lambda$.
  We may therefore find a constant $C > 0$ such that $\dim V_{K,k\lambda} \leq C k^{R_\XX}$ for all $\lambda \in \Delta_K(\XX) \cap \frac 1 k \Lambda^*_{K,+}$.
  It follows that
  \begin{align*}
    &\sum_{\mathclap{\lambda \in \Delta_K(\XX) \cap \frac 1 k \Lambda^*_{K,+}}} m_{K,R_k(\XX)}(k \lambda) \, \dim V_{K,k \lambda} \\
    \leq\; &C k^{R_\XX} \sum_{\mathclap{\lambda \in \Delta_K(\XX) \cap \frac 1 k \Lambda^*_{K,+}}} m_{K,R_k(\XX)}(k \lambda) \\
    \sim\; &C k^{R_\XX + d_{K,\XX}} \int_{\Delta_K(\XX)} d\nu_{K,\XX}
  \end{align*}
  since the measure $\nu_{K,\XX}$ has non-zero volume.
  Thus we find that $\dim \XX \leq R_\XX + d_{K,\XX}$.

  On the other hand, since $\nu_{K,\XX}$ is Lebesgue-absolutely continuous with respect to Lebesgue measure on the moment polytope, we may find a compact set $\Delta$ contained in its relative interior which has positive measure with respect to $\nu_{K,\XX}$.
  We claim that $p(\lambda)$ is strictly positive for all $\lambda$ contained in the relative interior of the moment polytope.
  To see this, observe that for all positive roots $\alpha$ with $H_\alpha \not\perp \Delta_K(\XX)$ there exists by definition some $\mu \in \Delta_K(\XX)$ such that $(H_\alpha, \mu) > 0$;
  since we can always write $\lambda$ as a proper convex combination of $\mu$ and some other point $\mu' \in \Delta_K(\XX)$, it follows that $(\lambda, H_\alpha) > 0$.
  This implies that on the compact set $\Delta$ we can bound $p(\lambda)$ from below by a positive constant.
  Thus there exists a constant $D > 0$ (depending on $\Delta$) such that
  $\dim V_{K,k \lambda} \geq D \, k^{R_\XX}$ for all $\lambda \in \Delta \cap \frac 1 k \Lambda^*_{K,+}$.
  Consequently,
  \begin{align*}
    &\sum_{\mathclap{\lambda \in \Delta_K(\XX) \cap \frac 1 k \Lambda^*_{K,+}}} m_{K,R_k(\XX)}(k \lambda) \, \dim V_{K, k \lambda} \\
    \geq\;&D \, k^{R_\XX} \sum_{\mathclap{\lambda \in \Delta \cap \frac 1 k \Lambda^*_{K,+}}} m_{K,R_k(\XX)}(k \lambda) \\
    \sim\;&D \, k^{R_\XX + d_{K,\XX}} \int_\Delta d\nu_{K,\XX}.
  \end{align*}
  Since $\Delta$ has positive measure, we may also conclude that $\dim \XX \geq R_\XX + d_{K,\XX}$.
  Thus we have equality.
\end{proof}

Let us now elaborate on the applicability of the multiplicity measures \eqref{eq:mul/multiplicity measure} for finding new obstructions in geometric complexity theory.
For this, we consider a pair of projective subvarieties $\XX$ and $\YY$ with $\dim \XX < \dim \YY$, as is the case for the orbit closures of relevance to geometric complexity theory \cite{BuergisserLandsbergManivelEtAl11, BuergisserIkenmeyer11}.

\begin{lem}
  \label{lem:mul/polytope smaller}
  Let $\Delta_K(\XX) \subseteq \Delta_K(\YY)$ and $R_\XX < R_\YY$. Then, $\dim \Delta_K(\XX) < \dim \Delta_K(\YY)$.
\end{lem}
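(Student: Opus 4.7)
My plan is to exploit the observation that $\Delta_K(\XX) \subseteq \Delta_K(\YY)$ induces an inclusion of the sets of positive roots defining $R_\XX$ and $R_\YY$, so that the strict inequality $R_\XX < R_\YY$ forces a whole positive root to be orthogonal to $\Delta_K(\XX)$ without being orthogonal to $\Delta_K(\YY)$. This root will provide a hyperplane that contains the smaller polytope but properly cuts the larger one, yielding the dimension drop.

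More concretely, the first step is to define
\begin{equation*}
  S_\XX := \{\alpha \in R_{K,+} : H_\alpha \not\perp \Delta_K(\XX)\},
  \qquad
  S_\YY := \{\alpha \in R_{K,+} : H_\alpha \not\perp \Delta_K(\YY)\},
\end{equation*}
so that $R_\XX = \# S_\XX$ and $R_\YY = \# S_\YY$. If $\alpha \in S_\XX$, there exists some $\lambda \in \Delta_K(\XX) \subseteq \Delta_K(\YY)$ with $(\lambda, H_\alpha) > 0$, showing $\alpha \in S_\YY$; hence $S_\XX \subseteq S_\YY$. Because $R_\XX < R_\YY$, this inclusion is strict, so I can choose $\alpha_0 \in S_\YY \setminus S_\XX$.

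By definition of $S_\XX$, every $\lambda \in \Delta_K(\XX)$ satisfies $(\lambda, H_{\alpha_0}) = 0$; by definition of $S_\YY$, there exists $\mu \in \Delta_K(\YY)$ with $(\mu, H_{\alpha_0}) > 0$. Thus $\Delta_K(\XX)$ lies in the hyperplane $H := \{\xi \in i\mathfrak t^* : (\xi, H_{\alpha_0}) = 0\}$ while $\Delta_K(\YY) \not\subseteq H$. Since both polytopes sit inside the positive Weyl chamber $i\mathfrak t^*_+$, on which $(\,\cdot\,, H_{\alpha_0}) \geq 0$, the set $F := \Delta_K(\YY) \cap H$ is the face of $\Delta_K(\YY)$ on which the linear functional $(\,\cdot\,, H_{\alpha_0})$ attains its minimum value $0$. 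Because $\mu \in \Delta_K(\YY)$ achieves a strictly larger value, $F$ is a proper face, and therefore $\dim F < \dim \Delta_K(\YY)$.

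The proof then concludes with the chain $\dim \Delta_K(\XX) \leq \dim F < \dim \Delta_K(\YY)$, where the first inequality uses $\Delta_K(\XX) \subseteq \Delta_K(\YY) \cap H = F$. I expect no real obstacle here: the only point demanding a little care is the direction of the inequality $(\lambda, H_{\alpha_0}) \geq 0$ on $i\mathfrak t^*_+$, which follows from the very definition of the positive Weyl chamber (and the identification of the $H_\alpha$ with the co-roots $Z_\alpha$ used throughout the chapter); everything else is elementary convex geometry.
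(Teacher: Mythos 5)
Your proof is correct and follows essentially the same route as the paper: both arguments extract a positive root $\alpha_0$ whose co-root $H_{\alpha_0}$ is orthogonal to all of $\Delta_K(\XX)$ but not to all of $\Delta_K(\YY)$, and then read off the dimension drop from the resulting hyperplane. The only (cosmetic) difference is the final step, where the paper compares affine hulls --- $\aff\Delta_K(\XX)$ lies in the hyperplane $(\,\cdot\,,H_{\alpha_0})=0$ while $\aff\Delta_K(\YY)$ does not, so the inclusion of affine subspaces is proper --- whereas you use the nonnegativity of $(\,\cdot\,,H_{\alpha_0})$ on the positive Weyl chamber to place $\Delta_K(\XX)$ inside a proper face of $\Delta_K(\YY)$; both versions are sound.
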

\begin{proof}
  Note that we have
  \begin{equation*}
    \dim \Delta_K(\XX) = \dim \aff \Delta_K(\XX) \leq \dim \aff \Delta_K(\YY) = \dim \Delta_K(\YY),
  \end{equation*}
  with equality if and only if the two affine hulls are equal.
  By assumption there exists a positive root $\alpha \in R_{K,+}$ such that $H_\alpha$ is orthogonal to all points in $\Delta_K(\XX)$ (i.e., $(\lambda, H_\alpha) = 0$ for all $\lambda \in \Delta_K(\XX)$), but not to all points in $\Delta_K(\YY)$.
  Then $H_\alpha$ is also orthogonal to all points in the affine hull of $\Delta_K(\XX)$, but not to all points in the affine hull of $\Delta_K(\YY)$.
  Since $\Delta_K(\XX) \subseteq \Delta_K(\YY)$ by assumption, it follows that $\aff \Delta_K(\XX) \subsetneq \aff \Delta_K(\YY)$, and hence that $\dim \Delta_K(\XX) < \dim \Delta_K(\YY)$.
\end{proof}

\begin{lem}
  \label{lem:mul/different lebesgue}
  Let $\dim \Delta_K(\XX) < \dim \Delta_K(\YY)$.
  Then, $\XX \subseteq \YY$ implies that $d_{K,\XX} < d_{K,\YY}$.
\end{lem}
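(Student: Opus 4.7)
The plan is to exploit the fact that the inclusion $\XX \subseteq \YY$ induces a surjection of graded coordinate rings $R(\YY) \twoheadrightarrow R(\XX)$, which implies the pointwise domination of multiplicities $m_{K,R_k(\XX)}(\lambda) \leq m_{K,R_k(\YY)}(\lambda)$ for all $k$ and $\lambda$ (this is exactly \eqref{eq:mul/multiplicity criterion}). The goal is to leverage this together with the fact that $\Delta_K(\XX)$ is a proper lower-dimensional subset of $\Delta_K(\YY)$ and thus has $\nu_{K,\YY}$-measure zero (since $\nu_{K,\YY}$ is absolutely continuous with respect to Lebesgue measure on $\Delta_K(\YY)$).

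I would proceed as follows. First, fix $\varepsilon > 0$ and choose a continuous bump function $g_\varepsilon \colon i \mathfrak t^* \rightarrow [0,1]$ with $g_\varepsilon \equiv 1$ on $\Delta_K(\XX)$ and $g_\varepsilon \equiv 0$ outside the $\varepsilon$-neighborhood $U_\varepsilon$ of $\Delta_K(\XX)$ in $i \mathfrak t^*$. By the definition of the weak limit \eqref{eq:mul/multiplicity measure},
\begin{equation*}
  \frac{1}{k^{d_{K,\YY}}} \sum_{\lambda \in \Lambda^*_{K,+}} m_{K,R_k(\YY)}(\lambda) \, g_\varepsilon(\lambda/k) \;\longrightarrow\; \int g_\varepsilon \, d\nu_{K,\YY} \;\leq\; \nu_{K,\YY}(U_\varepsilon).
\end{equation*}
Since $\dim \Delta_K(\XX) < \dim \Delta_K(\YY)$, the set $\Delta_K(\XX)$ is Lebesgue-null in $\Delta_K(\YY)$, so absolute continuity of $\nu_{K,\YY}$ gives $\nu_{K,\YY}(U_\varepsilon) \searrow 0$ as $\varepsilon \searrow 0$.

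On the other hand, since $g_\varepsilon \equiv 1$ on $\Delta_K(\XX)$ and $m_{K,R_k(\XX)} \leq m_{K,R_k(\YY)}$,
\begin{equation*}
  \sum_{\lambda \in \Lambda^*_{K,+}} m_{K,R_k(\YY)}(\lambda) \, g_\varepsilon(\lambda/k)
  \;\geq\; \sum_{\lambda \in k \Delta_K(\XX) \cap \Lambda^*_{K,+}} m_{K,R_k(\XX)}(\lambda),
\end{equation*}
and the latter sum is asymptotic to $c \, k^{d_{K,\XX}}$ for some $c > 0$ by the definition of $\nu_{K,\XX}$, which has nonzero total mass supported on $\Delta_K(\XX)$. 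Combining these two estimates yields
\begin{equation*}
  \limsup_{k \to \infty} \frac{c \, k^{d_{K,\XX}}}{k^{d_{K,\YY}}} \;\leq\; \nu_{K,\YY}(U_\varepsilon).
\end{equation*}
If $d_{K,\XX} \geq d_{K,\YY}$, the left-hand side is bounded below by $c > 0$ independently of $\varepsilon$, contradicting the fact that the right-hand side tends to zero. Hence $d_{K,\XX} < d_{K,\YY}$.

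The main technical obstacle I anticipate is the justification of the asymptotic lower bound $N^\XX_k(\Delta_K(\XX)) \sim c \, k^{d_{K,\XX}}$; this relies on $\nu_{K,\XX}$ having full support on $\Delta_K(\XX)$ (so that its total mass is strictly positive), which follows from the description of $\Delta_K(\XX)$ as the closure of normalized highest weights with nonzero multiplicity (\autoref{thm:onebody/kirwan}) together with Okounkov's absolute-continuity result cited immediately after \eqref{eq:mul/multiplicity measure intro}. Everything else is a routine comparison via the bump function together with the pointwise multiplicity inequality from $\XX \subseteq \YY$.
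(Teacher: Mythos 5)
Your proof is correct and follows essentially the same route as the paper's: both rest on the multiplicity domination \eqref{eq:mul/multiplicity criterion}, the strict positivity of $\nu_{K,\XX}$ on $\Delta_K(\XX)$, and the fact that the lower-dimensional polytope $\Delta_K(\XX)$ is $\nu_{K,\YY}$-null by absolute continuity, so that $d_{K,\XX} \geq d_{K,\YY}$ would force a positive quantity to be zero. Your bump-function comparison merely makes explicit the step where the paper applies the weak limits to the indicator function of $\Delta_K(\XX)$, so the two arguments coincide in substance.
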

\begin{proof}
  If $\XX \subseteq \YY$ then it is immediate from \eqref{eq:mul/multiplicity criterion} and the definition of the measures \eqref{eq:mul/multiplicity measure} that $d_{K,\XX} \leq d_{K,\YY}$.
  Let us suppose for a moment that in fact $d_{K,\XX} = d_{K,\YY}$. Then \eqref{eq:mul/multiplicity criterion} would imply that
  \begin{equation*}
    \int d\nu_{K,\XX} \, g \leq \int d\nu_{K,\YY} \, g
  \end{equation*}
  for any non-negative test function $g$.
  In particular, this inequality would hold for the indicator function of $\Delta_K(\XX)$, so that
  \begin{equation*}
    0 < \int_{\Delta_K(\XX)} d\nu_{K,\XX} \leq \int_{\Delta_K(\XX)} d\nu_{K,\YY}.
  \end{equation*}
  But this is clearly absurd---since $\nu_{K,\YY}$ is absolutely continuous with respect to Lebesgue measure on its moment polytope $\Delta_K(\YY)$, the assumption implies that $\Delta_K(\XX)$ is a set of measure zero for $\nu_{K,\YY}$; thus the right-hand side integral is zero.
\end{proof}

\begin{cor}
  \label{cor:mul/dimension lemma}
  Let $\dim \XX < \dim \YY$. Then, $\XX \subseteq \YY$ implies $d_{K,\XX} < d_{K,\YY}$.
\end{cor}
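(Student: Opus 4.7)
The plan is to combine Lemma~\ref{lem:mul/exponent} (which gives $d_{K,\ZZ} = \dim \ZZ - R_\ZZ$) with the observation that $R_\XX \leq R_\YY$ whenever $\XX \subseteq \YY$, and then split into two cases depending on whether or not $R_\XX = R_\YY$. The ingredients are already collected; the corollary is really just a case analysis.

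First I would observe that the inclusion $\XX \subseteq \YY$ implies $\Delta_K(\XX) \subseteq \Delta_K(\YY)$ by \eqref{eq:mul/moment polytope criterion}, and therefore any positive root $\alpha \in R_{K,+}$ with $H_\alpha \not\perp \Delta_K(\XX)$ automatically satisfies $H_\alpha \not\perp \Delta_K(\YY)$. Hence
\[
  R_\XX \leq R_\YY.
\]

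Now I would distinguish two cases. If $R_\XX = R_\YY$, then by Lemma~\ref{lem:mul/exponent} and the assumption $\dim \XX < \dim \YY$,
\[
  d_{K,\XX} = \dim \XX - R_\XX < \dim \YY - R_\YY = d_{K,\YY},
\]
and we are done. If on the other hand $R_\XX < R_\YY$, then together with $\Delta_K(\XX) \subseteq \Delta_K(\YY)$ we may invoke Lemma~\ref{lem:mul/polytope smaller} to conclude $\dim \Delta_K(\XX) < \dim \Delta_K(\YY)$, after which Lemma~\ref{lem:mul/different lebesgue} gives $d_{K,\XX} < d_{K,\YY}$.

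There is no real obstacle: the content is entirely in the three preceding lemmas, and the role of the hypothesis $\dim \XX < \dim \YY$ is precisely to handle the case in which the moment polytopes happen to have the same affine hull (so that Lemma~\ref{lem:mul/different lebesgue} is not directly applicable). The only point worth double-checking is the monotonicity $R_\XX \leq R_\YY$, which is immediate from the definition of $R_\ZZ$ and the inclusion of moment polytopes.
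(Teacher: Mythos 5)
Your proposal is correct and follows essentially the same route as the paper's own proof: the same case split on whether $R_\XX = R_\YY$, with \autoref{lem:mul/exponent} handling the equal case and \autoref{lem:mul/polytope smaller} together with \autoref{lem:mul/different lebesgue} handling the strict case. Your explicit justification of $R_\XX \leq R_\YY$ from the inclusion of moment polytopes is a point the paper dismisses as "clear," so spelling it out does no harm.
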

\begin{proof}
  Clearly, $\XX \subseteq \YY$ implies that $\Delta_K(\XX) \subseteq \Delta_K(\YY)$ and $R_\XX \leq R_\YY$.
  If $R_\XX = R_\YY$ then the assertion follows directly from \autoref{lem:mul/exponent}, since then
  \begin{equation*}
    d_{K,\XX} = \dim \XX - R_\XX < \dim \YY - R_\YY = d_{K,\YY}.
  \end{equation*}
  Otherwise, if $R_\XX < R_\YY$ then \autoref{lem:mul/polytope smaller} and \autoref{lem:mul/different lebesgue} together imply that $d_{K,\XX} < d_{K,\YY}$.
\end{proof}

As we had explained in the introduction, the significance of \autoref{cor:mul/dimension lemma} is that the multiplicity measures do not directly lead to a new criterion for obstructions that goes beyond what is provided by the moment polytope.
From a conceptual point of view, the exponents $d_{K,\XX}$ and $d_{K,\YY}$ each capture the order of growth, while the densities of the measures $\nu_{K,\XX}$ and $\nu_{K,\YY}$ encode (smoothed versions) of the corresponding leading-order coefficients---but the latter are incomparable if the orders of growth differ!

\section{Discussion}
\label{sec:mul/discussion}

In a recent preprint, our algorithm for computing Kronecker coefficients has been analyzed in some detail and it has also been shown that it is possible to decide positivity in linear time for Young diagrams of bounded height \cite{PakPanova14}.
The problem of efficiently deciding the positivity of Kronecker coefficients for general Young diagrams, though, is still wide open, partly because we do not know of an effective combinatorial description akin to the honeycomb model for the Littlewood--Richardson coefficients \cite{KnutsonTao99}.

\bigskip

In the past chapters, we have studied the one-body reduced density matrix in quantum mechanics from a variety of perspectives.
In \autoref{ch:kirwan}, we gave a new solution to the one-body quantum marginal in terms of ``Ressayre-type inequalities''.
In \autoref{ch:slocc}, we studied multipartite entanglement from the perspective of the one-body marginals.
In \autoref{ch:dhmeasure}, we showed how the joint distribution of the local eigenvalues can be computed by reducing to the distribution of diagonal entries; the discrete analogue of this reduction leads to an efficient algorithm for the subgroup restriction problem as we have seen in this \autoref{ch:multiplicities}.
Common to all our results is the remarkable role played by the maximal unipotent subgroups and the interplay between highest weights and weights, which in each case allowed us to reduce from a non-Abelian quantum-mechanical problem to an Abelian one, and thus in essence to the classical combinatorics of weights.

\chapter{The Search for Further Entropy Inequalities}
\label{ch:entropy}

In the second part of this thesis, we go beyond the study of one-body reduced density matrices and consider general quantum marginals.
Motivated by the fundamental role of entropy in physics and information theory, we start by introducing in this chapter the problem of determining the linear inequalities that constrain the von Neumann entropy of the marginals of a multipartite quantum state \cite{Pippenger03}.
The strong subadditivity of the von Neumann entropy is perhaps the most important such inequality \cite{LiebRuskai73}, and an indispensable tool in quantum statistical physics and quantum information theory \cite{OhyaPetz93}.
The discovery of any further entropy inequality would be considered a major breakthrough, and it would shed further light on the general quantum marginal problem.

The following introduction is partially adapted from \cite{GrossWalter13}.
We refer to \cite{CoverThomas06, Yeung02} and \cite{NielsenChuang04} for comprehensive introductions to classical and quantum entropy.

\subsection*{The Classical Entropy Cone}

Let us first consider the classical situation.

\begin{dfn}
  The \emph{Shannon entropy}\index{entropy!Shannon}\index{Shannon entropy}\nomenclature[QH(X)]{$H(X)$}{Shannon entropy of random variable $X$} of a random variable $X$ with finitely many outcomes is given by
  \[H(X) = H(p) = -\sum_x p_x \log p_x \geq 0,\]
  where $p = (p_x)$ is the probability distribution of $X$ (i.e., $p_x$ is the probability of an outcome $x$).
\end{dfn}

Given a collection of random variables $X_1, \dots, X_n$ defined on a common probability space, we can then consider the Shannon entropy $H(X_I)$ of any non-empty subset $X_I = (X_i)_{i \in I}$ of the variables.\nomenclature[QX_I]{$X_I$}{subset of random variables with indices in $I$}
These entropies are not independent: For example, \emph{monotonicity}\index{monotonicity!Shannon entropy} asserts that the Shannon entropy can never decrease if more
random variables are taken into account:
\begin{equation}
\label{eq:entropy/monotonicity}
  H(X_{I\cup J}) - H(X_I) \geq 0
\end{equation}
A second example is the \emph{strong subadditivity of the Shannon entropy}\index{strong subadditivity!Shannon entropy}:
\begin{equation}
\label{eq:entropy/ssa shannon}
  H(X_{I}) + H(X_J) - H(X_{I\cap J}) - H(X_{I \cup J}) \geq 0
\end{equation}
Equivalently, the conditional entropy and the conditional mutual information are always non-negative.
To study the entropies of subsystems systematically, we define the \emph{classical entropy region}\index{entropy region!classical}\nomenclature[QC_n]{$\calC_n$}{classical entropy region}
\begin{align*}
  \calC_n := \{ (H(X_I))_{\emptyset \neq I \subseteq \{1, \dots, n\}} :\; &X_1, \dots, X_n \text{ random variables} \\
  &\text{with finitely many outcomes} \} \subseteq \RR^{2^n-1}.
\end{align*}
In general, the set $\calC_n$ has a complicated geometrical structure \cite{ZhangYeung97}.
However, its closure $\overline{\calC_n}$ is a convex cone \cite{ZhangYeung97}, which we call the \emph{classical entropy cone}\index{entropy cone!classical}\nomenclature[QC_n2]{$\overline{\calC_n}$}{classical entropy cone}.
Like any closed convex cone, $\overline{\calC_n}$ can be described by linear inequalities.
To see this, consider the dual cone\nomenclature[QC_n3]{$\calC_n^*$}{dual cone of linear entropy inequalities}
\begin{equation*}
  \overline{\mathcal C_n}^* = \calC_n^* = \{ (\nu_I) \in \RR^{2^n-1} : \sum_I \nu_I H(X_I) \geq 0  \quad \forall X_1, \dots, X_n \}.
\end{equation*}
Each element $(\nu_I) \in \calC_n^*$ can be identified with an \emphindex{entropy inequality} $\sum_I \nu_I H(X_I) \geq 0$ that is satisfied by the Shannon entropy.
The bipolar theorem now asserts that the bidual cone $\calC_n^{**}$ is equal to $\overline{\calC_n}$ (e.g., \cite{Rockafellar72}).
Thus the set of entropy inequalities (or even just its extreme rays) determines the classical entropy region up to closure \cite{Pippenger86}.
Mat\'{u}\u{s} has shown that the classical entropy region contains the relative interior of the classical entropy cone, so that
$\relint \overline{\calC_n} \subseteq \calC_n \subseteq \overline{\calC_n}$.

Monotonicity \eqref{eq:entropy/monotonicity} and strong subadditivity \eqref{eq:entropy/ssa shannon} together span the polyhedral cone of \emph{entropy inequalities of Shannon-type}\index{entropy inequality!Shannon-type}.
For any given candidate inequality it can be automatically checked by using linear programming if it is an entropy inequality of Shannon-type \cite{YeungYan, Yeung97}. %
For a long time, these were the only known inequalities. Equivalently, it was not known if monotonicity and strong subadditivity were the only constraints on a non-negative vector $(h_I)$ to be approximable by the Shannon entropies of random variables.

In the seminal work \cite{ZhangYeung98}, Zhang and Yeung have shown that for $n\geq 4$ random variables there exist entropy inequalities that are \emph{not} of Shannon-type. In particular, they have proved that
\begin{equation}
  \label{eq:entropy/zhang-yeung}
  2 I(X_3 : X_4) \leq I(X_1 : X_2) + I(X_1 : X_{34}) + 3 I(X_3 : X_4 | X_1) + I(X_3 : X_4 | X_2)
\end{equation}
is a linear entropy inequality that is not of Shannon-type.
Here, $I(X_I : X_J) = S(X_I) + S(X_J) - S(X_{I \cup J})$ and $I(X_I : X_J | X_K) := H(X_{I \cup K}) + H(X_{J \cup K}) - H(X_{I \cup J \cup K}) - H(X_K)$ are the \emph{Shannon (conditional) mutual information}\index{mutual information!Shannon}\index{mutual information!Shannon conditional}.
In fact, there are infinitely many independent inequalities that are not of Shannon-type. Thus the classical entropy cone $\calC_n$ is not polyhedral for $n \geq 4$ \cite{Matus07,DoughertyFreilingZeger11}.
Its general structure is still only poorly understood, and any improved understanding should have direct applications to network information theory \cite{Yeung02, SunJafar13}.

\subsection*{The Quantum Entropy Cone}

We now consider the situation in quantum mechanics. Here, the natural analogue of the Shannon entropy is the von Neumann entropy:

\begin{dfn}
  The \emph{von Neumann entropy}\index{entropy!von Neumann}\nomenclature[QS(rho)]{$S(\rho)$}{von Neumann entropy of quantum state $\rho$} of a density matrix $\rho$ on a finite-dimensional Hilbert space is given by
  \[S(\rho) = -\tr \rho \log \rho \geq 0.\]
\end{dfn}
\noindent By the spectral theorem, the von Neumann entropy $S(\rho)$ is equal to $H(\vec\lambda)$, the Shannon entropy of the spectrum $\vec\lambda$ of $\rho$, which is a probability distribution. It is a concave function of $\rho$.

Now let $\rho$ be a density matrix describing the state of a quantum system of $n$ distinguishable particles with tensor-product Hilbert space $\calH = \bigotimes_{i=1}^n \calH_i$. The state of any subset $I \subseteq \{1,\dots,n\}$ of the particles is described by the reduced density matrix $\rho_I = \tr_{I^c} \rho$ formed by tracing out the Hilbert space of the other particles.
We define the \emph{quantum entropy region}\index{entropy region!quantum}\nomenclature[QQ_n]{$\calQ_n$}{quantum entropy region} to be \cite{Pippenger03}
\begin{align*}
  \calQ_n := \{ (S(\rho_I))_{\emptyset \neq I \subseteq \{1,\dots,n\}} \;:\; &\rho \text{ density matrix on an $n$-fold tensor product} \\
  &\text{of finite-dimensional Hilbert spaces} \} \subseteq \RR^{2^n-1}.
\end{align*}
Clearly, $\calQ_n \supseteq \calC_n$, since any joint probability distribution can be considered as a multipartite density matrix.
In general, $\calQ_n$ has a complex structure and in particular is neither closed nor convex \cite{LindenWinter05}.
But its closure is again a convex cone, called the \emph{quantum entropy cone}\index{entropy cone!quantum}\nomenclature[QQ_n2]{$\overline{\calQ_n}$}{quantum entropy cone}.
We recall a proof of this important fact:

\begin{lem}[\cite{Pippenger03}]
  The quantum entropy cone $\overline{\calQ_n}$ is indeed a convex cone.
\end{lem}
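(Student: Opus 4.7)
The plan is to verify the two defining properties of a convex cone for $\overline{\calQ_n}$: closure under addition and under multiplication by non-negative reals. I would work first at the level of $\calQ_n$ itself, where the constructions are most transparent, and then pass to the closure.

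For the additive structure, first note that the origin lies in $\calQ_n$: any pure product state $\ket{\psi_1} \otimes \cdots \otimes \ket{\psi_n}$ has pure marginals on every subset, and thus entropy vector equal to zero. For closure under addition, given states $\rho$ on $\bigotimes_{i=1}^n \calH_i$ and $\sigma$ on $\bigotimes_{i=1}^n \calK_i$ realizing entropy vectors $h$ and $h'$, I would consider the tensor product $\rho \otimes \sigma$ as a state on $\bigotimes_{i=1}^n (\calH_i \otimes \calK_i)$, where the new ``local'' Hilbert space of the $i$-th particle is $\calH_i \otimes \calK_i$. The marginal on any subset $I$ is then $\rho_I \otimes \sigma_I$, and by the additivity of the von Neumann entropy on tensor products, $S(\rho_I \otimes \sigma_I) = S(\rho_I) + S(\sigma_I) = h_I + h'_I$. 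Hence $h + h' \in \calQ_n$, and iterating with $\rho = \sigma$ yields $kh \in \calQ_n$ for every $k \in \ZZ_{\geq 0}$. Passing to the closure, $\overline{\calQ_n}$ inherits closure under addition by continuity, and in particular contains all non-negative integer multiples of its elements.

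To upgrade this to a genuine convex cone, I then need closure under multiplication by arbitrary non-negative reals. The strategy is to approximate $\lambda \geq 0$ by rationals $p_k/q_k \to \lambda$ and to construct, for each $k$, states whose entropy vectors approach $(p_k/q_k)h$. While tensor powers straightforwardly realize the vector $p_k h$, the division by $q_k$ requires more care. Here one invokes a density/approximation argument, for instance by working with typical subspaces of $\rho^{\otimes N}$ for large $N$ or by interpolating between $\rho$ and a pure product state through a suitable family of mixed states and exploiting continuity of the von Neumann entropy; the closure then absorbs the discrepancies. Combined with closure under addition, any $\RR_{\geq 0}$-linear combination of elements of $\calQ_n$ lies in $\overline{\calQ_n}$, which is therefore a (closed) convex cone.

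The main obstacle is precisely this last step, namely the fractional rescaling. Integer multiples are handed to us for free by additivity on tensor products, but genuinely shrinking an entropy vector by a factor less than one is delicate because the prescribed scalings on all $2^n-1$ marginals must be achieved simultaneously from a single global state. Any successful construction must respect the joint structure of the reduced density matrices rather than act on each marginal independently, which is why the scaling is most naturally established only after taking the closure and exploiting approximation arguments.
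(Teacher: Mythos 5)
Your additivity step is correct and is exactly the paper's argument: tensor $\rho \otimes \rho'$ with the $i$-th local Hilbert space taken to be $\calH_i \otimes \calH'_i$, so that every marginal factorizes and additivity of the von Neumann entropy gives $h + h' \in \calQ_n$.

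However, the scaling step is a genuine gap: you correctly identify fractional rescaling as the crux, but you never actually carry it out — "typical subspaces", "interpolating between $\rho$ and a pure product state", and "the closure absorbs the discrepancies" are gestures, not a construction, and the most natural reading of your interpolation idea fails. If you simply mix $\rho$ (or $\rho^{\otimes k}$) with a product state \emph{on the same Hilbert space}, the marginals $\rho'_I$ are generic mixtures and you only get concavity bounds on $S(\rho'_I)$, not the exact scaling you need simultaneously for all $2^n-1$ subsets. The paper's fix is a small but essential orthogonality trick: enlarge each local space to $\CC\ket 0 \oplus \calH_i^{\otimes k}$ and set $\rho' := (1-p)\proj 0^{\otimes n} + p\,\rho^{\otimes k}$, so that every reduced state is block-diagonal, $\rho'_I = (1-p)\proj 0^{\otimes I} + p\,\rho_I^{\otimes k}$, and hence \emph{exactly} $S(\rho'_I) = h(p) + p\,k\,S(\rho_I)$ with $h$ the binary entropy. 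Choosing $p = \lambda/k$ and letting $k \to \infty$ makes $h(p) \to 0$ uniformly over all $I$, which gives $\lambda\,\calQ_n \subseteq \overline{\calQ_n}$ for every real $\lambda \geq 0$ in one stroke — no rational approximation of $\lambda$ or multi-stage limiting argument is needed. Without some such explicit construction (or a worked-out quantitative version of one of your suggested alternatives), your proof of the cone property is incomplete.
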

\begin{proof}
  (1) We first show that $\calQ_n + \calQ_n \subseteq \calQ_n$:
  Let $\rho$, $\rho'$ be quantum states on $\bigotimes_{i=1}^n \calH_i$ and $\bigotimes_{i=1}^n \calH'_i$, respectively. Then
  $\rho'' := \rho \otimes \rho'$ is a quantum state on $\bigotimes_{i=1}^n \calH_i \otimes \calH'_i$, and $\rho''_I = \rho_I \otimes \rho'_I$ for each subset $I \subseteq \{1,\dots,n\}$.
  By additivity of the von Neumann entropy, $S(\rho''_I) = S(\rho_I) + S(\rho'_I)$, which shows the claim.

  (2) We now show that $\lambda \calQ_n \subseteq \overline{\calQ_n}$ for all $\lambda \geq 0$:
  For this, let $\rho$ be a quantum state on $\bigotimes_{i=1}^n \calH_i$.
  Let $p \in [0,1]$ and $k \in \NN_{>0}$ and consider the quantum state $\rho' := (1-p) \proj 0^{\otimes n} + p \rho^{\otimes k}$ on $\bigotimes_{i=1}^n \left( \CC \ket 0 \oplus \calH_i^{\otimes k} \right)$. Then, $\rho'_I = (1-p) \proj 0^{\otimes I} + p \rho_I^{\otimes k}$, so that
  \begin{equation*}
    S(\rho'_I) = h(p) + p k S(\rho_I),
  \end{equation*}
  where $h(p) = -p \log p - (1-p) \log (1-p)$ is the binary entropy function.
  Let $\varepsilon > 0$.
  Since $h(p)$ is continuous and $h(0) = 0$, we may choose $k$ large enough such that $h(p) \leq \varepsilon$ for $p = \lambda / k$, and hence
  \begin{equation*}
    \abs{S(\rho'_I) - \lambda S(\rho_I)} \leq \varepsilon.
  \end{equation*}
  Since $\varepsilon > 0$ was arbitrary, we may conclude that $(\lambda S(\rho_I)) \in \overline{\calQ_n}$.
  This establishes the second claim.

  By taking limits, points (1) and (2) together imply that $\overline{\calQ_n}$ is a convex cone.
\end{proof}

The most immediate difference to the classical case is that the von Neumann entropy is no longer monotonic:
Global quantum states can exhibit less entropy than their reductions (a signature of entanglement), which also shows that $\calQ_n \supsetneq \calC_n$.
Instead, the von Neumann entropy satisfies \emph{weak monotonicity}\index{weak monotonocity!von Neumann entropy}:
\begin{equation}
\label{eq:entropy/weak monotonicity}
  S(\rho_I) + S(\rho_J) - S(\rho_{I\setminus J}) - S(\rho_{J \setminus I}) \geq 0
\end{equation}
\emph{Strong subadditivity}\index{strong subadditivity!von Neumann entropy}, however, famously remains valid for quantum entropies \cite{LiebRuskai73}:
\begin{equation}
\label{eq:entropy/ssa von neumann}
  S(\rho_I) + S(\rho_J) - S(\rho_{I\cap J}) - S(\rho_{I \cup J}) \geq 0
\end{equation}
In fact, \eqref{eq:entropy/weak monotonicity} and \eqref{eq:entropy/ssa von neumann} are easily shown to be equivalent by the process of purification \cite{Lieb75}. Since purification is a non-linear construction, this does not imply equivalence on the level of the entropy regions; but see the discussion in \cite{LindenMatusRuskaiEtAl13}.

Just as in the classical case, it is of fundamental interest to determine the linear inequalities satisfied by the von Neumann entropies $S(\rho_I)$ of subsystems.
In fact, it is a major open problem to decide if there are any further entropy inequalities besides weak monotonicity and strong subadditivity \cite{Pippenger03}.
Let us describe some partial progress from the literature:
For vectors $(s_I) \in \RR^{2^n-1}$ that are permutation-invariant under relabeling of the subsystems, it has been shown that $(s_I) \in \overline{\calQ_n}$ if and only if weak monotonicity and strong subadditivity are satisfied \cite{Pippenger03}.
In \cite{LindenWinter05,CadneyLindenWinter12} a class of so-called \emph{constrained inequalities} has been established, which in particular showed that $\calQ_n \subsetneq \overline{\calQ_n}$.
We also refer to \cite{Ibinson07} for numerical evidence that the Zhang--Yeung inequality \eqref{eq:entropy/zhang-yeung} might also hold for the von Neumann entropy.

\subsection*{Balanced Entropy Inequalities}

Instead of directly determining the quantum entropy cone, it is natural to approach the problem by asking which of the classical entropy inequalities might continue to hold for the von Neumann entropy.
Since the latter is no longer monotonic, we need to identify those classical entropy inequalities which do not ``involve'' monotonicity.
An interesting class of entropy inequalities introduced by Chan does precisely that:

\begin{dfn}
  An entropy inequality $\sum_I \nu_I H(X_I) \geq 0$ or $\sum_I \nu_I S(\rho_I) \geq 0$ is called \emph{balanced}\index{entropy inequality!balanced} \cite{Chan03}
  (also \emph{correlative} \cite{Han75})
  if
  \begin{equation*}
    \sum_{I \ni i} \nu_I = 0  \quad (\forall i=1,\dots,n).
  \end{equation*}
\end{dfn}
For example, strong subadditivity is balanced, while monotonicity is not.
Chan has shown that the classical entropy cone is determined by the set of balanced entropy inequalities together with monotonicity.
More precisely, he has proved that any Shannon entropy inequality can be decomposed uniquely into the form
\begin{equation}
\label{eq:entropy/balanced decomposition}
  \sum_I \nu_I H(X_I) + \sum_{i=1}^n r_i H(X_i|X_{i^c}) \geq 0,
\end{equation}
where the left-hand side is a \emph{balanced} entropy inequality and the right-hand side a conic combination of conditional entropies, i.e.\ all $r_i \geq 0$ \cite{Chan03}.
In other words, the dual cone $\mathcal C^*_n$ of Shannon entropy inequalities is a direct sum of the cone of balanced entropy inequalities and the cone spanned by monotonicity \eqref{eq:entropy/monotonicity}.
It is thus natural to consider the following problem.

\begin{pro} %
\label{pro:entropy/balanced}
  Which balanced Shannon entropy inequalities also hold true for the von Neumann entropy?
\end{pro}

We remark that the Zhang--Yeung inequality \eqref{eq:entropy/zhang-yeung} as well as the infinite families in \cite{Matus07,DoughertyFreilingZeger11} are balanced, since they are linear combinations of conditional mutual informations.
On the other hand, we note that there are unbalanced entropy inequalities that hold for the von Neumann entropy, e.g.\ weak monotonicity.
It can be argued that there is no direct quantum counterpart of the decomposition \eqref{eq:entropy/balanced decomposition} \cite{Majenz14}, which perhaps makes it unlikely that the problem of determining all linear entropy inequalities satisfied by the von Neumann entropy can be reduced to \autoref{pro:entropy/balanced}.

\subsection*{Discussion}

We conclude this introduction by mentioning some related avenues of investigation.
Entropic constraints are of interest not only for the Shannon and von Neumann entropy, but also for other kinds of entropies, e.g.\ differential entropies \cite{Chan03} and R\'{e}nyi entropies \cite{LindenMosonyiWinter13,CadneyHuberLindenEtAl13}.
Another interesting direction is to relax the notion of subsystems beyond the tensor-product case.
Instead of only considering partial traces over the factors of a tensor-product Hilbert space, we may consider marginals with respect to more general subalgebras of observables \cite{OhyaPetz93}. This leads directly to the study of entropic uncertainty relations \cite{BertaChristandlColbeckEtAl10, MaassenUffink88}, entropy power inequalities \cite{KoenigSmith14}, and entropy in space-time and might thus serve as a unifying framework for studying entropy in general quantum systems.
\chapter{Entropy Inequalities from Phase Space}
\label{ch:stabs}

In this chapter we study the entropy inequalities satisfied by two classes of quantum states---namely, \emph{stabilizer states} and \emph{Gaussian states} (the latter can be seen as continuous-variable counterparts of the former).
Both classes of states can exhibit intrinsically quantum features, such as multi-particle entanglement, but they possess enough structure to allow for a concise and computationally efficient description and so have proven to be extremely useful in quantum information theory and beyond. For example, the stabilizer formalism is a basic tool for constructing quantum error-correcting codes, while Gaussian states and transformations are routinely used in quantum optics (see, e.g., \cite{NielsenChuang04, WeedbrockPirandolaGarcia-PatronEtAl12}).
\emph{Quantum phase-space methods} have been built around both classes of states, and it is this point of view we will exploit here.

The results in this chapter have been obtained in collaboration with David Gross, and they have appeared in \cite{GrossWalter13}.

\section{Summary of Results}

The starting point for our work is the Wigner function, which for Gaussian states as well as for stabilizer states in odd dimensions $d$ is a bona fide probability distribution on the classical phase space (the case of even $d$ requires some more care, see \autoref{thm:stabs/main theorem} below).
For $n$ bosonic modes, this phase space is given by $\RR^{2n}$, while for $n$ systems of local dimension $d$, it is the finite group $\ZZ_d^{2n}$.
In both cases, it is the direct sum of the single-particle or single-mode phase spaces.
In Sections~\ref{sec:stabs/discrete phase space} and \ref{sec:stabs/stabilizer states} we give a self-contained account of stabilizer states in the discrete phase-space formalism.
We may thus define random variables $X_1,\dots,X_n$ on the phase space, jointly distributed according to the Wigner function of the given quantum state $\rho$. Here, $X_i$ denotes the component in the phase space of the $i$-th particle or mode.
The random variables $X_1, \dots, X_n$ constitute our \emph{classical model}.
This construction is compatible with reduction:
The marginal probability distribution of a subset of variables $X_I = (X_i)_{i \in I}$ is given precisely by the Wigner function of the reduced density matrix $\rho_I$.\nomenclature[Qrho_I]{$\rho_I$}{reduced density matrix of subsystems $I$}

Our crucial observation then is that certain quantum entropies are simple functions of corresponding classical entropies.
More precisely, in the case of stabilizer states (\autoref{sec:stabs/classical model}), we find that
\begin{equation}
  \label{eq:stabs/stab entropies}
  S(\rho_I) = H(X_I) - \abs I \log(d).
\end{equation}
Therefore, if $\sum_I \nu_I H(X_I) \geq 0$ is a \emph{balanced} entropy inequality satisfied by the Shannon entropies of the random variables $X_I$ then the same inequality is also satisfied by the von Neumann entropies of the quantum states $\rho_I$:
\begin{equation*}
  \sum_I \nu_I S(\rho_I) = \sum_I \nu_I H(X_I) - \underbrace{\sum_I \nu_I \abs I}_{=0} \log(d) \geq 0
\end{equation*}
In particular, \emph{the von Neumann entropy of stabilizer states respects all balanced Shannon entropy inequalities}, such as the inequalities of non-Shannon type found in \cite{ZhangYeung98,Matus07,DoughertyFreilingZeger11}.
This completely solves \autoref{pro:entropy/balanced} for the class of stabilizer states.

Our construction can also be understood in the group-theoretical framework of \cite{ChanYeung02}.
Here it is well-known that there are inequalities for the Shannon entropy which do not hold for arbitrary random variables, but only for random variables constructed from certain classes of subgroups \cite{LiChong07}.
By analyzing the construction above, we show that the von Neumann entropy for stabilizer states similarly respects a further entropy inequality which does \emph{not} hold for arbitrary random variables (and hence quantum states)---namely the \emphindex{Ingleton inequality} \cite{LiChong07}, which is the balanced inequality
\begin{equation}
  \label{eq:stabs/ingleton}
  I_\rho(I:J|K) + I_\rho(I:J|L) + I_\rho(K:L) - I_\rho(I:J) \geq 0.
\end{equation}
Here, $I_\rho(I:J) = S(\rho_I) + S(\rho_J) - S(\rho_{I \cup J})$ and $I_\rho(I:J|K) = S(\rho_{I \cup K}) + S(\rho_{J \cup K}) - S(\rho_K) - S(\rho_{I \cup J \cup K})$ are the \emph{quantum (conditional) mutual information}\index{mutual information!quantum}\index{mutual information!quantum conditional}.

We find it instructive to understand how the above classical model manages to respect monotonicity \eqref{eq:entropy/monotonicity}, while the quantum state may violate it.
For example, since stabilizer states can be entangled (even maximally so, see \autoref{exl:stabs/maximally entangled} below), $S(\rho_1) = S(\rho_2) = 1$ and $S(\rho_{12})=0$ are perfectly valid entropies of a stabilizer state which certainly violate monotonicity.
Equation~\eqref{eq:stabs/stab entropies} states that the classical model is \emph{more highly mixed} than the quantum one, in the sense that the entropy associated with a subset $I$ is higher by an amount of $\abs I$ $d$its.
That is precisely the maximal amount by which quantum mechanics can violate monotonicity.

\bigskip

In the case of Gaussian states (\autoref{sec:stabs/gaussian states}), the random variables $X_1,\dots,X_n$ have a multivariate normal distribution on $\RR^{2n}$, and we show that
\begin{equation*}
  S_2(\rho_I) = h_\alpha(X_I) - \abs I \left( \log \pi -  \frac {\log \alpha} {1-\alpha} \right),
\end{equation*}
where $S_2(\rho) = - \log \tr \rho^2$ is the \emph{quantum R\'{e}nyi-2 entropy}\index{R\'{e}nyi entropy!quantum}\nomenclature[QS_2(rho)]{$S_2(\rho)$}{R\'{e}nyi-2 entropy of quantum state $\rho$}; $h_\alpha$ is the \emph{differential R\'{e}nyi-$\alpha$ entropy}\index{R\'{e}nyi entropy!differential}\nomenclature[Qh_alpha(X)]{$h_\alpha(X)$}{differential R\'{e}nyi-$\alpha$ entropy of random variable $X$}, defined by $h_\alpha(X) = (1-\alpha)^{-1} \log \int p(x)^\alpha dx$ for any positive $\alpha \neq 1$, with $p(x)$ the probability density of the random variable $X$ with respect to Lebesgue measure.
In the limiting case $\alpha \rightarrow 1$, we recover a formula involving the \emph{differential Shannon entropy}\index{Shannon entropy!differential} $h(X) = -\int p(x) \log p(x)$,
which has previously appeared in \cite{AdessoGirolamiSerafini12}, attributed to Stratonovich:
\begin{equation*}
  S_2(\rho_I) = h(X_I) - \abs I \log ( \pi e )
\end{equation*}
Thus, R\'{e}nyi-2 entropies of Gaussian states respect all balanced entropy inequalities that hold for the Shannon entropies of multivariate normal distributions. The latter have been investigated in the literature (see, e.g., \cite{HoltzSturmfels07,ShadbakhtHassibi11}).

\bigskip

In \autoref{sec:stabs/discussion} we discuss the relation between our results for stabilizer states and Gaussian states and point towards further avenues of investigations.

\subsection*{Related Work}

Independently of the work presented in this chapter, Linden, Ruskai, and Winter have published an analysis of the entropy cone generated by stabilizer states \cite{LindenMatusRuskaiEtAl13}.
Their methods -- focusing on group-theoretical constructions -- are conceptually complementary to our phase-space approach.
\cite{LindenMatusRuskaiEtAl13} contains a complete characterization of the entropy cone generated by four-party stabilizer states.
The paper also lists further examples of inequalities which, like the Ingleton Inequality, are respected by stabilizer states, even though there are classical distributions violating it.
While not originally stated explicitly, their results also imply that all balanced inequalities remain valid for stabilizer states (see Theorem 11 in \cite{LindenMatusRuskaiEtAl13} and discussion thereafter).

\section{Discrete Phase Space}
\label{sec:stabs/discrete phase space}

In this section, we present a self-contained account of Weyl operators and stabilizer states in the discrete phase-space picture.
This section does not contain original results. All statements could be found in some form in \cite{Gottesman96, Appleby05, Gross06, Beaudrap13, KuengGross13}, albeit not in a unified language.

\subsection*{Discrete Symplectic Geometry}
Let $d > 1$ be an integer and let $\ZZ_d=\ZZ/d\ZZ$ be the ring of (congruence classes of) integers modulo $d$.
The \emph{discrete phase space}\index{phase space!discrete} for $n$ particles with local dimension $d$ is by definition $V = \bigoplus_{i=1}^n V_i = \ZZ_d^{2n}$, the free $\ZZ_d$-module of rank $2n$. Given a point $v \in V$, we write $v_i = (p_i,q_i) \in V_i = \ZZ_d^2$ for its components. Consider the bilinear form $\omega \colon V \times V \rightarrow \ZZ_d$ defined by
\[\omega(v, v') = \sum_{i=1}^n p_i q_i' - q_i p_i'.\]
It is non-degenerate and totally isotropic, i.e.\ $\omega(v, v) = 0$ for all $v \in V$.
If $d$ is prime then $V$ is simply a symplectic vector space over the finite field $\FF_d = \ZZ_d$. We will also in the general case refer to $\omega$ as the \emph{symplectic form}\index{symplectic form!$\ZZ^{2n}_d$}\nomenclature[Tomega]{$\omega(v, v')$}{symplectic form of $\ZZ^{2n}_d$}.

A \emph{character}\index{character!finite Abelian group} of a finite Abelian group $G$ is a group homomorphism $G \rightarrow \U(1)$. Denote by $\widehat G$ the set of characters, which is again an Abelian group with the operation of pointwise multiplication. It is called the \emph{(Pontryagin) dual}\index{Pontryagin dual}\nomenclature[RGhat]{$\widehat G$}{Pontryagin dual of finite Abelian group $G$} of $G$. It is well-known that $G \cong \widehat G$, although not canonically. For the cyclic group $G = \ZZ_d$, all characters are powers of $\chi_d(x) = e^{\frac{2\pi \mathrm{i}}{d} x}$.\nomenclature[Rxchar]{$\chi_d$}{character of cyclic group $\ZZ_d$}

\begin{lem}
  \label{lem:stabs/phase space characters}
  The characters of the additive group of the phase space $V$ are
  $\widehat V = \{ \chi_d(\omega(v, -)) : v \in V \} \cong V$.
\end{lem}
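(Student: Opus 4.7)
The plan is to exhibit $\Phi \colon V \to \widehat V$, $v \mapsto \chi_d(\omega(v,-))$, as a group isomorphism. Since $V$ and $\widehat V$ are finite of the same order $d^{2n}$ (the latter because $\widehat{\ZZ_d} \cong \ZZ_d$ and characters of a direct sum factor through the direct sum of dual groups), it suffices to check that $\Phi$ is a well-defined injective homomorphism.

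First, I would verify that each $\Phi(v) = \chi_d(\omega(v,-))$ is a character: bilinearity of $\omega$ gives
\begin{equation*}
  \Phi(v)(v' + v'') = \chi_d(\omega(v, v') + \omega(v, v'')) = \Phi(v)(v') \, \Phi(v)(v''),
\end{equation*}
where we have used that $\chi_d$ is itself a character of $\ZZ_d$. The same bilinearity, applied in the first argument, shows that $\Phi$ is a group homomorphism from $V$ to $\widehat V$.

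The main (and only nontrivial) step is injectivity, which is where the non-degeneracy of $\omega$ enters. Suppose $\Phi(v)$ is the trivial character, i.e.\ $\chi_d(\omega(v, v')) = 1$ for all $v' \in V$. Then $\omega(v, v') \equiv 0 \pmod d$ as elements of $\ZZ_d$. Testing this against the standard basis vectors of $V = \ZZ_d^{2n}$ -- writing $v = (p_i, q_i)_i$ and taking $v'$ to be the unit vector in a single $p_j$- or $q_j$-coordinate -- forces each $q_j = 0$ and each $p_j = 0$, hence $v = 0$. Thus $\ker \Phi = 0$, and by the order count above $\Phi$ is an isomorphism. The main obstacle is purely bookkeeping, namely keeping straight that the condition $\chi_d(x) = 1$ for $x \in \ZZ_d$ really does force $x = 0$ (not merely $x \equiv 0$ modulo $d$ in $\ZZ$), but this is automatic once one works inside the ring $\ZZ_d$.
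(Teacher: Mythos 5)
Your proof is correct and follows essentially the same route as the paper: injectivity of $v \mapsto \chi_d(\omega(v,-))$ from the injectivity of $\chi_d$ and the non-degeneracy of $\omega$, combined with the cardinality count $\abs{\widehat V} = \abs V$. The paper states this in two lines; you have simply filled in the routine verifications (character property, homomorphism, basis-vector check of non-degeneracy), all of which are sound.
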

\begin{proof}
  By injectivity of $\chi_d \colon \ZZ_d \rightarrow \U(1)$ and non-degeneracy of the symplectic form, each $v$ determines a different character.
  Thus we have found all $\abs{\widehat V} = \abs V$ many characters.
\end{proof}

The \emphindex{symplectic complement}\nomenclature[TMomega]{$M^\omega$}{symplectic complement of subspace $M \subseteq \ZZ_d^{2n}$} of a submodule $M \subseteq V$ is the submodule
$M^\omega = \{ v \in V : \omega(v, m) = 0 \;(\forall m \in M) \}$.
In the case of prime $d$, it is well-known that $\dim M + \dim M^\omega = \dim V$---however, in general the dimension (or rank) might not even be well-defined. Still there is an important analogue that holds in the general case:
\begin{lem}
  \label{lem:stabs/perp counting}
  $\abs M \abs{M^\omega} = \abs V$.
\end{lem}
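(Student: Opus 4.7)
The plan is to use Pontryagin duality together with the identification of characters of $V$ provided by \autoref{lem:stabs/phase space characters}. The key observation is that under the isomorphism $V \cong \widehat V$, $v \mapsto \chi_d(\omega(v,-))$, the symplectic complement $M^\omega$ corresponds exactly to the annihilator subgroup $M^\perp := \{\chi \in \widehat V : \chi|_M \equiv 1\}$. Indeed, $\chi_d(\omega(v,m)) = 1$ for all $m \in M$ iff $\omega(v,m) = 0$ for all $m \in M$ (here using injectivity of $\chi_d$ together with the fact that $\omega(v,m)$ ranges over a subgroup of $\ZZ_d$, so triviality of the character forces $\omega(v,m)=0$), which is precisely the defining condition for $v \in M^\omega$.

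First I would establish the general fact from Pontryagin duality that for any subgroup $M$ of a finite Abelian group $G$, one has $M^\perp \cong \widehat{G/M}$ canonically: a character of $G$ trivial on $M$ descends to a character of $G/M$, and conversely every character of $G/M$ pulls back to such a character. Then applying $|\widehat H| = |H|$ (valid for any finite Abelian group $H$) gives
\begin{equation*}
  |M^\omega| = |M^\perp| = |\widehat{V/M}| = |V/M| = |V|/|M|,
\end{equation*}
which is the desired identity.

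The main subtlety here is verifying that $M^\omega$ really matches $M^\perp$ when $d$ is not prime, since in that case $V$ is only a module over $\ZZ_d$ and the symplectic form does not enjoy the full non-degeneracy one expects in the vector-space setting. This is handled by the remark above: the values $\omega(v,m)$ as $m$ ranges over $M$ form a subgroup of $\ZZ_d$, and the restriction of $\chi_d$ to any nontrivial subgroup of $\ZZ_d$ is nontrivial (because $\chi_d$ is faithful on $\ZZ_d$). I expect this small point to be the only place where one has to be careful; everything else reduces to standard duality for finite Abelian groups.
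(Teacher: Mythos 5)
Your proof is correct and is essentially the paper's argument in different packaging: the paper directly checks that $v \mapsto \bigl([w] \mapsto \chi_d(\omega(v,w))\bigr)$ is an isomorphism $M^\omega \to \widehat{V/M}$, while you factor the same map through the identification $V \cong \widehat V$ of \autoref{lem:stabs/phase space characters} followed by the standard duality $M^\perp \cong \widehat{G/M}$, and then count. The point you flag as the only subtlety (that $\chi_d(\omega(v,m))=1$ for all $m\in M$ forces $\omega(v,m)=0$, by injectivity of $\chi_d$) is exactly the ingredient the paper also relies on, so the two proofs coincide in substance.
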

\begin{proof}
  We show that the group homomorphism
  \begin{equation*}
    \Phi \colon M^\omega \rightarrow \widehat{V/M},
    \quad
    v \mapsto \left( [w] \mapsto \chi_d(\omega(v, w)) \right)
  \end{equation*}
  is both injective and surjective (it is certainly well-defined).
  Injectivity follows immediately from the non-degeneracy of the symplectic form.
  For surjectivity, let $\tau \in \widehat{V/M}$. Then $w \mapsto \tau([w])$ is a character of $V$.
  By \autoref{lem:stabs/phase space characters}, there exists $v \in V$ such that $\tau([w]) = \chi_d(\omega(v, w))$. Since $\tau$ vanishes on $M$, $v \in M^\omega$.
  Thus $\Phi$ is an isomorphism, and we find that
  \begin{equation*}
    \abs{M^\omega} = \abs{\widehat{V/M}} = \abs{V/M} = \frac {\abs V} {\abs M}.
    \qedhere
  \end{equation*}
\end{proof}
The following important corollary follows from \autoref{lem:stabs/perp counting} and $M \subseteq (M^\omega)^\omega$:
\begin{cor}
  $(M^\omega)^\omega = M$.%
\end{cor}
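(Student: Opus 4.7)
The plan is to combine the trivial inclusion $M \subseteq (M^\omega)^\omega$ with a counting argument based on the preceding Lemma~\ref*{lem:stabs/perp counting}.

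First, I would verify the containment $M \subseteq (M^\omega)^\omega$. The symplectic form is antisymmetric: expanding the isotropy condition $0 = \omega(v+w,v+w) = \omega(v,w) + \omega(w,v)$ gives $\omega(w,v) = -\omega(v,w)$. Hence, for any $m \in M$ and any $v \in M^\omega$, we have $\omega(m,v) = -\omega(v,m) = 0$, which shows $m \in (M^\omega)^\omega$. This inclusion holds with no assumptions on $M$.

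Next, I would apply Lemma~\ref*{lem:stabs/perp counting} twice. Taking the lemma once for the submodule $M$ gives $\lvert M\rvert\,\lvert M^\omega\rvert = \lvert V\rvert$; applying it to the submodule $M^\omega$ gives $\lvert M^\omega\rvert\,\lvert (M^\omega)^\omega\rvert = \lvert V\rvert$. Dividing the two identities yields $\lvert M\rvert = \lvert (M^\omega)^\omega\rvert$.

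Finally, since $M$ and $(M^\omega)^\omega$ are finite sets of equal cardinality with $M \subseteq (M^\omega)^\omega$, they must coincide. I do not anticipate any obstacle here: the only subtle point is verifying antisymmetry of $\omega$ from the total isotropy hypothesis (which is needed because over $\ZZ_d$ with $d$ even one cannot pass between antisymmetry and isotropy as freely as over a field of odd characteristic), but this is immediate from the polarization identity above.
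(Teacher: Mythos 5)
Your proof is correct and is exactly the paper's argument: the paper derives the corollary from Lemma~\ref{lem:stabs/perp counting} together with the trivial inclusion $M \subseteq (M^\omega)^\omega$, which is precisely your counting-plus-inclusion route (your check of antisymmetry, while fine, is immediate here since $\omega(v,v')=\sum_i p_i q_i' - q_i p_i'$ is manifestly antisymmetric).
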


We call a submodule $M \subseteq V$ an \emphindex{isotropic submodule} if $M \subseteq M^\omega$, i.e.~if $\omega(m, m') = 0$ for all $m, m' \in M$.
Finally consider $V_I = \bigoplus_{i \in I} V_i$, the phase space of particles $I \subseteq \{ 1, \dots, n\}$. There is a natural way of \emph{restricting} a submodule $M$ to $V_I$: we set
\begin{equation*}
  M_I := M \cap V_I,
\end{equation*}
where $V_I$ is identified with a submodule of $V$ in the natural way.

\subsection*{Weyl Representation}

Following \cite{Appleby05,Beaudrap13}, we first define
\emphindex{Weyl operators}\nomenclature[QWPQ]{$W(P,Q)$}{Weyl operator for integers $P$, $Q$} for general integers $(P,Q) \in \ZZ^2$, not necessarily
in the range $\{0, \dots, d-1\}$. These are the unitaries
on $L^2(\ZZ_d) \cong \CC^d$ given by
\begin{equation*}
  (W({P,Q})\psi)(x) = \tau_{2d}(-P Q) \, \chi_d(P x) \, \psi(x - Q),
\end{equation*}
where $\tau_{2d}(R) = \chi_{2d}((d^2+1) R)$. %
For example, $W({1,0})$ is the $Z$-operator $\ket x \mapsto e^{\frac
{2\pi\mathrm{i}} d x} \ket x$, while $W({0,1})$ is the $X$-operator $\ket
x \mapsto \ket{x+1 \pmod d}$.
By direct computation \cite{Beaudrap13},
\begin{align}
  W({P,Q})W({P',Q'})  &=  \tau_{2d}(P Q'-Q P') \, W({P+P',Q+Q'}),  \label{eq:stabs/Z heisenberg} \\
  W({P,Q})^{-1}  &=  W({P,Q})^\dagger = W({-P,-Q}), \label{eq:stabs/Z adjoint} \\
  W({P,Q}) W({P',Q'})  &=  \chi_d(P Q'- Q P') \, W({P',Q'}) W({P,Q}). \label{eq:stabs/Z commutator}
\end{align}

We now introduce the Weyl operators $w(p,q)$ for congruence classes
$(p,q)\in\ZZ_d^2$.  It is here that the treatment of the odd and the
even-dimensional case diverges.

For $d$ odd, $\tau_{2d}(1) = \chi_d\big(\frac{d^2 + 1} 2\big)$ is a $d$-th root of unity, so that $W(P+d, Q) = W(P, Q+d) = W(P,Q)$.
In other words, $W$ is constant on congruence classes modulo $d$, so
we can directly define $w(p,q) := W(P,Q)$.
In fact, $2^{-1}:=\frac{d^2+1}2\in\ZZ$ is
the multiplicative inverse of $2$ modulo $d$, so that
\begin{align}
  \nonumber
  (w(p, q) \psi)(x) = \chi_d(p x - 2^{-1} p q) \, \psi(x - q) \\
  w(v) w(v') = \chi_d(2^{-1} \omega(v, v')) \, w(v+v').
  \label{eq:stabs/odd heisenberg}
\end{align}

For $d$ even, $\tau_{2d}(1) = \chi_{2d}(1)$ is a primitive $2d$-th root of unity (e.g., in the case of qubits $\tau_{2d}(1)=i$).
Equation~\eqref{eq:stabs/Z heisenberg} then implies that $W(P+d,Q)$ and $W(P,Q+d)$ are either $W(P,Q)$ or $-W(P,Q)$.
In order to fix the sign, we choose $w(p,q):=W(P,Q)$ where $(P,Q)$ is the unique preimage in $\{0,\dots,d-1\}^2 \subseteq \ZZ^2$.
Because $w$ and $W$ differ at most by a phase, \eqref{eq:stabs/Z heisenberg} still implies that $(p,q)\mapsto w(p,q)$ defines a projective representation of the additive structure of the phase space $\ZZ_d^2$.

In both the even and the odd case, it now follows
from \eqref{eq:stabs/Z adjoint} and \eqref{eq:stabs/Z commutator}
that
\begin{align}
  \label{eq:stabs/adjoint}
  w(v)^{-1} &= w(v)^\dagger = \pm w(-v), \\
  \label{eq:stabs/commutator}
  w(v) w(v') &= \chi_d(\omega(v, v')) \, w(v') w(v).
\end{align}

For $n$-particles, the phase space is the direct sum $V = \bigoplus_{i=1}^n V_i = \ZZ_d^{2n}$. We define its \emph{Weyl representation}\index{Weyl representation!discrete}\nomenclature[Qwv]{$w(v)$}{discrete Weyl representation of $\ZZ^{2n}_d$} on $(\CC^d)^{\otimes n}$ by the tensor product of the single-particle representations, $w(v) = \bigotimes_{i=1}^n w(v_i)$. In this way, the relations \eqref{eq:stabs/adjoint} and \eqref{eq:stabs/commutator} continue to hold. Moreover, it is easy to verify that
\begin{equation}
  \label{eq:stabs/trace}
  \tr w(v) = d^n \, \delta_{v,0}.
\end{equation}

\section{Stabilizer States in Phase Space}
\label{sec:stabs/stabilizer states}

To define stabilizer states, we start with a \emphindex{stabilizer group} $G$, i.e.\ a finite Abelian group whose elements are scalar multiples of Weyl operators on $(\CC^d)^{\otimes n}$, such that the only multiple of $\Id=w(0)$ contained in $G$ is $\Id$ itself.
With such a group we associate the operator
\begin{equation*}
  P = \frac 1 {\abs G} \sum_{g \in G} g.
\end{equation*}
From the fact that $G$ is a group, we deduce that $P^2=P$;
since all elements of $G$ are unitaries, $P=P^\dagger$;
and \eqref{eq:stabs/trace} implies that $\tr P=d^n/|G|$.
Hence $P$ projects onto a $\big(d^n/|G|\big)$-dimensional subspace, called the \emphindex{stabilizer code} of $G$.
Note that the stabilizer code is the subspace of all vectors that are stabilized by $G$.
The corresponding \emphindex{stabilizer state} is $\rho = \frac 1 {d^n} \sum_g g$.
We refer to \cite[\S{}10.5]{NielsenChuang04} for an introduction to the stabilizer formalism from the perspective of quantum information theory.

We now prove the central theorem that provides a description of stabilizer states in terms of discrete phase space:

\begin{thm}[Stabilizers in phase space]
  \label{thm:stabs/phase space}
  Let $V=\bigoplus_{i=1}^n V_i=\ZZ^{2n}_d$ be the phase space for $n$ particles with
  local dimension $d$, where $d > 1$ is an arbitrary integer.
  There is a one-to-one correspondence between isotropic submodules $M \subseteq V$ and equivalence classes $[\rho(M)]$
  of stabilizer states on $(\CC^d)^{\otimes n}$ under conjugation with Weyl operators. Moreover,
  \begin{align}
    [\rho(M)_I] &= [\rho(M_I)],
    \label{eq:stabs/reductions stabilizers} \\
    S([\rho(M)_I]) &= \abs I \log(d) - \log\ \abs{M_I}
    \label{eq:stabs/von neumann entropy stabilizers}
  \end{align}
  for all subsets $I \subseteq \{ 1, \dots, n \}$.
  If $d$ is odd then there is a canonical element $\rho(M)$\nomenclature[QrhoM]{$\rho(M)$}{stabilizer state corresponding to isotropic submodule $M$} in each equivalence class, given by
  \begin{equation}
    \label{eq:stabs/stabilizer state}
    \rho(M)
    = \frac 1 {d^n} \sum_{m \in M} w(m).
  \end{equation}
  It is compatible with reductions, i.e.\ $\rho(M)_I = \rho(M_I)$.
\end{thm}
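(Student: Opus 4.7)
I will prove the theorem by carefully unpacking the Weyl representation structure and then bookkeeping the phases that distinguish the odd and even case.

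The plan is to first set up a bijection between isotropic submodules and stabilizer equivalence classes by going through both directions. Given a stabilizer group $G$, each element can be written uniquely as $g = c_g w(v_g)$ with $c_g \in U(1)$. Since $G$ is Abelian, the commutation relation \eqref{eq:stabs/commutator} forces $\chi_d(\omega(v_g, v_{g'})) = 1$, hence $\omega(v_g, v_{g'}) \equiv 0 \pmod d$. Using \eqref{eq:stabs/Z heisenberg} together with closure of $G$ shows that $g \mapsto v_g$ is a group homomorphism; injectivity follows from the assumption that the only scalar multiple of $\Id$ in $G$ is $\Id$ itself, combined with \eqref{eq:stabs/trace}. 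Therefore $M := \{v_g : g \in G\} \subseteq V$ is an isotropic submodule. Conversely, given isotropic $M$, for odd $d$ the Weyl operators $w(m)$, $m \in M$, already form a group by \eqref{eq:stabs/odd heisenberg} (the cocycle $\chi_d(2^{-1}\omega(m,m'))$ vanishes on $M$), yielding the canonical stabilizer state \eqref{eq:stabs/stabilizer state}. For even $d$ one must choose phases $c_m$ such that $g_m := c_m w(m)$ satisfy $g_m g_{m'} = g_{m+m'}$; the cocycle obtained from \eqref{eq:stabs/Z heisenberg} restricted to $M$ is a $2$-coboundary precisely because $M$ is isotropic, so such phase assignments exist.

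Next I will analyze the role of the phases under Weyl conjugation. Using \eqref{eq:stabs/commutator} one computes
\begin{equation*}
  w(v) \left( \tfrac{1}{d^n} \sum_{m \in M} c_m w(m) \right) w(v)^\dagger
  = \tfrac{1}{d^n} \sum_{m \in M} c_m \, \chi_d(\omega(v, m)) \, w(m),
\end{equation*}
so conjugation by $w(v)$ twists the phases by the character $m \mapsto \chi_d(\omega(v,m))$ of $M$. By \autoref{lem:stabs/phase space characters} and \autoref{lem:stabs/perp counting}, the restriction map $\widehat V \to \widehat M$ is surjective with kernel $M^\omega$, so \emph{every} character of $M$ arises this way. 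Any two stabilizer states with the same underlying $M$ differ by such a character twist and are therefore related by Weyl conjugation; conversely, Weyl conjugation preserves $M$. This yields the claimed bijection between isotropic submodules and equivalence classes $[\rho(M)]$, and in the odd case the canonical choice $c_m \equiv 1$ gives \eqref{eq:stabs/stabilizer state}.

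For the reduction formula \eqref{eq:stabs/reductions stabilizers}, I use the tensor-product structure $w(m) = \bigotimes_i w(m_i)$ and the trace identity \eqref{eq:stabs/trace}, which gives $\tr_{I^c} w(m) = d^{|I^c|} \delta_{m_{I^c}, 0} \, w(m_I)$. Tracing out $I^c$ in $\rho(M) = \frac{1}{d^n} \sum_{m\in M} c_m w(m)$ therefore collapses the sum to $m \in M$ with $m_{I^c} = 0$, i.e.\ $m \in M_I$, yielding
\begin{equation*}
  \rho(M)_I = \frac{1}{d^{|I|}} \sum_{m \in M_I} c_m w(m),
\end{equation*}
which is a stabilizer state associated with the isotropic submodule $M_I$ (and equals $\rho(M_I)$ on the nose in the odd case). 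Finally, the entropy formula \eqref{eq:stabs/von neumann entropy stabilizers} follows because $\rho(M)_I = P_I / \tr P_I$ with $P_I = \frac{1}{|M_I|} \sum_{m \in M_I} c_m w(m)$ an orthogonal projector (as in the general construction at the start of the section), onto a subspace of dimension $d^{|I|}/|M_I|$; hence $\rho(M)_I$ has $d^{|I|}/|M_I|$ equal nonzero eigenvalues and $S(\rho(M)_I) = |I| \log d - \log |M_I|$.

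The main obstacle is the even-dimensional case: the map $(p,q) \mapsto w(p,q)$ is only a projective representation of $\ZZ_d^{2n}$, with a nontrivial cocycle taking values in $\{\pm 1\}$, so that lifting isotropic $M$ to an honest stabilizer group requires a nontrivial phase assignment. The conceptually clean way to handle this is to phrase everything up to Weyl conjugation from the outset, so that the ambiguity in the phase assignment is absorbed into the equivalence relation on stabilizer states; the proof of surjectivity of $\widehat V \to \widehat M$ via \autoref{lem:stabs/perp counting} is what makes this bookkeeping work uniformly for all $d$.
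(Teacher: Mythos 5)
Your proof is correct and follows essentially the same route as the paper: map stabilizer groups to isotropic submodules via the commutation relation and trace orthogonality, lift isotropic $M$ back to a stabilizer group, absorb the phase ambiguity into Weyl conjugation by identifying phase ratios with characters $\chi_d(\omega(v,\cdot))$, reduce by tracing Weyl operators, and read off the entropy from the projector rank $d^{\abs I}/\abs{M_I}$. The only real divergence is the even-$d$ lift: you assert that the restricted cocycle is a coboundary ``because $M$ is isotropic,'' which is the standard fact that symmetric $2$-cocycles on a finite Abelian group with values in $\U(1)$ are trivial, but you do not prove it; the paper instead gives the elementary explicit construction (decompose $M$ into cyclic factors and fix phases $\lambda_j$ on generators with $(\lambda_j w(m_j))^{d_j}=\Id$), which is exactly the content your cohomological appeal outsources, so you should either cite that classification of Schur multipliers of Abelian groups or reproduce the construction. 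Your counting argument for surjectivity of the restriction $\widehat V \rightarrow \widehat M$ (kernel $M^\omega$, sizes matched by the lemma $\abs M\,\abs{M^\omega}=\abs V$) is a clean dual of the paper's step of extending a character from $M$ to $V$ and is perfectly adequate.
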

\begin{proof}
  \emph{(1) From isotropic submodules to classes of stabilizer
  states:}
  Let $M \subseteq \ZZ_d^{2n}$ be an isotropic submodule.
  Since $M$ is a finite Abelian group, it can be written as a direct sum of cyclic groups,
  $M \cong \ZZ_{d_1} \oplus \dots \oplus \ZZ_{d_k}$.
  Let $m_j \in M$ be a generator of the $j$-th cyclic subgroup.
  Since $w(m_j)^{d_j} \propto w(0) = \Id$, we can choose phases $\lambda_j$ such that $(\lambda_j w(m_j))^{d_j} = \Id$.
  Define
  \begin{equation*}
    G = \{ \underbrace{\prod_{j=1}^k (\lambda_j w(m_j))^{x_j}}_{=:
    \mu_m \, w(m)} : m = \sum_j x_j m_j \in M \}, %
  \end{equation*}
  where the phases $\mu_m$ exist since $w(m)$ is a projective representation.
  Since $M$ is isotropic, \eqref{eq:stabs/commutator} implies that the Weyl operators $\{ w(m) : m \in M \}$ all commute.
  It follows that $G$ is closed under multiplication:
  we have that $\mu_m w(m) \mu_{m'} w(m') = \mu_{m+m'} w(m+m')$.
  Moreover, the only multiple of the identity in $G$ is $\Id$ itself.
  Thus the data $(M, \mu)$ defines a stabilizer group of cardinality $\abs M$ with corresponding stabilizer state $\rho(M, \mu) = \frac 1 {d^n} \sum_{m \in M} \mu_m w(m)$.
  This state depends on the phases $\mu_m$, which in turn resulted from our choice of generators $m_j$ and phases $\lambda_j$.
  A different choice would have resulted in another stabilizer group
  $G' = \{\nu_m w(m) \,|\, m\in M\}$
  and we have yet to show that the two groups are related by conjugation with a Weyl operator.
  To this end, we note that likewise $\nu_m w(m) \nu_{m'} w(m') = \nu_{m+m'} w(m+m')$. %
  It follows that $\tau(m) := \nu_m / \mu_m$ defines a character of $M$.
  Since $V$ is an Abelian group, this character can be extended to all of $V$ \cite[Proposition~2.2.1]{Bump10} and is therefore of the form $\tau(m) = \chi_d(\omega(v, m))$ for some $v \in V$ (\autoref{lem:stabs/phase space characters}).
  But then it follows from \eqref{eq:stabs/commutator} that
  \begin{equation*}
    w(v) \, \mu_m w(m) \, w(v)^\dagger
    = \tau(m) \mu_m w(m)
    = \nu_m w(m).
  \end{equation*}
  Consequently, $\rho(M, \nu)$ and $\rho(M, \mu)$ are related by
  conjugation with the Weyl operator $w(v)$.

  If $d$ is odd, then \eqref{eq:stabs/odd heisenberg} implies that $w(m) w(m')
  = w(m+m')$. It follows that $G := \{ w(m) : m \in M \}$ is a
  stabilizer group of cardinality $\abs M$, with corresponding
  stabilizer state
  $
    \rho(M) = \frac 1 {d^n} \sum_{m \in M} w(m).
  $
  This is the canonical representative \eqref{eq:stabs/stabilizer state} of
  the equivalence class of states associated with $M$.

  \emph{(2) Surjectivity:}
  Here, we show that
  our map from isotropic submodules to equivalence classes of
  stabilizer states is surjective. Let $G$ be a stabilizer group with
  corresponding stabilizer state $\rho$.  Equation~\eqref{eq:stabs/trace}
  implies that for each $g \in G$ there exists a unique $m_g \in V$
  such that $g \propto w(m_g)$.  Conversely, no two $m_g$ can be
  equal---
  otherwise, two group elements in $G$ would differ only by a phase
  and hence there would be a non-trivial multiple of $\Id$ in $G$.
  Define $M := \{ m_g \}$. Then $M$ is a submodule of $V$, since
  $m_g + m_h = m_{gh}$.
  Since $G$ is Abelian, all $w(m_g)$ commute and \eqref{eq:stabs/commutator} shows that $M$ is isotropic.
  Then $M$ is indeed a preimage of $[\rho]$, since by its very construction there exists a choice of phases by which we recover $G$ (namely $\mu_{m_g} = g \, w(m_g)^{-1}$).

  \emph{(3) Injectivity:}
  Suppose that $\rho(M,\mu)$ and $\rho(M',\mu')$ are two equivalent stabilizer states. As we saw, conjugating with a Weyl operator only changes the phases, so we may in fact assume that states are equal. Now assume that $M \neq M'$, so that there exists, e.g., $m \in M \setminus M'$. Then, \eqref{eq:stabs/trace} shows that
  \begin{equation*}
    0 \neq \tr w(m) \rho(M,\mu) = \tr w(m) \rho(M',\mu') = 0,
  \end{equation*}
  which is the desired contradiction.

  \emph{(4) Reduction:}
  We now show that our construction is compatible with reduction. For this, observe that
  \begin{equation*}
    \tr_{I^c} w(m) =
    \left( \bigotimes_{i \in I} w(m_i) \right) \left( \prod_{i \in I^c} \tr w(m_i) \right) =
    \left( \bigotimes_{i \in I} w(m_i) \right) d^{\abs I^c} \, \delta_{M_I}(m).
  \end{equation*}
  Since any valid assignment of phases $\mu_m$ restricts to the submodule $M_I = M \cap V_I$,
  it follows that $[\rho(M)_I] = [\rho(M_I)]$.
  It is also immediate that the canonical element \eqref{eq:stabs/stabilizer state} is compatible with reduction.

  \emph{(5) Entropy:}
  In view of the preceding point, it suffices to show \eqref{eq:stabs/von neumann entropy stabilizers} for $I = \{1,\dots,n\}$.
  Recall that the cardinalities of $M$ and of the corresponding stabilizer group $G$ agree.
  We have already seen that the dimension of the stabilizer code is equal to $d^n / \abs G$.
  Thus,
  \begin{equation*}
    S([\rho(M)]) = n \log(d) - \log \, \abs G = n \log(d) - \log \, \abs M. \qedhere
  \end{equation*}
\end{proof}

\begin{exl}[Maximally entangled states]
\label{exl:stabs/maximally entangled}
\index{maximally entangled states}
  Let $\ZZ_2^4$ be the discrete phase space for $n=2$ particles with local dimension $d=2$, and consider the ``diagonal'' subspace
  \[M = \{ (v,v) : v = (p,q) \in \ZZ_2^2 \} \subseteq \ZZ_2^4,\]
  which is isotropic since
  $\omega((v,v), (w,w)) = 2 \omega(v, w) = 0$ in $\ZZ_2$.
  Hence the general theory shows that corresponding Weyl operators $w(v,v)$ commute with each other.
  To see this concretely, observe that $w(v,v) = w(v) \otimes w(v)$, where the single-particle Weyl operator $w(v)$ is one of the Pauli matrices $X := w(0,1)$, $Y = w(1,1)$, or $Z := w(1,0)$; since the latter anti-commute with each other, their second tensor powers commute with each other.
  A corresponding stabilizer subgroup is
  \[G = \{ \Id, X \otimes X, -Y \otimes Y, Z \otimes Z \}\]
  (observe the judicious choice of signs).
  The stabilizer code is one-dimensional, since $d^n / \abs G = 2^2 / 4 = 1$, and spanned by the vector
  $\ket{\Psi^+} = \left( \ket{00} + \ket{11} \right) / \sqrt 2$ which is stabilized by all operators in $G$.
  Thus the stabilizer state $\rho$ is the maximally entangled state $\proj{\Psi^+}$.
  In accordance with \autoref{thm:stabs/phase space}, its entropies are given by
  \begin{equation*}
    S(\rho) = 2 - \log \abs M = 0, \quad
    S(\rho_1) = 1 - \log \abs{M_1} = 1, \quad
    S(\rho_2) = 1 - \log \abs{M_2} = 1,
  \end{equation*}
  since $M_1 = M_2 = \{ 0 \}$.
  Another choice of stabilizer subgroup is
  \[G = \{ \Id, -X \otimes X, Y \otimes Y, Z \otimes Z \}.\]
  The corresponding stabilizer state is $\ket{\Psi^-} = \left( \ket{00} - \ket{11} \right) / \sqrt 2$, which can be obtained from the previous one by conjugation with the Weyl operator $Z \otimes \Id$.

\medskip

  More generally, we may obtain maximally entangled states on $\CC^d \otimes \CC^d$ by taking the isotropic subspace $M = \{ ((p,q), (-p,q)) : (p,q) \in \ZZ_d^2 \} \subseteq \ZZ_d^4$.
  This shows that stabilizer states can be highly entangled.
\end{exl}

\section{A Classical Model for Stabilizer Entropies}
\label{sec:stabs/classical model}

If the local dimension $d$ is odd, there exists a \emph{discrete Wigner function}\index{Wigner function!discrete}\nomenclature[QW_rho]{$W_\rho$}{discrete Wigner function of quantum state $\rho$} that replicates many properties of its better-known continuous-variable counterpart \cite{Gross06}.
It is the function on phase space defined by
\begin{equation}
\label{eq:stabs/discrete wigner function}
  W_\rho(v) =
  \frac 1 {d^{2n}} \sum_{v'\in V} e^{-\frac{2\pi\mathrm{i}}{d} 2^{-1} \omega(v, v')} \tr\big(w(v')^\dagger
  \rho\big).
\end{equation}
The central observation is that in the case of stabilizer states, the Wigner
function $W_{\rho(M)}$ is %
a \emph{probability distribution on phase space},
i.e.\ it attains only non-negative values which sum to one.
In fact \cite{Gross05,Gross06},
\begin{equation}
\label{eq:stabs/phase space dist}
\begin{aligned}
  W_{\rho(M)}(v)
  =&\frac 1 {d^{2n}} \sum_{v'\in V} e^{-\frac{2\pi\mathrm{i}}{d} 2^{-1} \omega(v, v')} \delta_M(v') \\
  =&\frac 1 {d^{2n}} \sum_{v'\in M} e^{-\frac{2\pi\mathrm{i}}{d} 2^{-1} \omega(v, v')} \\
  =&\frac {\abs M} {d^{2n}} \delta_{M^\omega}(v) \\
  =&\frac 1 {\abs{M^\omega}} \delta_{M^\omega}(v).
\end{aligned}
\end{equation}
Thus the Wigner function of the stabilizer state with isotropic submodule $M \subseteq V$ is given by the uniform distribution on $M^\omega \subseteq V$.

We now show that this construction defines a classical model which reproduces the entropies of the given stabilizer state and its reduced states, up to a certain constant. In fact, by phrasing the construction solely in terms of the symplectic complement (hence without recourse to the Wigner function), this result can be established for arbitrary local dimension, even or odd:

\begin{thm}[Classical model for stabilizer states]
  \label{thm:stabs/main theorem}
  Let $V = \bigoplus_{i=1}^n V_i = \ZZ^{2n}_d$ be the discrete phase space
  for $n$ particles with local dimension $d$, where $d > 1$ is an arbitrary integer.
  Let $\rho$ be a stabilizer state with isotropic submodule $M \subseteq V$,
  and define a random variable $X = (X_1, \dots, X_n)$ that takes values uniformly in the symplectic
  complement $M^\omega \subseteq V$. Then,
  \begin{equation}
    \label{eq:stabs/main eqn}
    S(\rho_I) = H(X_I) - \abs I \log(d),
  \end{equation}
  and the same conclusion holds if we replace the Shannon and von Neumann entropy
  by any R\'{e}nyi entropy.
  If $d$ is odd then the above construction can also be obtained by
  interpreting the Wigner function $W_{\rho}$ as the probability
  distribution of the random variable $X$.
\end{thm}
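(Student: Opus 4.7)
The plan is to deduce the entropy identity from a combinatorial duality between projection and restriction of isotropic submodules, together with the entropy formula for stabilizer states already established in \autoref{thm:stabs/phase space}.

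First I would analyze the marginal random variable $X_I$. Since $X$ is uniform on the submodule $M^\omega\subseteq V$, the coordinate projection $\pi_I\colon V\to V_I$, restricted to $M^\omega$, is a surjective group homomorphism onto $\pi_I(M^\omega)$ whose fibers are all cosets of $M^\omega\cap V_{I^c}$; in particular, all non-empty fibers have the same size. Hence $X_I=\pi_I(X)$ is uniformly distributed on $\pi_I(M^\omega)$, and consequently
\begin{equation*}
  H(X_I)=\log\lvert\pi_I(M^\omega)\rvert.
\end{equation*}
The same identity holds for every R\'enyi entropy, since a uniform distribution has all R\'enyi entropies equal to the log of the cardinality of its support.

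The key combinatorial step is the identity
\begin{equation*}
  \lvert\pi_I(M^\omega)\rvert\cdot\lvert M_I\rvert=\lvert V_I\rvert=d^{2\abs I}.
\end{equation*}
To prove this I would observe that because $\omega(w,v)=\omega(w,\pi_I(v))$ for every $w\in V_I$ and $v\in V$ (the symplectic form splits as a sum over the factors $V_i$), the symplectic complement of $\pi_I(M^\omega)$ inside $V_I$ equals $(M^\omega)^\omega\cap V_I=M\cap V_I=M_I$. Applying the counting \autoref{lem:stabs/perp counting} \emph{inside} $V_I$ then yields the displayed identity. Combining with \autoref{thm:stabs/phase space}, which gives $S(\rho_I)=\abs I\log(d)-\log\lvert M_I\rvert$, produces
\begin{equation*}
  S(\rho_I)=\abs I\log(d)-\log\lvert M_I\rvert=\bigl(2\abs I\log(d)-\log\lvert M_I\rvert\bigr)-\abs I\log(d)=H(X_I)-\abs I\log(d).
\end{equation*}
For R\'enyi entropies on the quantum side, I would use that $\rho_I$ is (up to a local Weyl conjugation) the stabilizer state $\rho(M_I)$, which is $\lvert M_I\rvert/d^{\abs I}$ times a projector and therefore maximally mixed on its support; all quantum R\'enyi entropies thus coincide with the von Neumann entropy, matching the classical side.

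Finally, for odd $d$ the Wigner-function interpretation is essentially free: equation \eqref{eq:stabs/phase space dist} identifies $W_{\rho(M)}$ with the uniform probability distribution on $M^\omega$, and the marginal Wigner function of $\rho_I$ is obtained by summing over the phase-space coordinates of $I^c$, which is exactly the push-forward $\pi_{I*}$ applied to this uniform distribution --- i.e.\ the law of $X_I$ above. The main obstacle I anticipate is the combinatorial duality $\pi_I(M^\omega)^{\omega_I}=M_I$; everything else follows by bookkeeping, but this step is where the argument genuinely uses the non-degeneracy of the symplectic form and the fact that the decomposition $V=\bigoplus_i V_i$ is compatible with $\omega$.
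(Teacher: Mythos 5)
Your proposal is correct and follows essentially the same route as the paper's proof: you identify $X_I$ as uniform on $\pi_I(M^\omega)$, establish the duality between $\pi_I(M^\omega)$ and $M_I$ via symplectic complements inside $V_I$, count with the analogue of \autoref{lem:stabs/perp counting} in $V_I$, and combine with \eqref{eq:stabs/von neumann entropy stabilizers} from \autoref{thm:stabs/phase space}, with the same uniform-distribution/normalized-projector remark for R\'enyi entropies and the same Wigner-marginalization observation for odd $d$. The only cosmetic difference is that you get $\pi_I(M^\omega)^{\omega_I}=M_I$ in one step from $\omega(w,v)=\omega(w,\pi_I(v))$ for $w\in V_I$, whereas the paper derives the equivalent identity $\pi_I(M^\omega)=M_I^{\omega_I}$ from two separate inclusions.
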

\begin{proof}
  To prove \eqref{eq:stabs/main eqn}, denote by $\pi_I \colon V \rightarrow
  V_I$ the projection onto the phase space of parties $I \subseteq
  \{1,\dots,n\}$.  It will be convenient to consider $V_I$ as a symplectic submodule of $V$ in the natural way.
  To avoid any notational ambiguity, we denote by $\omega_I$ the restriction of the symplectic form to $V_I$ and by $X^{\omega_I}$ the symplectic complement of a subspace $X$ taken correspondingly within $V_I$.
  Observe that
  \begin{equation}
  \label{eq:stabs/perp lemma a}
    \pi_I(M^\omega) \subseteq M_I^{\omega_I}.
  \end{equation}
  Indeed, if $v\in M^\omega$ and $m_I \in M_I$, then
  $\omega_I(\pi_I(v), m_I) = \omega(v, m_I) = 0$.
  On the other hand, we find that
  \begin{equation}
  \label{eq:stabs/perp lemma b}
    \pi_I(M^\omega)^{\omega_I}
    \subseteq
    M_I.
  \end{equation}
  To see this, consider a vector $v_I \in V_I$ and note that
  if $\omega_I(v_I, \pi_I(M^\omega)) = 0$ then $\omega(v_I, M^\omega) = 0$,
  hence $v_I \in M \cap V_I = M_I$ since $(M^\omega)^\omega = M$.
  We conclude from \eqref{eq:stabs/perp lemma a} and \eqref{eq:stabs/perp lemma b} that
  \begin{equation}
  \label{eq:stabs/perp lemma}
    \pi_I(M^\omega) = M_I^{\omega_I}.
  \end{equation}

  Note that
  $X_I = \pi_I(X)$. Since $\pi_I$ is a group homomorphism, it follows
  that $X_I$ is distributed uniformly on its range, so that
  \begin{align*}
    H(X_I)
    &= \log\ \abs{\pi_I(M^\omega)}
    = \log\ \abs{M_I^{\omega_I}}
    = \log \frac {d^{2 \abs I}} {\abs {M_I}} \\
    &= 2 \abs I \log(d) - \log\ \abs{M_I}
    = \abs I \log(d) + S(\rho_I),
  \end{align*}
  where we have used \eqref{eq:stabs/von neumann entropy stabilizers} in the last step.
  We have thus established \eqref{eq:stabs/main eqn}.
  The same result holds if we replace the Shannon and von Neumann
  entropy by a classical and quantum R\'{e}nyi entropy, respectively.
  This is because the random variables $X_I$ are distributed uniformly on their range
  and the stabilizer states $\rho(M)_I$ are normalized projectors, so that the respective entropies coincide.

  Finally, it is clear from \eqref{eq:stabs/phase space dist} that for odd
  $d$ the distribution of $X$ coincides with the Wigner function
  $W_{\rho(M)}$ of the stabilizer state.
  It remains to show that the Wigner function $W_{\rho_I}$ of a
  reduced state $\rho_I$ is obtained by marginalizing the full Wigner
  function
  (in other words: the
  quantum and the classical way of reducing to subsystems commute):
  \begin{equation}
    \label{eq:stabs/marginals}
    W_{\rho_I}(v) = \sum_{w\, : \,w_I = v} W_\rho(w)
  \end{equation}
  for all $v \in V_I$. While this can easily be proved in full
  generality from the definition of the Wigner function,
  it is also true that for the special case of stabilizer states
  \eqref{eq:stabs/marginals} follows directly from \eqref{eq:stabs/phase space dist} and \eqref{eq:stabs/perp lemma}.
\end{proof}

The following corollary completely solves \autoref{pro:entropy/balanced} in the case of stabilizer states:

\begin{cor}
\label{cor:stabs/stabilizer}
  The von Neumann entropies of stabilizer states satisfy all balanced entropy inequalities satisfied by the Shannon entropy.
  Moreover, they satisfy the Ingleton inequality \eqref{eq:stabs/ingleton}.
\end{cor}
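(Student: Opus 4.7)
The strategy is to reduce both statements to classical statements about the Shannon entropies of the random variable $X = (X_1, \dots, X_n)$ supplied by Theorem~\ref{thm:stabs/main theorem}, which models the stabilizer state $\rho$ by the uniform distribution on the submodule $M^\omega \subseteq V = \bigoplus_i V_i$ and satisfies
\begin{equation}\label{eq:plan-id}
S(\rho_I) = H(X_I) - |I| \log(d) \qquad (\forall I \subseteq \{1,\dots,n\}).
\end{equation}

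For the first claim, let $\sum_I \nu_I H(Y_I) \geq 0$ be any balanced Shannon entropy inequality, valid for every tuple of random variables $Y_1, \dots, Y_n$. Specializing to $Y = X$ and substituting~\eqref{eq:plan-id} yields
$$\sum_I \nu_I S(\rho_I) = \sum_I \nu_I H(X_I) - \log(d) \sum_I \nu_I |I| = \sum_I \nu_I H(X_I) \geq 0,$$
since $\sum_I \nu_I |I| = \sum_{i=1}^n \sum_{I \ni i} \nu_I$ vanishes by the balancing condition. This part is essentially a bookkeeping exercise once the classical model is in place.

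The Ingleton inequality constitutes the main obstacle, because it is not a Shannon-type inequality, and indeed is known to fail for general random variables, so the generic reduction above does not apply directly. However, it \emph{is} itself balanced (as a linear combination of quantum conditional mutual informations), so by the calculation in the previous paragraph it suffices to establish the analogous Shannon inequality for the specific classical model $X$ and then transfer it back through~\eqref{eq:plan-id}. Here I would exploit the special structure of $X$: it is uniform on the subgroup $M^\omega$ of the abelian group $V$, and each marginal $X_I$ arises as the image of $X$ under the group homomorphism $\pi_I \colon V \to V_I$. Writing $K_I := \ker(\pi_I|_{M^\omega}) = M^\omega \cap V_{I^c}$, this makes the entropies purely combinatorial, $H(X_I) = \log |M^\omega| - \log |K_I|$, and reduces the Ingleton inequality to an elementary inequality on indices of intersections of subgroups of the abelian group $M^\omega$. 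The validity of this inequality in the abelian setting is a classical fact going back to Ingleton~\cite{LiChong07}; combining it with the balanced cancellation from the first part completes the proof.
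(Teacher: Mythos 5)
Your proposal is correct and follows essentially the same route as the paper: the balanced cancellation of the $\abs I \log(d)$ correction for the first claim, and for Ingleton the observation that the inequality is itself balanced, so it suffices to verify it for the classical model, which is group-characterizable — $X$ uniform on the abelian group $M^\omega$ with marginals given by the homomorphisms $\pi_I$, equivalently by the subgroups $\ker(\pi_I)\cap M^\omega = M^\omega \cap V_{I^c}$ — and then to invoke the known validity of Ingleton in this (abelian) group-theoretic setting, exactly as the paper does via the same reference. Your explicit formula $H(X_I) = \log\abs{M^\omega} - \log\abs{M^\omega \cap V_{I^c}}$ is just an unpacking of the parenthetical remark in the paper's proof.
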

\begin{proof}
  As described in the introduction, the first claim follows immediately from
  \eqref{eq:stabs/main eqn}. This is because for any balanced information inequality
  $\sum_I \nu_I H(X_I) \geq 0$ we necessarily have that \cite{ShadbakhtHassibi11}
  \begin{equation*}
    \sum_I \nu_I \abs I = \sum_I \left( \sum_{i \in I} \nu_I \right) = \sum_i \left( \sum_{I \ni i} \nu_I \right) = 0.
  \end{equation*}
  Hence the correction term in \eqref{eq:stabs/main eqn} cancels as we sum over all
  subsystems:
  \begin{equation*}
    \sum_I \nu_I S(\rho_I) = \sum_I \nu_I H(X_I) - \underbrace{\sum_I \nu_I \abs I}_{=0} \log(d) \geq 0
  \end{equation*}

  For the second claim, we recall that the random variable $X$ is uniformly distributed on $M^\omega$, which is a group, and that the projections $\pi_I \colon V \rightarrow V_I$ are group homomorphisms (also when restricted to $M^\omega$).
  In this situation, it was shown in \cite{LiChong07} that the Ingleton inequality \eqref{eq:stabs/ingleton} holds for the random variables $X_I = \pi_I(X)$.
  (In the language of \cite{ChanYeung02}, the entropy vector $(H(X_I))$ can be characterized by the normal subgroups $\ker(\pi_I) \cap M^\omega \subseteq M^\omega$.)
  Since the Ingleton inequality is balanced, the same argument that we used above shows that the Ingleton inequality also holds for the von Neumann entropies of stabilizer states.
\end{proof}

Pure stabilizer states correspond to maximally isotropic submodules $M \subseteq V$.
Such submodules are called \emph{La\-grang\-ian}, and they satisfy $\abs M = d^n$ and $M = M^\omega$.
Thus in this case our classical model can also be defined by choosing $X \in M$ uniformly at random.
Furthermore, since $\pi_I(M) \cong M / (\ker \pi_I \cap M)$, we may also define $X_I$ to be the coset of $X$
modulo $\ker(\pi_I) \cap M = M \cap V_{I^c} = M_{I^c}$.
In this way, we recover the construction of Theorem 11 in \cite{LindenMatusRuskaiEtAl13}.

\section{Gaussian States}
\label{sec:stabs/gaussian states}

We sketch the corresponding result for Gaussian states of continuous-variable systems (see, e.g., \cite{Holevo82} for a mathematical approach).
Recall that the \emph{Weyl representation}\index{Weyl representation!continuous-variable} of the single-particle phase space $\RR^2$ on the single-particle Hilbert space $L^2(\RR)$ is given by
\begin{equation*}
  (w(p, q) f)(x) = e^{-ipq/2} e^{ixp} f(x-q) ,
\end{equation*}
where $v = (p,q) \in \RR^2$.
For $n$ particles, the classical \emph{phase space}\index{phase space!continuous-variable} is $\RR^{2n}$ and the Weyl representation on $L^2(\RR)^{\otimes n} = L^2(\RR^n)$ is defined by the tensor product $w(v) = \otimes_{i=1}^n w(v_i)$ for all $v = (v_i) \in \RR^{2n}$.\nomenclature[Qwv]{$w(v)$}{continuous-variable Weyl representation of $\RR^{2n}$}
The \emph{Wigner function}\index{Wigner function!continuous-variable}\nomenclature[QW_rho]{$W_\rho$}{continuous-variable Wigner function of quantum state $\rho$} of a quantum state $\rho$ on $L^2(\RR^n)$ is the following function on classical phase space $\RR^{2n}$,
\begin{equation*}
  W_\rho(v) = \frac 1 {(2\pi)^{2n}} \int_{\RR^{2n}} dv' \, e^{-i \omega(v, v')} \tr ( w(v')^\dagger \rho ),
\end{equation*}
where $\omega(v,v') = \sum_{i=1}^n p_i q'_i - q_i p'_i$ denotes the usual \emph{symplectic form}\index{symplectic form!$\RR^{2n}$}\nomenclature[Tomega]{$\omega(v, v')$}{symplectic form of $\RR^{2n}$} on $\RR^{2n}$.
That is, just like its discrete analogue \eqref{eq:stabs/discrete wigner function}, the Wigner function is defined as the symplectic Fourier transform of the so-called characteristic function, which maps points in phase space to the expectation value of Weyl operators. It follows from the ``non-commutative Parseval identity'' for the latter \cite[Theorem~V.3.2]{Holevo82} that
\begin{equation}
\label{eq:stabs/parseval}
  \tr \rho^2
  = (2\pi)^n \norm{W_\rho}_{L^2(\RR^{2n})}^2.
\end{equation}

A state $\rho$ on $L^2(\RR^n)$ is a called a \emphindex{Gaussian quantum state} if its Wigner function is of the form
\begin{equation}
\label{eq:stabs/gaussian wigner}
  W_\rho(v) = \frac1{(2\pi)^{n} (\det\Sigma)^{\frac12}} e^{-\frac12 (v-\mu)^T \Sigma^{-1} (v-\mu)}.
\end{equation}
Here, $\Sigma$ is a real, positive definite $n \times n$-matrix called the \emph{covariance matrix}\index{Gaussian quantum state!covariance matrix}, and $\mu$ is the vector of \emph{first moments}\index{Gaussian quantum state!first moments}.
Conversely, for every covariance matrix $\Sigma$ that satisfies the \emph{uncertainty relation} $\Sigma + i\Omega \geq 0$, where $\Omega$ is the symplectic matrix, there exists a corresponding Gaussian quantum state.
Note that \eqref{eq:stabs/gaussian wigner} is the probability density of a random vector $X = (X_1,\dots,X_{2n})$ with multivariate normal distribution of mean $\mu$ and covariance matrix $\Sigma$. Using \eqref{eq:stabs/parseval}, it follows that the R\'{e}nyi-2 entropy of the quantum state, $S_2(\rho) = - \log \tr \rho^2$, is directly related to the differential R\'{e}nyi-2 entropy of the continuous random variable $X$, $h_2(X) = - \log \int W^2_\rho(x)\ dx$:
\begin{equation}
\label{eq:stabs/gaussian qc overall}
  S_2(\rho) = h_2(X) - n \log(2 \pi)
\end{equation}
The reduced state $\rho_I$ for some subset of modes $I \subseteq \{1,\dots,n\}$ is again a Gaussian state, and its covariance matrix is equal to the corresponding submatrix of $\Sigma$. Thus the Wigner function of $\rho_I$ is given by the marginal probability density of the variables $X_I = (X_i)_{i\in I}$, and using \eqref{eq:stabs/gaussian qc overall} we find that
\begin{equation}
\label{eq:stabs/gaussian qc}
  S_2(\rho_I) = h_2(X_I) - \abs I \log(2 \pi).
\end{equation}
Equation~\eqref{eq:stabs/gaussian qc} states that the R\'{e}nyi-2 entropy of a Gaussian quantum state is always lower than the phase space entropy of its classical model, as given by the Wigner function. It is so by a precise amount, namely by $\log(2\pi)$ bits per mode.

\begin{thm}[Classical model for Gaussian states]
  \label{thm:stabs/gaussian}
  Let $\rho$ be a Gaussian state with covariance matrix $\Sigma$, and
  define a random variable $X = (X_1,\dots,X_n)$ with probability density
  given by the Wigner function $W_\rho(x)$. Then, for any positive $\alpha \neq 1$,
  \begin{equation*}
    S_2(\rho_I) = h_\alpha(X_I) - \abs I \left( \log \pi -  \frac {\log \alpha} {1-\alpha} \right),
  \end{equation*}
  where $h_\alpha(X) = (1-\alpha)^{-1} \log \int W_\rho^\alpha(x)\ dx$ is the differential R\'{e}nyi-$\alpha$ entropy.
  In the limit $\alpha \rightarrow 1$, we recover
  \begin{equation}
    \label{eq:stabs/shannon gaussian}
    S_2(\rho_I) = h(X_I) - \abs I \log ( \pi e),
  \end{equation}
  where $h(X) = - \int W_\rho(x) \log W_\rho(x)\ dx$ is the differential Shannon entropy.
\end{thm}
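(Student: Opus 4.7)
The plan is to reduce the statement to a standard calculation of the R\'enyi entropy of a multivariate normal distribution, using that the Wigner function of a Gaussian state is precisely such a distribution. First I would recall from the discussion preceding the theorem that the reduced state $\rho_I$ is again Gaussian, and that its covariance matrix is the principal submatrix $\Sigma_I$ obtained by restricting $\Sigma$ to the coordinates indexed by $I$. Thus $W_{\rho_I}$ is the probability density of $X_I$, which has multivariate normal distribution $\mathcal N(\mu_I,\Sigma_I)$ on $\RR^{2\abs I}$.

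The main computational step is to evaluate the differential R\'enyi-$\alpha$ entropy of this distribution. For a multivariate normal in dimension $d=2\abs I$ with density $p$, a direct Gaussian integral gives
\begin{equation*}
  \int p(x)^\alpha\, dx = (2\pi)^{d(1-\alpha)/2}\,(\det \Sigma_I)^{(1-\alpha)/2}\,\alpha^{-d/2},
\end{equation*}
since $\int \exp\!\bigl(-\tfrac{\alpha}{2}(x-\mu_I)^T\Sigma_I^{-1}(x-\mu_I)\bigr)\,dx = (2\pi/\alpha)^{d/2}(\det\Sigma_I)^{1/2}$. Taking $\tfrac{1}{1-\alpha}$ times the logarithm yields
\begin{equation*}
  h_\alpha(X_I) = \abs I\log(2\pi) + \tfrac12\log\det\Sigma_I + \frac{\abs I\,\log\alpha}{\alpha-1}.
\end{equation*}

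Next I would compute $S_2(\rho_I)$ from the Parseval identity \eqref{eq:stabs/parseval} applied to $\rho_I$, or equivalently by specializing the already-established identity \eqref{eq:stabs/gaussian qc} at $\alpha=2$: combining $S_2(\rho_I)=h_2(X_I)-\abs I\log(2\pi)$ with the $\alpha=2$ case of the above formula gives $S_2(\rho_I)=\tfrac12\log\det\Sigma_I+\abs I\log 2$. Subtracting this expression from $h_\alpha(X_I)$ cancels both the $\tfrac12\log\det\Sigma_I$ term and the $\alpha$-dependent piece up to an overall correction of $\abs I\log\pi - \abs I\,\log\alpha/(1-\alpha)$, which is exactly the claimed identity.

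Finally, for the limit $\alpha\to 1$ I would invoke L'H\^opital's rule to obtain $\lim_{\alpha\to 1}\log\alpha/(1-\alpha)=-1$, so that the correction becomes $\abs I(\log\pi+1)=\abs I\log(\pi e)$; continuity of $h_\alpha$ in $\alpha$ for Gaussian densities (immediate from the closed form above) then yields \eqref{eq:stabs/shannon gaussian}. There is no real obstacle here: the only subtlety worth flagging is the range of $\alpha$ for which the integral $\int W_{\rho_I}^\alpha$ converges, but since the density is Gaussian this holds for all $\alpha>0$, so the statement is valid in the full range claimed.
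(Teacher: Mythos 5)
Your proposal is correct and follows essentially the same route as the paper: compute the differential Rényi-$\alpha$ entropy of the Gaussian marginal $X_I$ by explicit Gaussian integration and combine it with the Parseval-derived identity \eqref{eq:stabs/gaussian qc}, obtaining the claimed correction term and the $\alpha\to 1$ limit $\abs I\log(\pi e)$. The only difference is that you spell out the intermediate steps (the explicit integral, the $\alpha=2$ specialization, L'H\^opital) that the paper leaves implicit, and all of these check out.
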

\begin{proof}
  By Gaussian integration, the differential R\'{e}nyi-$\alpha$ entropy of the random variable $X_I$ is given by
  \begin{equation*}
    h_\alpha(X_I) = \frac 1 2 \log \det \Sigma + \abs I \left( \log 2\pi - \frac {\log \alpha} {1-\alpha} \right).
  \end{equation*}
  The assertions of the theorem follow from this and \eqref{eq:stabs/gaussian qc}.
\end{proof}

Equation~\eqref{eq:stabs/shannon gaussian} has been previously used in \cite{AdessoGirolamiSerafini12}, where the formula is attributed to Stratonovich.
Just as in the discrete case, we immediately get the following corollary:

\begin{cor}
\label{cor:stabs/gaussian}
  The R\'{e}nyi-2 entropies of Gaussian states satisfy all balanced entropy inequalities that are valid for the differential Shannon entropies of multivariate normal distributions.
\end{cor}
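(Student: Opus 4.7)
The plan is to mirror exactly the argument for stabilizer states in \autoref{cor:stabs/stabilizer}, using \autoref{thm:stabs/gaussian} as the quantum-to-classical bridge. First, I would recall that by construction the random variable $X = (X_1,\dots,X_n)$ associated with a Gaussian state $\rho$ has its joint probability density equal to the Wigner function $W_\rho$ of \eqref{eq:stabs/gaussian wigner}, which is a multivariate normal density with covariance matrix $\Sigma$. The marginals $X_I$ are themselves multivariate normal, with density equal to the Wigner function of the reduced Gaussian state $\rho_I$ (whose covariance matrix is just the corresponding principal submatrix of $\Sigma$). Hence any linear inequality for the differential Shannon entropy that is known to hold on the class of multivariate normal distributions applies to the vector $(h(X_I))_{\emptyset \neq I \subseteq \{1,\dots,n\}}$.

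Next, I would invoke the key identity \eqref{eq:stabs/shannon gaussian}, namely
\begin{equation*}
  S_2(\rho_I) = h(X_I) - \abs I \log(\pi e),
\end{equation*}
and observe that for any balanced inequality $\sum_I \nu_I h(X_I) \geq 0$ one has, exactly as in the proof of \autoref{cor:stabs/stabilizer},
\begin{equation*}
  \sum_I \nu_I \abs I = \sum_i \sum_{I \ni i} \nu_I = 0.
\end{equation*}
Therefore the offset $|I|\log(\pi e)$ is annihilated when summed against the coefficients $\nu_I$, and
\begin{equation*}
  \sum_I \nu_I S_2(\rho_I) = \sum_I \nu_I h(X_I) - \log(\pi e) \sum_I \nu_I \abs I = \sum_I \nu_I h(X_I) \geq 0.
\end{equation*}

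There is essentially no obstacle here beyond correctly invoking \autoref{thm:stabs/gaussian}; the only point requiring a line of care is that the classical model genuinely ``commutes'' with taking marginals, i.e.\ that the marginal density of $X_I$ is the Wigner function of $\rho_I$ (not merely some other normal density). This is immediate from the Gaussian form \eqref{eq:stabs/gaussian wigner} together with the fact that the reduced state $\rho_I$ of a Gaussian state is Gaussian with covariance matrix the principal submatrix $\Sigma_I$ of $\Sigma$, so $W_{\rho_I}$ is exactly the marginal of $W_\rho$ obtained by integrating out the complementary variables. With this compatibility in hand, the corollary follows in one line from the identity \eqref{eq:stabs/shannon gaussian} and the balanced-coefficient identity above, in complete analogy with the stabilizer case.
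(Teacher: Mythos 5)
Your proposal is correct and is exactly the argument the paper intends: the paper derives the corollary "immediately" from \autoref{thm:stabs/gaussian} (specifically the identity $S_2(\rho_I) = h(X_I) - \abs I \log(\pi e)$ together with the fact that the marginals of the Wigner function are the Wigner functions of the reduced Gaussian states), using the same cancellation $\sum_I \nu_I \abs I = 0$ for balanced inequalities as in the stabilizer case. Nothing to add.
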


Interestingly, it is not clear whether \autoref{cor:stabs/gaussian} holds for the von Neumann entropy.
We remark that Gaussian states can violate the Ingleton inequality \eqref{eq:stabs/ingleton} (in contrast to stabilizer states, cf.\ \autoref{cor:stabs/stabilizer}).
Indeed, this is well-known for multivariate normal distributions \cite{ShadbakhtHassibi11}, and the counterexample presented in \cite{ShadbakhtHassibi11} can be readily adapted:
\begin{equation*}
  \Sigma = \Id_{p,q} \otimes \begin{bmatrix}10 & 2.5 & 5 & 5 \\ 2.5 & 10 & 5 & 5 \\ 5 & 5 & 10 & 0 \\ 5 & 5 & 0 & 10\end{bmatrix}
\end{equation*}
is a covariance matrix on the four-particle phase space that satisfies the uncertainty relation $\Sigma + i \Omega \geq 0$, and the corresponding four-partite Gaussian quantum state violates the Ingleton inequality.

\section{Discussion}
\label{sec:stabs/discussion}

Theorems~\ref{thm:stabs/main theorem} and \ref{thm:stabs/gaussian} can be phrased in a unified language by observing that all reductions $\rho_I$ of a stabilizer state are proportional to projectors, while the corresponding random variables $X_I$ are uniformly distributed on their support.
As noted in \autoref{thm:stabs/main theorem}, this implies that we may replace the von Neumann and Shannon entropy in \eqref{eq:stabs/main eqn} by any quantum and classical R\'{e}nyi-$\alpha$ entropy, respectively.
For discrete random variables the latter are defined by $H_\alpha(X) := (1-\alpha)^{-1} \log \sum_x p_x^\alpha$.
In particular, we find that
\begin{equation}
\label{eq:stabs/qc 2 norm}
  S_2(\rho_I) = H_2(X_I) - \abs I \log(d),
\end{equation}
which is a perfect analogue of the identity \eqref{eq:stabs/gaussian qc} for Gaussian states.
Equation~\eqref{eq:stabs/qc 2 norm} can be interpreted as the unitarity (up to a constant factor) of the transformation that sends a quantum state to its Wigner function (if the latter exists).
This suggests that the fundamental quantities in this context are in fact the R\'{e}nyi-2 entropies or, equivalently, the Hilbert--Schmidt and $\ell^2$/$L^2$-norms.

\medskip

We conclude this chapter with a few remarks.
Our work uses the classical model provided by the Wigner function as a
tool for proving statements that do not, a priori, seem to be connected
to phase-space distributions. This point of view has been employed
before, e.g.\ to construct quantum expanders \cite{GrossEisert08},
to establish simulation algorithms \cite{VeitchFerrieGrossEtAl12, MariEisert12, VeitchWiebeFerrieEtAl13}, and
to demonstrate the onset of contextuality \cite{HowardWallmannVeitchEtAl14}.  It
would be interesting to see further applications.
While it is known that the Wigner function approach cannot be
straight-forwardly translated to non-stabilizer states
\cite{Hudson74, Gross06, Gross07}, our discussion suggests searching for other
maps from quantum states to probability distributions that reproduce
entropies faithfully, up to state-independent additive constants.

In order to establish the Ingleton inequality \eqref{eq:stabs/ingleton}, we have used
the group-theoretical approach to classical information inequalities \cite{ChanYeung02}.
It would be highly desirable to find a quantum-mechanical analogue of this work
(see \cite{ChristandlMitchison06} and \autoref{ch:strong6j} for partial results towards this goal,
motivated by the quantum marginal problem).
\tikzstyle epr=[fill=black]
\tikzstyle adjoint epr=[fill=black]
\tikzstyle adjoint=[]
\tikzstyle arrowstyle=[scale=1.25]
\tikzset{->-/.style={decoration={markings,mark=at position #1 with {\arrow[arrowstyle]{stealth};}},postaction={decorate}}}
\tikzset{-<-/.style={decoration={markings,mark=at position #1 with {\arrowreversed[arrowstyle]{stealth};}},postaction={decorate}}}

\newcommand{\strongsixjCG}[4]{
  \begin{tikzpicture}[baseline=0cm]
    \draw[-<-=.3] (0,-0.2) -- (0,-0.7);
    \draw[->-=.6] (0.177,0.1) -- (0.766, 0.5);
    \draw[->-=.6] (-0.177,0.1) -- (-0.766,0.5);
    \draw (0,0) circle[radius=0.2];
    \draw (0,0) node {#4};
    \draw (-0.766,0.5) node[above] {#2};
    \draw (0.766, 0.5) node[above] {#3};
    \draw (0,-0.7) node[below] {#1};
  \end{tikzpicture}
}

\newcommand{\strongsixjCGadjoint}[4]{
  \begin{tikzpicture}[baseline=0cm]
    \draw[->-=.6] (0,-0.2) -- (0,-0.7);
    \draw[-<-=.35] (0.177,0.1) -- (0.766, 0.5);
    \draw[-<-=.35] (-0.177,0.1) -- (-0.766,0.5);
    \draw[adjoint] (0,0) circle[radius=0.2];
    \draw (0,0) node {#4};
    \draw (-0.766,0.5) node[above] {#2};
    \draw (0.766, 0.5) node[above] {#3};
    \draw (0,-0.7) node[below] {#1};
  \end{tikzpicture}
}

\newcommand{\strongsixjCGinvariant}[1]{
    \begin{tikzpicture}[baseline=0cm]
      \draw[->-=.8] (0,-0.45) -- (0,-0.2);
      \draw[->-=.65] (0,-0.6) -- (0,-0.9);
      \draw[->-=.6] (0.177,0.1) -- (0.766, 0.5);
      \draw[->-=.6] (-0.177,0.1) -- (-0.766,0.5);
      \draw (0,0) circle[radius=0.2];
      \draw (0,0) node {#1};
      \draw[epr] (0,-0.52) circle[radius=0.1];
      \draw (-0.766,0.5) node[above] {$\alpha$};
      \draw (0.766, 0.5) node[above] {$\beta$};
      \draw (0,-0.8) node[below] {$\lambda$};
    \end{tikzpicture}
}

\newcommand{\strongsixjCGadjointinvariant}[1]{
    \begin{tikzpicture}[baseline=0cm]
      \draw[->-=.6] (0,-0.2) -- (0,-0.7);
      \draw[->-=.7] (0.57,0.35) -- (0.786,0.5);
      \draw[->-=.7] (0.413, 0.25) -- (0.177,0.1);
      \draw[->-=.7] (-0.57,0.35) -- (-0.786,0.5);
      \draw[->-=.7] (-0.413, 0.25) -- (-0.177,0.1);
      \draw[adjoint] (0,0) circle[radius= 0.2];
      \draw[epr] (0.491, 0.3) circle[radius=0.1];
      \draw[epr] (-0.491, 0.3) circle[radius=0.1];
      \draw (0,0) node {#1};
      \draw (-0.766,0.5) node[above] {$\alpha$};
      \draw (0.766, 0.5) node[above] {$\beta$};
      \draw (0,-0.7) node[below] {$\lambda$};
    \end{tikzpicture}
}

\newcommand{\strongsixjAsymmetric}{
  \begin{tikzpicture}[baseline=0cm]
    \draw[->-=.6] (0.173,0.1) -- (0.693, 0.4);
    \draw[->-=.6] (-0.173, 0.1) -- (-0.693, 0.4);
    \draw[-<-=.35] (0,-0.2) -- (0,-0.8);
    \draw[adjoint] (0.866, 0.5) circle[radius= 0.2];
    \draw[adjoint] (-0.866, 0.5) circle[radius= 0.2];
    \draw (0, -1) circle[radius= 0.2];
    \draw (0,0) circle[radius= 0.2];
    \draw[->-=.6] (0.19,-0.98) arc [radius=1, start angle= -77, end angle= 18];
    \draw[->-=.6] (0.748,0.662) arc [radius=1, start angle= 43, end angle= 138];
    \draw[->-=.6] (-0.967,0.319) arc [radius=1, start angle= 163, end angle= 259];
    \draw (0,-1) node {$i$};
    \draw (0,0) node {$j$};
    \draw (0.866, 0.5) node {$k$};
    \draw (-0.866, 0.5) node {$l$};
    \draw (-0.53,0.3) node[below] {$\alpha$};
    \draw (0.53,0.3) node[below] {$\beta$};
    \draw (0.866,-0.5) node[below] {$\gamma$};
    \draw (0, -0.5) node[right] {$\mu$};
    \draw (0,1) node[below] {$\nu$};
    \draw (-0.866,-0.5) node[below] {$\lambda$};
  \end{tikzpicture}
}

\newcommand{\strongsixjAsymmetricTeleported}{
  \begin{tikzpicture}[baseline=0cm, scale=1.4]
    \draw (-0.53,0.2) node[below] {$\alpha$};
    \draw[->-=.55] (-0.54,0.31) -- (-0.866, 0.5);
    \draw[->-=.66] (-0.54,0.31) -- (-0.33, 0.18);
    \draw[->-=.7] (0,0) -- (-0.33, 0.18);
    \draw[adjoint epr] (-0.56,0.32) circle[radius=0.1/1.4];
    \draw[adjoint epr] (-0.31,0.17) circle[radius=0.1/1.4];

    \draw (0.53,0.3) node[below] {$\beta$};
    \draw[->-=.55] (0.54,0.31) -- (0.866, 0.5);
    \draw[->-=.66] (0.54,0.31) -- (0.33, 0.18);
    \draw[->-=.7] (0,0) -- (0.33, 0.18);
    \draw[adjoint epr] (0.56,0.32) circle[radius=0.1/1.4];
    \draw[adjoint epr] (0.31,0.17) circle[radius=0.1/1.4];

    \draw (0, -0.5) node[right] {$\mu$};
    \draw[->-=.75] (0,-1) -- (0,-0.65);
    \draw[->-=.65] (0,-0.35) -- (0,-0.65);
    \draw[->-=.5] (0,-0.35) -- (0,0);
    \draw[adjoint epr] (0,-0.355) circle[radius=0.1/1.4];
    \draw[adjoint epr] (0,-0.65) circle[radius=0.1/1.4];

    \draw (0,1) node[below] {$\nu$};
    \centerarc[->-=0.57](0,0)(35:70:1);
    \centerarc[-<-=0.35](0,0)(70:120:1);
    \centerarc[->-=0.5](0,0)(120:145:1);
    \draw[adjoint epr] (0.35,0.92) circle[radius=0.1/1.4];
    \draw[adjoint epr] (-0.35,0.92) circle[radius=0.1/1.4];

    \draw (0.866,-0.5) node[right] {$\gamma$};
    \centerarc[->-=0.75](0,0)(-90:-50:1);
    \centerarc[-<-=0.35](0,0)(-40:-13:1);
    \centerarc[->-=0.65](0,0)(-10:25:1);
    \draw[adjoint epr] (1,-0.15) circle[radius=0.1/1.4];
    \draw[adjoint epr] (0.7,-0.7) circle[radius=0.1/1.4];

    \draw (-0.866,-0.5) node[left] {$\lambda$};
    \centerarc[-<-=0.6](0,0)(270:230:1);
    \centerarc[->-=0.6](0,0)(220:193:1);
    \centerarc[-<-=0.4](0,0)(190:150:1);
    \draw[adjoint epr] (-1,-0.15) circle[radius=0.1/1.4];
    \draw[adjoint epr] (-0.7,-0.7) circle[radius=0.1/1.4];

    \draw[fill=white] (0,-1) circle[radius=0.2/1.4];
    \draw (0,-1) node {$i$};
    \draw[fill=white] (0,0) circle[radius=0.2/1.4];
    \draw (0,0) node {$j$};
    \draw[fill=white] (0.866, 0.5) circle[radius=0.2/1.4];
    \draw (0.866, 0.5) node {$k$};
    \draw[fill=white] (-0.866, 0.5) circle[radius=0.2/1.4];
    \draw (-0.866, 0.5) node {$l$};
  \end{tikzpicture}
}

\newcommand{\strongsixjSymmetric}{
  \begin{tikzpicture}[baseline=0cm]
    \draw (0,-1) node {$i$};
    \draw (0,-1) circle[radius= 0.2];
    \draw (0,0) node {$j$};
    \draw (0,0) circle[radius= 0.2];
    \draw (0.866, 0.5) node {$k$};
    \draw (0.866, 0.5) circle[radius= 0.2];
    \draw (-0.866, 0.5) node {$l$};
    \draw (-0.866, 0.5) circle[radius= 0.2];

    \draw (-0.53,0.2) node[below] {$\alpha$};
    \draw[->-=.56] (-0.693,0.4) -- (-0.433, 0.25);
    \draw[->-=.56] (-0.173,0.1) -- (-0.433, 0.25);
    \draw[adjoint epr] (-0.43,0.24) circle[radius=0.1];

    \draw (0.53,0.3) node[below] {$\beta$};
    \draw[->-=.56] (0.693,0.4) -- (0.433, 0.25);
    \draw[->-=.56] (0.173,0.1) -- (0.433, 0.25);
    \draw[adjoint epr] (0.43,0.24) circle[radius=0.1];

    \draw (0, -0.5) node[right] {$\mu$};
    \draw[->-=.56] (0,-0.2) -- (0,-0.5);
    \draw[->-=.56] (0,-0.8) -- (0,-0.48);
    \draw[adjoint epr] (0,-0.5) circle[radius=0.1];

    \draw (0,1) node[below] {$\nu$};
    \draw[->-=.56] (0.748,0.662) arc [radius=1, start angle=43, end angle=90]; %
    \draw[-<-=.56] (0,1) arc [radius=1, start angle=90, end angle=138];
    \draw[adjoint epr] (0,0.99) circle[radius=0.1];

    \draw (0.866,-0.5) node[below] {$\gamma$};
    \draw[->-=.56] (0.19,-0.98) arc [radius=1, start angle=-77, end angle=-30]; %
    \draw[-<-=.56] (0.83,-0.5) arc [radius=1, start angle=-30, end angle=18];
    \draw[adjoint epr] (0.86,-0.45) circle[radius=0.1];

    \draw (-0.866,-0.5) node[below] {$\lambda$};
    \draw[->-=.56] (-0.967,0.319) arc [radius=1, start angle=163, end angle=210]; %
    \draw[-<-=.56] (-0.866,-0.5) arc [radius=1, start angle=210, end angle=259];
    \draw[adjoint epr] (-0.87,-0.45) circle[radius=0.1];
  \end{tikzpicture}
}
\chapter{Entropy Inequalities from Recoupling Coefficients}
\label{ch:strong6j}

In this chapter we consider entropy inequalities for general quantum states.
This requires us to go beyond the mathematical techniques of the first part of this thesis, which only gave us control over non-overlapping marginals.
To this end, we unveil a novel link between the existence of multipartite quantum states with given marginal eigenvalues and the representation theory of the symmetric group $S_k$.
We use this link to give a new proof of the strong subadditivity and weak monotonicity of the von Neumann entropy, and propose an approach to finding further entropy inequalities based on studying representation-theoretic symbols and their symmetry properties.

The results in this chapter have been obtained in collaboration with Matthias Christandl and Burak \c{S}ahino\u{g}lu, and they have appeared in the preprint \cite{ChristandlSahinogluWalter12} (cf.\ \cite{Sahinoglu12} for an earlier version).

\section{Summary of Results}

To establish the link to representation theory, we consider the \emph{recoupling coefficients} of the symmetric group $S_k$, which measure the overlap of two ways of decomposing a triple tensor product of irreducible representations of $S_k$ (\autoref{sec:strong6j/recoupling}).
We find that in the ``semiclassical limit'' $k \rightarrow \infty$, the recoupling coefficients' norm decreases at most polynomially for a sequence of Young diagrams of $k$ boxes converging to the eigenvalues of a given tripartite quantum state $\rho_{ABC}$ and its reduced states $\rho_A$, $\rho_B$, $\rho_C$, $\rho_{AB}$, $\rho_{BC}$; conversely, if there exists no such quantum state then the coefficients decrease exponentially in norm (\autoref{thm:strong6j/main theorem} in \autoref{sec:strong6j/overlapping marginals}).
This is a first general result for the quantum marginal problem with overlaps. It extends significantly the characterization of the triples $\rho_A$, $\rho_B$, $\rho_{AB}$ by the Kronecker coefficient of the symmetric group that we had reviewed in \autoref{sec:onebody/kronecker} \cite{ChristandlMitchison06, Klyachko04, DaftuarHayden04, ChristandlHarrowMitchison07, ChristandlDoranKousidisEtAl12}.
Our result can be generalized to an arbitrary number of particles and linearly many reduced states, and may thus be regarded as a first step towards a quantum-mechanical version of Chan and Yeung's description of the set of compatible Shannon entropies in terms of group theory \cite{ChanYeung02}.

To illustrate the power of this characterization, we show that symmetry properties of the recoupling coefficients alone imply the strong subadditivity and weak monotonicity of the von Neumann entropy (\autoref{sec:strong6j/ssa}).
This symmetry is particularly transparent when the coefficients are expressed in a graphical calculus for tensor categories (\autoref{sec:strong6j/symmetry}).
Our strategy of proof suggests a hitherto unexplored route towards establishing further entropy inequalities by exploiting the symmetries of higher-order representation-theoretic objects (\autoref{sec:strong6j/discussion}).

Our result is inspired by Wigner's seminal work on the semiclassical behavior of quantum spins, which are described by the representation theory of the group $\SU(2)$ \cite{WignerGriffin59}.
The recoupling coefficients of $\SU(2)$, known as the \emph{Wigner $6j$-symbols} in their rescaled, more symmetric form \cite{WignerGriffin59}, describe the relation between individual spins $j_A$, $j_B$, $j_C$, their total spin $j_{ABC}$ and the intermediate spins $j_{AB}$ and $j_{BC}$ (the Racah W-coefficients \cite{Racah42} are also closely related).
As first noted by Wigner, there is a dichotomy in the semiclassical limit where all spins are simultaneously large: the $6j$-symbol decays polynomially if there exists a tetrahedron with side lengths $j_A$, $j_B$, $j_C$, $j_{AB}$, $j_{BC}$, $j_{ABC}$, and exponentially otherwise \cite[\S{}27]{WignerGriffin59} (\autoref{fig:strong6j/tetrahedron}).

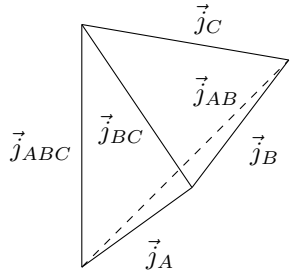
\begin{figure}
  \begin{center}
    \begin{tikzpicture}
    \draw [-] (0,3.2) -- (0,0);
    \draw [-] (0,0) -- (1.455,1.05);
    \draw [-] (1.455,1.05) -- (2.732, 2.732);
    \draw [-] (2.732, 2.732) -- (0,3.2);
    \draw [-] (1.455,1.05)-- (0,3.2);
    \draw [-, dashed] (0,0) -- (2.732, 2.732);
    \node [left] at (0, 1.6) {$\vec j_{ABC}$};
    \node [below right] at (0.727, 0.5) {$\vec j_A$};
    \node [below right] at (2.1, 1.866) {$\vec j_B$};
    \node [above right] at (1.366, 2.95) {$\vec j_C$};
    \node [above left] at (2.2, 2) {$\vec j_{AB}$};
    \node [above right] at (0.1, 1.5) {$\vec j_{BC}$};
    \end{tikzpicture}
  \end{center}
  \caption[Wigner's tetrahedron]{\emph{Wigner's tetrahedron}, corresponding to polynomial decay of the $\SU(2)$ recoupling coefficients.}
  \label{fig:strong6j/tetrahedron}
\end{figure}

That the asymptotics are in both cases guided by the existence of a geometric object---for Wigner, a tetrahedron with certain side lengths, for us, a quantum state with certain spectral properties---is not an accident. Via Schur--Weyl duality, our limit $k \rightarrow \infty$ can similarly be understood as a semiclassical limit of representation-theoretic coefficients of unitary groups (\autoref{sec:strong6j/semiclassicality}).
In \autoref{sec:strong6j/sums of matrices} we show that Wigner's scenario is in fact a special case of the quantum marginal problem considered above:
For every tetrahedron, we construct a tripartite quantum state in a faithful, i.e., side length-encoding way, and for every recoupling coefficient for $\SU(2)$ we construct a corresponding recoupling coefficient of the symmetric group.
The existence of Wigner's tetrahedron can be understood as an instance of the more general problem of characterizing the eigenvalues of certain partial sums of matrices \cite{Backens10}, and our construction generalizes readily. This extends the connection between the non-overlapping quantum marginal problem and Horn's conjecture mentioned in \autoref{sec:onebody/kronecker} \cite{Klyachko04}.

\section{Recoupling Coefficients}
\label{sec:strong6j/recoupling}

Recall from \autoref{sec:onebody/kronecker} that the finite-dimensional irreducible representations of the symmetric group $S_k$ are labeled by \emph{Young
diagrams} with $k$ boxes, that is, ordered partitions $\lambda_1 \geq \dots \geq \lambda_l > 0$ of $\sum_i \lambda_i = k$.
As before, we write $\lambda \vdash k$ for such a partition and $[\lambda]$ for the associated irreducible unitary representation of $S_k$; we denote its dimension by $d_\lambda := \dim [\lambda]$.
Any finite-dimensional representation $V$ of $S_k$ can be decomposed into a direct sum of irreducible representations, and if $V$ is a unitary representation then this decomposition can also be made unitary.
Concretely, consider the space of $S_k$-linear maps
$\Ha^V_\lambda := \Hom_{S_k}([\lambda], V)$ defined as in \eqref{eq:onebody/linear maps} and equip $\Ha^V_\lambda$ with the re-scaled Hilbert--Schmidt inner product
\begin{equation}
\label{eq:strong6j/rescaled hs}
  \braket{\psi, \phi}_\lambda := \frac 1 {d_\lambda} \tr \psi^\dagger \phi.
\end{equation}

\begin{lem}
\label{lem:strong6j/decomposition}
  (1) Any unit vector in $\Ha^V_\lambda$ is an $S_k$-linear isometry, and any two orthogonal vectors have orthogonal range.

  (2) The direct sum of the $S_k$-linear maps
  \begin{equation*}
    \Phi^V_\lambda \colon [\lambda] \otimes \Ha^V_\lambda \rightarrow V,
    \quad
    v \otimes \phi \mapsto \phi(v)
  \end{equation*}
  defines an $S_k$-linear unitary isomorphism $\bigoplus_{\lambda \vdash k} [\lambda] \otimes \Ha^V_\lambda \cong V$.
\end{lem}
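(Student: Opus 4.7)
The plan is to prove both parts by systematic application of Schur's lemma, with the rescaled inner product \eqref{eq:strong6j/rescaled hs} chosen precisely so that the normalization constants cancel out cleanly.

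For part (1), I would start with any two vectors $\psi, \phi \in \Ha^V_\lambda$ and consider the composite $\psi^\dagger \phi$, which is an $S_k$-linear endomorphism of the irreducible representation $[\lambda]$. By Schur's lemma \eqref{eq:onebody/schurs lemma}, it must equal $c \, \Id_{[\lambda]}$ for some scalar $c$. Taking traces and using the definition of the rescaled inner product gives
\begin{equation*}
  c \, d_\lambda = \tr(\psi^\dagger \phi) = d_\lambda \, \braket{\psi, \phi}_\lambda,
\end{equation*}
so $\psi^\dagger \phi = \braket{\psi, \phi}_\lambda \, \Id_{[\lambda]}$. Specializing to $\psi = \phi$ with $\braket{\phi,\phi}_\lambda = 1$ yields $\phi^\dagger \phi = \Id_{[\lambda]}$, so that $\phi$ is an isometry. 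Specializing to orthogonal $\psi \perp \phi$ yields $\psi^\dagger \phi = 0$, which is exactly the statement that the ranges of $\psi$ and $\phi$ are orthogonal subspaces of $V$.

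For part (2), the task splits into three pieces: showing that each $\Phi^V_\lambda$ is a unitary embedding, that the images for distinct $\lambda$ are mutually orthogonal, and that together they exhaust $V$. The first follows directly from part (1): for orthonormal bases $\{e_i\}$ of $[\lambda]$ and $\{\phi_\alpha\}$ of $\Ha^V_\lambda$, one checks
\begin{equation*}
  \braket{\phi_\alpha(e_i) | \phi_\beta(e_j)}_V
  = \braket{e_i | (\phi_\alpha^\dagger \phi_\beta) e_j}
  = \braket{\phi_\alpha, \phi_\beta}_\lambda \, \delta_{ij}
  = \delta_{\alpha\beta} \delta_{ij},
\end{equation*}
so $\Phi^V_\lambda$ sends an orthonormal basis to an orthonormal basis. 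For the orthogonality of ranges across distinct $\lambda \neq \lambda'$, the identical argument as in part (1) applies: for $\phi \in \Ha^V_\lambda$ and $\psi \in \Ha^V_{\lambda'}$, the map $\psi^\dagger \phi \colon [\lambda] \to [\lambda']$ is $S_k$-linear between inequivalent irreducible representations, hence zero by Schur's lemma.

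Surjectivity is the final and essentially trivial step, using the standard fact that any finite-dimensional unitary representation $V$ of the finite group $S_k$ decomposes as an orthogonal direct sum of irreducible sub-representations. Each such sub-representation isomorphic to $[\lambda]$ is the image of an $S_k$-linear embedding $[\lambda] \to V$; after rescaling to an isometry (equivalently, a unit vector in $\Ha^V_\lambda$ with the inner product \eqref{eq:strong6j/rescaled hs}), it lies in the image of $\Phi^V_\lambda$. Thus $V = \sum_\lambda \Phi^V_\lambda([\lambda] \otimes \Ha^V_\lambda)$, and combined with the orthogonality statements the sum is direct and the total map is unitary. No step here presents any real obstacle; the only subtle point is checking that the rescaling in \eqref{eq:strong6j/rescaled hs} is exactly what is needed for the Schur's lemma computation to produce the identity on $[\lambda]$ (rather than a factor $d_\lambda$), and this is the reason for the definition.
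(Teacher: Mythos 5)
Your proof is correct and follows essentially the same route as the paper: part (1) is the same Schur's lemma computation showing $\psi^\dagger\phi = \braket{\psi,\phi}_\lambda \Id_{[\lambda]}$ via the rescaled inner product, and part (2) is exactly what the paper summarizes as ``follows from (1) and Schur's lemma'' (orthogonality of images for inequivalent $\lambda$ plus complete reducibility for surjectivity), which you have merely spelled out in detail.
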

\begin{proof}
  (1) Let $\phi, \psi \in \Ha^V_\lambda$. By Schur's lemma \eqref{eq:onebody/schurs lemma}, $\psi^\dagger \phi \propto \Id_{[\lambda]}$, so that
  \begin{equation*}
    \psi^\dagger \phi = \braket{\psi, \phi}_\lambda \Id_{[\lambda]}.
  \end{equation*}
  In particular, $\phi^\dagger \phi = \Id$ if $\phi$ is a unit vector, and $\psi^\dagger \phi = 0$ if $\phi$ and $\psi$ are orthogonal.

  (2) This follows from (1) and Schur's lemma.
\end{proof}

In particular, we may decompose a tensor product $[\alpha] \otimes [\beta]$ of two irreducible representations:

\begin{dfn}
\label{dfn:strong6j/clebsch gordan}
  The \emph{Clebsch--Gordan isomorphism of the symmetric group}\index{Clebsch--Gordan isomorphism!symmetric group|textbf} is the $S_k$-linear unitary isomorphism
  \begin{equation}
  \label{eq:strong6j/clebsch gordan}
    \bigoplus_{\lambda \vdash k} [\lambda] \otimes \Ha^{\alpha\beta}_{\lambda} \stackrel{\cong}{\longrightarrow} [\alpha] \otimes [\beta]
  \end{equation}
  defined as in \autoref{lem:strong6j/decomposition}.
  Its components are $S_k$-linear isometric embeddings; they will be denoted by
  $\Phi^{\alpha\beta}_{\lambda} \colon [\lambda] \otimes \Ha^{\alpha\beta}_{\lambda} \rightarrow [\alpha] \otimes [\beta]$.%
  \nomenclature[RPhi_lambda^alphabeta]{$\Phi_\lambda^{\alpha\beta}$}{Clebsch--Gordan embeddings for the symmetric group}%
  \nomenclature[RH_lambda^alphabeta]{$\Ha_\lambda^{\alpha\beta}$}{multiplicity spaces in the Clebsch--Gordan decomposition for the symmetric group}
  According to \eqref{eq:onebody/kronecker asymmetric}, the dimension of $\Ha^{\alpha\beta}_\lambda$ is equal to the Kronecker coefficient $g_{\alpha,\beta,\lambda}$.
\end{dfn}

Now we consider a triple tensor product. Since the tensor product is associative, we have
\[\left( [\alpha] \otimes [\beta] \right) \otimes [\gamma] \cong [\alpha] \otimes \left( [\beta] \otimes [\gamma] \right).\]
Decomposing accordingly using \eqref{eq:strong6j/clebsch gordan}, we get an isomorphism
\begin{equation}
\label{eq:strong6j/6j decomposition}
  \bigoplus_{\lambda \vdash k} [\lambda] \otimes \bigoplus_{\mu \vdash k} \Ha_{\lambda}^{\mu\gamma} \otimes \Ha_{\mu}^{\alpha\beta}
  \cong
  \bigoplus_{\lambda \vdash k} [\lambda] \otimes \bigoplus_{\nu \vdash k} \Ha_{\lambda}^{\alpha\nu} \otimes \Ha_{\nu}^{\beta\gamma}.
\end{equation}
By Schur's lemma, this allows us to identify the multiplicity spaces for each fixed $\lambda$,
\begin{equation}
\label{eq:strong6j/6j decomposition after schur}
  \bigoplus_{\mu \vdash k} \Ha_{\lambda}^{\mu\gamma} \otimes \Ha_{\mu}^{\alpha\beta}
  \cong
  \bigoplus_{\nu \vdash k} \Ha_{\lambda}^{\alpha\nu} \otimes \Ha_{\nu}^{\beta\gamma}.
\end{equation}

\begin{dfn}
\label{dfn:strong6j/recoupling}
  The \emph{recoupling coefficients of the symmetric group}\index{recoupling coefficients!symmetric group|textbf}\nomenclature[Rsixj]{$\smallsixj \alpha \beta \gamma \mu \nu \lambda$}{recoupling coefficients of the symmetric group} are the components of the isomorphism \eqref{eq:strong6j/6j decomposition after schur} for fixed $\mu$ and $\nu$, denoted by
  \begin{equation*}
    \sixj{\alpha}{\beta}{\gamma}{\mu}{\nu}{\lambda} \colon
    \Ha_{\lambda}^{\mu  \gamma} \otimes \Ha_{\mu}^ {\alpha  \beta} \rightarrow \Ha_{\lambda}^{\alpha  \nu} \otimes \Ha_{\nu}^{\beta  \gamma}.
  \end{equation*}
  In other words, they are defined by the relation
  \begin{equation}
  \label{eq:strong6j/6j via clebsch gordan}
    \Id_{\lambda} \otimes \sixj{\alpha}{\beta}{\gamma}{\mu}{\nu}{\lambda}
    =
    \left(
    \left( \Id_{\alpha} \otimes \Phi^{\beta\gamma}_{\nu} \right)
    \Phi^{\alpha\nu}_\lambda
    \right)^\dagger
    \left( \Phi^{\alpha\beta}_{\mu} \otimes \Id_{\gamma} \right)
    \Phi^{\mu\gamma}_\lambda
  \end{equation}
  in terms of the Clebsch--Gordan maps from \autoref{dfn:strong6j/clebsch gordan}.
\end{dfn}

In contrast to the case of $\SU(2)$, these recoupling coefficients are linear maps rather than scalars, since the ``Clebsch--Gordan series'' for $S_k$ is \emph{not} multiplicity-free.
Their size is thus measured by an operator norm, and it will be convenient to employ the \emphindex{Hilbert--Schmidt norm}\nomenclature[Q<X_2]{$\normHS{X}$}{Hilbert--Schmidt norm, or Schatten-2 norm of operator $X$} $\normHS{X}^2 := \tr X^\dagger X$ as well as the \emphindex{operator norm}\nomenclature[Q<X_infty]{$\norm{X}_\infty$}{operator norm, or Schatten-$\infty$ norm of operator $X$} $\norm{X}_\infty := \sup_{\norm \psi = 1} \norm{X \psi}$ (cf.\ \autoref{lem:strong6j/poly eqv}).

\subsection*{Schur--Weyl Duality}

We now consider a setup in which both the symmetric group and quantum states are at home: the vector space $V = \left( \CC^d \right)^{\otimes k}$.
The symmetric group acts on $V$ by permuting the tensor factors and the special unitary group $\SU(d)$ acts diagonally; both actions commute.
Recall that \emphindex{Schur--Weyl duality} \eqref{eq:onebody/schur-weyl} asserts that we have a decomposition
\begin{equation}
\label{eq:strong6j/schur-weyl}
  \left( \CC^d \right)^{\otimes k} \cong
  \; \smashoperator{\bigoplus_{\lambda \vdash_d k}} \; [\lambda] \otimes V^d_\lambda,
\end{equation}
where the direct sum runs over all Young diagram with $k$ boxes and at most $d$ rows.
In the following we shall denote by $P_\lambda$ the orthogonal projector onto a direct summand $[\lambda] \otimes V^d_\lambda$ in \eqref{eq:strong6j/schur-weyl}.

In order to study tripartite quantum states, we consider $\CC^{abc} \cong \CC^a \otimes \CC^b \otimes \CC^c$. Applying Schur--Weyl duality separately to the $k$-th tensor powers of $\CC^a$, $\CC^b$ and $\CC^c$ and inserting \eqref{eq:strong6j/6j decomposition}, we find that
\begin{equation}
  \label{eq:strong6j/tripartite schur-weyl}
  \begin{aligned}
  \left( \CC^{abc} \right)^{\otimes k} &\cong
  \; \smashoperator{\bigoplus_{\alpha \vdash_a k, \, \beta \vdash_b k, \, \gamma \vdash_c k}} \; [\alpha] \otimes [\beta] \otimes [\gamma] \otimes V^a_\alpha \otimes V^b_\beta \otimes V^c_\gamma \\
  &\cong
  \;\smashoperator{\bigoplus_{\alpha \vdash_a k, \dotsc, \lambda \vdash_{abc} k}}\;
  [\lambda] \otimes \Ha^{\mu \gamma}_{\lambda} \otimes \Ha^{\alpha \beta}_{\mu} \otimes V^a_\alpha \otimes V^b_\beta \otimes V^c_\gamma\\
  &\cong
  \;\smashoperator{\bigoplus_{\alpha \vdash_a k, \dotsc, \lambda \vdash_{abc} k}}\;
  [\lambda] \otimes \Ha_{\lambda}^{\alpha \nu} \otimes \Ha_{\nu}^{\beta \gamma} \otimes V^a_\alpha \otimes V^b_\beta \otimes V^c_\gamma.
  \end{aligned}
\end{equation}
We denote by
\begin{equation}
\label{eq:strong6j/pq projectors}
\begin{aligned}
  P &:= P^{\alpha\beta\gamma}_{\lambda,\mu} := (P_\alpha \otimes P_\beta \otimes P_\gamma) (P_\mu \otimes P_\gamma) P_\lambda \\
  Q &:= Q^{\alpha\beta\gamma}_{\lambda,\nu} := (P_\alpha \otimes P_\beta \otimes P_\gamma) (P_\alpha \otimes P_\nu) P_\lambda
\end{aligned}
\end{equation}
the orthogonal projectors onto the respective direct summands in the second and third line of \eqref{eq:strong6j/tripartite schur-weyl} (observe that each is defined as a product of commuting projectors).
The following lemma connects the operator norm of their product to the operator norm of the corresponding recoupling coefficient.

\begin{lem}
  \begin{equation}
    \label{eq:strong6j/PQ vs 6j}
    \norm{P Q}_\infty =
    \norm{Q P}_\infty =
    \norm{\sixj \alpha \beta \gamma \mu \nu \lambda}_\infty
  \end{equation}
\end{lem}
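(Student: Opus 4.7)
The identity $\norm{PQ}_\infty = \norm{QP}_\infty$ is immediate: both $P$ and $Q$ are orthogonal projectors, hence self-adjoint, so $(PQ)^\dagger = QP$, and the operator norm is invariant under taking adjoints. The substantive content is the identification of either quantity with $\norm{\sixj{\alpha}{\beta}{\gamma}{\mu}{\nu}{\lambda}}_\infty$, and the plan is to reduce it directly to the defining relation \eqref{eq:strong6j/6j via clebsch gordan}.

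First, I would localize the problem to a single block of the Schur--Weyl decomposition. Both $P$ and $Q$ contain the factor $P_\alpha \otimes P_\beta \otimes P_\gamma$, so both vanish outside the summand
\[[\alpha] \otimes [\beta] \otimes [\gamma] \otimes V^a_\alpha \otimes V^b_\beta \otimes V^c_\gamma\]
appearing in the first line of \eqref{eq:strong6j/tripartite schur-weyl}. Since the diagonal $S_k$-action on $(\CC^{abc})^{\otimes k}$ is, restricted to this block, precisely the $S_k$-action on the $[\alpha]\otimes[\beta]\otimes[\gamma]$-factor, the further projectors $P_\mu \otimes P_\gamma$, $P_\alpha \otimes P_\nu$ and $P_\lambda$ act only on these three tensor factors and trivially on $V^a_\alpha \otimes V^b_\beta \otimes V^c_\gamma$. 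Consequently, on the block above,
\[P = \Psi_\mu \Psi_\mu^\dagger \otimes \Id, \qquad Q = \Psi'_\nu (\Psi'_\nu)^\dagger \otimes \Id,\]
where I abbreviate $\Psi_\mu := (\Phi^{\alpha\beta}_\mu \otimes \Id_{[\gamma]}) \Phi^{\mu\gamma}_\lambda$ and $\Psi'_\nu := (\Id_{[\alpha]} \otimes \Phi^{\beta\gamma}_\nu) \Phi^{\alpha\nu}_\lambda$. Each $\Psi$ is a composition of Clebsch--Gordan embeddings, which are isometric by \autoref{lem:strong6j/decomposition}(1) (with respect to the rescaled Hilbert--Schmidt inner product \eqref{eq:strong6j/rescaled hs}), hence itself an isometric embedding; in particular $\Psi_\mu \Psi_\mu^\dagger$ and $\Psi'_\nu (\Psi'_\nu)^\dagger$ are genuinely the orthogonal projectors onto the respective Clebsch--Gordan images, confirming the block-form above.

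Now I would combine these to evaluate $\norm{PQ}_\infty$. Since the tensor factor acts trivially on $V^a_\alpha \otimes V^b_\beta \otimes V^c_\gamma$, it suffices to compute $\norm{\Psi_\mu \Psi_\mu^\dagger \Psi'_\nu (\Psi'_\nu)^\dagger}_\infty$. Using that left-multiplication by an isometry and right-multiplication by a co-isometry both preserve the operator norm, this equals $\norm{\Psi_\mu^\dagger \Psi'_\nu}_\infty$. By \eqref{eq:strong6j/6j via clebsch gordan}, $(\Psi'_\nu)^\dagger \Psi_\mu = \Id_{[\lambda]} \otimes \sixj{\alpha}{\beta}{\gamma}{\mu}{\nu}{\lambda}$; taking adjoints gives $\Psi_\mu^\dagger \Psi'_\nu = \Id_{[\lambda]} \otimes \sixj{\alpha}{\beta}{\gamma}{\mu}{\nu}{\lambda}^\dagger$, whose operator norm is $\norm{\sixj{\alpha}{\beta}{\gamma}{\mu}{\nu}{\lambda}}_\infty$, completing the proof.

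The argument is essentially bookkeeping once the right block is isolated; the only mild obstacle I anticipate is ensuring that the commuting projectors in \eqref{eq:strong6j/pq projectors} really do compose to $\Psi_\mu \Psi_\mu^\dagger \otimes \Id$ as claimed. This requires checking that (i) $P_\mu$ on $(\CC^{ab})^{\otimes k}$ and $P_\alpha \otimes P_\beta$ on $(\CC^a)^{\otimes k} \otimes (\CC^b)^{\otimes k}$ commute and that their product on the $(\alpha,\beta)$-block coincides with $\Phi^{\alpha\beta}_\mu (\Phi^{\alpha\beta}_\mu)^\dagger \otimes \Id$, and (ii) an analogous statement for $P_\lambda$ applied to the result; both follow from the uniqueness of isotypic decompositions together with the isometric nature of Clebsch--Gordan embeddings.
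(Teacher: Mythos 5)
Your proposal is correct and follows essentially the same route as the paper's own proof: identify $P$ and $Q$ as the images $EE^\dagger\otimes\Id$ and $FF^\dagger\otimes\Id$ of the composed Clebsch--Gordan isometries (your $\Psi_\mu$, $\Psi'_\nu$), then use the isometry property to reduce $\norm{PQ}_\infty$ to $\norm{\Psi_\mu^\dagger\Psi'_\nu}_\infty$, which is $\norm{\sixj{\alpha}{\beta}{\gamma}{\mu}{\nu}{\lambda}}_\infty$ by the defining relation \eqref{eq:strong6j/6j via clebsch gordan}. The only difference is cosmetic reorganization (you start from $\norm{PQ}_\infty$ while the paper starts from the recoupling coefficient), so no further comment is needed.
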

\begin{proof}
Recall from \eqref{eq:strong6j/6j via clebsch gordan} that the recoupling coefficients are defined by the identity
\begin{equation*}
  \Id_{\lambda} \otimes \sixj \alpha \beta \gamma \mu \nu \lambda = F^\dagger E,
\end{equation*}
where $E := \left( \Phi^{\alpha\beta}_\mu \otimes \Id_{\gamma} \right) \Phi^{\mu\gamma}_\lambda$
and $F := \left( \Id_{\alpha} \otimes \Phi^{\beta\gamma}_\nu \right) \Phi^{\alpha\nu}_\lambda$
are compositions of Clebsch--Gordan isometries. In particular,
\begin{equation*}
  \norm{\sixj \alpha \beta \gamma \mu \nu \lambda}_\infty =
  \norm{F^\dagger E}_\infty =
  \norm{F F^\dagger E E^\dagger}_\infty,
\end{equation*}
where the last equality holds because both $E$ and $F$ are isometries.
Now note that $E E^\dagger$ is precisely equal to the orthogonal projector onto
\begin{equation*}
  [\lambda] \otimes \Ha_\lambda^{\mu\gamma} \otimes \Ha_\mu^{\alpha\beta} \subseteq [\alpha] \otimes [\beta] \otimes [\gamma].
\end{equation*}
On the other hand, $P$ as defined in \eqref{eq:strong6j/pq projectors} is the orthogonal projector onto
\begin{align*}
  &[\lambda] \otimes \Ha_\lambda^{\mu\gamma} \otimes \Ha_\mu^{\alpha\beta} \otimes V^a_\alpha \otimes V^b_\beta \otimes V^c_\gamma \\
  \subseteq
  &[\alpha] \otimes [\beta] \otimes [\gamma] \otimes V^a_\alpha \otimes V^b_\beta \otimes V^c_\gamma
  \subseteq
  \left( \mathbb C^{abc} \right)^{\otimes k}.
\end{align*}
It follows that
$P = E E^\dagger \otimes \Id_{V^a_\alpha \otimes V^b_\beta \otimes V^c_\gamma}$
where we slightly abuse notation by considering the right-hand side as an operator on $\left( \mathbb C^{abc} \right)^{\otimes k}$.
Likewise, we find that
$Q = F F^\dagger \otimes \Id_{V^a_\alpha \otimes V^b_\beta \otimes V^c_\gamma}.$
Together we conclude that
\begin{equation*}
  \norm{F F^\dagger E E^\dagger}_\infty = \norm{Q P}_\infty. \qedhere
\end{equation*}
\end{proof}

For the following it will be useful to relate the recoupling coefficients' operator norm to their Hilbert--Schmidt norm:

\begin{lem}
\label{lem:strong6j/poly eqv}
  \begin{equation}
    \label{eq:strong6j/6j lower upper}
    \norm{\begin{bmatrix} \alpha & \beta & \mu \\ \gamma & \lambda & \nu \end{bmatrix}}_\infty \\
    \leq \normHS{\begin{bmatrix} \alpha & \beta & \mu \\ \gamma & \lambda & \nu \end{bmatrix}}\\
    \leq \poly(k) \norm{\begin{bmatrix} \alpha & \beta & \mu \\ \gamma & \lambda & \nu \end{bmatrix}}_\infty
    \leq \poly(k)
  \end{equation}
\end{lem}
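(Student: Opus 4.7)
The plan has three components, one per inequality. The first, $\norm{\cdot}_\infty \leq \normHS{\cdot}$, is a standard fact for linear maps between finite-dimensional Hilbert spaces: the largest singular value is bounded by the $\ell^2$-norm of the vector of singular values. The last inequality reduces to showing that $\norm{\sixj \alpha \beta \gamma \mu \nu \lambda}_\infty \leq 1$, which follows immediately from the identity \eqref{eq:strong6j/PQ vs 6j} established above: since $P$ and $Q$ are orthogonal projectors, $\norm{PQ}_\infty \leq \norm{P}_\infty \cdot \norm{Q}_\infty = 1$.

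For the middle inequality, I invoke the standard bound $\normHS{X} \leq \sqrt{r}\,\norm{X}_\infty$, where $r$ is the rank of $X$ and hence at most $\min(\dim\mathrm{dom},\dim\mathrm{codom})$. Thus it suffices to show that the dimension of the domain $\Ha_{\lambda}^{\mu\gamma} \otimes \Ha_{\mu}^{\alpha\beta}$ of the recoupling coefficient grows at most polynomially in $k$. By \eqref{eq:onebody/kronecker asymmetric}, this dimension equals the product $g_{\lambda\mu\gamma}\cdot g_{\mu\alpha\beta}$ of Kronecker coefficients, so I need each factor to be $\poly(k)$.

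To bound these Kronecker coefficients, I apply Schur--Weyl duality \eqref{eq:strong6j/schur-weyl} to $(\CC^{ab})^{\otimes k}$ in two ways: first to the full Hilbert space of dimension $ab$, and second to the factors $\CC^a$ and $\CC^b$ separately before regrouping. Comparing the two resulting decompositions and using \eqref{eq:onebody/kronecker asymmetric} yields the identity
\begin{equation*}
  V^{ab}_{\mu} \cong \bigoplus_{\alpha,\beta}\, g_{\alpha\beta\mu}\, V^{a}_{\alpha} \otimes V^{b}_{\beta},
\end{equation*}
which immediately implies $g_{\alpha\beta\mu} \leq \dim V^{ab}_{\mu}$. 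Since $\mu$ has at most $ab$ rows, each an integer in $[0,k]$, the Weyl dimension formula \eqref{eq:mul/weyl dimension formula} shows that $\dim V^{ab}_{\mu} = O(k^{ab(ab-1)/2}) = \poly(k)$, as $a,b$ are fixed. An entirely analogous argument applied to $(\CC^{(ab)c})^{\otimes k}$ bounds $g_{\mu\gamma\lambda}$ by $\dim V^{abc}_{\lambda} = \poly(k)$. Combining these estimates gives the middle inequality.

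The argument is assembled from routine facts, and I do not anticipate any serious obstacle. The only substantive input is the polynomial dimension bound on Kronecker coefficients with a bounded number of rows, which is a mild variant of the estimate underlying the polynomial-time algorithm of Chapter~\ref{ch:multiplicities}.
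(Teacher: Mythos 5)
Your proof is correct and follows essentially the same route as the paper: the entire content of the lemma is the reduction of the Hilbert--Schmidt versus operator-norm comparison to a polynomial bound on the dimensions of the multiplicity spaces, i.e.\ on Kronecker coefficients of Young diagrams with a bounded number of rows, which is exactly the paper's argument. The only difference is how that bound is obtained: the paper uses the symmetric characterization $g_{\alpha,\beta,\mu} \leq \dim \Sym^k\bigl(\CC^{(abc)^3}\bigr) = \poly(k)$, whereas you bound $g_{\alpha,\beta,\mu} \leq \dim V^{ab}_\mu$ via the branching identity for the restriction $\GL(ab) \rightarrow \GL(a)\times\GL(b)$ together with the Weyl dimension formula; both are routine and give the same conclusion (yours with a slightly sharper exponent, which is irrelevant here). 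You also spell out the outer inequalities --- $\norm{\cdot}_\infty \leq \normHS{\cdot}$ and $\norm{\cdot}_\infty \leq 1$ via the projector identity \eqref{eq:strong6j/PQ vs 6j} --- which the paper treats as standard and leaves implicit.
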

\noindent Here and in the following, $\poly(k)$ denotes a polynomial in $k$ that depends only on the maximal number of rows of the Young diagrams.
Equivalently, it depends only on $a$, $b$ and $c$---the dimensions of the local Hilbert spaces $\CC^a$, $\CC^b$ and $\CC^c$ in \eqref{eq:strong6j/tripartite schur-weyl}.
\begin{proof}
  It suffices to show that the multiplicity spaces $\Ha^{\alpha\beta}_{\lambda}$ in the Clebsch--Gordan decomposition \eqref{eq:strong6j/clebsch gordan} are of dimension $\poly(k)$.
  This dimension is equal to the Kronecker coefficient $g_{\alpha,\beta,\lambda}$.
  In \autoref{sec:onebody/kronecker} we have seen that we may equivalently define $g_{\alpha,\beta,\lambda}$ as the multiplicity of the irreducible $\SU(d) \times \SU(d) \times \SU(d)$-representation $V^d_\alpha \otimes V^d_\beta \otimes V^d_\gamma$ in $\Sym^k(\CC^{d^3})$, where $d$ has to be chosen at least as large as the number of rows in the Young diagrams $\alpha$, $\beta$ and $\gamma$.
  But the Young diagrams that appear in \eqref{eq:strong6j/tripartite schur-weyl} all have at most $d = abc$ rows.
  It follows that we have the simple polynomial upper bound %
  \begin{equation*}
    g_{\alpha,\beta,\gamma} \leq \dim \Sym^k(\CC^{(abc)^3}) = \poly(k).
    \qedhere
  \end{equation*}
\end{proof}

Schur--Weyl duality also leads to an alternative definition of the recoupling coefficients in terms of unitary groups (\autoref{sec:strong6j/semiclassicality}).

\section{Overlapping Marginals and Recoupling Coefficients}
\label{sec:strong6j/overlapping marginals}

For a density operator $\rho_{ABC}$ on $\CC^a \otimes \CC^b \otimes \CC^c$, we consider the reduced density operators $\rho_{AB}=\tr_{C} (\rho_{ABC})$, $\rho_A= \tr_{BC} (\rho_{ABC})$ etc., and denote by $\vecr_{ABC}$, $\vecr_{AB}$, $\vecr_A$, etc., the corresponding spectra (each ordered non-increasingly, e.g.~$r_{ABC,1} \geq r_{ABC,2} \geq \dots$).%
\nomenclature[Qr_I]{$\vecr_I$}{vector of eigenvalues $r_{I,1} \geq r_{I,2} \geq \dots$ of reduced density matrix $\rho_I$}
We define the \emph{normalization}\index{Young diagram!normalization}\nomenclature[Rlambda_bar]{$\bar\lambda$}{normalization of Young diagram $\lambda$} of a Young diagram $\lambda \vdash k$ by $\bar\lambda := \lambda / k$.
Then we have the following theorem relating the recoupling coefficients' asymptotic behavior to the existence of a tripartite quantum state with given marginal eigenvalues:

\begin{thm}
  \label{thm:strong6j/main theorem}
  If there exists a quantum state $\rho_{ABC}$ with eigenvalues $\vecr_A$, $\vecr_B$, $\vecr_C$, $\vecr_{AB}$, $\vecr_{BC}$, $\vecr_{ABC}$ then there exists a sequence of Young diagrams $\alpha, \beta, \gamma, \mu, \nu, \lambda \vdash k$ with $k \rightarrow \infty$ boxes and at most $a$, $b$, etc.\ rows such that
    \begin{equation}\label{eq:strong6j/assumption-conv}
      \lim_{k \rightarrow \infty}(\bar\alpha, \bar\beta, \bar\gamma, \bar\mu, \bar\nu, \bar\lambda)= (\vecr_A, \vecr_B, \vecr_C, \vecr_{AB}, \vecr_{BC}, \vecr_{ABC})
    \end{equation}
    and
    \begin{equation} \label{eq:strong6j/assumption-hs}
      \normHS{\begin{bmatrix} \alpha & \beta & \mu \\ \gamma & \lambda & \nu \end{bmatrix}}
      \geq \frac 1 {\mathrm{poly}(k)}.
    \end{equation}
  Conversely, if $(\vecr_A, \vecr_B, \vecr_C, \vecr_{AB}, \vecr_{BC}, \vecr_{ABC})$ is not associated to any tripartite density matrix then for every sequence of Young diagrams satisfying \eqref{eq:strong6j/assumption-conv} we have
  \begin{equation*}
    \normHS{\begin{bmatrix} \alpha & \beta & \mu \\ \gamma & \lambda & \nu \end{bmatrix}} \leq \exp(-\Omega(k)).
  \end{equation*}
\end{thm}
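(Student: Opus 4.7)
The plan is to convert the statement about the Hilbert--Schmidt norm of the recoupling coefficient into an assertion about an overlap probability by using the identity \eqref{eq:strong6j/PQ vs 6j} together with the norm equivalence \eqref{eq:strong6j/6j lower upper}; the central technical input will be the spectrum estimation theorem of Keyl--Werner, which states that projecting $\rho^{\otimes k}$ onto the $\alpha$-isotypic via the Schur--Weyl projector $P_\alpha$ yields probability at least $1/\poly(k)$ for the most likely outcomes near $\bar\alpha=\spec\rho$, and exponentially small otherwise.

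For the forward direction, I would observe that $\rho_{ABC}^{\otimes k}$ commutes individually with each of the six Schur--Weyl projectors $P_\alpha,P_\beta,P_\gamma,P_\mu,P_\nu,P_\lambda$ (all $S_k$-invariant on compatible subspaces). Using this, $\tr(PQ\rho_{ABC}^{\otimes k})=\tr\bigl((\rho^{1/4}P\rho^{1/4})(\rho^{1/4}Q\rho^{1/4})\bigr)\ge 0$ and can be interpreted as the joint probability of the two-step projective measurement onto $Q$ followed by $P$. Combining spectrum estimation for each of the six individual marginals with a pigeonhole argument over the polynomial-sized set of admissible Young diagrams, I would produce a sequence with normalizations converging to $(\vec r_A,\dots,\vec r_{ABC})$ and with $\tr(PQ\rho_{ABC}^{\otimes k})\ge 1/\poly(k)$. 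Since $\tr(PQ\rho_{ABC}^{\otimes k})\le\norm{PQ}_\infty$, the identity \eqref{eq:strong6j/PQ vs 6j} and the norm equivalence \eqref{eq:strong6j/6j lower upper} deliver the desired polynomial lower bound on the Hilbert--Schmidt norm of the recoupling coefficient.

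For the reverse direction, I would argue by contrapositive: suppose $\norm{PQ}_\infty$ does not decay exponentially along some sequence with the prescribed normalized limits. Extracting a unit vector $\ket\psi$ in the image of $Q$ whose $P$-projection is not exponentially small produces a pure state on $(\CC^{abc})^{\otimes k}$ lying in the joint sector prescribed by the six Young diagrams; symmetrizing over the $S_k$-action yields a permutation-invariant density matrix in the same sector. Applying the post-selection bound \eqref{eq:dhmeasure/postselection} then represents this state, up to a polynomial factor, as a convex combination $\int\rho^{\otimes k}\,d\mu(\rho)$ of i.i.d.\ tensor-power states. Since the isotypic constraints exponentially suppress those $\rho$ whose marginal spectra are far from the prescribed ones (now using spectrum estimation as an upper bound), the mixing measure $\mu$ concentrates on states with the correct marginals, and a compactness and subsequence argument extracts a limiting quantum state $\rho_{ABC}$ realizing all six prescribed spectra simultaneously.

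The main obstacle will be the non-commutation of $P_\mu$ and $P_\nu$ -- the only pair among the six projectors that both act non-trivially on the $B$-subsystem -- which is what makes the present six-spectra result a genuine strengthening of the three-spectra characterization of $(\vec r_A,\vec r_B,\vec r_{AB})$ by the Kronecker coefficient. In the forward direction, non-commutation means $\tr(PQ\rho_{ABC}^{\otimes k})$ is not literally a joint probability of six commuting measurements, so the pigeonhole step requires a careful analysis of sequential measurement probabilities and gentle-measurement-type estimates. In the reverse direction, the challenge is to show that a merely sub-exponential lower bound on the subspace overlap is enough to force the existence of a \emph{single} state (rather than only a convex combination of states) realizing the prescribed marginals; the post-selection technique combined with continuity of spectra under the trace norm should suffice, with the three-spectra proof of \cite{ChristandlMitchison06, ChristandlHarrowMitchison07} serving as a template.
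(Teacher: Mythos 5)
Your overall strategy is the paper's: relate $\normHS{\cdot}$ to $\norm{PQ}_\infty$ via \eqref{eq:strong6j/PQ vs 6j} and \eqref{eq:strong6j/6j lower upper}, feed in Keyl--Werner spectrum estimation \eqref{eq:strong6j/keyl werner}, and use the post-selection bound \eqref{eq:dhmeasure/postselection} for the converse. However, the forward direction as you set it up has a concrete flaw, and its crucial quantitative step is left unproven. The claim that $\rho_{ABC}^{\otimes k}$ commutes with each of the six Schur--Weyl projectors is false: only $P_\lambda$ (built from diagonal permutations of the $ABC$ copies) commutes with $\rho_{ABC}^{\otimes k}$, whereas $P_\alpha$, $P_\mu$, etc.\ are combinations of permutations acting on the $A$ copies (resp.\ the $AB$ copies) alone, and conjugating a correlated $\rho_{ABC}^{\otimes k}$ by such a permutation changes the state unless $\rho_{ABC}$ factorizes. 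Consequently your identity $\tr(PQ\rho^{\otimes k})=\tr\bigl((\rho^{1/4}P\rho^{1/4})(\rho^{1/4}Q\rho^{1/4})\bigr)$ and the ``joint probability of a two-step measurement'' reading do not stand, and positivity alone would in any case not give the needed \emph{lower} bound. That lower bound on the overlap of the two non-commuting (sums of) projectors is exactly the crux you flag as ``the main obstacle'' but never resolve. The paper closes it with a one-line estimate valid for \emph{arbitrary} projectors and states, $\abs{\tr(\widetilde P\widetilde Q\sigma)}\geq\tr(\widetilde P\sigma)-\sqrt{1-\tr(\widetilde Q\sigma)}$ (Cauchy--Schwarz on a purification, Eq.~\eqref{eq:strong6j/nc union bound}), applied to the $\delta$-ball sums $\widetilde P,\widetilde Q$, followed by the triangle inequality and pigeonhole over the $\poly(k)$ admissible diagrams; this is precisely the gentle-measurement-type bound you anticipate, and without it (or a proved substitute) your forward direction is incomplete.

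Your converse is essentially the paper's argument run contrapositively and is fine in substance: take a permutation-invariant optimizer for $\norm{PQ}_\infty$, dominate it by $\poly(k)\int d\rho_{ABC}\,\rho_{ABC}^{\otimes k}$ via post-selection, and use spectrum estimation as an upper bound through a Cauchy--Schwarz factorization into the six traces $\tr(P_\alpha\rho_A^{\otimes k})$, etc. One simplification: no concentration-of-measure or subsequence extraction is needed. Since the set of compatible spectral tuples is closed, an incompatible limit point sits at a fixed $\ell_1$-distance $D>0$ from it, so for \emph{every} state in the mixture at least one of the six traces is $\leq\poly(k)\exp(-kD^2/2)$, giving the uniform exponential bound directly (note also that the averaged state $\int d\rho\,\rho^{\otimes k}$ commutes with $P$ and $Q$ by Schur--Weyl, which is what lets one pass from the operator inequality to the trace estimate for a single $\rho_{ABC,k}$ at each $k$).
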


\begin{proof}
For both directions of the proof we use the spectrum estimation theorem \cite{KeylWerner01} (cf.~\cite{AlickiRudnickiSadowski88, HayashiMatsumoto02, ChristandlMitchison06}), which says that $k$ copies of a density operator $\rho$ on $\CC^d$ are mostly supported on the subspaces $[\lambda] \otimes V^d_\lambda \subseteq (\CC^d)^{\otimes k}$ satisfying $\bar\lambda = \lambda/k \approx \spec \rho = \vec r$. More precisely,
\begin{equation}
\label{eq:strong6j/keyl werner}
\tr(P_\lambda \rho^{\otimes k}) \leq \poly(k) \exp(-k \norm{\bar{\lambda} - \vec r}_1^2 / 2),
\end{equation}
where $\norm{x}_1= \sum_i \abs{x_i}$ is the $\ell_1$-norm.

We start with the proof of the ``if'' statement.
Define $\widetilde P$ as the sum of the projectors $P_{\alpha\beta\gamma}^{\lambda,\mu}$ -- defined as in \eqref{eq:strong6j/pq projectors} -- for which $\norm{\bar{\alpha}- \vecr_A}_1 \leq \delta$, $\norm{\bar{\beta} - \vecr_B}_1 \leq \delta$, etc.; $\widetilde Q$ is defined accordingly.
By \eqref{eq:strong6j/keyl werner} and the fact that there are only $\poly(k)$ many Young diagrams with a bounded number of rows,
\begin{equation*}
\tr(\widetilde P \rho^{\otimes k}_{ABC}) \geq 1 - \varepsilon, \quad \tr(\widetilde Q \rho^{\otimes k}_{ABC}) \geq 1 - \varepsilon,
\end{equation*}
where $\varepsilon = \poly(k) \exp(-k \delta^2 / 2)$.
Now we use
\begin{equation}
\label{eq:strong6j/nc union bound}
  \abs{\tr(\widetilde P \widetilde Q \sigma)} \geq \tr(\widetilde P \sigma) - \sqrt{1 - \tr(\widetilde Q \sigma)},
\end{equation}
which holds for arbitrary projectors $\widetilde P$, $\widetilde Q$ and quantum states $\sigma$,%
\footnote{For pure states $\sigma = \proj\phi$, $\tr(PQ \sigma)= \braket{\phi | PQ | \phi} \geq \braket{\phi | P |\phi} - \abs{\braket{\phi | P (1 - Q) | \phi}} \geq \braket{\phi | P | \phi} - \norm{(1 - Q) \ket\phi} = \tr (P \sigma) - \sqrt{1 - \tr (Q \sigma)}$ by the Cauchy--Schwarz inequality; the general statement follows by considering a purification of $\sigma$.} and obtain
\begin{equation*}
  \norm{\widetilde P \widetilde Q}_\infty \geq
  \abs{\tr(\widetilde P \widetilde Q \rho^{\otimes k}_{ABC})} \geq 1 - 2 \sqrt{\varepsilon}.
\end{equation*}
Using \eqref{eq:strong6j/PQ vs 6j}, \eqref{eq:strong6j/6j lower upper} and the triangle inequality, we find that
\begin{equation*}
  \displaystyle\sum_{\alpha, \beta, \gamma, \mu, \nu, \lambda}\normHS{\begin{bmatrix} \alpha & \beta & \mu \\ \gamma & \lambda & \nu \end{bmatrix}}
  \geq  \norm{\widetilde P \widetilde Q}_\infty
  \geq 1- 2 \sqrt\varepsilon,
\end{equation*}
where the sum extends over Young diagrams whose normalization is $\delta$-close to the eigenvalues associated to $\rho_{ABC}$ and its marginals, respectively (as specified above). Since the number of terms in this sum is again upper-bounded by $\poly(k)$, we can find sequences of Young diagrams satisfying \eqref{eq:strong6j/assumption-conv} and \eqref{eq:strong6j/assumption-hs}.

\smallskip

We now prove the converse statement.
For this, we consider a sequence of Young diagrams \eqref{eq:strong6j/assumption-conv} whose limit $(\vecr_A, \vecr_B, \dots)$ is \emph{not} associated to any tripartite density matrix on $\CC^{abc}$.
For each $k$, we choose quantum states $\sigma_{(ABC)^k}$ such that
\[
  \norm{P Q}_\infty^2
  =
  \norm{(Q P) (P Q)}_\infty
  =
  \tr(P Q \sigma_{(ABC)^k} Q P )
,\]
where the projectors $P$ and $Q$ are defined as in \eqref{eq:strong6j/pq projectors}.
Both projectors $P$ and $Q$ commute with the action of $S_k$ on $(\CC^{abc})^{\otimes k}$, so we may assume $\sigma_{(ABC)^k}$ to be permutation-invariant.
Then we can use the bound \eqref{eq:dhmeasure/postselection}
\begin{equation}
\label{eq:strong6j/postselection}
  \sigma_{(ABC)^k} \leq \mathrm{poly}(k) \int d\rho_{ABC} \, \rho_{ABC}^{\otimes k},
\end{equation}
where $d\rho_{ABC}$ is the Hilbert--Schmidt probability measure\index{Hilbert--Schmidt probability measure} on the set of density matrices on $\CC^{abc}$.
The right-hand side of \eqref{eq:strong6j/postselection} commutes with the action of $S_k$ as well as with unitaries of the form $U^{\otimes k}$, $U \in \SU(abc)$.
In view of Schur--Weyl duality \eqref{eq:strong6j/schur-weyl}, Schur's Lemma thus implies that it is a linear combination of the isotypical projectors $P_\lambda$, and so commutes with $P$ and $Q$.
It follows that
\begin{align*}
    \norm{P Q}_\infty^2
    &\leq \mathrm{poly}(k) \tr (P Q \int d\rho_{ABC} \, \rho_{ABC}^{\otimes k} Q P) \\
    &= \mathrm{poly}(k) \tr (P \int d\rho_{ABC} \, \rho_{ABC}^{\otimes k} Q) \\
    &= \mathrm{poly}(k) \int d\rho_{ABC} \, \tr(P \rho_{ABC}^{\otimes k} Q)
\end{align*}
by linearity and cyclicity of the trace.
Since $d\rho_{ABC}$ is a probability measure, it follows, using \eqref{eq:strong6j/6j lower upper} and \eqref{eq:strong6j/PQ vs 6j}, that for each $k$ there exists at least one quantum state $\rho_{ABC, k}$ on $\CC^{abc}$ such that
\begin{align*}
  \normHS{\begin{bmatrix} \alpha & \beta & \mu \\ \gamma & \lambda & \nu \end{bmatrix}}^2
  \leq
  \mathrm{poly}(k) \abs{\tr(P \rho_{ABC, k}^{\otimes k} Q)}.
\end{align*}
Using the Cauchy--Schwarz inequality, the right-hand side can be upper-bounded by the square roots of each of the six traces $\tr(P_\alpha \rho_A^{\otimes k})$, $\tr(P_\beta \rho_B^{\otimes k})$, etc., which in turn can be upper-bounded via \eqref{eq:strong6j/keyl werner}. Thus we find
\begin{equation*}
  \normHS{\begin{bmatrix} \alpha & \beta & \mu \\ \gamma & \lambda & \nu \end{bmatrix}}^2 \leq \poly(k) \exp(-k D^2 / 4),
\end{equation*}
where $D := \min_{(\vecs_A, \vecs_B, \dots)} \max \{ \norm{\bar\alpha - \vecs_A}_1, \norm{\bar\beta - \vecs_B}_1, \dots \}$ quantifies the distance of $(\vecr_A, \vecr_B, \dots)$ to the closed set of spectra $(\vecs_A, \vecs_B, \dots)$ that correspond to tripartite quantum states on $\CC^{abc}$.
By assumption, $D > 0$, so we get the exponential decay asserted in the theorem.
\end{proof}

For pure quantum states $\rho_{ABC}$, the Schmidt decomposition \eqref{eq:onebody/schmidt} implies that necessarily $\vecr_{AB} = \vecr_C$ and $\vecr_A = \vecr_{BC}$. Therefore, we can discard the two-body spectra, and the problem reduces to a one-body quantum marginal problem.
On the level of representation theory, it suffices to consider single-row Young diagrams $\lambda = (k)$, corresponding to the trivial representation; hence, $\mu = \gamma$ and $\alpha = \nu$ according to \eqref{eq:onebody/schur-weyl bipartite}, and it can be shown easily that $\normHS{\smallsixj \alpha \beta \gamma \mu \nu \lambda}^2 = \dim \Ha^{\alpha\beta}_\gamma$, which is the \emph{Kronecker coefficient} of the symmetric group.
In this way, \autoref{thm:strong6j/main theorem} specializes to the results of \cite{ChristandlMitchison06, Klyachko04, ChristandlHarrowMitchison07} discussed in \autoref{sec:onebody/kronecker}.
This also shows that recoupling coefficients can grow with $k$.

\bigskip

Our result can be generalized to more than three parties by considering the following quantity: as in \eqref{eq:strong6j/6j decomposition}, successively decompose a tensor product of irreducible representations in two inequivalent ways; the corresponding ``generalized recoupling coefficients'' then are the components of the resulting isomorphism for fixed intermediate labels $\mu_j$ and $\nu_k$, and an analogous result can be established for these coefficients.
They are in the same way related to Wigner's $3nj$-symbols for $\SU(2)$ as the recoupling coefficients are related to the $6j$-symbol.
Just as \autoref{thm:strong6j/main theorem} does not cover the eigenvalues of $\rho_{AC}$, in general only a linear number of the exponentially many reduced states can be controlled in this fashion (e.g., the nearest-neighbor reduced states in a linear chain of particles).
However, the ``if'' part of \autoref{thm:strong6j/main theorem} immediately generalizes to an arbitrary number of marginals, since it only relies on the spectrum estimation theorem \eqref{eq:strong6j/keyl werner} and the ``union bound'' \eqref{eq:strong6j/nc union bound}.
What the representation-theoretic quantities involved in controlling all marginal spectra should be is an intriguing question, with possible ramifications for the search for new entropy inequalities of the von Neumann entropy, as we detail in the next section.

\section{Symmetry Properties of the Recoupling Coefficients}
\label{sec:strong6j/symmetry}

In the following we use a graphical calculus for symmetric monoidal categories to deduce symmetry properties of the recoupling coefficients for the symmetric group (see, e.g., the reviews \cite{Coecke10} and \cite{Selinger11}, \cite[\S{}2]{Turaev10}, or \cite{Preskill04} for a more physical introduction).
An alternative, purely algebraic proof is given at the end of this section.
Both the strong subadditivity of the von Neumann entropy as well as its weak monotonicity can then be understood in terms of the coefficients' symmetries (\autoref{sec:strong6j/ssa}).

Recall from \autoref{sec:strong6j/recoupling} that the multiplicity spaces $\Ha^{\alpha\beta}_\lambda$ are given by the space of $S_k$-linear maps from $[\lambda]$ to $[\alpha] \otimes [\beta]$.
In each multiplicity space, let us choose maps $\Phi^{\alpha\beta}_{\lambda,i}$ that form an orthonormal basis with respect to the inner product \eqref{eq:strong6j/rescaled hs}.
We will represent them by
\begin{equation}
\label{eq:strong6j/graphical isometries}
  \Phi^{\alpha\beta}_{\lambda,i} = \strongsixjCG{$\lambda$}{$\alpha$}{$\beta$}{$i$}
  \quad \text{and} \quad
  (\Phi^{\alpha\beta}_{\lambda,i})^{\dagger}= \strongsixjCGadjoint{$\lambda$}{$\alpha$}{$\beta$}{$i$}
\end{equation}
in the graphical calculus.
The maps $\Phi^{\alpha\beta}_{\lambda,i}$ are nothing but components of the Clebsch--Gordan isometries $\Phi_\lambda^{\alpha\beta}$ (\autoref{lem:strong6j/decomposition}).
It follows from \eqref{eq:strong6j/6j via clebsch gordan} that
\begin{equation*}
  \Id_{\lambda} \otimes \begin{bmatrix} \alpha & \beta & \mu \\ \gamma & \lambda & \nu \end{bmatrix}^{kl}_{ij} = \begin{tikzpicture}[baseline=0cm]
    \draw[->-=.6] (0,1.4) -- (0,1.8);
    \draw[->-=.6] (0.1,0.1) -- (0.46,0.46);
    \draw[->-=.6] (1.1,0.1) -- (0.74,0.46);
    \draw[->-=.6] (0.46,0.74) -- (0.14,1.06);
    \draw[-<-=.25] (-0.14,1.06) -- (-1.1,0.1);
    \draw[-<-=.25] (-0.1,-0.1) -- (-0.46,-0.46);
    \draw[-<-=.25] (-1.1,-0.1) -- (-0.74,-0.46);
    \draw[-<-=.25] (-0.46,-0.74) -- (-0.14,-1.06);
    \draw[->-=.6] (0.14,-1.06) -- (1.1,-0.1);
    \draw[-<-=.25] (0,-1.4) -- (0,-1.8);
    \draw[adjoint] (0.6,0.6) circle[radius= 0.2];
    \draw[adjoint] (0,1.2) circle[radius= 0.2];
    \draw (-0.6,-0.6) circle[radius= 0.2];
    \draw (0,-1.2) circle[radius= 0.2];
    \draw (0,1.2) node {$l$};
    \draw (0.6,0.6) node {$k$};
    \draw (-0.6,-0.6) node {$j$};
    \draw (0,-1.2) node {$i$};
    \draw (0,0) node {$\beta$};
    \draw (-1.2,0) node {$\alpha$};
    \draw (1.2,0) node {$\gamma$};
    \draw (-0.3,-0.9) node[below left] {$\mu$};
    \draw (0.3,0.9) node[above right] {$\nu$};
    \draw (0,1.6) node[right] {$\lambda$};
    \draw (0,-1.6) node[right] {$\lambda$};
  \end{tikzpicture}.
\end{equation*}
By taking the trace over $[\lambda]$ and deforming the above graphic, we obtain the following graphical expression for the matrix elements of the recoupling coefficients.

\begin{lem}
  \begin{equation}
  \label{eq:strong6j/6j graphical expression}
    \begin{bmatrix} \alpha & \beta & \mu \\ \gamma & \lambda & \nu \end{bmatrix}^{kl}_{ij}
    =
    \dfrac{1}{d_\lambda} \strongsixjAsymmetric
  \end{equation}
\end{lem}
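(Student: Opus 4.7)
The plan is to derive the graphical expression directly from the definition of the recoupling coefficients in \eqref{eq:strong6j/6j via clebsch gordan}, expressed in the chosen orthonormal bases of the multiplicity spaces.

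First I would fix bases $\{\Phi^{\alpha\beta}_{\mu,j}\}$, $\{\Phi^{\mu\gamma}_{\lambda,i}\}$, $\{\Phi^{\beta\gamma}_{\nu,k}\}$, $\{\Phi^{\alpha\nu}_{\lambda,l}\}$ of the relevant multiplicity spaces, orthonormal with respect to the rescaled Hilbert--Schmidt inner product \eqref{eq:strong6j/rescaled hs}. The matrix element $\smallsixj \alpha \beta \gamma \mu \nu \lambda^{kl}_{ij}$ then picks out, inside the bipartite tensor factor $\Ha^{\mu\gamma}_\lambda \otimes \Ha^{\alpha\beta}_\mu \to \Ha^{\alpha\nu}_\lambda \otimes \Ha^{\nu\beta\gamma}$, the coefficient of the basis element indexed by $(i,j)$ in the source and $(k,l)$ in the target. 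Substituting the basis expansions into \eqref{eq:strong6j/6j via clebsch gordan} yields exactly the composition of four Clebsch--Gordan isometries (two plain, two adjoint) displayed in the paragraph immediately preceding the lemma, tensored with $\Id_{[\lambda]}$.

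Next I would take the normalized trace over $[\lambda]$ on both sides. On the left-hand side, $\tfrac{1}{d_\lambda}\tr_{[\lambda]}(\Id_{[\lambda]} \otimes M) = M$ for any scalar $M$, so we simply recover the matrix element $\smallsixj \alpha \beta \gamma \mu \nu \lambda^{kl}_{ij}$ together with a factor $\tfrac{1}{d_\lambda}$. On the right-hand side, the trace closes the two outer $[\lambda]$-strands into a single closed loop running through the two vertices labelled $i$ and $l$ (one an isometry, one a coisometry). Using the graphical rules of the symmetric monoidal category of $S_k$-representations -- where taking a trace corresponds to bending an outgoing strand back around -- this loop-closure turns the ``open'' picture above the lemma into the ``closed'' planar diagram appearing on the right-hand side of \eqref{eq:strong6j/6j graphical expression}, with the six edges labelled by $\alpha,\beta,\gamma,\mu,\nu,\lambda$ and the four trivalent vertices labelled by $i,j,k,l$.

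The only step requiring minor care is the graphical bookkeeping: one has to verify that the arrow orientations (corresponding to representations versus duals) and the placement of daggered versus non-daggered vertices in \eqref{eq:strong6j/graphical isometries} match the pre-factor \eqref{eq:strong6j/6j via clebsch gordan}, and that a planar isotopy in the graphical calculus brings the diagram into the symmetric ``tetrahedral'' shape of \strongsixjAsymmetric. No representation-theoretic identity beyond Schur's lemma (used to normalize the Clebsch--Gordan isometries) and the defining axioms of the graphical calculus is required; the content of the lemma is essentially that \eqref{eq:strong6j/6j via clebsch gordan} \emph{is} precisely the above closed diagram, up to the normalization $1/d_\lambda$ coming from closing a single $[\lambda]$-loop.
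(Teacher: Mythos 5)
Your proposal is correct and follows essentially the same route as the paper: expand the defining relation \eqref{eq:strong6j/6j via clebsch gordan} in the chosen orthonormal bases of the multiplicity spaces to obtain the open diagram displayed before the lemma, then take the normalized trace over $[\lambda]$ (producing the $1/d_\lambda$) and deform the resulting closed diagram into the stated form. Aside from a harmless typo ($\Ha^{\beta\gamma}_\nu$ written with the indices misplaced), there is nothing to add.
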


Our goal is to transform the right-hand side expression in \eqref{eq:strong6j/6j graphical expression} into a form that renders its symmetries apparent.
For this, recall from \autoref{sec:onebody/kronecker} that the irreducible representations of the symmetric group are self-dual, i.e., $[\lambda] \cong [\lambda]^*$, because they can be defined over the reals.
We saw that this implied that $\Ha^{\lambda\lambda}_{\mathbf 1}$ is one-dimensional, i.e., there exists a single copy of the trivial representation $\mathbf{1}$ in each tensor product $[\lambda] \otimes [\lambda]$.
We shall denote the corresponding basis vector by
\begin{equation}
\label{eq:strong6j/epr graphical}
  \begin{tikzpicture}[baseline=-0.1cm]
    \draw[->-=.6] (-0.1,0) -- (-0.8, 0);
    \draw[->-=.6] (0.1,0) -- (0.8,0);
    \draw[epr] (0,0) circle[radius=0.1];
    \draw (-0.9,0) node[left] {$\lambda$};
    \draw (0.9,0) node[right] {$\lambda$};
  \end{tikzpicture}
  := \strongsixjCG{$\mathbf{1}$}{$\lambda$}{$\lambda$}{},
\end{equation}
omitting the leg corresponding to the identity object $\mathbf 1$ as is usual in the graphical calculus.
It can be concretely written as a maximally entangled state $\sum_i \ket{\lambda,i} \otimes \ket{\lambda,i} / {\sqrt {d_\lambda}}$ in any real orthonormal basis $\ket{\lambda,i}$ of $[\lambda]$ (i.e., in a basis such that $S_k$ acts by real orthogonal matrices).
We denote the adjoint of \eqref{eq:strong6j/epr graphical} by reversing arrows.
It is then easy to see that we have the ``teleportation identity''
\begin{equation}
\label{eq:strong6j/bell teleport graphical}
  \begin{tikzpicture}[baseline=-0.1cm]
    \draw[->-=.6] (-0.35,0) -- (0.35, 0);
    \draw[-<-=.35] (0.55,0) -- (1.25,0);
    \draw[->-=.6] (-0.55,0) -- (-1.25,0);
    \draw[adjoint epr] (0.45,0) circle[radius=0.1];
    \draw[epr] (-0.45,0) circle[radius=0.1];
    \draw (-1.35,0) node[left] {$\lambda$};
    \draw (1.35,0) node[right] {$\lambda$};
  \end{tikzpicture}
  =
  \frac 1 {d_\lambda}
  \begin{tikzpicture}[baseline=-0.1cm]
    \draw[-<-=.4] (0,0) -- (0.8,0);
    \draw (0,0) node[left] {$\lambda$};
    \draw (0.8,0) node[right] {$\lambda$};
  \end{tikzpicture}
  .
\end{equation}

We can use \eqref{eq:strong6j/epr graphical} and its adjoint to raise and lower indices, i.e., to reverse the direction of arrows. We thus obtain the following important property of the Clebsch--Gordan isometries (cf.\ \cite[(7-205a)]{Hamermesh89}):

\begin{lem}
  \label{lem:strong6j/invariant bases}
  Both sets
  \begin{equation*}
    \left\{ \strongsixjCGinvariant{$i$} \right\}
    \quad\text{and}\quad
    \left\{
      \sqrt{ \frac {d_\alpha d_\beta} {d_\lambda} }
      \strongsixjCGadjointinvariant{$i$}
    \right\}
  \end{equation*}
  form orthonormal bases of the space $([\alpha] \otimes [\beta] \otimes [\lambda])^{S_k}$ of $S_k$-invariant vectors in the triple tensor product.
\end{lem}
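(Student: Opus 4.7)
My plan is to proceed in three steps: (i) establish that the space of invariants has dimension $g_{\alpha,\beta,\lambda}$ matching the size of both proposed bases, (ii) verify orthonormality of the first basis by direct calculation, and (iii) reduce the second basis to the first by a ``teleportation'' argument that simultaneously fixes the normalization factor $\sqrt{d_\alpha d_\beta/d_\lambda}$.

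For step (i), the key observation is that the EPR state on $[\lambda]\otimes[\lambda]$ from \eqref{eq:strong6j/epr graphical} provides an explicit realization of the self-duality $[\lambda]\cong[\lambda]^*$ (recall all irreducibles of $S_k$ are real). Composing with this identification gives a canonical $S_k$-linear isomorphism $\Ha^{\alpha\beta}_\lambda = \Hom_{S_k}([\lambda],[\alpha]\otimes[\beta]) \cong \Hom_{S_k}(\mathbf{1},[\alpha]\otimes[\beta]\otimes[\lambda]) = ([\alpha]\otimes[\beta]\otimes[\lambda])^{S_k}$, so both spaces have dimension $g_{\alpha,\beta,\lambda}$. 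The first set in the lemma is precisely the image of the orthonormal basis $\{\Phi^{\alpha\beta}_{\lambda,i}\}$ under this map.

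For step (ii), I would expand the first displayed vector as $v_i = \frac{1}{\sqrt{d_\lambda}}\sum_j \Phi^{\alpha\beta}_{\lambda,i}(\ket{\lambda,j})\otimes\ket{\lambda,j}$ in a real orthonormal basis of $[\lambda]$. Then
\[
  \braket{v_i | v_{i'}} = \frac 1 {d_\lambda} \sum_j \braket{\Phi^{\alpha\beta}_{\lambda,i}(\ket{\lambda,j})|\Phi^{\alpha\beta}_{\lambda,i'}(\ket{\lambda,j})} = \frac 1 {d_\lambda} \tr\bigl((\Phi^{\alpha\beta}_{\lambda,i})^\dagger \Phi^{\alpha\beta}_{\lambda,i'}\bigr) = \braket{\Phi^{\alpha\beta}_{\lambda,i}, \Phi^{\alpha\beta}_{\lambda,i'}}_\lambda = \delta_{i,i'},
\]
where the last equality uses the rescaled Hilbert--Schmidt inner product \eqref{eq:strong6j/rescaled hs} with respect to which the $\Phi^{\alpha\beta}_{\lambda,i}$ are orthonormal by construction. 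The $S_k$-invariance of $v_i$ follows immediately from $S_k$-equivariance of $\Phi^{\alpha\beta}_{\lambda,i}$ together with invariance of the EPR vector.

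For step (iii), the second displayed vector has the explicit form $w_i = \sqrt{\tfrac{d_\alpha d_\beta}{d_\lambda}}\cdot \tfrac{1}{\sqrt{d_\alpha d_\beta}} \sum_{a,b}(\Phi^{\alpha\beta}_{\lambda,i})^\dagger(\ket{\alpha,a}\otimes\ket{\beta,b})\otimes\ket{\alpha,a}\otimes\ket{\beta,b}$. Computing the inner product gives
\[
  \braket{w_i|w_{i'}} = \frac 1 {d_\lambda} \tr_{[\alpha]\otimes[\beta]}\bigl(\Phi^{\alpha\beta}_{\lambda,i'}(\Phi^{\alpha\beta}_{\lambda,i})^\dagger\bigr) = \frac 1 {d_\lambda}\tr_{[\lambda]}\bigl((\Phi^{\alpha\beta}_{\lambda,i})^\dagger\Phi^{\alpha\beta}_{\lambda,i'}\bigr) = \delta_{i,i'},
\]
by cyclicity of the trace and the same normalization calculation as before. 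This confirms that the stated prefactor is correct. Graphically, the same identity follows by applying the teleportation relation \eqref{eq:strong6j/bell teleport graphical} to the composition $\langle w_i|w_{i'}\rangle$, which collapses the $\alpha$ and $\beta$ EPR loops into $d_\alpha d_\beta$ times a single $[\lambda]$ loop carrying $(\Phi^{\alpha\beta}_{\lambda,i})^\dagger\Phi^{\alpha\beta}_{\lambda,i'}$. I do not anticipate any real obstacle; the only point requiring a little care is keeping track of the factors of $d_\lambda$, $d_\alpha$, $d_\beta$ in the EPR normalizations, which is what the prefactor in the lemma is designed to compensate.
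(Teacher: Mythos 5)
Your proposal is correct and follows essentially the same route as the paper: the dimension count via self-duality of $[\lambda]$ reduces the claim to orthonormality, which you then verify by exactly the inner-product/trace computations the paper carries out (graphically via the teleportation identity, and again algebraically in its "Algebraic Proof" subsection, where your explicit vectors coincide with \eqref{eq:strong6j/alternative basis}). The only cosmetic difference is that you expand in explicit real orthonormal bases rather than invoking the EPR trace identity \eqref{eq:strong6j/bell trace}, which is an equivalent bookkeeping of the same factors of $d_\alpha$, $d_\beta$, $d_\lambda$.
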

\begin{proof}
  Since the dimensions of $([\alpha] \otimes [\beta] \otimes [\lambda])^{S_k}$ and of $\Ha^{\alpha\beta}_\lambda$ agree by self-duality of $[\lambda]$, it suffices to show that both sets of vectors are orthonormal.
  For the first set, observe that it follows from the teleportation identity \eqref{eq:strong6j/bell teleport graphical} that
  \begin{equation*}
    \left\langle \strongsixjCGinvariant{$i$} | \strongsixjCGinvariant{$i'$} \right\rangle
    =
    \begin{tikzpicture}[baseline=0cm]
      \draw[->-=.56] (-0.2,-0.6) -- (-0.2,0.6);
      \draw[->-=.56] (0.2,-0.6) -- (0.2,0.6);

      \draw[->-=.6] (0.8,-0.3) -- (0.8,0.3);
      \draw[->-=0.4] (0.8,0.8) -- (0.8,0.5);
      \draw[-<-=0] (0.8,-0.8) -- (0.8,-0.5);
      \draw[adjoint epr] (0.8,0.4) circle[radius=0.1];
      \draw[epr] (0.8,-0.4) circle[radius=0.1];

      \draw[adjoint] (0, 0.6) circle[radius= 0.2];
      \draw (0, -0.6) circle[radius= 0.2];
      \draw (0, 0.8) arc[radius= 0.4, start angle= 180, end angle= 0];
      \draw (0, -0.8) arc[radius= 0.4, start angle= -180, end angle= 0];
      \draw (0,0.6) node {$i$};
      \draw (0,-0.6) node {$i'$};
      \draw (-0.2, 0) node[left] {$\alpha$};
      \draw (0.2, 0) node[right] {$\beta$};
      \draw (0.2, 0.9) node[right] {$\lambda$};
    \end{tikzpicture}
    =
    \frac 1 {d_\lambda}
    \begin{tikzpicture}[baseline=0cm]
      \draw[->-=.56] (-0.2,-0.6) -- (-0.2,0.6);
      \draw[->-=.56] (0.2,-0.6) -- (0.2,0.6);
      \draw[->-=.56] (0.8,0.8) -- (0.8,-0.8);
      \draw[adjoint] (0, 0.6) circle[radius= 0.2];
      \draw (0, -0.6) circle[radius= 0.2];
      \draw (0, 0.8) arc[radius= 0.4, start angle= 180, end angle= 0];
      \draw (0, -0.8) arc[radius= 0.4, start angle= -180, end angle= 0];
      \draw (0,0.6) node {$i$};
      \draw (0,-0.6) node {$i'$};
      \draw (-0.2, 0) node[left] {$\alpha$};
      \draw (0.2, 0) node[right] {$\beta$};
      \draw (0.2, 0.9) node[right] {$\lambda$};
    \end{tikzpicture}
    \;.
  \end{equation*}
  This is in turn equal to
  \begin{equation*}
    \frac {\tr (\Phi^{\alpha\beta}_{\lambda,i})^\dagger \Phi^{\alpha\beta}_{\lambda,i'}} {d_\lambda}
    = \braket{\Phi^{\alpha\beta}_{\lambda,i}, \Phi^{\alpha\beta}_{\lambda,i'}}_\lambda
    = \delta_{i,i'},
  \end{equation*}
  since $\phi_i$ is an orthonormal basis with respect to the inner product \eqref{eq:strong6j/rescaled hs}.

  For the second set, we find similarly that
  \begin{equation*}
    \left\langle \strongsixjCGadjointinvariant{$i$} | \strongsixjCGadjointinvariant{$i'$} \right\rangle
    =
    \begin{tikzpicture}[baseline=0cm]
      \draw[->-=.56] (0.8,-0.8) -- (0.8,0.8);

      \draw[->-=.85] (0.2, -0.1) -- (0.2,0.1);
      \draw[->-=.9] (0.2,0.6) -- (0.2,0.3);
      \draw[->-=.56] (0.2,-0.3) -- (0.2,-0.6);
      \draw[adjoint epr] (0.2,0.2) circle[radius=0.1];
      \draw[epr] (0.2,-0.2) circle[radius=0.1];

      \draw[->-=.85] (-0.2, -0.1) -- (-0.2,0.1);
      \draw[->-=.9] (-0.2,0.6) -- (-0.2,0.3);
      \draw[->-=.56] (-0.2,-0.3) -- (-0.2,-0.6);
      \draw[adjoint epr] (-0.2,0.2) circle[radius=0.1];
      \draw[epr] (-0.2,-0.2) circle[radius=0.1];

      \draw (0, 0.6) circle[radius= 0.2];
      \draw[adjoint] (0, -0.6) circle[radius= 0.2];
      \draw (0, 0.8) arc[radius= 0.4, start angle= 180, end angle= 0];
      \draw (0, -0.8) arc[radius= 0.4, start angle= -180, end angle= 0];
      \draw (0,0.6) node {$i$};
      \draw (0,-0.6) node {$i'$};
      \draw (-0.2, 0) node[left] {$\alpha$};
      \draw (0.2, 0) node[right] {$\beta$};
      \draw (0.2, 0.9) node[right] {$\lambda$};
    \end{tikzpicture}
    = \frac 1 {d_\alpha d_\beta}
    \begin{tikzpicture}[baseline=0cm]
      \draw[->-=.56] (0.8,-0.8) -- (0.8,0.8);
      \draw[->-=.56] (0.2,0.5) -- (0.2,-0.6);
      \draw[->-=.56] (-0.2,0.5) -- (-0.2,-0.6);
      \draw (0, 0.6) circle[radius= 0.2];
      \draw[adjoint] (0, -0.6) circle[radius= 0.2];
      \draw (0, 0.8) arc[radius= 0.4, start angle= 180, end angle= 0];
      \draw (0, -0.8) arc[radius= 0.4, start angle= -180, end angle= 0];
      \draw (0,0.6) node {$i$};
      \draw (0,-0.6) node {$i'$};
      \draw (-0.2, 0) node[left] {$\alpha$};
      \draw (0.2, 0) node[right] {$\beta$};
      \draw (0.2, 0.9) node[right] {$\lambda$};
    \end{tikzpicture}
    = \frac {d_\lambda} {d_\alpha d_\beta} \delta_{i,i'}.
  \end{equation*}
\end{proof}

We finally introduce the symmetric notation:
\begin{equation}
\label{eq:strong6j/invariant notation}
  \begin{tikzpicture}[baseline=0cm]
    \draw[->-=.56] (0,-0.2) -- (0,-0.7);
    \draw[->-=.56] (0.177,0.1) -- (0.766, 0.5);
    \draw[->-=.56] (-0.177,0.1) -- (-0.766,0.5);
    \draw (0,0) circle[radius=0.2];
    \draw (0,0) node {$i$};
    \draw (-0.766,0.5) node[above] {$\alpha$};
    \draw (0.766, 0.5) node[above] {$\beta$};
    \draw (0,-0.7) node[below] {$\lambda$};
  \end{tikzpicture}
  :=
  \strongsixjCGinvariant{$i$}
\end{equation}
We note that the vectors \eqref{eq:strong6j/invariant notation} depend on the choice of arrow that was reversed.
However, by \autoref{lem:strong6j/invariant bases} any such choice gives rise to unitarily equivalent bases of the space of $S_k$-invariants!
We thus obtain the following result:

\begin{prp}
  \label{prp:strong6j/invariant object}
  We have
  \begin{equation}
  \label{eq:strong6j/invariant norm}
    \frac 1 {d_\mu d_\nu} \norm{\sixj \alpha \beta \gamma \mu \nu \lambda}_{\text{HS}}^2
  = d_\alpha d_\beta d_\gamma d_\lambda d_\mu d_\nu \, \norm{\strongsixjSymmetric}^2_2,
  \end{equation}
  where $\norm{(x_{i,j,k,l})}_2 := \sqrt{\sum_{i,j,k,l} \abs{x_{i,j,k,l}}^2}$ denotes the $\ell_2$-norm of a tensor with indices $i,j,k,l$.
  The right-hand side does not depend on the choice of arrow that was reversed in the definition \eqref{eq:strong6j/invariant notation}.
\end{prp}
\begin{proof}
  Recall from \eqref{eq:strong6j/6j graphical expression} that
  \begin{equation*}
      \normHS{\sixj \alpha \beta \gamma \mu \nu \lambda}^2
    = \frac 1 {d_\lambda^2} \norm{\strongsixjAsymmetric}^2_2.
  \end{equation*}
  By inserting the teleportation identity \eqref{eq:strong6j/bell teleport graphical} once for each of the six arrows, we obtain
  \begin{equation*}
    \frac 1 {d_\lambda^2} d_\alpha^2 d_\beta^2 d_\gamma^2 d_\mu^2 d_\nu^2 d_\lambda^2 \norm{\strongsixjAsymmetricTeleported}^2_2.
  \end{equation*}
  By first applying the unitary transformation that relates the second orthonormal basis in \autoref{lem:strong6j/invariant bases} to the first (at the vertices \circled{$k$} and \circled{$l$}) and then using definition \eqref{eq:strong6j/invariant notation} (at all four vertices), this is in turn equal to
  \begin{align*}
    \frac 1 {d_\lambda^2} d_\alpha^2 d_\beta^2 d_\gamma^2 d_\mu^2 d_\nu^2 d_\lambda^2 \frac {d_\lambda} {d_\alpha d_\nu} \frac {d_\nu} {d_\beta d_\gamma} \norm{\strongsixjSymmetric}^2_2,
  \end{align*}
  as the $\ell_2$-norm is unitarily invariant.
  From this we obtain the desired expression by deforming the diagram and simplifying the prefactor.
  The unitary invariance of the $\ell_2$-norm also shows that the expression is insensitive to the choice of which arrow is reversed in \eqref{eq:strong6j/invariant notation}.
\end{proof}

The right-hand side of \eqref{eq:strong6j/invariant norm} is the symmetric group analogue of \emphindex{Wigner's $6j$-symbol}, which can be obtained in the same way from the recoupling coefficients of $\SU(2)$.
It is immediately apparent from the graphical calculus that it has the symmetries of a tetrahedron.
We remark that for our purposes it was important to study the recoupling coefficients in \autoref{sec:strong6j/overlapping marginals}, since the dimensions of irreducible $S_k$-representations grow exponentially with $k$ and thus affect the asymptotics.
We record the following consequence, which has a well-known counterpart for $\SU(2)$; cf.\ \cite[(B4)]{LevinWen05}.

\begin{cor}
  \label{cor:strong6j/symmetry}
  The quantities
  \begin{equation}
    \frac 1 {d_\mu d_\nu} \norm{\sixj \alpha \beta \gamma \mu \nu \lambda}_{\text{HS}}^2
  \end{equation}
  are invariant under exchanging the columns $(\beta,\lambda) \leftrightarrow (\mu,\nu)$ as well as $(\alpha,\gamma) \leftrightarrow (\mu,\nu)$.
\end{cor}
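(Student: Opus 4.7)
The plan is to deduce the two column-swap symmetries of $\tfrac{1}{d_\mu d_\nu}\normHS{\sixj{\alpha}{\beta}{\gamma}{\mu}{\nu}{\lambda}}^2$ from the manifestly symmetric expression on the right-hand side of \eqref{eq:strong6j/invariant norm}. Since the dimension prefactor $d_\alpha d_\beta d_\gamma d_\lambda d_\mu d_\nu$ is invariant under arbitrary relabelings of the six irreducible representations, it suffices to show that the scalar $\|\cdot\|_2^2$ associated with the symmetric tetrahedral diagram from \autoref{prp:strong6j/invariant object} is invariant under the two claimed column swaps.

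Inspection of the diagram shows that its underlying graph is the complete graph $K_4$ on the four vertices $i,j,k,l$, with the six irreducible representations distributed over the six edges so that the three columns of the 6j-symbol correspond exactly to the three pairs of opposite edges of the tetrahedron: the pair $(\mu,\nu)$ sits on $\{ij,kl\}$, the pair $(\alpha,\gamma)$ on $\{jl,ik\}$, and the pair $(\beta,\lambda)$ on $\{jk,il\}$. A short combinatorial check then shows that the vertex transposition $(i\,k)$ acts on edges by swapping $ij\leftrightarrow jk$ and $il\leftrightarrow kl$ while fixing $ik$ and $jl$, and so permutes the six edge labels in exactly the way that effects the column swap $(\beta,\lambda)\leftrightarrow(\mu,\nu)$ without any internal reordering within a column. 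Analogously, the vertex transposition $(i\,l)$ realizes the column swap $(\alpha,\gamma)\leftrightarrow(\mu,\nu)$.

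It remains to argue that such a vertex relabeling of the tetrahedral diagram leaves the scalar $\|\cdot\|_2^2$ unchanged. A vertex permutation alters which edges are incident to which vertex, and hence the orientations of the arrows entering the invariant tensor \eqref{eq:strong6j/invariant notation} at each vertex. Different orientation choices yield unitarily equivalent orthonormal bases of the space $([\alpha']\otimes[\beta']\otimes[\lambda'])^{S_k}$---this is the content of \autoref{lem:strong6j/invariant bases} together with the change-of-basis computation that underlies the final clause of \autoref{prp:strong6j/invariant object}---and these unitary changes of basis are absorbed by the unitary invariance of the $\ell_2$-norm on each of the four tensor indices. The main delicate point is to keep track of the induced orientation changes at all four vertices simultaneously; however, since \autoref{prp:strong6j/invariant object} has already established basis-independence of the entire expression, no further bookkeeping is required and the corollary follows.
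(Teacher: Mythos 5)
Your proof is correct and follows essentially the same route as the paper: the paper likewise deduces the corollary immediately from \autoref{prp:strong6j/invariant object}, invoking the reflections of the tetrahedral diagram through the axes through the edges labeled $\alpha$ and $\beta$, which are exactly your vertex transpositions $(i\,k)$ and $(i\,l)$. Your explicit identification of the columns with opposite-edge pairs and the appeal to the basis-independence from \autoref{lem:strong6j/invariant bases} is just a slightly more spelled-out version of the paper's one-line argument.
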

\begin{proof}
  This is an immediate consequence of \autoref{prp:strong6j/invariant object}, since the right-hand side norm in \eqref{eq:strong6j/invariant norm} is invariant under reflection of the diagram by the axes through the edges labeled by $\alpha$ and $\beta$, respectively.
\end{proof}

\subsection*{Algebraic Proof}

We now give an alternative, algebraic proof of \autoref{prp:strong6j/invariant object} and \autoref{cor:strong6j/symmetry} that follows along the same lines as the graphical proof.

In quantum information theory, \emph{maximally entangled states}\index{maximally entangled states|textbf}\nomenclature[QketPsi^+_H]{$\ket{\Psi^+_\calH}$}{maximally entangled state on $\calH \otimes \cal H$} on a Hilbert space $\calH \otimes \calH$ are defined by the formula
\begin{equation}
\label{eq:strong6j/epr}
  \ket{\Psi^+_\calH} := \frac 1 {\sqrt{\dim \mathcal H}} \sum_i \ket i \otimes \ket i
\end{equation}
with respect to an orthonormal basis $\ket i$.
They satisfy the fundamental identity
\begin{equation}
  \label{eq:strong6j/epr transpose}
  (X \otimes \Id) \ket{\Psi^+_\calH} = (\Id \otimes X^T) \ket{\Psi^+_\calH}
\end{equation}
for any operator $X$ on $\calH$, where $X^T$ denotes the transpose in the basis $\ket i$.
Thus they are invariant under operations of the form $U \otimes \overline{U}$, where $U \in \U(\calH)$ is a unitary and where $\overline{U}$ denotes its complex conjugate with respect to the basis $\ket i$ \cite{HorodeckiHorodecki99}.
In particular, this implies that for any basis $\ket{\lambda,i}$ of $[\lambda]$ in which $S_k$ acts by orthogonal transformations,
\begin{equation*}
  \ket{\Psi^+_\lambda} := \frac 1 {\sqrt{\dim [\lambda]}} \sum_j \ket{\lambda,j} \otimes \ket{\lambda,j}
\end{equation*}
is the (unique up to phase) invariant vector in $[\lambda] \otimes [\lambda]$---as we had asserted before (cf.\ \eqref{eq:strong6j/epr graphical}).\nomenclature[RPsi^+_lambda]{$\ket{\Psi^+_\lambda}$}{$S_k$-invariant vector in $[\lambda] \otimes [\lambda]$}
By using \eqref{eq:strong6j/epr transpose} it is straightforward to verify that the following two well-known properties hold:
\begin{itemize}
\item We have the ``teleportation identity''
  \begin{equation}
  \label{eq:strong6j/bell teleport}
    \left( \bra{\Psi^+_\calH} \otimes \Id_{\mathcal H} \right)
   \left( \Id_{\mathcal H} \otimes \ket {\Psi^+_\calH} \right)
   = \frac 1 {\dim \mathcal H} \Id_{\mathcal H}.
  \end{equation}
  This is the algebraic version of \eqref{eq:strong6j/bell teleport graphical}.
  It follows that for any two operators $X \colon \calK \rightarrow \calK' \otimes \calH$ and $Y \colon \calH \otimes \calL \rightarrow \calL'$ we have the relation
  \begin{equation}
  \label{eq:strong6j/bell teleport operators}
  \begin{aligned}
      &\left( \Id_{\calK'} \otimes \bra{\Psi^+_\calH} \otimes \Id_{\calL'} \right)
      \left( X \otimes \Id_\calH \otimes Y \right)
      \left( \Id_\calK \otimes \ket{\Psi^+_\calH} \otimes \Id_\calL \right) \\
    = &\frac 1 {\dim \calH} (\Id_{\calK'} \otimes Y) (X \otimes \Id_\calL)
  \end{aligned}
  \end{equation}
  (which is best understood graphically).
\item The normalized trace of any operator $X$ can be written as
  \begin{equation}
  \label{eq:strong6j/bell trace}
    \braket{\Psi^+_\calH | X \otimes \Id_{\mathcal H} | \Psi^+_\calH} =
    \frac 1 {\dim \mathcal H} \tr X.
  \end{equation}
\end{itemize}
For any $\alpha$, $\beta$ and $\lambda$, we shall consider the following sets of vectors in $([\alpha] \otimes [\beta] \otimes [\lambda])^{S_k}$,
\begin{align}
  \ket{\alpha\beta\lambda,i} &:= (\Phi^{\alpha\beta}_{\lambda,i} \otimes \Id_\lambda) \ket{\Psi^+_\lambda}, \nonumber \\
  \ket{\widetilde{\alpha\beta\lambda,i}} &:= \sqrt {\frac {d_\alpha d_\beta} {d_\lambda}} (\Phi^{\alpha\beta}_{\lambda,i} \otimes \Id_\alpha \otimes \Id_\beta)^\dagger (\ket{\Psi^+_\alpha} \otimes \ket{\Psi^+_\beta}), \label{eq:strong6j/alternative basis}
\end{align}
constructed as in \autoref{lem:strong6j/invariant bases}.
We now prove algebraically that each set forms an orthonormal basis.
For the first,
\begin{align*}
    &\braket{\alpha\beta\lambda,i | \alpha\beta\lambda,i'}
  = \braket{\Psi^+_\lambda | (\Phi^{\alpha\beta}_{\lambda,i})^\dagger \Phi^{\alpha\beta}_{\lambda,i'} \otimes \Id_\lambda | \Psi^+_\lambda}
  = \frac 1 {d_\lambda} \tr (\Phi^{\alpha\beta}_{\lambda,i})^\dagger \Phi^{\alpha\beta}_{\lambda,i'} \\
  = &\braket{\Phi^{\alpha\beta}_{\lambda,i}, \Phi^{\alpha\beta}_{\lambda,i'}}_\lambda
  = \delta_{i,i'}
\end{align*}
by \eqref{eq:strong6j/bell trace} and the definition of the inner product \eqref{eq:strong6j/rescaled hs}.
For the second set of vectors,
\begin{align*}
    &\braket{\widetilde{\alpha\beta\lambda,i} | \widetilde{\alpha\beta\lambda,i'}}
  = \frac {d_\alpha d_\beta} {d_\lambda}
    \braket{\Psi^+_\alpha \otimes \Psi^+_\beta |
      \Phi^{\alpha\beta}_{\lambda,i} (\Phi^{\alpha\beta}_{\lambda,i'})^\dagger \otimes \Id_\alpha \otimes \Id_\beta |
      \Psi^+_\alpha \otimes \Psi^+_\beta} \\
  = &\frac 1 {d_\lambda} \tr \Phi^{\alpha\beta}_{\lambda,i} (\Phi^{\alpha\beta}_{\lambda,i'})^\dagger
  = \braket{\Phi^{\alpha\beta}_{\lambda,i'}, \Phi^{\alpha\beta}_{\lambda,i}}_\lambda
  = \delta_{i',i}.
\end{align*}

We now consider the recoupling coefficients. First, \eqref{eq:strong6j/6j via clebsch gordan} and \eqref{eq:strong6j/bell trace} give
\begin{align*}
    &\sixj\alpha \beta\gamma\mu\nu\lambda_{ij}^{kl}
  = \frac 1 {d_\lambda} \tr
    (\Phi^{\alpha\nu}_{\lambda,k})^\dagger \left( \Id_\alpha \otimes \Phi^{\beta\gamma}_{\nu,l} \right)^\dagger
    \left( \Phi^{\alpha\beta}_{\mu,j} \otimes \Id_{\gamma} \right) \Phi^{\mu\gamma}_{\lambda,i} \\
  = &\bra{\Psi^+_\lambda}
     \left( (\Phi^{\alpha\nu}_{\lambda,k})^\dagger \otimes \Id_\lambda \right)
     \left( \Id_\alpha \otimes \Phi^{\beta\gamma}_{\nu,l} \otimes \Id_\lambda \right)^\dagger
     \left( \Phi^{\alpha\beta}_{\mu,j} \otimes \Id_\gamma \otimes \Id_\lambda \right)
     \underbrace{\left( \Phi^{\mu\gamma}_{\lambda,i} \otimes \Id_\lambda \right)
     \ket{\Psi^+_\lambda}}_{=\ket{\mu\gamma\lambda,i}}.
\end{align*}
We may now apply \eqref{eq:strong6j/bell teleport operators} to $X = \ket{\mu\gamma\lambda,i}$ and $Y = \Phi^{\alpha\beta}_{\mu,j}$ in order to rewrite %
\begin{align*}
    &\left( \Phi^{\alpha\beta}_{\mu,j} \otimes \Id_\gamma \otimes \Id_\lambda \right) \ket{\mu\gamma\lambda,i} \\
  = &d_\mu \left( \Id_\gamma \otimes \Id_\lambda \otimes \bra{\Psi^+_\mu} \otimes \Id_\alpha \otimes \Id_\beta \right)
    \left( \ket{\mu\gamma\lambda,i} \otimes \Id_\mu \otimes \Phi^{\alpha\beta}_{\mu,j} \right)
    \ket{\Psi^+_\mu} \\
  = &d_\mu \left( \Id_\gamma \otimes \Id_\lambda \otimes \bra{\Psi^+_\mu} \otimes \Id_\alpha \otimes \Id_\beta \right)
    \left( \ket{\mu\gamma\lambda,i} \otimes \ket{\alpha\beta\mu,j} \right).
\end{align*}
Continuing in this way and using definition \eqref{eq:strong6j/alternative basis}, we obtain the following expression for the matrix elements of the recoupling coefficient:
\begin{align*}
  \sixj\alpha \beta\gamma\mu\nu\lambda_{ij}^{kl}
  = &d_\mu d_\alpha d_\beta d_\gamma d_\nu \sqrt{\frac {d_\lambda} {d_\alpha d_\nu}} \sqrt{\frac {d_\nu} {d_\beta d_\gamma}} \\
  \times &\left( \bra{\Psi^+_\alpha} \otimes \bra{\Psi^+_\beta} \otimes \bra{\Psi^+_\gamma} \otimes \bra{\Psi^+_\mu} \otimes \bra{\Psi^+_\nu} \otimes \bra{\Psi^+_\lambda} \right) \\
  &\left( \ket{\mu\gamma\lambda,i} \otimes \ket{\alpha\beta\mu,j} \otimes \ket{\widetilde{\alpha\nu\lambda,k}} \otimes \ket{\widetilde{\beta\gamma\nu,l}} \right)
\end{align*}
The sum of their absolute values squared over all indices $i$, $j$, $k$ and $l$ is equal to
\begin{equation}
\label{eq:strong6j/algebraic recoupling norm squared}
\begin{aligned}
    \frac 1 {d_\mu d_\nu} \normHS{\sixj\alpha\beta\gamma\mu\nu\lambda}^2
  =\ &d_\alpha d_\beta d_\gamma d_\lambda d_\mu d_\nu \\
  \times &\tr \left( P^{\alpha\alpha} \otimes P^{\beta\beta} \otimes P^{\gamma\gamma} \otimes P^{\mu\mu} \otimes P^{\nu\nu} \otimes P^{\lambda\lambda} \right) \\
  &\quad \left( P^{\mu\gamma\lambda} \otimes P^{\alpha\beta\mu} \otimes P^{\alpha\nu\lambda} \otimes P^{\beta\gamma\nu} \right),
\end{aligned}
\end{equation}where $P^{\alpha\beta\mu}$ denotes the orthogonal projection onto $([\alpha] \otimes [\beta] \otimes [\mu])^{S_k}$,
$P^{\alpha\alpha} = \proj{\Psi^+_\alpha}$, etc.
Equation~\eqref{eq:strong6j/algebraic recoupling norm squared} is the algebraic analogue of \autoref{prp:strong6j/invariant object}.
As before, \autoref{cor:strong6j/symmetry} is a direct consequence of its symmetries.

\section{Application: Strong Subadditivity of the von Neumann Entropy}
\label{sec:strong6j/ssa}

We now prove the strong subadditivity and weak monotonicity of the von Neumann entropy as direct consequences of \autoref{thm:stabs/main theorem} and the symmetry properties of the recoupling coefficients.
We refer to \autoref{sec:strong6j/discussion} for a discussion of the general technique in the context of the search for new entropy inequalities.
We start by noting that it follows from the first invariance asserted in \autoref{cor:strong6j/symmetry} and the polynomial upper bound \eqref{eq:strong6j/6j lower upper} that
\begin{equation*}
  \normHS{\sixj \alpha \beta \gamma \mu \nu \lambda}^2
  = \dfrac{d_\mu d_\nu} {d_\beta d_\lambda}
  \normHS{\sixj \alpha \mu \gamma \beta \lambda \nu}^2
  \leq \poly(k) \dfrac{d_\mu d_\nu} {d_\beta d_\lambda}.
\end{equation*}
Hence, if $\rho_{ABC}$ is a tripartite quantum state then \autoref{thm:strong6j/main theorem} implies that
\begin{equation}
  \label{eq:strong6j/subadd ratio cor}
  \dfrac{d_\mu d_\nu} {d_\beta d_\lambda}
  \geq
  \frac 1 {\mathrm{poly}(k)}
\end{equation}
for sequences of normalized Young diagrams that converge to the respective spectra of the reduced density matrices.
Since for large $k$,  $\frac 1 {k} \log_2 \dim [\lambda] \rightarrow H(\bar\lambda) = \sum_i -\bar\lambda_i \log_2 \bar\lambda_i$ \cite{ChristandlMitchison06}, we conclude that the von Neumann entropy is strongly subadditive:

\begin{thm}[\cite{LiebRuskai73}]
  For all quantum states $\rho_{ABC}$,
  \begin{equation*}
  S(\rho_{AB}) + S(\rho_{BC}) \geq S(\rho_{B}) + S(\rho_{ABC}).
  \end{equation*}
\end{thm}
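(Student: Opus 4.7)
The plan is to read off strong subadditivity directly from the symmetry of the recoupling coefficients together with the characterization of compatible marginals given by Theorem 8.3.1. The key mechanism is that Corollary 8.4.4, through the column swap $(\beta,\lambda)\leftrightarrow(\mu,\nu)$, implements the identity
\[
\normHS{\sixj\alpha\beta\gamma\mu\nu\lambda}^2 \;=\; \frac{d_\mu d_\nu}{d_\beta d_\lambda}\,\normHS{\sixj\alpha\mu\gamma\beta\lambda\nu}^2,
\]
and the right-hand factor is polynomially bounded in $k$ by Lemma 8.2.4. Hence I obtain the a priori upper bound
\[
\normHS{\sixj\alpha\beta\gamma\mu\nu\lambda}^2 \;\leq\; \poly(k)\,\frac{d_\mu d_\nu}{d_\beta d_\lambda}.
\]

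Next, I plug in an arbitrary tripartite state $\rho_{ABC}$ with marginal spectra $\vec r_A,\vec r_B,\vec r_C,\vec r_{AB},\vec r_{BC},\vec r_{ABC}$ and invoke the ``if'' direction of Theorem 8.3.1: there is a sequence of Young diagrams $\alpha,\beta,\gamma,\mu,\nu,\lambda\vdash k$ whose normalizations converge to these spectra and for which $\normHS{\sixj\alpha\beta\gamma\mu\nu\lambda}^2 \geq 1/\poly(k)$. Combining the two bounds gives
\[
\frac{d_\mu d_\nu}{d_\beta d_\lambda} \;\geq\; \frac{1}{\poly(k)}.
\]
Taking $\frac{1}{k}\log_2$ of both sides reduces this to a statement about the asymptotic growth rates of irreducible $S_k$-representation dimensions.

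Finally, I close the argument using the standard asymptotic $\frac{1}{k}\log_2 \dim[\lambda]\to H(\bar\lambda)$ as $k\to\infty$ with $\bar\lambda\to \vec r$ (already cited in the preamble from \cite{ChristandlMitchison06}). Applying it to each of $d_\mu$, $d_\nu$, $d_\beta$, $d_\lambda$, the convergence of the normalized Young diagrams to the marginal spectra translates the dimension inequality into
\[
S(\rho_{AB}) + S(\rho_{BC}) - S(\rho_B) - S(\rho_{ABC}) \;\geq\; 0,
\]
which is precisely strong subadditivity. Weak monotonicity follows identically by using the second column swap $(\alpha,\gamma)\leftrightarrow(\mu,\nu)$ provided by Corollary 8.4.4 in place of the first.

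The proof is conceptually short once the machinery is in place; there is no serious obstacle left at this stage. The only point requiring care is the interplay of limits: the Young-diagram sequence delivered by Theorem 8.3.1 only satisfies the lower bound along a subsequence with $k\to\infty$, so the dimension inequality must be passed through the limit carefully, using that $\poly(k)$ terms contribute $O(k^{-1}\log k)\to 0$ after normalization. All the conceptual weight of the argument lives in the earlier chapters (Theorem 8.3.1 and the symmetry analysis of Section 8.4); strong subadditivity is then a one-line corollary of a structural symmetry of the recoupling coefficients, which is precisely the point the chapter aims to illustrate.
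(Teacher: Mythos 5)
Your proposal is correct and follows essentially the same route as the paper: the column-swap symmetry of the recoupling coefficients combined with the polynomial bound on their Hilbert--Schmidt norm, the lower bound from the ``if'' direction of the main theorem for sequences of normalized Young diagrams converging to the marginal spectra, and the asymptotic $\frac{1}{k}\log_2 \dim[\lambda]\to H(\bar\lambda)$ to convert the dimension-ratio inequality into strong subadditivity (with weak monotonicity via the other column swap). Your remark about handling the $\poly(k)$ factors and the limit is exactly the (implicit) bookkeeping in the paper's argument, so there is nothing to add.
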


For $[\beta]$ the trivial representation, this proof of strong subadditivity reduces to the proof of subadditivity given in \cite{ChristandlMitchison06}.
Weak monotonicity,
\[S(\rho_{AB}) + S(\rho_{BC}) \geq S(\rho_A) + S(\rho_C),\]
follows similarly by swapping the columns $(\alpha, \gamma) \leftrightarrow (\mu, \nu)$ in accordance with \autoref{cor:strong6j/symmetry}.

\section{Semiclassicality}
\label{sec:strong6j/semiclassicality}

In \cite{WignerGriffin59}, Wigner studied the asymptotics of the recoupling coefficients of $\SU(2)$ which can be defined in complete analogy to \autoref{dfn:strong6j/recoupling}. Given three particles of spin $j_A$, $j_B$, $j_C$ such that the total spin of the first two particles is $j_{AB}$ and of all three particles $j_{ABC}$, the absolute value squared of the $\SU(2)$ recoupling coefficient can be interpreted as the probability of observing that particles two and three have total spin $j_{BC}$. In the semiclassical limit of simultaneously large spins, Wigner showed that this probability oscillates around the inverse volume of the tetrahedron whose edges have length equal to the six spins---if such a tetrahedron exists (\autoref{fig:strong6j/tetrahedron}). In particular, it then decays polynomially with $j$. If no such tetrahedron exists then the $6j$-symbol decays exponentially.
This result is understood to mean that ``classical'' configurations are exponentially more likely than all others in the limit of large quantum numbers.
A more precise formula has been given by Ponzano and Regge \cite{PonzanoRegge68} and only fully proved in \cite{Roberts99}.
We remark that the labeling of Wigner's tetrahedron in \autoref{fig:strong6j/tetrahedron} is dual to the $\SU(2)$ analogue of the diagram \eqref{eq:strong6j/6j graphical expression} \cite{Roberts99}.

The recoupling coefficients for the symmetric groups $S_k$ that we consider in this chapter can also be defined in terms of unitary groups. This follows from Schur--Weyl duality \eqref{eq:strong6j/schur-weyl}, which implies that the projectors $P_\alpha$, $P_\beta$, etc.\ in \eqref{eq:strong6j/PQ vs 6j} can be equivalently defined as the isotypical projectors for the unitary groups $\SU(a)$, $\SU(b)$, etc.
From this perspective, the decompositions in \eqref{eq:strong6j/tripartite schur-weyl} arise by restricting the $\SU(abc)$-representation $(\CC^{abc})^{\otimes k}$ to $\SU(a) \times \SU(b) \times \SU(c)$ via either of the ``intermediate subgroups'' $\SU(ab) \times \SU(c)$ or $\SU(a) \times \SU(bc)$ (cf.\ \autoref{fig:strong6j/unitary}). Thus, it is suggestive to consider the number of boxes $k$ in the Young diagrams as the semiclassical parameter in our setup. In this sense, tripartite quantum states $\rho_{ABC}$ are the formal analogues of Wigner's tetrahedra---they are the geometric objects that describe the ``classical'' configurations, corresponding to polynomial decay in the limit $k \rightarrow \infty$.

\begin{figure}
  \begin{center}
    \includegraphics[width=0.9\linewidth]{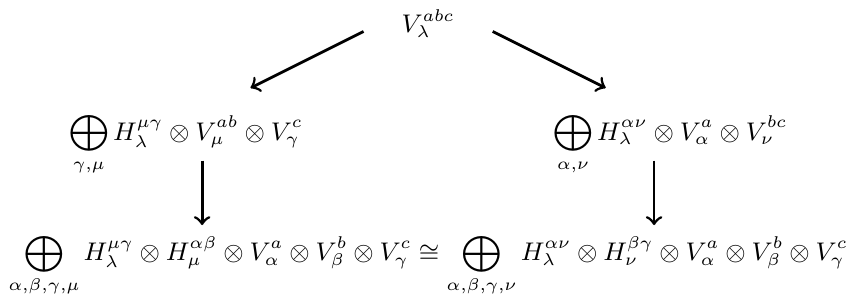}
  \end{center}
  \caption[Recoupling coefficients via unitary groups]{%
  \emph{``Dual'' definition of the recoupling coefficients of the symmetric group in terms of unitary groups.}
  An irreducible $\SU(abc)$-representation $V^{abc}_\lambda$ can be restricted along either of the chains of homomorphisms $\SU(a) \times \SU(b) \times \SU(c) \rightarrow \SU(ab) \times \SU(c) \rightarrow \SU(abc)$ and $\SU(a) \times \SU(b) \times \SU(c) \rightarrow \SU(a) \times \SU(bc) \rightarrow \SU(abc)$, resulting in two isomorphic decompositions. %
  The recoupling coefficients of the symmetric group can be obtained from the components of this isomorphism in the same way as in \eqref{eq:strong6j/6j decomposition}--\eqref{eq:strong6j/6j via clebsch gordan}.
  }
  \label{fig:strong6j/unitary}
\end{figure}

\section{Sums of Matrices and Quantum Marginals}%
\label{sec:strong6j/sums of matrices}

Wigner's result and our \autoref{thm:strong6j/main theorem} are in fact closely related on a precise mathematical level. Before we describe the construction, we note that the existence of the tetrahedron is equivalent to the existence of vectors $\vec j_1$, $\vec j_2$, $\vec j_3$ such that $\abs{\vec j_1} = j_1$, $\abs{\vec j_2} = j_2$, $\abs{\vec j_3} = j_3$, $\abs{\vec j_1 + \vec j_2} = j_{12}$, $\abs{\vec j_2 + \vec j_3} = j_{23}$, and $\abs{\vec j_1 + \vec j_2 + \vec j_3} = j_{123}$ (\autoref{fig:strong6j/tetrahedron}).
By the usual identification of $\RR^3$ with the Lie algebra $\mathfrak{su}_2$, we may assign to each vector $\vec j \in \RR^3$ the Hermitian $2 \times 2$-matrix $\vec j \cdot \vec \sigma$, where $\vec\sigma = (\sigma_x, \sigma_y, \sigma_z)$ is the vector of Pauli matrices. Then the above becomes an instance of the following general problem \cite{Backens10}: %

\begin{pro}[Partial Sums of Matrices]\index{sums of matrices!partial}
\label{pro:strong6j/generalized weyl}
  Do there exist Hermitian $d \times d$-matrices $A$, $B$ and $C$ with given prescribed eigenvalues for $A$, $B$, $C$, $A+B$, $B+C$ and $A+B+C$?
\end{pro}

This is a natural generalization of the problem of determining the relation between the eigenvalues of $A$, $B$ and $A+B$ that goes back at least to Weyl (\autoref{sec:onebody/kronecker}).
In \cite{Klyachko04}, it was shown how the one-body quantum marginal problem degenerates to Weyl's problem in an appropriate limit (cf.\ \cite{Ruskai07} for another connection in the context of the $N$-representability problem).
We will now show that \autoref{pro:strong6j/generalized weyl} can similarly be considered as a special case of the quantum marginal problem with overlapping marginals as discussed in this chapter---both on the level of geometry and on the level of representation theory.

\bigskip

Let $A$, $B$, $C$ be Hermitian $d \times d$-matrices.
Without loss of generality, we may assume that $A, B, C \geq 0$ and that $1 - \tr (A + B + C) \geq \norm{A+B+C}_\infty$ (else, we may add suitable multiples of the identity and rescale).
Generalizing a construction from \cite{Christandl08}, we define a tripartite quantum state $\rho_{ABC}$ in terms of its purification
\begin{equation}
\label{eq:strong6j/purification embedding}
\begin{aligned}
  \ket{\psi}_{ABCD}
=\;&\sum_{i=1}^d \sqrt{A} \ket i_A \otimes \ket{00i}_{BCD}
+ \sum_{j=1}^d \sqrt{B} \ket j_B \otimes \ket{00j}_{ACD} \\
+\;&\sum_{k=1}^d \sqrt{C} \ket k_C \otimes \ket{00k}_{ABD}
+ \sqrt{1 - \tr (A + B + C)} \ket{0000}_{ABCD}
\end{aligned}
\end{equation}
in $(\CC^{d+1})^{\otimes 4}$, where we consider $A$, $B$ and $C$ as acting on $\CC^d$, and $\CC^{d+1} = \CC \ket 0 \oplus \CC^d$.
We record the following properties:

\begin{lem}
\label{lem:strong6j/purification embedding}
  Let $\rho_{ABC}$ be the quantum state with purification \eqref{eq:strong6j/purification embedding}.
  Then the non-zero eigenvalues of $\rho_{ABC}$ and \emph{all} its reduced density matrices are given by
  \begin{align*}
    \spec \rho_{ABC} &= (1 - \tr (A+B+C), \spec (A+B+C)), \\
    \spec \rho_{AB} &= (1 - \tr (A+B), \spec (A+B)), \\
    \spec \rho_{A} &= (1 - \tr A, \spec A),
  \end{align*}
  etc.
\end{lem}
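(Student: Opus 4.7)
The plan is to exploit a strong orthogonality structure in the four summands of $\ket\psi_{ABCD}$. Regarding $\CC^{d+1} = \CC^d \oplus \CC\ket 0$ as splitting each site into an ``excited'' and a ``vacuum'' sector, the four terms have pairwise disjoint $ABC$-excitation patterns: Term~1 is (exc,vac,vac), Term~2 is (vac,exc,vac), Term~3 is (vac,vac,exc), Term~4 is (vac,vac,vac). Moreover, Terms~1--3 have $D$ in an excited basis vector $\ket i_D$ with $i\geq 1$ while Term~4 has $\ket 0_D$, so the four summands are pairwise orthogonal both on $ABC$ and on $D$. I will use this repeatedly to eliminate cross terms from each partial trace.

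For $\rho_A$, I will split $\ket\psi = \Pi^A_e\ket\psi + \Pi^A_0\ket\psi$ where $\Pi^A_e$ projects onto the excited sector of $A$. The excited part is exactly Term~1, and $\tr_{BCD}[T_1 T_1^\dagger] = \sum_i \sqrt A\ket i\bra i\sqrt A = A$. The vacuum part is $\ket 0_A \otimes \ket\phi_{BCD}$ with $\ket\phi$ the sum of three mutually orthogonal pieces (with $BCD$ patterns (exc,vac,exc), (vac,exc,exc), (vac,vac,vac)), so $\braket{\phi|\phi} = \tr B + \tr C + (1 - \tr(A+B+C)) = 1 - \tr A$. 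The cross terms between excited and vacuum $A$-sectors vanish because the $BCD$-parts of Term~1 (namely $\ket{00i}$, $i\geq 1$) are orthogonal to those of Terms~2--4. Hence $\rho_A = A \oplus (1-\tr A)\ket 0\bra 0$ and the claim follows; $\rho_B$ and $\rho_C$ are analogous.

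For $\rho_{AB}$, I will classify by the $AB$-pattern: (exc,exc) receives no contribution, (exc,vac) only from Term~1, (vac,exc) only from Term~2, and (vac,vac) from Terms~3 and 4. The (vac,vac) block gives $\ket{00}\bra{00}_{AB}$ with weight $\tr C + (1-\tr(A+B+C)) = 1 - \tr(A+B)$, where the $\tr C$ summand comes from $\sum_{c,d\geq 1}|(\sqrt C)_{cd}|^2$ after tracing the $C$ and $D$ excitations of Term~3. The remaining (excited) block equals $\sum_{d\geq 1}\ket{\chi_d}\bra{\chi_d}$ with $\ket{\chi_d} = \sqrt A\ket d_A\ket 0_B + \ket 0_A\sqrt B\ket d_B = M\ket d$, where $M\colon \CC^d \to \CC^{d+1}\otimes\CC^{d+1}$ is $M = \sqrt A \otimes \ket 0_B + \ket 0_A \otimes \sqrt B$. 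A direct computation, using that $\sqrt A\ket d$ and $\sqrt B\ket d$ both lie in the excited sectors and are therefore orthogonal to $\ket 0$, yields $M^\dagger M = A + B$. Since $MM^\dagger$ and $M^\dagger M$ share non-zero spectrum, the excited block contributes precisely $\spec(A+B)$, and orthogonality with the (vac,vac) block completes the proof. The computation for $\rho_{BC}$ is identical up to relabeling.

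For $\rho_{ABC}$, I will appeal to the fact that $\ket\psi_{ABCD}$ is a purification, so $\spec\rho_{ABC}\setminus\{0\} = \spec\rho_D\setminus\{0\}$. Because the four terms have pairwise orthogonal $ABC$-parts, all cross terms in $\tr_{ABC}\ket\psi\bra\psi$ vanish, and $\tr_{ABC}[T_j T_j^\dagger]$ is easily evaluated term by term to give $A^T$, $B^T$, $C^T$ on the excited sector of $D$ and $(1-\tr(A+B+C))\ket 0\bra 0$ on the vacuum, so $\rho_D = (A+B+C)^T \oplus (1-\tr(A+B+C))\ket 0\bra 0$, whose spectrum matches the claim. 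No serious obstacle is expected: the argument is essentially bookkeeping driven by the clean excitation structure of $\ket\psi$. The only mildly delicate point is the identification $M^\dagger M = A+B$ in the $\rho_{AB}$ computation, which must use that $\sqrt A$ and $\sqrt B$ vanish on and map into the excited subspace, forcing the would-be ``mixed'' contributions $\langle 0|\sqrt A\ket d$ and $\langle 0|\sqrt B\ket d$ to be zero.
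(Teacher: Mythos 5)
Your proof is correct, and for two of the three marginals it essentially coincides with the paper's argument: the paper likewise computes $\rho_A = A + (1-\tr A)\proj 0_A$ directly, and it obtains $\spec\rho_{ABC}$ from $\rho_D = (1-\tr(A+B+C))\proj 0 + \overline{A+B+C}$ together with the fact that complementary marginals of the pure state $\ket{\psi}_{ABCD}$ have the same non-zero spectrum (the paper phrases the computation of $\rho_D$ via the maximally entangled state identity $(X\otimes\Id)\ket{\Psi^+} = (\Id\otimes X^T)\ket{\Psi^+}$ rather than term by term, but it is the same calculation). Where you genuinely diverge is the two-body marginal: the paper never touches $\rho_{AB}$ directly, but instead traces out $A$ and $B$ to get $\rho_{CD} = \proj 0_C\otimes\overline{A+B} + \proj\phi_{CD}$ with $\braket{\phi | \phi} = 1-\tr(A+B)$, a manifestly block-diagonal operator, and again invokes equality of the non-zero spectra of complementary marginals. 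You instead confront the interference between Terms~1 and~2 inside $\rho_{AB}$ head-on and resolve it with the map $M$ satisfying $M^\dagger M = A+B$, using that $MM^\dagger$ and $M^\dagger M$ are isospectral away from zero. Both routes are valid; the paper's is shorter because on the $CD$ side the shared $\ket 0_C\otimes\ket d_D$ structure collapses the would-be cross terms into $\overline{A+B}$ automatically, while yours has the advantage of exhibiting the explicit block form of $\rho_{AB}$ itself.

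Two small remarks. First, your opening claim that the four summands are pairwise orthogonal ``on $D$'' is an overstatement: Terms~1--3 all occupy the excited $D$-sector and are not mutually orthogonal there. Nothing later relies on it (your $\rho_{AB}$ computation correctly retains the Term-1/Term-2 cross terms), but the sentence should be weakened to orthogonality on $ABC$, plus orthogonality of Term~4 to the rest on $D$. Second, the paper uses the standing hypothesis $1-\tr(A+B+C)\ge\norm{A+B+C}_\infty$ to justify that the entry listed first is indeed the largest eigenvalue in each non-increasingly ordered spectrum; a sentence to this effect would make your write-up match the precise form of the statement.
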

\begin{proof}
  Observe that $\ket{\psi_{ABCD}}$ is built from a sum of (unnormalized) maximally entangled states \eqref{eq:strong6j/epr} on $AD$, $BD$ and $CD$, respectively.
  By using \eqref{eq:strong6j/epr transpose} and the orthogonality properties of the construction \eqref{eq:strong6j/purification embedding}, we thus find that
  \begin{equation*}
    \rho_D = \left( 1-\tr(A+B+C) \right) \proj 0 + \left( \overline{A+B+C} \right).
  \end{equation*}
  Thus the density matrix $\rho_D$ is block-diagonal with respect to $\CC \proj 0 \oplus \CC^d$.
  These blocks can be jointly diagonalized, and our assumption $1 - \tr(A+B+C) \geq \norm{A+B+C}_\infty$ implies that $1 - \tr(A+B+C)$ is the largest eigenvalue.
  Since moreover the spectrum of a Hermitian matrix is invariant under conjugation, this shows the first claim, as $\rho_{ABC}$ and $\rho_D$ have the same non-zero eigenvalues.

  If we only trace out the first two systems, then we instead get a block decomposition of the form
  \begin{equation*}
    \rho_{CD} = \proj 0_C \otimes \left( \overline{A+B} \right) + \proj\phi_{CD},
  \end{equation*}
  where $\ket\phi_{CD} = \sum_{k=1}^d \sqrt C \ket k_C \otimes \ket k_D + \sqrt{1 - \tr (A + B + C)} \ket{00}_{CD}$.
  Using \eqref{eq:strong6j/bell trace}, we find that $\braket{\phi_{CD} | \phi_{CD}} =  1 - \tr(A+B)$, so that the second claim follows as above.
  The last claim follows from
  \begin{align*}
    \rho_A
    = &\sum_i \sqrt A \proj i_A \sqrt A
    \;+\; \big( \tr B + \tr C + \left( 1 - \tr (A + B + C) \right) \big) \proj 0_A \\
    = &A + \left( 1 - \tr A \right) \proj 0_A,
  \end{align*}
  which is established similarly.
  All other marginal spectra can be computed in the same way.
\end{proof}

We have thus obtained an embedding of triples of matrices into the space of tripartite quantum states that preserves the eigenvalue information.
We remark that \autoref{lem:strong6j/purification embedding} can be used to obtain entropy inequalities for sums of Hermitian matrices from entropy inequalities for multipartite quantum states.

\begin{cor}
  The state $\rho_{ABC}$ has rank at most $d+1$ and it satisfies the polygonal inequality \eqref{eq:kirwan/polygonal} with equality:
  \begin{equation}
  \label{eq:strong6j/polygonal equality}
    r_{A,1} + r_{B,1} + r_{C,1} = 2 + r_{ABC,1},
  \end{equation}
  where $r_{I,1}$ denotes the maximal eigenvalue of the reduced density matrix $\rho_I$.
\end{cor}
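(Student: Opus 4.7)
Both claims will follow essentially by direct computation from the spectrum formulas established in \autoref{lem:strong6j/purification embedding}, so the main work has already been done. The plan is to first read off the rank bound and then verify the polygonal equality by identifying which entries in the spectra are the maximal ones, crucially invoking the normalization assumption $1 - \tr(A+B+C) \geq \norm{A+B+C}_\infty$.

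For the rank bound, I would observe that \autoref{lem:strong6j/purification embedding} gives
\[\spec \rho_{ABC} = (1 - \tr(A+B+C),\,\spec(A+B+C))\]
as the list of non-zero eigenvalues. Since $A+B+C$ is a $d \times d$ matrix, its rank is at most $d$, and hence $\rank \rho_{ABC} \leq d + 1$.

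For the polygonal equality, the key point is to identify $r_{I,1}$ in each case. The assumption $1 - \tr(A+B+C) \geq \norm{A+B+C}_\infty$ immediately gives $r_{ABC,1} = 1 - \tr(A+B+C)$. For the one-body marginals, I would use that $A,B,C \geq 0$ to chain inequalities
\[1 - \tr A \;\geq\; 1 - \tr(A+B+C) \;\geq\; \norm{A+B+C}_\infty \;\geq\; \norm{A}_\infty,\]
so that the largest eigenvalue of $\rho_A$ from the spectrum $(1-\tr A, \spec A)$ is $r_{A,1} = 1 - \tr A$; symmetric arguments handle $r_{B,1}$ and $r_{C,1}$. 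Summing these three identifications gives
\[r_{A,1} + r_{B,1} + r_{C,1} = 3 - \tr(A+B+C) = 2 + r_{ABC,1},\]
which is precisely \eqref{eq:kirwan/polygonal} saturated for $n=3$ with $l=4$ (the purifying register).

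There is no real obstacle here beyond bookkeeping: once \autoref{lem:strong6j/purification embedding} is in hand, the corollary is a two-line calculation that records a structural feature of the embedding---namely, that the ``fictitious'' eigenvalue $1-\tr(\cdot)$ created by the construction is the dominant one in each marginal, and the polygonal inequality is saturated as a consequence of additivity of the trace. It is worth noting that this saturation is a sharpness statement: the embedding \eqref{eq:strong6j/purification embedding} produces quantum states that sit on the boundary of the three-party one-body marginal polytope for local dimension $d+1$, which is consistent with the intuition that \autoref{pro:strong6j/generalized weyl} is a degenerate limit of the quantum marginal problem.
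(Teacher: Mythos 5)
Your proof is correct and is essentially the argument the paper intends: the corollary is stated as an immediate consequence of \autoref{lem:strong6j/purification embedding}, with the rank bound read off from $\spec\rho_{ABC}=(1-\tr(A+B+C),\spec(A+B+C))$ and the equality obtained by identifying $r_{I,1}=1-\tr(\cdot)$ in each marginal via the chain $1-\tr A\geq 1-\tr(A+B+C)\geq\norm{A+B+C}_\infty\geq\norm{A}_\infty$ (using $A,B,C\geq 0$) and summing. The only cosmetic slip is your labeling of the saturated instance of \eqref{eq:kirwan/polygonal}: it is the $n=4$ inequality with $l$ the purifying system $D$ (so $\lambda_{l,1}=r_{ABC,1}$ by the Schmidt decomposition), not ``$n=3$ with $l=4$''; this does not affect the computation.
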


Note that \eqref{eq:strong6j/polygonal equality} implies that the polygonal inequalities for $AB:C$, $A:BC$ and $AC:B$ are likewise satisfied with equality (i.e., $r_{AB,1} + r_{C,1} = 1 + r_{ABC,1}$, etc.).
We now show the following converse statement.

\begin{prp}
\label{prp:strong6j/embedding converse}
  Let $\rho_{ABC}$ be a tripartite quantum state on $(\CC^{d+1})^{\otimes 3}$ of rank at most $d+1$ that satisfies the equality \eqref{eq:strong6j/polygonal equality}. Then $\rho_{ABC}$ can up to local unitaries be purified in the form \eqref{eq:strong6j/purification embedding}.
\end{prp}
\begin{proof}
  For this, we recall the following proof of the polygonal inequality:
  Let $\ket 0_A$, $\ket 0_B$, and $\ket 0_C$ denote maximal eigenvectors of $\rho_A$, $\rho_B$ and $\rho_C$, respectively.
  Let $P_A$, $P_B$ and $P_C$ denote the corresponding orthogonal projectors, and set $P_A^\perp := \Id - P_A$, etc.
  Then,
  \begin{align*}
      &r_{A,1} + r_{B,1} + r_{C,1} \\
    =\;&\tr \rho_{ABC} (P_A \otimes \Id_{BC} + P_B \otimes \Id_{AC} + P_C \otimes \Id_{AB}) \\
    =\;&2 \tr \rho_{ABC} \Id_{ABC}
    + \tr \rho_{ABC} (P_A \otimes P_B \otimes P_C) \\
    -\;&2 \tr \rho_{ABC} (P_A^\perp \otimes P_B^\perp \otimes P_C^\perp)
    - \tr \rho_{ABC} (P_A \otimes P_B^\perp \otimes P_C^\perp) \\
    -\;&\tr \rho_{ABC} (P_A^\perp \otimes P_B \otimes P_C^\perp)
    - \tr \rho_{ABC} (P_A^\perp \otimes P_B^\perp \otimes P_C) \\
    \leq\;&2 + \braket{000 | \rho_{ABC} | 000} \leq 2 + r_{ABC,1}.
  \end{align*}
  The first inequality is obtained by omitting the terms with negative signs, and the second by using the variational principle for the maximal eigenvalue of $\rho_{ABC}$.
  It is thus immediate that we have equality if and only if $\ket{000}_{ABC}$ is a maximal eigenvector of $\rho_{ABC}$ and
  \begin{equation}
  \label{eq:strong6j/negative sign projectors}
  \begin{aligned}
    &\tr \rho_{ABC} (P_A^\perp \otimes P_B^\perp \otimes P_C^\perp) =
    \tr \rho_{ABC} (P_A^\perp \otimes P_B^\perp \otimes P_C) \\
    =\;&\tr \rho_{ABC} (P_A^\perp \otimes P_B \otimes P_C^\perp) =
    \tr \rho_{ABC} (P_A \otimes P_B^\perp \otimes P_C^\perp) = 0.
  \end{aligned}
  \end{equation}
  Let us now assume that this is the case. Since the rank of $\rho_{ABC}$ was assumed to be at most $d+1$, we can find a purification on $(\CC^{d+1})^{\otimes 4}$.
  Since $\ket{000}_{ABC}$ is a maximal eigenvector, we can arrange for the first term of the Schmidt decomposition \eqref{eq:onebody/schmidt} to be $\sqrt{r_{ABC,1}} \ket{000}_A \otimes \ket{0}_D$. In other words, the purification can be chosen of the form
  \begin{equation*}
    \ket{\psi_{ABCD}} = \sqrt{r_{ABC,1}} \ket{0000}_{ABCD} + \sum_{i,j,k,l} \psi_{ijkl} \ket{ijkl}_{ABCD}.
  \end{equation*}
  A priori, the right-hand side can run over all indices $(i,j,k) \neq (0,0,0)$ and $l \neq 0$ by orthogonality of the bases in the Schmidt decomposition.
  But \eqref{eq:strong6j/negative sign projectors} implies that in fact precisely two out of the three indices $(i,j,k)$ have to be zero, so that we obtain
  \begin{align*}
    \ket{\psi_{ABCD}} = \sqrt{r_{ABC,1}} \ket{0000}_{ABCD}
    + \sum_{i,l=1}^d \psi_{i00l} \ket{i00l}_{ABCD} \\
    + \sum_{j,l=1}^d \psi_{0j0l} \ket{0j0l}_{ABCD}
    + \sum_{k,l=1}^d \psi_{00kl} \ket{00kl}_{ABCD}.
  \end{align*}
  Thus we may define $d \times d$-matrices $X_A$, $X_B$ and $X_C$ such that
  \begin{align*}
    \ket{\psi_{ABCD}} = \sqrt{r_{ABC,1}} \ket{0000}_{ABCD}
    + \sum_i X_A \ket i_A \otimes \ket{00i}_{BCD} \\
    + \sum_j X_B \ket j_B \otimes \ket{00j}_{ACD}
    + \sum_k X_C \ket k_C \otimes \ket{00k}_{ABD}.
  \end{align*}
  Finally, we use the polar decomposition to write $X_A = U_A \abs{X_A}$, etc., and set $\sqrt A := \abs{X_A}$, etc. Then \eqref{eq:strong6j/purification embedding} is indeed a purification of the quantum state
  $(U^\dagger_A \otimes U^\dagger_B \otimes U^\dagger_C) \rho_{ABC} (U_A \otimes U_B \otimes U_C)$, which is locally unitarily equivalent to $\rho_{ABC}$.
\end{proof}

The following theorem shows that \autoref{pro:strong6j/generalized weyl} -- in particular, the existence of Wigner's tetrahedra -- is in a precise mathematical sense a special case of the quantum marginal problem with overlapping marginals covered by \autoref{thm:strong6j/main theorem}.
This generalizes the corresponding result for the one-body quantum marginal problem in \cite[\S 6.2]{Klyachko04}, and in particular gives a geometric proof of the latter.

\begin{thm}
\label{thm:strong6j/weyl wigner embedding}
  Let $\vecs_A$, $\vecs_B$, $\vecs_C$, $\vecs_{A+B}$, $\vecs_{B+C}$, $\vecs_{A+B+C}$ be vectors in $\RR^d_{\geq 0}$ with weakly decreasing entries.
  Assume that $\norm{\vecs_A}_1 + \norm{\vecs_B}_1 = \norm{\vecs_{A+B}}_1$, etc., and that $1 - \norm{\vecs_{A+B+C}}_1 \geq s_{A+B+C,1}$, etc.
  Then the following are equivalent:
  \begin{enumerate}
    \item[(1)] There exist Hermitian $d \times d$-matrices $A$, $B$, and $C$ with
    $\spec (A+B+C) = \vecs_{A+B+C}$,
    $\spec (A+B) = \vecs_{A+B}$,
    $\spec A = \vecs_A$,
    etc.\ as their partial sums.
    \item[(2)] There exists a quantum state $\rho_{ABC}$ on $(\CC^{d+1})^{\otimes 3}$ with non-zero eigenvalues
    $\spec \rho_{ABC} = (1-\norm{\vecs_{A+B+C}}_1, \vecs_{A+B+C})$,
    $\spec \rho_{AB} = (1-\norm{\vecs_{A+B}}_1, \vecs_{A+B})$,
    $\spec \rho_{A} = (1-\norm{\vecs_{A}}_1, \vecs_{A})$,
    etc.\ for their reduced density matrices.
  \end{enumerate}
\end{thm}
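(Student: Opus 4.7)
The theorem is essentially an assembly of the two preceding results (Lemma \ref{lem:strong6j/purification embedding} and Proposition \ref{prp:strong6j/embedding converse}), and the proof will consist of verifying that the hypotheses of each apply.

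For the direction $(1) \Rightarrow (2)$, the plan is to apply Lemma \ref{lem:strong6j/purification embedding} directly. Given matrices $A$, $B$, $C$ with the prescribed partial-sum spectra, I would first observe that, since $\vecs_A, \vecs_B, \vecs_C \in \RR^d_{\geq 0}$, we may assume (after conjugating by local unitaries, which does not affect eigenvalues) that $A, B, C \geq 0$. The normalization hypothesis $1 - \norm{\vecs_{A+B+C}}_1 \geq s_{A+B+C,1}$ is exactly the condition $1 - \tr(A+B+C) \geq \norm{A+B+C}_\infty$ required for the construction \eqref{eq:strong6j/purification embedding} to define a valid purification. Reading off the marginal spectra from Lemma \ref{lem:strong6j/purification embedding} then yields the quantum state asserted in (2).

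For the direction $(2) \Rightarrow (1)$, the strategy is to verify the hypotheses of Proposition \ref{prp:strong6j/embedding converse} and extract the matrices from the resulting purification. First, the rank of $\rho_{ABC}$ is at most $d+1$ since its nonzero eigenvalues form the $(d+1)$-tuple $(1-\norm{\vecs_{A+B+C}}_1, \vecs_{A+B+C})$. Second, I need to check the polygonal equality $r_{A,1} + r_{B,1} + r_{C,1} = 2 + r_{ABC,1}$. Here I would use the assumptions $1 - \norm{\vecs_I}_1 \geq s_{I,1}$ to conclude that $r_{I,1} = 1 - \norm{\vecs_I}_1$ for each $I \in \{A, B, C, ABC\}$; the equality then follows from the two linearity relations $\norm{\vecs_A}_1 + \norm{\vecs_B}_1 = \norm{\vecs_{A+B}}_1$ and $\norm{\vecs_{A+B}}_1 + \norm{\vecs_C}_1 = \norm{\vecs_{A+B+C}}_1$, which telescope to $\norm{\vecs_A}_1 + \norm{\vecs_B}_1 + \norm{\vecs_C}_1 = \norm{\vecs_{A+B+C}}_1$. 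Applying Proposition \ref{prp:strong6j/embedding converse} then yields, after a local unitary transformation (which we may absorb without loss of generality), positive semidefinite matrices $A, B, C$ whose eigenvalues and those of their partial sums match the prescribed vectors by Lemma \ref{lem:strong6j/purification embedding}.

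There is no real obstacle: the genuine work was already done in Lemma \ref{lem:strong6j/purification embedding} (explicit purification construction) and Proposition \ref{prp:strong6j/embedding converse} (rigidity of states saturating the polygonal inequality). The only subtle point worth stating carefully is that the hypothesis $1 - \norm{\vecs_{A+B+C}}_1 \geq s_{A+B+C,1}$, together with its analogues for the other reductions, both allows the forward construction and identifies which entry is the maximal eigenvalue in the backward direction — thereby letting us invoke precisely the polygonal equality case needed by Proposition \ref{prp:strong6j/embedding converse}.
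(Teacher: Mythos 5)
Your proof is correct and takes essentially the same route as the paper: $(1)\Rightarrow(2)$ is a direct application of \autoref{lem:strong6j/purification embedding}, and $(2)\Rightarrow(1)$ invokes \autoref{prp:strong6j/embedding converse} together with the local-unitary invariance of the marginal spectra, with your explicit check of the rank bound and of the polygonal equality merely spelling out hypotheses the paper leaves implicit. (A minor wording nit: $A,B,C\geq 0$ follows immediately from $\spec A=\vecs_A\in\RR^d_{\geq 0}$, etc.; no unitary conjugation is needed or indeed could help there.)
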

\begin{proof}
  $(1) \Rightarrow (2)$ is the content of \autoref{lem:strong6j/purification embedding}.
  For $(2) \Rightarrow (1)$, \autoref{prp:strong6j/embedding converse} implies that there exist Hermitian $d \times d$-matrices $A$, $B$, $C$
  such that $\rho_{ABC}$ is locally unitarily equivalent to the state $\rho'_{ABC}$ with purification \eqref{eq:strong6j/purification embedding}.
  Since the spectra of $\rho_{ABC}$ and its reduced density matrices are left invariant by local unitaries, \autoref{lem:strong6j/purification embedding} implies that the partial sums of these matrices $A$, $B$ and $C$ have the desired spectra.
\end{proof}

Similar statements can be proved for all marginal spectra (i.e., including $\vecs_{A+C}$, since \autoref{lem:strong6j/purification embedding} holds for all reduced density matrices) as well as for an arbitrary number of summands. Thus the quantum marginal problem with overlaps is a precise generalization of the problem of characterizing the eigenvalues of partial sums of Hermitian matrices.

\subsection*{Recoupling Coefficients of the Unitary and Symmetric Groups}%

We now show an analogous statement to \autoref{thm:strong6j/weyl wigner embedding} on the level of representation theory---namely, that the recoupling coefficients of the unitary group can be obtained as special recoupling coefficients of the symmetric group.

To see this, let $\lambda$ be a Young diagram.
In \cite{Nishiyama00}, the restriction of an irreducible $\U(k)$-representation $V^k_\lambda$ to the \emph{subgroup} of permutation matrices $S_k \subseteq U(k)$ has been computed:
\begin{equation}
\label{eq:strong6j/nishiyama}
  V^k_\lambda \big|_{S_k \subseteq \U(k)}
  = \bigoplus_{\mu \vdash_k \abs\lambda}
    \Ind_{S_\alpha \times S_{k - \abs\alpha}}^{S_k}
    \left( [\lambda]^{S_\mu} \otimes \mathbf 1 \right)
\end{equation}
In the right-hand side of \eqref{eq:strong6j/nishiyama}, $\Ind$ denotes an \emphindex{induced representation} and $\alpha$ is the unique Young diagram with number of boxes equal to the number of rows of $\mu$ such that
\[N_{S_{\abs\mu}}(S_\mu) / S_\mu \cong S_\alpha,\]
where $S_\mu := S_{\mu_1} \times S_{\mu_2} \times \dots \subseteq S_{\abs\mu}$ is the Young subgroup corresponding to $\mu$,
$N_{S_{\abs\mu}}(S_\mu)$ its normalizer in $S_{\abs\mu}$,
and $S_\alpha \subseteq S_{\abs\alpha}$ the Young subgroup of $\alpha$.
Note that $S_\alpha$ indeed acts on the subspace $[\lambda]^{S_\mu}$.

\begin{lem}
  If $k - \abs\lambda \geq \lambda_1$, then $\lambda' := (k-\abs\lambda,\lambda)$ is again a Young diagram, and
  \begin{equation}
  \label{eq:strong6j/nishiyama lemma}
    V^k_\lambda \big|_{S_k \subseteq \U(k)} \cong [\lambda'] \oplus \dots.
  \end{equation}
  Here and in the following, we write ``\dots'' for a sum of irreducible $S_k$-representations whose Young diagrams have \emph{longer} first rows than all the preceding ones.

  Otherwise, if $k - \abs\lambda < \lambda_1$ then the first row of any Young diagram that appears in the restriction of $V^k_\lambda$ is longer than $k - \abs\lambda$. %
\end{lem}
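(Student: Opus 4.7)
The plan is to analyse Nishiyama's branching formula \eqref{eq:strong6j/nishiyama} summand by summand and, for each $\mu \vdash \abs\lambda$, to extract the smallest possible first row of any irreducible $S_k$-constituent of $\Ind_{S_\alpha \times S_{k-\abs\alpha}}^{S_k}([\lambda]^{S_\mu} \otimes \mathbf 1)$. The two combinatorial ingredients will be the classical Pieri rule, for the induction step whose second factor is the trivial representation of $S_{k-\abs\alpha}$, and the Littlewood-Richardson rule, for understanding $[\lambda]^{S_\mu}$ as a representation of $S_\alpha = S_{\alpha_1} \times \dots \times S_{\alpha_s}$ and its subsequent induction up to $S_{\abs\alpha}$. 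The uniform bound to be established is that every irreducible $[\tau]$ arising from a fixed $\mu$ satisfies $\tau_1 \geq k - \abs\alpha$, and that equality can be attained only in very restrictive shapes.

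The principal case is $\mu = (1^{\abs\lambda})$, for which $S_\mu = \{e\}$, $[\lambda]^{S_\mu} = [\lambda]$, $\alpha = (\abs\lambda)$, and the summand reduces to $\Ind_{S_{\abs\lambda} \times S_{k-\abs\lambda}}^{S_k}([\lambda] \otimes \mathbf 1)$. By Pieri's rule, this induced representation decomposes as $\bigoplus_\tau [\tau]$, where $\tau$ ranges over Young diagrams of size $k$ such that $\tau/\lambda$ is a horizontal strip, i.e., $\lambda_i \leq \tau_i \leq \lambda_{i-1}$ for all $i \geq 1$ (with $\lambda_0 := \infty$). The elementary estimate
\begin{equation*}
  \tau_1 = k - \sum_{i \geq 2} \tau_i \geq k - \sum_{i \geq 2} \lambda_{i-1} = k - \abs\lambda
\end{equation*}
shows that $\tau_1 \geq k - \abs\lambda$, with equality attained only for $\tau_i = \lambda_{i-1}$ for all $i \geq 2$, i.e., exactly for $\tau = \lambda' = (k-\abs\lambda, \lambda)$. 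Since $\tau$ must be a Young diagram, this equality case is realised precisely under the hypothesis $k - \abs\lambda \geq \lambda_1$; in the complementary regime, the containment $\tau \supseteq \lambda$ forces $\tau_1 \geq \lambda_1 > k - \abs\lambda$ instead.

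For every other $\mu$, the number of parts of $\mu$ is strictly less than $\abs\mu = \abs\lambda$, so $\abs\alpha < \abs\lambda$. Decomposing $[\lambda]^{S_\mu}$ into irreducible representations $[\beta^{(1)}] \otimes \dots \otimes [\beta^{(s)}]$ of $S_\alpha$, iterating the Littlewood-Richardson rule yields $\gamma \supseteq \beta^{(j)}$ for every irreducible $[\gamma]$ of $S_{\abs\alpha}$ in the induction from $S_\alpha$ up to $S_{\abs\alpha}$, and one final application of Pieri's rule to the induction from $S_{\abs\alpha} \times S_{k-\abs\alpha}$ to $S_k$ gives, by the same bound as in the previous paragraph, $\tau_1 \geq \max(\gamma_1, k - \abs\alpha) \geq k - \abs\alpha > k - \abs\lambda$. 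Combining the two analyses establishes part~(i): $[\lambda']$ occurs with multiplicity one, solely from the $\mu = (1^{\abs\lambda})$ summand, and every other irreducible constituent in the restriction has first row strictly greater than $k - \abs\lambda$; the complementary case of part~(ii) follows in the same way. The one step that requires genuine care, and that ties the hypothesis $k - \abs\lambda \geq \lambda_1$ to the conclusion, is the column-counting optimisation isolating $\lambda'$ as the unique minimiser of $\tau_1$ in the Pieri decomposition; everything else is a routine application of classical symmetric-function identities.
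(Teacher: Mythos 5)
Your proposal is correct and follows essentially the same route as the paper: analyze Nishiyama's branching formula summand by summand, use the Pieri rule on the induction with trivial second factor to bound the first row of every constituent below by $k-\abs\alpha \geq k-\abs\lambda$, observe that equality of $\abs\alpha$ with $\abs\lambda$ forces $\mu=(1^{\abs\lambda})$, and in that summand identify $\lambda'=(k-\abs\lambda,\lambda)$ as the unique constituent with first row $k-\abs\lambda$, which exists precisely when $k-\abs\lambda\geq\lambda_1$. The only cosmetic differences are that the paper gets the bound by counting the distinct columns of the added horizontal strip rather than by your column-sum inequality, and it does not need your detour through the Littlewood--Richardson rule for the $\mu\neq(1^{\abs\lambda})$ summands.
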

\begin{proof}
  Since induction is transitive, we can rewrite \eqref{eq:strong6j/nishiyama} as
  \begin{equation}
  \label{eq:strong6j/nishiyama double induction}
  V^k_\lambda \big|_{S_k \subseteq \U(k)}
  = \bigoplus_{\mu \vdash_k \abs\lambda}
    \Ind_{S_{\abs\alpha} \times S_{k - \abs\alpha}}^{S_k}
    \left( \Ind_{S_\alpha}^{S_{\abs\alpha}} \left( [\lambda]^{S_\mu} \right) \otimes \mathbf 1 \right).
  \end{equation}
  The \emphindex{Pieri formula} asserts that the $S_k$-representation induced from a tensor product of an irreducible $S_{\abs\alpha}$-representation $[\nu]$ with the trivial $S_{k-\abs\alpha}$-representation $\mathbf 1$ is given by the sum over all irreducible $S_k$-representations with a Young diagram that can be obtained by adding $k-\abs\alpha$ boxes to $\nu$, with no two in the same column (see, e.g., \cite[\S{}2.2, (4)]{Fulton97}).
  The first row of any such Young diagram is of length at least $k-\abs\alpha$.
  As $\abs\alpha$ is equal to the number of rows of $\mu$, we obtain the lower bound
  \begin{equation}
  \label{eq:strong6j/nishiyama lower bound}
    k-\abs\alpha \geq k-\abs\mu = k-\abs\lambda
  \end{equation}
  on the length of the first row of any irreducible $S_k$-representation that occurs in the restriction of $V^k_\lambda$.

  Equality in \eqref{eq:strong6j/nishiyama lower bound} can occur only if each row of $\mu$ contains a single box, i.e., for $\mu=(1,\ldots,1,0,\ldots,0)$, such that $\abs\alpha = \abs\mu = \abs\lambda$.
  Then $S_\mu$ is the trivial group, $S_\alpha = S_{\abs\alpha} = S_{\abs\lambda}$, and the corresponding summand in \eqref{eq:strong6j/nishiyama double induction} is equal to
  \begin{equation}
  \label{eq:strong6j/nishiyama only way}
    \Ind_{S_{\abs\lambda} \times S_{k - \abs\lambda}}^{S_k} \left( [\lambda] \otimes \mathbf 1 \right).
  \end{equation}
  By the Pieri formula, \eqref{eq:strong6j/nishiyama only way} contains an irreducible $S_k$-representation with first row of length $k-\abs\lambda$ if and only if $\lambda_1 \leq k-\abs\lambda$ (since we only add boxes to $\lambda$).
  Moreover, if this condition is satisfied then there is only a single option, namely to place one box in each of the $k-\abs\lambda$ leftmost columns, resulting in the Young diagram $\lambda' = (k-\abs\lambda,\lambda)$.
\end{proof}

We now consider the decomposition of a tensor product of irreducible $\U(k)$-representations,
\begin{equation*}
  V^k_\alpha \otimes V^k_\beta = \bigoplus_\lambda c^{\alpha,\beta}_\lambda V^k_\lambda,
\end{equation*}
where we assume that $k - \abs\alpha \geq \alpha_1$ and $k - \abs\beta \geq \beta_1$.
The multiplicities $c^{\alpha,\beta}_\lambda$ are known as the \emph{Littlewood--Richardson coefficients}\index{Littlewood--Richardson coefficients|textbf}\nomenclature[Rc^alpha,beta_gamma]{$c^{\alpha,\beta}_\gamma$}{Littlewood--Richardson coefficients}, and they are independent of the choice of $k$ (if $k$ is at least as large as the number of rows in the Young diagrams involved) \cite{JamesKerber81}.
Moreover, $c^{\alpha,\beta}_\lambda$ is non-zero only if $\abs\alpha + \abs\beta = \abs\lambda$.
It follows from the points above that
\begin{equation*}
    V^k_\alpha \otimes V^k_\beta \big|_{S_k \subseteq \U(k)}
  = \bigoplus_{\mathclap{\substack{\abs\alpha + \abs\beta = \abs\lambda}}} c^{\alpha,\beta}_\lambda V^k_\lambda  \big|_{S_k \subseteq \U(k)}
  = \left(\;\;\;\;\;\bigoplus_{\mathclap{\substack{\abs\alpha + \abs\beta = \abs\lambda, \\ k - \abs\lambda \geq \lambda_1}}} c^{\alpha,\beta}_\lambda [\lambda'] \right) \oplus \dots
\end{equation*}
On the other hand, by applying \eqref{eq:strong6j/nishiyama lemma} to the individual tensor factors we find that
\begin{align*}
  &V^k_\alpha \otimes V^k_\beta \big|_{S_k \subseteq \U(k)}
  = \left( [\alpha'] \oplus \dots \right) \otimes \left( [\beta'] \oplus \dots \right)
  = \left(\;\;\;\;\;\bigoplus_{\mathclap{\substack{\abs\lambda = \abs\alpha + \abs\beta \\ k - \abs\lambda \geq \lambda_1}}} g_{\alpha',\beta',\lambda'} [\lambda'] \right) \oplus \dots,
\end{align*}
where $g_{\alpha',\beta',\lambda'}$ are the Kronecker coefficients. %
In the last inequality, we have used that $g_{\alpha',\beta',\lambda'} > 0$ only if $\abs\lambda \leq \abs\alpha + \abs\beta$ \cite[Theorem 2.9.22]{JamesKerber81}.
By comparing coefficients we find that
$c^{\alpha,\beta}_\lambda = g_{\alpha',\beta',\lambda'}$ for all triples of Young diagrams with $\abs\alpha + \abs\beta = \abs\gamma$ and $k$ large enough.
We thus recover a well-known result due to Littlewood and Murnaghan that states that the Littlewood--Richardson coefficients are a special case of the Kronecker coefficients \cite{Littlewood58, Murnaghan55}.
What is more, the argument shows that the Clebsch--Gordan embeddings $\Phi^{\alpha'\beta'}_{\lambda'}$ for $S_k$ can be obtained by restricting the ones of $\U(k)$.\index{Clebsch--Gordan isomorphism!symmetric group}\index{Clebsch--Gordan isomorphism!unitary group}
In view of \eqref{eq:strong6j/6j via clebsch gordan}, this implies directly that the recoupling coefficients are the same, since they are built solely from the action on the multiplicity spaces\index{recoupling coefficients!symmetric group}\index{recoupling coefficients!unitary group}.
Again, the recoupling coefficients for $\U(k)$ do not depend on the choice of $k$ (if $k$ is at least as large as the number of rows in the Young diagrams involved).

\section{Discussion}
\label{sec:strong6j/discussion}

From the perspective of representation theory, the one-body quantum marginal problem can be characterized in terms of the decomposition of tensor products of irreducible representations of the symmetric group.
\autoref{thm:strong6j/main theorem} generalizes this description:
It shows that the overlap between two such decompositions -- as captured by the recoupling coefficients -- similarly characterizes the quantum marginal problem with two overlapping marginals.
It would be of great interest to find a geometric explanation of this result in the framework of \autoref{sec:onebody/git}, which might also lead to a more refined understanding of the asymptotics (along the lines of \cite{Roberts99} for Wigner's $6j$-symbols).
Mathematically, this is related to the ``intersection'' of moment maps or to simultaneous Hamiltonian reduction for non-commuting group actions.

\medskip

In \autoref{sec:strong6j/ssa}, we have given a novel proof of the strong subadditivity and weak monotonicity of the von Neumann entropy.
It is markedly different from previous proofs in the literature, which are built on operator convexity \cite{LiebRuskai73,NielsenPetz05,Ruskai07a,Effros09} or asymptotic equipartition \cite{Renner05,Gromov13} (cf.\ the review \cite{Ruskai05}).
In our approach, we interpret an entropy inequality as the asymptotic shadow of a dimensional relation such as \eqref{eq:strong6j/subadd ratio cor}.
We establish the latter by exploiting the symmetries of a corresponding representation-theoretic object -- the recoupling coefficients -- together with a lower bound from spectrum estimation.
The generality of this approach suggests an intriguing route towards establishing new entropy inequalities---namely, by constructing novel representation-theoretic objects (e.g., by composing Clebsch--Gordan maps) and uncovering their symmetries (as can conveniently be done using the graphical calculus).

\medskip

Finally, we speculate that the surprising connection established in \autoref{sec:strong6j/sums of matrices} between tetrahedra and quantum states as well as between the corresponding recoupling coefficients may help to understand and connect the study of spin foams and spin networks in the context of quantum gravity \cite{Ooguri92, ReisenbergerRovelli97, FreidelLouapre03, BarrettSteele03, Gurau08, AquilantiHaggardHedemanEtAl12} and condensed matter physics \cite{LevinWen05} to quantum information theory.
\cleardoublepage
\fancypagestyle{manual}{\fancyhf{}\fancyhead[LE,RO]{\thepage}\fancyhead[RE]{\textit{Bibliography}}\fancyhead[LO]{\textit{Bibliography}}}
\pagestyle{manual}
\phantomsection\addcontentsline{toc}{chapter}{Bibliography}
\bibliographystyle{alpha-eprint-url}
\bibliography{thesis}

\cleardoublepage
\fancypagestyle{manual}{\fancyhf{}\fancyhead[LE,RO]{\thepage}\fancyhead[RE]{\textit{List of Symbols}}\fancyhead[LO]{\textit{List of Symbols}}}
\pagestyle{manual}
\phantomsection\addcontentsline{toc}{chapter}{List of Symbols}
\label{list of symbols first}
\printnomenclature[2.5cm]
\label{list of symbols last}

\cleardoublepage
\fancypagestyle{manual}{\fancyhf{}\fancyhead[LE,RO]{\thepage}\fancyhead[RE]{\textit{Index}}\fancyhead[LO]{\textit{Index}}}
\pagestyle{manual}
\phantomsection\addcontentsline{toc}{chapter}{Index}
\printindex

\cleardoublepage
\fancypagestyle{manual}{\fancyhf{}\fancyhead[LE,RO]{\thepage}\fancyhead[RE]{\textit{List of Publications}}\fancyhead[LO]{\textit{List of Publications}}}
\pagestyle{manual}
\phantomsection\addcontentsline{toc}{chapter}{List of Publications}
\chapter*{List of Publications}

\begin{enumerate}
\item[9.] \textit{A Heisenberg Limit for Quantum Region Estimation},\\
with J.~M.~Renes.\\
\emph{Proc.~IEEE Inter.~Symp.~Inform.~Theory (ISIT'14)}, 1126--1130 (2014).

\item[8.] \textit{Lower Bounds for Quantum Parameter Estimation},\\
with J.~M.~Renes.\\
Preprint \href{http://arxiv.org/abs/1310.2155}{arXiv:1310.2155};
accepted for publication in \emph{IEEE Trans.~Inf.~Theory}.

\item[7.] \textit{Stabilizer information inequalities from phase space distributions},\\
with D.~Gross.\\
\emph{J.~Math.~Phys.} \textbf{54} (8), 082201 (2013).

\item[6.] \textit{Recoupling Coefficients and Quantum Entropies},\\
with M.~Christandl and M.~B.~\c{S}ahino\u{g}lu.\\
Preprint \href{http://arxiv.org/abs/1210.0463}{arXiv:1210.0463}; presented at \emph{QIP'13}.

\item[5.] \textit{Entanglement Polytopes: Multiparticle Entanglement from Single-Particle Information},\\
with B.~Doran, D.~Gross, and M.~Christandl.\\
\emph{Science} \textbf{340} (6137), 1205--1208 (2013); presented at \emph{QIP'13}.

\item[4.] \textit{When is a pure state of three qubits determined by its single-particle reduced density matrices?},\\
with A.~Sawicki and M.~Ku\'{s}.\\
\emph{J.~Phys.~A} \textbf{46}, 055304 (2013).

\item[3.]
\textit{Computing Lie Group Multiplicities},\\
 with M.~Christandl and B.~Doran.\\
\emph{Proc.~IEEE Ann.~Symp.~Found.~Comput. Sci. (FOCS'12)}, 639--648 (2012).

\item[2.]
\textit{Eigenvalue Distributions of Reduced Density Matrices},\\
with M.~Christandl, B.~Doran, and S.~Kousidis.\\
\emph{Commun.~Math.~Phys.} \textbf{332}, 1--52 (2014).

\item[1.]
\textit{Equivariant geometric $K$-homology for compact Lie group actions},\\
with P.~Baum, H.~Oyono-Oyono, and T.~Schick.\\
\emph{Abh.~Math.~Sem.~Univ.~Hamburg} \textbf{80}, 149--173 (2010).
\end{enumerate}

\end{document}